\newcommand{\isempty}[1]%
{
  \ifthenelse{\equal{#1}{}}%
    {EMPTY}% if #1 is empty
    {FULL, it contains the string '#1'}% if #1 is not empty
}
\newlist{enumerati}{enumerate}{10}
\setlist[enumerati]{label=\emph{(\roman*)}, ref=\emph{(\roman*)}}
\newlist{enumeratiwide}{enumerate}{10}
\setlist[enumeratiwide]{label=\emph{(\roman*)}, ref=\emph{(\roman*)}, wide, labelindent=0pt}
\newlist{enumerata}{enumerate}{10}
\setlist[enumerata]{label=\emph{(\alph*)}, ref=\emph{(\alph*)}}
\newlist{enumeratawide}{enumerate}{10}
\setlist[enumeratawide]{label=\emph{(\alph*)}, ref=\emph{(\alph*)}, wide, labelindent=0pt}
\newlist{enumeratewide}{enumerate}{10}
\setlist[enumeratewide]{label=\emph{\arabic*.}, ref=\emph{(\arabic*)}, wide, labelindent=0pt, before = \leavevmode}%\vspace{-.5\baselineskip}}
\newlist{enumeratp}{enumerate}{10}
\setlist[enumeratp]{label=\emph{(\arabic*)}, ref=\emph{(\arabic*)}}
\newcommand{\repeattheorem}[1]{%
  \begingroup
  \renewcommand{\thetheorem}{\ref{#1}}%
  \expandafter\expandafter\expandafter\theorem
  \csname reptheorem@#1\endcsname
  \endtheorem
  \endgroup
}
\xdef\csname reptheorem@#1\endcsname{%
    \unexpanded\expandafter{\BODY}%
  }%
\unskip\label{#1}\endtheorem
\newcommand{\repeatcorollary}[1]{%
  \begingroup
  \renewcommand{\thecorollary}{\ref{#1}}%
  \expandafter\expandafter\expandafter\corollary
  \csname repcorollary@#1\endcsname
  \endcorollary
  \endgroup
}
\xdef\csname repcorollary@#1\endcsname{%
    \unexpanded\expandafter{\BODY}%
  }%
\unskip\label{#1}\endcorollary
\newcommand{\repeatlemma}[1]{%
  \begingroup
  \renewcommand{\thelemma}{\ref{#1}}%
  \expandafter\expandafter\expandafter\lemma
  \csname replemma@#1\endcsname
  \endlemma
  \endgroup
}
\xdef\csname replemma@#1\endcsname{%
    \unexpanded\expandafter{\BODY}%
  }%
\unskip\label{#1}\endlemma
\newcommand{\repeatproposition}[1]{%
  \begingroup
  \renewcommand{\theproposition}{\ref{#1}}%
  \expandafter\expandafter\expandafter\proposition
  \csname repproposition@#1\endcsname
  \endproposition
  \endgroup
}
\xdef\csname repproposition@#1\endcsname{%
    \unexpanded\expandafter{\BODY}%
  }%
\unskip\label{#1}\endproposition
\tikzset{every arrow/.append style = -{Computer Modern Rightarrow[]}}
\tikzset{
  labelslt/.style n args={2}{%
    labelat={[left]{$\scriptstyle #1$}}{.45},%
    labelat={[right]{$\scriptstyle #2$}}{.55}%
    } %
  }
\tikzset{
  labelsltat/.style n args={4}{%
    labelat={[left]{$\scriptstyle #1$}}{#3},%
    labelat={[right]{$\scriptstyle #2$}}{#4}%
    } %
  }
\tikzset{twoof/.style={twocenter={#1}}}
\newcounter{trou}[figure]
\newcommand{\xto}[1]{\xrightarrow{#1}}
\newcommand{\xot}[1]{\xleftarrow{#1}}
\newcommand{\ens}[1]{\{ #1 \}}
\newcommand{\op}[1]{{#1}^{\mathit{op}}}
\newcommand{\myop}{\mathit{op}}
\DeclareMathOperator{\alg}{-\,\mathbf{alg}}
\DeclareMathOperator{\Alg}{-\,\mathbf{Alg}}
\newcommand{\pt}{∘}
\DeclareMathOperator{\Mon}{-\,\mathbf{Mon}}
\DeclareMathOperator{\id}{id}
\DeclareTextMath\tmlambda[lambda]{\lambda}
\newcommand\Tstrut{\rule{0pt}{3ex}}
\newcommand\Bstrutbas{\rule[-1.2ex]{0pt}{0pt}}
\newcommand{\subst}{\mathit{subst}}
\newcommand{\add}{\mathit{add}}
\newcommand{\mul}{\mathit{mul}}
\newcommand{\app}{\mathit{app}}
\newcommand{\lapp}{\mathit{lapp}}
\newcommand{\abs}{\mathit{abs}}
\newcommand{\diff}{\mathit{diff}}
\newcommand{\plus}{\mathit{plus}}
\newcommand{\fleur}{*}
\newcommand{\red}[1]{#1}
\newcommand{\liftleft}[1]{{{}^{#1}{↑}}}
\newcommand{\ajustedroit}[2][1]{\adjustbox{max width=#1\columnwidth,max height=.95\textheight}{#2}}%
\newcommand{\ajustetourne}[1]{\adjustbox{max width=\columnwidth,max height=.95\textheight}{
    \begin{turn}{90}
      #1
    \end{turn}
  }}%
\newcommand{\alert}[1]{\textbf{#1}}
\NewDocumentCommand{\GammaInv}{}{\mathpalette\doGammaInv\relax}
\NewDocumentCommand{\doGammaInv}{mm}{%
  \reflectbox{$\m@th#1\Gamma$}%
}
\renewcommand{\daleth}{\GammaInv}
\newcommand{\montitre}{A unified treatment of structural definitions
on syntax \\ for capture-avoiding substitution, context application,
named substitution, partial differentiation, and so on}
\newcommand{\monsoustitre}{} \newcommand{\montitrecourt}{A unified
treatment of structural definitions on syntax}
\newcommand{\monabstract}{ We introduce a category-theoretic
abstraction of a syntax with auxiliary functions, called an admissible
monad morphism. Relying on an abstract form of structural recursion,
we then design generic tools to construct admissible monad morphisms
from basic data.  These tools automate ubiquitous standard patterns
like (1)~defining auxiliary functions in successive, potentially
dependent layers, and (2)~proving properties of auxiliary functions by
induction on syntax.  We cover significant examples from the
literature, including the standard lambda-calculus with
capture-avoiding substitution, a lambda-calculus with binding
evaluation contexts, the lambda-mu-calculus with named substitution, and
the differential lambda-calculus.  }
\newtheorem{notation}{Notation}
\newtheorem{remark}{Remark}
\definecolor{bois}{rgb}{.5,0,0}
\definecolor{vert}{rgb}{0,0.6,0}
\definecolor{rouge}{rgb}{0.8,0,0}
\definecolor{violet}{rgb}{0.8,0,0.4}
\definecolor{marron}{rgb}{0.4,0.4,0}
\definecolor{bleu}{rgb}{0,0,0.6}
\newcommand{\colorie}[2]{{\color{#1} #2}}
\renewcommand{\vert}[1]{\colorie{vert}{#1}}
\newcommand{\rouge}[1]{\colorie{red}{#1}}
\newcommand{\orange}[1]{\colorie{orange}{#1}}
\author[T.\ Hirschowitz]{Tom Hirschowitz} %
\affiliation{ \institution{Univ. Grenoble Alpes, Univ. Savoie Mont
    Blanc, CNRS, LAMA, 73000}
  \city{Chambéry}
  \country{France}
}
\author[A.\ Lafont]{Ambroise Lafont}
\affiliation{
  \institution{University of Cambridge}
  \city{Cambridge}
  \country{United Kingdom}
}
\title[\montitrecourt]{\montitre}
\keywords{syntax ;  variable binding ; substitution ; category theory}
\begin{document}
\pagestyle{fancy} % removes running headers
\fancyfoot[C]{\thepage}

\begin{abstract}
  \monabstract
\end{abstract}
\maketitle              % typeset the header of the contribution

\section{Introduction}\label{s:intro}
  \paragraph{Motivation}
  The literature offers several initial-algebra semantics frameworks
  for generating and reasoning about syntax with variable binding
  (e.g.,
  \cite{fiore:presheaf,PittsAM:newaas,hofmann:presheaf,FioreHurEquational}).
  Still, even state-of-the-art frameworks lack some expressiveness to
  suit the working operational semanticist's needs.  One typical
  limitation, which is the topic of active
  research~\cite{GratzerSterling,CoragliaDiLiberti}, concerns some
  complex typing features like dependent types.

  In this paper, we are concerned with a different limitation:
  although existing frameworks do explain capture-avoiding
  substitution satisfactorily, they largely ignore the numerous
  similar auxiliary operations on syntax, like context application
  (e.g., in ML~\cite[Chapter~10]{ATAPL}), named substitution in the
  $λμ$-calculus~\cite{Parigot} (a.k.a.\ named
  application~\cite{Iouri}), partial differentiation in the
  differential $λ$-calculus~\cite{LambdaDiff}, and so on.

  One way of dealing with such auxiliary operations is to build them
  into the syntax, and view their defining equations as an equational
  theory, by which to quotient the initial model (see, e.g.,
  \cite{FioreHurEquational,GratzerSterling}). The problem with this
  approach is that it lacks the basic, inductive construction of the
  initial model. Or in other words, it misses the fact that auxiliary
  operations are… auxiliary, i.e., that they are admissible ($=$
  encodable) in the initial model.

  \paragraph{Contributions}
  In this paper, we propose a categorical foundation for such
  admissible operations, called \alert{admissible monad morphisms},
  together with a general framework for constructing them and
  reasoning about them. In particular, the framework offers tools to
  \begin{enumerata}
  \item \label{item:layers} define successive layers of auxiliary
    operations, each layer potentially depending on previous ones (as,
    e.g., in the differential $λ$-calculus~\cite{LambdaDiff}),
  \item \label{item:benign} automatically derive \alert{benign}
    equations, i.e., equations involving auxiliary operations that are
    satisfied by the syntax (such as, e.g., associativity of
    substitution $e[σ][θ] = e[σ[θ]]$).
  \end{enumerata}
  \begin{remark}
    The framework does not yet allow the user to automatically prove
    that auxiliary operations are compatible with \alert{severe}
    equations, i.e., equations not involving auxiliary operations (as,
    again, in the differential $λ$-calculus), whose initial model thus
    may be a proper quotient of the syntax.
  \end{remark}

  We illustrate the expressiveness of the framework by reconstructing
  (1)~Fiore et al.'s~[\citeyear{fiore:presheaf}] approach as a special
  case, and (2)~a few concrete examples from the literature, notably a
  $λ$-calculus with explicit
  substitution~\cite{DBLP:conf/rta/Accattoli19} (which involves
  variable capture by evaluation contexts) and the differential
  $λ$-calculus~\cite{LambdaDiff} (without quotienting by structural
  equations). We also demonstrate that our framework is not tied to
  any particular setting for modelling variable binding, by
  reconstructing pure $λ$-calculus with capture-avoiding substitution
  in De Bruijn representation (§\ref{ss:db}).

  % \paragraph{Related work}

  % \paragraph{Plan}

  % \paragraph{Notation}
  
  % \section{Overview}
  % Before delving into technical details, let us give a brief
  % overview of the framework and results.

  The framework consists of the following components.
  \begin{enumeratewide}
  \item First, we offer a fundamental, very easy construction of
    admissible monad morphisms from suitable monad distributive laws
    in the sense of~\citet{BeckDistlaws}.  A monad
    distributive law is a natural transformation $TS→ST$, where we
    think of $S$ as the monad of basic operations, hence call it the
    \alert{basic} monad, and of $T$ as the one of auxiliary
    operations, hence call it the \alert{auxiliary} monad.
    We show that, whenever $T$ preserves the initial object,
    any monad distributive law induces an admissible
    monad morphism $S → ST$ to the composite monad.
  \item Then, we design a toolbox for constructing such
    distributive laws from more basic data:
    \begin{enumerati}
    \item We first introduce a simple notion of signature for
      distributive laws. A signature consists of an endofunctor $Σ$,
      equipped with an abstract analogue of a structurally recursive
      definition, called a \alert{simple structural law}.  From such a
      signature, we generate a distributive law of the desired form,
      whose basic monad is the free monad $Σ^*$ on $Σ$.  We also
      establish a simple characterisation of \alert{augmented
        algebras}, i.e., $Σ$-algebras featuring the specified
      auxiliary operations in a compatible way.

      Simple structural laws are expressive enough to equip
      (potentially binding) syntax with capture-avoiding substitution,
      but not to tackle Features~\ref{item:layers}
      and~\ref{item:benign} above.
    \item We then introduce a second notion of signature, which is an
      incremental form of the first.  A signature consists of a
      distributive law $δ∶ TS → ST$ whose basic monad $S$ is free on
      some endofunctor $Σ$, equipped with a so-called
      \alert{incremental structural law} over it.  The idea is that we
      have already specified a few auxiliary operations over $S$,
      bundled as the monad $T$, and an incremental structural law is
      like a structurally recursive definition by pattern-matching on
      basic operations, whose body may use all auxiliary
      operations. From this, we construct a new distributive law with
      the same basic monad, but whose auxiliary monad contains new
      operations. We again present a simple explicit description of
      augmented algebras.

      Incremental structural laws are expressive enough to equip
      syntax with incremental layers of auxiliary operations, hence
      implement Feature~\ref{item:layers} above.
    \item We finally address the benign equations
      feature~\ref{item:benign}.  For this, we again start from a
      distributive law with a free basic monad, and introduce
      \alert{structural equational systems}, which, roughly, consist
      of:
      \begin{itemize}
      \item (an abstract form of) equations on terms, potentially
        using auxiliary operations, and
      \item data that ensures that the equations will be satisfied by
        the syntax.
      \end{itemize}
      From this, we construct a distributive law with the same basic
      monad.% , and whose auxiliary monad is obtained by quotienting the
      % original by the equations.
    % \item We finally investigate severe
    %   equations~\ref{item:severe}. Given a distributive law with free
    %   basic monad $S$, equipped with (an abstract form of)
    %   \begin{itemize}
    %   \item equations $E$ on the basic monad $S$, and
    %   \item data that ensures that auxiliary operations are compatible
    %     with the quotient, say $S/E$,
    % \end{itemize}
    % we construct a distributive law with the same auxiliary monad, and
    % $S/E$ as basic monad.
    \end{enumerati}
  \end{enumeratewide}

  \paragraph{Plan}
  We start in~§\ref{s:admissible} by introducing admissible monad
  morphisms and showing how they may be derived from suitable monad
  distributive laws.  We then study simple structural laws
  in~§\ref{s:ssl}, incremental structural laws in~§\ref{s:isl}, and
  benign equations in~§\ref{s:benign}, giving applications along the
  way. Finally, we conclude and give some perspectives
  in~§\ref{s:conclu}.

  \paragraph{Related work}
  Our framework abstracts over the idea of a definition by structural
  recursion, to categories other than sets.  The abstraction
  emphasises the idea that the syntax equipped with auxiliary
  functions is initial in a category of augmented algebras.  Our
  framework is directly inspired by (and abstracts over) Fiore et
  al.'s~\cite{fiore:presheaf,DBLP:conf/lics/Fiore08}.  A notable
  difference is that Fiore et al.\ rely on a technical notion called
  \alert{pointed strong} endofunctors, which intuitively amounts to
  assuming that the argument of the endofunctor already has part of
  the desired structure -- in this case, variables.  The corresponding
  instance of our abstract framework avoids this astute trick by
  taking the free structure. Technically, this is visible in the
  presence of $X⊗S(Y)$ in the codomain of our simple structural law
  in~§\ref{ss:lambda}. Indeed, taking $S(Y)$ instead of merely $Y$ is
  what allows us to use $σ^↑$ in the definition.

  Our work is related to the idea of
  traversals~\cite{DBLP:journals/pacmpl/AllaisA0MM18}.  The latter are
  however limited to categories of families of sets indexed by some
  fixed set of types, and tailored for capture-avoiding substitution.

  \paragraph{Notation and prerequisites}
  We often conflate natural numbers with the corresponding finite
  ordinals, or some choice of equipotent set, hopefully made clear by
  the context.

  We assume basic knowledge of category theory~\cite{MacLane:cwm} and
  locally finitely presentable categories~\cite{Adamek} (although the
  latter may be ignored on a superficial reading).  By default, our
  categories are locally small, although we occasionally repeat it for
  emphasis.  We let $𝐂𝐀𝐓$ denote the category of (locally small)
  categories.

  % Let us start with a word on size matters, which may be entirely
  % ignored.  Our categories are implicitly locally small relative to
  % their universe of definition: given a universe $𝒰$ which is an
  % element of a larger universe $𝒰⁺$, a category is locally $𝒰$-small
  % if its objects form an element of $𝒰⁺$, and each hom-set is in $𝒰$.
  % A locally $𝒰$-small category is then $𝒰$-small when its objects in
  % fact form an element of $𝒰$.  Typically, $𝐒𝐞𝐭$, the category of
  % $𝒰$-small sets, is locally $𝒰$-small, hence is an object of $𝐂𝐀𝐓$,
  % the category of locally $𝒰$-small categories.

  We generally denote initial objects by $∅$, relying on context to
  infer the corresponding category.  For any category $𝐂$ and object
  $c ∈ 𝐂$, we denote by $𝐂/c$ the slice category over $c$.

  On any category $𝐂$ with binary coproducts, for any object $E ∈ 𝐂$,
  we denote the corresponding \alert{option} functor by $O_E$, i.e.,
  $O_E(C) = C+E$, for some choice of coproducts.

  For any endofunctor $F$, we denote its category of algebras by
  $F\alg$.  For any monad $T$, we denote its category of monad
  algebras by $T\Alg$; it is then a full subcategory of $T\alg$.  We
  denote the coproduct of two monads $T$ and $T'$ by $T⊕T'$, to
  distinguish it from the coproduct $T+T'$ of underlying endofunctors.

  Any finitary endofunctor $F$ on a locally finitely presentable
  category $𝐂$ generates a free monad, that we denote by $F^*$.  In
  particular, the initial algebra is denoted by $F^*∅$. As is
  well-known, this may often be computed as a directed colimit which
  we denote by $μE.F(E)$.  We furthermore denote the unit $F → F^*$ by
  $η_F$.  Please also note that $F^*\Alg$ is isomorphic to $F\alg$:
  the isomorphism maps an algebra $F^*c → c$ to $Fc → F^*c → c$.
  
  We write $Δ$ for the diagonal functor $𝐂 → 𝐂×𝐂$ mapping an object
  $c$ of a category $𝐂$ to $(c,c)$.  We use the letters $Γ$ and $Θ$ to
  denote bifunctors $𝐂×𝐂 → 𝐂$. We sometimes write $Γ$ instead of
  $ΓΔ$. E.g., we talk about $Γ$-algebras instead of $ΓΔ$-algebras, and
  denote $(ΓΔ)^*$ by $Γ^*$. We furthermore sometimes denote $Γ(X,Y)$
  by $Γ_YX$. If $F$ is a functor to $𝐂$, the bifunctor mapping $(X,Y)$
  to $Γ(X,FY)$ is denoted by $Γ_F$.

  Moreover, for denoting the components of a natural transformation
  $α∶ F → G$, we freely switch between $α_C$ and $α C$, and we denote
  horizontal composition in $𝐂𝐀𝐓$ by mere juxtaposition. E.g.,.  given
  any $f∶ C → D$ in the domain category, we also write $α_f$, or
  $α f$, for either composite $α_D ∘ F(f) = G(f) ∘ α_C$.

  The middle dot $·$ is overloaded: it may have the followin three
  meanings, depending on context:
  \begin{itemize}
  \item for any presheaf $F∶ \op{𝐂} → 𝐒𝐞𝐭$, morphism $f∶ c → d$ in
    $𝐂$, and element $x ∈ F(d)$, we define $x · f ≔ F(f)(x)$;
  \item for any set $X$ and object $c$ of any category with enough
    coproducts, we denote by $X · c$ the $X$-fold coproduct
    $∑_{x ∈ X} c$; and finally,
  \item the middle dot is used in the syntax of differential
    $λ$-calculus.    
  \end{itemize}
  \begin{definition}
    A \alert{monad morphism} $S → T$ between monads $S$ and $T$ on a
    given category $𝐂$ is a natural transformation $α∶ S → T$
    commuting with multiplication and unit, i.e., making the following
    diagrams commute.
    \begin{center}
      \diag{%
        SS \& T T \\
        S \& T
      }{%
        (m-1-1) edge[labela={α α}] (m-1-2) %
        edge[labell={μ^S}] (m-2-1) %
        (m-2-1) edge[labelb={α}] (m-2-2) %
        (m-1-2) edge[labelr={μᵀ}] (m-2-2) %
      }
      \hfil
      \diag{%
        \& \id_𝐂 \\
        S \& \& T
      }{%
        (m-1-2) edge[labelal={η^S}] (m-2-1) %
        (m-2-1) edge[labelb={α}] (m-2-3) %
        (m-1-2) edge[labelar={ηᵀ}] (m-2-3) %
      }
    \end{center}
    We let $𝐌𝐧𝐝(𝐂)$ denote the category of monads in $𝐂$ and monad
    morphisms between them, and $𝐌𝐧𝐝_f(𝐂)$ denote its full subcategory
    spanned by finitary monads.
  \end{definition}

  \section{Admissible monad morphisms and distributive laws}\label{s:admissible}
  In this section, we introduce admissible monad morphisms,
  and show how to construct them from monad distributive laws.
  
  \subsection{Admissible monad morphisms}
  \begin{definition}
    Let $𝐂$ have an initial object.  A morphism $α∶ R → S$ of monads
    on $𝐂$ is \alert{admissible} iff its component $α_{∅}∶ R∅ → S∅$ at
    the initial object is an isomorphism.
  \end{definition}
  \begin{remark}
  Intuitively, thinking of $R$ and $S$ as algebraic structures, or
  families of operations, $R(∅)$ and $S(∅)$ are the initial algebras,
  i.e., morally the syntaxes of both languages. The monad morphism
  translates each operation from $R$ to some (derived) operation from
  $S$.  Admissibility then amounts to the translation being an
  isomorphism.
\end{remark}

We will give the paradigmatic example of admissible morphism just
below, but before that, let us mention a different point of view on
admissible morphisms.

\begin{definition}[{\cite[§VI.3]{MacLane:cwm}}]
We call \alert{monadic} (over $𝐂$)
any functor isomorphic in $𝐂𝐀𝐓/𝐂$ to some forgetful functor
$T\Alg → 𝐂$, and let $𝐌𝐨𝐧𝐚𝐝𝐢𝐜/𝐂$ denote the full subcategory of
$𝐂𝐀𝐓/𝐂$ spanned by monadic functors.
\end{definition}

Given a category $𝐂$, the assignment mapping any monad $T$ on $𝐂$ to
the forgetful functor $Uᵀ∶ T\Alg → 𝐂$ extends to a functor
$𝐬𝐞𝐦∶ \op{𝐌𝐧𝐝(𝐂)} → 𝐌𝐨𝐧𝐚𝐝𝐢𝐜/𝐂$, and we have:
\begin{lemma}
  The functor $𝐬𝐞𝐦$ is an equivalence of categories.
\end{lemma}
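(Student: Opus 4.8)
The plan is to prove the two defining conditions of an equivalence separately. Essential surjectivity is immediate from the definitions: every object of $𝐌𝐨𝐧𝐚𝐝𝐢𝐜/𝐂$ is, by definition, isomorphic in $𝐂𝐀𝐓/𝐂$ to some forgetful functor $U^{T} = 𝐬𝐞𝐦(T)$, and since $𝐌𝐨𝐧𝐚𝐝𝐢𝐜/𝐂$ is full in $𝐂𝐀𝐓/𝐂$, this isomorphism already lives in $𝐌𝐨𝐧𝐚𝐝𝐢𝐜/𝐂$. So all the content is in showing that $𝐬𝐞𝐦$ is fully faithful: for any two monads $S,T$ on $𝐂$, the assignment $\alpha \mapsto 𝐬𝐞𝐦(\alpha) = \alpha^{*}$, sending a monad morphism $\alpha\colon S \to T$ to the functor $\alpha^{*}\colon T\Alg \to S\Alg$, $(c,\theta) \mapsto (c,\theta \circ \alpha_{c})$ (and acting as the identity on underlying morphisms), must be a bijection onto the functors $G\colon T\Alg \to S\Alg$ commuting strictly over $𝐂$, i.e.\ with $U^{S}G = U^{T}$.

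For injectivity I would exhibit an explicit retraction recovering $\alpha$ from $\alpha^{*}$. Given any such $G$, evaluate it on free algebras: writing $G(Tc,\mu^{T}_{c}) = (Tc,\theta_{c})$ — the underlying object is $Tc$ precisely because $G$ commutes over $𝐂$ — set $\alpha^{G}_{c} = \theta_{c}\circ S(\eta^{T}_{c})$. Applying this recovery to $G = \alpha^{*}$ gives $\theta_{c} = \mu^{T}_{c}\circ\alpha_{Tc}$, whence $\alpha^{G}_{c} = \mu^{T}_{c}\circ\alpha_{Tc}\circ S(\eta^{T}_{c}) = \mu^{T}_{c}\circ T(\eta^{T}_{c})\circ\alpha_{c} = \alpha_{c}$ by naturality of $\alpha$ and the unit law $\mu^{T}\circ T\eta^{T} = \id$. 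Hence $\alpha \mapsto \alpha^{*}$ is injective.

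The heart of the argument is surjectivity: given $G$, I must show $\alpha^{G}$ is a monad morphism and that $G = (\alpha^{G})^{*}$. The key observation is that the behaviour of $G$ on every algebra is forced by its behaviour on free ones, because the structure map $\theta\colon (Tc,\mu^{T}_{c}) \to (c,\theta)$ of any $T$-algebra is itself a $T$-algebra homomorphism out of the free algebra, split on the right by the unit $\eta^{T}_{c}$. Applying $G$ to this homomorphism and precomposing with $S(\eta^{T}_{c})$, the algebra unit law $\theta\circ\eta^{T}_{c} = \id_{c}$ collapses everything to
\[
G(c,\theta) = (c,\; \theta \circ \alpha^{G}_{c}).
\]
Specialising to the free algebra $(Tc,\mu^{T}_{c})$ yields the pivotal identity $\theta_{c} = \mu^{T}_{c}\circ\alpha^{G}_{Tc}$. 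Once this is in hand, the monad-morphism axioms for $\alpha^{G}$ follow from the $S$-algebra laws satisfied by the $\theta_{c}$ together with $G$ preserving the free-algebra homomorphisms $Tf$ and $\mu^{T}_{c}$ (naturality from the former, unit compatibility from the $S$-algebra unit law, multiplication compatibility from $S$-algebra associativity combined with the pivotal identity), and the displayed formula is exactly the definition of $(\alpha^{G})^{*}$, so $G = (\alpha^{G})^{*}$.

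I expect the main obstacle to be bookkeeping rather than conceptual: one must phrase the ``every algebra is a retract of a free one along its structure map'' step precisely enough that the identity $\theta_{c} = \mu^{T}_{c}\circ\alpha^{G}_{Tc}$ drops out, and then carefully discharge the multiplication axiom $\alpha^{G}_{c}\circ\mu^{S}_{c} = \mu^{T}_{c}\circ\alpha^{G}_{Tc}\circ S(\alpha^{G}_{c})$, which is the one place where $S$-algebra associativity of $\theta_{c}$ is genuinely used. A secondary point to keep straight is the handedness — $𝐬𝐞𝐦$ is contravariant, so a morphism $S \to T$ induces a functor $T\Alg \to S\Alg$ — and the \emph{strictness} of the commutation over $𝐂$ (rather than commutation up to isomorphism) is exactly what turns the underlying-object computations above into literal equalities.
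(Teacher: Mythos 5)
Your proof is correct, and it follows the same outer decomposition as the paper's: essential surjectivity directly from the definition of monadic functor (using fullness of $\mathbf{Monadic}/\mathbf{C}$ in $\mathbf{CAT}/\mathbf{C}$ to place the isomorphism in the right category), plus full faithfulness of $\alpha \mapsto \alpha^{*}$. The difference is that the paper discharges full faithfulness by citing Barr's Proposition~5.3, whereas you reprove that result from scratch; your argument is the classical semantics-functor computation and it goes through exactly as you outline. In particular, the two delicate steps check out: (i)~strictness of the commutation over $\mathbf{C}$ forces $G$ to act as the identity on underlying objects and morphisms, so applying $G$ to the homomorphism $\theta\colon (Tc,\mu^{T}_{c}) \to (c,\theta)$ and precomposing with $S(\eta^{T}_{c})$ really does yield the literal equality $G(c,\theta) = (c,\theta\circ\alpha^{G}_{c})$; and (ii)~the multiplication axiom follows as you predict, since
$\alpha^{G}_{c}\circ\mu^{S}_{c} = \theta_{c}\circ\mu^{S}_{Tc}\circ SS(\eta^{T}_{c}) = \theta_{c}\circ S(\theta_{c})\circ SS(\eta^{T}_{c}) = \theta_{c}\circ S(\alpha^{G}_{c}) = \mu^{T}_{c}\circ\alpha^{G}_{Tc}\circ S(\alpha^{G}_{c})$,
using $S$-algebra associativity of $\theta_{c}$ and your pivotal identity $\theta_{c} = \mu^{T}_{c}\circ\alpha^{G}_{Tc}$. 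What the paper's citation buys is brevity; what your inline proof buys is self-containment and the clear identification of the only two ingredients needed: every algebra is a retract of a free one along its structure map, and strict (rather than up-to-isomorphism) commutation over the base. Note also that your separate injectivity argument is strictly speaking redundant — it is subsumed by surjectivity together with the recovery formula $\alpha = \alpha^{(\alpha^{*})}$ — but it does no harm.
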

\begin{proof}
  The functor $𝐬𝐞𝐦$ is essentially surjective by definition of monadic
  functors. It is also full and faithful by
  \cite[Proposition~5.3]{Barr1970CoequalizersAF}.
\end{proof}
\begin{proposition}\label{prop:admissible:street}
  Given a category $𝐂$ with initial object, monads $S$ and $T$ on $𝐂$,
  and a monad morphism $α∶ S → T$, the following are equivalent:
  \begin{enumeratiwide} % [(i)]
  \item \label{item:admissible} $α$ is admissible;
    \item \label{item:preserves} $𝐬𝐞𝐦(α)∶ T\Alg → S\Alg$ preserves the initial object;
    \item \label{item:creates} $𝐬𝐞𝐦(α)∶ T\Alg → S\Alg$ creates the
      initial object in the sense of~\cite[§V.1]{MacLane:cwm}, which
      in this case means that the initial $S$-algebra $S∅$ possesses a
      unique $T$-algebra structure $a∶ TS∅ → S∅$
      making the triangle
      \begin{center}
        \diag{%
          SS∅ \& \& TS∅ \\
          \& S∅
        }{%
          (m-1-1) edge[labela={α_{S∅}}] (m-1-3) %
          edge[labelbl={μ^S_∅}] (m-2-2) %
          (m-1-3) edge[labelbr={a}] (m-2-2) %
        }
      \end{center}
      commute, which furthermore makes it initial in $T\Alg$.
    \end{enumeratiwide}
\end{proposition}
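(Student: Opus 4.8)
The plan is to prove the cycle \ref{item:admissible}$\,\Rightarrow\,$\ref{item:creates}$\,\Rightarrow\,$\ref{item:preserves}$\,\Rightarrow\,$\ref{item:admissible}. I first record the facts I will use freely. The functor $𝐬𝐞𝐦(α)$ sends a $T$-algebra $(X,b)$ to the $S$-algebra $(X, b∘α_X)$ and is the identity on underlying morphisms; the initial objects of $S\Alg$ and $T\Alg$ are the free algebras $(S∅,μ^S_∅)$ and $(T∅,μ^T_∅)$ on $∅$; and any algebra morphism (for $S$ or for $T$) whose underlying $𝐂$-morphism is invertible is automatically an isomorphism of algebras. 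The one computation I set up at the outset is that $α_∅$ is the underlying map of an $S$-algebra morphism $(S∅,μ^S_∅) → 𝐬𝐞𝐦(α)(T∅,μ^T_∅) = (T∅,\,μ^T_∅∘α_{T∅})$: this is exactly the monad-morphism square $α∘μ^S = μ^T∘αα = μ^T∘α_T∘Sα$ read at $∅$. By initiality of $(S∅,μ^S_∅)$ it is the \emph{unique} such morphism; I will call it the comparison morphism.

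For \ref{item:admissible}$\,\Rightarrow\,$\ref{item:creates}, assume $α_∅$ invertible and transport the initial $T$-algebra structure along it, setting $a ≔ α_∅^{-1}∘μ^T_∅∘Tα_∅$. By construction $α_∅$ is then a $T$-algebra morphism $(S∅,a) → (T∅,μ^T_∅)$ with invertible underlying map, hence an isomorphism, so $(S∅,a)$ is initial in $T\Alg$; and a one-line naturality computation (using $Tα_∅∘α_{S∅} = (αα)_∅$ and the monad-morphism square) gives $a∘α_{S∅} = μ^S_∅$, so the required triangle commutes. The crux is uniqueness of $a$. Given any $T$-algebra structure $a'$ on $S∅$ with $a'∘α_{S∅} = μ^S_∅$, let $u∶ (T∅,μ^T_∅) → (S∅,a')$ be the unique morphism out of the initial $T$-algebra. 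Because $a'∘α_{S∅}=μ^S_∅$, the image $𝐬𝐞𝐦(α)(u)$ is an $S$-algebra morphism $𝐬𝐞𝐦(α)(T∅,μ^T_∅) → (S∅,μ^S_∅)$; composing the comparison morphism $α_∅$ with $𝐬𝐞𝐦(α)(u)$ yields an $S$-algebra endomorphism of the initial algebra $(S∅,μ^S_∅)$, which is therefore the identity. Hence $u∘α_∅ = \id$, so $u = α_∅^{-1}$ is invertible and thus a $T$-algebra isomorphism, and transporting $μ^T_∅$ along it forces $a' = a$.

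The remaining implications are immediate. For \ref{item:creates}$\,\Rightarrow\,$\ref{item:preserves}, initiality of $(S∅,a)$ provides an isomorphism $(T∅,μ^T_∅) ≅ (S∅,a)$ in $T\Alg$; applying the functor $𝐬𝐞𝐦(α)$ and using $𝐬𝐞𝐦(α)(S∅,a) = (S∅,\,a∘α_{S∅}) = (S∅,μ^S_∅)$ exhibits $𝐬𝐞𝐦(α)(T∅,μ^T_∅)$ as isomorphic to the initial $S$-algebra, hence initial, i.e.\ $𝐬𝐞𝐦(α)$ preserves the initial object. For \ref{item:preserves}$\,\Rightarrow\,$\ref{item:admissible}, if $𝐬𝐞𝐦(α)(T∅,μ^T_∅)$ is initial then the comparison morphism $α_∅∶(S∅,μ^S_∅) → 𝐬𝐞𝐦(α)(T∅,μ^T_∅)$ is a morphism between two initial objects of $S\Alg$, hence an isomorphism; forgetting to $𝐂$ shows $α_∅$ invertible, which is admissibility.

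The only delicate point is the uniqueness clause in \ref{item:creates} — that $𝐬𝐞𝐦(α)$ genuinely \emph{creates}, rather than merely preserves or lifts, the initial object. Every other step reduces to the opening identification of $α_∅$ as the canonical comparison $(S∅,μ^S_∅) → 𝐬𝐞𝐦(α)(T∅,μ^T_∅)$, together with the initiality of the free algebras on $∅$.
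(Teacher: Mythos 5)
Your proof is correct and takes essentially the same approach as the paper's: both hinge on identifying $\alpha_{\emptyset}$ as the unique $S$-algebra morphism from the initial $S$-algebra $(S\emptyset,\mu^S_\emptyset)$ to $\mathbf{sem}(\alpha)(T\emptyset,\mu^T_\emptyset)$, transporting $T$-algebra structure along an isomorphism, and settling the uniqueness clause of \ref{item:creates} via initiality of $S\emptyset$ in $S\Alg$ (forcing a suitable endomorphism to be the identity). The only divergence is organizational: you run the cycle \ref{item:admissible}${}\Rightarrow{}$\ref{item:creates}${}\Rightarrow{}$\ref{item:preserves}${}\Rightarrow{}$\ref{item:admissible} with the explicit transport $a=\alpha_\emptyset^{-1}\circ\mu^T_\emptyset\circ T\alpha_\emptyset$, whereas the paper proves \ref{item:admissible}${}\Leftrightarrow{}$\ref{item:preserves} first and derives \ref{item:preserves}${}\Rightarrow{}$\ref{item:creates} by lifting an abstract $S$-algebra isomorphism along the isofibration $\mathbf{sem}(\alpha)$ --- the same transport in slightly more abstract clothing.
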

\begin{proof}
  See~§\ref{s:admissible:street}.
\end{proof}

\begin{example}\label{ex:fpt}
  Our motivating example is in fact a class of examples.  For any
  finitary, \alert{pointed strong} endofunctor $Σ$ on any nice monoidal
  category $(𝐂,⊗,I,α,λ,ρ)$ (the impatient reader may consult
  Definitions~\ref{def:pts} and~\ref{def:lfpmon} below for details,
  but these are not yet needed), \citet{fiore:presheaf} introduce a category $Σ\Mon$ of
  \alert{$Σ$-monoids}, which are objects $X ∈ 𝐂$ with both $Σ$-algebra
  structure $ΣX → X$ and monoid structure $I → X ← X⊗ X$, satisfying a
  standard coherence condition (Definition~\ref{def:SMon} below).

  They then show that the initial $(I+Σ)$-algebra, or equivalently the
  free $Σ$-algebra on $I$, admits a unique $Σ$-monoid structure, which
  makes it initial in $Σ\Mon$. Furthermore, both forgetful functors
  from $(I + Σ)\alg$ and $Σ\Mon$ to $𝐂$ are monadic, and there is an
  obvious forgetful functor $Σ\Mon → (I + Σ)\alg$.

  Thus, denoting by $(I + Σ)^*$ and $Σ^⊛$ the corresponding monads,
  their result may be read as proving that the induced monad morphism
  $$(I + Σ)^* → Σ^⊛$$
  is admissible.
\end{example}

\subsection{From distributive laws}
Let us now show how to construct admissible monad morphisms from
distributive laws. We first recall from~\cite{BeckDistlaws} that
  \begin{enumeratiwide}
  \item a \alert{monad distributive law} of $S$ over $T$ is a natural
    transformation $δ∶ TS → ST$, commuting with unit and
    multiplication of both monads $S$ and $T$, and that
  \item any such distributive law equips the composite functor $ST$
    with monad structure, which in particular makes the natural
    transformation $Sη^T∶ S → ST$ into a monad morphism.
\end{enumeratiwide}
The idea is to start from a distributive law with a suitable
constraint on $T$, namely that $T(∅) ≅ ∅$. Intuitively, $T$ may have
many operations, but no constants to feed them with. We name such
monads accordingly:
\begin{definition}
  A monad $T$ on a category with an initial object $∅$ is
  \alert{constant-free} iff its unit at $∅$, $ηᵀ_{∅}∶ ∅ → T(∅)$, is an
  isomorphism.
\end{definition}

\begin{proposition}
  For any distributive law $δ∶ TS → ST$ with $T$ constant-free, the
  monad morphism $Sηᵀ∶ S → ST$ is admissible.
\end{proposition}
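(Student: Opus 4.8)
The plan is to unfold both definitions and observe that the conclusion is then immediate. By definition, the monad morphism $Sηᵀ∶ S → ST$ is admissible precisely when its component at the initial object, $(Sηᵀ)_∅∶ S∅ → ST∅$, is an isomorphism. So the whole task reduces to examining this single component.

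First I would identify the component explicitly. Following the conventions on horizontal composition recalled in the notation section, the component at $∅$ of the whiskered natural transformation $Sηᵀ$ is just $S(ηᵀ_∅)∶ S∅ → S(T∅)$. Next I would invoke the constant-freeness hypothesis: by assumption $ηᵀ_∅∶ ∅ → T∅$ is an isomorphism. Since every functor preserves isomorphisms, applying $S$ shows that $S(ηᵀ_∅)$ is an isomorphism as well. Thus $(Sηᵀ)_∅$ is an isomorphism, which is exactly the definition of admissibility.

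There is essentially no real obstacle here; the only care needed is bookkeeping. In particular, I would make sure to note that $Sηᵀ$ is a genuine monad morphism — this was already recorded just before the statement, as part of the consequences of having a distributive law $δ∶ TS → ST$ — and that its component at $∅$ is genuinely $S(ηᵀ_∅)$ and does not secretly involve $δ$. Indeed, the distributive law itself plays no role in the admissibility argument: it is needed only upstream, to endow the composite functor $ST$ with its monad structure and to guarantee that $Sηᵀ$ is a morphism of monads. Once those facts are in place, admissibility follows purely from constant-freeness of $T$ together with functoriality of $S$.
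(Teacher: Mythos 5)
Your proposal is correct and is exactly the unfolding the paper has in mind: its proof is literally ``Immediate,'' since the component $(Sη^T)_{∅} = S(η^T_{∅})$ is an isomorphism because $η^T_{∅}$ is one by constant-freeness and $S$ preserves isomorphisms. Your extra bookkeeping — that $Sη^T$ is a monad morphism via the distributive law, and that $δ$ plays no further role — matches the setup recorded just before the statement.
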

\begin{proof}
  Immediate.
\end{proof}

This result might seem purely academic, but in fact all of our
applications arise in this way. The technical core of the paper thus
consists in designing tools to construct such distributive laws over
constant-free monads.

\section{Simple structural laws}\label{s:ssl}
In this section, we introduce our first construction of distributive
laws with constant-free auxiliary monad, from what we call simple
structural laws.  Let us start by working on a simple (perhaps
surprising) example, and abstract over it in the following subsection.
\subsection{On a simple example}\label{ss:addition}
We consider the unary (Peano) natural numbers, viewed as the initial
algebra of the endofunctor $Σ$ on sets defined by $Σ(X) = 1 + X$.  We
use standard syntax, i.e., $0$ for the constant and $s$ for successor.
Let us now consider the category, say $Σ\alg^{\add}$, of $Σ$-algebras $X$
equipped with a binary operation, denoted by $(x,y) ↦ x+y$, satisfying
the following equations.
  \begin{mathpar}
    s(x)+y = s(x+y) \and 0+y = y
  \end{mathpar}
  We call such an operation an addition.

  \begin{proposition}\label{prop:triv}
    The
    initial algebra $Σ^*(∅)$ is equipped with a unique addition, which
    makes it initial in the category of $Σ$-algebras with addition.
  \end{proposition}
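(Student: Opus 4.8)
The plan is to exploit the initiality of $N ≔ Σ^*(∅)$ in $Σ\alg$ twice: once to construct the addition and once to establish the universal property. Write $0 ∈ N$ and $s ∶ N → N$ for the structure maps of the initial $Σ$-algebra. Since the two defining equations recurse only on the first argument, treating the second as a parameter, I would obtain the addition by currying. Equip the exponential $N^N$ (available because we work in $𝐒𝐞𝐭$) with the $Σ$-algebra structure $Σ(N^N) = 1 + N^N → N^N$ that sends the point of $1$ to $\id_N$ and each $g ∶ N → N$ to $s ∘ g$. By initiality there is then a unique $Σ$-algebra morphism $\widehat{+} ∶ N → N^N$, and uncurrying sets up a bijection between such morphisms and additions on $N$: the morphism condition at the point of $1$ reads $\widehat{+}(0) = \id_N$, i.e.\ $0 + y = y$, while at a successor it reads $\widehat{+}(s(x)) = s ∘ \widehat{+}(x)$, i.e.\ $s(x) + y = s(x + y)$. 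Hence $N$ carries exactly one addition.

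For initiality, let $(A, 0_A, s_A, +_A)$ be any $Σ$-algebra with addition. Initiality of $N$ in $Σ\alg$ furnishes a unique $Σ$-algebra morphism $h ∶ N → A$, which is the only possible underlying map of a morphism in $Σ\alg^{\add}$, so uniqueness is automatic. To see that $h$ preserves addition I would again curry in the first argument and appeal to initiality: equip $A^N$ with the $Σ$-algebra structure sending the point of $1$ to $h ∈ A^N$ and each $g$ to $s_A ∘ g$. Then both $x ↦ (y ↦ h(x + y))$ and $x ↦ (y ↦ h(x) +_A h(y))$ are $Σ$-algebra morphisms $N → A^N$ — the first using that $h$ commutes with $s$ together with the equation $s(x)+y = s(x+y)$ holding in $N$, the second using $h(0) = 0_A$ and the corresponding equations in $A$ — so they coincide by uniqueness, which is exactly $h(x+y) = h(x) +_A h(y)$. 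Thus $h$ is the unique morphism in $Σ\alg^{\add}$, establishing that $N$, with its addition, is initial.

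The only non-formal step is the construction of the addition: the equations prescribe a primitive-recursive definition rather than a plain fold, so initiality of $N$ does not apply directly with codomain $N$. Replacing the target by the exponential $N^N$ is precisely what recasts the recursion as an ordinary algebra morphism, and I expect this currying move — or rather its abstraction, in which $N^N$ is replaced by an object built systematically from the signature — to be the point that must be generalised into the simple structural laws of the next subsection.
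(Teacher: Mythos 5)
Your proof is correct. The paper itself disposes of this proposition with the single sentence ``An easy induction'', so elementwise: define $x+y$ by recursion on $x$, check uniqueness and preservation by induction. Your route is the categorical repackaging of that induction: by cartesian closure of $\mathbf{Set}$, additions on $N$ correspond exactly to $\Sigma$-algebra morphisms $N \to N^N$ for the algebra structure $(\ast \mapsto \id_N,\ g \mapsto s \circ g)$, and both existence/uniqueness of the addition and the preservation property of the unique map $h \colon N \to A$ then fall out of initiality in $\Sigma\alg$ (the latter by exhibiting $x \mapsto (y \mapsto h(x+y))$ and $x \mapsto (y \mapsto h(x) +_A h(y))$ as algebra morphisms into $A^N$ and invoking uniqueness). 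All the verifications you list check out, and uniqueness of the morphism in $\Sigma\alg^{\add}$ is indeed automatic since such morphisms are in particular $\Sigma$-algebra morphisms. What your approach buys over the paper's bare induction is that it works verbatim in any cartesian closed category with an initial $\Sigma$-algebra, and your closing remark is precisely on target: the paper's abstract treatment refines the arity to the bifunctor $\Gamma(X,Y) = X \times Y$, cocontinuous in its first argument, and in the proof of Theorem~\ref{thm:incremental} exploits the resulting right adjoint $\daleth_Y$ of $\Gamma_Y$ (here the exponential $({-})^Y$) together with transposition (Lemma~\ref{lem:circ:times} and the families $\widetilde{h^\circ}$, $\widetilde{h^\circ}^\omega$) --- exactly your currying move, systematised so that the recursive call may land in $S(Y)$ rather than $Y$ and so that the construction yields a distributive law rather than a one-off operation. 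The only caveat is cosmetic: your argument, unlike the paper's eventual machinery, is tied to the specific exponential $N^N$ and does not by itself produce the monad $T = \Gamma_S^\fleur$ or the admissible morphism $S \to S^{\add}$ discussed after the proposition, but the proposition as stated does not ask for those.
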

  \begin{proof}
    An easy induction.
  \end{proof}
  Furthermore, the forgetful functor $U^{\add}∶ Σ\alg^{\add} → 𝐒𝐞𝐭$ is
  monadic, and the corresponding monad, say $S^{\add}$ maps any set
  $X$ of variables to terms generated from $0$ and all $x ∈ X$ by $s$
  and $+$, modulo the above equations.

  Now, for each set $X$, there is an obvious inclusion $SX ↪ S^{\add}X$.
  This family of inclusions induces a monad morphism $S → S^{\add}$,
  and it is admissible by Propositions~\ref{prop:triv}
  and~\ref{prop:admissible:street}.

  Let us now describe the monad $S^{\add}$ and its relation to $S$
  more carefully, which will lead us to distributive laws.

  Orienting equations, terms generated by $0$, $s$, and $+$ from a
  given set of variables, quotiented by the equations, have a normal
  form in which the first argument of an addition is either a
  variable, or a further addition.  Otherwise said, normal forms with
  variables in $X$ are generated by the following grammar:
  \begin{equation}
    \label{eq:grammar-add}
  \begin{array}{rcll}
    e & ::= & 0 ｜ s(e) ｜ a \\
    a & ::= & x ｜ a+e.
  \end{array}
  \end{equation}
  Clearly, if no variable is available, only the first two cases can
  occur in a term $e$, which together correspond precisely to the
  syntax generated by $S$.  We thus recover the fact that
  $S∅→ S^{\add}∅$ is an isomorphism.

  Let us replay this reasoning, at a slightly more abstract level.
  The starting point is to refine the arity of the auxiliary function
  (addition), here the endofunctor $X ↦ X²$, into a bifunctor
  $$\begin{array}{rcl}
    Γ∶ 𝐒𝐞𝐭² & → & 𝐒𝐞𝐭 \\
    (X,Y) & ↦ & X×Y\rlap{.}
    \end{array}$$
    Intuitively, this allows us to distinguish the ``decreasing''
    occurrence of the argument.

    By mimicking the recursive definition of addition on $S∅$, we then
    define the following natural transformation
    \begin{equation}
      \begin{array}{rcl}
        d_{X,Y}∶ Γ(Σ(X),Y) & → & S(\rouge{Γ(X,Y)}+\vert{Y}) \\
        (0,y) & ↦ & η^S(\vert{in₂}(y)) \\
        (s(x),y) & ↦ & s (η^S (\rouge{in₁}(x,y)))\rlap{,}
      \end{array}
      \label{eq:d}
    \end{equation}
    where we write elements of $Σ(Z)$ as terms of depth $1$ in $S(Z)$,
    for any set $Z$.  (We will use a slightly more general codomain in the
    abstract case.)  Elements of the domain are thought of as patterns
    in the first argument, and the natural transformation maps them to
    ``definition bodies'', which are basic terms generated from the
    auxiliary arguments in $Y$, and potentially a ``recursive call''
    in $Γ(X,Y)$ --- hence with ``strictly smaller'' main argument.

    We furthermore define a monad $T ≔ Γ_S^\fleur$, which, we recall
    from~§\ref{s:intro}, denotes $(Γ_SΔ)^*$.  Concretely,
    $T(X) ≔ μA.(X + A×S(A))$ corresponds to the syntactic category $a$
    above: terms consist of additions in normal form, i.e., additions
    whose first argument may be a variable or some further addition,
    inductively, and whose second argument is an arbitrary
    expression. We thus have $S^{\add} ≅ S ∘ T$.

    Finally, we construct a distributive law $TS → ST$, by induction
    from the natural transformation~\eqref{eq:d}, which repeatedly
    applies oriented equations until some normal form is reached.  We
    will be done if we prove $T(∅) ≅ ∅$, for then
    $$S^{\add}(∅) ≅ ST(∅) ≅ S(∅)\rlap{,}$$
    as desired.  But $T(∅) ≅ ∅$ follows from the next lemma with
    $Θ = Γ_S$.
    \begin{lemma}
      For any category $𝐂$ and functor $Θ∶ 𝐂² → 𝐂$, if $Θ$ is
      cocontinuous in its first argument and finitary in its second
      argument, then the initial object possesses a unique
      $ΘΔ$-algebra structure, which furthermore makes it initial in
      $ΘΔ\alg$.
    \end{lemma}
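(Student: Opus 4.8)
The plan is to reduce both assertions to the single observation that $ΘΔ(∅) = Θ(∅,∅)$ is itself an initial object of $𝐂$; everything else then follows from the universal property of the initial object, with no further computation. \textbf{The key step} is therefore to establish this observation: since $Θ$ is cocontinuous in its first argument, the endofunctor $Θ(-,∅)$ preserves all colimits, and in particular it preserves the colimit of the empty diagram, i.e.\ the initial object. Applying it to $∅$ yields that $ΘΔ(∅) = Θ(∅,∅)$ is initial in $𝐂$. I would then write $a∶ ΘΔ(∅) → ∅$ for the unique morphism to $∅$.

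Granting this, I would first note that a $ΘΔ$-algebra structure on $∅$ is exactly a morphism $ΘΔ(∅) → ∅$, and since $ΘΔ(∅)$ is initial there is precisely one, namely $a$; this gives uniqueness of the algebra structure. For initiality in $ΘΔ\alg$, I would take an arbitrary algebra $(X, h∶ ΘΔ(X) → X)$ and observe that an algebra morphism $(∅,a) → (X,h)$ amounts to a morphism $f∶ ∅ → X$ with $f ∘ a = h ∘ ΘΔ(f)$. There is a unique $f∶ ∅ → X$ because $∅$ is initial in $𝐂$, so it only remains to check that this $f$ satisfies the required equation. But both $f ∘ a$ and $h ∘ ΘΔ(f)$ are morphisms out of $ΘΔ(∅)$, which is initial by the key step, so they necessarily coincide. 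Hence $f$ is the unique algebra morphism $(∅,a) → (X,h)$, establishing initiality.

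I do not expect a genuine obstacle here: the entire content sits in the key step, and once $ΘΔ(∅)$ is known to be initial the remainder is the routine ``maps out of an initial object are unique'' argument. The one point worth care is the direction of the hypotheses: it is essential that cocontinuity is assumed in the \emph{first} argument, since the initial object is the empty, hence non-filtered, colimit, so finitariness in the second argument alone would not force $Θ(∅,∅)$ to be initial. Conversely, this direct argument uses neither finitariness in the second argument nor any cocompleteness of $𝐂$; in particular it sidesteps the alternative route of computing the initial $ΘΔ$-algebra as the colimit of the chain $∅ → ΘΔ(∅) → ΘΔ²(∅) → \dots$, which would instead require these sequential colimits to exist in $𝐂$. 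That chain is nonetheless constantly $∅$ (by the key step applied repeatedly), which matches the conclusion and supplies the underlying intuition.
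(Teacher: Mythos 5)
Your proof is correct and takes essentially the same route as the paper: the paper's proof likewise observes that by cocontinuity in the first argument $Θ(∅,∅)$ is initial, and then concludes that the algebra structure and its initiality in $ΘΔ\alg$ follow easily. Your write-up merely spells out the "follows easily" part explicitly, and your remark that finitariness in the second argument is not actually used here is accurate.
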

    \begin{proof}
      By cocontinuity, $Θ_∅∅$ is initial, hence there is a unique
      morphism $Θ_∅∅ → ∅$. The rest follows easily.
    \end{proof}

    It may not be entirely obvious that $Γ_S$ is cocontinuous in its
    first argument, but $X × Y$ is isomorphic to the coproduct
    $∑_{y ∈ Y} X$, which is cocontinuous by interchange of colimits.
    Most of our examples below follow the same pattern.

\subsection{The abstract case}

In this subsection, we introduce the general notion of simple
structural law, and construct, from any such law, a monad distributive
law with constant-free auxiliary monad.
  \begin{definition}
    A \alert{simple structural law} on a given locally finitely
    presentable category $𝐂$ consists of
    \begin{itemize}
    \item a \alert{basic} finitary endofunctor $Σ∶ 𝐂 → 𝐂$,
    \item an \alert{auxiliary} functor $Γ∶ 𝐂² → 𝐂$ which is
      cocontinuous in its first argument and finitary in its second
      argument, and
    \item a natural transformation
      $$d_{X,Y}∶ Γ_Y(Σ(X)) → S(Γ_{S(Y)}(X) + X + Y)\rlap{,}$$
      where $S ≔ Σ^*$ denotes the free monad generated by $Σ$.
  % where
  %   $
  %   \begin{array}[t]{rcl}
  %     S &  = & Σ^* \\
  %     Γ_Y(X) & = & Γ(X,Y).
  %   \end{array}$
  \end{itemize}
  \end{definition}
  \begin{remark}
    Comparing with the example~\eqref{eq:d} of the previous section,
    we have added a new base case $X$, and replaced $Y$ with the more
    general $S(Y)$ in the recursive call.
  \end{remark}

  Let us now state the construction result, relying on the following
  % definition and
  lemma.
%   \begin{definition}
%     % \AL{introduire la notation, hypo C complete/cocomplete? utiliser star plutot
%     % que fleur?}
%     For any finitary bifunctor $B∶ 𝐂² → 𝐂$, let $B^\fleur$ denote the
%     free monad on the endofunctor $(BΔ)(X) ≔ B(X,X)$, obtained by
%     composing $B$ with the diagonal $Δ∶ 𝐂 → 𝐂²$.
% \end{definition}
\begin{lemma}\label{lem:constantless}
    For any finitary bifunctor $B∶ 𝐂² → 𝐂$ which is cocontinuous in
    its first argument, $B^\fleur$ is constant-free.
\end{lemma}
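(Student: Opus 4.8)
The plan is to obtain this as a corollary of the preceding lemma --- the one asserting that, whenever $Θ∶ 𝐂² → 𝐂$ is cocontinuous in its first argument and finitary in its second, the initial object $∅$ carries the initial $ΘΔ$-algebra structure --- applied to $Θ = B$. Recall that $B^\fleur = (BΔ)^*$ and that, for a finitary endofunctor $F$, the object $F^*∅$ is the carrier of the initial $F$-algebra. Hence $B^\fleur(∅) = (BΔ)^*(∅)$ is initial in $(BΔ)\alg$, and the unit $η_∅∶ ∅ → B^\fleur(∅)$ is a morphism out of the initial object $∅$ of $𝐂$. Constant-freeness thus amounts to showing $B^\fleur(∅) ≅ ∅$ in $𝐂$.

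To invoke the preceding lemma I must verify its two hypotheses for $Θ = B$. Cocontinuity in the first argument is assumed, so the only thing to check is finitariness in the second argument, which I would deduce from the joint finitariness of $B$. Given a filtered diagram $D∶ J → 𝐂$ and an object $c ∈ 𝐂$, the diagram $j ↦ (c, D(j))$ in $𝐂²$ is again filtered; since colimits in $𝐂²$ are computed componentwise and the colimit of the constant diagram at $c$ over the connected category $J$ is $c$ itself, its colimit is $(c, \colim D)$. As $B$ preserves filtered colimits jointly, it preserves this one, i.e.\ $B(c, \colim D) ≅ \colim_j B(c, D(j))$, which is exactly finitariness in the second argument.

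By the preceding lemma, $∅$ is then initial in $(BΔ)\alg$. Since $B^\fleur(∅)$ is also initial in $(BΔ)\alg$, uniqueness of initial objects gives an isomorphism of algebras, whence $B^\fleur(∅) ≅ ∅$ in $𝐂$ on underlying carriers. Therefore $η_∅$ is a morphism between two initial objects of $𝐂$, hence an isomorphism, and $B^\fleur$ is constant-free. I expect no real obstacle: beyond the direct application of the preceding lemma, the sole ingredient is the routine passage from joint to second-argument finitariness, which rests only on the componentwise computation of colimits in $𝐂²$.
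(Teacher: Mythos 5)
Your proof is correct and follows essentially the same route as the paper: both arguments come down to observing that $BΔ(∅) ≅ ∅$ (by cocontinuity in the first argument), so that $∅$ carries the initial $(BΔ)$-algebra structure, whence $B^\fleur(∅) = (BΔ)^*(∅) ≅ ∅$ and the unit at $∅$ is a map between initial objects, hence invertible. The paper simply packages this as a one-line general remark (any endofunctor preserving $∅$ has $∅$ as its initial algebra) rather than citing the preceding lemma, so your extra verification that joint finitariness yields finitariness in the second argument, while correct, is not actually needed for this step.
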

\begin{proof}
  More generally, if an endofunctor $F$ preserves the initial object $∅$,
  then $∅$ equipped with the isomorphism $F(∅) → ∅$ is easily seen to be the initial
  $F$-algebra, hence is isomorphic to $F^*∅$.
  % By~\cite[Theorem 2.1]{Reiterman77}, $B^\fleur(∅)$ is the colimit of
  % the initial chain, which constists of powers $(BΔ)ⁿ(∅)$. But
  % $BΔ∅ ≅ ∅$, so all objects of the chain are initial, hence so is the
  % colimit.
\end{proof}

\begin{theorem}\label{thm:simple} Any simple structural law
    $$d_{X,Y}∶ Γ_Y(Σ(X)) → S(Γ_{S(Y)}(X) + X + Y)$$
    (with again $S = Σ^*$) induces a monad distributive law
    $$TS → ST\rlap{,}$$
    where $T ≔ Γ_S^\fleur$, making the following diagram commute.
    \begin{equation}
      % file:///home/thirs/github/quiver/src/coherence-structural-equational-system-interp1-term1.html?q=WzAsNixbMCwwLCLOk19YIM6jWCJdLFswLDEsIs6TX3tTWH1TWCJdLFswLDMsIlRTWCJdLFsyLDMsIlNUWCJdLFsyLDAsIlMozpNfe1NYfVggKyBYICsgWCkiXSxbMCwyLCLOk197U1NYfVNYIl0sWzQsMywiU1vOt197zpNfUyxYfSzOt+G1gF9YLM634bWAX1hdIl0sWzAsNCwiZF97WCxYfSJdLFswLDEsIs6TX3vOt15TX1h9IM63X3vOoyxYfSIsMl0sWzEsNSwizpNfe863XlNfe1NYfX1TWCIsMl0sWzUsMiwizrdfe86TX1MsU1h9IiwyXSxbMiwzLCLOtF9YIiwyXV0=
    \begin{tikzcd}[ampersand replacement=\&]
      {Γ_X ΣX} \&\& {S(Γ_{SX}X + X + X)} \\
      {Γ_{SX}SX} \\
      {Γ_{SSX}SX} \\
      TSX \&\& STX
      \arrow["{S[η_{Γ_S,X},ηᵀ_X,ηᵀ_X]}", from=1-3, to=4-3]
      \arrow["{d_{X,X}}", from=1-1, to=1-3]
      \arrow["{Γ_{η^S_X} η_{Σ,X}}"', from=1-1, to=2-1]
      \arrow["{Γ_{η^S_{SX}}SX}"', from=2-1, to=3-1]
      \arrow["{η_{Γ_S,SX}}"', from=3-1, to=4-1]
      \arrow["{δ_X}"', from=4-1, to=4-3]
    \end{tikzcd}
    \label{eq:simple:d:delta}
    \end{equation}

    Furthermore, $T$ being constant-free by
    Lemma~\ref{lem:constantless}, the monad morphism $S→ ST$ is
    admissible.
\end{theorem}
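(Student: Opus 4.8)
The plan is to construct $\delta$ from $d$ in two stages, exploiting throughout that $T = \Gamma_S^\fleur$ is the free monad on the finitary endofunctor $\Gamma_S \colon X \mapsto \Gamma(X, SX)$. The point to keep in mind is that $d$ only explains how a single auxiliary node is pushed past \emph{one} layer of basic operations (its domain matches on $\Sigma X$), whereas a distributive law must push auxiliary nodes past \emph{arbitrary} basic terms. So I would first manufacture, out of $d$, the generator-level datum
\[
  \lambda_X \colon \Gamma_S(SX) = \Gamma(SX, S(SX)) \to STX ,
\]
natural in $X$, and then freely extend it along $T$ to the full law $\delta \colon TS \to ST$.

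\emph{Stage 1 (peeling basic layers).} Since $\Gamma$ is cocontinuous in its first argument and $SX = \mu E.(X + \Sigma E)$, the functor $\Gamma(-, Z)$ turns this colimit into an initial-algebra presentation of $\Gamma(SX, Z)$, so I may define $\lambda_X$ by recursion on the first argument. In the variable case I inject via the units ($\eta^T$ on the leaf, then $\eta^S$ into $ST$, after flattening the context with $\mu^S$); in the $\Sigma$-node case, where the first argument has the form $\Sigma(w)$ with $w \in SX$, I apply $d$ with the instantiation $X := SX$ and $Y := S(SX)$, which replaces $\Sigma(w)$ by the strictly smaller $w$ in the recursive summand $\Gamma_{SY}(X)$, recurse there, send the two extra summands $X + Y$ into $T$ via $\eta^T$ (exactly the right-hand map of~\eqref{eq:simple:d:delta}), and flatten the surrounding basic term with $\mu^S$. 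Finitarity of $\Sigma$ makes this recursion well defined, and naturality of $\lambda$ follows from that of $d$ and of the units.

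\emph{Stage 2 (extending along the free monad).} I would then invoke the standard free-monad extension principle: since $T$ is free on $\Gamma_S$, extending $T$ to the Kleisli category $\Kl(S)$ — equivalently, by Beck's theorem, producing the distributive law $\delta \colon TS \to ST$ — reduces to the datum $\lambda \colon \Gamma_S S \to ST$ of Stage~1, subject to two coherence conditions with respect to $\eta^S$ and $\mu^S$. The extension is built by the usual initial-algebra recursion on the $\Gamma_S$-tree structure of $TSX$, starting from $S\eta^T_X$ on variables (this exists concretely in the locally finitely presentable setting, so no cocompleteness issue of $\Kl(S)$ needs to be confronted). Because $\delta$ arises as such an extension, compatibility with $\eta^T$ and $\mu^T$ comes for free, so the only properties I must actually establish about my constructed $\lambda$ are the two input conditions; the generating diagram~\eqref{eq:simple:d:delta} is then a direct unfolding of the definition of $\delta$ on a single auxiliary node over a one-layer basic term. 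The crux — and the step I expect to dominate the work — is the compatibility of $\lambda$ with $\mu^S$: it says that $\delta$ commutes with the flattening of nested basic terms, and is proved by induction on the depth of the basic-term argument. The bookkeeping is delicate precisely because the recursive summand of $d$ carries $S(Y)$ rather than $Y$, so each peeling step substitutes an evolving basic context into the auxiliary argument; it is the $\mu^S$-naturality squares for $d$ that make the two sides agree. The $\eta^S$ condition is a comparatively direct calculation.

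\emph{Admissibility.} Finally, $\Gamma_S$ is a finitary bifunctor that is cocontinuous in its first argument, so Lemma~\ref{lem:constantless} applies with $B = \Gamma_S$ and shows that $T = \Gamma_S^\fleur$ is constant-free. The earlier proposition on distributive laws with constant-free auxiliary monad then immediately yields that the induced monad morphism $S\eta^T \colon S \to ST$ is admissible, as claimed.
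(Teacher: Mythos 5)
Your proposal is correct and takes essentially the same approach as the paper: Theorem~\ref{thm:simple} is proved there as the $T=\mathrm{id}$ instance of Theorem~\ref{thm:incremental}, whose proof is precisely your two-stage plan — your Stage-1 recursion on the first argument is the paper's extension $d^{\bullet\omega}$ (Definition~\ref{def:hbulletomega}, made rigorous by transposing along the right adjoint that cocontinuity of $\Gamma$ supplies and then using the free-$\Sigma$-algebra property of $SX$), and your Stage-2 free-monad extension is the paper's appeal to Beck's correspondence (Lemma~\ref{lem:dist:lifting}), taken on the lifting/Eilenberg--Moore side rather than your mirror-image Kleisli side. As you anticipate, the dominant verification there is likewise compatibility with $\mu^S$ (Lemma~\ref{lem:coh:hbulletomegamu} and Proposition~\ref{prop:mualg}), and the generating diagram~\eqref{eq:simple:d:delta} is obtained by unfolding the construction (Proposition~\ref{prop:d:delta}).
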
    
\begin{proof}
  This is a special case of Theorem~\ref{thm:incremental} below, instantiating
  $T$ with the identity monad.
\end{proof}
\begin{remark}
  By the usual characterisation of free algebras for finitary
  endofunctors~\cite{Reiterman}, we have
$$T(X) = μA.(X + Γ_{SA}(A)).$$
\end{remark}

  \subsection{Augmented algebras}
  As a bonus, we may characterise algebras for the generated monad
  $ST$, which we call augmented algebras.

  \begin{definition}\ \hfill
  \begin{itemize}
  \item 
    An \alert{algebra} for a simple structural law $L=(Σ,Γ,d)$ on $𝐂$ consists of an object $X ∈ 𝐂$, equipped with
    \begin{itemize}
    \item $Σ$-algebra structure $𝐚∶ Σ(X) → X$ and
    \item $ΓΔ$-algebra structure $𝐛∶ Γ_X(X) → X$,
    \end{itemize}
    making the following diagram commute, 
    \begin{center}
      \diag{%
        Γ_X(Σ(X)) \& \& S(Γ_{SX}(X) + X + X) \\
        \& \& S(Γ_X(X) + X + X) \\
        Γ_X (X) \& \& S X \\
        \& X
      }{%
        (m-1-1) edge[labela={d_{X,X}}] (m-1-3) %
       edge[labell={Γ_X (𝐚)}] (m-3-1) %
       (m-1-3) edge[labelr={S(Γ_{\bar{𝐚}}(X) + X + X)}] (m-2-3) %
       (m-2-3) edge[labelr={S[𝐛,X,X]}] (m-3-3) %       
       (m-3-1) edge[labelbl={𝐛}] (m-4-2) %
       (m-3-3) edge[labelbr={\bar{𝐚}}] (m-4-2) %
      }
    \end{center}
    where $\bar{𝐚}∶ S(X) → X$ is freely induced by $𝐚$.
  \item A morphism $X → Y$ of algebras for $L = (Σ,Γ,d)$ is a morphism
    between underlying objects which is both a morphism of $Σ$- and
    $ΓΔ$-algebras.
  \item Let $L\alg$ denote the category of algebras for $L$, or
    $L$-algebras.
\end{itemize}
\end{definition}
% $O_X(Z) ≔ X+Z$ and

\begin{proposition}
  Let $L=(Σ,Γ,d)$ be any simple structural law on $𝐂$, and let
  $T = Γ_S^\fleur$.  Then we have
  $$L\alg ≅ ST\Alg$$ over
  $𝐂$, where we recall from the basic notations of~§\ref{s:intro} that
  capital $𝐀𝐥𝐠$ denotes monad algebras.
\end{proposition}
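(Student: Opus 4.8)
The plan is to establish the isomorphism $L\alg \iso ST\Alg$ over $𝐂$ by exhibiting it as a composite of two equivalences: first $ST\Alg \iso S\Alg^T$, the category of $S$-algebras carrying compatible $T$-algebra structure (using the general theory of composite monads arising from a distributive law, per \cite{BeckDistlaws}), and second, reading off from Theorem~\ref{thm:simple} that the $T$-algebra structure on an $S$-algebra is precisely encoded by a $ΓΔ$-algebra structure subject to the displayed coherence diagram. Since monad algebras for $S = Σ^*$ are the same as $Σ$-algebras (as recalled in the notation section of §\ref{s:intro}), this identifies the data of an $ST$-algebra with exactly the data of an $L$-algebra: a $Σ$-algebra structure $𝐚$, a $ΓΔ$-algebra structure $𝐛$, and a compatibility square.

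Concretely, I would first recall that, for a distributive law $δ∶ TS → ST$, an algebra for the composite monad $ST$ is equivalently an object $X$ equipped with both an $S$-algebra structure and a $T$-algebra structure that are compatible via $δ$ (this is the standard ``algebras for a composite monad are bialgebras'' fact). So the task reduces to unwinding what a $T$-algebra structure compatible with $δ$ amounts to. Since $T = Γ_S^\fleur$ is the free monad on the endofunctor $X ↦ Γ_{SX}(X)$, a $T$-algebra structure on $X$ is the same as an algebra $Γ_{SX}(X) → X$ for that endofunctor; but using the $S$-algebra (equivalently monoid-like) structure $\bar{𝐚}∶ SX → X$ already present, such an algebra is equivalent to a plain $ΓΔ$-algebra $𝐛 ∶ Γ_X(X) → X$ (one direction precomposes with $Γ_{η^S}$, the other postcomposes the decreasing argument through $\bar{𝐚}$). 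This is the step where I would need to check that the free-monad structure of $T$ interacts correctly with the $S$-action, i.e.\ that extending $𝐛$ along the inductive definition $T(X) = μA.(X + Γ_{SA}(A))$ is governed by $\bar{𝐚}$ in exactly the way the diagram in the definition of $L\alg$ prescribes.

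The main obstacle I anticipate is matching the compatibility condition: I must show that the coherence square defining an $L$-algebra is literally the condition that $𝐛$ (qua $T$-algebra) is compatible with $δ$ (qua bialgebra). Here the commuting diagram \eqref{eq:simple:d:delta} from Theorem~\ref{thm:simple} is the key lever, since it computes $δ$ on generators in terms of $d$. The strategy is to precompose the bialgebra compatibility square with the universal generators $Γ_{η^S} η_{Σ}∶ Γ_XΣX → TSX$, and show, using \eqref{eq:simple:d:delta}, that this reduces exactly to the $L$-algebra diagram; conversely, naturality and the freeness of $T$ (so that it suffices to check compatibility on generators and propagate by induction) should give that the $L$-algebra condition is not only necessary but sufficient. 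Because everything is defined over $𝐂$ and all the identifications ($ST\Alg ≅$ bialgebras, $S\Alg ≅ Σ\alg$, free $T$-algebras $≅$ endofunctor algebras) commute with the forgetful functors to $𝐂$, the resulting bijection is automatically an isomorphism of categories over $𝐂$, and on morphisms it amounts to the evident fact that a map is an $ST$-algebra morphism iff it respects both $𝐚$ and $𝐛$.
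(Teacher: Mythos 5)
Your skeleton---Beck's bialgebra description of algebras for a composite monad, then trading the $T$-algebra structure for an endofunctor algebra via freeness of $T = Γ_S^\fleur$---is indeed the route the paper takes: it derives this proposition as the special case $T=\id$ of Theorem~\ref{thm:models}, whose proof runs through $δ$-algebras and Corollary~\ref{cor:alg:coprod:free} exactly as you outline. But your pivotal claim that the two assignments (precomposition with $Γ_{η^S_X}X$, resp.\ with $Γ_{\bar{𝐚}}X$) are mutually inverse is false as stated. Since $\bar{𝐚} ∘ η^S_X = \id_X$, one composite is automatically the identity, but the other sends $𝐜∶ Γ_{SX}X → X$ to $𝐜 ∘ Γ_{η^S_X ∘ \bar{𝐚}}X$, which differs from $𝐜$ in general. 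The correspondence is a bijection only between \emph{all} morphisms $Γ_XX → X$ and those morphisms $Γ_{SX}X → X$ satisfying the coherence law~\eqref{eq:coherence-beta} (Lemma~\ref{lem:bij-gamma-alg}), and the genuinely hard point---the paper's Lemma~\ref{lem:alg-ST-coh-beta}---is that every $𝐜$ arising from an $ST$-algebra automatically satisfies this coherence; that is extracted from the bialgebra pentagon by a substantial diagram chase, not by the monad-algebra unit law. (Also a small slip: in $Γ_{SX}X = Γ(X,SX)$ the monad sits on the \emph{auxiliary} second argument, not the decreasing one.)

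The second gap is in your ``check on generators and propagate by freeness'' step, which is only half available. Freeness of $T$ does legitimately reduce the compatibility pentagon from $TSX$ to the endofunctor stage $Γ_{SSX}SX$. However,~\eqref{eq:simple:d:delta} pins down $δ$ only after the \emph{further} restriction along $Γ_{η^S_X}η_{Σ,X}$ and $Γ_{η^S_{SX}}SX$, and recovering the condition on all of $Γ_{SSX}SX$ from its restriction to $Γ_XΣX$ is not routine naturality-plus-structural-induction: the two arguments of $Γ$ are entangled, because unfolding an $S$-layer in the decreasing argument produces, through the codomain $S(Γ_{SY}X + X + Y)$ of $d$, fresh $S$-terms in the auxiliary one. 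The paper handles this by constructing an explicit extension $d^{•ω}_{X,Y}∶ Γ_{SY}SX → S(Γ_{SY}X + X + Y)$ of $d$ (Definition~\ref{def:hbulletomega}, instantiated at $T = \id$) and proving its coherence laws (Lemma~\ref{lem:coh:hbulletomegamu}); the construction crucially uses the right adjoint $ℸ_Y$ of $Γ_Y$, i.e., the cocontinuity hypothesis on the first argument of $Γ$---a hypothesis your argument never invokes, which is a symptom that the induction cannot be carried out naively. (Relatedly, Theorem~\ref{thm:simple} does not ``encode'' the augmented-algebra characterisation; it only supplies $δ$ and~\eqref{eq:simple:d:delta}.) So your outline identifies the right decomposition, but the two steps you flag as ``to check'' are precisely where the proof's content lives, and one of them rests on an incorrect mutual-inverse claim.
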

\begin{proof}
  This is a special case of Theorem~\ref{thm:models} below, instantiating
  $T$ with the identity monad.
\end{proof}

  \subsection{Application: evaluation contexts}\label{ss:evconts}
  Let us present a first application of Theorem~\ref{thm:simple}, to
  generate a simple language with context application.
  The language is generated by the following grammar
  $$
  \begin{array}{rcl}
    e, f & ::= & x ｜ e\ f ｜ λx.e \\
    E & ::= & ▫ ｜ E\ e 
  \end{array}$$
  and context application is defined inductively by
  $$
  \begin{array}{rcl}
    ▫[e] & = & e \\
    (E\ e)[f] & = & E[f]\ e.
  \end{array}$$

  We now define a simple structural law, whose basic monad is the term
  monad, including contexts, and whose associated auxiliary monad will
  account for context application.
  \begin{itemize}
  \item Following~\citet{fiore:presheaf}, at least in spirit, we first
    choose as ambient category the category $[𝐒𝐞𝐭,𝐒𝐞𝐭²]_f$ of finitary
    functors $𝐒𝐞𝐭 → 𝐒𝐞𝐭²$, or equivalently $[𝔽,𝐒𝐞𝐭²]$, where $𝔽$
    denotes the category of finite ordinals and all maps between them.

    We call $𝐩$ and $𝐜$ the two elements of $2$, respectively for
    ``program'' and ``context''.
    
    For any object $X ∈ [𝔽,𝐒𝐞𝐭²]$ and $n ∈ 𝔽$, $X(n)$ is a pair of sets, which we
    denote by $(X(n)_𝐩,X(n)_𝐜)$.  We think of
    \begin{itemize}
    \item $X(n)_𝐩$ as a set of \alert{programs} with $n$ free
      variables, and of
    \item $X(n)_𝐜$ as a set of \alert{contexts} with $n$ free
      variables.
    \end{itemize}
  \item The basic endofunctor is defined by: $$
    \begin{array}{rcl}
      Σ(X)(n)_𝐩 & = & n + X(n)_𝐩² + X(n+1)_𝐩 \\
      (e, f & ::= & x ｜ \ \ e\ f\ \  ｜ λx.e) \\
      Σ(X)(n)_𝐜 & = & 1 + X(n)_𝐜×X(n)_𝐩 \\
      (E & ::= & ▫ ｜ E\ e).
    \end{array}$$
  \item The auxiliary bifunctor is
    $$\begin{array}{rcl}
      Γ(X,Y)(n)_𝐩 & = & X(n)_𝐜 × Y(n)_𝐩 \\
      Γ(X,Y)(n)_𝐜 & = & ∅.
    \end{array}$$
  \item For our simple structural law,
    which is trivial by construction at $𝐜$,
    we take at $𝐩$:
    $$
    \begin{array}{rcl}
      Σ(X)(n)_𝐜 × Y(n)_𝐩 & → & S (\rouge{Γ(X,S(Y))} + \orange{X}
                               + \vert{Y})(n)_𝐩 \\
      (▫,e) & ↦ & η^S(\vert{in₃}(e)) \\
      (E\ f, e) & ↦ & η^S(\rouge{in₁}(E,η^S(e)))\ η^S(\orange{in₂}(f)).
    \end{array}$$

  \end{itemize}

  \subsection{Application: capture-avoiding substitution}\label{ss:lambda}
  In this section, we present a third application of
  Theorem~\ref{thm:simple}, to pure $λ$-calculus with capture-avoiding
  substitution. This will be subsumed by~§\ref{ss:ptstrengths}, but we
  find it instructive to unfold the development on a concrete example.

  \begin{itemize}
  \item We take as ambient category the category $[𝐒𝐞𝐭,𝐒𝐞𝐭]_f$ of
    finitary endofunctors of sets, or equivalently
    $[𝔽,𝐒𝐞𝐭]$.
  \item The basic endofunctor on $[𝔽,𝐒𝐞𝐭]$ is defined by
    $$\begin{array}{rcl}
      Σ(X)(n) & = & n + X(n)² + X(n+1) \\
      (e, f & ::= & x ｜ \ \ e\ f\ \  ｜ λe).
    \end{array}$$
  \item The auxiliary bifunctor is~\cite{fiore:presheaf}'s
    substitution tensor product
    $$(X⊗Y)(n) = ∫ᵖX(p)×Y(n)ᵖ\rlap{,}$$
    or, expressed in $[𝐒𝐞𝐭,𝐒𝐞𝐭]_f$, $(X⊗Y)(n) = X (Y (n))$.
  \item For our simple structural law,
    we take (sometimes omitting $η^S$ for readability)
    $$
    \begin{array}{rcl}
      ∫ᵖΣ(X)(p) × Y(n)ᵖ & → & S (\rouge{X⊗S(Y)}) + \orange{X}
                               + \vert{Y})(n) \\
      (x,σ) & ↦ & \vert{in₃}(σ(x)) \\
      (e\ f, σ) & ↦ & \rouge{in₁}(e, η^S∘σ)\ \rouge{in₁}(f,η^S∘σ) \\
      (λe,σ) & ↦ & λ (\rouge{in₁} (e, σ^↑))\rlap{,}
    \end{array}$$
    where $σ^↑∶ p+1 → S(Y)(n+1)$ denotes the copairing of the following
    two maps.
    \begin{mathpar}
      p \xto{σ} Y(n) \xto{Y(in₁)} Y(n+1) \xto{η^S_{Y,n+1}}
      S(Y)(n+1)
      \and
      1 \xto{in₂} n+1 \xto{in₁} Σ(Y)(n+1) ↪ S(Y)(n+1)
    \end{mathpar}
\end{itemize}

\begin{remark}
  In the final case, we crucially rely on the recursive call being in
  $Γ(X,SY)$, rather than just $Γ(X,Y)$.  In~\citet{fiore:presheaf}, this is done by taking $Y$
  ``pointed'', in the sense of being equipped with a natural
  transformation $n → Y(n)$.
\end{remark}
\begin{remark}
 We obtain the expected syntax and substitution,
    but not yet the following standard equations,
        \begin{mathpar}
          e[σ][σ'] = e[σ[σ']] \and e[\id] = e
        \end{mathpar}
        where  $\id∶ n → S(X)(n)$ picks the variables.
        We will complete the picture in~§\ref{s:benign}.
\end{remark}

\subsection{Application: binding contexts}
For a slightly more involved example, we consider in this subsection
the \alert{sharing}
$λ$-calculus~\cite[§4.1]{DBLP:conf/rta/Accattoli19} (but see
also~\cite{DBLP:journals/lisp/HirschowitzLW09,DBLP:journals/jfp/SewellSHBW08}).

Following Accattoli, the syntax is given by
$$
\begin{array}{rcll}
  e, f & ::= & x ｜ e\ f ｜ λx.e ｜ e⟨x↦f⟩ & \\
  E & ::= & ▫ ｜ E⟨x↦f⟩.
\end{array}$$
Context application is then defined inductively by
$$
\begin{array}{rcl}
  ▫[e] & = & e \\
  (E⟨x↦f⟩)[e] & = & E[e]⟨x ↦ f⟩.
\end{array}$$

\begin{remark}
  Context application may give rise to variable capture. E.g.,
  $(▫⟨x↦f⟩)[x] = x⟨x↦f⟩$.
\end{remark}

In order to model this, we extend the setting of the previous section
as follows.  In~§\ref{ss:evconts}, for any functor $X∶ 𝔽 → 𝐒𝐞𝐭²$, we
thought of $X(n)_𝐩$ and $X(n)_𝐜$ as sets of programs, resp.\ contexts
with $n$ free variables. We now need to refine this point of view, and
index contexts over the number of \alert{capturing} variables, i.e.,
variables bound above the context hole $▫$.  Instead of functors
$𝔽 → 𝐒𝐞𝐭²$, we thus consider functors $𝔽 → 𝐒𝐞𝐭^{1+ℕ}$. Of course, we
have $ℕ≅1+ℕ$, but we write $1+ℕ$ to emphasise the fact that $in₁(⋆)$,
the unique element of the left-hand summand, models terms, while each
$in₂(n)$ models contexts with $n$ capturing variables.

\begin{notation}
  We abbreviate $in₁(⋆)$ to $𝐩$ and $in₂(n)$ to $𝐜ₙ$, so that, e.g.,
  $X(n)_{𝐜ₘ}$ is thought of as a set of contexts with $n$ free
  variables and $m$ capturing variables.
\end{notation}

Accordingly, we specify the syntax by the endofunctor

\noindent $
\begin{array}{r@{\ =\ }l}
  Σ(X)(n)_𝐩 &  n + X(n)_𝐩² + X(n+1)_𝐩 + X(n+1)_𝐩×X(n)_𝐩 \\
  Σ(X)(n)_{𝐜_{m+1}} &  X(n+1)_{𝐜ₘ} × X(n)_𝐩 \\
  Σ(X)(n)_{𝐜₀} &  1.
\end{array}$

\begin{remark}
  On the second line, the expression $X(n+1)_{𝐜ₘ} × X(n)_𝐩$ reflects
  the fact that in the above grammar, in $E⟨x↦f⟩$, $E$ may use the
  bound variable $x$, hence the use of $n+1$, and has one less
  capturing variable than $E⟨x↦f⟩$, hence the passing from $m+1$ to
  $m$. Otherwise said, $x$ is free in $E$, but capturing in
  $E⟨x↦f⟩$.
\end{remark}
The generated monad $S = Σ^*$ may be presented syntactically
by the following rules
\begin{mathpar}
  \inferrule{k ∈ X(n)_𝐩}{n ⊢ k : 𝐩} \and
  \inferrule{i ∈ n}{n ⊢ xᵢ : 𝐩} \and
  \inferrule{n ⊢ e : 𝐩 \\ n ⊢ f : 𝐩}{n ⊢ e\ f : 𝐩} \and
  \inferrule{n+1 ⊢ e : 𝐩}{n ⊢ λ(e) : 𝐩} \and
  \inferrule{n+1 ⊢ e : 𝐩 \\ n ⊢ f : 𝐩}{n ⊢ e⟨f⟩ : 𝐩} \and
  \inferrule{K ∈ X(n)_{𝐜ₘ}}{n ; m ⊢ K : 𝐜} \and
  \inferrule{ }{n ; 0 ⊢ ▫ : 𝐜} \and
  \inferrule{n+1;m ⊢ E : 𝐜 \\ n ⊢ f : 𝐩}{n;m+1 ⊢ E⟨f⟩ : 𝐜}\ ,
\end{mathpar}
with $S(n)_𝐩 = \{ e ｜ n ⊢ e : 𝐩 \}$ and
$S(n)_{𝐜ₘ} = \{ E ｜ n ; m ⊢ E : 𝐜 \}$.  One then straightforwardly defines the
functorial action in $n$: for any renaming $ρ∶ n → n'$,
one replaces all $xᵢ$ with $x_{ρ(i)}$, for $i ∈ n$.

Let us now model context application by a simple structural law:
\begin{itemize}
\item the arity functor is defined by
  $$
  \begin{array}{rcl}
    Γ(X,Y)(n)_{𝐜ₘ} & = & ∅ \\
    Γ(X,Y)(n)_𝐩 & = & ∑_{m ∈ ℕ} X(n)_{𝐜ₘ} × Y(n+m)_𝐩\rlap{,}
  \end{array}$$
  reflecting the fact that
  in a context application $E[e]$, $e$ has as free variables the disjoint union
  of the free and capturing variables of $E$;
\item the simple structural law is defined componentwise by
      $$
    \begin{array}{rcll}
      Σ(X)(n)_{𝐜ₘ} × Y(n+m)_𝐩 & → & S (\rouge{Γ(X,S(Y))} + \orange{X}
                               + \vert{Y})(n)_𝐩 \\
      (▫,e) & ↦ & \vert{in₃}(e) & \mbox{(if $m=0$)}\\
      (E⟨f⟩, e) & ↦ & \rouge{in₁}(E,e) ⟨\orange{in₂}(f)⟩ & \mbox{(if $m=m'+1$)}\rlap{,}
    \end{array}$$
    for all $n,m ∈ ℕ$ (omitting $η^S$ again for readability).
\end{itemize}

\begin{remark}
  The reason $e$ may be applied to both $E$ and $E⟨f⟩$ is that both
  contexts have the same number of free or capturing variables
  ($(n+1)+m'$ and $n + (m'+1)$, respectively).
\end{remark}

\subsection{Application: named substitution}
Let us now consider a last illustration of simple structural laws:
named substitution in \linebreak $λμ$-calculus~\cite{Parigot,Iouri}.
Its usual form is as follows.
The syntax of $λμ$-calculus is:
$$
\begin{array}{rcl}
  e,f,g & ::= & x ｜ e\ f ｜ λx.e ｜ μα.c \\
  c,d & ::= & [α]e.
\end{array}$$
There are two syntactic categories: \alert{programs}, ranged over by $e,f,g,…$, and
\alert{continuations}, ranged over by $c,d$.
Accordingly, there are two kinds of variables: \alert{program variables}, ranged
over by $x,y,z,…$, and \alert{continuation variables}, ranged over
by $α,β,…$

\begin{remark}
  In this subsection, $μ$ always denote the syntactic operation, as
  opposed to any monad multiplication, or least fixed-point operator.
\end{remark}

Intuitively, named substitution $(t)_α\ g$ (notation
from~\cite[Definition~6.5]{Iouri}) takes as argument any term $t$
(program or continuation) with a distinguished continuation variable
$α$, together with a program $g$, and replaces all subterms of the
form $[α]e$ with $[α](e\ g)$ in $t$. This may be defined recursively
by:
$$
\begin{array}{rcll}
  (x)_α\ g & = & x \\
  (λx.e)_α\ g & = & λx.((e)_α\ g) & \mbox{($x ∉ g$)} \\
  (e\ f)_α\ g & = & ((e)_α\ g)\ ((f)_α\ g) \\
  (μβ.c)_α\ g & = & μβ.((c)_α\ g) & \mbox{($β ∉ α,g$)} \\
  ([β]e)_α\ g & = & \left \{
                    \begin{array}{ll}
                      [α] (((e)_α\ g)\ g) & \mbox{(if $α = β$)} \\
                      {}[β] ((e)_α\ g) & \mbox{(otherwise).}
                    \end{array}
                    \right .
\end{array}$$

In order to model this categorically, since we have two syntactic
categories, each with its own set of variables, we work with finitary
functors $𝐒𝐞𝐭² → 𝐒𝐞𝐭²$, or equivalently the category $[𝔽²,𝐒𝐞𝐭²]$.  We
write $𝐩$ and $𝐜$ for the elements of $2$ (respectively standing for
``program'' and ``continuation''). Thus, for any $X ∈ [𝔽²,𝐒𝐞𝐭²]$, we
think of
\begin{itemize}
\item $X(m,n)_𝐩$ as a set of programs with $m$ free program
  variables and $n$ continuation variables, and of 
\item $X(m,n)_𝐜$ as a set of continuations with $m$ free program
  variables and $n$ continuation variables.
\end{itemize}
The basic syntax is specified by the endofunctor $Σ$ defined by
$$
\begin{array}{rcl}
  Σ(X)(m,n)_𝐩 & = & m + X(m,n)_𝐩² + X(m+1,n)_𝐩 + X(m,n+1)_𝐜 \\
  Σ(X)(m,n)_𝐜 & = & n × X(m,n)_𝐩.
\end{array}$$
For specifying named substitution, we
take as auxiliary bifunctor
$$
\begin{array}{rcl}
  Γ(X)(m,n)_𝐩 & = & X(m,n)_𝐩 × n × Y(m,n)_𝐩 \\
  Γ(X)(m,n)_𝐜 & = & X(m,n)_𝐜 × n × Y(m,n)_𝐩.
\end{array}$$

\begin{notation}
  Slightly generalising previous notation, we denote
  by $xᵢ$ the $i$th program variable, for $i ∈ m$, and
  by
  $αⱼ$ the $j$th continuation variable, for $j ∈ n$.
\end{notation}
Using this notation, we model named substitution by the following
simple structural law (omitting $η^S$ again for readability):
$$
\begin{array}{rcll}
  d_{X,Y,m,n,𝐩}∶ Σ(X)(m,n)_𝐩 × n × Y(m,n)_𝐩 & → & S(Γ(X,SY)+X+Y)(m,n)_𝐩 & \\
  (xᵢ,j,g) & ↦ & in₂(xᵢ) \\
  (λ(e),j,g) & ↦ & λ(in₁(e,j,w^𝐩_{m,n}·g)) \\
  (e\ f, j,g) & ↦ & in₁(e,j,g)\ in₁(f,j,g) \\
  (μ(e),j,g) & ↦ & μ(in₁(e,j,w^𝐜_{m,n}·g)) \\ [1em]
  d_{X,Y,m,n,𝐜}∶ Σ(X)(m,n)_𝐜 × n × Y(m,n)_𝐩 & → & S(Γ(X,SY)+X+Y)(m,n)_𝐜\rlap{,} & \mbox{i.e.,}\\
    d_{X,Y,m,n,𝐜}∶ n × X(m,n)_𝐩 × n × Y(m,n)_𝐩 & → & S(Γ(X,SY)+X+Y)(m,n)_𝐜 \\
  ([αⱼ]e,j,g) & ↦ & [αⱼ] (in₁(e,j,g)\ in₃(g)) \\
  ([α_{j'}]e,j,g) & ↦ & [α_{j'}] in₁(e,j,g) & \mbox{(for $j' ≠ j$),}
\end{array}$$
where
\begin{itemize}
\item $w^𝐩_{m,k} ≔ (in₁,idₖ)∶ (m,k) ↪ (m+1,k)$;
\item $w^𝐜_{m,k} ≔ (idₘ,in₁)∶ (m,k) ↪ (m,k+1)$; and
\item $u·a$ denotes the action of a morphism $u∶ (m,k) → (m',k')$ in
  $𝔽²$ on an element $a$ of some $A(m,k)ₛ$, for some $A∶ 𝔽² → 𝐒𝐞𝐭²$
  and $s ∈ \{𝐩,𝐜\}$.
\end{itemize}

\section{Incremental structural laws}\label{s:isl}
\label{s:incr-struct-laws}
In this section, we introduce the incremental variant of structural
laws.  In~§\ref{ss:mult}, we sketch the idea on a simple example.
In~§\ref{ss:incrlaws}, we introduce \alert{incremental structural
  laws} in full generality, and prove that they induce admissible
morphisms (Theorem~\ref{main:thm:incremental}). In~§\ref{ss:models},
we characterise the composite monad induced by an incremental
structural law, as objects equipped with suitably compatible algebra
structures.  Then, in~§\ref{ss:indep}, we explain how to combine
several, independent incremental structural laws.  Finally, we cover
applications: partial differentiation in the differential $λ$-calculus
(§\ref{ss:diff}), and De Bruijn's presentation of capture-avoiding
substitution in the $λ$-calculus (§\ref{ss:db}).

\subsection{On a simple example}\label{ss:mult}
Recalling the distributive law, say $δ∶ TS → ST$ constructed
in~§\ref{ss:addition} from the simple structural law~\eqref{eq:d}, we
now want to extend the language with a second binary
operation, multiplication, satisfying
    $$
    \begin{array}{rcl}
      {s}(x) × y & = & (x × y)  +  y \\
      0 × y & = & 0. %
    \end{array}$$
    Furthermore, the full quotiented term language yields a monad
    $S^{\add,\mul}$, and a monad morphism $S → S^{\add,\mul}$.
    As in~§\ref{ss:addition}, orienting equations, we obtain
    normal forms, which are generated for any argument $X$  by the
    grammar
  $$
  \begin{array}{rcll}
    e & ::= & 0 ｜ s(e) ｜ b \\
    b & ::= & x ｜ b+e ｜ b×e.
  \end{array}$$
  Of course, by the same reasoning as for mere addition, the initial $S$-algebra is initial in the
  category of models of the whole language (i.e., $S$-algebras with
  addition and multiplication, satisfying all four equations), hence
  the monad morphism $S → S^{\add,\mul}$
  is admissible.

  Categorically, because the new equation uses addition, we cannot
  define a second, independent structural extension of $S$.  However,
  the recursive definition of multiplication yields a natural
  transformation
    \begin{equation}
      \label{eq:di}
      Θ(Σ(X),Y) → ST (Θ(X,Y) + Y)\rlap{,}
    \end{equation}
    where $Θ(X,Y) ≔ X × Y$.  Furthermore, inspecting the above grammar
    for normal forms, we find $S^{\add,\mul} ≅ S ∘ (Γ_S + Θ_S)^*$:
    from the top level, we have a first layer of basic operations,
    until we meet a binary operation; the first argument of the latter
    may then only consist of binary operations until it reaches a
    variable $x ∈ X$, while the second argument is
    arbitrary. Equivalently, letting $T' = Θ_S^\fleur$, we have
    $$(Γ_S + Θ_S)^* ≅ Γ_S^* ⊕ Θ_S^* ≅ T ⊕ T'\rlap{,}$$

    hence $$S^{\add,\mul} ≅ S ∘ (T ⊕ T').$$ By induction, i.e.,
    applying~\eqref{eq:di} repeatedly, we define a distributive law
    $(T ⊕ T') ∘ S → S ∘ (T ⊕ T')$.

  Of course $Θ$ is cocontinuous in its first argument, which easily
  entails that $T⊕T'$ is constant-free, and so
  $S → S^{\add,\mul} ≅ S ∘ (T⊕T')$ is admissible.

\subsection{The abstract case}\label{ss:incrlaws}
\subsubsection{Main result}
Let us now abstract over the previous section.
\begin{definition}
  An \alert{incremental structural law} on a locally finitely
  presentable category $𝐂$ consists of
  \begin{itemize}
  \item a distributive law $δ∶ TS → ST$ with free basic monad
    $S = Σ^*$ and constant-free auxiliary monad $T$, together with
  \item a functor $Γ∶ 𝐂² → 𝐂$ which is cocontinuous in its first
    argument and finitary in its second argument, equipped with
  \item a natural transformation
    $$Γ_Y(Σ(X)) → ST(Γ_{STY}(X) + X + Y)\rlap{.}$$
    % where again $Γ_Y(X) ≔ Γ(X,Y)$.
  \end{itemize}
\end{definition}

\begin{theorem}\label{main:thm:incremental}
  Any incremental structural law
  $$d_{X,Y}∶ Γ_Y(Σ(X)) → ST(Γ_{ST(Y)}(X) + X + Y)$$
  over $δ∶ TS → ST$, with $S ≔ Σ^*$, induces a  
  distributive law
  $$d_{/δ}∶ (T⊕T')S → S(T⊕T')\rlap{,}$$
  where $T' = Γ_S^\fleur$, making the following diagram commute.
  \begin{equation}
% file:///home/thirs/github/quiver/src/coherence-structural-equational-system-interp1-term1.html?q=WzAsMTEsWzAsMCwizpNfWCDOo1giXSxbMCwyLCLOk197U1h9U1giXSxbMCw1LCIoVCDiipUgVCcpU1giXSxbMiw1LCJTKFTiipVUJylYIl0sWzIsMCwiU1QozpNfe1NUWH1YICsgWCArIFgpIl0sWzAsMywizpNfe1NTWH1TWCJdLFsyLDIsIlNUKFQnVFggKyBYKSJdLFsyLDQsIlMoVOKKlVQnKShU4oqVVCcpWCJdLFswLDQsIlQnU1giXSxbMiwzLCJTVCgoVCDiipUgVCcpKFTiipVUJylYICsgWCkiXSxbMiwxLCJTVCjOk197U1RYfVRYICsgWCkiXSxbMCw0LCJkX3tYLFh9Il0sWzAsMSwizpNfe863XlNfWH0gzrdfe86jLFh9IiwyXSxbMSw1LCLOk197zrdeU197U1h9fVNYIiwyXSxbMiwzLCJkX3svzrR9WCIsMl0sWzcsMywiU868XntU4oqVVCd9X1giXSxbNSw4LCLOt197zpNfUyxTWH0iLDJdLFs4LDIsImluXzIgU1giLDJdLFs5LDcsIlNpbuKCgVvOvF57VOKKlVQnfV9YLCDOt157VOKKlVQnfVhdIl0sWzYsOSwiU1QoaW7igoIgaW7igoEgWCArIFgpIl0sWzQsMTAsIlNUKM6TX3tTVFh9IM634bWAX1ggKyBbWCwgWF0pIl0sWzEwLDYsIlNUKFxcZXRhX3tcXEdhbW1hX1MsVFh9ICsgWCkiXV0=
\begin{tikzcd}[ampersand replacement=\&]
	{Γ_X ΣX} \&\& {ST(Γ_{STX}X + X + X)} \\
	\&\& {ST(Γ_{STX}TX + X)} \\
	{Γ_{SX}SX} \&\& {ST(T'TX + X)} \\
	{Γ_{SSX}SX} \&\& {ST((T ⊕ T')(T⊕T')X + X)} \\
	{T'SX} \&\& {S(T⊕T')(T⊕T')X} \\
	{(T ⊕ T')SX} \&\& {S(T⊕T')X}
	\arrow["{d_{X,X}}", from=1-1, to=1-3]
	\arrow["{Γ_{η^S_X} η_{Σ,X}}"', from=1-1, to=3-1]
	\arrow["{Γ_{η^S_{SX}}SX}"', from=3-1, to=4-1]
	\arrow["{d_{/δ}X}"', from=6-1, to=6-3]
	\arrow["{Sμ^{T⊕T'}_X}", from=5-3, to=6-3]
	\arrow["{η_{Γ_S,SX}}"', from=4-1, to=5-1]
	\arrow["{in_2 SX}"', from=5-1, to=6-1]
	\arrow["{Sin₁[μ^{T⊕T'}_X, η^{T⊕T'}X]}", from=4-3, to=5-3]
	\arrow["{ST(in₂ in₁ X + X)}", from=3-3, to=4-3]
	\arrow["{ST(Γ_{STX} ηᵀ_X + [X, X])}", from=1-3, to=2-3]
	\arrow["{ST(\eta_{\Gamma_S,TX} + X)}", from=2-3, to=3-3]
      \end{tikzcd}
      \label{eq:d:delta}
  \end{equation}
  Furthermore, $T⊕T'$ is constant-free, hence the monad morphism
  $$S → S (T⊕T')$$
  is admissible.
\end{theorem}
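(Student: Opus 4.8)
The plan is to treat the admissibility claim as a quick corollary and to put all the weight on constructing $d_{/δ}$ and checking that it is a genuine distributive law. For the corollary, once $d_{/δ}$ is available it suffices, by the Proposition stating that a distributive law with constant-free auxiliary monad yields an admissible morphism $S→ST$, to verify that $T⊕T'$ is constant-free. The summand $T'=Γ_S^\fleur$ is constant-free by Lemma~\ref{lem:constantless}, since $Γ_S$ inherits cocontinuity in its first argument from $Γ$ and is finitary because $Γ$ is finitary and $S$ is finitary; and $T$ is constant-free by hypothesis. To combine the two I would use that the Eilenberg--Moore category of a coproduct monad is the category of objects carrying both a $T$- and a $T'$-algebra structure, morphisms preserving both. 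Since $∅$ is initial among $T$-algebras and among $T'$-algebras, the unique maps $∅→X$ in $𝐂$ are automatically morphisms for either structure (two maps out of the initial $T∅≅∅$, resp.\ $T'∅≅∅$, coincide), so $∅$ with its forced structures is initial in $(T⊕T')\Alg$; comparing with the free algebra on $∅$ gives $(T⊕T')(∅)≅∅$, hence admissibility of $S→S(T⊕T')$.

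The core is the construction of $d_{/δ}∶(T⊕T')S→S(T⊕T')$. I would exploit the standard presentation of the coproduct of a monad with a free monad: as $T'=(Γ_SΔ)^*$ is free on the endofunctor $A↦Γ_{SA}(A)$, one has
\[
  (T⊕T')(Z) ≅ μA.\,T\big(Z + Γ_{SA}(A)\big),
\]
which, when $T$ is the identity monad, specialises to the formula $μA.(Z+Γ_{SA}(A))$ recorded in the remark after Theorem~\ref{thm:simple}. This realises $T⊕T'$ as an initial algebra (a directed colimit, by finitariness), and I would define $d_{/δ}Z$ as the unique algebra morphism out of it, after equipping $S(T⊕T')(Z)$ with a suitable $A↦T\big(SZ+Γ_{SA}(A)\big)$-algebra structure. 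That structure is assembled from exactly the pieces appearing along the two legs of diagram~\eqref{eq:d:delta}: the given $δ$, used to commute each $T$-layer past $S$; the injections $T,T'↪T⊕T'$; the structural law $d$, applied to consume the leading basic constructor of the principal (decreasing) argument of each $Γ$-node and recurse into the resulting body in $ST(\dots)$; and the multiplications $μ^S$ and $μ^{T⊕T'}$, used to flatten the nested $S$'s and $(T⊕T')$'s thrown up at each step. Diagram~\eqref{eq:d:delta} is precisely the one-step equation for the $Γ$-generators, so it holds by construction. The construction is thus an interleaved structural recursion, alternating $δ$-steps on $T$-layers with $d$-steps on the $Σ^*$-structure sitting at the leaves --- exactly the process sketched on the running examples in §\ref{ss:addition} and §\ref{ss:mult}.

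With $d_{/δ}$ in hand, the substantive work is verifying the distributive-law axioms: compatibility with $η^S$ and $μ^S$ on the $S$-side, and with $η^{T⊕T'}$ and $μ^{T⊕T'}$ on the $T⊕T'$-side. Each reduces to an equality of two natural transformations out of the initial algebra above, which I would establish by the uniqueness half of its universal property, i.e.\ by checking that both sides are morphisms for the relevant induced algebra; concretely this is an induction along the $μA$-presentation. The unit axioms and the $S$-multiplication axiom I expect to be routine. The genuine obstacle is the multiplication axiom for $T⊕T'$: the multiplication of a coproduct monad is itself defined by an interleaving recursion, so one must track how the $δ$-steps (from the ambient distributive law) and the $d$-steps (from the new structural law) commute as the two are threaded together under repeated flattening by $μ^S$. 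I expect this coherence --- essentially a confluence-style argument that the order in which $T$- and $Γ_S$-layers are consumed does not matter up to $S$-multiplication --- to be the heart of the proof, and to be where the distributive-law axioms of the given $δ$ and the naturality of $d$ are all consumed.
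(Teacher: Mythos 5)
Your skeleton is genuinely different from the paper's: you propose to build $d_{/δ}$ directly, by recursion on the initial-algebra presentation $(T⊕T')(Z) ≅ μA.\,T(Z+Γ_{SA}(A))$ of the coproduct of $T$ with the free monad $T'$, whereas the paper factors the problem through an \emph{incremental lifting} of $S$ to $T'\Alg$ (Theorem~\ref{thm:incremental}) and then invokes Beck's correspondence between liftings and distributive laws (Lemma~\ref{lem:dist:lifting}, Corollaries~\ref{cor:incr:liftings} and~\ref{cor:incrlift:distlaw}). Your constant-freeness and admissibility argument is correct and matches the paper's (Lemma~\ref{lem:constantless} plus Lemma~\ref{lem:monad:coprod:alg}). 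But there is a genuine gap at the crux of the construction. To equip $W ≔ S(T⊕T')Z$ with an $A ↦ T(SZ+Γ_{SA}A)$-algebra structure, the $Γ$-component requires a morphism $Γ(W,SW) → W$ whose \emph{principal argument is a full $S$-term}, while $d$ only consumes patterns $Γ_Y(ΣX)$ with a single leading basic constructor. ``Apply $d$ and recurse'' is not a definition here: in $d$'s codomain the recursive call lives in $Γ_{STY}(X)$, with the second argument upgraded from $Y$ to $STY$ and with $X$ also appearing as a bare summand, so a naive induction on the $S$-structure of the principal argument does not even typecheck; and in a general locally finitely presentable category there are no elements to recurse on pointwise, so an honest universal-property argument is unavoidable.

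This missing ingredient is precisely the paper's extension $d^{•ω}_{X,Y}∶ Γ_{SY}(SX) → ST(Γ_{STY}X + X + Y)$, whose construction occupies most of~§\ref{app:thm-incremental}: cocontinuity of $Γ$ in its first argument gives a right adjoint $ℸ^×_{STY}$ to $Γ^\pt_{STY}$ (Lemma~\ref{lem:circ:times}); one transposes, shows the transposed codomain is a monad, extends along the universal property of $S = Σ^*$ to get $\widetilde{h^∘}^ω$, and transposes back, with the coherence of all this carried by Lemmas~\ref{lem:coh:hbullet}--\ref{lem:coh:hbulletomegamu}. Your plan silently assumes this object exists; without it, neither the algebra structure nor your claim that~\eqref{eq:d:delta} holds ``by construction'' is supported (even in the paper's setup, \eqref{eq:d:delta} requires a separate argument, Proposition~\ref{prop:d:delta} via Lemma~\ref{lem:d:delta:Y}). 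Two of your difficulty assessments are also inverted relative to where the weight actually falls: the $S$-multiplication axiom, which you call routine, is the paper's heaviest verification (Proposition~\ref{prop:mualg}, Figure~\ref{fig:qmualg}) -- in the lifting picture the $(T⊕T')$-side axioms are the ones that come for free, and in your direct-recursion picture the $(T⊕T')$-multiplication would likewise follow from uniqueness of maps out of the initial algebra once both sides are shown to be algebra morphisms. So: correct corollary, defensible alternative architecture, but the technical heart of the theorem -- extending $d$ along $S$ in the principal argument -- is missing.
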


The next subsection is devoted to sketching the proof, and may be
safely ignored.

\subsubsection{Proof sketch}
Our first step will consist in defining an intermediate notion
called incremental lifting, fitting in the following
process:
\begin{center}
  incremental structural law $→$ incremental lifting $→$ distributive
  law $→$ admissible morphism.
\end{center}
% \begin{center}
%   incremental structural law

%   $↓$

%   incremental lifting

%   $↓$

%   distributive law

%   $↓$

%   admissible morphism.
% \end{center}
To explain the idea, let us start by recalling
from~\citet{BeckDistlaws} that distributive laws are equivalent to
monad liftings.

\begin{definition}
  A \alert{lifting} of a monad $S∶ 𝐂 → 𝐂$ along a functor
  $U∶ 𝐄 → 𝐂$ is a monad $S'∶ 𝐄 → 𝐄$ such that the following square commutes
  \begin{center}
    \diag{%
      𝐄 \& 𝐄 \\
      𝐂 \& 𝐂 %
    }{%
      (m-1-1) edge[labela={S'}] (m-1-2) %
      edge[labell={U}] (m-2-1) %
      (m-2-1) edge[labelb={S}] (m-2-2) %
      (m-1-2) edge[labelr={U}] (m-2-2) %
    }
  \end{center}
  and, furthermore, $U$ preserves multiplication and unit, i.e., for
  all $E ∈ 𝐄$, $U(μ^{S'}_E) = μ^S_{U(E)}$ and
  $U(η^{S'}_E) = η^S_{U(E)}$.

  Given any monads $T$ and $S$ on a category $𝐂$, a \alert{$T$-lifting}
  of $S$ is a lifting of $S$ along the forgetful functor
  $T\Alg → 𝐂$.
\end{definition}
\begin{proposition}
  For all $S$ and $T$, $T$-liftings of $S$ are in one-to-one
  correspondence with endofunctors $S'∶ T\Alg → T\Alg$, such that the
  following square commutes
  \begin{center}
    \diag{%
      T\Alg \& T\Alg \\
      𝐂 \& 𝐂
    }{%
      (m-1-1) edge[labela={S'}] (m-1-2) %
      edge[labell={}] (m-2-1) %
      (m-2-1) edge[labelb={S}] (m-2-2) %
      (m-1-2) edge[labelr={}] (m-2-2) %
    }
  \end{center}
  and, for all $X ∈ T\Alg$, $μ^S_X∶ SSX → SX$ and $η^S_X∶ X → SX$ are
  $T$-algebra morphisms.
\end{proposition}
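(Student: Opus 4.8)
The plan is to exploit the \emph{faithfulness} of the forgetful functor $U∶ T\Alg → 𝐂$, a standard property of Eilenberg--Moore categories, which lets us both detect and reflect equations between morphisms of $T\Alg$ through their underlying maps in $𝐂$. The correspondence will send a $T$-lifting, i.e.\ a monad $S'$ on $T\Alg$ living over $S$, to its underlying endofunctor, and conversely reconstruct the monad structure on such an endofunctor from that of $S$. Since the monad data of a lifting is forced, up to $U$, to agree with $μ^S$ and $η^S$, no information is lost in passing from the monad $S'$ to its bare endofunctor, which is what makes the two directions mutually inverse.

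For the forward direction, start from a $T$-lifting $S'$. Its underlying endofunctor makes the square commute by definition of a lifting. Moreover, for each $X ∈ T\Alg$, the structure maps $μ^{S'}_X∶ S'S'X → S'X$ and $η^{S'}_X∶ X → S'X$ are morphisms of $T\Alg$, i.e.\ $T$-algebra morphisms, and the lifting condition gives $U(μ^{S'}_X) = μ^S_X$ and $U(η^{S'}_X) = η^S_X$. Hence $μ^S_X$ and $η^S_X$ are the underlying maps of $T$-algebra morphisms between the relevant liftings, which is precisely the stated condition on the endofunctor.

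Conversely, given an endofunctor $S'∶ T\Alg → T\Alg$ with $US' = SU$ and the stated condition, I would lift the monad structure of $S$ componentwise: by hypothesis each $μ^S_X$ and each $η^S_X$ is the underlying map of a $T$-algebra morphism, necessarily unique by faithfulness, which I take as $μ^{S'}_X$ and $η^{S'}_X$. Naturality of these families, together with associativity and the two unit laws, then follow by applying $U$: using $US' = SU$ on morphisms to identify the relevant composites (e.g.\ $U(S'f) = S(Uf)$), each equation in $T\Alg$ maps to the corresponding naturality or monad equation for $S$ in $𝐂$, which holds; since $U$ is faithful, the equation already holds in $T\Alg$. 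By construction $U$ then preserves multiplication and unit, so $S'$ is a $T$-lifting of $S$.

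I expect no serious obstacle. The only point requiring care is verifying that the componentwise lifts $μ^{S'}_X$ and $η^{S'}_X$ assemble into genuine natural transformations satisfying the monad laws; but this is a routine diagram chase that faithfulness of $U$ reduces entirely to the already-established structure of $S$. That the two assignments are mutually inverse is then immediate: the underlying endofunctor is plainly preserved, and since faithfulness pins down the lifted data $μ^{S'}, η^{S'}$ uniquely once their underlying maps $μ^S, η^S$ are fixed, discarding and then reconstructing the monad structure returns the original lifting.
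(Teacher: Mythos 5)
Your proof is correct and is essentially the argument the paper intends: the paper's proof is simply ``Straightforward,'' and your use of faithfulness of the forgetful functor $T\Alg → 𝐂$ to transport the monad data of $S$ componentwise, verify naturality and the monad laws by applying $U$ and reflecting, and conclude uniqueness is the standard way of filling in exactly those routine details. Nothing is missing.
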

\begin{proof}
  Straightforward.
\end{proof}

\begin{lemma}[{\citet[§1]{BeckDistlaws}}]\label{lem:dist:lifting}
  For any monads $S,T∶ 𝐂 → 𝐂$, monad distributive laws $TS → ST$ are
  in one-to-one correspondence with $T$-liftings of $S$.
\end{lemma}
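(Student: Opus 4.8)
The plan is to establish Beck's bijection by writing down the two constructions explicitly and checking that they are mutually inverse. Since the statement is classical, the difficulty is entirely bookkeeping, and I would organise everything around the four axioms of a distributive law: commutation with $η^T$, with $μ^T$, with $η^S$, and with $μ^S$.

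\textbf{From a distributive law to a lifting.}
Given $δ∶ TS → ST$, I would define the lifting $S'$ on a $T$-algebra $(X,a∶ TX → X)$ by equipping $SX$ with the $T$-algebra structure
$$T(SX) \xto{δ_X} S(TX) \xto{Sa} SX\rlap{,}$$
and letting $S'$ act on morphisms by $S$. The equation $U^T S' = S U^T$ (with $U^T∶ T\Alg → 𝐂$ the forgetful functor) holds by construction, so that $S'$ is a genuine lifting in the sense required. The two $T$-algebra laws for the structure $Sa∘δ_X$ follow, respectively, from the axioms relating $δ$ to $η^T$ and to $μ^T$, combined with the algebra laws of $(X,a)$; and the fact that $η^S$ and $μ^S$ lift to $T$-algebra morphisms — which, by the Proposition above, is exactly what makes $S'$ a $T$-lifting of the \emph{monad} $S$ — follows from the axioms relating $δ$ to $η^S$ and to $μ^S$.

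\textbf{From a lifting to a distributive law.}
Conversely, given a $T$-lifting $S'$, I would write $θ_{(X,a)}∶ T(SX) → SX$ for the $T$-algebra structure of $S'(X,a)$; as $(X,a)$ varies these assemble into a natural transformation $TSU^T → SU^T$, since $S'$ is functorial. Evaluating at the free $T$-algebra $(TX,μ^T_X)$, whose carrier under $S'$ is $STX$, I then set
$$δ_X ≔ \bigl(TSX \xto{TSη^T_X} TSTX \xto{θ_{(TX,μ^T_X)}} STX\bigr)\rlap{.}$$
Naturality of $δ$ is inherited from that of $θ$ and of $η^T$, and the four distributive-law axioms follow from the monad laws of $S'$ together with the facts that $η^S$ and $μ^S$ are $T$-algebra morphisms.

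\textbf{The round trips.}
Both composites reduce to short chases using only naturality and the unit laws. Starting from $δ$, building $S'$, and reading off the reconstructed law $δ'_X = Sμ^T_X ∘ δ_{TX} ∘ TSη^T_X$, naturality of $δ$ at $η^T_X$ rewrites this as $S(μ^T_X ∘ Tη^T_X) ∘ δ_X$, which is $δ_X$ by the right unit law of $T$. Starting instead from $S'$, the reconstructed lifting has structure map $Sa ∘ δ_X$ on $(X,a)$; substituting $δ_X = θ_{(TX,μ^T_X)} ∘ TSη^T_X$ and applying naturality of $θ$ along the algebra morphism $a∶ (TX,μ^T_X) → (X,a)$, together with $a ∘ η^T_X = \id_X$, collapses it to $θ_{(X,a)}$, the structure map we began with. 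I expect the genuine content of the proof to lie not in these round trips — which are immediate once naturality and the unit laws are in hand — but in the verifications of the previous two paragraphs, namely that each construction lands where claimed: that $Sa ∘ δ_X$ really is a $T$-algebra structure making $S'$ a monad lifting, and dually that the defined $δ_X$ really satisfies all four axioms. The main obstacle is simply the discipline of matching the four distributive-law axioms against the four target conditions and keeping the whiskerings and naturality squares straight.
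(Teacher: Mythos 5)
Your proposal is correct and is precisely the classical argument of \citet[§1]{BeckDistlaws}, which the paper itself invokes by citation rather than reproving: lifting $(X,a)$ to $Sa\circ\delta_X$, recovering $\delta_X$ as $\theta_{(TX,\mu^T_X)}\circ TS\eta^T_X$ at free algebras, and collapsing both round trips via naturality and the unit laws, exactly as you do. Your use of the preceding proposition (that a $T$-lifting of the monad $S$ amounts to $\eta^S_X$ and $\mu^S_X$ being $T$-algebra morphisms) also matches the paper's intended framing, so nothing further is needed.
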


We now introduce incremental liftings.
\begin{definition}
  %% Tom: c'est mal défini, il me semble, parce que SSX
  %% n'a pas a priori de structure de T-algèbre.
  %% J'ajoute l'hypothèse TS → ST pour corriger ça.
  Let $S$, $T$, and $T'$ be monads on $𝐂$, and $δ∶ TS → ST$ be a monad
  distributive law. An \alert{incremental lifting} of $S$ to $T'\Alg$
  along $δ$ is a functorial assignment, to each pair of a $T$-algebra
  structure and a $T'$-algebra structure on $X$, of a $T'$-algebra
  structure on $SX$, such that for each such $X$, the multiplication
  $SSX → SX$ and unit $X → SX$ are $T'$-algebra morphisms.
\end{definition}

This rather technical definition in fact unfolds rather simply, as we
now explain. We first recall the following, well-known
characterisation of $(T⊕T')$-algebras.

\begin{lemma}\label{lem:monad:coprod:alg}
  Let $T$ and $T'$ be monads on a complete (locally small) category $𝐂$
  such that the monad coproduct $T⊕T'$ exists. Then, the category of
  $(T⊕T')$-algebras consists of objects of $𝐂$ equipped with algebra
  structures for both $T$ and $T'$. Otherwise said, the following
  square is a pullback.
  \begin{center}
    \Diag{%
      \stdpbk %
    }{%
      (T⊕T')\Alg \& T\Alg \\
      T'\Alg \& 𝐂
    }{%
      (m-1-1) edge[labela={}] (m-1-2) %
      edge[labell={}] (m-2-1) %
      (m-2-1) edge[labelb={}] (m-2-2) %
      (m-1-2) edge[labelr={}] (m-2-2) %
    }%
    \end{center}
\end{lemma}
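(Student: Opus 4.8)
The plan is to deduce the statement from two universal properties: that of the monad coproduct $T⊕T'$ in $𝐌𝐧𝐝(𝐂)$, and the representability of algebra structures by the endomorphism monad, the latter being where completeness of $𝐂$ enters. Recall first that $T⊕T'$, being the coproduct in $𝐌𝐧𝐝(𝐂)$, is characterised by the fact that monad morphisms $T⊕T' → R$ are in natural bijection with pairs of monad morphisms $(T → R, T' → R)$, for every monad $R$. On the other hand, since $𝐂$ is complete and locally small, every object $X ∈ 𝐂$ admits an endomorphism monad $⟨X,X⟩$, given on objects by the powers $⟨X,X⟩(A) = X^{𝐂(A,X)}$; its defining feature is that, for any monad $R$, $R$-algebra structures on $X$ correspond naturally to monad morphisms $R → ⟨X,X⟩$.

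Chaining these two facts gives the object part of the claim. Instantiating $R ≔ ⟨X,X⟩$ in the coproduct universal property and composing with the representability bijection, $(T⊕T')$-algebra structures on $X$ correspond to monad morphisms $T⊕T' → ⟨X,X⟩$, hence to pairs of monad morphisms $(T → ⟨X,X⟩, T' → ⟨X,X⟩)$, hence to pairs consisting of a $T$-algebra structure and a $T'$-algebra structure on $X$. All these bijections being natural in $X$, this exhibits a bijection on objects between $(T⊕T')\Alg$ and the strict pullback $P$ of $T\Alg → 𝐂 ← T'\Alg$, compatible with the projections to $T\Alg$ and $T'\Alg$ and with the forgetful functors to $𝐂$. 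Writing $\Phi∶ (T⊕T')\Alg → P$ for the induced comparison functor over $𝐂$, this comparison is moreover faithful, since the forgetful functor $(T⊕T')\Alg → 𝐂$ factors through it.

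The remaining, and main, difficulty is fullness of $\Phi$: one must show that a morphism $f∶ X → Y$ in $𝐂$ which is simultaneously a $T$- and a $T'$-algebra morphism is automatically a $(T⊕T')$-algebra morphism. The obstacle here is genuine, because the two injections $T, T' → T⊕T'$ are \emph{not} jointly epimorphic on underlying functors — the coproduct monad contains alternating composites of $T$ and $T'$ — so one cannot simply read off $(T⊕T')$-linearity from $T$- and $T'$-linearity pointwise. I would resolve this conceptually rather than by a direct induction on the structure of $T⊕T'$, by invoking the equivalence $𝐬𝐞𝐦∶ \op{𝐌𝐧𝐝(𝐂)} → 𝐌𝐨𝐧𝐚𝐝𝐢𝐜/𝐂$ established above: since $𝐬𝐞𝐦$ is an equivalence, it carries the coproduct $T⊕T'$ to the product of $𝐬𝐞𝐦(T)$ and $𝐬𝐞𝐦(T')$ in $𝐌𝐨𝐧𝐚𝐝𝐢𝐜/𝐂$, so that $(T⊕T')\Alg → 𝐂$, equipped with its projections, realises this product.

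It then remains to identify this product, taken in the full subcategory $𝐌𝐨𝐧𝐚𝐝𝐢𝐜/𝐂$, with the pullback $P$ computed in $𝐂𝐀𝐓/𝐂$ — the content being that the ambient pullback of two monadic functors along this particular cospan is again monadic. I would verify this last point via Beck's monadicity theorem applied to $P → 𝐂$: it has a left adjoint sending $X$ to the free $(T⊕T')$-algebra $(T⊕T')X$ equipped with its canonical $T$- and $T'$-structures (using the object bijection just established), it is conservative because $T\Alg → 𝐂$ and $T'\Alg → 𝐂$ are, and it creates the relevant split coequalisers because those two functors do and the pullback is strict. Once $P → 𝐂$ is known to be monadic, $P$ lies in the full subcategory $𝐌𝐨𝐧𝐚𝐝𝐢𝐜/𝐂$, where — being the ambient product of monadic functors — it coincides with the product there, namely $(T⊕T')\Alg$. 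Hence $\Phi$ is an isomorphism over $𝐂$, which is exactly the asserted pullback.
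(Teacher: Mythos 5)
Your object-level argument is correct, and it is the same first half as the paper's source: the paper proves this lemma as the binary-coproduct instance of \citet[Proposition~26.3]{KellyUnified} (quoted in §\ref{s:monad:colims}), whose proof likewise represents algebra structures by the endomorphism monad $\langle X,X\rangle$, with completeness of $𝐂$ guaranteeing the needed powers. You also correctly identify the real difficulty — fullness, i.e.\ that a simultaneously $T$- and $T'$-linear map is $(T⊕T')$-linear — but your resolution of it is circular. Beck's monadicity theorem takes an adjunction as \emph{input}, and your candidate left adjoint $X ↦ (T⊕T')X$ to $P → 𝐂$ is not yet known to be one: the universal property requires that every morphism $h∶ (T⊕T')X → A$ in $P$ equal the free $(T⊕T')$-extension of $h∘η_X$. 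Since $(T⊕T')X$ is not free as a $T$-algebra (nor as a $T'$-algebra) on $X$, agreement on generators plus separate $T$- and $T'$-linearity does not give uniqueness of the factorisation — this is exactly the joint-non-epimorphy of the injections that you flagged, reappearing unchanged. Conservativity and creation of split coequalisers for $P → 𝐂$ are fine as you argue them, but without the adjunction, monadicity of $P → 𝐂$ does not follow, and the subsequent comparison of products in $𝐌𝐨𝐧𝐚𝐝𝐢𝐜/𝐂$ versus $𝐂𝐀𝐓/𝐂$ has nothing to apply to.

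The standard repair — and, in substance, Kelly's proof — is to extend your representability step from objects to morphisms rather than detour through $𝐬𝐞𝐦$. For $f∶ X → Y$, completeness of $𝐂$ also yields a monad $\langle f,f\rangle$, the pullback of $\langle X,X\rangle → \langle X,Y\rangle ← \langle Y,Y\rangle$ where $\langle X,Y\rangle(A) = Y^{𝐂(A,X)}$, with the property that monad morphisms $R → \langle f,f\rangle$ correspond to pairs of $R$-algebra structures on $X$ and $Y$ making $f$ an $R$-morphism. If $f$ is both a $T$- and a $T'$-morphism, one obtains monad morphisms $T → \langle f,f\rangle$ and $T' → \langle f,f\rangle$, hence a unique $T⊕T' → \langle f,f\rangle$ by the coproduct property; uniqueness in that same property identifies its two projections with the given $(T⊕T')$-structures on $X$ and $Y$, so $f$ is a $(T⊕T')$-morphism. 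This closes the fullness gap directly, with no appeal to monadicity of $P$, and together with your object bijection yields the asserted pullback.
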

\begin{proof}
  See~§\ref{s:monad:colims}.
\end{proof}
\begin{remark}
  In passing, this result yields an easy proof that $T⊕T'$ is
  constant-free.
\end{remark}

This directly entails the following equivalent presentation of
incremental liftings.
\begin{corollary}\label{cor:incr:liftings}
  Incremental liftings of $S$ to $T'\Alg$ along $δ$ on a complete
  category $𝐂$ such that the coproduct $T⊕T'$ exists are in one-to-one
  correspondence with liftings along the projection
  $(T⊕T')\Alg → T\Alg$ of the monad on $T\Alg$, say $Sᵟ$, itself
  obtained by lifting $S$, as in
  \begin{center}
    \diag{%
      (T⊕T')\Alg \&       (T⊕T')\Alg \\
      T\Alg \&       T\Alg \\
      𝐂 \& 𝐂\rlap{.} %
    }{%
      (m-1-1) edge[labela={S^{δ,T'}}] (m-1-2) %
      edge[labell={}] (m-2-1) %
      (m-2-1) edge[labela={Sᵟ}] (m-2-2) %
      (m-1-2) edge[labelr={}] (m-2-2) %
      (m-2-1)
      edge[labell={}] (m-3-1) %
      (m-3-1) edge[labelb={S}] (m-3-2) %
      (m-2-2) edge[labelr={}] (m-3-2) %
    }
  \end{center}
  In such a situation, $S^{δ,T'}$ is furthermore a $(T⊕T')$-lifting of
  $S$.
\end{corollary}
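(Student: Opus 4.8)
The plan is to unfold both sides of the claimed bijection into the concrete data they manipulate, and to read off the correspondence essentially by inspection, using the two cited lemmas. First I would use Lemma~\ref{lem:dist:lifting} to recast the distributive law $δ$ as a $T$-lifting $S^δ$ of $S$: a monad on $T\Alg$ whose underlying functor lies strictly over $S$ along the forgetful functor $U^T∶ T\Alg → 𝐂$, and for which $μ^S$ and $η^S$ are $T$-algebra morphisms. Next I would invoke Lemma~\ref{lem:monad:coprod:alg} to identify a $(T⊕T')$-algebra with a triple $(X,t,t')$ consisting of an object $X$ together with a $T$-structure $t$ and a $T'$-structure $t'$, under which the projection $p∶ (T⊕T')\Alg → T\Alg$ simply forgets $t'$, sending $(X,t,t')$ to $(X,t)$.

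With these identifications in hand, the core of the argument is to unfold what a lifting $S^{δ,T'}$ of $S^δ$ along $p$ amounts to. Since such a lifting satisfies $p∘S^{δ,T'} = S^δ∘p$ strictly, the value $S^{δ,T'}(X,t,t')$ must be a $(T⊕T')$-algebra lying over $S^δ(X,t)$; writing $S^δ(X,t) = (SX, t_S)$, this forces $S^{δ,T'}(X,t,t') = (SX, t_S, s')$ for a uniquely determined $T'$-structure $s'$ on $SX$. The only genuine datum is therefore the functorial choice of $s'$ for each $(X,t,t')$ — which is exactly the data of an incremental lifting. It then remains to match the coherence conditions on the two sides: the requirement that $p$ preserve the unit and multiplication of $S^{δ,T'}$ pins these natural transformations down to lie over $η^{S^δ}$ and $μ^{S^δ}$, so their only additional content is that their components $η^S_X$ and $μ^S_X$ be morphisms in $(T⊕T')\Alg$. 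As the $T$-compatibility part is automatic (they already live over the monad structure of $S^δ$), this reduces precisely to the condition, appearing in the definition of incremental lifting, that $η^S_X$ and $μ^S_X$ be $T'$-algebra morphisms. Running this reading in both directions yields the claimed one-to-one correspondence.

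For the final sentence, I would show that $S^{δ,T'}$ is a $(T⊕T')$-lifting of $S$ by composing liftings. Writing the forgetful functor $q∶ (T⊕T')\Alg → 𝐂$ as $U^T∘p$, the strict commutations $p∘S^{δ,T'} = S^δ∘p$ and $U^T∘S^δ = S∘U^T$ compose to give $q∘S^{δ,T'} = S∘q$, so the defining square of a $(T⊕T')$-lifting commutes. Likewise, $q$ preserves the unit and multiplication of $S^{δ,T'}$ because both $p$ and $U^T$ do, whence $q(η^{S^{δ,T'}}) = η^S q$ and $q(μ^{S^{δ,T'}}) = μ^S q$, as required.

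I expect the only delicate point to be the bookkeeping in the coherence conditions of the correspondence: verifying that being a lifting along $p$ (so that $p$ preserves unit and multiplication) translates exactly — no more and no less — into the incremental-lifting requirement that $η^S_X$ and $μ^S_X$ be $T'$-algebra morphisms, using that their $T$-compatibility is already guaranteed by $S^δ$. Everything else is a routine unfolding of definitions through the two cited lemmas.
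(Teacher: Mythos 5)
Your proposal is correct and follows essentially the same route as the paper: the corollary is stated there as a direct consequence of Lemma~\ref{lem:monad:coprod:alg} (the pullback presentation of $(T\oplus T')$-algebras), and your argument is precisely the unfolding that makes this explicit, using Lemma~\ref{lem:dist:lifting} to pass from $\delta$ to $S^\delta$ and observing that a lifting along the projection is forced on objects and morphisms up to the choice of the $T'$-structure $s'$, with the $T$-compatibility of $\eta^S$ and $\mu^S$ automatic. Your concluding composition of liftings for the final claim is likewise the intended (and correct) argument.
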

It is now easy to see that incremental liftings give rise to distributive laws.
\begin{corollary}\label{cor:incrlift:distlaw}
  Let $S$, $T$, and $T'$ be monads on a complete category $𝐂$ such
  that the monad coproduct $T ⊕ T'$ exists, and let $δ∶ TS → ST$ be
  any monad distributive law.  Then, any incremental lifting of $S$ to
  $T'\Alg$ along $δ$ gives rise to a distributive law
  $δ'∶ (T⊕T')S → S (T⊕T')$.

  Furthermore, if $T$ and $T'$ are constant-free, so is $T⊕T'$, hence
  the induced monad morphism $S → S (T⊕T')$ is admissible.
\end{corollary}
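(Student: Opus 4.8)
The plan is to read this corollary as the composition of the two bijective correspondences established immediately above, capped off by the admissibility criterion of §\ref{s:admissible}; no genuine computation should be required. Starting from an incremental lifting of $S$ to $T'\Alg$ along $δ$, the first step is to invoke Corollary~\ref{cor:incr:liftings}, whose closing sentence already produces a $(T⊕T')$-lifting $S^{δ,T'}$ of $S$. The only hypotheses needed for this are the completeness of $𝐂$ and the existence of the monad coproduct $T⊕T'$, both of which are assumed.

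The second step is to apply Beck's correspondence (Lemma~\ref{lem:dist:lifting}), but reading the single monad $T⊕T'$ in the slot occupied by $T$ there: a $(T⊕T')$-lifting of $S$ is exactly a monad distributive law $(T⊕T')S → S(T⊕T')$. This delivers the desired $δ'$ and settles the first assertion, purely by transport along the two correspondences.

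For the \emph{furthermore} clause, I would first show that $T⊕T'$ is constant-free whenever $T$ and $T'$ are, using the pullback of Lemma~\ref{lem:monad:coprod:alg} (as flagged in the remark following it). Since $T$ is constant-free, $ηᵀ_{∅}$ is invertible, so the initial object $∅ ∈ 𝐂$ carries a unique $T$-algebra structure that is initial in $T\Alg$; symmetrically for $T'$. These two structures assemble into an object of the pullback $(T⊕T')\Alg$, and it remains to verify it is initial there: for any $(T⊕T')$-algebra $X$, the unique $T$-algebra morphism and the unique $T'$-algebra morphism $∅ → X$ both restrict to the unique map $∅ → X$ in $𝐂$, so they coincide and assemble into the unique $(T⊕T')$-algebra morphism $∅ → X$. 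Hence $∅$ is initial in $(T⊕T')\Alg$, i.e.\ $(T⊕T')(∅) ≅ ∅$, which is precisely constant-freeness. Admissibility of $S → S(T⊕T')$ then follows at once from the proposition of §\ref{s:admissible} asserting that a distributive law over a constant-free auxiliary monad induces an admissible monad morphism, applied to $δ'$.

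The main obstacle — admittedly a mild one — is this constant-freeness step, since pullbacks of categories need not compute initial objects componentwise, so one cannot simply declare $∅$ initial in $(T⊕T')\Alg$. The argument works only because both projection legs $T\Alg → 𝐂$ and $T'\Alg → 𝐂$ create the initial object, which is exactly what constant-freeness of $T$ and $T'$ supplies; everything else is bookkeeping across the established equivalences.
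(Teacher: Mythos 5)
Your proposal is correct and follows the paper's own proof exactly: the distributive law $δ'$ is obtained by composing Corollary~\ref{cor:incr:liftings} with Beck's correspondence (Lemma~\ref{lem:dist:lifting}), constant-freeness of $T⊕T'$ comes from the pullback description of Lemma~\ref{lem:monad:coprod:alg}, and admissibility follows from the proposition of §\ref{s:admissible}. The only difference is that you spell out the verification that the initial object of $𝐂$, with its lifted $T$- and $T'$-algebra structures, is initial in the pullback — which is precisely the ``easy proof'' the paper leaves implicit in the remark following Lemma~\ref{lem:monad:coprod:alg}, and your observation that this works because both forgetful legs create the initial object is the right justification.
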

\begin{proof}
  By Corollary~\ref{cor:incr:liftings} and
  Lemma~\ref{lem:dist:lifting}.  Constant-freeness of $T⊕T'$ follows
  by Lemma~\ref{lem:monad:coprod:alg}.
\end{proof}
\begin{remark}
  \label{rk:admis-STT}
  With the same hypotheses, it is straightforward to deduce that
  $ST → S(T⊕T')$ is admissible, since $S → ST$  and $S → ST → S(T⊕T')$ both are.
\end{remark}

We thus mostly reduce Theorem~\ref{main:thm:incremental} to the following.
\begin{theorem}\label{thm:incremental}
  Any incremental structural law
  $$d_{X,Y}∶ Γ_Y(Σ(X)) → ST(Γ_{ST(Y)}(X) + X + Y)$$
  over $δ∶ TS → ST$, with $S ≔ Σ^*$, induces an
  incremental lifting of $S$ to $T'\Alg$ along $δ$,
  where $T' = Γ_S^\fleur$.
  % The obtained distributive law furthermore makes the following
  % square commute.
  % \begin{center}
  %   \diag{%
  %     Γ_X(Σ(X)) \& ST(Γ_{ST(X)}(X) + X + X) \\
  %     T'SX \& ST (T'(T(X)) + X) \\
      
  %   }{%
  %   }
  % \end{center}
\end{theorem}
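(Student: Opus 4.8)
The plan is to construct the required $T'$-structure on $SX$ by structural recursion on the free monad $S = Σ^*$, pattern-matching on the first argument of $Γ$, and then to verify the lifting axioms by the usual coherence arguments.

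First I would reformulate the goal. Since $T' = Γ_S^\fleur$ is free on $Γ_SΔ$, a $T'$-algebra structure on an object $W$ is the same thing as a $Γ_SΔ$-algebra structure, i.e.\ a map $Γ(W,SW) → W$ (using the isomorphism $F^*\Alg ≅ F\alg$ recorded in §\ref{s:intro}). So, given a $T$-algebra $X$ together with a $T'$-structure presented as $b∶ Γ(X,SX) → X$, the task reduces to producing, functorially in $X$, a map $β∶ Γ(SX,SSX) → SX$, and to checking that $η^S_X$ and $μ^S_X$ are $Γ_SΔ$-morphisms. I would record at the outset that, because $δ∶ TS → ST$ lifts $S$ to a monad $S^δ$ on $T\Alg$ (the $T$-lifting determined by $δ$), each of $SX$ and $SSX$ carries a canonical $T$-algebra structure; combined with the $S$-multiplications, these let me collapse any composite of the shape $ST(-)$ down onto the relevant free $S$-algebra.

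Next I would set up the recursion. Writing $SX = \colim_n V_n$ with $V_0 = X$ and $V_{n+1} = X + Σ(V_n)$, cocontinuity of $Γ$ in its first argument gives $Γ(SX,SSX) = \colim_n Γ(V_n,SSX)$, so it suffices to define maps $f_n∶ Γ(V_n,SSX) → SX$ by induction on $n$. On the variable summand $Γ(X,SSX)$ I take $Γ(X,μ^S_X)$ followed by $b$ and $η^S_X$. On a summand $Γ(Σ(V_n),SSX)$ I apply the structural law $d_{V_n,SSX}$ and treat the three summands of its codomain $ST\bigl(Γ_{ST(SSX)}(V_n) + V_n + SSX\bigr)$ as follows: the recursive-call summand $Γ_{ST(SSX)}(V_n)$ is brought to $Γ(V_n,SSX)$ by collapsing $ST(SSX)$ onto $SSX$ (via the $T$-structure of $SSX$ and $μ^S$), and then fed back into $f_n$ on the strictly smaller $V_n$; the summand $V_n$ is included into $SX$; and the summand $SSX$ is collapsed by $μ^S_X$. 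This yields a map into $ST(SX)$, which I finally bring to $SX$ using the $T$-structure of $SX$ followed by $μ^S_X$. The essential point making this an honest, well-founded definition is that $d$ places its recursive call under exactly one peeled copy of $Σ$, so $β$ is only ever invoked on the previous stage $V_n$.

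It then remains to verify (i) that the $f_n$ form a cocone over the connecting morphisms $V_n → V_{n+1}$, so that $β = \colim_n f_n$ is well defined and natural in $X$, and (ii) that $η^S_X$ and $μ^S_X$ are $Γ_SΔ$-morphisms. Step~(i) is a routine chase from naturality of $d$ and $δ$ together with the monad laws. Step~(ii) is the crux and the main obstacle: it is precisely the pair of monad-lifting axioms, and its proof requires $d$ to interact coherently with the ambient distributive law $δ$ and with both multiplications. The unit axiom for $η^S$ is immediate from the base case, but the multiplication axiom for $μ^S$ unfolds, after peeling one constructor, into an associativity-type identity relating a double application of $d$ to the compatibility of $δ$ with $μ^T$ and $μ^S$, which I would again establish by induction on the first argument. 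Granting these, the assignment $(X,b) ↦ β$ is by construction a functorial choice of $T'$-structure on $SX$ for which $μ^S$ and $η^S$ are $T'$-morphisms, i.e.\ an incremental lifting of $S$ to $T'\Alg$ along $δ$, as required.
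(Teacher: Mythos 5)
Your proposal is correct in outline, but it implements the key recursion by a genuinely different mechanism than the paper. The paper never performs an explicit colimit induction: it first builds, independently of any algebra, a natural transformation $d^{•ω}_{X,Y}∶ Γ_{SY}(SX) → ST(Γ_{STY}(X)+X+Y)$ extending $d$ along the first argument (Definition~\ref{def:hbulletomega}), by transposing the first argument away through the right adjoint $ℸ_Y$ of $Γ_Y$ (which exists because $Γ_Y$ is cocontinuous on a locally finitely presentable category) and then invoking the universal property of the free monad $S = Σ^*$ mapping into the monad $ℸ^×_{STY}\, ST\, O_Y\, Γ^∘_{STY}$; only afterwards are the algebra structures plugged in, via $Γ_{μ^S}; d^{•ω}; ST(𝐛⊳𝐜); S𝐛$ (Proposition~\ref{prop:mualg}). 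You instead run the recursion directly on the ω-chain $SX = \colim_n V_n$, using cocontinuity of $Γ$ in its first argument to commute with the colimit, and you evaluate eagerly: every $ST(-)$ in the codomain of $d$ is collapsed on the spot using the $T$-structures that $δ$ lifts onto $SX$ and $SSX$ (Lemma~\ref{lem:dist:lifting}). Both routes are legitimate. Yours is more elementary — no transposition machinery, no auxiliary monads $ST\,O_Y$ conjugated by adjunctions — whereas the paper's parametric $d^{•ω}$ buys algebra-free coherence laws (Lemma~\ref{lem:coh:hbulletomegamu}) stated once and reused for diagram~\eqref{eq:d:delta}, for Theorem~\ref{thm:models}, and for the benign-equation results. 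The price of eager evaluation surfaces exactly where you locate the crux: since your $β$ bakes the algebra into the recursion, verifying that $μ^S_X$ is a $Γ_SΔ$-morphism compares two \emph{different} instantiations of your construction ($β_{SX}$ on $S(SX)$ versus $β_X$ on $SX$), and already the base case of that induction (the variables summand $SX$ of the chain presenting $S(SX)$) requires $β_X ∘ Γ(SX, μ^S_{SX}) = β_X ∘ Γ(SX, Sμ^S_X)$, which is not formal. You need the auxiliary lemma that $β_X$ depends on its second argument only through $μ^S_X$, i.e., factors as $g ∘ Γ(SX, μ^S_X)$ for some $g∶ Γ(SX,SX) → SX$; this does hold for your construction — every use of the second argument passes through a $μ^S$-collapse, and the recursive summand works out because $μ^S$ is a $T$-algebra morphism for the lifted structures — but it must be stated and proved by a parallel induction, and it is the eager-evaluation counterpart of the second-argument coherence laws the paper proves for $d^{•ω}$ in Lemma~\ref{lem:coh:hbulletomegamu}. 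With that lemma added, your multiplication-axiom induction reduces, as you say, to naturality of $d$ and $δ$ and the monad laws — the same substantial chase the paper grinds through in Proposition~\ref{prop:mualg}.
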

\begin{remark}
  This result does not directly entail~\eqref{eq:d:delta}, which will
  follow from the construction of the incremental lifting.
\end{remark}

% Before sketching the proof, it is useful for us to observe that any
% object which is both a $T$- and a $Γ_SΔ$-algebra is almost an algebra
% for the bifunctor $Γ^•_S$.
% \begin{definition}
% Let $Γ^•_{STY}(X) ≔ Γ_{STY}(X) + X + ST(Y)$.
% \AL{notation}
% \end{definition}
% In fact, any such object is a $(Γ^•_{ST}Δ,S)$-dialgebra:

\begin{proof}[Proof sketch, see~§\ref{app:thm-incremental} for a complete proof]

  We first define from $d$ a natural transformation
  $$d^{•ω}_{X,Y}∶ Γ_{SY}(S(X)) → ST (Γ_{STY}(X) + X + Y)\rlap{,}$$
  % where $Γ^•(X,Y) ≔ Γ(X,Y) + X + Y$,
  notably using the fact that
  $Γ_{STY}$, being a cocontinuous endofunctor on a locally finitely
  presentable category, admits a right adjoint.

  We then want to construct a lifting of $S$ to some functor
  $(T⊕T')\Alg → T'\Alg$. But we have $T'\Alg ≅ Γ_S\alg$, so we reduce
  to constructing a functor $(T⊕T')\Alg → Γ_S\Alg$ -- still behaving
  like $S$ on underlying objects.
  
  For any $T$-algebra $𝐛∶ TX → X$ equipped with
  $Γ_SΔ$-algebra structure $𝐜∶ Γ_{SX}X → X$, we 
  construct the desired $Γ_SΔ$-algebra
  structure on $S(X)$ as
  $$Γ_{SSX}SX \xto{Γ_{μ^S_X}SX}
  Γ_{SX} SX \xto{d^{•ω}_{X,X}}
  ST (Γ_{STX}(X) + X + X)
  \xto{ST (𝐛⊳𝐜)}
  ST X
  \xto{S𝐛} SX\rlap{,}$$
  where $(𝐛⊳𝐜)∶ Γ_{STX}X + X + X → X$ is defined as follows.
\begin{definition}\label{def:triangleright}
  For any $T$-algebra $𝐛∶ TX → X$ equipped with $Γ_SΔ$-algebra
  structure $𝐜∶ Γ_{SX}X → X$, let $𝐛⊳𝐜$ denote the following composite.
  $$Γ_{STX}(X)+X+X \xto{Γ_{S𝐛}X+[X,X]} Γ_{SX}X+X \xto{[𝐜,X]} X.$$
\end{definition}
  
We then show that $μ^S$ and $η^S$ are algebra morphisms, as
desired.
\end{proof}

\subsection{Augmented algebras}\label{ss:models}
Let us now present the announced characterisation of the category of
algebras of the composite monad $S(T ⊕ T')$ induced by an incremental
structural law, which we call augmented algebras.

\begin{definition}
  \label{d:alt-models}
  Consider any incremental structural law
  $$d_{X,Y}∶ Γ_Y(Σ(X)) → ST (Γ_{STY}(X) + X + Y)$$ over $δ∶ TS → ST$,
  with $S = Σ^*$.
  \begin{itemize}
  \item A \alert{$(δ,d)$-algebra} is an
  object $X$ equipped with morphisms
  \begin{mathpar}
    𝐚∶ SX → X \and 𝐛∶ TX → X \and 𝐜∶ Γ(X,X)  → X\rlap{,}
  \end{mathpar}
  the first two of which are monad algebra structures, making the
  following diagrams commute.
  \begin{center}
    \Diag{%
      \justify{m-1-1}{d1}{m-2-3} %
      }{%
      TSX \& \& STX \\
      TX \& \& SX \\
      \& X
    }{%
      (m-1-1) edge[labela={δ_X}] (m-1-3) %
       edge[labell={T𝐚}] (m-2-1) %
       (m-1-3) edge[labelr={S𝐛}] (m-2-3) %
       (m-2-1) edge[labelbl={𝐛}] (m-3-2) %
       (m-2-3) edge[labelbr={𝐚}] (m-3-2) %
     }
     
    \Diag{%
      \justify{m-1-1}{d2}{m-3-3} %
      }{%
       Γ(ΣX,X) \& \& ST (Γ(X,STX)+X+X) \\
       Γ(SX,X) \& \& ST(Γ(X,X)+X) \\
       Γ(X,X) \& \& ST(X) \\
       \& X
     }{%
       	 (m-1-1) edge[labela={d_{X,X}}] (m-1-3) %
         (m-1-1) edge[labell={Γ(η_{Σ,X},X)}] (m-2-1) %
         (m-1-3) edge[labelr={ST (Γ(X,𝐚∘S𝐛)+[X,X])}] (m-2-3) %
         (m-2-1) edge[labell={Γ(𝐚,X)}] (m-3-1) %
         (m-2-3) edge[labelr={ST[𝐜,X]}] (m-3-3) %
         (m-3-1) edge[labelbl={𝐜}] (m-4-2) %
         (m-3-3) edge[labelbr={𝐚∘S𝐛}] (m-4-2) %
       }
    % \diag{%
    %   Γ_{SSX}SX \& \&  \& Γ_{SX} X \\
    %   Γ_{SX}SX \\
    %   ST Γ^•_{STX} X \& STSX \& SSTX \& SX
    % }{%
    %   (m-1-1) edge[labela={Γ_{S𝐚}𝐚}] (m-1-4) %
    %   edge[labelr={Γ_{μ^S_X}SX}] (m-2-1) %
    %   (m-2-1) edge[labelr={d'_{X,X}}] (m-3-1) %
    %   (m-3-1) edge[labelb={ST (𝐛⊳𝐜)}] (m-3-2) %
    %   (m-3-2) edge[labelb={Sδ_X}] (m-3-3) %
    %   (m-3-3) edge[labelb={μ^S 𝐛}] (m-3-4) %
    %   (m-1-4) edge[labell={𝐜}] (m-3-4) %
    % }
  \end{center}
  % where $d'$ and $𝐛⊳𝐜$ are as in the proof sketch of
  % Theorem~\ref{thm:incremental}.
\item   A \alert{$(δ,d)$-algebra morphism} is a morphism between underlying objects which is
  an $S$-, $T$-, and $ΓΔ$-algebra morphism.
\item 
  We let $(δ,d)\Alg$ denote the category of $(δ,d)$-algebras and morphisms between them.
\end{itemize}
\end{definition}

\begin{theorem}\label{thm:models}
  Consider any incremental structural law
  $$d_{X,Y}∶ Γ_Y(Σ(X)) → ST (Γ_{STY}(X) + X + Y)$$ over $δ∶ TS → ST$,
  with $S = Σ^*$.
  Letting $T' = Γ_S^\fleur$ as before, there is an isomorphism
  $S (T⊕T')\Alg ≅ (δ,d)\Alg$ of categories over the base category $𝐂$.

  More precisely, for any $S∘(T⊕T')$-algebra $X$, the canonical morphisms from
  $SX$, $TX$, and $Γ(X,X)$ into $S ((T⊕T')(X))$ equip $X$ with $(δ,d)$-algebra
  structure.  This assignment extends to a functor
  $S (T⊕T')\Alg → (δ,d)\Alg$ over $𝐂$, which is an isomorphism of
  categories.
\end{theorem}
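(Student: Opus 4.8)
The plan is to reduce the statement to the lifting machinery of §\ref{ss:incrlaws} and then make the resulting data concrete. By Lemma~\ref{lem:dist:lifting} the distributive law $d_{/δ}$ corresponds to the $(T⊕T')$-lifting $S^{δ,T'}$ of $S$ (Corollary~\ref{cor:incr:liftings}), and, by the standard theory of distributive laws, the algebras for the composite monad $S(T⊕T')$ are the objects $X$ equipped with an $S$-algebra structure $𝐚∶SX → X$ and a $(T⊕T')$-algebra structure such that $𝐚$ is a $(T⊕T')$-algebra morphism from the lifted structure on $SX$ to the given one. By Lemma~\ref{lem:monad:coprod:alg} the $(T⊕T')$-structure splits into a $T$-algebra $𝐛∶TX → X$ and a $T'$-algebra; and since $T' = Γ_S^\fleur$, the latter is equivalently (via $F^*\Alg ≅ F\alg$) a single map $g∶Γ_{SX}(X) → X$. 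Under this identification the three canonical inclusions of the theorem statement recover $𝐚$, $𝐛$, and the composite $𝐜 ≔ g ∘ Γ(X,η^S_X)∶Γ(X,X) → X$.

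Next I would split the morphism condition on $𝐚$ into its $T$- and $T'$-parts. Since the lifting $S^{δ,T'}$ leaves the $T$-factor governed by $δ$ (it restricts to $S^δ$ along $(T⊕T')\Alg → T\Alg$), the $T$-part is exactly the distributive-law square for $δ$, which is diagram~(d1) of Definition~\ref{d:alt-models}; this is pure bookkeeping.

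The crux is the $T'$-part: $𝐚∶SX → X$ must be a morphism of $Γ_SΔ$-algebras from the structure on $SX$ built in the proof of Theorem~\ref{thm:incremental}, namely
$$Γ_{SSX}SX \xto{Γ_{μ^S_X}SX} Γ_{SX}SX \xto{d^{•ω}_{X,X}} ST(Γ_{STX}X+X+X) \xto{ST(𝐛⊳g)} STX \xto{S𝐛} SX$$
(with $𝐛 ⊳ g$ as in Definition~\ref{def:triangleright}), to the structure $g$ on $X$. I would chase this square, feeding in the explicit description~\eqref{eq:d:delta} of $d_{/δ}$ on generators and the defining relation of $d^{•ω}$ as the structural extension of $d$; by the usual reduction of a distributive-law compatibility to its value on the generating patterns $Γ(ΣX,X)$, the square becomes precisely diagram~(d2). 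The passage $g ↔ 𝐜$ is a genuine bijection in this context: $𝐜 ≔ g ∘ Γ(X,η^S_X)$, and conversely $g = 𝐜 ∘ Γ(X,𝐚)$, the round trip $g ↦ 𝐜 ↦ g$ returning $g$ because the morphism condition forces $g = g ∘ Γ(X,η^S_X ∘ 𝐚)$, and the round trip $𝐜 ↦ g ↦ 𝐜$ returning $𝐜$ by the unit law $𝐚 ∘ η^S_X = \id$.

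Finally I would check functoriality and the action on morphisms: a morphism of $S(T⊕T')$-algebras is simultaneously an $S$-, $T$-, and $T'$-algebra morphism, and under $g ↔ 𝐜$ this is the same as being an $S$-, $T$-, and $ΓΔ$-algebra morphism, i.e.\ a $(δ,d)$-algebra morphism; as every step preserves underlying objects and morphisms, the functor is an isomorphism over $𝐂$. I expect the main obstacle to be the third paragraph — matching the lifted $Γ_SΔ$-structure on $SX$ with diagram~(d2) through~\eqref{eq:d:delta}, the change of presentation $g ↔ 𝐜$, and the reduction to $Σ$-generators — since this is the only step requiring genuine diagram-chasing rather than formal category theory.
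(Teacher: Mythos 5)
Your plan follows the paper's proof essentially step for step: Beck's characterisation of algebras for a composite monad together with Lemma~\ref{lem:monad:coprod:alg} gives exactly the paper's intermediate presentation (Lemma~\ref{lem:TTialg}, with $𝐜∶ Γ(X,SX) → X$ satisfying the primed variant~(d2$'$) of~(d2)), and your bijection $g ↔ 𝐜$ via $Γ(X,η^S_X)$ and $Γ(X,𝐚)$, with the round trip forced by the morphism condition, is precisely the paper's $\hat{𝐜}/\check{𝐜}$ correspondence (Lemmas~\ref{lem:bij-gamma-alg} and~\ref{lem:alg-ST-coh-beta}). You also correctly locate the genuine work in the transposition-and-generators diagram chases, which the paper discharges through the coherence laws of $d^{•ω}$ (Lemma~\ref{lem:coh:hbulletomegamu}) and Lemmas~\ref{lem:pentagon-moyen} and~\ref{lem:p12}.
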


\subsection{Independent extensions}\label{ss:indep}
Let us end the theoretical part of this section by briefly showing how
to combine independent incremental structural laws.

\begin{proposition}\label{prop:independent}
  For any set $I$,
    any $I$-indexed family
    $$dⁱ_{X,Y}∶ Γⁱ_Y(Σ(X)) → ST(Γⁱ_{ST(Y)}(X) + X + Y)$$
    of incremental structural laws over $δ∶ TS → ST$, with $i ∈ I$ and
    $S ≔ Σ^*$, induce a distributive law
  $$(T⊕ ⨁ᵢTᵢ)S → S(T⊕ ⨁ᵢTᵢ)\rlap{,}$$
  where $Tᵢ = (Γⁱ_S)^\fleur$ and $T⊕⨁ᵢTᵢ$ is constant-free.  We
  thus get an admissible morphism
  $$S → S (T⊕⨁ᵢTᵢ).$$
\end{proposition}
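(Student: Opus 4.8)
The plan is to reduce the proposition to the single-law result, Theorem~\ref{thm:incremental}, by combining the resulting incremental liftings componentwise. I work in the locally finitely presentable setting, where $\mathbf{C}$ is complete and the free monads $T_i = (\Gamma^i_S)^\fleur$ are finitary, so that the relevant monad coproduct $T \oplus \bigoplus_i T_i$ exists (as required, analogously to the hypotheses of Lemma~\ref{lem:monad:coprod:alg} and Corollary~\ref{cor:incrlift:distlaw}). By Theorem~\ref{thm:incremental}, each $d^i$ induces an incremental lifting, say $\ell^i$, of $S$ to $T_i\Alg$ along the common base law $\delta\colon TS \to ST$. Concretely, unfolding its construction, $\ell^i$ assigns to each object $X$ equipped with a $T$-algebra structure $\mathbf{b}$ and a $T_i$-algebra structure $\mathbf{c}_i$ a $T_i$-algebra structure on $SX$, functorially, and in such a way that $\mu^S_X$ and $\eta^S_X$ are $T_i$-algebra morphisms. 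Crucially, this construction uses only the datum $(\mathbf{b},\mathbf{c}_i)$.

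The next step is to assemble the $\ell^i$ into a single incremental lifting $\ell$ of $S$ to $(\bigoplus_i T_i)\Alg$ along $\delta$. For this I first generalise Lemma~\ref{lem:monad:coprod:alg} from binary to arbitrary set-indexed coproducts: the category $(\bigoplus_i T_i)\Alg$ is the wide pullback of the forgetful functors $T_i\Alg \to \mathbf{C}$, so that a $(\bigoplus_i T_i)$-algebra structure on $X$ is precisely a family $(\mathbf{c}_i)_{i \in I}$ of $T_i$-algebra structures, and a morphism between such algebras is a $(\bigoplus_i T_i)$-algebra morphism iff it is a $T_i$-algebra morphism for every $i$. Given a $T$-algebra structure $\mathbf{b}$ and a family $(\mathbf{c}_i)_i$ on $X$, I then take $\ell(\mathbf{b},(\mathbf{c}_i)_i)$ to be the family $\bigl(\ell^i(\mathbf{b},\mathbf{c}_i)\bigr)_{i \in I}$ on $SX$. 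Functoriality of $\ell$ follows from that of each $\ell^i$, and $\mu^S$, $\eta^S$ are $(\bigoplus_i T_i)$-algebra morphisms precisely because they are $T_i$-algebra morphisms for each $i$, by the generalised characterisation just recalled.

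It then remains to invoke Corollary~\ref{cor:incrlift:distlaw} with $T' \mathrel{=} \bigoplus_i T_i$: the incremental lifting $\ell$ yields the desired distributive law $(T \oplus \bigoplus_i T_i)S \to S(T \oplus \bigoplus_i T_i)$. For constant-freeness, each $T_i$ is constant-free by Lemma~\ref{lem:constantless}, since $\Gamma^i_S$ is finitary and cocontinuous in its first argument; the generalisation to arbitrary coproducts of the remark following Lemma~\ref{lem:monad:coprod:alg} (a wide pullback of objects initial in each factor is initial, whence $(\bigoplus_i T_i)\varnothing \cong \varnothing$) then shows $\bigoplus_i T_i$, and hence $T \oplus \bigoplus_i T_i$, to be constant-free. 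Admissibility of $S \to S(T \oplus \bigoplus_i T_i)$ follows from the admissibility clause of Corollary~\ref{cor:incrlift:distlaw}.

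The main obstacle here is conceptual rather than computational, and it is exactly what \emph{independence} buys us: because every $\ell^i$ is a lifting over the \emph{fixed} base law $\delta\colon TS \to ST$, it depends only on the $T$-structure and the single $i$-th $T_i$-structure, so the components do not interact and assemble without any coherence condition to verify. By contrast, had the later laws been required to live over the iterated distributive laws produced by the earlier ones, as in the genuinely dependent setting of Theorem~\ref{main:thm:incremental}, this componentwise assembly would break down. Consequently the only real technical work is (a) extending the coproduct-of-monads characterisation, Lemma~\ref{lem:monad:coprod:alg}, to possibly infinite families, which needs the requisite wide pullbacks and monad coproducts to exist in the ambient locally finitely presentable category, and (b) the routine check that $\ell$ satisfies the incremental-lifting axioms, which reduces to the axioms for the $\ell^i$ factorwise.
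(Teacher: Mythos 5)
Your proof is correct, but it takes a genuinely different route from the paper's. The paper's proof is a one-step reduction to the single-law theorem: set $\Gamma_Y(X) = \sum_i \Gamma^i_Y(X)$, observe that $\bigoplus_i T_i \cong \Gamma_S^\fleur$ (the free-monad construction, being a left adjoint, takes the coproduct of the finitary endofunctors $\Gamma^i_S\Delta$ to the coproduct of the monads $T_i$), and apply Theorem~\ref{main:thm:incremental} \emph{once}, to the cotupling of the composites $\Gamma^i_Y(\Sigma X) \xrightarrow{d^i_{X,Y}} ST(\Gamma^i_{STY}X + X + Y) \to ST(\Gamma_{STY}X + X + Y)$ obtained by postcomposing with coproduct injections; since cocontinuity in the first argument and finitarity in the second are stable under coproducts, the theorem applies directly, and constant-freeness and admissibility come bundled with it. You instead apply Theorem~\ref{thm:incremental} once per index and glue the resulting incremental liftings componentwise, invoking a wide-coproduct generalisation of Lemma~\ref{lem:monad:coprod:alg} — a legitimate move, and one that is in fact available for free, since \S\ref{s:monad:colims} proves the algebra characterisation for \emph{arbitrary} colimits of monads via Kelly's Proposition~26.3, of which the binary lemma is a special case. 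Your route trades the paper's free-monads-preserve-coproducts observation for a direct assembly of liftings that makes the independence intuition explicit: the $i$-th component consumes only $(\mathbf{b},\mathbf{c}_i)$, so no cross-component coherence condition arises. One step you pass over quickly deserves to be spelled out: for $\mu^S_X$ to be a $(\bigoplus_i T_i)$-algebra morphism, the iterated structure on $SSX$ produced by your glued lifting must agree componentwise with the one each $\ell^i$ produces on its own; this holds precisely because the $T$-algebra structure on $SX$ fed into the second application is the $\delta$-lift, uniform in $i$, and the $i$-th component of the glued assignment ignores $\mathbf{c}_j$ for $j \neq i$. Both routes establish the proposition; the paper's is shorter and additionally yields the coherence square~\eqref{eq:d:delta} for the cotupled law, while yours would adapt verbatim to a setting where the component liftings are handed to you abstractly rather than generated by structural laws.
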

\begin{proof}
  Taking $Γ_Y(X) = ∑ᵢ Γⁱ_Y(X)$ and observing that
  $⨁ᵢTᵢ = Γ_S^\fleur$, we obtain the desired result directly by
  applying Theorem~\ref{main:thm:incremental} to the cotupling of all
  \begin{center}
    \hfill
    $Γⁱ_Y(Σ(X))
    \xto{dⁱ_{X,Y}}
    ST(Γⁱ_{ST(Y)}(X) + X + Y)
    \xto{ST((inᵢ)_{X,ST(Y)} + X + Y)}
     ST(Γ_{ST(Y)}(X)+ X + Y)$ \qedhere
\end{center}
%   \begin{center}
%     $Γⁱ_Y(Σ(X))$

%     $↓$\rlap{$dⁱ_{X,Y}$}

%     $ST(Γⁱ_{ST(Y)}(X) + X + Y)$

%     $↓$\rlap{$ST((inᵢ)_{X,ST(Y)} + X + Y)$}
    
%     \hfill $ST(Γ_{ST(Y)}(X)+Y+X)$ \qedhere
% \end{center}
  \end{proof}

\subsection{Application: differential $λ$-calculus}\label{ss:diff}
In this section, we sketch an imperfect treatment of differential
$λ$-calculus, to illustrate incremental structural laws.
\subsubsection{Standard definition}
Let us first introduce the calculus in the usual, informal way.  The
syntax is:
  $$
  \begin{array}{rcl}
    \llap{Simple terms $∋$ } e,f & ::= & x ｜ e\ M  ｜ λx.e ｜ 𝐃e·f \\
    \llap{Multiterms $∋$ } M,N & ::= & 0 ｜ e+M\rlap{,}
  \end{array}$$
  where simple and multiterms are considered equivalent modulo the following equations.
  \begin{equation}
    \begin{array}{rcl}
    𝐃 (𝐃 e · f) · g &≡& 𝐃 (𝐃 e · g) · f \\
    e+f+M & ≡ & f+e+M    
    \end{array}
\label{eq:diff}
\end{equation}
  Structural operations are then defined as follows:
  \begin{enumeratewide}
  \item First of all, operations are extended to multiterms
    by induction, as in Figure~\ref{fig:diff}.A.
    \begin{figure}[thp]
      \centering

      $\begin{array}[t]{rcl}
         \multicolumn{3}{c}{\text{\underline{A. Extended operations}}} \\[.5em] 
        0 + N & = & N \\
        (e+M) + N & = & e + (M+N) \\ \hline
        0\ N & = & 0 \\
        (e+M)\ N & = & (e\ N) + M\ N \\ \hline
        λx.0 & = & 0 \\
        λx.(e+M) & = & λx.e + λx.M \\ \hline
        𝐃(e)·0 & = & 0 \\
        𝐃(e)·(f+N) & = & 𝐃(e)·f + 𝐃(e)·N \\ \hline
        𝐃(0)·N & = & 0 \\
        𝐃(e+M)·N & = & 𝐃(e)·N + 𝐃(M)·N.
      \end{array}$
      $\hfil
    \begin{array}[t]{rcl}
         \multicolumn{3}{c}{\text{\underline{B. Capture-avoiding substitution}}} \\[.5em] 
      x[x ↦ M] & = & M \\
      y[x ↦ M] & = & y+0 \mbox{\ \ (when $x≠y$)}  \\
      (λy.e)[x ↦ M] & = & λy.(e[x ↦ M]) \\
      \multicolumn{3}{r}{\mbox{\ \ ($y$ fresh for $x$ and $M$)}} \\
      (e\ N)[x ↦ M] & = & e[x ↦ M]\ N[x ↦ M] \\
      (𝐃e·f)[x ↦ M] & = & 𝐃e[x ↦ M]·f[x ↦ M] \\
      0[x ↦ M] & = & 0 \\
      (e+N)[x ↦ M] & = & e[x ↦ M] + N[x ↦ M].
    \end{array}$\\[.5em]

    $\begin{array}{rcl}
       \multicolumn{3}{c}{\text{\underline{C. Partial differentiation}}} \\[.5em] 
       \frac{∂x}{∂x}·M & = & M \Bstrutbas \\
       \frac{∂y}{∂x}·M & = & 0 \mbox{\ \ (when $x≠y$)} \Bstrutbas \\
       \frac{∂ (e\ N)}{∂x}·M & = &
                                   \left( \frac{∂e}{∂x}·M \right)\ N
                                   + \left(𝐃e· \left( \frac{∂N}{∂x}·M \right) \right)\ N \Bstrutbas \\
       \frac{∂ λy.e}{∂x}·M & = & λy.\left(\frac{∂e}{∂x} ·M\right)
                                 \mbox{\ \ ($y$ fresh for $x$ and $M$)}\Bstrutbas \\
       \frac{∂𝐃e·f}{∂x}·M & = &
                                𝐃(\frac{∂e}{∂x}·M)·f + 𝐃e· (\frac{∂f}{∂x}·M) \Bstrutbas \\ \hline
       \frac{∂0}{∂x}·M & = & 0 \Bstrutbas\Tstrut \\
       \frac{∂(e+N)}{∂x}·M & = & \frac{∂e}{∂x}·M + \frac{∂N}{∂x}·M.\Tstrut
     \end{array}$
     \caption{Auxiliary functions for differential $λ$-calculus}
    \label{fig:diff}
    \end{figure}
      The first two lines extend $e+M$ to take a
      multiterm as its first argument.
      The next two extend
      $e\ M$ similarly.
      The next two extend $λ$-abstraction.
      Matters then get slightly subtle:
      $𝐃{-}·{-}$ is first extended to take a multiterm
      as its second argument, and then the obtained
      extension is further extended to take a multiterm
      as its first argument.
    \item Then, relying on this, capture-avoiding substitution of a
      variable by a multiterm in a simple or multiterm is defined by
      induction in Figure~\ref{fig:diff}.B, the result being a multiterm.
  \item Finally, partial differentiation is also defined inductively,
    relying on extended operations, in Figure~\ref{fig:diff}.C.
  \end{enumeratewide}

  \subsubsection{Using incremental structural laws}
  Let us now model this syntax and auxiliary operations using
  incremental structural laws, but ignoring equations for this paper.

  We have two syntactic categories, simple terms and multiterms, but
  only one sort for variables, to be replaced with multiterms.  As
  in~§\ref{ss:evconts}, we model this by working with the category
  $[𝐒𝐞𝐭,𝐒𝐞𝐭²]_f$ of finitary functors $F∶ 𝐒𝐞𝐭 → 𝐒𝐞𝐭²$, or equivalently
  functors $𝔽 → 𝐒𝐞𝐭²$. For a change, this time, we emphasise the
  presentation as functors $𝐒𝐞𝐭 → 𝐒𝐞𝐭²$. Furthermore, we write $𝐬$ and
  $𝐦$ for the elements of $2$, and think of $F(X)_𝐬$ as a set of
  simple terms with free variables in $X$, and of $F(X)_𝐦$ as a set of
  multiterms with free variables in $X$. We sometimes write such
  functors as pairs $(F_𝐬,F_𝐦)$ of set-functors.

  The basic endofunctor is then defined as follows.
  $$
  \begin{array}{rcc@{}c@{}c@{}c@{}c@{}c@{}c}
    Σ(F)(X)_𝐬 & = & X  & \,+\, & F(X)_𝐬 × F(X)_𝐦 & \,+\, &
                     F(X+1)_𝐬 & \,+\, & F(X)_𝐬² \\
    e,f   & ::= & x &｜&  e\ M &｜& λ(e) &｜& 𝐃e·f \\
    Σ(F)(X)_𝐦 & = & 1  &+& F(X)_𝐬 × F(X)_𝐦 \\
    M   & ::= & 0 &｜&  e + M
    
  \end{array}$$

  \begin{notation}
    Writing $𝐲_𝐬 = (1,∅)$ and $𝐲_𝐦 = (∅,1)$ (viewing
    $𝐒𝐞𝐭²$ as the category of presheaves over the discrete category
    $\ens{𝐬,𝐦}$), we equivalently have
  $$\begin{array}{r@{\,}c@{\,}l}
    Σ(F)(X)  & = &  (X   +  F(X)_𝐬 × F(X)_𝐦  + 
    F(X+1)_𝐬  +  F(X)_𝐬²) · 𝐲_𝐬 \\
    && {} + ( 1  + F(X)_𝐬 × F(X)_𝐦) · 𝐲_𝐦.    
  \end{array}$$
    
  \end{notation}

  Let us now define extended operations.  We start by defining the
  first four layers of Figure~\ref{fig:diff}.A independently, using
  Proposition~\ref{prop:independent} (with $T=\id$). The relevant
  arities are
  \begin{mathpar}
    Γ^{\plus}(F,G)(X) = (F(X)_𝐦 × G(X)_𝐦) · 𝐲_𝐦
    \and
    Γ^{\app}(F,G)(X) = (F(X)_𝐦 × G(X)_𝐦) · 𝐲_𝐦
    \and
    Γ^{\abs}(F,G)(X) = F(X+1)_𝐦 · 𝐲_𝐦
    \and 
    Γ^{\lapp₀}(F,G)(X) = (G(X)_𝐬 × F(X)_𝐦) · 𝐲_𝐦\rlap{,}
  \end{mathpar}
  and the equations in the table may be read as defining the desired
  simple structural laws
  $$Γⁱ(Σ(F),G) → S (Γⁱ(F,S(G)) + F + G)\rlap{,}$$
  for $i ∈ \ens{\plus,\app,\abs,\lapp₀}$.  We get a distributive law
  $$δ₄∶ T₄S → ST₄\rlap{,}$$ where $T₄ = (∑ᵢ Γⁱ_S)^\fleur$.
  \begin{remark}
    The operation in the fourth layer is defined by induction on the
    second argument, so the main (inductive) argument for $Γ^{\lapp₀}$
    goes to the right of the product.
  \end{remark}
  
  For the last layer in Figure~\ref{fig:diff}.A, we define an
  incremental structural law over $δ₄$, with arity
  $$Γ^{\lapp} (F,G) (X) = (F(X)_𝐦 × G(X)_𝐦) · 𝐲_𝐦.$$
  Again the equations of the last layer may be read as defining this
  incremental law.
  By Theorem~\ref{main:thm:incremental}, we get a distributive law
  $δ₅∶ T₅S → S T₅$, where $T₅ = T₄ ⊕ T_{\lapp}$, with
  $T_{\lapp} = (Γ^{\lapp}_{S})^\fleur$.

  We now define capture-avoiding substitution.  The arity is reminiscent
  of substitution monoidal structures,
  % the tensor product of a skew monoidal structure on $[𝐒𝐞𝐭,𝐒𝐞𝐭²]_f$
  % (whose unit is $I(X) = X · 𝐲_𝐦$),
  % I: 𝐒𝐞𝐭 → 𝐒𝐞𝐭²
  %     X ↦ (∅,X)
  % Lan_I(F)(Xₛ,Xₘ) = ∫^Y [(∅,Y),(Xₛ,Xₘ)] · F(Y)
  %                 = ∫^Y Xₘ^Y · F(Y)
  % c'est pas le bon!
  % neutre? (F⊗I)(X)ₛ = ∅, pas dans le bon sens!
  % (F⊗I)(X)ₘ = F(X)ₛ + F(X)ₘ bof bof, je vire…
  so we write it as a tensor
  product. Because the result of a substitution is a multiterm, we
  readily put $(F⊗G)(X)_𝐬 = ∅$. Then, we define
  $$(F⊗G) (X)_𝐦 = F(G(X)_𝐦)_𝐬 + F (G (X)_𝐦)_𝐦\rlap{,}$$
  reflecting the fact that we substitute multiterms for variables in
  both simple and multiterms.

  The two layers of Figure~\ref{fig:diff}.B may be read as
  defining the two components of the desired incremental
  structural law
  $$Σ(F)⊗G → ST₅( F⊗ST₅G + F + G )\rlap{,}$$
  which yields by Theorem~\ref{main:thm:incremental} a distributive law
  $δ₆∶ T₆S → ST₆$, where $T₆ = T₅ ⊕ T_{\subst}$, with
  $T_{\subst} = ({-}⊗{S-})^\fleur$.

  Finally, we define partial differentiation.
  The arity is again empty at $𝐬$, with
  $$Γ^{\diff}(F,G)(X)_𝐦 =
  F(X+1)_𝐬 × G(X)_𝐦 + F(X+1)_𝐦×G(X)_𝐦.$$ The additional variable in
  the first argument models the distinguished variable $x$ along which
  we differentiate.  Again the equations may be read as defining an
  incremental structural law and we get a distributive law
  $δ₇∶ T₇S → ST₇$, where $T₇ = T₆ ⊕ T_{\diff}$, with
  $T_{\diff} = (Γ^{\diff}_{S})^\fleur$.

  One could be content with this, prove the desired commutation
  lemmas~\cite[§6.1.4]{Iouri}, and then show that everything is
  compatible with Equations~\eqref{eq:diff}, all by hand.  But of
  course, it would be better to derive both results automatically.  In
  the next section, we will explain how to derive commutation lemmas
  (though on a simpler example), leaving compatibility
  with equations for further work.

\subsection{Application: capture-avoiding substitution, De Bruijn style}\label{ss:db}
Most applications of the paper take place in the setting of so-called
presheaf models, i.e., mild generalisations of the category of
finitary functors $𝐒𝐞𝐭 → 𝐒𝐞𝐭$~\cite{fiore:presheaf}.  Our framework,
however, applies just as well in other settings.  To illustrate this,
in this subsection, we transpose the example of capture-avoiding
substitution to the De Bruijn-style setting of~\cite{HHLM}.

To summarise the idea: just as the presheaf-based approach equipes the
nested datatypes representation~\cite{bird_paterson_1999} with initial
algebra semantics, the setting of~\cite{HHLM} does the same for De
Bruijn representation~\cite{DB}.  In the presheaf-based approach,
terms are indexed by sets $n$ of potential free variables, and by
convention the bound variable is always the greatest one, typically
$x_{n+1}$ for $λₙ∶ X(n+1) → X(n)$.  Indexing is thus made necessary by
the binding convention.  In De Bruijn representation, the binding
convention is the opposite: the bound variable is always $0$. This
makes indexing unnecessary, but some operations become less intuitive
(to many, at least~\cite{DBLP:journals/entcs/BerghoferU07}).

In order to specify capture-avoiding substitution in the De Bruijn
setting, we will need several layers. Indeed, the presheaf-based
approach features not only indexing, but also built-in renaming, which
is not the case in the De Bruijn setting. We thus need to define a
first layer for renaming, and then a second one for substitution.

We take $𝐒𝐞𝐭$ as ambient category, and the basic syntax is specified
by the endofunctor
$$Σ(X)  =  ℕ + X² + X.$$
For the first layer, letting $S = Σ^*$ again, the auxiliary bifunctor
is
$$Γ(X,Y)  =  X × ℕ^ℕ\rlap{,}$$
and renaming is specified by the following simple structural law
(omitting $η^S$ for readability):
$$
\begin{array}{rcl}
  d_{X,Y}∶ Σ(X) × ℕ^ℕ & → & S(X × ℕ^ℕ+X+Y)  \\
  (xᵢ,ρ) & ↦ & x_{ρ(i)} \\
  (λ(e),ρ) & ↦ & λ(in₁(e,1+ρ)) \\
  (e\ f, ρ) & ↦ & in₁(e,ρ)\ in₁(f,ρ) \\
\end{array}$$
where
  $1+ρ∶ ℕ → ℕ$ is defined by
  $$\begin{array}{rcl}
      (1+ρ)(0) & = & 0 \\
      (1+ρ)(1+n) & = & 1+ρ(n).
  \end{array}$$
Letting $T = Γ_S^*$, we get a distributive law $δ∶ TS → ST$ by
Theorem~\ref{thm:simple}.
\begin{notation}
  We denote by $e[ρ]$ the renaming operation of $T$.
\end{notation}
We then specify substitution by taking as
auxiliary bifunctor $Θ(X,Y) = X × Y^ℕ$, with incremental structural
law defined by
$$\begin{array}{rcl}
    Σ(X) × Y^ℕ & → & ST (X × (STY)^ℕ + X + Y) \\
    (xᵢ,σ) & ↦ & in₃(σ(i)) \\
    (λ(e),σ) & ↦ & λ(in₁(e,⇑σ)) \\
    (e\ f, σ) & ↦ & in₁(e,σ)\ in₁(f,σ)\rlap{,}
  \end{array}$$
  where $⇑σ∶ ℕ → STY$ is defined by
  $$\begin{array}{rcl}
      ⇑σ∶ ℕ & → & STY \\
      0 & ↦ & x₀ \\
      n+1 & ↦ & σ(n)[↑]\rlap{,}
    \end{array}$$
    letting $↑∶ ℕ → ℕ$ denote the successor map.

\section{Benign equations}\label{s:benign}
In this section, we examine what we call ``benign equations'', i.e.,
equations that are automatically satisfied by the initial model.  The
initial model is thus initial in a potentially smaller category.

We start in~§\ref{ss:benign:ex} by presenting a simple example to
motivate this investigation.  In~§\ref{ss:eqsys}, along with recalling
Fiore and Hur's~[\citeyear{FioreHurEquational}] \alert{equational
  systems} and some basic facts about them, we then introduce a
mathematical definition of a system of benign equations, called a
\alert{benign} equational
system.  % In~§\ref{ss:dles}, we then introduce a format for
% benign equational systems over composite monads obtained from a
% distributive law, called \alert{distributive law equational systems}.
In~§\ref{ss:ses}, we present a notion of signature for benign
equational systems, called \alert{structural equational systems}, which
we apply in~§\ref{ss:assocsubst} to prove associativity of
capture-avoiding substitution.

\subsection{On a simple example}\label{ss:benign:ex}
In the case of addition, as defined by structural recursion
in~§\ref{ss:addition}, an example of benign equation is associativity:
\begin{equation}
(x+y)+z = x+(y+z).\label{eq:assoc}
\end{equation}
This is typically proven by induction on the first variable $x$. We
explain how to derive this result from the results
of~§\ref{ss:addition} and~§\ref{s:incr-struct-laws}, which leads to
structural equational systems.

We start by briefly sketching the idea.  By the construction
of~§\ref{s:incr-struct-laws}, we define a new, auxiliary ternary
operation $\myop(x,y,z)$ by the structurally recursive equations
\begin{eqnarray}
  \label{eq:assocdefi}
  \myop(0,y,z) & = & y+z \\
  \label{eq:assocdefii}  
  \myop(s(x),y,z) & = & s(\myop(x,y,z)).
\end{eqnarray}
By Theorem~\ref{main:thm:incremental}, the syntax, $S∅$, admits
a unique such operation.
But on the other hand, one easily proves that taking $\myop(x,y,z) = x+ (y+z)$ or
$\myop(x,y,z) = (x+y) + z$ yields two such operations. Indeed, we have
  \begin{mathpar}
    s(x) + (y+z) = s(x + (y+z)) \and 0 + (y+z) = y+z
  \end{mathpar}
  and
  \begin{mathpar}
    (s(x)+y) + z = s (x+y) + z = s ((x+y)+z)
    \and
    (0+y)+z = y+z
  \end{mathpar}
  by the defining equations of addition.  By uniqueness, both
  definitions of $\myop$ must thus coincide, which
  proves~\eqref{eq:assoc}.

  More formally, we start from the distributive law $δ∶TS→ST$
  of~§\ref{ss:addition}.  We introduce a bifunctor $Ψ∶ 𝐒𝐞𝐭² → 𝐒𝐞𝐭$
  mapping $(X,Y)$ to $X×Y×Y$, for the arity of the ternary
  operation. Again, $X$ corresponds to the decreasing argument in the
  recursive definition of $\myop$, while $Y$ accounts for other
  arguments.  As in~§\ref{s:incr-struct-laws}, the recursive
  definition is modelled by an incremental structural law
$$
\begin{array}{rcll}
  d_{X,Y}∶ Ψ(ΣX,Y) & → & ST (Ψ(X,STY)+X+Y) \\
  (0,y₁,y₂) & ↦ & in₃(y₁) + in₃(y₂) \\
  (s(x),y₁,y₂) & ↦ & s(in₁(x, y₁, y₂)).
\end{array}
$$
Equation~\eqref{eq:assoc} induces a pair
$$L_X,R_X∶ Ψ(X,X) → STX$$
of natural transformations, respectively mapping any triple
$(x,y,z) ∈ X³$ to $(x + y) + z$ and $x + (y + z)$, and we want to show
that the algebra structure $μ^{ST}_{∅}∶ STST∅ → ST∅$ ($≅ S∅$) of the
initial $ST$-algebra coequalises $L_{ST∅}$ and $R_{ST∅}$, as in
  \begin{center}
    \diag(1,1.7){%
      Ψ(ST∅,ST∅) \& STST∅ \& ST∅.
    }{%
      (m-1-1) edge[bend left=10,labela={L_{ST∅}}] (m-1-2) %
      (m-1-1) edge[bend right=10,labelb={R_{ST∅}}] (m-1-2) %
      (m-1-2) edge[onto,labela={μ^{ST}_{∅}}] (m-1-3) 
    }
  \end{center}
  By Theorem~\ref{thm:models}, models of the incremental structural
  law $d$ are $ST$-algebras equipped with an additional ternary
  operation making the diagram~(d2) commute (with $S ≔ ST$ and $Γ ≔ Ψ$), and
  furthermore, by Theorem~\ref{thm:incremental}, $ST∅$ is an initial
  model.  By initiality, we thus merely need to show that both
  induced composites $Ψ(ST∅,ST∅) → ST∅$ make~(d2) commute.

  In the next subsections, we propose an abstract version of this
  idea, with a simple sufficient condition for it to work.

\subsection{Equations}\label{ss:eqsys}
In this subsection, loosely
following~\cite{FioreHurEquational,HHLlong}, we introduce an abstract
notion of equational system, and define what it means for such an
equational system to be benign.  
\begin{definition}
  Given a finitary monad $T$ on a locally finitely presentable
  category $𝐂$, an \alert{equational system} consists of a finitary
  monad $G$ on $𝐂$, together with two monad morphisms $G → T$.
\end{definition}

\begin{example}
  Taking $𝐂 = 𝐒𝐞𝐭$, we model associativity of a binary operation by
  taking $T$ and $G$ to be the free monads on the endofunctors
  $Σ(X) = X²$ and $Θ(X) = X³$ on sets, respectively.  The monad
  morphisms $L,R∶ G → T$ are induced by universal property of $G=Θ^*$
  from the natural transformations $L⁰,R⁰∶ Θ → T$ defined by
  $$\begin{array}{rcl}
    L⁰_X(x₁,x₂,x₃) & = & x₁ + (x₂ + x₃) \\
    R⁰_X(x₁,x₂,x₃) & = & (x₁ + x₂) + x₃.
    \end{array}$$
    % This applies in particular to the composite monad $ST$ of~§\ref{ss:addition},
    % where we recall that $S = Σ^*$ denotes the monad
    % generated by a constant $0$ and a unary operation $s$, and
    % $T(X) = μA.(A×S(A))$.
\end{example}

\begin{definition}
  The \alert{quotient} $E^⋆$ of a finitary monad $T$ by an equational
  system $E = (G,L,R)$ is the coequaliser
  \begin{center}
    \diag(1,1.7){%
      G \& T \& E^⋆
    }{%
      (m-1-1) edge[bend left=10,labela={L}] (m-1-2) %
      (m-1-1) edge[bend right=10,labelb={R}] (m-1-2) %
      (m-1-2) edge[onto,labela={q}] (m-1-3) 
    }
  \end{center}
  of $L,R$ in $𝐌𝐧𝐝_f(𝐂)$.
\end{definition}
\begin{remark}
  The coequaliser exists because the category of finitary monads on a
  locally finitely presentable category is itself locally finitely
  presentable~\cite{MonadicityMonads}, hence in particular cocomplete.
\end{remark}
\begin{remark}
  \label{rk:coeq-monads}
  The coequaliser is also a coequaliser in the category $𝐌𝐧𝐝(𝐂)$ of
  (not necessarily finitary) monads on $𝐂$, since colimits are
  preserved by the embedding $𝐌𝐧𝐝_f(𝐂) → 𝐌𝐧𝐝(𝐂)$ by~\citet[Proposition
  5.6]{blackwell_phd}.
\end{remark}
\begin{definition}
  An equational system is \alert{benign} when the universal
  coequalising morphism $T → E^⋆$ is admissible.
\end{definition}

Although coequalisers of finitary monads may not be intuitive to all
readers, the monad $E^⋆$ admits the following nice characterisation by
its algebras.
\begin{definition}\label{def:Ealg}
  Given a finitary monad $T$ and an equational system $E= (G,L,R)$ on
  it, a $T$-algebra $a∶ TX → X$ \alert{satisfies} $E$ iff $a$
  coequalises $L_X$ and $R_X$, i.e., $a ∘ L_X = a ∘ R_X$.

  Such $T$-algebras are called \alert{$E$-algebras}, and we denote by
  $E\alg$ the full subcategory of $T\Alg$ spanned by them.
\end{definition}
\begin{remark}
  Equivalently, $E\alg$ is the equaliser in $𝐂𝐀𝐓$ of the induced
  functors $T\Alg → G\Alg$, which \emph{a priori} might differ from
  the equaliser in $𝐌𝐨𝐧𝐚𝐝𝐢𝐜_f/𝐂$, the full subcategory of $𝐌𝐨𝐧𝐚𝐝𝐢𝐜/𝐂$
  spanned by finitary monadic functors.  They in fact coincide, by the
  following Proposition.
\end{remark}

\begin{proposition}\label{prop:monadiceq}
  For any finitary monad $T$ on a locally finitely presentable
  category $𝐂$ and equational system $E= (G,L,R)$ on it, we have an
  isomorphism $E\alg ≅ E^⋆\Alg$ over $𝐂$. Otherwise said, the
  forgetful functor $E\alg → 𝐂$ is finitary and monadic, and the
  associated monad is isomorphic to $E^⋆$.
\end{proposition}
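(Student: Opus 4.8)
The plan is to exhibit $E\alg$ as the category of algebras for the quotient monad $E^⋆$, using the universal property of the coequaliser $q∶ T → E^⋆$ together with the semantics equivalence $𝐬𝐞𝐦$ recorded earlier in the excerpt. The key observation is that $𝐬𝐞𝐦$ turns the coequaliser of monads $G ⇉ T → E^⋆$ in $𝐌𝐧𝐝_f(𝐂)$ into an equaliser of monadic functors over $𝐂$. Concretely, I would argue as follows.

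First I would note that, by Remark~\ref{rk:coeq-monads}, the morphism $q∶ T → E^⋆$ is a coequaliser of $L,R$ not only in $𝐌𝐧𝐝_f(𝐂)$ but in all of $𝐌𝐧𝐝(𝐂)$. Since $𝐬𝐞𝐦∶ \op{𝐌𝐧𝐝(𝐂)} → 𝐌𝐨𝐧𝐚𝐝𝐢𝐜/𝐂$ is an equivalence of categories, it sends this colimit in $\op{𝐌𝐧𝐝(𝐂)}$ — i.e.\ a limit in $𝐌𝐧𝐝(𝐂)$ viewed contravariantly — to the corresponding limit in $𝐌𝐨𝐧𝐚𝐝𝐢𝐜/𝐂$. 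Thus $𝐬𝐞𝐦(q)∶ E^⋆\Alg → T\Alg$ is the equaliser, in $𝐌𝐨𝐧𝐚𝐝𝐢𝐜/𝐂$, of the two functors $𝐬𝐞𝐦(L),𝐬𝐞𝐦(R)∶ T\Alg ⇉ G\Alg$. Being an equaliser in the slice category $𝐌𝐨𝐧𝐚𝐝𝐢𝐜/𝐂$, the functor $E^⋆\Alg → T\Alg$ is full, faithful, and injective on objects, and moreover exhibits $E^⋆\Alg$ over $𝐂$ as a \emph{monadic} forgetful functor with associated monad $E^⋆$.

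Next I would identify this equaliser explicitly with $E\alg$. A $T$-algebra $a∶ TX → X$ lies in the image of $𝐬𝐞𝐦(q)$ precisely when $𝐬𝐞𝐦(L)$ and $𝐬𝐞𝐦(R)$ agree on it, i.e.\ when the two $G$-algebra structures $a ∘ L_X$ and $a ∘ R_X$ on $X$ coincide. This is exactly the condition $a ∘ L_X = a ∘ R_X$ defining an $E$-algebra in Definition~\ref{def:Ealg}. Hence the objects of the equaliser are precisely the $E$-algebras, and the morphisms are the $T$-algebra morphisms between them, which is the definition of $E\alg$ as a full subcategory of $T\Alg$. Chasing the commuting triangles over $𝐂$ shows the identification is over the base, giving $E\alg ≅ E^⋆\Alg$ over $𝐂$. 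The final sentence of the statement then follows: since $𝐬𝐞𝐦(q)$ is monadic with monad $E^⋆$, the forgetful functor $E\alg → 𝐂$ is monadic and its monad is $E^⋆$; finitariness of $E^⋆$ is inherited from the coequaliser being taken in $𝐌𝐧𝐝_f(𝐂)$.

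The main obstacle I anticipate is justifying that $𝐬𝐞𝐦$ genuinely sends the monad coequaliser to the categorical equaliser of the corresponding monadic functors, and that this equaliser computed in $𝐌𝐨𝐧𝐚𝐝𝐢𝐜/𝐂$ agrees with the naive equaliser of functors computed in $𝐂𝐀𝐓$ (the subtlety flagged in the remark preceding the proposition). Here I would verify that the full subcategory $E\alg ⊆ T\Alg$ cut out by the equation is itself monadic over $𝐂$ — so that it genuinely lives in $𝐌𝐨𝐧𝐚𝐝𝐢𝐜/𝐂$ and not merely in $𝐂𝐀𝐓/𝐂$ — by checking that the inclusion creates the relevant limits and colimits, or equivalently that the equaliser in $𝐌𝐨𝐧𝐚𝐝𝐢𝐜_f/𝐂$ coincides with the equaliser in $𝐂𝐀𝐓$. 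Once this coincidence is established, everything else is a formal transport of (co)limits across the equivalence $𝐬𝐞𝐦$.
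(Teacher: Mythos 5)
There is a genuine gap, and it sits exactly where you anticipated. Your first step is fine and matches the paper: by Remark~\ref{rk:coeq-monads} the coequaliser $q∶ T → E^⋆$ is a coequaliser in all of $𝐌𝐧𝐝(𝐂)$, and transporting across the equivalence $𝐬𝐞𝐦$ shows that $𝐬𝐞𝐦(q)∶ E^⋆\Alg → T\Alg$ is the equaliser of $𝐬𝐞𝐦(L),𝐬𝐞𝐦(R)$ \emph{computed in $𝐌𝐨𝐧𝐚𝐝𝐢𝐜/𝐂$}. But your next step --- ``a $T$-algebra $a∶ TX → X$ lies in the image of $𝐬𝐞𝐦(q)$ precisely when $a ∘ L_X = a ∘ R_X$'' --- does not follow from this: the universal property of an equaliser in the full subcategory $𝐌𝐨𝐧𝐚𝐝𝐢𝐜/𝐂$ is only tested against monadic functors over $𝐂$ and yields no pointwise description of the objects of $E^⋆\Alg$ inside $T\Alg$. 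Nor can one argue it directly: from $a ∘ L_X = a ∘ R_X$ one cannot factor $a$ through $q_X$, because coequalisers of monads are not computed pointwise. The proposition is precisely the assertion that the $𝐌𝐨𝐧𝐚𝐝𝐢𝐜/𝐂$-equaliser coincides with the naive $𝐂𝐀𝐓/𝐂$-equaliser $E\alg$ --- the subtlety flagged in the remark just before it --- so this step silently assumes the conclusion. (A minor additional slip: being an equaliser in $𝐌𝐨𝐧𝐚𝐝𝐢𝐜/𝐂$ makes $𝐬𝐞𝐦(q)$ a monomorphism, hence faithful and injective on objects, but fullness does not follow formally.)

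You do name the obstacle honestly, but the repair you sketch --- prove that the full subcategory $E\alg ⊆ T\Alg$ is itself monadic over $𝐂$ by ``checking that the inclusion creates the relevant limits and colimits'' --- is the entire mathematical content of the proposition and is left undone; establishing monadicity of $E\alg → 𝐂$ directly is a substantive free-construction theorem in its own right (essentially Fiore--Hur's). The paper instead closes the gap with one citation: \citet[Proposition~26.3]{KellyUnified}, restated in~§\ref{s:monad:colims}, which says that over a complete (locally small) base the canonical comparison from the algebras of a \emph{colimit} of monads to the \emph{limit in $𝐂𝐀𝐓/𝐂$} of the algebra categories is an isomorphism. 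Applied to the coequaliser of Remark~\ref{rk:coeq-monads}, this simultaneously delivers the pointwise description of the algebras (hence $E\alg ≅ E^⋆\Alg$ over $𝐂$) and the monadicity claim, with finitariness of $E^⋆$ coming from the coequaliser being taken in $𝐌𝐧𝐝_f(𝐂)$ as you say. To complete your argument, replace the vague creation-of-limits step by Kelly's result (or prove its special case for reflexive pairs/coequalisers); once the $𝐂𝐀𝐓/𝐂$-level computation is available, the rest of your transport argument is correct.
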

\begin{proof}
  By Remark~\ref{rk:coeq-monads}, $E^⋆$ is a coequaliser of 
  $L$ and $R$, as monad morphisms. By~\citet[Proposition 26.3]{KellyUnified}, its category of
  algebras is thus computed as the equaliser in $𝐂𝐀𝐓/𝐂$ of the functors
  $T\Alg → G\Alg ≅ Θ\alg$, as claimed.
  % Indeed, $E\alg$ is the equaliser in $𝐂𝐀𝐓$ of the functors
  % $T\Alg → Θ\alg$ induced by $L$ and $R$, so the result follows from
  % the contravariant equivalence~\cite{Street72} between $𝐌𝐧𝐝_f(𝐂)$ and
  % the full subcategory $𝐌𝐨𝐧𝐚𝐝𝐢𝐜_f/𝐂$ of finitary, monadic functors
  % over $𝐂$.
\end{proof}

We may derive from this the following useful characterisation of
benign equational systems.
\begin{proposition}\label{prop:benign}
  An equational system is benign iff the initial $T$-algebra satisfies
  it, i.e., $μᵀ_{∅} ∘ L_{T∅} = μᵀ_{∅} ∘ R_{T∅}$.
\end{proposition}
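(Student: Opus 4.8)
The plan is to combine the two characterisations already at hand. First I would invoke Proposition~\ref{prop:admissible:street}, which tells us that the coequalising morphism $q∶ T → E^⋆$ is admissible iff $𝐬𝐞𝐦(q)∶ E^⋆\Alg → T\Alg$ preserves the initial object. By Proposition~\ref{prop:monadiceq} there is an isomorphism $E^⋆\Alg ≅ E\alg$ over $𝐂$, and under this isomorphism $𝐬𝐞𝐦(q)$ becomes the full inclusion $i∶ E\alg → T\Alg$: both are the identity on underlying objects, the $T$-algebra structure attached to an $E^⋆$-algebra $a∶ E^⋆X → X$ being $a ∘ q_X$. So the whole statement reduces to showing that $i$ preserves the initial object iff the initial $T$-algebra $(T∅,μᵀ_{∅})$ lies in $E\alg$, i.e.\ satisfies $E$; and by Definition~\ref{def:Ealg} the latter membership unwinds exactly to $μᵀ_{∅} ∘ L_{T∅} = μᵀ_{∅} ∘ R_{T∅}$, which is the asserted condition.

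It then remains to prove this reduced equivalence. For one direction, assume $(T∅,μᵀ_{∅}) ∈ E\alg$. Since $T∅$ is initial in $T\Alg$ and $E\alg$ is \emph{full}, for every $X ∈ E\alg$ the unique $T$-algebra morphism $T∅ → X$ already lies in $E\alg$; hence $T∅$ is initial in $E\alg$ and $i$ carries it to the initial object of $T\Alg$. Conversely, assume $i$ preserves the initial object. The category $E\alg ≅ E^⋆\Alg$ is a category of monad algebras, so it has an initial object $I$, and preservation gives $i(I) ≅ T∅$ in $T\Alg$. Because $E\alg$ is closed under isomorphism in $T\Alg$ — if $a∶ TX → X$ satisfies $E$ and $φ∶ (X,a) → (Y,b)$ is an isomorphism, then naturality of $L$ and $R$ together with $φ$ being a $T$-algebra morphism forces $b ∘ L_Y = b ∘ R_Y$ — it follows from $i(I) ≅ T∅$ that $(T∅,μᵀ_{∅}) ∈ E\alg$, as required.

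The step I expect to be the most delicate is this backward direction: upgrading the abstract statement ``$i$ preserves the initial object'' to the concrete membership $(T∅,μᵀ_{∅}) ∈ E\alg$ genuinely uses both that $E\alg$ possesses an initial object (supplied by Proposition~\ref{prop:monadiceq}) and that it is replete in $T\Alg$; omitting either would leave a gap. Everything else is a routine transport along the isomorphism of Proposition~\ref{prop:monadiceq} and a direct appeal to Proposition~\ref{prop:admissible:street}.
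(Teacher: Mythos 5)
Your proof is correct, and while the forward direction coincides with the paper's, your converse takes a genuinely different route. The paper argues via creation rather than preservation: from admissibility of $q∶ T → E^⋆$, condition~\ref{item:creates} of Proposition~\ref{prop:admissible:street} yields a unique $E^⋆$-algebra structure $e∶ E^⋆T∅ → T∅$ with $e ∘ q_{T∅} = μᵀ_{∅}$, and then the desired equation falls out in one line: since $q$ coequalises $L$ and $R$ as monad morphisms, $μᵀ_{∅} ∘ L_{T∅} = e ∘ q_{T∅} ∘ L_{T∅} = e ∘ q_{T∅} ∘ R_{T∅} = μᵀ_{∅} ∘ R_{T∅}$. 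Your version instead uses condition~\ref{item:preserves} together with the observation that $E\alg$ is replete in $T\Alg$: the initial $E^⋆$-algebra is carried to an initial object of $T\Alg$, hence is isomorphic as a $T$-algebra to $(T∅,μᵀ_{∅})$, which therefore lies in $E\alg$ by your (correctly verified) closure-under-isomorphism argument using naturality of $L$ and $R$. What the paper's route buys is economy — it never needs the initial object of $E^⋆\Alg$, repleteness, or even the identification of $𝐬𝐞𝐦(q)$ with the inclusion $E\alg ↪ T\Alg$, since the triangle in~\ref{item:creates} is stated directly in terms of $q$; what your route buys is that it isolates and makes explicit the repleteness point, and it works uniformly from the abstract preservation statement. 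One mild caveat: your identification of $𝐬𝐞𝐦(q)$ with the full inclusion is slightly more than the literal statement of Proposition~\ref{prop:monadiceq}, which only asserts an isomorphism over $𝐂$; but it does follow from Kelly's limit description used in its proof (the leg of the limit cone at $T$ is precisely $𝐬𝐞𝐦(q)$, and the equaliser of $𝐬𝐞𝐦(L),𝐬𝐞𝐦(R)$ is the full subcategory of Definition~\ref{def:Ealg}), so this is a matter of spelling out a step rather than a gap.
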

\begin{proof}
  If the initial algebra satisfies an equational system $E$, then it
  is \emph{a fortiori} initial in $E\alg$, hence in $E^⋆\Alg$ by
  Proposition~\ref{prop:monadiceq}. The forgetful functor
  $E^⋆\Alg → T\Alg$ thus preserves the initial object, and we conclude
  by Proposition~\ref{prop:admissible:street}.

  Conversely, if an equational system $E$ is benign, then,
  by Proposition~\ref{prop:admissible:street},
  the initial $T$-algebra admits a unique $E^⋆$-algebra structure $e∶ E^⋆T∅ → T∅$ making
  the following diagram commute
  \begin{center}
    \diag{%
      T T ∅ \& \& E^⋆ T∅ \\
      \& T∅\rlap{,}
    }{%
      (m-1-1) edge[labela={q_{T ∅}}] (m-1-3) %
      edge[labelbl={μᵀ_∅}] (m-2-2) %
      (m-1-3) edge[labelbr={e}] (m-2-2) %
    }
  \end{center}
  and $e$ makes $T∅$ into an initial $E^⋆$-algebra.  But $q_{T∅}$
  coequalises $L_{T∅}$ and $R_{T∅}$, hence so does $μᵀ_∅$, as desired.
\end{proof}

Let us conclude this subsection with the following observation on
combining equations.
\begin{definition}
  For any family $(Eᵢ)_{i ∈ I}$ of equational systems on a given
  finitary monad $T$ on a locally finitely presentable category $𝐂$,
  with  $Eᵢ = (Gᵢ,Lᵢ,Rᵢ)$ for all $i ∈ I$,
  let $∑ᵢEᵢ$ denote the equational system
  defined by the cotuplings
  \begin{center}
    $[Lᵢ]_{i ∈ I}∶ ⨁_{i ∈ I} Gᵢ → T$ \qquad and \qquad
    $[Rᵢ]_{i ∈ I}∶ ⨁_{i ∈ I} Gᵢ → T$.
  \end{center}
\end{definition}
\begin{proposition}\label{prop:benign:systems}
  For any family $(Eᵢ)_{i ∈ I}$ of equational systems on a given
  finitary monad $T$ on a locally finitely presentable category $𝐂$,
  $∑ᵢ Eᵢ$ is benign iff each $Eᵢ$ is.
\end{proposition}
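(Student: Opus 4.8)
The plan is to reduce the statement to the algebraic characterisation of benignity provided by Proposition~\ref{prop:benign}, and then to show that a $T$-algebra satisfies $∑_i E_i$ if and only if it satisfies each $E_i$ separately. Writing $G ≔ ⨁_i G_i$, $L ≔ [L_i]_{i∈I}$ and $R ≔ [R_i]_{i∈I}$, so that $∑_i E_i = (G,L,R)$, Proposition~\ref{prop:benign} says that $∑_i E_i$ is benign iff the initial $T$-algebra $μ^T_∅∶ TT∅ → T∅$ satisfies it, and likewise each $E_i$ is benign iff $μ^T_∅$ satisfies $E_i$. It therefore suffices to prove, for an arbitrary $T$-algebra $a∶ TX → X$, that $a$ satisfies $(G,L,R)$ (in the sense of Definition~\ref{def:Ealg}) iff $a$ satisfies every $E_i$; the claimed equivalence then follows by instantiating $a ≔ μ^T_∅$.

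For the forward implication I would simply precompose with the coproduct injections. Let $κ_i ∶ G_i → G$ denote the $i$th coprojection, so that $L∘κ_i = L_i$ and $R∘κ_i = R_i$ as monad morphisms, whence $L_X ∘ (κ_i)_X = (L_i)_X$ and similarly for $R$. If $a∘L_X = a∘R_X$, then composing on the right with $(κ_i)_X$ gives $a∘(L_i)_X = a∘(R_i)_X$ for every $i$, i.e.\ $a$ satisfies each $E_i$. This is the easy half: it uses nothing beyond the universal property of the cotupling.

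The converse is the heart of the argument, and is where I would invoke the characterisation of algebras for a coproduct of monads. Lemma~\ref{lem:monad:coprod:alg}, generalised from a binary coproduct to the $I$-indexed coproduct $G = ⨁_i G_i$ (the argument in §\ref{s:monad:colims} carries over, exhibiting $G\Alg$ as the wide pullback over $𝐂$ of the forgetful functors $G_i\Alg → 𝐂$), tells us that a $G$-algebra structure on $X$ is the same datum as a family of $G_i$-algebra structures, the bijection sending a structure $b∶ GX → X$ to the family of its restrictions $b∘(κ_i)_X∶ G_iX → X$ along the injections. Now the $T$-algebra $a$ together with the monad morphism $L$ induces a $G$-algebra structure $a∘L_X∶ GX → X$, and its restriction along $κ_i$ is exactly $a∘L_X∘(κ_i)_X = a∘(L_i)_X$, the $G_i$-structure induced by $L_i$; the same holds for $R$. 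Hence, if $a∘(L_i)_X = a∘(R_i)_X$ for every $i$, the two $G$-algebra structures $a∘L_X$ and $a∘R_X$ have equal families of restrictions, and so are equal by the bijection — that is, $a$ satisfies $(G,L,R)$.

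The only real obstacle is the generalisation of Lemma~\ref{lem:monad:coprod:alg} to arbitrary small families together with the bookkeeping identifying the induced $G$-structure with the tuple of induced $G_i$-structures; once that coproduct-of-algebras correspondence is in place, the two implications combine with Proposition~\ref{prop:benign} to give the equivalence, and I would close by recording that $∑_i E_i$ is benign iff $μ^T_∅$ satisfies it iff $μ^T_∅$ satisfies every $E_i$ iff every $E_i$ is benign.
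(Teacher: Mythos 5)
Your proof is correct and is essentially the paper's own (one-line) argument spelled out: reduce via Proposition~\ref{prop:benign} to satisfaction by the initial $T$-algebra, then use the universal property of the coproduct $⨁_i G_i$ to show a $T$-algebra satisfies $∑_i E_i$ iff it satisfies each $E_i$. The ``obstacle'' you flag is already taken care of: the appendix (§\ref{s:monad:colims}) states Kelly's Proposition~26.3 for an arbitrary functor $F∶ 𝐀 → 𝐌𝐧𝐝(𝐂)$ into monads on a complete category, so the $I$-indexed generalisation of Lemma~\ref{lem:monad:coprod:alg}, identifying $G$-algebra structures with families of $G_i$-algebra structures via restriction along the coprojections, requires no further work.
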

\begin{proof}
  By Proposition~\ref{prop:benign} and universal property of coproduct.
  % By interchange of colimits, the coequalising morphism
  % $T ↠ (∑ᵢ Eᵢ)^⋆$ is obtained by taking the wide pushout of all
  % morphisms $T ↠ Eᵢ^⋆$. Because the diagonal of a wide pushout of
  % isomorphisms is again an isomorphism, if each $T ↠ Eᵢ^⋆$ is
  % admissible, then so is the diagonal.  Conversely, if
  % $d∶ T ↠ (∑ᵢ Eᵢ)^⋆$ is invertible, then the composite
  % $Eᵢ^⋆ → (∑ᵢ Eᵢ)^⋆ \xto{d^{-1}} T$ provides the desired inverse to
  % each $T ↠ Eᵢ^⋆$.
% quiver proof details:
% file:///home/thirs/github/quiver/src/index.html?q=WzAsNyxbMCwxLCJUIl0sWzIsMCwiRV9rXlxcc3RhciJdLFsyLDIsIkVfal5cXHN0YXIiXSxbNCwxLCIoXFxzdW1faSBFX2kpXlxcc3RhciJdLFs2LDEsIkVfe2lfMH1eXFxzdGFyIl0sWzUsMSwiVCJdLFsyLDMsIkVfe2lfMH1eXFxzdGFyIl0sWzAsMSwiIiwwLHsic3R5bGUiOnsiaGVhZCI6eyJuYW1lIjoiZXBpIn19fV0sWzAsMiwiIiwyLHsic3R5bGUiOnsiaGVhZCI6eyJuYW1lIjoiZXBpIn19fV0sWzEsM10sWzIsM10sWzAsMywiZCJdLFszLDUsImReey0xfSJdLFs1LDQsIiIsMCx7InN0eWxlIjp7ImhlYWQiOnsibmFtZSI6ImVwaSJ9fX1dLFswLDYsIiIsMCx7InN0eWxlIjp7ImhlYWQiOnsibmFtZSI6ImVwaSJ9fX1dLFs2LDNdLFs2LDUsIiIsMCx7ImN1cnZlIjozfV0sWzIsNSwiIiwwLHsiY3VydmUiOjJ9XSxbMSw1LCIiLDAseyJjdXJ2ZSI6LTN9XSxbMiw0LCIiLDAseyJjdXJ2ZSI6MywibGV2ZWwiOjIsInN0eWxlIjp7ImhlYWQiOnsibmFtZSI6Im5vbmUifX19XV0=
\end{proof}

\subsection{Structural equational systems}\label{ss:ses}
In this subsection, we introduce a notion of signature for benign equational
systems.

We fix a free monad $S = Σ^*$ on a locally finitely presentable
category $𝐂$, with $Σ$ (hence $S$) finitary.
\begin{definition}
  \   \hfill
  \begin{itemize}
  \item A \alert{structural interpretation} of an incremental
    structural law $$d_{X,Y}∶ Θ(ΣX,Y) → ST (Θ(X,STY)+X+Y)$$ over some
    given distributive law $δ∶ TS → ST$ is a natural transformation
    $K_X∶ Θ(X,X) → STX$ making the coherence diagram of
    Figure~\ref{fig:interp} commute.
  \begin{figure}[htp]
    \centering
    \diag{%
      Θ(ΣX,X) \& \& ST (Θ(X,STX)+X+X) \\
      Θ(SX,SX) \& \& ST (Θ(STX,STX)+X) \\
      STSX \& \& ST(STSTX+X) \\
      SSTX \& \& ST (STX) \\
      \& STX %
    }{%
      (m-1-1) edge[labell={Θ(η_{Σ,X},η^S_X)}] (m-2-1) %
      edge[labela={d_{X,X}}] (m-1-3) %
      (m-2-1) edge[labell={K_{SX}}] (m-3-1) %
      (m-3-1) edge[labell={Sδ_X}] (m-4-1) %
      (m-4-1) edge[labelbl={μ^S_{TX}}] (m-5-2) %
      (m-1-3) edge[labelr={ST (Θ(η^{ST}_X,STX)+[X,X])}] (m-2-3)
      (m-2-3) edge[labelr={ST(K_{STX}+X)}] (m-3-3) %
      (m-3-3) edge[labelr={ST[μ^{ST}_X,η^{ST}_X]}] (m-4-3) %
      (m-4-3) edge[labelbr={μ^{ST}_X}] (m-5-2) %
    }
    % \diag{%
    %   Θ(ΣX,X) \& \& Θ(SX,SX) \& \& STSX \\
    %   \& \& \& \& SSTX \\
    %   ST (Θ(X,STX)+X+X) \& ST (Θ(STX,STX)+X) \& ST(STSTX+X) \& ST (STX) \& STX %
    % }{%
    %   (m-1-1) edge[labela={Θ(η_{Σ,X},η^S_X)}] (m-1-3) %
    %   edge[labell={d_{X,X}}] (m-3-1) %
    %   (m-1-3) edge[labela={K_{SX}}] (m-1-5) %
    %   (m-1-5) edge[labelr={Sδ_X}] (m-2-5) %
    %   (m-2-5) edge[labelr={μ^STX}] (m-3-5) %
    %   (m-3-1) edge[label={[below=.5em]{$\scriptstyle
    %       ST (Θ(η^{ST}_X,STX)+[X,X])$}}] (m-3-2)
    %   (m-3-2) edge[label={[below=.5em]{$\scriptstyle
    %       ST(K_{STX}+X)$}}] (m-3-3) %
    %   (m-3-3) edge[label={[below=.5em]{$\scriptstyle
    %       ST[μ^{ST}_X,η^{ST}_X]$}}] (m-3-4) %
    %   (m-3-4) edge[labelb={μ^{ST}_X}] (m-3-5) %
    % }
    \caption{Coherence diagram for structural interpretations}
    \label{fig:interp}
  \end{figure}
\item A \alert{structural equational system} over a distributive law
  $δ∶ TS → ST$ consists of
  \begin{itemize}
  \item an incremental structural law $(Θ,d)$ over $δ$, together with
  \item a pair of structural interpretations $L,R∶ ΘΔ → ST$ of $d$.
  \end{itemize}
\end{itemize}
\end{definition}

We now associate an equational system to any structural equational
system, and prove that it is benign.
\begin{definition}
  For any structural interpretation $K∶ ΘΔ → ST$
  of
  $$d_{X,Y}∶ Θ(ΣX,Y) → ST (Θ(X,STY) + X + Y)$$
  over $δ∶ TS → ST$, let $\tilde{K}∶ Θ_S^\fleur → ST$ denote the monad
  morphism induced by universal property of $Θ_S^\fleur$ from the
  composite
  $$Θ_{SX} X \xto{Θ_{SX} η^S_X} Θ_{SX} SX \xto{K_{SX}} STSX
  \xto{Sδ_X} SSTX \xto{μ^S_{TX}} STX.$$ For any structural equational
  system $E = (d,L,R)$ over $δ$, the equational system
  $\tilde{E}$
  \alert{induced} by $E$ is the pair
  $\tilde{L},\tilde{R}∶ Θ_S^\fleur → ST$.
\end{definition}
\begin{notation}
  We often conflate $E$ and the associated equational system
  $\tilde{E}$.  In particular, recalling Definition~\ref{def:Ealg} and
  Proposition~\ref{prop:monadiceq}, we speak of $E$-algebras, which
  form a category $E\alg ≅ E^⋆\Alg$.
\end{notation}

\begin{theorem}\label{thm:benign}
  Consider any monad distributive law $δ∶ TS → ST$ in a locally
  finitely presentable category $𝐂$ such that $T$ is
  constant-free. Then for any structural equational system $E$ over
  $δ$, the quotient morphism $ST → \tilde{E}^⋆$ is admissible.
\end{theorem}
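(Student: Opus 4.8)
The plan is to unwind the definitions so that the statement becomes an instance of Proposition~\ref{prop:benign}, and then to extract the required equation from the initiality supplied by Theorems~\ref{thm:models} and~\ref{thm:incremental}. Writing $T' = Θ_S^\fleur$, the induced equational system $\tilde E = (Θ_S^\fleur,\tilde L,\tilde R)$ lives over the monad $ST$, and ``$ST → \tilde E^⋆$ is admissible'' is, by definition, the assertion that $\tilde E$ is benign. Hence by Proposition~\ref{prop:benign} (with $ST$ in the role of the monad and using that $T$ is constant-free) it suffices to prove that the initial $ST$-algebra $ST∅$ satisfies $\tilde E$, i.e. that the two $T'$-algebra structures $μ^{ST}_∅ ∘ \tilde L_{ST∅}$ and $μ^{ST}_∅ ∘ \tilde R_{ST∅}$ on $ST∅$ coincide.

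First I would reduce this equality of $T'$-structures to an equality of $ΘΔ$-operations. Since $T' = (Θ_SΔ)^\fleur$ is free we have $T'\Alg ≅ (Θ_SΔ)\alg$, so two $T'$-structures on $ST∅$ coincide iff the corresponding $Θ_SΔ$-structures do. Unwinding the definition of $\tilde K$ from a structural interpretation $K$, together with the correspondence of Theorem~\ref{thm:models} (under which the $T'$-component of a $(δ,d)$-algebra is recovered from its $S$-structure and a $ΘΔ$-operation $𝐜∶ Θ(X,X) → X$), a routine computation with the monad laws of $ST$ and the naturality of $K$ along the $S$-structure identifies $μ^{ST}_∅ ∘ \tilde K_{ST∅}$ with the $T'$-structure associated to the operation $𝐜_K ≔ μ^{ST}_∅ ∘ K_{ST∅}∶ Θ(ST∅,ST∅) → ST∅$, for $K ∈ \{L,R\}$. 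It thus suffices to show $𝐜_L = 𝐜_R$.

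The heart of the argument is to show that each $𝐜_K$ turns $ST∅$ into a $(δ,d)$-algebra, i.e. that $(ST∅,μ^{ST}_∅,𝐜_K)$ satisfies the coherence diagram~(d2) of Definition~\ref{d:alt-models}. This is exactly what the defining diagram of a structural interpretation (Figure~\ref{fig:interp}) is meant to deliver: that diagram is the version of~(d2) made natural in $X$ and expressed through $K$ and the free multiplication $μ^{ST}$. So I would instantiate Figure~\ref{fig:interp} at $X = ST∅$ and postcompose with $μ^{ST}_∅$, then collapse the result, using the monad laws of $ST$, the naturality of $K$, and the unit laws, into diagram~(d2) for $(ST∅,μ^{ST}_∅,𝐜_K)$. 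I expect this diagram chase to be the \emph{main obstacle}: matching the many nodes of Figure~\ref{fig:interp} against the three summands $Θ(X,STX)+X+X$ of $d$ as they are collapsed by the algebra structure is the one genuinely computational step, and care is needed to see that the extra layers of $ST$ introduced by $K_{STX}$ and $μ^{ST}$ cancel against the free multiplication at the initial algebra.

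Finally I would conclude by uniqueness. By Theorem~\ref{thm:models}, together with Remark~\ref{rk:admis-STT} and Proposition~\ref{prop:admissible:street} (which give $ST∅ ≅ S(T⊕T')∅$ and that the underlying $ST$-algebra of the initial $(δ,d)$-algebra is the initial $ST$-algebra), the object $ST∅$ carries a canonical $(δ,d)$-algebra structure $(μ^{ST}_∅,𝐜_{\mathrm{can}})$ that is initial in $(δ,d)\Alg$. For any $(δ,d)$-algebra structure $(μ^{ST}_∅,𝐜)$ on $ST∅$ sharing this underlying $ST$-algebra, the unique $(δ,d)$-morphism out of the initial $(δ,d)$-algebra is in particular an $ST$-algebra endomorphism of the initial $ST$-algebra, hence the identity; being also a $ΘΔ$-morphism, the identity forces $𝐜 = 𝐜_{\mathrm{can}}$. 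Applying this to $K = L$ and $K = R$ yields $𝐜_L = 𝐜_{\mathrm{can}} = 𝐜_R$, whence the two $T'$-structures agree and, by Proposition~\ref{prop:benign}, the morphism $ST → \tilde E^⋆$ is admissible.
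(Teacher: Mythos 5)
Your proposal is correct and follows essentially the same route as the paper's proof (sketch in \S\ref{ss:ses} and \S\ref{app:benign}): reduce via Proposition~\ref{prop:benign} to the initial $ST$-algebra satisfying the system, show via the coherence diagram of Figure~\ref{fig:interp} that each interpretation $K$ yields a $\Theta\Delta$-operation satisfying~(d2), and conclude by uniqueness of the extension coming from admissibility of $ST \to S(T\oplus T')$ (Remark~\ref{rk:admis-STT}, Proposition~\ref{prop:admissible:street}, Theorem~\ref{thm:models}). The only cosmetic differences are that the paper proves the (d2)-lemma for arbitrary $ST$-algebras (yielding a section $\bar{K}\colon ST\Alg \to S(T\oplus T')\Alg$) rather than only at $ST\emptyset$, and that you derive uniqueness via the identity endomorphism of the initial algebra where the paper invokes creation directly.
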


\begin{proof}[Proof sketch (see~§\ref{app:benign} for more detail)]
  Let $$d_{X,Y}∶ Θ(ΣX,Y) → ST (Θ(X,STY) + X + Y)$$ denote the given
  incremental structural law, and $L$ and $R$ denote the two
  structural interpretations.

  We first prove that each any structural interpretation $K$ induces
  $ΘΔ$-algebra structure on any $ST$-algebra $X$, given by
  $$Θ(X,X) \xto{K_X} STX → X\rlap{,}$$
  and furthermore that this $ΘΔ$-algebra structure satisfies~(d2), hence
  by Theorem~\ref{thm:models} makes $X$ into an $S(T⊕T')$-algebra.

  The given structural interpretations $L$ and $R$ thus induce
  two extensions of the $ST$-algebra structure of $S∅ ≅ ST∅$
  to $S(T⊕T')$-algebra structure.
  But by
  Theorem~\ref{main:thm:incremental} and
  Proposition~\ref{prop:admissible:street}, $S∅$ has a unique
  compatible $ΘΔ$-algebra structure making it into a
  $S (T⊕T')$-algebra structure, so both algebra structures derived
  from $L$ and $R$ must agree with the canonical one. In particular
  the diagram
  \begin{center}
    \diag(1,1.7){%
      Θ({S∅},{S∅}) \& ST{S∅} \& {S∅}
    }{%
      (m-1-1) edge[bend left=10,labela={L_{S∅}}] (m-1-2) %
      (m-1-1) edge[bend right=10,labelb={R_{S∅}}] (m-1-2) %
      (m-1-2) edge[onto,labela={}] (m-1-3) 
    }
  \end{center}
%  $$Θ({S∅},{S∅}) \xto{L_{S∅},R_{S∅}} ST{S∅} → {S∅}$$
  commutes, hence $S∅$ satisfies the induced equational system, and is
  thus initial in $S (T⊕T')\Alg$.
\end{proof}

Let us conclude this subsection with the following direct consequence
of Theorem~\ref{thm:benign} and Proposition~\ref{prop:benign:systems}.
\begin{corollary}\label{cor:benign:systems}
  Consider any monad distributive law $δ∶ TS → ST$ in a locally
  finitely presentable category $𝐂$ such that $T$ is constant-free.
  Then for any family $(Eᵢ)_{i ∈ I}$ of structural equational systems
  over $δ$, the quotient morphism $ST → (∑ᵢ\tilde{Eᵢ})^⋆$ is
  admissible.
\end{corollary}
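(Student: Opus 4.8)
The plan is to read the statement as the composite of two facts already in hand: that each structural equational system is individually benign (Theorem~\ref{thm:benign}), and that a family of \emph{ordinary} equational systems on a fixed monad is jointly benign exactly when each member is (Proposition~\ref{prop:benign:systems}). The bridge between the two is the assignment $Eᵢ ↦ \tilde{Eᵢ}$: by construction, every structural equational system $Eᵢ$ over $δ$ induces an ordinary equational system $\tilde{Eᵢ} = (Gᵢ, \tilde{Lᵢ}, \tilde{Rᵢ})$, with $Gᵢ = (Θⁱ_S)^\fleur$, on the \emph{single} composite monad $ST$. Hence the family $(\tilde{Eᵢ})_{i ∈ I}$ consists of equational systems on one and the same monad $ST$, and the combined system $∑ᵢ\tilde{Eᵢ}$ is precisely the one whose quotient appears in the statement.

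First I would apply Theorem~\ref{thm:benign} to each index $i ∈ I$: since $δ$ is a distributive law with $T$ constant-free, the quotient morphism $ST → \tilde{Eᵢ}^⋆$ is admissible, which by the definition of benign says exactly that $\tilde{Eᵢ}$ is a benign equational system on $ST$. Then I would feed the family $(\tilde{Eᵢ})_{i ∈ I}$, now known to be termwise benign, into Proposition~\ref{prop:benign:systems}, which gives that $∑ᵢ\tilde{Eᵢ}$ is itself benign. Unfolding benignity once more yields that the universal coequalising morphism $ST → (∑ᵢ\tilde{Eᵢ})^⋆$ is admissible, which is the claim.

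The only steps needing verification — and therefore the main (minor) obstacle — are the hypotheses of Proposition~\ref{prop:benign:systems}, namely that the $\tilde{Eᵢ}$ are equational systems on a common finitary monad and that the coproduct $⨁ᵢ Gᵢ$ of generating monads exists. The first holds because every $\tilde{Eᵢ}$ lives on the same monad $ST$, whose finiteness is already presupposed for the quotients $\tilde{Eᵢ}^⋆$ of Theorem~\ref{thm:benign} to be defined as coequalisers of finitary monads. The second holds because each $Gᵢ = (Θⁱ_S)^\fleur$ is free on a bifunctor that is cocontinuous (hence finitary) in its first argument and finitary in its second, so all $Gᵢ$ are finitary and their coproduct exists in the cocomplete category $𝐌𝐧𝐝_f(𝐂)$. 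With this bookkeeping in place, chaining Theorem~\ref{thm:benign} and Proposition~\ref{prop:benign:systems} closes the argument.
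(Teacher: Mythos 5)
Your proposal is correct and follows exactly the paper's own route: the paper derives this corollary as a direct consequence of Theorem~\ref{thm:benign} (each $\tilde{E_i}$ is benign) and Proposition~\ref{prop:benign:systems} (a sum of equational systems is benign iff each summand is). Your additional bookkeeping — that all $\tilde{E_i}$ live on the common finitary monad $ST$ and that $\bigoplus_i G_i$ exists in the cocomplete category $\mathbf{Mnd}_f(\mathbf{C})$ — is correct and matches the standing hypotheses the paper fixes in \S\ref{ss:ses}.
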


\subsection{Application: associativity of substitution}\label{ss:assocsubst}
In this section, we continue the development of~§\ref{ss:lambda}.
  There, we
specified the syntax of pure $λ$-calculus by an endofunctor $Σ$ on
$[𝐒𝐞𝐭,𝐒𝐞𝐭]_f$, and substitution by a simple structural law
$$d_{X,Y}∶ Γ(ΣX,Y) → S (Γ_{SY}(X) + X + Y)\rlap{,}$$
where $Γ(X,Y) = X ⊗ Y$, thus generating an admissible morphism
$S → ST$, where $S = Σ^*$ and $T = Γ_S^\fleur$.

We now want to show that the $ΓΔ$-algebra structure of the syntactic
model $S∅$, a.k.a.\ capture-avoiding substitution, is associative.
Again, this will be subsumed by~§\ref{ss:ptstrengths}.

We define a structural equational system with the following
components:
\begin{itemize}
\item the arity functor is $Θ(X,Y) = (X⊗Y)⊗Y$;
\item the incremental structural law is defined by
  $$
  \begin{array}{rcl}
    d'_{X,Y}(x,σ,θ) & = & σ(x)[θ] \\
    d'_{X,Y} (e\ f,σ,θ) & = & (e,σ,θ)\ (f,σ,θ) \\
    d'_{X,Y} (λ(e),σ,θ) & = & λ(e,σ^↑,θ^↑)\rlap{,}
  \end{array}$$
  where
  \begin{itemize}
  \item $σ∶ p → ST(Y)(q)$, 
  \item $θ∶ q → ST(Y)(n)$, 
  \item ${-₁}[-₂]$ denotes the formal ($=$ explicit) substitution
    operation of $T$, and
  \item we omit coproduct injections and $η^{ST}$ for readability;
\end{itemize}
\item the structural interpretations are defined by
  \begin{center}
    $L_X(e,σ,θ) = e[σ][θ]$ and  $R_X(e,σ,θ) = e[σ[θ]]$,
  \end{center}
  where by definition $σ[θ](xᵢ) ≔ σ(xᵢ)[θ]$ (writing $xᵢ$ for the
  $i$th element of $p$ to emphasise that it is thought of as a
  variable).
\end{itemize}
Checking the coherence condition of Figure~\ref{fig:interp}
essentially amounts to checking each case of the usual induction, separately.
% E.g., for variables, i.e., for any
% triple of the form $(x,σ,θ)$ with
% $x ∈ p$, $σ∶ p → STX(q)$, and $θ∶ q → STX(n)$,
% we have
% \begin{itemize}
% \item $d(x,σ,θ) = σ(x)[θ]$, and the bottom composite does essentially
%   nothing, while
% \item the top composite (with $L$) maps the triple to $x[σ][θ]$, which
%   $δ$ then maps to $σ(x)[θ]$, and multiplication again does
%   essentially nothing.
% \end{itemize}
The most interesting case is that of abstraction, with the right-hand
side $R$:
\begin{itemize}
\item $d$ maps any triple $(λ(e),σ,θ)$ to $λ(e,σ^↑,θ^↑)$, which the
  right-hand composite then maps to $λ(e[σ^↑[θ^↑]])$, while
\item the left-hand composite maps the triple to $λ(e[σ[θ]^↑])$.
\end{itemize}
We thus need to prove $σ^↑[θ^↑] = σ[θ]^↑$:
\begin{itemize}
\item on $x_{p+1}$, we directly have
  $σ[θ]^↑(x_{p+1}) = x_{n+1}$, and, slightly less directly,
  $$σ^↑[θ^↑](x_{p+1}) = σ^↑(x_{p+1})[θ^↑] = x_{q+1}[θ^↑] = x_{n+1}\rlap{;}$$
\item on $xᵢ$ for $i ∈ p$, we have 
  $σ[θ]^↑(xᵢ) = wₙ · (σ(xᵢ)[θ]) = σ(xᵢ)[STX(wₙ)∘θ]$, where $wₙ∶ n ↪ n+1$ denotes the inclusion,
  while
  $$\begin{array}{rcl}
      σ^↑[θ^↑](xᵢ) & = & σ^↑(xᵢ)[θ^↑]  \\ 
                   & = &  (w_q · σ(xᵢ))[θ^↑] \\ 
                   & = &  σ(xᵢ)[θ^↑ ∘ w_q]. %% tom: on utilise ici le
                                            %% fait que ⊗ soit défini
                                            %% avec une cofin plutôt
                                            %% qu'un coproduit.
    \end{array}$$ But the following square
    commutes by definition of $θ^↑$, hence the result.
    \begin{center}
      \diag{%
        q \& q+1 \\
        STX(n) \& STX(n+1) %
      }{%
        (m-1-1) edge[labela={w_q}] (m-1-2) %
        edge[labell={θ}] (m-2-1) %
        (m-2-1) edge[labelb={STX(wₙ)}] (m-2-2) %
        (m-1-2) edge[labelr={θ^↑}] (m-2-2) %
      }
    \end{center}
  \end{itemize}
  %% L par la gauche puis haut
  %% (x,σ,θ) ↦ σ(x)[θ] 
  %% (x,σ,θ) ↦ x[σ][θ] ↦ σ(x)[θ] 
  %% (e f,σ,θ) ↦ (e,σ,θ) (f,σ,θ) ↦ e[σ][θ] f[σ][θ]
  %% (e f,σ,θ) ↦ (e f)[σ][θ] ↦ (e[σ][θ] f[σ][θ])
  %% (λ(e),σ,θ) ↦ λ(e,σ^↑,θ^↑) ↦ λ(e[σ^↑][θ^↑])
  %% (λ(e),σ,θ) ↦ λ(e)[σ][θ] ↦ λ(e[σ^↑][θ^↑])
  %% R par la gauche puis haut
  %% (x,σ,θ) ↦ σ(x)[θ] 
  %% (x,σ,θ) ↦ x[σ[θ]] ↦ σ[θ](x) = σ(x)[θ]
  %% (e f,σ,θ) ↦ (e,σ,θ) (f,σ,θ) ↦ e[σ[θ]] f[σ[θ]]
  %% (e f,σ,θ) ↦ (e f)[σ[θ]] ↦ (e[σ[θ]] f[σ[θ]])
  %% (λ(e),σ,θ) ↦ λ(e,σ^↑,θ^↑) ↦ λ(e[σ^↑[θ^↑]])
  %% (λ(e),σ,θ) ↦ λ(e)[σ[θ]] ↦ λ(e[σ[θ]^↑])
  %% or σ^↑[θ^↑](xᵢ) = w(σ(xᵢ))[θ^↑] pour i ∈ p
  %% = xₚ₊₁[θ^↑] = x_q+1 pour i = p+1
  %% σ[θ]^↑(xᵢ) = w(σ(xᵢ)[θ])  pour i ∈ p
  %% = x_q+1        pour i = p+1
  %% pour le premier cas, plus précisément
  %% w(σ(xᵢ))[θ^↑]
  %% = (in₁ · σ(xᵢ))[θ^↑]
  %% ∼ σ(xᵢ)[θ^↑ ∘ in₁]
  %% = σ(xᵢ)[ X(in₁) ∘ θ]
  %% = in₁ · (σ(xᵢ)[θ])
  %% = w(σ(xᵢ)[θ]).
  %% affiché avant
  % σ(x)[θ] = x[σ][θ] \and
  % σ(x)[θ] = x[σ[θ]] \and
  % (e\ f)[σ][θ] =  e[σ][θ]\ f[σ][θ] \and 
  % (e\ f)[σ[θ]] =  e[σ[θ]]\ f[σ[θ]]
  % λ(e)[σ][θ] =  λ(e[σ][θ]) \and 
Theorem~\ref{thm:benign} then tells us that the usual substitution
lemma is satisfied in the syntax $S(∅)$.

\subsection{Embedding presheaf-based models}\label{ss:ptstrengths}
In this section, we show how the general framework of pointed strong
endofunctors~\cite{fiore:presheaf,DBLP:conf/lics/Fiore08} embeds into
ours. More precisely, for any pointed strong endofunctor $Σ$ on a
monoidal, locally finitely presentable category $(𝐂,⊗,I,α,λ,ρ)$
satisfying standard additional axioms (see Definition~\ref{def:lfpmon}
below):
\begin{itemize}
\item We define a simple structural law, whose initial algebra
  $S∅ = (I + Σ)^*∅$ is the desired syntax, and whose category of
  algebras is a relaxed variant of Fiore et al.'s;
\item We then define two structural equational systems $E₁$ and $E₂$,
  whose joint algebras in the sense of Corollary~\ref{cor:benign:systems} are
  precisely those of Fiore et al.'s.
\end{itemize}
We thus recover the admissible morphism
$$S → ST → (\tilde{E}₁ + \tilde{E}₂)^⋆ ≅ Σ^⊛$$
of Example~\ref{ex:fpt}.

\begin{remark}
  The motivation for this subsection is one of connecting to other
  people's work. In applications, it will probably be easier to
  directly define the desired structural laws and equational systems.
\end{remark}

We start by recalling the notion of pointed strong endofunctor, and
its associated category of models.

  \begin{definition}\label{def:pts}
  A \alert{pointed strength} on an endofunctor $Σ∶ 𝐂 → 𝐂$ on a
  monoidal category $(𝐂,⊗,I,α,λ,ρ)$ is a family of morphisms
  $st_{C,(D,v)}∶ Σ(C)⊗D → Σ (C⊗D)$, natural in $C ∈ 𝐂$ and
  $(D,v∶ I → D) ∈ I/𝐂$, the coslice category below $I$, making the
  following diagrams commute,
      \begin{center}
    \diag{%
      \& Σ(A) \\
      Σ(A)⊗I \& \& Σ (A⊗I) %
    }{%
      (m-1-2) edge[labelal={ρ_{Σ(A)}}] (m-2-1) %
      edge[labelar={Σ(ρ_A)}] (m-2-3) %
      (m-2-1) edge[labelb={st_{A,(I,\id)}}] (m-2-3) %
    }
    \diag(.6,1.6){%
      (Σ(A)⊗X)⊗Y \& Σ(A⊗X)⊗Y \& Σ ((A⊗X)⊗Y) \\
      Σ(A)⊗(X⊗Y) \& \& Σ (A⊗(X⊗Y))
    }{%
      (m-1-1) edge[labela={st_{A,(X,v_X)}⊗Y}] (m-1-2) %
      edge[labell={α_{Σ(A),X,Y}}] (m-2-1) %
      (m-1-2) edge[labela={st_{A⊗X,(Y,v_Y)}}] (m-1-3) %
      (m-1-3) edge[labelr={Σ (α_{A,X,Y})}] (m-2-3) %
      (m-2-1) edge[labelb={st_{A,(X⊗Y,v_{X⊗Y})}}] (m-2-3) %
    }
  \end{center}
  where $v_X∶ I → X$ and $v_Y∶ I → Y$ are the given points, and
  $v_{X⊗Y}$ denotes the composite
  $$I \xto{ρ_I^{-1}} I ⊗ I \xto{v_X ⊗ v_Y} X ⊗ Y.$$
\end{definition}

\begin{definition}\label{def:SMon}
  For any pointed strong endofunctor $Σ$ on $𝐂$, a \alert{$Σ$-monoid}
  is an object $X$ equipped with $Σ$-algebra and monoid structure, say
  $a∶ Σ(X) → X$, $s∶ X⊗X→X$, and $v∶ I → X$, such that the following
  pentagon commutes.
  \begin{equation}
    \diag{%
      Σ(X)⊗X \& Σ (X⊗X) \& Σ(X) \\
      X⊗X \& \& X %
    }{%
      (m-1-1) edge[labela={st_{X,(X,v)}}] (m-1-2) %
      edge[labell={a⊗X}] (m-2-1) %
      (m-1-2) edge[labela={Σ(s)}] (m-1-3) %
      (m-2-1) edge[labelb={s}] (m-2-3) %
      (m-1-3) edge[labelr={a}] (m-2-3) %
    }
    \label{eq:pentagon:Fmonoids}
  \end{equation}
  A morphism of $Σ$-monoids is a morphism in $𝐂$ which is a morphism
  both of $Σ$-algebras and of monoids.  We let $Σ\Mon$ denote the
  category of $Σ$-monoids and morphisms between them.
\end{definition}

Let us now show how any pointed strong endofunctor $Σ$ gives rise to a
simple structural law, and how the coherence laws of $Σ$-monoids may
be enforced by benign equations.

Following~\cite{fiore:presheaf,DBLP:conf/lics/Fiore08}, we assume that
our category $𝐂$ is leftist, in the following sense.
\begin{definition}
  \label{def:lfpmon}
  A monoidal category $(𝐂,⊗,I,α,λ,ρ)$ is \alert{leftist} iff $⊗$ is
  cocontinuous in its first argument, and finitary in its second
  argument.
\end{definition}

\begin{definition}
  The simple structural law associated to any pointed strong
  endofunctor $(Σ,st)$ is defined as follows.
  \begin{itemize}
  \item We take as basic functor $Σ⁺(X) = I + Σ(X)$, and
  \item as auxiliary bifunctor $Γ(X,Y) = X⊗Y$.
  \item We then take as simple structural law $d_{X,Y}$
    the composite
% file:///home/thirs/github/quiver/src/composite-fiore.html?q=WzAsNSxbMCwxLCLOo+KBuihYKSDiipcgWSJdLFswLDAsIs6j4oG6KFgpIOKKlyBTWSJdLFsxLDAsIkniipdTWSArIM6jKFgp4oqXU1kiXSxbMiwwLCJTWSArIM6jKFgg4oqXIFNZKSJdLFsyLDEsIlMoWOKKl1NZICsgWCArIFkpIl0sWzAsMSwizqPigbooWCkg4oqXIM63XlNfWSJdLFsxLDIsIlxcc2ltIiwwLHsibGV2ZWwiOjIsInN0eWxlIjp7ImhlYWQiOnsibmFtZSI6Im5vbmUifX19XSxbMiwzLCLOu197U1l9ICsgc3Rfe1gsKFNZLHYpfSJdLFszLDQsIltTaW7igoMsUyhpbuKCgSniiJjOt197zqMsWOKKl1NZfV0iXV0=
\[\begin{tikzcd}[ampersand replacement=\&]
	{Σ⁺(X) ⊗ SY} \& {I⊗SY + Σ(X)⊗SY} \& {SY + Σ(X ⊗ SY)} \\
	{Σ⁺(X) ⊗ Y} \&\& {S(X⊗SY + X + Y)\rlap{,}}
	\arrow["{Σ⁺(X) ⊗ η^S_Y}", from=2-1, to=1-1]
	\arrow["\sim", Rightarrow, no head, from=1-1, to=1-2]
;	\arrow["{λ_{SY} + st_{X,(SY,v_Y)}}"{above=.5em}, from=1-2, to=1-3]
	\arrow["{[Sin₃,S(in₁)∘η_{Σ⁺,X⊗SY}∘in₂]}", from=1-3, to=2-3]
\end{tikzcd}\]
where
$S = (Σ⁺)^*$ and $v$ denotes the composite $I → Σ⁺ → S$.
  \end{itemize}
\end{definition}

By Theorem~\ref{thm:simple}, the initial $Σ⁺$-algebra has a unique
model structure, which makes it initial, and furthermore, by
Theorem~\ref{thm:models}, models are $Σ$-algebras $a∶ ΣX → X$,
equipped with morphisms $v∶ I → X$ and $s∶ X⊗X → X$ making the
pentagon
\begin{equation}
% file:///home/thirs/github/quiver/src/contexts-pentagon.html?q=WzAsNixbMCwwLCJcXFNpZ21hXisoWClcXG90aW1lcyBYIl0sWzAsMiwiIFggXFxvdGltZXMgWCJdLFsyLDAsIlMoWCBcXG90aW1lcyBTWCArIFggKyBYKSJdLFsyLDEsIlMoWFxcb3RpbWVzIFgrWCtYKSJdLFsxLDMsIlgiXSxbMiwyLCJTWCJdLFswLDEsIlt2LGFdXFxvdGltZXMgWCIsMl0sWzAsMiwiZF97WCxYfSJdLFsyLDMsIlMoWFxcb3RpbWVzIFxcb3ZlcmxpbmV7W3YsYV19ICsgWCArIFgpIl0sWzEsNCwicyIsMl0sWzMsNSwiU1tzLFgsWF0iXSxbNSw0LCJcXG92ZXJsaW5le1t2LGFdfSJdXQ==
\begin{tikzcd}[ampersand replacement=\&]
	{\Sigma^+(X)\otimes X} \&\& {S(X \otimes SX + X + X)} \\
	\&\& {S(X\otimes X+X+X)} \\
	{ X \otimes X} \&\& SX \\
	\& X
	\arrow["{[v,a]\otimes X}"', from=1-1, to=3-1]
	\arrow["{d_{X,X}}", from=1-1, to=1-3]
	\arrow["{S(X\otimes \overline{[v,a]} + X + X)}", from=1-3, to=2-3]
	\arrow["s"', from=3-1, to=4-2]
	\arrow["{S[s,X,X]}", from=2-3, to=3-3]
	\arrow["{\overline{[v,a]}}", from=3-3, to=4-2]
\end{tikzcd}
      \label{eq:pentaplus}
\end{equation}
commute, where $\overline{[v,a]}$ is induced by universal property of $S(X)$.
\begin{proposition}\label{prop:pentas}
  Commutation of~\eqref{eq:pentaplus} is equivalent to joint
  commutation of the pentagon~\eqref{eq:pentagon:Fmonoids} and the diagram
  below.
  \begin{equation}
    % file:///home/thirs/github/quiver/src/contexts-pentagon-term1.html?q=WzAsMyxbMCwwLCJJXFxvdGltZXMgWCJdLFsyLDAsIiBYIFxcb3RpbWVzIFgiXSxbMSwxLCJYIl0sWzAsMSwidiBcXG90aW1lcyBYIl0sWzEsMiwicyJdLFswLDIsIlxcbGFtYmRhX1giLDJdXQ==
    \begin{tikzcd}[ampersand replacement=\&]
      {I\otimes X} \&\& { X \otimes X} \\
      \& X
      \arrow["{v \otimes X}", from=1-1, to=1-3]
      \arrow["s", from=1-3, to=2-2]
      \arrow["{\lambda_X}"', from=1-1, to=2-2]
    \end{tikzcd}
    \label{eq:left:unital}
  \end{equation}
\end{proposition}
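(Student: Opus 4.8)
The plan is to exploit the fact that $⊗$ is cocontinuous in its first argument (the leftist hypothesis, Definition~\ref{def:lfpmon}), so that $Σ^+(X)⊗X ≅ I⊗X + Σ(X)⊗X$. Two parallel maps out of this coproduct agree iff they agree after precomposition with each injection, so \eqref{eq:pentaplus} commutes iff its restrictions along $I⊗X ↪ Σ^+(X)⊗X$ and $Σ(X)⊗X ↪ Σ^+(X)⊗X$ both commute. I would then identify the first restriction with \eqref{eq:left:unital} and the second with \eqref{eq:pentagon:Fmonoids}, which yields the equivalence directly. Throughout I write $b ≔ \overline{[v,a]}∶ SX → X$ for the induced monad algebra, and use its two defining properties $b ∘ η^S_X = \id_X$ and $b ∘ η_{Σ^+,X} = [v,a]$ (so that $[v,a]∘in_1 = v$ and $[v,a]∘in_2 = a$). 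I also keep in mind the two distinct points in play: the prospective monoid unit $v∶ I → X$, and the canonical point $v_X = η_{Σ^+,X}∘in_1∶ I → SX$ used by the law $d$.

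For the $I$-summand, unfolding $d_{X,X}$ selects the $I⊗SX$ component of $Σ^+(X)⊗SX$, so the top edge restricts to $S(in_3)∘λ_{SX}∘(I⊗η^S_X)$. The subsequent maps $S(X⊗b+X+X)$ and $S[s,X,X]$ act as the identity on this third summand, so the entire right-hand path of \eqref{eq:pentaplus} collapses to $b∘λ_{SX}∘(I⊗η^S_X)$; by naturality of $λ$ and $b∘η^S_X=\id_X$ this is exactly $λ_X$. The left-hand path restricts to $s∘(v⊗X)$, since $[v,a]∘in_1=v$. Equality of the two is precisely \eqref{eq:left:unital}.

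For the $Σ(X)$-summand, unfolding $d_{X,X}$ exhibits the right-hand path of \eqref{eq:pentaplus} as $b∘S(s∘(X⊗b))∘S(in_1)∘η_{Σ^+,X⊗SX}∘in_2∘st_{X,(SX,v_X)}∘(Σ(X)⊗η^S_X)$. Pushing $b∘S(-)$ past $η_{Σ^+}$ by naturality, then commuting past the injection $in_2∶Σ→Σ^+$, and using $b∘η_{Σ^+,X}=[v,a]$ together with $[v,a]∘in_2=a$, this simplifies to $a∘Σ(s∘(X⊗b))∘st_{X,(SX,v_X)}∘(Σ(X)⊗η^S_X)$. It then remains to establish
\[Σ(s∘(X⊗b))∘st_{X,(SX,v_X)}∘(Σ(X)⊗η^S_X) = Σ(s)∘st_{X,(X,v)}.\]

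The crux — and the step I expect to be the main obstacle, since it is exactly where the free monad $S$ must disappear and leave only the pointed strength on $X$ — is the observation that $b$ is a morphism of pointed objects $(SX,v_X)→(X,v)$. This holds because $b∘v_X = b∘η_{Σ^+,X}∘in_1 = [v,a]∘in_1 = v$. Naturality of the pointed strength in its pointed argument then gives $Σ(X⊗b)∘st_{X,(SX,v_X)} = st_{X,(X,v)}∘(Σ(X)⊗b)$; substituting $Σ(s∘(X⊗b))=Σ(s)∘Σ(X⊗b)$ and using $b∘η^S_X=\id_X$ to collapse $(Σ(X)⊗b)∘(Σ(X)⊗η^S_X)$ to the identity proves the displayed equation. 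Consequently the right-hand path of \eqref{eq:pentaplus} on this summand is $a∘Σ(s)∘st_{X,(X,v)}$, while the left-hand path restricts to $s∘(a⊗X)$ since $[v,a]∘in_2=a$; their equality is exactly the $Σ$-monoid pentagon \eqref{eq:pentagon:Fmonoids}. Assembling the two summands gives the claimed equivalence.
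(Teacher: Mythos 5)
Your proof is correct and follows essentially the same route as the paper: decompose the domain $Σ^+(X)⊗X$ as the coproduct $I⊗X + Σ(X)⊗X$ (valid by the leftist hypothesis) and verify that the restriction to the first summand reduces to the unit triangle~\eqref{eq:left:unital} via naturality of $λ$ and $\overline{[v,a]}∘η^S_X = \id$, while the restriction to the second reduces to the pentagon~\eqref{eq:pentagon:Fmonoids} via naturality of the pointed strength at the pointed morphism $\overline{[v,a]}∶(SX,v_X)→(X,v)$. This last step, which you correctly flag as the crux, is exactly the middle square in the paper's second diagram chase, so the two arguments coincide.
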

\begin{proof}
  The domain of~\eqref{eq:pentaplus} is a coproduct, and we claim that
  the restriction to each term yields one of the given diagrams.
  For the first term, we get the following diagram.
  \begin{center}
  %   % file:///home/thirs/github/quiver/src/contexts-pentagon.html?q=WzAsOSxbMCwwLCJJIFxcb3RpbWVzIFgiXSxbMCwyLCIgWCBcXG90aW1lcyBYIl0sWzMsMCwiUyhYIFxcb3RpbWVzIFNYICsgWCArIFgpIl0sWzMsMSwiUyhYXFxvdGltZXMgWCtYK1gpIl0sWzIsMywiWCJdLFszLDIsIlNYIl0sWzEsMCwiSVxcb3RpbWVzIFNYIl0sWzIsMCwiU1giXSxbMSwxLCJYIl0sWzAsMSwidiBcXG90aW1lcyBYIiwyXSxbMiwzLCJTKFhcXG90aW1lcyBcXG92ZXJsaW5le1t2LGFdfSArIFggKyBYKSJdLFsxLDQsInMiLDJdLFszLDUsIlNbcyxYLFhdIl0sWzUsNCwiXFxvdmVybGluZXtbdixhXX0iXSxbMCw2LCJJXFxvdGltZXMgXFxldGFeU19YIl0sWzcsMiwiU2luXzMiXSxbNiw3LCJcXGxhbWJkYV97U1h9Il0sWzAsOCwiXFxsYW1iZGFfWCIsMl0sWzgsNywiXFxldGFeU19YIiwyXSxbOCw1LCJcXGV0YV5TX1giLDFdLFs4LDQsIiIsMSx7ImxldmVsIjoyLCJzdHlsZSI6eyJoZWFkIjp7Im5hbWUiOiJub25lIn19fV1d
\begin{tikzcd}[ampersand replacement=\&]
	{I \otimes X} \& {I\otimes SX} \& SX \& {S(X \otimes SX + X + X)} \\
	\& X \&\& {S(X\otimes X+X+X)} \\
	{ X \otimes X} \&\&\& SX \\
	\&\& X
	\arrow["{v \otimes X}"', from=1-1, to=3-1]
	\arrow["{S(X\otimes \overline{[v,a]} + X + X)}", from=1-4, to=2-4]
	\arrow["s"', from=3-1, to=4-3]
	\arrow["{S[s,X,X]}", from=2-4, to=3-4]
	\arrow["{\overline{[v,a]}}", from=3-4, to=4-3]
	\arrow["{I\otimes \eta^S_X}", from=1-1, to=1-2]
	\arrow["{Sin_3}", from=1-3, to=1-4]
	\arrow["{\lambda_{SX}}", from=1-2, to=1-3]
	\arrow["{\lambda_X}"', from=1-1, to=2-2]
	\arrow["{\eta^S_X}"', from=2-2, to=1-3]
	\arrow["{\eta^S_X}"{description}, from=2-2, to=3-4]
	\arrow[Rightarrow, no head, from=2-2, to=4-3]
\end{tikzcd}
  \end{center}
  For the second term, we verify that both pentagons are equivalent by
  chasing the following diagram.
  \begin{center} 
    \hfill
\adjustbox{max width=.95\columnwidth}{
% file:///home/thirs/github/quiver/src/contexts-pentagon-term1-proof.html?q=WzAsMTQsWzAsMCwiXFxTaWdtYShYKVxcb3RpbWVzIFgiXSxbMCwyLCIgWCBcXG90aW1lcyBYIl0sWzQsMCwiUyhYIFxcb3RpbWVzIFNYICsgWCArIFgpIl0sWzQsMSwiUyhYXFxvdGltZXMgWCtYK1gpIl0sWzIsMywiWCJdLFs0LDIsIlNYIl0sWzEsMCwiXFxTaWdtYShYKSBcXG90aW1lcyBTKFgpIl0sWzIsMCwiXFxTaWdtYShYIFxcb3RpbWVzIFNYKSJdLFszLDAsIlxcU2lnbWFeKyhYIFxcb3RpbWVzIFNYKSJdLFsyLDEsIlxcU2lnbWEoWFxcb3RpbWVzIFgpIl0sWzMsMSwiXFxTaWdtYV4rKFhcXG90aW1lcyBYKSJdLFsxLDEsIlxcU2lnbWEoWClcXG90aW1lcyBYIl0sWzIsMiwiXFxTaWdtYShYKSJdLFszLDIsIlxcU2lnbWFeKyhYKSJdLFswLDEsImFcXG90aW1lcyBYIiwyXSxbMiwzLCJTKFhcXG90aW1lcyBcXG92ZXJsaW5le1t2LGFdfSArIFggKyBYKSIsMl0sWzEsNCwicyIsMl0sWzMsNSwiU1tzLFgsWF0iLDJdLFs1LDQsIlxcb3ZlcmxpbmV7W3YsYV19Il0sWzAsNiwiXFxTaWdtYShYKSBcXG90aW1lcyBcXGV0YV5TX1giXSxbNiw3LCJzdF97WCwoWCx2X1gpfSJdLFs3LDgsImluXzIiXSxbOCwyLCJcXGV0YV97XFxTaWdtYV4rLGluXzF9Il0sWzcsOSwiXFxTaWdtYShYXFxvdGltZXMgXFxvdmVybGluZXtbdixhXX0pIiwyXSxbMTAsMywiXFxldGFfe1xcU2lnbWFeKyxpbl8xfSIsMl0sWzksMTAsImluXzIiLDJdLFs2LDExLCJcXFNpZ21hKFgpXFxvdGltZXMgXFxvdmVybGluZXtbdixhXX0iLDJdLFsxMSw5LCJzdF97WCwoWCx2KX0iLDJdLFswLDExLCIiLDEseyJjdXJ2ZSI6MiwibGV2ZWwiOjIsInN0eWxlIjp7ImhlYWQiOnsibmFtZSI6Im5vbmUifX19XSxbOSwxMiwiXFxTaWdtYShzKSJdLFsxMiwxMywiaW5fMiJdLFsxMyw1LCJcXGV0YV97XFxTaWdtYV4rLFh9Il0sWzEyLDQsImEiXSxbMTAsMTMsIlxcU2lnbWFeKyhzKSJdXQ==
\begin{tikzcd}[ampersand replacement=\&,baseline=(\tikzcdmatrixname-4-3.base)]
	{\Sigma(X)\otimes X} \& {\Sigma(X) \otimes S(X)} \& {\Sigma(X \otimes SX)} \& {\Sigma^+(X \otimes SX)} \& {S(X \otimes SX + X + X)} \\
	\& {\Sigma(X)\otimes X} \& {\Sigma(X\otimes X)} \& {\Sigma^+(X\otimes X)} \& {S(X\otimes X+X+X)} \\
	{ X \otimes X} \&\& {\Sigma(X)} \& {\Sigma^+(X)} \& SX \\
	\&\& X
	\arrow["{a\otimes X}"', from=1-1, to=3-1]
	\arrow["{S(X\otimes \overline{[v,a]} + X + X)}"', from=1-5, to=2-5]
	\arrow["s"', from=3-1, to=4-3]
	\arrow["{S[s,X,X]}"', from=2-5, to=3-5]
	\arrow["{\overline{[v,a]}}", from=3-5, to=4-3]
	\arrow["{\Sigma(X) \otimes \eta^S_X}", from=1-1, to=1-2]
	\arrow["{st_{X,(X,v_X)}}", from=1-2, to=1-3]
	\arrow["{in_2}", from=1-3, to=1-4]
	\arrow["{\eta_{\Sigma^+,in_1}}", from=1-4, to=1-5]
	\arrow["{\Sigma(X\otimes \overline{[v,a]})}"', from=1-3, to=2-3]
	\arrow["{\eta_{\Sigma^+,in_1}}"', from=2-4, to=2-5]
	\arrow["{in_2}"', from=2-3, to=2-4]
	\arrow["{\Sigma(X)\otimes \overline{[v,a]}}"', from=1-2, to=2-2]
	\arrow["{st_{X,(X,v)}}"', from=2-2, to=2-3]
	\arrow[curve={height=12pt}, Rightarrow, no head, from=1-1, to=2-2]
	\arrow["{\Sigma(s)}", from=2-3, to=3-3]
	\arrow["{in_2}", from=3-3, to=3-4]
	\arrow["{\eta_{\Sigma^+,X}}", from=3-4, to=3-5]
	\arrow["a", from=3-3, to=4-3]
	\arrow["{\Sigma^+(s)}", from=2-4, to=3-4]
\end{tikzcd}
    } \qedhere
  \end{center}
\end{proof}

However, $s$ and $v$ are not yet required to satisfy the remaining monoid
equations. In order to enforce them, relying on
Corollary~\ref{cor:benign:systems}, we introduce two structural
equational systems, one for each equation.
First, we introduce some notation.
\begin{definition}
 For any object $Z$, we define $j_Z∶ Z⊗Z→ TZ$ to be the composite
    $$Z⊗Z \xto{Z ⊗ η^S_Z} Z ⊗ SZ = Γ_{SZ}Z \xto{η_{Γ_SΔ,Z}} TZ.$$  
  \end{definition}

\begin{definition}
  For our first structural equational system $E₁$, we take
  \begin{itemize}
  \item  as auxiliary bifunctor 
    $Θ₁(X,Y) = (X⊗Y)⊗Y$; % and $Θ₂(X,Y) = X⊗I$.
  \item as incremental structural law, say $d₁$, we take the left-hand 
    composite of Figure~\ref{fig:ssls:fpt};
    \begin{figure}[htp]
      \begin{center}
\begin{tikzcd}[ampersand replacement=\&]
	{\Sigma^+X \otimes Y \otimes Y} \&\& {\Sigma^+X \otimes I} \\
	{I\otimes Y \otimes Y + \Sigma X \otimes Y \otimes Y} \&\& {I\otimes I + \Sigma X \otimes I} \\
	{Y \otimes Y + \Sigma X \otimes SY \otimes SY} \&\& {I + \Sigma(X\otimes I)} \\
	{TY + \Sigma (X \otimes SY) \otimes SY} \&\& {\Sigma^+(X\otimes I)} \\
	{TY + \Sigma(X \otimes SY \otimes SY)} \&\& {S(X\otimes I)} \\
	{TY + \Sigma^+(X \otimes SY \otimes SY)} \&\& {ST(X\otimes I)} \\
	{STY + ST(X \otimes STY \otimes STY)} \&\& {ST(X \otimes I + X + Y)} \\
	{ST(X \otimes STY \otimes STY + X + Y)}
	\arrow[Rightarrow, no head, from=1-1, to=2-1]
	\arrow["{\lambda_Y \otimes Y + X \otimes \eta^S_Y \otimes \eta^S_Y}", from=2-1, to=3-1]
	\arrow["{j_Y + st_{X,(SY,v_Y)} \otimes SY}", from=3-1, to=4-1]
	\arrow["{TY + st_{X\otimes SY,(SY,v_Y)}}", from=4-1, to=5-1]
	\arrow["{[STin_3,STin_1]}", from=7-1, to=8-1]
	\arrow[Rightarrow, no head, from=1-3, to=2-3]
	\arrow["{\lambda_I + st_{X,(I,id_I)}}", from=2-3, to=3-3]
	\arrow[Rightarrow, no head, from=3-3, to=4-3]
	\arrow["{\eta_{\Sigma^+}}", from=4-3, to=5-3]
	\arrow["{S\eta^T_{X \otimes I}}", from=5-3, to=6-3]
	\arrow["{ST in_1}", from=6-3, to=7-3]
	\arrow["{\eta^S_{TY} + \eta_{\Sigma^+}\eta^T (X\otimes S\eta^T_Y \otimes S\eta^T_Y)}", from=6-1, to=7-1]
	\arrow["{TY + in_2}", from=5-1, to=6-1]
\end{tikzcd}
    \end{center}
      \caption{Two incremental structural laws}
      \label{fig:ssls:fpt}
    \end{figure}
  \item as first structural interpretation, say $L₁$, at any $X$, we take 
    $$X⊗X⊗X \xto{j_X ⊗ ηᵀ_X} TX ⊗ TX \xto{j_{TX}} TTX \xto{η^Sμᵀ_X} STX;$$
  \item as second structural interpretation, say $R₁$, at any $X$, we take
    $$X⊗X⊗X \xto{α_{X,X,X}} X ⊗ (X ⊗ X) \xto{ηᵀ_X ⊗ j_X} TX ⊗ TX \xto{j_{TX}} TTX \xto{η^Sμᵀ_X} STX.$$
  \end{itemize}
\end{definition}

\begin{definition}
  For our second structural equational system $E₂$, we take
  \begin{itemize}
  \item  as auxiliary bifunctor 
    $Θ₂(X,Y) = X⊗I$;
  \item as simple structural law, say $d₂$, we take the right-hand 
    composite of Figure~\ref{fig:ssls:fpt};
  \item as first structural interpretation, say $L₂$, at any $X$, we
    take
    $$X ⊗ I \xto{η^S_X ⊗ v_X} SX ⊗ SX \xto{j_{SX}} TSX \xto{δ_X} STX;$$
  \item as second structural interpretation, say $R₂$, at any $X$, we
    simply take
    $$X ⊗ I \xto{ρ_X} X \xto{η^{ST}_X} STX.$$
  \end{itemize}
\end{definition}

\begin{theorem}
  For any locally finitely presentable, leftist monoidal category
  $(𝐂,⊗,I,α,λ,ρ)$ and pointed strong endofunctor $Σ$, the above data
  $E₁$ and $E₂$ indeed form structural equational systems, and
  $(E₁+E₂)$-algebras form a category isomorphic to $Σ\Mon$ over $𝐂$.

  Furthermore, there are unique morphisms
  \begin{mathpar}
    s∶ S0 ⊗ S0 → S0 \and v∶ I → S0
  \end{mathpar}
  rendering the pentagon~\eqref{eq:pentagon:Fmonoids} and
  triangle~\eqref{eq:left:unital} commutative.

  Finally, these morphisms, together with $μ^S₀$, satisfy the
  remaining $Σ$-monoid axioms, and make the initial $Σ⁺$-algebra $S0$
  into an initial $Σ$-monoid.
\end{theorem}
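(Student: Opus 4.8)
The plan is to establish the three assertions of the theorem in turn, building on the characterisation of $ST$-algebras obtained earlier in this subsection together with the benign-equation machinery of §\ref{ss:ses}. First, to see that $E₁$ and $E₂$ are genuine structural equational systems, note that the laws $d₁$ and $d₂$ are by construction the displayed composites of Figure~\ref{fig:ssls:fpt}, hence natural transformations of the required types $Θ_i(Σ⁺X,Y) → ST(Θ_i(X,STY)+X+Y)$, so nothing remains to be checked for them. The real content is that $L₁,R₁$ and $L₂,R₂$ are structural interpretations of $d₁$ and $d₂$ respectively, i.e.\ that each makes the coherence diagram of Figure~\ref{fig:interp} commute. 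I would unfold the definitions of $j_Z$, of $δ$, and of the four interpretations, then chase Figure~\ref{fig:interp} by case analysis on $Σ⁺X = I + ΣX$, exactly as in the proof of Proposition~\ref{prop:pentas}: the $I$-summand yields a base case reducing to the unitor coherences, while the $ΣX$-summand is where the pointed-strength axioms of Definition~\ref{def:pts} are used to commute $st$ past the structural operations.

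For the isomorphism $(E₁+E₂)\alg ≅ Σ\Mon$, I would assemble facts already in place. By Theorem~\ref{thm:models}, an $ST$-algebra is exactly a model of the simple structural law of this subsection, i.e.\ an object $X$ with a $Σ$-algebra structure $a$, a point $v∶ I→X$, and a multiplication $s∶ X⊗X→X$ making~\eqref{eq:pentaplus} commute, which by Proposition~\ref{prop:pentas} is equivalent to joint commutation of the pentagon~\eqref{eq:pentagon:Fmonoids} and the left-unit triangle~\eqref{eq:left:unital}. It then remains to identify the two equations cut out by passing from $ST\Alg$ to $(E₁+E₂)\alg$ with the two missing monoid laws. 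Unwinding the induced $Θ_iΔ$-algebra structures on such an $X$ (as in the proof sketch of Theorem~\ref{thm:benign}), $L₁$ and $R₁$ induce the composites $s∘(s⊗X)$ and $s∘(X⊗s)∘α_{X,X,X}$ on $(X⊗X)⊗X$, so the $E₁$-condition is associativity of $s$, whereas $L₂$ and $R₂$ induce $s∘(X⊗v)$ and $ρ_X$ on $X⊗I$, so the $E₂$-condition is the right-unit law. Hence an $(E₁+E₂)$-algebra is precisely a $Σ$-algebra carrying a monoid structure satisfying~\eqref{eq:pentagon:Fmonoids}, that is, a $Σ$-monoid; the correspondence is functorial and compatible with the forgetful functors to $𝐂$, yielding the isomorphism over $𝐂$.

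For the final existence, uniqueness, and initiality statement, I would combine admissibility with the preceding. As $⊗$ is cocontinuous in its first argument by leftism, so is $Γ(X,Y)=X⊗Y$, whence $T$ is constant-free by Lemma~\ref{lem:constantless} and Theorem~\ref{thm:benign} applies to each of $E₁,E₂$. Corollary~\ref{cor:benign:systems} then gives that $ST → (\tilde{E}₁+\tilde{E}₂)^⋆$ is admissible, so by Proposition~\ref{prop:admissible:street} the initial $ST$-algebra $S0$ carries a unique $(\tilde{E}₁+\tilde{E}₂)^⋆$-algebra structure and is initial among such. Through the isomorphism of the second part, this says exactly that $S0$ has a unique $Σ$-monoid structure extending its $Σ⁺$-algebra structure and is initial in $Σ\Mon$. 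Concretely, $S0$ already bears a unique $ST$-algebra structure—the initial one—which by Proposition~\ref{prop:pentas} determines unique $s,v$ satisfying~\eqref{eq:pentagon:Fmonoids} and~\eqref{eq:left:unital}; benignness of $E₁$ and $E₂$ then forces these same $s,v$ to satisfy associativity and the right-unit law, so, together with the $Σ$-algebra structure read off from $μ^S₀$, all $Σ$-monoid axioms hold and $S0$ is the initial $Σ$-monoid, recovering $Σ^⊛$ of Example~\ref{ex:fpt}.

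The step I expect to be the main obstacle is the structural-interpretation coherence of the first part, and within it the associativity case $E₁$: the diagram of Figure~\ref{fig:interp} is large, and closing it requires carefully tracking how $j$, $δ$, and the strength interact across the stacked $ST$ layers, invoking the associativity coherence square of Definition~\ref{def:pts} to push $st$ past the associator $α$. The remaining interpretations should be comparatively mechanical once the suppressed units $η$ and coproduct injections are reinstated.
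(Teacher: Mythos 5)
Your proposal is correct and takes essentially the same route as the paper: the second claim is read off from Theorem~\ref{thm:simple}, Theorem~\ref{thm:models} and Proposition~\ref{prop:pentas}, the third follows from Corollary~\ref{cor:benign:systems} (via Lemma~\ref{lem:constantless} and Proposition~\ref{prop:admissible:street}), and everything reduces to verifying the coherence diagram of Figure~\ref{fig:interp} for $L_1$, $R_1$, $L_2$, $R_2$, which the paper carries out in §\ref{app:ptstrengths} by exactly the termwise chases on $\Sigma^+X \cong I + \Sigma X$ you outline, with the unitor handling the $I$-summand and the pointed-strength axioms the $\Sigma X$-summand, and with your prediction confirmed that the associativity system $E_1$ is the hard case. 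The only gap between your write-up and the paper is one of execution, not of approach: those chases occupy the whole appendix and contain genuine subtleties you could not have foreseen in detail — e.g.\ auxiliary compatibilities between $\eta_{\Gamma_S}$ and $\delta$ (Lemmas~\ref{lem:eta:gamma:s} and~\ref{lem:mu:eta:delta}), an application in the $R_1$ case of naturality of $st$ at a morphism $S(X\otimes X)\to STX$ that must first be shown to be a morphism of \emph{pointed} objects, and a subdiagram in the $L_2$ case that commutes only after postcomposition with $\mu^S_{TX}\circ S\delta_X\circ\delta_{SX}$ — but your strategy is the one the paper follows.
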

\begin{proof}
  We start by reducing to proving the first claim, by observing that
  \begin{itemize}
  \item the second claim has already been stated, right before
    Proposition~\ref{prop:pentas}, and,
  \item assuming the first claim, the last one follows directly by
    Corollary~\ref{cor:benign:systems}.
  \end{itemize}
  The first claim is proved in~§\ref{app:ptstrengths}.
\end{proof}

\section{Conclusion and perspectives}\label{s:conclu}
We have introduced admissible monad morphisms as a foundation for
syntax with auxiliary functions. We have then shown how to generate
admissible morphisms from monad distributive laws, and defined simple
structural laws and incremental structural laws as basic notions of
signatures to generate such monad distributive laws, hence admissible
morphisms.  We have also defined structural equational systems as a
basic format for ensuring that the generated auxiliary functions
satisfy some (hopefully useful) properties.  We have used these tools
to cover significant examples of auxiliary functions, from addition
and multiplication of natural numbers to binding evaluation contexts,
capture-avoiding substitution, and partial differentiation.  We have
finally shown that the standard framework of~\citet{fiore:presheaf} is
subsumed by ours.

An important question, already raised in the introduction, remains
open: can we devise some further tools to ensure that the auxiliary
functions generated by our signatures are compatible with severe
equations, i.e., equations on the basic syntax?

Finally, our framework is designed to account for auxiliary functions
defined on top of an existing syntax. It would be useful to extend it
to settings where the syntax and functions are mutually dependent, as
in induction-recursion~\cite{DBLP:conf/dagstuhl/DybjerS01}.

\bibliography{bib}

\appendix

\section{Proof of Proposition~\ref{prop:admissible:street}}\label{s:admissible:street}
First, $α_X:SX → TX$ is an $S$-algebra morphism between $SX$ and
$𝐬𝐞𝐦(α)(TX)$. Since $S∅$ is an initial $S$-algebra, $𝐬𝐞𝐦(α)(T∅)$ is
initial if and only if $α_{∅}$ is an isomorphism.  This argument shows
$\ref{item:admissible} ⇔ \ref{item:preserves}$. Morever, since $S∅$ is
an initial $S$-algebra, creation implies preservation, so
$\ref{item:creates} ⇒ \ref{item:preserves}$.

Finally, let us show that preservation implies creation,
$\ref{item:preserves} ⇒ \ref{item:creates}$.  Let
$SX \xrightarrow{x} X$ be an initial algebra. We show that there
is a unique $T$-algebra structure on $X$ that is mapped to $x$ by
$𝐬𝐞𝐦(α)$, and moreover that it is initial as a $T$-algebra. Since, by
hypothesis, $𝐬𝐞𝐦(α)$ preserves the initial object, $𝐬𝐞𝐦(α)(T∅)$ is
also initial, and thus $X$ is isomorphic to $T∅$, as
$S$-algebras. Now, $𝐬𝐞𝐦(α)$ is an isofibration: through this
isomorphism, $X$ inherits a $T$-algebra structure which is mapped to
$SX → X$ by $𝐬𝐞𝐦(α)$, and this isomorphism lifts to the category of
$T$-algebras. Since this $T$-algebra $TX → X$ is isomorphic to $T∅$,
it is initial. It remains to show uniqueness.  Consider an alternative
$T$-algebra structure on $X$, that is also mapped to 
$SX → X$ by $𝐬𝐞𝐦(α)$.  By initiality of $X$ as a $T$-algebra, there is a morphism
$X → X$ between these two $T$-algebras. It is enough to show that its
image by $𝐬𝐞𝐦(α)$ is the identity morphism, which follows from
initiality of $X$ as a $S$-algebra.

\section{Colimits of monads and their algebras}\label{s:monad:colims}
In this section, we state a few useful results on colimits of monads,
and on limits of the corresponding monadic functors, which notably
entail Lemma~\ref{lem:monad:coprod:alg}.

\begin{definition}
  For any functor $F∶ 𝐀 → 𝐌𝐧𝐝(𝐂)$, let $F\Alg$ denote a (choice of)
  limit of the induced diagram
$$\op{𝐀} → \op{𝐌𝐧𝐝(𝐂)} ≃ 𝐌𝐨𝐧𝐚𝐝𝐢𝐜/𝐂 ↪ 𝐂𝐀𝐓/𝐂\rlap{.}$$
\end{definition}
Thus, an $F$-algebra is an object $X ∈ 𝐂$, equipped with compatible
$F_A$-algebra structures, for all $A ∈ 𝐀$.

The next proposition, due to Kelly, relates this with the colimit of
$F$, when the latter exists.

For any functor $F∶ 𝐀 → 𝐌𝐧𝐝(𝐂)$, if $T ∈ 𝐌𝐧𝐝(𝐂)$ is a colimit of $F$,
then the colimiting cocone induces a cone over the diagram 
$$\op{𝐀} → \op{𝐌𝐧𝐝(𝐂)} ≃ 𝐌𝐨𝐧𝐚𝐝𝐢𝐜/𝐂 ↪ 𝐂𝐀𝐓/𝐂\rlap{,}$$
hence a functor $m∶ T\Alg → F\Alg$ over $𝐂$.

\begin{proposition}[{\cite[Proposition~26.3]{KellyUnified}}]
  For any functor $F∶ 𝐀 → 𝐌𝐧𝐝(𝐂)$ with colimit $T$, if $𝐂$ is complete
  (and locally small), then the functor $m∶ T\Alg → F\Alg$ is an
  isomorphism.
\end{proposition}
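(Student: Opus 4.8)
The plan is to exhibit $U^T∶ T\Alg → 𝐂$ itself as a limit, in $𝐂𝐀𝐓/𝐂$, of the diagram $A ↦ (U^{F_A}∶ F_A\Alg → 𝐂)$, so that the comparison functor $m$ becomes an isomorphism by uniqueness of limits. Recall that $F\Alg$ is \emph{defined} as such a limit, and that $m$ is induced by the cone on $T\Alg$ whose legs are $𝐬𝐞𝐦$ applied to the colimit cocone $F_A → T$. Now the equivalence $𝐬𝐞𝐦∶ \op{𝐌𝐧𝐝(𝐂)} → 𝐌𝐨𝐧𝐚𝐝𝐢𝐜/𝐂$ already tells us that $U^T = 𝐬𝐞𝐦(T)$ is the limit of the $U^{F_A}$ in the full subcategory $𝐌𝐨𝐧𝐚𝐝𝐢𝐜/𝐂$ (a colimit in $𝐌𝐧𝐝(𝐂)$ being a limit in $\op{𝐌𝐧𝐝(𝐂)}$). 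The genuine content is therefore that this limit is \emph{preserved} by the inclusion $𝐌𝐨𝐧𝐚𝐝𝐢𝐜/𝐂 ↪ 𝐂𝐀𝐓/𝐂$, which is not automatic for a full subcategory.

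To establish preservation, I would use the codensity monad. For any functor $P∶ 𝐃 → 𝐂$ with $𝐃$ small, completeness of $𝐂$ ensures the pointwise right Kan extension $\Ran_P P$ exists; it carries a canonical monad structure, and the universal property of $\Ran_P$ gives, naturally in the monad $R$, a bijection
$$𝐂𝐀𝐓/𝐂(P, U^R) ≅ 𝐌𝐧𝐝(𝐂)(R, \Ran_P P).$$
Here a morphism $P → U^R$ in $𝐂𝐀𝐓/𝐂$ is exactly a lifting of $P$ to $R\Alg$, i.e.\ an $R$-algebra structure $RP → P$ on the functor $P$; and such algebra structures are precisely the monad morphisms into the codensity monad $\Ran_P P$.

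Granting this, the defining universal property of $T = \colim_A F_A$ in $𝐌𝐧𝐝(𝐂)$ yields, for every such $P$, the chain
$$𝐂𝐀𝐓/𝐂(P, U^T) ≅ 𝐌𝐧𝐝(𝐂)(T, \Ran_P P) ≅ \lim_A 𝐌𝐧𝐝(𝐂)(F_A, \Ran_P P) ≅ \lim_A 𝐂𝐀𝐓/𝐂(P, U^{F_A}),$$
naturally in $P$, which is precisely the statement that $U^T$ is the limit of the $U^{F_A}$ in $𝐂𝐀𝐓/𝐂$. To read off that $m$ is an isomorphism of categories, I would instantiate $P$ at the terminal category $1 → 𝐂$ (giving that $m$ is bijective on objects, since a $T$-algebra structure on an object $X$ corresponds to a compatible family of $F_A$-algebra structures on $X$) and at the walking arrow $\mathbf{2} → 𝐂$ (giving that $m$ is fully faithful). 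Both test categories are small, so their codensity monads exist.

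The main obstacle is the bookkeeping around $\Ran_P P$: checking that it is a monad, and that monad morphisms $R → \Ran_P P$ correspond to \emph{algebra structures} $RP → P$ rather than to arbitrary natural transformations. Most delicately, in the universal property of a limit in $𝐂𝐀𝐓/𝐂$ the test object $P$ ranges over categories of arbitrary size, for which $\Ran_P P$ need not exist; the point of reducing to the finite test categories $1$ and $\mathbf{2}$ is exactly to sidestep this, and it is there that completeness of $𝐂$ does its work.
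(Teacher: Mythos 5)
Your proposal is correct, and it is essentially the proof behind the paper's citation: the paper gives no argument of its own, deferring to \cite[Proposition~26.3]{KellyUnified}, whose proof is exactly this semantics--structure argument — representing liftings $P \to U^R$ by monad morphisms into the codensity monad $\Ran_P P$, using the colimit's universal property in $𝐌𝐧𝐝(𝐂)$, and testing only at the small categories $1$ and $\mathbf{2}$ (where completeness of $𝐂$ guarantees the codensity monads exist) to conclude that the comparison $m$ is bijective on objects and on hom-sets, hence an isomorphism over $𝐂$. Your handling of the one genuinely delicate point — that the full limit property in $𝐂𝐀𝐓/𝐂$ cannot be tested against arbitrary (large) $P$, so one verifies the isomorphism directly instead — matches Kelly's resolution.
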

\begin{remark}
  Kelly explicitly requires $𝐂$ to be locally small, which is implicit
  here by the convention fixed in~§\ref{s:intro}.
\end{remark}

This immediately entails Lemma~\ref{lem:monad:coprod:alg}.

\begin{corollary}\label{cor:alg:coprod:free}
  Consider two monads $S$ and $T$ on a locally finitely presentable
  category $𝐂$, and a monad distributive law $δ∶ TS → ST$.  For any
  bifunctor $Γ∶ 𝐂 × 𝐂 → 𝐂$, letting $T' ≔ Γ_S^\fleur$, the
  following square is a pullback.
  \begin{center}
    \Diag{%
      \stdpbk %
    }{%
      (T⊕T')\Alg \& T\Alg \\
      Γ_S\alg \& 𝐂
    }{%
      (m-1-1) edge[labela={}] (m-1-2) %
      edge[labell={}] (m-2-1) %
      (m-2-1) edge[labelb={}] (m-2-2) %
      (m-1-2) edge[labelr={}] (m-2-2) %
    }%
    \end{center}
  \end{corollary}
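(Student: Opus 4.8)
The plan is to derive this directly from Lemma~\ref{lem:monad:coprod:alg}, which already establishes the analogous pullback square with $T'\Alg$ in the lower-left corner, by transporting that corner along the standard identification $T'\Alg ≅ Γ_S\alg$. Since $𝐂$ is locally finitely presentable it is in particular complete and locally small, and the monad coproduct $T⊕T'$ exists (it is named in the statement, so its algebras make sense), so Lemma~\ref{lem:monad:coprod:alg} applies and tells us that the square with vertices $(T⊕T')\Alg$, $T\Alg$, $T'\Alg$, and $𝐂$, whose legs are the evident forgetful functors, is a pullback in $𝐂𝐀𝐓$.

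The key step is then the observation that $T' = Γ_S^\fleur$ is by definition $(Γ_SΔ)^*$, the free monad on the endofunctor $Γ_SΔ$. Hence, by the standard fact recalled in~§\ref{s:intro} -- namely that $F^*\Alg ≅ F\alg$, the isomorphism sending a monad algebra $F^*c → c$ to its restriction $Fc → F^*c → c$ -- we obtain an isomorphism $T'\Alg ≅ Γ_S\alg$. First I would spell out that this isomorphism is one of categories \emph{over} $𝐂$, i.e.\ that it commutes with the two forgetful functors down to $𝐂$; this is immediate from the description of the isomorphism, which leaves underlying objects and morphisms untouched. Composing the lower-left leg $T'\Alg → 𝐂$ of the pullback cospan with this isomorphism thus replaces it by $Γ_S\alg → 𝐂$ while leaving the cospan unchanged up to isomorphism.

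Finally, I would invoke the elementary fact that a pullback is preserved when one leg of its defining cospan is precomposed with an isomorphism: since the square from Lemma~\ref{lem:monad:coprod:alg} is a pullback and $Γ_S\alg ≅ T'\Alg$ over $𝐂$, the square in the statement, now with $Γ_S\alg$ in the lower-left corner, is again a pullback. I do not expect any genuine obstacle here: all the mathematical content is carried by Lemma~\ref{lem:monad:coprod:alg} together with the free-monad/algebra identification, and what remains is the bookkeeping that the latter is an isomorphism over $𝐂$ -- so that a strict pullback in $𝐂𝐀𝐓$, rather than merely a bilimit, is preserved -- and the (contextual) verification that the coproduct $T⊕T'$ exists in the situation at hand.
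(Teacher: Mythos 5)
Your proposal is correct and is essentially the paper's own argument: the paper proves this corollary precisely by combining Lemma~\ref{lem:monad:coprod:alg} with the isomorphism $T'\Alg ≅ Γ_S\alg$ over $𝐂$ coming from the free-monad identification $F^*\Alg ≅ F\alg$. The bookkeeping you make explicit --- that this isomorphism commutes with the forgetful functors to $𝐂$ and that replacing a leg of a pullback cospan by an isomorphic one preserves the pullback --- is exactly what the paper leaves implicit.
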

  \begin{proof}
    By Lemma~\ref{lem:monad:coprod:alg} and the isomorphism
    $T'\Alg ≅ Γ_S\alg$ over $𝐂$.
  \end{proof}
  \begin{remark}
    Otherwise said, giving a $(T⊕T')$-algebra structure on $X$ is
    equivalent to giving a $T$-algebra structure and a
    $Γ_S$-algebra structure.
  \end{remark}

\section{Proof of Theorem~\ref{thm:incremental}}
\label{app:thm-incremental}
 % FT → MF
% Kleisli
% X ↦ FX
% X → TX
% FX → FTX → MFX

  \begin{definition}
    A bifunctor $Γ∶ 𝐂² → 𝐂$ on a category $𝐂$ is \alert{bipointed}
    if it comes equipped with natural transformations
    $$X \xto{α^Γ_{X,Y}} Γ(X,Y) \xot{β^Γ_{X,Y}} Y.$$
  \end{definition}

  \begin{definition}
    For any bifunctor $Γ$, let $Γ^•(X,Y) ≔ Γ(X,Y) + X + Y$.
  \end{definition}
  \begin{proposition}\label{prop:bipointedalg}
    For any bifunctor $Γ$, the bifunctor $Γ^•$ is bipointed.  (It is
    in fact the free bipointed bifunctor over $Γ$.)
  \end{proposition}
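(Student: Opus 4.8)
The plan is to read off both pointings directly from the coproduct structure of $Γ^•(X,Y) = Γ(X,Y) + X + Y$, and then to derive freeness from the universal property of the coproduct. First I would take $α^{Γ^•}_{X,Y}$ and $β^{Γ^•}_{X,Y}$ to be the second and third coproduct injections $X → Γ(X,Y)+X+Y$ and $Y → Γ(X,Y)+X+Y$, respectively. Naturality in $(X,Y)$ is immediate, since the coproduct injections are natural in each summand and $Γ^•$ is defined summand-wise. This already exhibits $Γ^•$ as bipointed.

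For the freeness claim I would first make the ambient category precise: a morphism of bipointed bifunctors $(B,α^B,β^B) → (B',α^{B'},β^{B'})$ is a natural transformation $φ∶ B → B'$ satisfying $φ ∘ α^B = α^{B'}$ and $φ ∘ β^B = β^{B'}$, and the forgetful functor $U$ sends a bipointed bifunctor to its underlying bifunctor $B$. The unit at $Γ$ is then the first coproduct injection $η_Γ ≔ in_1∶ Γ(X,Y) → Γ(X,Y)+X+Y$, viewed as a morphism $Γ → U(Γ^•)$ of bifunctors.

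Then I would verify the universal property. Given any bipointed bifunctor $(B,α^B,β^B)$ together with a bifunctor morphism $f∶ Γ → B$, a bipointed morphism $\bar f∶ Γ^• → B$ with $U(\bar f) ∘ η_Γ = f$ is precisely a natural transformation $Γ(X,Y)+X+Y → B(X,Y)$ whose three restrictions along the injections are prescribed: the restriction along $in_1$ is forced to be $f$ by the unit equation, and the restrictions along $in_2$ and $in_3$ are forced to be $α^B$ and $β^B$ by the bipointed-morphism equations $\bar f ∘ α^{Γ^•} = α^B$ and $\bar f ∘ β^{Γ^•} = β^B$. By the universal property of the coproduct these three restrictions determine a unique such natural transformation, namely the cotupling $\bar f = [f,α^B,β^B]$. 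This establishes that $(Γ^•,α^{Γ^•},β^{Γ^•})$, equipped with unit $in_1$, is the free bipointed bifunctor over $Γ$.

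There is essentially no obstacle here, as the statement is entirely formal; the only points deserving a word are the naturality of the coproduct injections and the explicit identification of a bipointed morphism out of $Γ^•$ with the cotupling of its three injection-restrictions, both of which are routine.
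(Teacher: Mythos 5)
Your proof is correct and is exactly the argument the paper has in mind: the paper's own proof of this proposition is just ``Straightforward,'' and your choice of the second and third coproduct injections as the two pointings, with freeness witnessed by the cotupling $[f,\alpha^B,\beta^B]$ via the universal property of the coproduct, is the intended routine verification. Nothing is missing, and your explicit treatment of the freeness claim (which the paper only states parenthetically) is a faithful filling-in rather than a different route.
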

  \begin{proof}
    Straightforward.
  \end{proof}

  \begin{notation}\label{not:bullet}
    When applied to, say, $Γ_F$ for some endofunctor $F$, the notation
    $Γ^•_F$ is ambiguous, as it could denote $(Γ^•)_F$ or
    $(Γ_F)^•$. The issue is even worse for, e.g., $Γ^•_{GF}$.  In
    order to resolve the ambiguity, we write $Γ^•_{G|F}$ for
    $(Γ_G)^•_F$, i.e.,
    $$Γ^•_{G|FY}X ≔ Γ(X,GFY)+X+FY.$$
  \end{notation}

  We now want to show that any incremental structural law
  $$h_{X,Y}∶ Γ_Y(Σ(X)) → ST(Γ_{STY}(X)+X+Y)\rlap{,}$$
  or using Notation~\ref{not:bullet}
  $$h_{X,Y}∶ Γ_YΣX → ST Γ^•_{ST|Y}X\rlap{,}$$  
  induces an incremental lifting of $S$ to $T'\Alg$ along $δ$.  For
  this, we extend $h$ to a natural transformation
  $$h^{•ω}_{X,Y}∶ Γ_{SY}SX → ST(Γ^•_{ST|Y}(X))$$
  making the following triangle commute
  \begin{center}
    % file:///home/thirs/github/quiver/src/index.html?q=WzAsMyxbMiwwLCJcXEdhbW1hX3tTWX1TWCJdLFsxLDEsIlNUXFxHYW1tYV5cXGJ1bGxldF97U1R8WX1YIl0sWzAsMCwiXFxHYW1tYV97WX0gXFxTaWdtYSBYIl0sWzAsMSwiaF57XFxidWxsZXRcXG9tZWdhfV97WCxZfSIsMCx7ImxhYmVsX3Bvc2l0aW9uIjoyMH1dLFsyLDAsIlxcR2FtbWFfe1xcZXRhXlNfWX0gXFxldGFfe1xcU2lnbWEsWH0iXSxbMiwxLCJoX3tYLFl9IiwyLHsibGFiZWxfcG9zaXRpb24iOjIwfV1d
    \begin{tikzcd}[ampersand replacement=\&]
      {\Gamma_{Y} \Sigma X} \&\& {\Gamma_{SY}SX} \\
      \& {ST\Gamma^\bullet_{ST|Y}X\rlap{,}}
      \arrow["{h^{\bullet\omega}_{X,Y}}"{pos=0.2}, from=1-3, to=2-2]
      \arrow["{\Gamma_{\eta^S_Y} \eta_{\Sigma,X}}", from=1-1, to=1-3]
      \arrow["{h_{X,Y}}"'{pos=0.2}, from=1-1, to=2-2]
    \end{tikzcd}
  \end{center}
  and then show how it induces an incremental lifting.  This route
  being slightly long-winded, let us point to the definition of
  $h^{•ω}$ (Definition~\ref{def:hbulletomega}) and the proof that it
  induces an icremental lifting (Proposition~\ref{prop:mualg}).

  The plan is to successively define families
  \begin{itemize}
  \item $h^•_{X,Y}∶ Γ^•_{S|Y}(Σ(X)) → ST(Γ^•_{ST|Y}(X))$,
  \item $h^∘_{X,Y}∶ Γ^\pt_{STY}ΣX → STΓ^•_{ST|Y}X$,
  \item $\widetilde{h^∘}_{X,Y}∶ ΣX → ℸ^×_{STY} ST O_Y Γ^∘_{STY} X$,
  \item $\widetilde{h^∘}^ω_{X,Y}∶ SX → ℸ^×_{STY} ST O_Y Γ^∘_{STY} X$,
  \item $\widetilde{\widetilde{h^∘}^ω}_{X,Y}∶ Γ^\pt_{STY}SX →  ST O_Y Γ^∘_{STY} X$
  \end{itemize}
  (involving new functors defined along the way), from which we then
  define $h^{•ω}$ and prove that it satisfies the properties stated in
  Lemma~\ref{lem:coh:hbulletomegamu} below. We then use this to
  construct the desired incremental lifting.

  \subsection{The family $h^•$}
  Let us start with $h^•$, which requires the following preliminary
  definition.
  \begin{definition}
    Consider any pointed endofunctor $F$. We let
    $↑^F$ denote the natural transformation defined at any $X,Y ∈ 𝐂$ by
    the composite
    $$X + FY \xto{η^F_X+FY} FX+FY \xto{[Fin₁,Fin₂]}F(X+Y).$$
  \end{definition}
  \begin{remark}
    In particular, for any $F$    
    and $G$ with $F$ pointed,
    we have  $$↑^F_{Γ_{GFY}X + X,Y}∶ Γ^•_{G|FY}X → F Γ^•_{GF|Y}X.$$
  \end{remark}
  \begin{lemma}\label{lem:delta:uparrow}
    For any bifunctor $G∶ 𝐂² → 𝐂$, distributive law $δ∶ TS → ST$, and
    objects $X,Y ∈ 𝐂$, the following diagram commutes.
    \begin{center}
      % file:///home/thirs/github/quiver/src/index.html?q=WzAsNixbMCwwLCJHXlxcYnVsbGV0X3t8VFNZfSBYIl0sWzAsMSwiVEdeXFxidWxsZXRfe1R8U1l9WCJdLFsxLDAsIkdeXFxidWxsZXRfe3xTVFl9IFgiXSxbMiwwLCJTXlxcYnVsbGV0X3tTfFRZfSBYIl0sWzEsMSwiVFNHXlxcYnVsbGV0X3tUU3xZfVgiXSxbMiwxLCJTVEdeXFxidWxsZXRfe1NUfFl9WCJdLFswLDEsIiBcXHVwYXJyb3deVCIsMl0sWzIsMywiIFxcdXBhcnJvd15TIl0sWzAsMiwiR15cXGJ1bGxldF97fFxcZGVsdGFfWX0gWCJdLFsxLDQsIlRcXHVwYXJyb3deUyIsMl0sWzQsNSwiXFxkZWx0YSBHXlxcYnVsbGV0X3tcXGRlbHRhfFl9WCIsMl0sWzMsNSwiUyBcXHVwYXJyb3deVCJdXQ==
\begin{tikzcd}[ampersand replacement=\&]
	{G^\bullet_{|TSY} X} \& {G^\bullet_{|STY} X} \& {S^\bullet_{S|TY} X} \\
	{TG^\bullet_{T|SY}X} \& {TSG^\bullet_{TS|Y}X} \& {STG^\bullet_{ST|Y}X}
	\arrow["{ \uparrow^T}"', from=1-1, to=2-1]
	\arrow["{ \uparrow^S}", from=1-2, to=1-3]
	\arrow["{G^\bullet_{|\delta_Y} X}", from=1-1, to=1-2]
	\arrow["{T\uparrow^S}"', from=2-1, to=2-2]
	\arrow["{\delta G^\bullet_{\delta|Y}X}"', from=2-2, to=2-3]
	\arrow["{S \uparrow^T}", from=1-3, to=2-3]
      \end{tikzcd}
    \end{center}
  \end{lemma}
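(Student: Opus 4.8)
The plan is to prove commutation by decomposing the domain $G^\bullet_{|TSY}X = G(X,TSY) + X + TSY$ into its three coproduct summands and checking that the two legs agree after precomposition with each injection; by the universal property of the coproduct this suffices. Throughout I unfold the definition of $↑^F$: for a pointed endofunctor $F$, the map $↑^F∶ Z + FY → F(Z+Y)$ is $[F\,in_1, F\,in_2] ∘ (η^F_Z + FY)$, so it acts on the ``body'' summand $Z$ by $η^F$ followed by $F\,in_1$, and on the ``point'' summand $FY$ by $F\,in_2$. In the diagram, the left‑hand $↑^T$ and the lower $T↑^S$ treat $G(X,TSY)+X$ as body and pull $T$ then $S$ out of the point in that order, whereas the top $↑^S$ and right‑hand $S↑^T$ pull $S$ then $T$ out \emph{after} $δ_Y$ has swapped $TS$ into $ST$; reconciling these two orders is exactly what the distributive law provides. (Here I read the top‑right corner as $SG^\bullet_{S|TY}X$, the codomain of $↑^S$ predicted by the Remark.)

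On the two body summands $G(X,TSY)$ and $X$, every $↑$ encountered treats the element as part of its body component, so both legs reduce to the coproduct injection into the $G(X,STY)+X$ part of $G^\bullet_{ST|Y}X$ followed by the composite unit $Sη^T ∘ η^S∶ {-} → ST({-})$ (in the $G$‑summand case, preceded by $G(X,δ_Y)$, which both legs apply). That these coincide is a routine chase using naturality of $η^S$ and $η^T$ together with the distributive‑law unit axiom $δ ∘ η^T S = Sη^T$, which is precisely what identifies the $δ$ appearing in the lower leg with a unit on the nose. Notably, the multiplication axioms of $δ$ play no role here.

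The only genuinely interesting summand is the point $TSY$. Along the lower‑left leg, $↑^T$ sends it by $T\,in_2$ into $TG^\bullet_{T|SY}X$, then $T↑^S$ sends it by $TS\,in_2$ into $TSG^\bullet_{TS|Y}X$, and finally $δ\,G^\bullet_{δ|Y}X$ relabels the body by $TS(G(X,δ_Y)+X+Y)$ — which restricts to the identity on the $Y$‑summand — before applying $δ$; by naturality of $δ$ this composite equals $ST(in) ∘ δ_Y$, where $in∶ Y → G^\bullet_{ST|Y}X$ injects the last summand. Along the top‑right leg, $G^\bullet_{|δ_Y}X$ applies $δ_Y∶ TSY → STY$ first, $↑^S$ injects by $S\,in_2$ into $SG^\bullet_{S|TY}X$, and $S↑^T$ injects by $ST\,in$, yielding $ST(in) ∘ δ_Y$ as well, so the two agree. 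I expect the main obstacle to be purely bookkeeping: tracking the several nested coproducts $(G(X,-)+X)+(-)$ and their injections as $S$ and $T$ are pulled through them. Once the injections are named consistently, each summand collapses to a single application of naturality (of $η^S$, $η^T$, or $δ$) together with the unit law of the distributive law.
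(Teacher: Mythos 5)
Your proof is correct and takes essentially the same route as the paper's: the paper likewise unfolds the $\uparrow$ maps, decomposes the coproduct domain, and chases the resulting subdiagrams termwise, using only naturality of $\eta^S$, $\eta^T$, $\delta$ and a unit axiom of the distributive law (no multiplication axioms), exactly as you do for the two body summands and the point summand. Your reading of the top-right corner as $SG^\bullet_{S|TY}X$ (a typo in the statement) is also the intended one.
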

  \begin{proof}
    Unfolding the definition, we refine the claim as follows,
    \begin{center}
      \ajustedroit{
      % file:///home/thirs/github/quiver/src/index.html?q=WzAsMTMsWzAsMCwiR197VFNZfSBYICsgWCArIFRTWSJdLFswLDQsIlQoR197VFNZfSBYICsgWCArIFNZKSJdLFsxLDAsIkdfe1NUWX0gWCArIFggKyBTVFkiXSxbMywwLCJTKEdfe1NUWX0gWCArIFggKyBUWSkiXSxbMiw0LCJUUyhHX3tUU1l9WCArWCArWSkiXSxbMyw0LCJTVChHX3tTVFl9IFggKyBYICsgWSkiXSxbMCwxLCJUKEdfe1RTWX0gWCArIFgpICsgVFNZIl0sWzEsNCwiVChTKEdfe1RTWX0gWCArIFgpICsgU1kpIl0sWzIsMCwiUyhHX3tTVFl9IFggKyBYKSArIFNUWSJdLFszLDEsIlMoVChHX3tTVFl9IFggKyBYKSArIFRZKSJdLFsxLDEsIlQoR197U1RZfSBYICsgWCkgKyBTVFkiXSxbMiwxLCJTVChHX3tTVFl9IFggKyBYKSArIFNUWSJdLFsxLDIsIlNUKEdfe1RTWX0gWCArIFgpICsgU1RZIl0sWzAsMiwiR197XFxkZWx0YV9ZfSBYICsgWCArIFxcZGVsdGFfWSJdLFs0LDUsIlxcZGVsdGEgKEdfe1xcZGVsdGEgWX1YICsgWCArIFkpIiwyXSxbMCw2LCJcXGV0YV5UICsgVFNZIiwyXSxbNiwxLCJbVGluXzEsVGluXzJdIiwyXSxbMSw3LCJUKFxcZXRhXlMrIFNZKSIsMl0sWzcsNCwiVFtTaW5fMSxTaW5fMl0iLDJdLFsyLDgsIiBcXGV0YV5TICsgU1RZIl0sWzgsMywiW1Npbl8xLFNpbl8yXSJdLFszLDksIlMoXFxldGFeVCArIFRZKSJdLFs5LDUsIlNbVGluXzEsVGluXzJdIl0sWzYsMTAsIlQoR197XFxkZWx0YV9ZfSBYICsgWCkgKyBcXGRlbHRhX1kiXSxbMiwxMCwiXFxldGFeVCArIFNUWSJdLFsxMCwxMSwiXFxldGFeUyArIFNUWSJdLFs4LDExLCJTXFxldGFeVCtTVFkiXSxbNiwxMiwiXFxldGFeUyArIFRTWSIsMix7ImxhYmVsX3Bvc2l0aW9uIjoyMH1dLFsxMiwxMSwiU1QoR197XFxkZWx0YV9ZfSBYICsgWCkgKyBcXGRlbHRhX1kiLDIseyJsYWJlbF9wb3NpdGlvbiI6OTB9XSxbMTEsOSwiW1Npbl8xLFNpbl8yXSIsMl0sWzExLDUsIltTVGluXzEsU1Rpbl8yXSIsMl1d
\begin{tikzcd}[ampersand replacement=\&]
	{G_{TSY} X + X + TSY} \& {G_{STY} X + X + STY} \& {S(G_{STY} X + X) + STY} \& {S(G_{STY} X + X + TY)} \\
	{T(G_{TSY} X + X) + TSY} \& {T(G_{STY} X + X) + STY} \& {ST(G_{STY} X + X) + STY} \& {S(T(G_{STY} X + X) + TY)} \\
	\& {ST(G_{TSY} X + X) + STY} \\
	\\
	{T(G_{TSY} X + X + SY)} \& {T(S(G_{TSY} X + X) + SY)} \& {TS(G_{TSY}X +X +Y)} \& {ST(G_{STY} X + X + Y)}
	\arrow["{G_{\delta_Y} X + X + \delta_Y}", from=1-1, to=1-2]
	\arrow["{\delta (G_{\delta Y}X + X + Y)}"', from=5-3, to=5-4]
	\arrow["{\eta^T + TSY}"', from=1-1, to=2-1]
	\arrow["{[Tin_1,Tin_2]}"', from=2-1, to=5-1]
	\arrow["{T(\eta^S+ SY)}"', from=5-1, to=5-2]
	\arrow["{T[Sin_1,Sin_2]}"', from=5-2, to=5-3]
	\arrow["{ \eta^S + STY}", from=1-2, to=1-3]
	\arrow["{[Sin_1,Sin_2]}", from=1-3, to=1-4]
	\arrow["{S(\eta^T + TY)}", from=1-4, to=2-4]
	\arrow["{S[Tin_1,Tin_2]}", from=2-4, to=5-4]
	\arrow["{T(G_{\delta_Y} X + X) + \delta_Y}", from=2-1, to=2-2]
	\arrow["{\eta^T + STY}", from=1-2, to=2-2]
	\arrow["{\eta^S + STY}", from=2-2, to=2-3]
	\arrow["{S\eta^T+STY}", from=1-3, to=2-3]
	\arrow["{\eta^S + TSY}"'{pos=0.2}, from=2-1, to=3-2]
	\arrow["{ST(G_{\delta_Y} X + X) + \delta_Y}"'{pos=0.9}, from=3-2, to=2-3]
	\arrow["{[Sin_1,Sin_2]}"', from=2-3, to=2-4]
	\arrow["{[STin_1,STin_2]}"', from=2-3, to=5-4]
\end{tikzcd}
}
\end{center}
where all subdiagrams easily commute except the top right square and
the bottom left polygon.
Both have coproducts as their domains, so we chase them termwise.
The first term of the top right square is chased as follows.
\begin{center}
  \ajustedroit{
    % file:///home/thirs/github/quiver/src/index.html?q=WzAsOCxbMiwyLCJTKEdfe1NUWX0gWCArIFggKyBUWSkiXSxbMCwyLCJTKEdfe1NUWX0gWCArIFgpICsgU1RZIl0sWzMsNCwiUyhUKEdfe1NUWX0gWCArIFgpICsgVFkpIl0sWzEsNCwiU1QoR197U1RZfSBYICsgWCkgKyBTVFkiXSxbMCwwLCJTKEdfe1NUWX0gWCArIFgpIl0sWzEsMSwiU1QoR197U1RZfSBYICsgWCkiXSxbMywxLCJTKFQoR197U1RZfSBYICsgWCkgKyBUWSkiXSxbMiwwLCJTKEdfe1NUWX0gWCArIFggKyBUWSkiXSxbMSwwLCJbU2luXzEsU2luXzJdIiwyLHsibGFiZWxfcG9zaXRpb24iOjIwfV0sWzAsMiwiUyhcXGV0YV5UICsgVFkpIl0sWzEsMywiU1xcZXRhXlQrU1RZIiwyXSxbMywyLCJbU2luXzEsU2luXzJdIiwyXSxbNCwxLCJpbl8xIiwyXSxbNCw1LCJTXFxldGFeVCIsMl0sWzUsMywiaW5fMSIsMl0sWzUsNiwiUyBpbl8xIiwyLHsibGFiZWxfcG9zaXRpb24iOjIwfV0sWzYsMiwiIiwyLHsibGV2ZWwiOjIsInN0eWxlIjp7ImhlYWQiOnsibmFtZSI6Im5vbmUifX19XSxbNywwLCIiLDAseyJsZXZlbCI6Miwic3R5bGUiOnsiaGVhZCI6eyJuYW1lIjoibm9uZSJ9fX1dLFs0LDcsIlMgaW5fMSJdLFs3LDYsIlMoXFxldGFeVCArIFRZKSIsMCx7ImxhYmVsX3Bvc2l0aW9uIjo5MH1dXQ==
\begin{tikzcd}[ampersand replacement=\&]
	{S(G_{STY} X + X)} \&\& {S(G_{STY} X + X + TY)} \\
	\& {ST(G_{STY} X + X)} \&\& {S(T(G_{STY} X + X) + TY)} \\
	{S(G_{STY} X + X) + STY} \&\& {S(G_{STY} X + X + TY)} \\
	\\
	\& {ST(G_{STY} X + X) + STY} \&\& {S(T(G_{STY} X + X) + TY)}
	\arrow["{[Sin_1,Sin_2]}"'{pos=0.2}, from=3-1, to=3-3]
	\arrow["{S(\eta^T + TY)}", from=3-3, to=5-4]
	\arrow["{S\eta^T+STY}"', from=3-1, to=5-2]
	\arrow["{[Sin_1,Sin_2]}"', from=5-2, to=5-4]
	\arrow["{in_1}"', from=1-1, to=3-1]
	\arrow["{S\eta^T}"', from=1-1, to=2-2]
	\arrow["{in_1}"', from=2-2, to=5-2]
	\arrow["{S in_1}"'{pos=0.2}, from=2-2, to=2-4]
	\arrow[Rightarrow, no head, from=2-4, to=5-4]
	\arrow[Rightarrow, no head, from=1-3, to=3-3]
	\arrow["{S in_1}", from=1-1, to=1-3]
	\arrow["{S(\eta^T + TY)}"{pos=0.9}, from=1-3, to=2-4]
\end{tikzcd}}
\end{center}
The second term of the top right square is chased as follows.
\begin{center}
  \ajustedroit{
  % file:///home/thirs/github/quiver/src/index.html?q=WzAsOCxbMiwyLCJTKEdfe1NUWX0gWCArIFggKyBUWSkiXSxbMCwyLCJTKEdfe1NUWX0gWCArIFgpICsgU1RZIl0sWzMsMywiUyhUKEdfe1NUWX0gWCArIFgpICsgVFkpIl0sWzEsMywiU1QoR197U1RZfSBYICsgWCkgKyBTVFkiXSxbMCwwLCJTVFkiXSxbMSwxLCJTVFkiXSxbMywxLCJTVFkiXSxbMiwwLCJTVFkiXSxbMSwwLCJbU2luXzEsU2luXzJdIiwyLHsibGFiZWxfcG9zaXRpb24iOjIwfV0sWzAsMiwiUyhcXGV0YV5UICsgVFkpIiwwLHsibGFiZWxfcG9zaXRpb24iOjgwfV0sWzEsMywiU1xcZXRhXlQrU1RZIiwyLHsibGFiZWxfcG9zaXRpb24iOjMwfV0sWzMsMiwiW1Npbl8xLFNpbl8yXSIsMl0sWzQsMSwiaW5fMiIsMl0sWzUsMywiaW5fMiIsMix7ImxhYmVsX3Bvc2l0aW9uIjo3MH1dLFs2LDIsIlNpbl8yIl0sWzcsMCwiU2luXzIiLDIseyJsYWJlbF9wb3NpdGlvbiI6MjB9XSxbNCw3LCIiLDAseyJsZXZlbCI6Miwic3R5bGUiOnsiaGVhZCI6eyJuYW1lIjoibm9uZSJ9fX1dLFs0LDUsIiIsMCx7ImxldmVsIjoyLCJzdHlsZSI6eyJoZWFkIjp7Im5hbWUiOiJub25lIn19fV0sWzcsNiwiIiwwLHsibGV2ZWwiOjIsInN0eWxlIjp7ImhlYWQiOnsibmFtZSI6Im5vbmUifX19XSxbNSw2LCIiLDAseyJsZXZlbCI6Miwic3R5bGUiOnsiaGVhZCI6eyJuYW1lIjoibm9uZSJ9fX1dXQ==
\begin{tikzcd}[ampersand replacement=\&]
	STY \&\& STY \\
	\& STY \&\& STY \\
	{S(G_{STY} X + X) + STY} \&\& {S(G_{STY} X + X + TY)} \\
	\& {ST(G_{STY} X + X) + STY} \&\& {S(T(G_{STY} X + X) + TY)}
	\arrow["{[Sin_1,Sin_2]}"'{pos=0.2}, from=3-1, to=3-3]
	\arrow["{S(\eta^T + TY)}"{pos=0.8}, from=3-3, to=4-4]
	\arrow["{S\eta^T+STY}"'{pos=0.3}, from=3-1, to=4-2]
	\arrow["{[Sin_1,Sin_2]}"', from=4-2, to=4-4]
	\arrow["{in_2}"', from=1-1, to=3-1]
	\arrow["{in_2}"'{pos=0.7}, from=2-2, to=4-2]
	\arrow["{Sin_2}", from=2-4, to=4-4]
	\arrow["{Sin_2}"'{pos=0.2}, from=1-3, to=3-3]
	\arrow[Rightarrow, no head, from=1-1, to=1-3]
	\arrow[Rightarrow, no head, from=1-1, to=2-2]
	\arrow[Rightarrow, no head, from=1-3, to=2-4]
	\arrow[Rightarrow, no head, from=2-2, to=2-4]
      \end{tikzcd}
    }
\end{center}
The first term of the bottom left polygon is chased as follows.
\begin{center}
  \ajustedroit{
\begin{tikzcd}[ampersand replacement=\&]
	{T(G_{TSY}X + X)} \&\&\& {ST(G_{TSY}X + X)} \&\&\& {ST(G_{STY} X + X)} \\
	\& {T(G_{TSY}X + X)} \&\&\& {TS(G_{TSY}X + X)} \& {ST(G_{TSY}X + X)} \&\& {ST(G_{STY}X + X)} \\
	{T(G_{TSY} X + X) + TSY} \&\&\& {ST(G_{TSY} X + X) + STY} \&\&\& {ST(G_{STY} X + X) + STY} \\
	\& {T(G_{TSY} X + X + SY)} \&\& {T(S(G_{TSY} X + X) + SY)} \&\& {TS(G_{TSY}X +X +Y)} \&\& {ST(G_{STY} X + X + Y)}
	\arrow["{\delta (G_{\delta Y}X + X + Y)}"', from=4-6, to=4-8]
	\arrow["{[Tin_1,Tin_2]}"', from=3-1, to=4-2]
	\arrow["{T(\eta^S+ SY)}"', from=4-2, to=4-4]
	\arrow["{T[Sin_1,Sin_2]}"', from=4-4, to=4-6]
	\arrow["{\eta^S + TSY}"'{pos=0.7}, from=3-1, to=3-4]
	\arrow["{ST(G_{\delta_Y} X + X) + \delta_Y}"{pos=0.7}, from=3-4, to=3-7]
	\arrow["{[STin_1,STin_2]}"'{pos=0}, from=3-7, to=4-8]
	\arrow["{Tin_1}"{pos=0.8}, from=2-2, to=4-2]
	\arrow["{T\eta^S}"'{pos=0.3}, from=2-2, to=2-5]
	\arrow["{Tin_1}"{pos=0.7}, curve={height=-6pt}, from=2-5, to=4-4]
	\arrow["{TSin_1}"'{pos=0.7}, from=2-5, to=4-6]
	\arrow["\delta"', from=2-5, to=2-6]
	\arrow["{ST(G_{\delta_Y}X+X)}"{pos=0.2}, from=2-6, to=2-8]
	\arrow["{STin_1}", from=2-8, to=4-8]
	\arrow["{in_1}", from=1-1, to=3-1]
	\arrow[Rightarrow, no head, from=1-1, to=2-2]
	\arrow["{\eta^S}"', from=1-1, to=1-4]
	\arrow["{in_1}"'{pos=0.7}, from=1-4, to=3-4]
	\arrow["{ST(G_{\delta_Y} X + X)}", from=1-4, to=1-7]
	\arrow["{in_1}"'{pos=0.8}, from=1-7, to=3-7]
	\arrow[Rightarrow, no head, from=1-7, to=2-8]
	\arrow[Rightarrow, no head, from=1-4, to=2-6]
      \end{tikzcd}
      }
    \end{center}
    Finally, the second term of the bottom left polygon is chased as follows.
    \begin{center}
      \hfill
      \ajustedroit[.95]{
        % file:///home/thirs/github/quiver/src/index.html?q=WzAsMTQsWzEsMywiVChHX3tUU1l9IFggKyBYICsgU1kpIl0sWzUsMywiVFMoR197VFNZfVggK1ggK1kpIl0sWzcsMywiU1QoR197U1RZfSBYICsgWCArIFkpIl0sWzAsMiwiVChHX3tUU1l9IFggKyBYKSArIFRTWSJdLFszLDMsIlQoUyhHX3tUU1l9IFggKyBYKSArIFNZKSJdLFs2LDIsIlNUKEdfe1NUWX0gWCArIFgpICsgU1RZIl0sWzIsMiwiU1QoR197VFNZfSBYICsgWCkgKyBTVFkiXSxbMCwwLCJUU1kiXSxbMSwxLCJUU1kiXSxbMywxLCJUU1kiXSxbNSwxLCJUU1kiXSxbNywxLCJTVFkiXSxbMiwwLCJUU1kiXSxbNiwwLCJTVFkiXSxbMSwyLCJcXGRlbHRhIChHX3tcXGRlbHRhIFl9WCArIFggKyBZKSIsMl0sWzMsMCwiW1Rpbl8xLFRpbl8yXSIsMl0sWzAsNCwiVChcXGV0YV5TKyBTWSkiLDJdLFs0LDEsIlRbU2luXzEsU2luXzJdIiwyXSxbMyw2LCJcXGV0YV5TICsgVFNZIiwyLHsibGFiZWxfcG9zaXRpb24iOjcwfV0sWzYsNSwiU1QoR197XFxkZWx0YV9ZfSBYICsgWCkgKyBcXGRlbHRhX1kiXSxbNSwyLCJbU1Rpbl8xLFNUaW5fMl0iLDIseyJsYWJlbF9wb3NpdGlvbiI6MH1dLFs3LDMsImluXzIiLDJdLFs4LDAsIlRpbl8yIiwyLHsibGFiZWxfcG9zaXRpb24iOjIwfV0sWzcsOCwiIiwxLHsibGV2ZWwiOjIsInN0eWxlIjp7ImhlYWQiOnsibmFtZSI6Im5vbmUifX19XSxbOCw5LCIiLDEseyJsZXZlbCI6Miwic3R5bGUiOnsiaGVhZCI6eyJuYW1lIjoibm9uZSJ9fX1dLFs5LDQsIlRpbl8yIiwyLHsibGFiZWxfcG9zaXRpb24iOjIwfV0sWzksMTAsIiIsMCx7ImxldmVsIjoyLCJzdHlsZSI6eyJoZWFkIjp7Im5hbWUiOiJub25lIn19fV0sWzEwLDEsIlRTaW5fMiIsMix7ImxhYmVsX3Bvc2l0aW9uIjoyMH1dLFsxMCwxMSwiXFxkZWx0YV9ZIiwwLHsibGFiZWxfcG9zaXRpb24iOjIwfV0sWzExLDIsIlNUaW5fMiJdLFs3LDEyLCIiLDAseyJsZXZlbCI6Miwic3R5bGUiOnsiaGVhZCI6eyJuYW1lIjoibm9uZSJ9fX1dLFsxMiw2LCJpbl8yIiwyLHsibGFiZWxfcG9zaXRpb24iOjIwfV0sWzEyLDEzLCJcXGRlbHRhX1kiXSxbMTMsNSwiaW5fMiIsMix7ImxhYmVsX3Bvc2l0aW9uIjoyMH1dLFsxMywxMSwiIiwyLHsibGV2ZWwiOjIsInN0eWxlIjp7ImhlYWQiOnsibmFtZSI6Im5vbmUifX19XV0=
\begin{tikzcd}[ampersand replacement=\&,column sep=.5cm,baseline=(\tikzcdmatrixname-4-2.base)]
	TSY \&\& TSY \&\&\&\& STY \\
	\& TSY \&\& TSY \&\& TSY \&\& STY \\
	{T(G_{TSY} X + X) + TSY} \&\& {ST(G_{TSY} X + X) + STY} \&\&\&\& {ST(G_{STY} X + X) + STY} \\
	\& {T(G_{TSY} X + X + SY)} \&\& {T(S(G_{TSY} X + X) + SY)} \&\& {TS(G_{TSY}X +X +Y)} \&\& {ST(G_{STY} X + X + Y)}
	\arrow["{\delta (G_{\delta Y}X + X + Y)}"', from=4-6, to=4-8]
	\arrow["{[Tin_1,Tin_2]}"', from=3-1, to=4-2]
	\arrow["{T(\eta^S+ SY)}"', from=4-2, to=4-4]
	\arrow["{T[Sin_1,Sin_2]}"', from=4-4, to=4-6]
	\arrow["{\eta^S + TSY}"'{pos=0.7}, from=3-1, to=3-3]
	\arrow["{ST(G_{\delta_Y} X + X) + \delta_Y}", from=3-3, to=3-7]
	\arrow["{[STin_1,STin_2]}"'{pos=0}, from=3-7, to=4-8]
	\arrow["{in_2}"', from=1-1, to=3-1]
	\arrow["{Tin_2}"'{pos=0.2}, from=2-2, to=4-2]
	\arrow[Rightarrow, no head, from=1-1, to=2-2]
	\arrow[Rightarrow, no head, from=2-2, to=2-4]
	\arrow["{Tin_2}"'{pos=0.2}, from=2-4, to=4-4]
	\arrow[Rightarrow, no head, from=2-4, to=2-6]
	\arrow["{TSin_2}"'{pos=0.2}, from=2-6, to=4-6]
	\arrow["{\delta_Y}"{pos=0.2}, from=2-6, to=2-8]
	\arrow["{STin_2}", from=2-8, to=4-8]
	\arrow[Rightarrow, no head, from=1-1, to=1-3]
	\arrow["{in_2}"'{pos=0.2}, from=1-3, to=3-3]
	\arrow["{\delta_Y}", from=1-3, to=1-7]
	\arrow["{in_2}"'{pos=0.2}, from=1-7, to=3-7]
	\arrow[Rightarrow, no head, from=1-7, to=2-8]
      \end{tikzcd}
      } \qedhere
    \end{center}
  \end{proof}

  \begin{definition}\label{def:hbullet}
    Let
  $$h^•_{X,Y}∶ Γ^•_{S|Y}(Σ(X)) → ST(Γ^•_{ST|Y}(X))\rlap{,}$$
  be defined by cotupling the following three composites.
  \begin{mathpar}
    Γ_{SY}ΣX \xto{h_{X,SY}} ST Γ^•_{ST|SY}X \xto{ST↑^{S}}
    STSΓ^•_{STS|Y}X \xto{Sδ;μ^S T}
  ST Γ^•_{STS|Y}X \xto{ST Γ^•_{(Sδ;μ^S T)|Y}X} ST Γ^•_{ST|Y}X
  \and
  ΣX \xto{η_{Σ,ηᵀX}} STX \xto{ST α^{Γ^•_{ST|}}_{X,Y}}
    ST(Γ^•_{ST|Y}(X))
    \and
    Y \xto{η^{ST}_Y} STY \xto{ST β^{Γ^•_{ST|}}_{X,Y}}
    ST(Γ^•_{ST|Y}(X))
  \end{mathpar}
  \end{definition}
\begin{lemma}\label{lem:coh:hbullet}
  For all $h$, $h^•$ satisfies the following coherence laws.
\begin{center}
  \diag{%
    Σ(X)  \& ST(X) \\
    Γ^•_{S|Y}(Σ(X)) \& ST (Γ^•_{ST|Y}(X)) %
  }{%
    (m-1-1) edge[labela={η_{Σ,ηᵀ_X}}] (m-1-2) %
    edge[labell={α^{Γ^•_{S|}}_{ΣX,Y}}] (m-2-1) %
    (m-2-1) edge[labelb={h^•_{X,Y}}] (m-2-2) %
    (m-1-2) edge[labelr={ST  α^{Γ^•_{ST|}}_{X,Y}}] (m-2-2) %
  }
  \diag{%
    Y  \& STY \\
    Γ^•_{S|Y}(Σ(X)) \& ST (Γ^•_{ST|Y}(X)) %
  }{%
    (m-1-1) edge[labela={η^{ST}_Y}] (m-1-2) %
    edge[labell={β^{Γ^•_S}_{ΣX,Y}}] (m-2-1) %
    (m-2-1) edge[labelb={h^•_{X,Y}}] (m-2-2) %
    (m-1-2) edge[labelr={ST  β^{Γ^•_{ST|}}_{X,Y}}] (m-2-2) %
  }
  % \diag{%
  %   Γ^•_{SY}SSX \& ST Γ^•_{STY}SX \& STSTΓ^•_{STTY}X \\
  %   Γ^•_{SY}SX \& \& STΓ^•_{STY}X
  % }{%
  %   (m-1-1) edge[labela={h_{SX,Y}}] (m-1-2) %
  %   edge[labell={Γ^•_{SY}μ^S_X}] (m-2-1) %
  %   (m-1-2) edge[labela={ST h_{X,TY}}] (m-1-3) %
  %   (m-2-1) edge[labelb={h_{X,Y}}] (m-2-3) %
  %   (m-1-3) edge[labelr={μ^{ST} Γ^•_{Sμᵀ_Y}X}] (m-2-3) %
  % }
  \diag(.6,1.3){%
    Γ^•_{S|SY}ΣX \& SΓ^•_{SS|Y}ΣX     \&     SΓ^•_{S|Y}ΣX \\
    \& \& SST Γ^•_{ST|Y}X \\
    ST Γ^•_{ST|SY}X \& STS Γ^•_{STS|Y}X \& ST Γ^•_{ST|Y}X %
  }{%
    (m-1-1)
    edge[labell={h^•_{X,SY}}] (m-3-1) %
    (m-3-1) edge[labelb={ST ↑^{S}}] (m-3-2) %
    (m-1-1) edge[labela={↑^{S}}] (m-1-2) %
    (m-1-2) edge[labela={SΓ^•_{μ^S|Y}ΣX}] (m-1-3) %
    (m-3-2) edge[labelb={(Sδ;μ^S) Γ^•_{Sδ;μ^S|Y}X}] (m-3-3) %
    (m-1-3) edge[labelr={Sh^•_{X,Y}}] (m-2-3) %
    (m-2-3) edge[labelr={μ^S}] (m-3-3) %
  }
  \diag{%
    Γ_Y Σ X \& Γ_{SY} Σ X \& Γ^•_{S|Y} Σ X \\
    \& ST Γ^•_{ST|Y} X
  }{%
    (m-1-1) edge[labela={Γ_{η^S_Y} Σ X}] (m-1-2) %
    edge[labelbl={h_{X,Y}}] (m-2-2) %
    (m-1-2) edge[labela={in₁}] (m-1-3) %
    (m-1-3) edge[labelbr={h^•_{X,Y}}] (m-2-2) %
  }
\end{center}
\end{lemma}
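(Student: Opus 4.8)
The plan is to exploit the fact that, by Definition~\ref{def:hbullet}, $h^•_{X,Y}$ is a cotupling over the three summands of its domain $Γ^•_{S|Y}(Σ X) = Γ_{SY}(Σ X) + Σ X + Y$. Since, by Proposition~\ref{prop:bipointedalg}, $Γ^•$ is the free bipointed bifunctor, its two structure maps $α^{Γ^•_{S|}}_{ΣX,Y}$ and $β^{Γ^•_{S|}}_{ΣX,Y}$ are precisely the coproduct injections $in_2∶ ΣX → Γ^•_{S|Y}(ΣX)$ and $in_3∶ Y → Γ^•_{S|Y}(ΣX)$, while $in_1$ is the inclusion of $Γ_{SY}(ΣX)$. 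I would therefore verify each of the four coherence laws by precomposing with the relevant injection, so that the left-hand side unfolds into one of the three defining composites of $h^•$.

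With this in hand, the first two coherence laws hold essentially by definition. Precomposing $h^•_{X,Y}$ with $α^{Γ^•_{S|}}_{ΣX,Y} = in_2$ yields, by the cotupling, exactly the second defining composite $ST α^{Γ^•_{ST|}}_{X,Y} ∘ η_{Σ,ηᵀ_X}$, which is the claimed right-hand side; and precomposing with $β^{Γ^•_{S|}}_{ΣX,Y} = in_3$ yields the third defining composite $ST β^{Γ^•_{ST|}}_{X,Y} ∘ η^{ST}_Y$. No computation is needed beyond unfolding Definition~\ref{def:hbullet}.

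For the triangle relating $h$ and $h^•$, I would precompose with $in_1$, so that $h^•_{X,Y} ∘ in_1$ becomes the first defining composite, and then precompose further with $Γ_{η^S_Y}ΣX$. Naturality of $h$ in its second argument turns $h_{X,SY} ∘ Γ_{η^S_Y}ΣX$ into $ST Γ^•_{ST|η^S_Y}X ∘ h_{X,Y}$, after which the trailing maps $ST↑^{S}$, $Sδ;μ^S T$ and the reindexing $STΓ^•_{(Sδ;μ^S T)|Y}X$ collapse to the identity on $STΓ^•_{ST|Y}X$ by the unit axioms of $δ$ together with the monad unit laws (using that $↑^{S}$ acts through $η^S$ on the summands not of the form $SY$). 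This is routine bookkeeping.

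The substantive law is the hexagon. Here I would decompose the domain $Γ^•_{S|SY}(ΣX) = Γ_{SSY}(ΣX) + ΣX + SY$ and chase each summand separately. On the middle ($ΣX$) and right ($SY$) summands, both composites reduce, via the corresponding defining composite of $h^•$, to identities obtained from the monad unit and multiplication laws together with the naturality of the bipointed maps $α$ and $β$; these are unremarkable. The real work is on the first summand $Γ_{SSY}(ΣX)$, where both paths factor through $h$ — through $h_{X,SSY}$, after using naturality in the second argument to move $μ^S_Y$ across — so that the remaining identity no longer mentions $h$ and becomes a pure compatibility statement between $↑^{S}$, $δ$, $μ^S$ and the reindexing functors. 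This is exactly the content of Lemma~\ref{lem:delta:uparrow}, which lets me commute $↑^{S}$ past $δ$; combined with the associativity and unit laws of $S$ it closes the diagram. The main obstacle is thus the first summand of the hexagon, namely orchestrating the interaction of $↑^{S}$ with $δ$ and the two multiplications — everything else is definitional or a direct monad-law computation.
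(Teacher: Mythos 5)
Your overall strategy coincides with the paper's: the first two laws hold by unfolding the cotupling of Definition~\ref{def:hbullet} (with $α = in₂$ and $β = in₃$, as you say); the fourth law follows by precomposing the first defining composite with $Γ_{η^S_Y}ΣX$, rewriting via naturality of $h$ in its second argument, and collapsing the trailing maps using the unit axiom $δ ∘ Tη^S = η^S T$ together with the unit law of $S$; and the third law is verified termwise on the coproduct $Γ_{SSY}ΣX + ΣX + SY$, with naturality of $h$ reducing the first summand to an identity no longer mentioning $h$. One small inaccuracy along the way: the chases on the $ΣX$ and $SY$ summands are not purely monad-law bookkeeping plus naturality of $α,β$ — they also invoke the unit axioms of the distributive law (e.g.\ $δ ∘ η^T S = Sη^T$ on the third summand). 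That is harmless.

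The genuine problem is your justification of the crux. The residual identity on the summand $Γ_{SSY}ΣX$ is \emph{not} the content of Lemma~\ref{lem:delta:uparrow}: that lemma commutes $↑^T$ past $↑^S$ across $δ$, and its square essentially involves the $T$-option functor — but no $↑^T$ occurs anywhere in the third coherence law, and no instantiation of that lemma (e.g.\ taking $T = \id$ degenerates it to a triviality) yields the statement you need. What the residue actually requires is (i) the multiplication axiom of $δ$ over $μ^S$, namely $δ ∘ Tμ^S = μ^S T ∘ Sδ ∘ δS$, which is what identifies two successive applications of the normalisation composite $Sδ;μ^S T$ (at levels $SY$ and then $Y$) with a single application after reindexing along $μ^S_Y$; and (ii) the interchange of $↑^S$ with $μ^S$ and with reindexing — the $S$-analogue of Lemma~\ref{lem:option:mu:inside} — together with associativity of $μ^S$. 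The paper establishes this residue by a direct diagram chase with exactly these ingredients and cites no auxiliary lemma here; Lemma~\ref{lem:delta:uparrow} is used only later, in the proof of Lemma~\ref{lem:coh:hcirc}, where $↑^T$ genuinely enters. As written, your key step therefore rests on a lemma of the wrong shape; the repair is routine (carry out the chase using (i) and (ii)), but it is a real gap rather than the bookkeeping you suggest.
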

\begin{proof}
  The first two laws hold by construction.  For the third law, by
  universal property of coproduct, it suffices to verify it on each
  term of the coproduct $Γ^•_{SSY}ΣX = Γ_{SSY}ΣX + ΣX + Y$.  The
  result follows straightforwardly by diagram chasing in each case.
  The first term is chased as follows,
  using the coherence properties of monad distributive laws.
  \begin{center}
    \ajustedroit{
\begin{tikzcd}[ampersand replacement=\&]
	{\Gamma_{SSY}\Sigma X} \&\&\&\& {S\Gamma_{SSY}\Sigma X} \&\& {S\Gamma_{SY}\Sigma X} \\
	\& {ST \Gamma^\bullet_{ST|SSY}X} \&\&\&\& {SST \Gamma^\bullet_{ST|SSY}X} \&\& {SST\Gamma^\bullet_{ST|SY}X} \\
	\&\&\&\&\& {ST \Gamma^\bullet_{ST|SY}X} \\
	\\
	\& {STS\Gamma^\bullet_{STS|SY} X} \&\& {STSS\Gamma^\bullet_{STSS|Y}X} \&\& {STS\Gamma^\bullet_{STS|Y}X} \&\& {SSTS\Gamma^\bullet_{STS|Y}X} \\
	{\Gamma^\bullet_{S|SY}\Sigma X} \&\& {S\Gamma^\bullet_{SS|Y}\Sigma X} \&\&\&\& {S \Gamma^\bullet_{S|Y} \Sigma X} \&\&\& {STS\Gamma^\bullet_{STS|Y}X} \\
	\&\&\&\&\&\&\& {SST\Gamma^\bullet_{ST|Y} X} \\
	\& {ST\Gamma^\bullet_{ST|SY} X} \&\& {STS\Gamma^\bullet_{STS|Y} X} \&\&\&\&\&\& {ST\Gamma^\bullet_{ST|Y} X}
	\arrow["{\uparrow^S}"{pos=0.3}, from=6-1, to=6-3]
	\arrow["{S\Gamma^\bullet_{\mu^S|Y} \Sigma X}"{pos=0.4}, from=6-3, to=6-7]
	\arrow["{Sh^\bullet_{X,Y}}"', from=6-7, to=7-8]
	\arrow["{\mu^S}", from=7-8, to=8-10]
	\arrow["{h^\bullet_{X,SY}}"', from=6-1, to=8-2]
	\arrow["{ST \uparrow^S}"', from=8-2, to=8-4]
	\arrow["{(S\delta;\mu^S T)\Gamma^\bullet_{S\delta;\mu^S T|Y} X}"', from=8-4, to=8-10]
	\arrow["{in_1}"', from=1-1, to=6-1]
	\arrow["{h_{X,SSY}}"', from=1-1, to=2-2]
	\arrow["{ST \uparrow^S}", from=2-2, to=5-2]
	\arrow["{\eta^S}", from=1-1, to=1-5]
	\arrow["{S\Gamma_{\mu^S_Y}\Sigma X}", from=1-5, to=1-7]
	\arrow["{S in_1}", from=1-7, to=6-7]
	\arrow["{Sh_{X,SY}}", from=1-7, to=2-8]
	\arrow["{(S\delta;\mu^S)\Gamma^\bullet_{S\delta;\mu^S|SY} X}"{pos=0.8}, from=5-2, to=8-2]
	\arrow["{SST\uparrow^S}", from=2-8, to=5-8]
	\arrow["{S(S\delta;\mu^ST)\Gamma^\bullet_{S\delta;\mu^S T|Y}X}"{pos=0.8}, from=5-8, to=7-8]
	\arrow["{STS\uparrow^S}", from=5-2, to=5-4]
	\arrow["{(S\delta;\mu^S T)S\Gamma^\bullet_{S\delta S;\mu^S T S|Y} X}"'{pos=0.6}, from=5-4, to=8-4]
	\arrow["{\eta^S}"{pos=0.2}, from=2-2, to=2-6]
	\arrow["{SST \Gamma^\bullet_{ST|\mu^S_Y}X}"{pos=0.2}, from=2-6, to=2-8]
	\arrow["{Sh_{X,SSY}}"'{pos=0}, from=1-5, to=2-6]
	\arrow["{ST\Gamma^\bullet_{ST|\mu^S_Y}X}"', from=2-2, to=3-6]
	\arrow["{\eta^S}"{pos=0.1}, from=3-6, to=2-8]
	\arrow["{\mu^S TS \Gamma^\bullet_{STS|Y}X}", from=5-8, to=6-10]
	\arrow["{(S\delta;\mu^ST)\Gamma^\bullet_{S\delta;\mu^S T|Y}X}", from=6-10, to=8-10]
	\arrow["{ST\uparrow^S}"', from=3-6, to=5-6]
	\arrow["{\eta^S}"{pos=0.2}, from=5-6, to=5-8]
	\arrow[Rightarrow, no head, from=5-6, to=6-10]
	\arrow["{ST\mu^S\Gamma^\bullet_{ST\mu^S|Y}X}", from=5-4, to=5-6]
\end{tikzcd}
    }
  \end{center}
  The second term is chased as follows.
  \begin{center}
    \ajustedroit{
\begin{tikzcd}[ampersand replacement=\&]
	{\Sigma X} \&\&\& {S\Sigma X} \&\&\& {S\Sigma X} \\
	\&\& STX \&\&\&\&\& SSTX \\
	\&\&\&\& STSX \&\&\&\&\& STX \\
	{\Gamma^\bullet_{S|SY}\Sigma X} \&\&\& {S\Gamma^\bullet_{SS|Y}\Sigma X} \&\&\& {S \Gamma^\bullet_{S|Y} \Sigma X} \\
	\&\&\&\&\&\&\& {SST\Gamma^\bullet_{ST|Y} X} \\
	\&\& {ST\Gamma^\bullet_{ST|SY} X} \&\& {STS\Gamma^\bullet_{STS|Y} X} \&\&\&\&\& {ST\Gamma^\bullet_{ST|Y} X}
	\arrow["{\uparrow^S}"{pos=0.3}, from=4-1, to=4-4]
	\arrow["{S\Gamma^\bullet_{\mu^S|Y} \Sigma X}"{pos=0.7}, from=4-4, to=4-7]
	\arrow["{Sh^\bullet_{X,Y}}"', from=4-7, to=5-8]
	\arrow["{\mu^S}", from=5-8, to=6-10]
	\arrow["{h^\bullet_{X,SY}}"', from=4-1, to=6-3]
	\arrow["{ST \uparrow^S}"', from=6-3, to=6-5]
	\arrow["{(S\delta;\mu^S T)\Gamma^\bullet_{S\delta;\mu^ST|Y} X}"', from=6-5, to=6-10]
	\arrow["{in_2}"', from=1-1, to=4-1]
	\arrow["{\eta_{\Sigma, \eta^T_X}}", from=1-1, to=2-3]
	\arrow["{ST in_2}"{pos=0.8}, from=2-3, to=6-3]
	\arrow["{ST\eta^S}"'{pos=0.3}, from=2-3, to=3-5]
	\arrow["{STS in_2}"'{pos=0.8}, from=3-5, to=6-5]
	\arrow["{\eta^S_{\Sigma X}}", from=1-1, to=1-4]
	\arrow["{S in_2}"'{pos=0.8}, from=1-4, to=4-4]
	\arrow[Rightarrow, no head, from=1-4, to=1-7]
	\arrow["{S in_2}"{pos=0.9}, from=1-7, to=4-7]
	\arrow["{S\eta_{\Sigma,\eta^T_X}}", from=1-7, to=2-8]
	\arrow["{SSTin_2}", from=2-8, to=5-8]
	\arrow["{\mu^S}", from=2-8, to=3-10]
	\arrow["{ST in_2}", from=3-10, to=6-10]
	\arrow["{S\delta ; \mu^S T}"'{pos=0.2}, from=3-5, to=3-10]
	\arrow["{\eta^S_{STX}}", from=2-3, to=2-8]
\end{tikzcd}
      }
  \end{center}
  The third and final term is chased as follows.
  \begin{center}
    \ajustedroit{
\begin{tikzcd}[ampersand replacement=\&]
	SY \&\& SY \&\&\&\& SY \\
	\& STSY \&\& STSY \&\& SSTY \&\&\&\& STY \\
	\&\&\&\&\&\&\& SSTY \\
	\\
	\\
	{\Gamma^\bullet_{S|SY}\Sigma X} \&\& {S\Gamma^\bullet_{SS|Y}\Sigma X} \&\&\&\& {S \Gamma^\bullet_{S|Y} \Sigma X} \&\&\& {ST\Gamma^\bullet_{ST|Y} X} \\
	\&\&\&\&\&\&\& {SST\Gamma^\bullet_{ST|Y} X} \\
	\& {ST\Gamma^\bullet_{ST|SY} X} \&\& {STS\Gamma^\bullet_{STS|Y} X} \&\&\&\&\&\& {ST\Gamma^\bullet_{ST|Y} X}
	\arrow["{\uparrow^S}"{pos=0.3}, from=6-1, to=6-3]
	\arrow["{S\Gamma^\bullet_{\mu^S|Y} \Sigma X}"{pos=0.6}, from=6-3, to=6-7]
	\arrow["{Sh^\bullet_{X,Y}}"', from=6-7, to=7-8]
	\arrow["{\mu^S}", from=7-8, to=8-10]
	\arrow["{h^\bullet_{X,SY}}"', from=6-1, to=8-2]
	\arrow["{ST \uparrow^S}"', from=8-2, to=8-4]
	\arrow["{(S\delta;\mu^S T)\Gamma^\bullet_{S\delta;\mu^S T|Y} X}"', from=8-4, to=8-10]
	\arrow["{in_3}"', from=1-1, to=6-1]
	\arrow["{\eta^{ST}_{SY}}", from=1-1, to=2-2]
	\arrow["{ST in_3}", from=2-2, to=8-2]
	\arrow[Rightarrow, no head, from=2-2, to=2-4]
	\arrow["{STS in_3}", from=2-4, to=8-4]
	\arrow[Rightarrow, no head, from=1-1, to=1-3]
	\arrow["{Sin_3}", from=1-3, to=6-3]
	\arrow[Rightarrow, no head, from=1-3, to=1-7]
	\arrow["{Sin_3}", from=1-7, to=6-7]
	\arrow["{S\eta^T_Y}"{description}, from=1-7, to=2-10]
	\arrow["S\delta"', from=2-4, to=2-6]
	\arrow["{\mu^S_{TY}}", from=2-6, to=2-10]
	\arrow["{\eta^SS\eta^T_Y}"', from=1-3, to=2-6]
	\arrow["{S\eta^{ST}_Y}"', from=1-7, to=3-8]
	\arrow["{SSTin_3}"', from=3-8, to=7-8]
	\arrow["{S\eta^S_{TY}}"', from=2-10, to=3-8]
	\arrow["{STin_3}", from=2-10, to=6-10]
	\arrow["{S\eta^S_{T\Gamma^\bullet_{ST|Y} X}}", from=6-10, to=7-8]
	\arrow[Rightarrow, no head, from=6-10, to=8-10]
\end{tikzcd}
      }
  \end{center}
  Finally, for the fourth law,
  $h^•_{X,Y} ∘ in₁$ yields the first composite of Definition~\ref{def:hbullet}.
  But precomposing the latter with $Γ_{η^S_Y} Σ X$, we obtain by naturality of $h$
  the composite
  $$Γ_Y Σ X \xto{h_{X,Y}} ST Γ^•_{ST|Y} X
  \xto{ST Γ^•_{ST|η^S_Y}X} ST Γ^•_{ST|SY} X \xto{ST↑^S} STS Γ^•_{STS|Y}
  X \xto{(Sδ;μ^S T)Γ^•_{(Sδ;μ^S T)|Y}X}ST Γ^•_{ST|Y}X\rlap{,}$$ which
  we now prove equal to $h_{X,Y}$, i.e., that
  the last three morphisms compose to the identity.
  \begin{center}
    \ajustedroit{
      \hfill
% file:///home/thirs/github/quiver/src/index.html?q=WzAsNyxbMCwwLCJTVCBcXEdhbW1hXlxcYnVsbGV0X3tTVHxZfVgiXSxbMiwwLCJTVCAoXFxHYW1tYV97U1RTWX1YICsgWCArIFNZKSJdLFs0LDEsIlNUUyhcXEdhbW1hX3tTVFNZfVggKyBYICsgWSkiXSxbMiwzLCJTVCBcXEdhbW1hXlxcYnVsbGV0X3tTVHxZfVgiXSxbNCwwLCJTVCAoUyhcXEdhbW1hX3tTVFNZfVggKyBYKSArIFNZKSJdLFsxLDEsIlNUIChcXEdhbW1hX3tTVFNZfVggKyBYICsgWSkiXSxbNCwyLCJTVCBcXEdhbW1hXlxcYnVsbGV0X3tTVFN8WX1YIl0sWzAsMSwiU1RcXEdhbW1hXlxcYnVsbGV0X3tTVHxcXGV0YV5TX1l9IFgiXSxbMSw0LCJTVCAoXFxldGFeUyArIFNZKSJdLFs0LDIsIlNUW1Npbl8xLFNpbl8yXSJdLFswLDUsIlNUKFxcR2FtbWFfe1NUXFxldGFeU19ZfVggKyBYICsgWSkiLDAseyJsYWJlbF9wb3NpdGlvbiI6MTAwfV0sWzUsMiwiU1RcXGV0YV5TIiwyXSxbMiw2LCIoU1xcZGVsdGE7XFxtdV5TIFQpIFxcR2FtbWFeXFxidWxsZXRfe1NUU3xZfSBYIl0sWzYsMywiU1RcXEdhbW1hXlxcYnVsbGV0X3tTXFxkZWx0YTtcXG11XlMgVHxZfSBYIl0sWzUsNiwiIiwwLHsibGV2ZWwiOjIsInN0eWxlIjp7ImhlYWQiOnsibmFtZSI6Im5vbmUifX19XSxbMCwzLCIiLDIseyJjdXJ2ZSI6NSwibGV2ZWwiOjIsInN0eWxlIjp7ImhlYWQiOnsibmFtZSI6Im5vbmUifX19XV0=
\begin{tikzcd}[ampersand replacement=\&,baseline=(\tikzcdmatrixname-4-3.base)]
	{ST \Gamma^\bullet_{ST|Y}X} \&\& {ST (\Gamma_{STSY}X + X + SY)} \&\& {ST (S(\Gamma_{STSY}X + X) + SY)} \\
	\& {ST (\Gamma_{STSY}X + X + Y)} \&\&\& {STS(\Gamma_{STSY}X + X + Y)} \\
	\&\&\&\& {ST \Gamma^\bullet_{STS|Y}X} \\
	\&\& {ST \Gamma^\bullet_{ST|Y}X}
	\arrow["{ST\Gamma^\bullet_{ST|\eta^S_Y} X}", from=1-1, to=1-3]
	\arrow["{ST (\eta^S + SY)}", from=1-3, to=1-5]
	\arrow["{ST[Sin_1,Sin_2]}", from=1-5, to=2-5]
	\arrow["{ST(\Gamma_{ST\eta^S_Y}X + X + Y)}"{pos=1}, from=1-1, to=2-2]
	\arrow["{ST\eta^S}"', from=2-2, to=2-5]
	\arrow["{(S\delta;\mu^S T) \Gamma^\bullet_{STS|Y} X}", from=2-5, to=3-5]
	\arrow["{ST\Gamma^\bullet_{S\delta;\mu^S T|Y} X}", from=3-5, to=4-3]
	\arrow[Rightarrow, no head, from=2-2, to=3-5]
	\arrow[curve={height=30pt}, Rightarrow, no head, from=1-1, to=4-3]
\end{tikzcd} \qedhere
      }
  \end{center}
\end{proof}

\subsection{The family $h^∘$}

We now want to make the $Σ$ in the domain of $h^•$ into an $S$.
For this, as an intermediate step, we emphasise the less pointed variant of $Γ$ defined as follows.
\begin{definition}
  A bifunctor $Γ∶ 𝐂² → 𝐂$ is \alert{main-pointed} iff it comes equipped
  with a natural transformation $α^Γ_{X,Y}∶ X → Γ(X,Y)$.

  % An \alert{algebra} for a main-pointed functor $Γ$ is an object $X$
  % equipped with a morphism $𝐚∶ Γ(X,X) → X$ making the following
  % triangle commute.
  % \begin{center}
  %   \diag{%
  %     X \& \& Γ(X,X) \\
  %     \& X
  %   }{%

  %     (m-1-1) edge[labela={α^Γ_{X,X}}] (m-1-3) %
  %     edge[identity] (m-2-2) %
  %     (m-1-3) edge[labelbr={𝐚}] (m-2-2) %
  %   }
  % \end{center}
  % A morphism of algebras is a morphism between underlying objects
  % commuting with algebra structure.  Let $Γ\ptalg$ denote the category
  % of $Γ$-algebras and morphisms between them.
\end{definition}

\begin{definition}
  For any $𝐂$ with binary coproducts and bifunctor $Γ∶ 𝐂² → 𝐂$,
  let $Γ^\pt(X,Y) ≔ Γ(X,Y) + X$.
\end{definition}

\begin{proposition}
  For any $𝐂$ with binary coproducts and $Γ∶ 𝐂² → 𝐂$, the bifunctor
  $Γ^\pt$ is main-pointed% , and we have an isomorphism
  % $Γ\alg ≅ Γ^\pt\ptalg$ of categories over $𝐂$
  .
\end{proposition}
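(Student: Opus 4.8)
The plan is to equip $Γ^\pt$ with the evident main-pointing coming from the coproduct structure, exactly paralleling the treatment of $Γ^•$ in Proposition~\ref{prop:bipointedalg}. Concretely, I would define
$$α^{Γ^\pt}_{X,Y} ≔ in₂ ∶ X → Γ(X,Y) + X = Γ^\pt(X,Y),$$
the second coproduct injection, picking out the $X$ summand of $Γ^\pt(X,Y)$. This is available precisely because $𝐂$ has binary coproducts, which is the only hypothesis of the statement.

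It then remains to verify that $α^{Γ^\pt}$ is natural in both $X$ and $Y$. This is immediate from naturality of coproduct injections: for any $f∶ X → X'$ and $g∶ Y → Y'$, the functorial action is $Γ^\pt(f,g) = Γ(f,g) + f$, and the relevant square commutes because $in₂ ∘ f = (Γ(f,g)+f) ∘ in₂$ by the universal property of $+$ (the injection into the second summand of a coproduct is natural with respect to the action of $({-}) + ({-})$ on that summand). In particular $α^{Γ^\pt}$ does not depend nontrivially on $Y$, so naturality in $Y$ is the statement that $in₂∶ X → Γ(X,Y)+X$ is compatible with the morphism $Γ(X,g)+X$, which again holds by definition of the coproduct functorial action.

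There is no genuine obstacle here; the result is the one-sided analogue of Proposition~\ref{prop:bipointedalg}, and like it is entirely routine. If one wishes, one may additionally observe — as in the bracketed remark of that proposition — that $Γ^\pt$ is in fact the \emph{free} main-pointed bifunctor on $Γ$: any natural transformation $Γ → H$ into a main-pointed $H$ with pointing $α^H$ extends uniquely to a morphism of main-pointed bifunctors $Γ^\pt → H$, namely the cotupling $[\,Γ → H,\ α^H\,]∶ Γ(X,Y)+X → H(X,Y)$, but this universal property is not needed for the stated claim.
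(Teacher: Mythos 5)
Your proof is correct and matches the paper's (implicit) argument: the paper omits the proof as immediate, just as the analogous Proposition~\ref{prop:bipointedalg} for $Γ^•$ is proved with a one-word ``straightforward''. Taking $α^{Γ^\pt}_{X,Y} = in₂∶ X → Γ(X,Y)+X$ and checking naturality via the coproduct injections is exactly the intended construction, and your aside on freeness is the correct one-sided analogue of the parenthetical remark there.
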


By construction, we have:
\begin{lemma}
  For any $𝐂$ with binary coproducts, $F,G∶ 𝐂 → 𝐂$, and $Γ∶ 𝐂² → 𝐂$, we have
  $Γ^•_{G|FY} = O_{FY} Γ^\pt_{GFY}$.
\end{lemma}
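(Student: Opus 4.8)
The plan is to establish the equality purely by unfolding both sides against their definitions, since the statement is flagged as holding ``by construction'' and carries no content beyond bookkeeping with the subscript conventions. First I would expand the left-hand side directly from Notation~\ref{not:bullet}, which gives
$$Γ^•_{G|FY}X = Γ(X,GFY) + X + FY.$$

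Next I would expand the right-hand side. Reading a subscript as the second argument of the bifunctor (the convention fixed in §\ref{s:intro}, where $Γ(X,Y)$ is written $Γ_YX$), the inner functor evaluates by the definition of $Γ^\pt$ to $Γ^\pt_{GFY}X = Γ^\pt(X,GFY) = Γ(X,GFY) + X$. Post-composing with the option functor $O_{FY}$, which by definition adds a summand $FY$, yields
$$O_{FY}\,Γ^\pt_{GFY}X = \bigl(Γ(X,GFY) + X\bigr) + FY.$$
Comparing the two displays shows they agree summand-for-summand; I would then note that the two functorial actions on morphisms likewise coincide, since both are induced componentwise from the actions of $Γ$, $F$, $G$, and the structural maps of the coproduct, so the equality of bifunctors follows.

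The only point deserving a word of care is the bracketing of the iterated coproduct: the left-hand side is written as the ternary coproduct $Γ(X,GFY)+X+FY$, whereas the right-hand side literally produces the left-associated $\bigl(Γ(X,GFY)+X\bigr)+FY$. Under the standing convention that $A+B+C$ denotes this very bracketing for the fixed choice of coproducts, the two sides are \emph{equal} on the nose; otherwise they are identified by the canonical associativity isomorphism, which is the sense in which the stated equality is to be read. I expect no genuine obstacle here, as this is exactly the routine verification the paper announces before the lemma.
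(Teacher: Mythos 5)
Your proof is correct and matches the paper's treatment exactly: the paper offers no argument beyond ``By construction, we have,'' and your unfolding of $Γ^•_{G|FY}X = Γ(X,GFY)+X+FY$ against $O_{FY}Γ^\pt_{GFY}X = (Γ(X,GFY)+X)+FY$ is precisely the intended verification. Your remark on the bracketing of the iterated coproduct is the right level of care for the one point where the identity is literally convention-dependent.
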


By the last lemma, $h^•_{X,Y}$ equivalently has type
$$Γ^•_{S|Y}ΣX → ST O_Y Γ^\pt_{STY}X.$$
\begin{definition}
  Let $h^∘_{X,Y}$ denote the natural transformation with components
  $$      Γ^\pt_{STY}ΣX 
      \xto{[in₁,in₂]}
      Γ^•_{S|TY}ΣX 
      \xto{h^•_{X,TY}}
      ST Γ^•_{ST|TY}X
      \xto{ST↑^{T}}
      STT Γ^•_{STT|Y}X
      \xto{Sμᵀ Γ^•_{Sμᵀ|Y}X}
      STΓ^•_{ST|Y}X.$$

  % $$\begin{array}{c}
  %     Γ^\pt_{STY}ΣX \\
  %     ↓ \\
  %     Γ^•_{STY}ΣX \\
  %     ↓ \rlap{$\scriptstyle h^•_{X,TY}$} \\
  %     ST O_{STTY} Γ^\pt_{STTY}X \\
  %     ↓ \rlap{$\scriptstyle ST O_{Sμᵀ_Y} Γ^\pt_{Sμᵀ_Y} X$} \\
  %     ST O_{STY} Γ^\pt_{STY}X.
  % \end{array}$$      
\end{definition}

\begin{lemma}
  \label{lem:coh:hcirc}
  For all $h$, $h^\pt$ satisfies the following coherence laws.
\begin{center}
  \diag{%
    Σ(X)  \& ST(X) \\
    Γ^\pt_{STY}(Σ(X)) \& ST (Γ^•_{ST|Y}(X)) %
  }{%
    (m-1-1) edge[labela={η_{Σ,ηᵀ_X}}] (m-1-2) %
    edge[labell={α^{Γ^\pt}_{ΣX,STY}}] (m-2-1) %
    (m-2-1) edge[labelb={h^\pt_{X,Y}}] (m-2-2) %
    (m-1-2) edge[labelr={ST  α^{Γ^•_{ST|}}_{X,Y}}] (m-2-2) %
  }
  \diag{%
    Γ^\pt_{STSY}ΣX \&      \&     Γ^\pt_{STY}ΣX \\
    ST Γ^•_{ST|SY}X \& STS Γ^•_{STS|Y}X \& ST Γ^•_{ST|Y}X %
  }{%
    (m-1-1)
    edge[labell={h^\pt_{X,SY}}] (m-2-1) %
    (m-2-1) edge[loinb={ST ↑^{S}_{X,Y}}] (m-2-2) %
    (m-1-1) edge[labela={Γ^\pt_{(S δ; μ^S T)_Y}ΣX}] (m-1-3) %
    (m-2-2) edge[loinb={(Sδ;μ^S T) Γ^•_{S δ ; μ^S T | Y}X}] (m-2-3) %
    (m-1-3) edge[labelr={h^\pt_{X,Y}}] (m-2-3) %
  }
  \diag{%
    Γ_Y Σ X \& Γ_{STY}ΣX \& Γ^\pt_{STY}ΣX \\
    \& STΓ^•_{ST|Y}X %
  }{%

    (m-1-1) edge[labela={Γ_{η^{ST}_Y} Σ X}] (m-1-2) %
    edge[labelbl={h_{X,Y}}] (m-2-2) %
    (m-1-2) edge[labela={in₁}] (m-1-3) %
    (m-1-3) edge[labelbr={h^∘_{X,Y}}] (m-2-2) %
  }
\ajustedroit{
% file:///home/thirs/github/quiver/src/index.html?q=WzAsNixbMCwwLCJcXEdhbW1hXlxcY2lyY197U1RUWX1cXFNpZ21hIFgiXSxbMCwxLCIgU1QgT197VFl9IFxcR2FtbWFeXFxjaXJjX3tTVFRZfVgiXSxbMiwxLCIgU1RUIE9fe1l9IFxcR2FtbWFeXFxjaXJjX3tTVFRZfVgiXSxbNywxLCIgU1QgT197WX0gXFxHYW1tYV5cXGNpcmNfe1NUWX1YIl0sWzUsMSwiIFNUIE9fe1l9IFxcR2FtbWFeXFxjaXJjX3tTVFRZfVgiXSxbNywwLCJcXEdhbW1hXlxcY2lyY197U1RZfVxcU2lnbWEgWCJdLFswLDEsIntoXlxcY2lyY31fe1gsVFl9IiwyXSxbMSwyLCIgU1RcXHVwYXJyb3deVCIsMl0sWzIsNCwiIFNcXG11XlQgT197WX0gXFxHYW1tYV5cXGNpcmNfe1NUVFl9WCIsMl0sWzQsMywiIFNUIE9fe1l9IFxcR2FtbWFeXFxjaXJjX3tTXFxtdV5UX1l9IiwyXSxbNSwzLCJ7aF5cXGNpcmN9X3tYLFl9Il0sWzAsNSwiXFxHYW1tYV5cXGNpcmNfe1NcXG11XlRfWX1cXFNpZ21hIFgiXV0=
\begin{tikzcd}[ampersand replacement=\&]
	{\Gamma^\circ_{STTY}\Sigma X} \&\&\&\&\&\&\& {\Gamma^\circ_{STY}\Sigma X} \\
	{ ST O_{TY} \Gamma^\circ_{STTY}X} \&\& { STT O_{Y} \Gamma^\circ_{STTY}X} \&\&\& { ST O_{Y} \Gamma^\circ_{STTY}X} \&\& { ST O_{Y} \Gamma^\circ_{STY}X}
	\arrow["{{h^\circ}_{X,TY}}"', from=1-1, to=2-1]
	\arrow["{ ST\uparrow^T}"', from=2-1, to=2-3]
	\arrow["{ S\mu^T O_{Y} \Gamma^\circ_{STTY}X}"', from=2-3, to=2-6]
	\arrow["{ ST O_{Y} \Gamma^\circ_{S\mu^T_Y}}"', from=2-6, to=2-8]
	\arrow["{{h^\circ}_{X,Y}}", from=1-8, to=2-8]
	\arrow["{\Gamma^\circ_{S\mu^T_Y}\Sigma X}", from=1-1, to=1-8]
\end{tikzcd}
}
\end{center}
\end{lemma}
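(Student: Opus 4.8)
The plan is to obtain all four coherence laws for $h^\pt$ by transporting the corresponding laws for $h^•$ from Lemma~\ref{lem:coh:hbullet} across the $T$-collapsing layer that separates the two families. Unfolding the defining composite gives $h^\pt_{X,Y} = (Sμᵀ\,Γ^•_{Sμᵀ|Y}X) ∘ (ST{\uparrow^T}) ∘ h^•_{X,TY} ∘ [in₁,in₂]$, so $h^\pt_{X,Y}$ is just $h^•_{X,TY}$ postcomposed with the layer $\ell_Y ≔ (Sμᵀ\,Γ^•_{Sμᵀ|Y}X) ∘ (ST{\uparrow^T})$ and precomposed with the inclusion $[in₁,in₂]∶ Γ^\pt_{STY}ΣX ↪ Γ^•_{S|TY}ΣX$. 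Each law then amounts to checking that $\ell$ and $[in₁,in₂]$ respect the relevant piece of structure, reducing the goal either to the matching law of Lemma~\ref{lem:coh:hbullet} or, for the last law, to a coherence of the layer $\ell$ itself.

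I would first dispatch the two easy laws. For the point law (first diagram), the main-point $α^{Γ^\pt}_{ΣX,STY}$ is the injection $in₂$, and $[in₁,in₂] ∘ in₂ = α^{Γ^•_{S|}}_{ΣX,TY}$, so the first law of Lemma~\ref{lem:coh:hbullet} at $TY$ rewrites the composite as $\ell_Y ∘ ST\,α^{Γ^•_{ST|}}_{X,TY} ∘ η_{Σ,ηᵀ_X}$; it then suffices to observe that $\ell_Y$ carries the point $α^{Γ^•_{ST|}}_{X,TY}$ to $α^{Γ^•_{ST|}}_{X,Y}$, a naturality check closed off by the unit law $μᵀ ∘ ηᵀ = \id$. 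The law recovering $h$ (third diagram) is dual: its top edge factors as $Γ_{η^{ST}_Y}ΣX$ followed by $in₁$, which under $[in₁,in₂]$ meets the fourth law of Lemma~\ref{lem:coh:hbullet} (at $TY$), after which $\ell_Y$ again collapses away by the $T$-unit law.

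The substantive case is the $S$-multiplication law (second diagram), which is the $h^\pt$-analogue of the third law of Lemma~\ref{lem:coh:hbullet}. The difficulty is that in $h^\pt$ the reindexing $Sδ;μ^S T$ along $Y$ — which itself uses $δ$ to commute $T$ past $S$ — must be moved across the interposed layer $\ell$ built from $\uparrow^T$ and $μᵀ$. This is exactly the content of Lemma~\ref{lem:delta:uparrow}, which interchanges $\uparrow^S$ and $\uparrow^T$ through $δ$. I would therefore chase the diagram termwise on the coproduct $Γ^\pt_{STSY}ΣX = Γ(ΣX,STSY) + ΣX$: on the $ΣX$-summand the claim collapses to the point law just established, while on the $Γ(ΣX,STSY)$-summand one pastes the $S$-multiplication hexagon for $h^•$ with the interchange square of Lemma~\ref{lem:delta:uparrow} and slides the $\uparrow^T$/$μᵀ$ layer past $Sδ;μ^S T$, closing up with the monad laws of $S$ and $T$ and the distributive-law axioms of $δ$. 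I expect this pasting to be the main obstacle, since it is the only place where the two distinct $\uparrow$ operations genuinely interact.

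Finally, the $T$-multiplication law (fourth diagram) stays entirely in the $T$-direction and so needs no instance of Lemma~\ref{lem:delta:uparrow}; it is instead an associativity statement for the layer $\ell$ itself. Here the two successive $T$-collapses in $h^\pt_{X,TY}$ and in $\ell$ must agree with a single collapse performed after $μᵀ$, which follows from the naturality of $\uparrow^T$ together with the associativity and unit axioms of the monad $T$ (so that $μᵀ ∘ Tμᵀ = μᵀ ∘ μᵀT$), again verified by a routine termwise chase on $Γ^\pt_{STTY}ΣX$. Once the reductions to Lemma~\ref{lem:coh:hbullet}, and for the second law to Lemma~\ref{lem:delta:uparrow}, are in place, all four verifications are mechanical diagram chases.
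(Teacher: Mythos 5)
Your proposal is correct and takes essentially the same route as the paper: the paper likewise unfolds $h^\circ_{X,Y}$ as $h^\bullet_{X,TY}$ precomposed with $[in_1,in_2]$ and postcomposed with the $ST{\uparrow^T};S\mu^T$ layer, reduces the first and third laws to Lemma~\ref{lem:coh:hbullet} plus the $T$-unit collapse, and proves the second law by pasting the third law of Lemma~\ref{lem:coh:hbullet} with the $\delta$/$\uparrow$ interchange of Lemma~\ref{lem:delta:uparrow}. For the fourth law, the associativity of the $\uparrow^T$/$\mu^T$ layer that you invoke is exactly what the paper isolates as Lemma~\ref{lem:option:mu:inside} and verifies by a termwise chase, so your argument matches the paper's in all four cases.
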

\begin{proof}
  The first law holds by chasing the following diagram.
  \begin{center}
    % file:///home/thirs/github/quiver/src/index.html?q=WzAsOSxbMCwyLCJcXEdhbW1hXlxcY2lyY197U1RZfVxcU2lnbWEgWCJdLFsxLDIsIlxcR2FtbWFeXFxidWxsZXRfe1N8VFl9XFxTaWdtYSBYIl0sWzIsMiwiU1QgXFxHYW1tYV5cXGJ1bGxldF97U1R8VFl9IFgiXSxbMywyLCJTVFRcXEdhbW1hXlxcYnVsbGV0X3tTVFR8WX1YIl0sWzQsMiwiU1QgXFxHYW1tYV5cXGJ1bGxldF97U1R8WX1YIl0sWzAsMCwiXFxTaWdtYSBYIl0sWzIsMCwiU1RYIl0sWzMsMSwiU1RUWCJdLFs0LDAsIlNUWCJdLFswLDEsIltpbl8xLGluXzJdIiwyXSxbMSwyLCJoXlxcYnVsbGV0X3tYLFRZfSIsMl0sWzIsMywiU1QgXFx1cGFycm93XlQiLDJdLFszLDQsIlMgXFxtdV5UIFxcR2FtbWFeXFxidWxsZXRfe1NcXG11XlR8WX1YIiwyXSxbNSwwLCJpbl8yIiwyXSxbNSwxLCJpbl8yIiwyXSxbNSw2LCJcXGV0YV97XFxTaWdtYX0gXFxldGFeVF9YIl0sWzYsMiwiU1QgaW5fMiIsMl0sWzYsNywiU1QgXFxldGFeVCIsMl0sWzcsMywiU1RUIGluXzIiLDJdLFs3LDgsIlNcXG11XlRfWCIsMl0sWzgsNCwiU1QgaW5fMiJdLFs2LDgsIiIsMCx7ImxldmVsIjoyLCJzdHlsZSI6eyJoZWFkIjp7Im5hbWUiOiJub25lIn19fV1d
\begin{tikzcd}[ampersand replacement=\&]
	{\Sigma X} \&\& STX \&\& STX \\
	\&\&\& STTX \\
	{\Gamma^\circ_{STY}\Sigma X} \& {\Gamma^\bullet_{S|TY}\Sigma X} \& {ST \Gamma^\bullet_{ST|TY} X} \& {STT\Gamma^\bullet_{STT|Y}X} \& {ST \Gamma^\bullet_{ST|Y}X}
	\arrow["{[in_1,in_2]}"', from=3-1, to=3-2]
	\arrow["{h^\bullet_{X,TY}}"', from=3-2, to=3-3]
	\arrow["{ST \uparrow^T}"', from=3-3, to=3-4]
	\arrow["{S \mu^T \Gamma^\bullet_{S\mu^T|Y}X}"', from=3-4, to=3-5]
	\arrow["{in_2}"', from=1-1, to=3-1]
	\arrow["{in_2}"', from=1-1, to=3-2]
	\arrow["{\eta_{\Sigma} \eta^T_X}", from=1-1, to=1-3]
	\arrow["{ST in_2}"', from=1-3, to=3-3]
	\arrow["{ST \eta^T}"', from=1-3, to=2-4]
	\arrow["{STT in_2}"', from=2-4, to=3-4]
	\arrow["{S\mu^T_X}"', from=2-4, to=1-5]
	\arrow["{ST in_2}", from=1-5, to=3-5]
	\arrow[Rightarrow, no head, from=1-3, to=1-5]
\end{tikzcd}
  \end{center}
  For the second one, we proceed by diagram chasing, as follows.
  \begin{center}
    \ajustedroit{
\begin{tikzcd}[ampersand replacement=\&]
	{\Gamma^\circ_{STSY}\Sigma X} \&\&\&\& {\Gamma^\circ_{STY} \Sigma X} \\
	{\Gamma^\bullet_{S|TSY}\Sigma X} \& {\Gamma^\bullet_{S|STY}\Sigma X} \& {S\Gamma^\bullet_{SS|TY}\Sigma X} \&\& {\Gamma^\bullet_{S|TY}\Sigma Y} \\
	\&\&\& {S\Gamma^\bullet_{S|TY}\Sigma X} \& {ST\Gamma^\bullet_{ST|TY} X} \\
	\&\&\& {SST\Gamma^\bullet_{ST|TY} X} \\
	{ST\Gamma^\bullet_{ST|TSY} X} \& {ST\Gamma^\bullet_{ST|STY} X} \& {STS\Gamma^\bullet_{STS|TY} X} \&\& {ST\Gamma^\bullet_{ST|TY} X} \\
	{STT\Gamma^\bullet_{STT|SY}X} \& {STTS \Gamma^\bullet_{STTS|Y}X} \& {STST\Gamma^\bullet_{STST|Y} X} \&\& {STT\Gamma^\bullet_{STT|Y}X} \\
	{ST\Gamma^\bullet_{ST|SY}X} \& {STS\Gamma^\bullet_{STS|Y}X} \&\&\& {ST\Gamma^\bullet_{ST|Y}X}
	\arrow["{[in_1,in_2]}"', from=1-1, to=2-1]
	\arrow["{h^\bullet_{X,TSY}}"', from=2-1, to=5-1]
	\arrow["{ST \uparrow^T}"', from=5-1, to=6-1]
	\arrow["{S\mu^T \Gamma^\bullet_{S\mu^T|SY} X}"', from=6-1, to=7-1]
	\arrow["{\Gamma^\circ_{(S\delta;\mu^ST)_Y}\Sigma X}", from=1-1, to=1-5]
	\arrow["{[in_1,in_2]}", from=1-5, to=2-5]
	\arrow["{ST \uparrow^T}", from=5-5, to=6-5]
	\arrow["{S\mu^T \Gamma^\bullet_{S\mu^T|Y} X}", from=6-5, to=7-5]
	\arrow["{ST \uparrow^S}"', from=7-1, to=7-2]
	\arrow["{(S\delta;\mu^ST)\Gamma^\bullet_{S\delta;\mu^ST|Y} X}"', from=7-2, to=7-5]
	\arrow["{\Gamma^\bullet_{S|\delta_Y} \Sigma X}", from=2-1, to=2-2]
	\arrow["{\uparrow^S}", from=2-2, to=2-3]
	\arrow[""{name=0, anchor=center, inner sep=0}, "{S\Gamma^\bullet_{\mu^S|Y} \Sigma X}", from=2-3, to=3-4]
	\arrow["{\eta^S}", from=2-5, to=3-4]
	\arrow["{Sh^\bullet_{X,TY}}"', from=3-4, to=4-4]
	\arrow["{\mu^S}", from=4-4, to=5-5]
	\arrow["{h^\bullet_{X,TY}}", from=2-5, to=3-5]
	\arrow["{\eta^S}", from=3-5, to=4-4]
	\arrow[Rightarrow, no head, from=3-5, to=5-5]
	\arrow["{h^\bullet_{X,STY}}"{description}, from=2-2, to=5-2]
	\arrow["{ST \uparrow^S}", from=5-2, to=5-3]
	\arrow["{(S\delta;\mu^S T) \Gamma^\bullet_{S\delta;\mu^S T|TY}X}", from=5-3, to=5-5]
	\arrow["{ST\Gamma^\bullet_{ST|\delta_Y} X}", from=5-1, to=5-2]
	\arrow["{STS\uparrow^T}", from=5-3, to=6-3]
	\arrow["{STT\uparrow^S}", from=6-1, to=6-2]
	\arrow["{ST \delta \Gamma^\bullet_{ST\delta|Y}X}", from=6-2, to=6-3]
	\arrow["{\text{(Lemma~\ref{lem:delta:uparrow})}}"{description}, draw=none, from=5-2, to=6-2]
	\arrow["{S\mu^T S\Gamma^\bullet_{S\mu^TS|Y}X}"', from=6-2, to=7-2]
	\arrow["{(S\delta;\mu^S T) T\Gamma^\bullet_{S\delta;\mu^S TT|Y}X}", from=6-3, to=6-5]
	\arrow["{\text{(Lemma~\ref{lem:coh:hbullet})}}"{description}, Rightarrow, draw=none, from=5-2, to=0]
      \end{tikzcd}
    }
  \end{center}
Let us now consider the third claim.
It follows by chasing the following diagram,
\begin{center}
% file:///home/thirs/github/quiver/src/index.html?q=WzAsMTAsWzAsMCwiXFxHYW1tYV9ZIFxcU2lnbWEgWCJdLFs0LDAsIlxcR2FtbWFfe1NUWX0gXFxTaWdtYSBYIl0sWzYsMCwiXFxHYW1tYV5cXGNpcmNfe1NUWX0gXFxTaWdtYSBYIl0sWzYsMSwiXFxHYW1tYV5cXGJ1bGxldF97U3xUWX0gXFxTaWdtYSBYIl0sWzYsMiwiU1RcXEdhbW1hXlxcYnVsbGV0X3tTVHxUWX0gWCJdLFs2LDMsIlNUVCBcXEdhbW1hXlxcYnVsbGV0X3tTVFR8WX1YIl0sWzMsNSwiU1QgXFxHYW1tYV5cXGJ1bGxldF97U1R8WX1YIl0sWzIsMSwiXFxHYW1tYV97U1l9IFxcU2lnbWEgWCJdLFszLDEsIlxcR2FtbWFeXFxidWxsZXRfe1N8WX1cXFNpZ21hIFgiXSxbNiw1LCJTVCBcXEdhbW1hXlxcYnVsbGV0X3tTVHxZfVgiXSxbMCwxLCJcXEdhbW1hX3tcXGV0YV57U1R9X1l9IFxcU2lnbWEgWCJdLFsxLDIsImluXzEiXSxbMiwzLCJbaW5fMSxpbl8yXSJdLFszLDQsImheXFxidWxsZXRfe1gsVFl9Il0sWzQsNSwiU1QgXFx1cGFycm93XlQiXSxbMCw2LCJoX3tYLFl9IiwyLHsiY3VydmUiOjV9XSxbMCw3LCJcXEdhbW1hX3tcXGV0YV5TX1l9XFxTaWdtYSBYIiwyXSxbNywxLCJcXEdhbW1hX3tcXGV0YV5UX3tTWX19IFxcU2lnbWEgWCIsMCx7ImxhYmVsX3Bvc2l0aW9uIjozMH1dLFs3LDgsImluXzEiLDJdLFs4LDYsImheXFxidWxsZXRfe1gsWX0iXSxbMSwzLCJpbl8xIl0sWzgsMywiXFxHYW1tYV5cXGJ1bGxldF97U3xcXGV0YV5UX1l9XFxTaWdtYSBYIl0sWzUsOSwiU1xcbXVeVCBcXEdhbW1hXlxcYnVsbGV0X3tTXFxtdV5UfFl9WCJdLFs2LDksIiIsMCx7ImxldmVsIjoyLCJzdHlsZSI6eyJoZWFkIjp7Im5hbWUiOiJub25lIn19fV0sWzYsNCwiU1QgXFxHYW1tYV5cXGJ1bGxldF97U1R8XFxldGFeVF9ZfVgiXSxbMTUsMTksIlxcdGV4dHsoTGVtbWF+XFxyZWZ7bGVtOmNvaDpoYnVsbGV0fSl9IiwwLHsibGFiZWxfcG9zaXRpb24iOjMwLCJjdXJ2ZSI6LTMsInNob3J0ZW4iOnsic291cmNlIjo0MCwidGFyZ2V0Ijo0MH0sInN0eWxlIjp7ImJvZHkiOnsibmFtZSI6Im5vbmUifSwiaGVhZCI6eyJuYW1lIjoibm9uZSJ9fX1dXQ==
\begin{tikzcd}[ampersand replacement=\&]
	{\Gamma_Y \Sigma X} \&\&\&\& {\Gamma_{STY} \Sigma X} \&\& {\Gamma^\circ_{STY} \Sigma X} \\
	\&\& {\Gamma_{SY} \Sigma X} \& {\Gamma^\bullet_{S|Y}\Sigma X} \&\&\& {\Gamma^\bullet_{S|TY} \Sigma X} \\
	\&\&\&\&\&\& {ST\Gamma^\bullet_{ST|TY} X} \\
	\&\&\&\&\&\& {STT \Gamma^\bullet_{STT|Y}X} \\
	\\
	\&\&\& {ST \Gamma^\bullet_{ST|Y}X} \&\&\& {ST \Gamma^\bullet_{ST|Y}X}
	\arrow["{\Gamma_{\eta^{ST}_Y} \Sigma X}", from=1-1, to=1-5]
	\arrow["{in_1}", from=1-5, to=1-7]
	\arrow["{[in_1,in_2]}", from=1-7, to=2-7]
	\arrow["{h^\bullet_{X,TY}}", from=2-7, to=3-7]
	\arrow["{ST \uparrow^T}", from=3-7, to=4-7]
	\arrow[""{name=0, anchor=center, inner sep=0}, "{h_{X,Y}}"', curve={height=30pt}, from=1-1, to=6-4]
	\arrow["{\Gamma_{\eta^S_Y}\Sigma X}"', from=1-1, to=2-3]
	\arrow["{\Gamma_{\eta^T_{SY}} \Sigma X}"{pos=0.3}, from=2-3, to=1-5]
	\arrow["{in_1}"', from=2-3, to=2-4]
	\arrow[""{name=1, anchor=center, inner sep=0}, "{h^\bullet_{X,Y}}", from=2-4, to=6-4]
	\arrow["{in_1}", from=1-5, to=2-7]
	\arrow["{\Gamma^\bullet_{S|\eta^T_Y}\Sigma X}", from=2-4, to=2-7]
	\arrow["{S\mu^T \Gamma^\bullet_{S\mu^T|Y}X}", from=4-7, to=6-7]
	\arrow[Rightarrow, no head, from=6-4, to=6-7]
	\arrow["{ST \Gamma^\bullet_{ST|\eta^T_Y}X}", from=6-4, to=3-7]
	\arrow["{\text{(Lemma~\ref{lem:coh:hbullet})}}"{pos=0.3}, curve={height=-18pt}, Rightarrow, draw=none, from=0, to=1]
      \end{tikzcd}
    \end{center}
whose bottom right triangle commutes by chasing the following diagram.
\begin{center}
  % % file:///home/thirs/github/quiver/src/index.html?q=WzAsNyxbMiwwLCJUKFxcR2FtbWFfe1NUVFl9IFggKyBYICsgVFkpIl0sWzIsMiwiVFQgXFxHYW1tYV5cXGJ1bGxldF97U1RUfFl9WCJdLFswLDAsIlQoXFxHYW1tYV97U1RZfVggKyBYICsgWSkiXSxbMiw0LCJUIFxcR2FtbWFeXFxidWxsZXRfe1NUfFl9WCJdLFsyLDEsIlQoVChcXEdhbW1hX3tTVFRZfSBYICsgWCkgKyBUWSkiXSxbMiwzLCJUXFxHYW1tYV5cXGJ1bGxldF97U1RUfFl9WCJdLFsxLDEsIlQoXFxHYW1tYV97U1RUWX1YICsgWCArIFkpIl0sWzIsMCwiVCBcXEdhbW1hXlxcYnVsbGV0X3tTVHxcXGV0YV5UX1l9WCJdLFswLDQsIlQgKFxcZXRhXlQgKyBUWSkiXSxbNCwxLCJUW1Rpbl8xLFRpbl8yXSJdLFsyLDMsIiIsMCx7ImN1cnZlIjozLCJsZXZlbCI6Miwic3R5bGUiOnsiaGVhZCI6eyJuYW1lIjoibm9uZSJ9fX1dLFsxLDUsIlxcbXVeVCBcXEdhbW1hXlxcYnVsbGV0X3tTVFR8WX1YIl0sWzUsMywiVCBcXEdhbW1hXlxcYnVsbGV0X3tTXFxtdV5UfFl9WCIsMCx7ImxhYmVsX3Bvc2l0aW9uIjozMH1dLFs2LDEsIlRcXGV0YV5UIiwyXSxbNiw0LCJUKFxcZXRhXlQgKyBcXGV0YV5UX1kpIl0sWzYsNSwiIiwyLHsiY3VydmUiOjEsImxldmVsIjoyLCJzdHlsZSI6eyJoZWFkIjp7Im5hbWUiOiJub25lIn19fV0sWzIsNiwiVChcXEdhbW1hX3tTVFxcZXRhXlRfWX1YICsgWCArIFkpIiwwLHsibGFiZWxfcG9zaXRpb24iOjEwMH1dXQ==
\begin{tikzcd}[ampersand replacement=\&,baseline=(\tikzcdmatrixname-5-3.base),column sep=1.2cm]
	{T(\Gamma_{STY}X + X + Y)} \&\& {T(\Gamma_{STTY} X + X + TY)} \\
	\& {T(\Gamma_{STTY}X + X + Y)} \& {T(T(\Gamma_{STTY} X + X) + TY)} \\
	\&\& {TT \Gamma^\bullet_{STT|Y}X} \\
	\&\& {T\Gamma^\bullet_{STT|Y}X} \\
	\&\& {T \Gamma^\bullet_{ST|Y}X}
	\arrow["{T \Gamma^\bullet_{ST|\eta^T_Y}X}", from=1-1, to=1-3]
	\arrow["{T (\eta^T + TY)}", from=1-3, to=2-3]
	\arrow["{T[Tin_1,Tin_2]}", from=2-3, to=3-3]
	\arrow[curve={height=18pt}, Rightarrow, no head, from=1-1, to=5-3]
	\arrow["{\mu^T \Gamma^\bullet_{STT|Y}X}", from=3-3, to=4-3]
	\arrow["{T \Gamma^\bullet_{S\mu^T|Y}X}"{pos=0.3}, from=4-3, to=5-3]
	\arrow["{T\eta^T}"'{pos=.6}, from=2-2, to=3-3]
	\arrow["{T(\eta^T + \eta^T_Y)}", from=2-2, to=2-3]
	\arrow[curve={height=6pt}, Rightarrow, no head, from=2-2, to=4-3]
	\arrow["{T(\Gamma_{ST\eta^T_Y}X + X + Y)}"{pos=.7}, from=1-1, to=2-2]
      \end{tikzcd}
\end{center}
Finally, we prove the last claim by diagram chasing as follows,
\begin{center}
\ajustedroit[.95]{
\begin{tikzcd}[ampersand replacement=\&,baseline=(\tikzcdmatrixname-5-1.base)]
	{\Gamma^\circ_{STTY}\Sigma X} \&\&\&\&\&\&\& {\Gamma^\circ_{STY}\Sigma X} \\
	{\Gamma^\bullet_{STTY}\Sigma X} \&\&\&\&\&\&\& {\Gamma^\bullet_{STY}\Sigma X} \\
	{ST O_{TTY} \Gamma^\circ_{STTTY} X} \&\&\&\&\&\&\& {ST O_{TY} \Gamma^\circ_{STTY} X} \\
	{STTO_{TY} \Gamma^\circ_{STTTY} X} \&\& {STTTO_{Y} \Gamma^\circ_{STTTY} X} \&\&\&\&\& {STTO_{Y} \Gamma^\circ_{STTY} X} \\
	{ ST O_{TY} \Gamma^\circ_{STTY}X} \&\& { STT O_{Y} \Gamma^\circ_{STTY}X} \&\&\& { ST O_{Y} \Gamma^\circ_{STTY}X} \&\& { ST O_{Y} \Gamma^\circ_{STY}X}
	\arrow["{ ST\uparrow^T}"', from=5-1, to=5-3]
	\arrow["{ S\mu^T O_{Y} \Gamma^\circ_{STTY}X}"', from=5-3, to=5-6]
	\arrow["{ ST O_{Y} \Gamma^\circ_{S\mu^T_Y}}"', from=5-6, to=5-8]
	\arrow["{\Gamma^\circ_{S\mu^T_Y}\Sigma X}", from=1-1, to=1-8]
	\arrow["{[in_1,in_2]}", from=1-8, to=2-8]
	\arrow["{h^\bullet_{X,TY}}", from=2-8, to=3-8]
	\arrow["{ST \uparrow^T}", from=3-8, to=4-8]
	\arrow["{S\mu^TO_{Y} \Gamma^\circ_{S\mu^T_Y} X}", from=4-8, to=5-8]
	\arrow["{[in_1,in_2]}"', from=1-1, to=2-1]
	\arrow["{h^\bullet_{X,TTY}}"', from=2-1, to=3-1]
	\arrow["{ST \uparrow^T}"', from=3-1, to=4-1]
	\arrow["{S\mu^TO_{TY} \Gamma^\circ_{S\mu^T_{TY}} X}"', from=4-1, to=5-1]
	\arrow["{\Gamma^\bullet_{S\mu^T_Y}\Sigma X}", from=2-1, to=2-8]
	\arrow["{ST O_{\mu^T_Y} \Gamma^\circ_{ST\mu^T_Y} X}", from=3-1, to=3-8]
	\arrow["{STT \uparrow^T}", from=4-1, to=4-3]
	\arrow["{S\mu^TTO_{Y} \Gamma^\circ_{S\mu^T_{TY}} X}", from=4-3, to=5-3]
	\arrow["{ST\mu^TO_{Y} \Gamma^\circ_{ST\mu^T_Y} X}", from=4-3, to=4-8]
      \end{tikzcd}}
\end{center}
where
\begin{itemize}
\item both top rectangles, as well as the bottom left and bottom right
  rectangles, commute by naturality, and
\item the third rectangle commutes by naturality and
  Lemma~\ref{lem:option:mu:inside} below. \qedhere
\end{itemize}
\end{proof}

\begin{lemma}\label{lem:option:mu:inside}
  For any monad $T$ on a category with binary coproducts, for all
  objects $X$ and $Y$, the following diagram commutes.
  \begin{center}
    % file:///home/thirs/github/quiver/src/index.html?q=WzAsNSxbMiwwLCIgT197VFl9IFgiXSxbMiwxLCJUT197WX0gWCJdLFswLDAsIiBPX3tUVFl9WCJdLFswLDEsIlRPX3tUWX1YIl0sWzEsMSwiVFRPX3tZfVgiXSxbMCwxLCIgXFx1cGFycm93XlQiXSxbMiwzLCIgXFx1cGFycm93XlQiLDJdLFsyLDAsIiBPX3tcXG11XlRfWX0gWCJdLFszLDQsIlQgXFx1cGFycm93XlQiLDJdLFs0LDEsIlxcbXVeVE9fe1l9IFgiLDJdXQ==
\begin{tikzcd}[ampersand replacement=\&]
	{ O_{TTY}X} \&\& { O_{TY} X} \\
	{TO_{TY}X} \& {TTO_{Y}X} \& {TO_{Y} X}
	\arrow["{ \uparrow^T}", from=1-3, to=2-3]
	\arrow["{ \uparrow^T}"', from=1-1, to=2-1]
	\arrow["{ O_{\mu^T_Y} X}", from=1-1, to=1-3]
	\arrow["{T \uparrow^T}"', from=2-1, to=2-2]
	\arrow["{\mu^TO_{Y} X}"', from=2-2, to=2-3]
\end{tikzcd}
  \end{center}
\end{lemma}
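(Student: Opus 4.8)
The plan is to reduce everything to the monad laws for $T$ by exploiting the coproduct structure of the domain. First I would unfold the definition of $↑^T$: recall that $↑^T_{X,Y}∶ X + TY → T(X+Y)$ is the composite $[T in_1, T in_2] ∘ (η^T_X + TY)$, which, by naturality of the unit $η^T$ on the left-hand summand, simplifies to the copairing $[η^T_{X+Y} ∘ in_1,\ T in_2]$. This closed form makes each leg of the square transparent, and it is really the heart of the matter.

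Since the domain $O_{TTY}X = X + TTY$ is a coproduct, by its universal property it suffices to check commutation after precomposing with each of the two injections, i.e.\ on the $X$-summand and on the $TTY$-summand separately. I would treat these two cases independently and recombine at the end.

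On the $X$-summand, the top map $O_{μ^T_Y}X = \id_X + μ^T_Y$ acts as the identity, so the top-then-right leg yields $η^T_{X+Y} ∘ in_1$ directly. For the other leg, $↑^T_{X,TY}$ sends the $X$-summand to $η^T_{X+TY} ∘ in_1$; applying $T(↑^T_{X,Y})$ and using naturality of $η^T$ pushes the unit outward past $↑^T_{X,Y}$, producing two stacked units, which the monad unit law $μ^T ∘ η^T T = \id$ then collapses, again giving $η^T_{X+Y} ∘ in_1$. On the $TTY$-summand, the explicit copairing form of $↑^T$ discards all $X$-components on both sides: the top-then-right leg reduces to $T in_2 ∘ μ^T_Y$, while the left-then-bottom leg reduces to $μ^T_{X+Y} ∘ TT in_2$, and these agree by naturality of the multiplication $μ^T$ with respect to $in_2∶ Y → X+Y$. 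Both cases matching, the square commutes.

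The only real source of friction is the routine bookkeeping: distinguishing the inner coproduct injections of $X+TY$ from the outer ones of $X+TTY$, and matching each invocation of naturality (of $η^T$ versus $μ^T$) to the correct summand. There is no genuine conceptual obstacle, since the identity is precisely the statement that $↑^T$ is compatible with the multiplication of $T$, a compatibility forced by the monad laws; the closed form $↑^T_{X,Y} = [η^T_{X+Y} ∘ in_1,\ T in_2]$ does essentially all the work once the coproduct split is made.
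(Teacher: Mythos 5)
Your proof is correct and follows essentially the same route as the paper's: both arguments exploit the coproduct structure of the domain $O_{TTY}X = X + TTY$ to reduce the square to the monad laws, checking the $X$-summand via the unit law and the $TTY$-summand via naturality of $\mu^T$ at $in_2$ (the paper packages this as one combined chase through intermediate coproducts such as $TX+TTY$ and $TTX+TTY$, while you split termwise after first establishing the closed form $\uparrow^T = [\eta^T_{X+Y}\circ in_1,\ T in_2]$). The closed form is a legitimate and slightly cleaner bookkeeping device, and your termwise verification is sound as written.
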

\begin{proof}
  This hold by diagram chasing, as follows.
  \begin{center}
    \hfill
    \ajustedroit[.95]{
  %   file:///home/thirs/github/quiver/src/index.html?q=WzAsOSxbNiwwLCJYK1RZIl0sWzYsNCwiVChYK1kpIl0sWzAsMCwiWCArIFRUWSJdLFswLDQsIlQoWCArIFRZKSJdLFs0LDQsIlRUKFgrWSkiXSxbNiwyLCJUWCArIFRZIl0sWzAsMSwiVFggKyBUVFkiXSxbMiw0LCJUKFRYICsgVFkpIl0sWzMsMiwiVFRYICsgVFRZIl0sWzIsMCwiWCArIFxcbXVeVF9ZIl0sWzQsMSwiXFxtdV5UIiwyXSxbMCw1LCJcXGV0YV5UX1ggKyBUWSJdLFs1LDEsIltUaW5fMSxUaW5fMl0iXSxbMiw2LCJcXGV0YV5UX1ggKyBUVFkiLDJdLFs2LDMsIltUaW5fMSxUaW5fMl0iLDJdLFszLDcsIlQoXFxldGFeVF9YICsgVFkpIiwyXSxbNyw0LCJUW1Rpbl8xLFRpbl8yXSIsMl0sWzYsOCwiVFxcZXRhXlRfWCArIFRUWSIsMl0sWzgsNywiW1Rpbl8xLFRpbl8yXSIsMix7ImxhYmVsX3Bvc2l0aW9uIjoxMDB9XSxbOCw0LCJbVFRpbl8xLFRUaW5fMl0iLDAseyJsYWJlbF9wb3NpdGlvbiI6MTAwfV0sWzgsNSwiXFxtdV5UX1ggKyBcXG11XlRfWSIsMl0sWzYsNSwiVFggKyBcXG11XlRfWSJdXQ==
  \begin{tikzcd}[ampersand replacement=\&,baseline=(\tikzcdmatrixname-5-1.base)]
    {X + TTY} \&\&\&\&\&\& {X+TY} \\
    {TX + TTY} \\
    \&\&\& {TTX + TTY} \&\&\& {TX + TY} \\
    \\
    {T(X + TY)} \&\& {T(TX + TY)} \&\& {TT(X+Y)} \&\& {T(X+Y)}
    \arrow["{X + \mu^T_Y}", from=1-1, to=1-7]
    \arrow["{\mu^T}"', from=5-5, to=5-7]
    \arrow["{\eta^T_X + TY}", from=1-7, to=3-7]
    \arrow["{[Tin_1,Tin_2]}", from=3-7, to=5-7]
    \arrow["{\eta^T_X + TTY}"', from=1-1, to=2-1]
    \arrow["{[Tin_1,Tin_2]}"', from=2-1, to=5-1]
    \arrow["{T(\eta^T_X + TY)}"', from=5-1, to=5-3]
    \arrow["{T[Tin_1,Tin_2]}"', from=5-3, to=5-5]
    \arrow["{T\eta^T_X + TTY}"', from=2-1, to=3-4]
    \arrow["{[Tin_1,Tin_2]}"'{pos=1}, from=3-4, to=5-3]
    \arrow["{[TTin_1,TTin_2]}"{pos=1}, from=3-4, to=5-5]
    \arrow["{\mu^T_X + \mu^T_Y}"', from=3-4, to=3-7]
    \arrow["{TX + \mu^T_Y}", from=2-1, to=3-7]
  \end{tikzcd}}
  \qedhere
  \end{center}
\end{proof}
\subsection{The family $\widetilde{h^∘}$}
We now turn to defining our next family of morphisms,
$\widetilde{h^∘}$.  For this, we need to exploit the cocontinuity
hypothesis in the first argument of $Γ$.

Because we have assumed $𝐂$ to be locally finitely presentable, each
$Γ_Y$ has a right adjoint, which we denote by $ℸ_Y$.  Furthermore, we
have:
\begin{lemma}\label{lem:circ:times}
  Each $Γ^\pt_Y$ has as right adjoint the functor $ℸ^×_Y$ defined by
  $ℸ^×_Y(X) = ℸ_Y(X) × X$.
\end{lemma}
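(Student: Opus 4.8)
The plan is to recognise that $Γ^\pt_Y$ is, by its very definition, the pointwise coproduct of two functors, namely $Γ_Y$ and the identity: indeed $Γ^\pt_Y(X) = Γ(X,Y) + X = Γ_Y(X) + \id_𝐂(X)$. Both summand functors are left adjoints — $Γ_Y$ by hypothesis, its right adjoint being the $ℸ_Y$ just introduced (which exists precisely because $𝐂$ is locally finitely presentable and $Γ$ is cocontinuous in its first argument), and $\id_𝐂$ trivially, being its own right adjoint. I would therefore reduce the lemma to the standard categorical fact that a pointwise coproduct of left adjoints is again a left adjoint, whose right adjoint is the pointwise product of the individual right adjoints.

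The core of the argument is a chain of natural isomorphisms of hom-sets. For any $X, Z ∈ 𝐂$, I would compute
\[ 𝐂(Γ^\pt_Y(X), Z) = 𝐂(Γ_Y(X) + X, Z) ≅ 𝐂(Γ_Y(X), Z) × 𝐂(X, Z) ≅ 𝐂(X, ℸ_Y(Z)) × 𝐂(X, Z) ≅ 𝐂(X, ℸ_Y(Z) × Z), \]
where the first isomorphism is the universal property of the coproduct, the second is the adjunction $Γ_Y \dashv ℸ_Y$, and the last is the universal property of the product. Since $ℸ^×_Y(Z) = ℸ_Y(Z) × Z$ by definition, this exhibits exactly the desired bijection $𝐂(Γ^\pt_Y(X), Z) ≅ 𝐂(X, ℸ^×_Y(Z))$.

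Finally, I would observe that this family of bijections is natural in both $X$ and $Z$, which is automatic since each constituent isomorphism is: naturality in $X$ follows from functoriality of the coproduct together with the naturality of the adjunction $Γ_Y \dashv ℸ_Y$, and naturality in $Z$ from functoriality of the product and of $ℸ_Y$. By the hom-set characterisation of adjunctions, this establishes $Γ^\pt_Y \dashv ℸ^×_Y$. There is essentially no obstacle here; the single point worth a word of care is that the right adjoint $ℸ_Y$ is available in the first place, which is exactly the local finite presentability of $𝐂$ together with the first-argument cocontinuity of $Γ$ already invoked in the surrounding text.
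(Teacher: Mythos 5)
Your proof is correct and is essentially identical to the paper's: the paper proves the lemma by exactly the same chain of natural isomorphisms $𝐂(Γ^\pt_Y(X),Z) ≅ 𝐂(Γ_Y(X),Z) × 𝐂(X,Z) ≅ 𝐂(X,ℸ_Y Z) × 𝐂(X,Z) ≅ 𝐂(X,ℸ^×_Y(Z))$, via the coproduct universal property, the adjunction $Γ_Y ⊣ ℸ_Y$, and the product universal property. Your added remarks on naturality and on the existence of $ℸ_Y$ are sound and merely make explicit what the paper leaves implicit.
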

\begin{proof}
  We have
  \begin{center}
    \hfill $\begin{array}[b]{rcl}
    𝐂(Γ^\pt_Y(X),Z) & = & 𝐂(Γ_Y(X) + X,Z) \\
                    & ≅ & 𝐂(Γ_Y(X),Z) × 𝐂(X,Z) \\
                    & ≅ & 𝐂(X,ℸ_Y Z) × 𝐂(X,Z) \\
                    & ≅ & 𝐂(X,ℸ_Y (Z) × Z) \\
                    & ≅ & 𝐂(X,ℸ^×_Y (Z)).
  \end{array}$ \qedhere
\end{center}
\end{proof}

Furthermore, by~\cite[Theorem~IV.7.3]{MacLane:cwm}, we readily get:
\begin{proposition}\label{prop:paramadj}
  For all bifunctors $L∶ 𝐂² → 𝐂$ such that
  each $L_Y$ has a right adjoint $R_Y$,
  the functors $R_Y$ assemble into a bifunctor $𝐂 × \op{𝐂} → 𝐂$
  making the bijection
  $$𝐂(L_Y(X),Z) ≅ 𝐂(X,R_Y (Z))$$
  natural in all variables.
\end{proposition}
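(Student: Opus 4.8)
The plan is to recognise this as the classical theorem on adjoints with a parameter, and I would structure the argument so that its hypotheses are visibly met and then defer the bookkeeping to \cite[Theorem~IV.7.3]{MacLane:cwm}. First I would fix, for each object $Y ∈ 𝐂$, a chosen right adjoint $R_Y$ of $L_Y$ together with the adjunction isomorphism $𝐂(L_Y(X),Z) ≅ 𝐂(X,R_Y(Z))$, which by hypothesis is natural in $X$ and $Z$. On objects this already defines the candidate bifunctor by $(Z,Y) ↦ R_Y(Z)$: covariant in $Z$ since each $R_Y$ is a functor, and to be made contravariant in $Y$.

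The substance of the proof is the action on morphisms in the $Y$-variable, obtained by the \emph{mate} correspondence. Since $L$ is a bifunctor, any $g∶ Y → Y'$ yields a natural transformation $L(-,g)∶ L_Y \Rightarrow L_{Y'}$ between the two left adjoints, and I would take its mate $R_g∶ R_{Y'} \Rightarrow R_Y$, namely the composite
$$R_{Y'} \xto{η^Y R_{Y'}} R_Y L_Y R_{Y'} \xto{R_Y L(-,g) R_{Y'}} R_Y L_{Y'} R_{Y'} \xto{R_Y \varepsilon^{Y'}} R_Y,$$
where $η^Y$ and $\varepsilon^Y$ are the unit and counit of $L_Y \dashv R_Y$. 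Using the triangle identities I would check $R_{\id_Y} = \id$ and $R_{g'∘g} = R_g ∘ R_{g'}$, so that $Y ↦ R_Y$ is a functor $\op{𝐂} → [𝐂,𝐂]$; combined with the covariant $Z$-action, and checking the interchange law (which is immediate from naturality of the mate in $Z$), this exhibits $R$ as a bifunctor $𝐂 × \op{𝐂} → 𝐂$. Finally, naturality of the adjunction bijection in $Y$ is exactly the defining property of the mate, namely that transposition intertwines precomposition by $L(-,g)$ on the left with postcomposition by $R_g$ on the right; together with the assumed naturality in $X$ and $Z$ this gives naturality in all three variables.

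The main obstacle is not any single calculation — each is a routine triangle-identity chase — but rather organising the variance correctly and confirming that the mate construction does produce a genuinely functorial, interchange-compatible contravariant action in $Y$. Since this entire package is precisely the conclusion of MacLane's parameter theorem, I would carry the mate construction only far enough to match its hypotheses (a choice of $R_Y$ and natural bijections for each $Y$) and then invoke \cite[Theorem~IV.7.3]{MacLane:cwm} to supply the unique compatible bifunctor structure, exactly as the phrase ``we readily get'' suggests.
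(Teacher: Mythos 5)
Your proposal is correct and matches the paper exactly: the paper proves this proposition solely by invoking \cite[Theorem~IV.7.3]{MacLane:cwm} (adjunctions with a parameter), which is precisely the theorem you identify and defer to. Your explicit unfolding of the mate construction $R_{Y'} \Rightarrow R_Y L_Y R_{Y'} \Rightarrow R_Y L_{Y'} R_{Y'} \Rightarrow R_Y$, with the correct contravariant variance in $Y$, is just the content of MacLane's proof spelled out, so the two arguments coincide.
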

Here, naturality in $Y$ means that, e.g., for all morphisms $f∶ Y → Y'$ and
$u∶ L_{Y'}X → Z$, the transpose of
\begin{center}
  $L_YX \xto{L_f X} L_{Y'} X \xto{u} Z$
  \qquad is \qquad
  $X \xto{\tilde{u}} R_{Y'}Z \xto{R_fZ} R_Y Z\rlap{,}$
\end{center}
where $\tilde{u}$ denotes the transpose of $u$.

\begin{corollary}\label{cor:transpose}
  For all $f∶ Y → Z$, $g∶ B → C$,
  $u∶ A → R_YB$, and $v∶ A → R_ZC$, letting $\tilde{u}$ and
  $\tilde{v}$ denote the transposed morphisms, the square below left
  commutes iff the one below right does.
  \begin{center}
    \diag{%
      A \& R_Y B \\
      R_Z C \& R_Y C %
    }{%
      (m-1-1) edge[labela={u}] (m-1-2) %
      edge[labell={v}] (m-2-1) %
      (m-2-1) edge[labelb={R_f C}] (m-2-2) %
      (m-1-2) edge[labelr={R_Y g}] (m-2-2) %
    }
    \hfil
    \diag{%
    L_Y A \& B \\
    L_Z A \& C %
  }{%
    (m-1-1) edge[labela={\tilde{u}}] (m-1-2) %
    edge[labell={L_f A}] (m-2-1) %
    (m-2-1) edge[labelb={\tilde{v}}] (m-2-2) %
    (m-1-2) edge[labelr={g}] (m-2-2) %
    }%
  \end{center}
  In particular, for all $A$ and $f∶ B→C$,
  the following diagrams commute.
  \begin{center}
    \diag{%
      L_B R_C A \& L_C R_C A \\
      L_B R_B A \& A
    }{%
      (m-1-1) edge[labela={L_f R_C A}] (m-1-2) %
      edge[labell={L_B R_f A}] (m-2-1) %
      (m-2-1) edge[labelb={ε_B A}] (m-2-2) %
      (m-1-2) edge[labelr={ε_C A}] (m-2-2) %
    }
    \ 
    \diag{%
      A \& R_C L_C A \\
      R_B L_B A \& R_B L_C A
    }{%
      (m-1-1) edge[labela={η_C A}] (m-1-2) %
      edge[labell={η_B A}] (m-2-1) %
      (m-2-1) edge[labelb={R_B L_f A}] (m-2-2) %
      (m-1-2) edge[labelr={R_f L_C A}] (m-2-2) %
    }
  \end{center}
\end{corollary}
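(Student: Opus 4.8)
The plan is to reduce both claims to the hom-set bijection of Proposition~\ref{prop:paramadj} together with its naturality, the only nonstandard ingredient being naturality in the parameter. For the main equivalence, observe that both legs of the right-hand square are morphisms $L_Y A \to C$ and both legs of the left-hand square are morphisms $A \to R_Y C$. Since transposition $𝐂(L_Y A, C) \cong 𝐂(A, R_Y C)$ is a bijection, it suffices to check that the transpose of $g \circ \tilde{u}$ is $R_Y g \circ u$ and that the transpose of $\tilde{v} \circ L_f A$ is $R_f C \circ v$: the right-hand square then commutes, i.e. $g \circ \tilde{u} = \tilde{v} \circ L_f A$, if and only if these transposes agree, i.e. $R_Y g \circ u = R_f C \circ v$, which is exactly commutation of the left-hand square. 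The first transpose computation is the ordinary naturality of $L_Y \dashv R_Y$ in its codomain, applied to $g \colon B \to C$ and to $\tilde{u}$ (whose transpose is $u$). The second is precisely the parameter-naturality displayed just after Proposition~\ref{prop:paramadj}, applied to $f \colon Y \to Z$ and to $\tilde{v}$ (whose transpose with respect to $L_Z \dashv R_Z$ is $v$).

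For the two \emph{in particular} claims I would feed degenerate squares into the equivalence. For the unit diagram, instantiate the equivalence at the given $f \colon B \to C$ (so the parameter runs $Y := B$, $Z := C$), take the two codomain objects to be $L_B A$ and $L_C A$, and set $g := L_f A$, $u := \eta_B A$, $v := \eta_C A$. Since the transpose of a unit is an identity, $\tilde{u} = \id_{L_B A}$ and $\tilde{v} = \id_{L_C A}$, so the right-hand square reads $L_f A \circ \id = \id \circ L_f A$ and commutes on the nose; by the equivalence the left-hand square commutes, and it is exactly $R_f(L_C A) \circ \eta_C A = R_B(L_f A) \circ \eta_B A$, the claimed unit diagram. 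The counit diagram is the formal dual: in $\op{𝐂}$ the family $L_Y \dashv R_Y$ becomes a parameterised adjunction of the same shape, turning this counit square into a unit square, so it follows by the same computation; concretely one transposes an identity square on the $R$-side, the commutation of its partner square reducing to a triangle identity.

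The real work, and the step I expect to be the main obstacle, is careful bookkeeping of variance. Because $R$ is contravariant in the parameter, the second transpose above must invoke parameter-naturality in the correct direction, and it is this step — not the routine codomain-naturality — that carries the content. Relatedly, matching each special square to the equivalence requires recognising every edge as either a functorial $L_Y$/$R_Y$ arrow or a parameter arrow $L_f$/$R_f$, i.e. reading the squares up to reflection across their diagonal before applying the equivalence.
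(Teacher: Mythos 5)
Your proof is correct and matches the paper's (implicit) argument: the paper gives no explicit proof, treating the corollary as an immediate consequence of the naturality of the transposition bijection from Proposition~\ref{prop:paramadj}, which is precisely what you spell out --- codomain-naturality for one leg, parameter-naturality for the other, and degenerate instantiations (transposes of units/counits being identities, plus a triangle identity) for the two special squares.
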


Let us return to the construction of the next family.
\begin{definition}
  Let $\widetilde{h^∘}_{X,Y}$ denote the
  transpose
  $$\widetilde{h^∘}_{X,Y}∶ ΣX → ℸ^×_{STY} ST O_Y Γ^∘_{STY} X$$
  of $h^∘_{X,Y}$.
\end{definition}
\begin{lemma}\label{lem:nat:hcirctilde}
  The family $\widetilde{h^∘}$ is natural in $X$, i.e., for all $Y$ and
  $f∶ X → Z$, the following square commutes.
  \begin{center}
    \diag(.6,2.7){%
      ΣX \& ΣZ \\
      ℸ^×_{STY} ST O_Y Γ^\pt_{STY} X \&
      ℸ^×_{STY} ST O_Y Γ^\pt_{STY} Z
    }{%
      (m-1-1) edge[labela={Σf}] (m-1-2) %
      edge[labell={\widetilde{h^∘}_{X,Y}}] (m-2-1) %
      (m-2-1) edge[labelb={ℸ^×_{STY} ST O_Y Γ^\pt_{STY} f}] (m-2-2) %
      (m-1-2) edge[labelr={\widetilde{h^∘}_{Z,Y}}] (m-2-2) %
    }
  \end{center}
\end{lemma}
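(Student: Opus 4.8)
The plan is to obtain naturality of $\widetilde{h^∘}$ in $X$ as the transpose, under the parametrised adjunction $Γ^\pt_{STY} \dashv ℸ^×_{STY}$ of Lemma~\ref{lem:circ:times}, of the naturality of $h^∘$ in $X$. Throughout, $Y$ is fixed, so the parameter $STY$ of the adjunction is fixed as well, and I only need the fact (Proposition~\ref{prop:paramadj}) that the transposition bijection $𝐂(Γ^\pt_{STY}(A),B) ≅ 𝐂(A,ℸ^×_{STY}(B))$ is natural in its two ordinary variables $A$ and $B$. Recall also from the preceding lemma that $Γ^•_{ST|Y}X = O_Y Γ^\pt_{STY}X$, so that $h^∘_{X,Y}$ has codomain $G(X) ≔ ST O_Y Γ^\pt_{STY}X$, and $\widetilde{h^∘}_{X,Y}∶ ΣX → ℸ^×_{STY}(G(X))$ is by definition its transpose.

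First I would record that $h^∘_{-,Y}$ is natural in $X$. Indeed, by its defining composite — the coproduct injection $[in₁,in₂]$, the component $h^•_{-,TY}$, the map $ST↑^T$, and $Sμᵀ Γ^•_{Sμᵀ|Y}{-}$ — it is a composite of natural transformations in $X$. Hence, for any $f∶ X → Z$, writing $Gf ≔ ST O_Y Γ^\pt_{STY}f$, the square
$$Gf ∘ h^∘_{X,Y} = h^∘_{Z,Y} ∘ Γ^\pt_{STY}(Σf)$$
commutes.

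I would then transpose this identity across the adjunction. Setting $A = ΣX$, $B = G(X)$, $C = G(Z)$, $g = Gf$, and $u = \widetilde{h^∘}_{X,Y}$ (so that $\tilde u = h^∘_{X,Y}$), Corollary~\ref{cor:transpose} applied with identity parameter map $\id_{STY}$ states that the desired square
$$ℸ^×_{STY}(Gf) ∘ \widetilde{h^∘}_{X,Y} = \widetilde{h^∘}_{Z,Y} ∘ Σf$$
commutes if and only if its transpose $Gf ∘ h^∘_{X,Y} = h^∘_{Z,Y} ∘ Γ^\pt_{STY}(Σf)$ does; here I use naturality of transposition in the domain variable (again Proposition~\ref{prop:paramadj}) to identify the transpose of $\widetilde{h^∘}_{Z,Y} ∘ Σf$ with $h^∘_{Z,Y} ∘ Γ^\pt_{STY}(Σf)$. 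But the transposed square is exactly the naturality of $h^∘$ recorded in the previous paragraph, so we are done.

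The argument is purely formal: it is the standard observation that transposition across an adjunction carries naturality squares to naturality squares, packaged here by Corollary~\ref{cor:transpose}. The only substantive point is the naturality of $h^∘$ in $X$, which is immediate from its construction as a composite of natural transformations; the one place requiring minor care is tracking which of the two naturalities of the transposition bijection governs the pre- and the post-composition, and Corollary~\ref{cor:transpose} already isolates precisely this bookkeeping.
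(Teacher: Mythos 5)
Your proposal is correct and is exactly the paper's argument: the paper's own proof is the one-liner ``by naturality and bijectivity of transposition,'' which you have simply unfolded — naturality of $h^∘$ in $X$ (as a composite of natural transformations) transposed across the parametrised adjunction $Γ^\pt_{STY} ⊣ ℸ^×_{STY}$ of Lemma~\ref{lem:circ:times}, with Proposition~\ref{prop:paramadj}/Corollary~\ref{cor:transpose} (at the identity parameter map) handling the bookkeeping. No gaps; your version is just a more explicit rendering of the same proof.
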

\begin{proof}
  By naturality and bijectivity of transposition.
\end{proof}

\begin{lemma}\label{lem:hcirctilde:extranatural}
  The family $\widetilde{h^∘}$ is extranatural~\cite{Kelly1982Basic}
  in $Y$, i.e., for all $X$ and $f∶ Y → Z$, the following square commutes.
  \begin{center}
    \diag(.6,2.7){%
      ΣX \& ℸ^×_{STY} ST O_Y Γ^∘_{STY} X \\
      ℸ^×_{STZ} ST O_{Z} Γ^∘_{STZ} X \& ℸ^×_{STY} ST O_{Z} Γ^∘_{STZ} X
    }{%
      (m-1-1) edge[labela={\widetilde{h^∘}_{X,Y}}] (m-1-2) %
      edge[labell={\widetilde{h^∘}_{X,Z}}] (m-2-1) %
      (m-2-1) edge[labelb={ℸ^×_{STf} ST O_{Z} Γ^∘_{STZ} X}] (m-2-2) %
      (m-1-2) edge[labelr={ℸ^×_{STY} ST O_{f} Γ^∘_{STf} X}] (m-2-2) %
    }
  \end{center}
\end{lemma}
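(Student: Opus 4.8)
The plan is to transpose the problem across the parametrised adjunction $Γ^∘ \dashv ℸ^×$ of Lemma~\ref{lem:circ:times} and Proposition~\ref{prop:paramadj}, thereby reducing the stated extranaturality of $\widetilde{h^∘}$ to plain naturality of $h^∘$ in its second argument. This is the standard phenomenon that the transpose of an extranatural family (extranatural precisely because $Y$ sits in the parameter slot $ℸ^×_{STY}$ with mixed variance) is a genuinely natural family.

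Concretely, I would invoke Corollary~\ref{cor:transpose} with $L = Γ^∘$, $R = ℸ^×$, object $A = ΣX$, and with the \emph{parameter} morphism taken to be $STf∶ STY → STZ$ (not $f$ itself). Setting $B = ST O_Y Γ^∘_{STY}X$, $C = ST O_Z Γ^∘_{STZ}X$, and $g = ST O_f Γ^∘_{STf}X∶ B → C$, one has $u = \widetilde{h^∘}_{X,Y}$ and $v = \widetilde{h^∘}_{X,Z}$, so that $R_{STY}B$ is the top-right corner, the right-hand edge $ℸ^×_{STY} ST O_f Γ^∘_{STf}X$ is $R_{STY}g$, and the bottom edge $ℸ^×_{STf} ST O_Z Γ^∘_{STZ}X$ is $R_{STf}C$. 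The square to be proved is thus exactly the left-hand square of the corollary, and by the corollary it commutes if and only if the transposed (right-hand) square does, namely the naturality square of $h^∘$ in $Y$:
\begin{center}
  \diag{
    Γ^∘_{STY} ΣX \& ST O_Y Γ^∘_{STY} X \\
    Γ^∘_{STZ} ΣX \& ST O_Z Γ^∘_{STZ} X
  }{
    (m-1-1) edge[labela={h^∘_{X,Y}}] (m-1-2)
    edge[labell={Γ^∘_{STf} ΣX}] (m-2-1)
    (m-2-1) edge[labelb={h^∘_{X,Z}}] (m-2-2)
    (m-1-2) edge[labelr={ST O_f Γ^∘_{STf} X}] (m-2-2)
  }
\end{center}

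It then remains to verify this last square, which I would do by unfolding $h^∘$ into the defining composite $Sμᵀ Γ^•_{Sμᵀ|Y}X ∘ ST{↑}^{T} ∘ h^•_{X,TY} ∘ [in₁,in₂]$ at both $Y$ and $Z$, and pasting the naturality squares of its four factors with respect to $f$: the coproduct inclusion $[in₁,in₂]∶ Γ^∘_{ST-}ΣX → Γ^•_{S|T-}ΣX$; the second-variable naturality of $h^•$ instantiated along $Tf∶ TY → TZ$; the naturality of $↑^{T}$; and the naturality of $Sμ^T Γ^•_{Sμ^T|-}X$. Each factor is assembled from the natural data $h$, $η$, $↑^S$, $↑^T$, $δ$, $μ^S$, $μ^T$ and the functorial actions of $Γ$ and $O$, so each component square commutes by naturality, and their vertical pasting yields the desired square.

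The only genuinely delicate point is the bookkeeping in applying Corollary~\ref{cor:transpose}: one must recognise that the adjunction parameter is $ST(-)$, so that the relevant index morphism is $STf$ rather than $f$, and must identify the two structural edges of the stated diagram as $R_{STY}g$ and $R_{STf}C$. Once this matching is pinned down, both remaining steps—the reduction across the adjunction and the naturality of $h^∘$—are routine, the latter being nothing more than naturality of each of the four constituent transformations.
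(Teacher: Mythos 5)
Your proposal is correct and follows exactly the paper's route: the paper's proof is the one-line ``by naturality of $h^\circ$ and Corollary~\ref{cor:transpose}'', and your application of the corollary at the parameter morphism $STf$ (with $L=\Gamma^\circ$, $R=\daleth^\times$, $A=\Sigma X$) is precisely the intended bookkeeping. Your extra step of checking naturality of $h^\circ$ by unfolding its four factors is merely spelling out what the paper leaves implicit, so there is nothing genuinely different here.
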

\begin{proof}
By naturality of $h^∘$ and Corollary~\ref{cor:transpose}.
\end{proof}

\begin{lemma}\label{lem:coh:hcirctilde}
  The family $\widetilde{h^∘}$ satisfies the following laws.
  \begin{center}
    \diag{%
      ΣX \& STX \\
      ℸ^×_{STY} ST Γ^•_{ST|Y} X  \& ST Γ^•_{ST|Y} X %
    }{%
    (m-1-1) edge[labela={η_{Σ,ηᵀ_X}}] (m-1-2) %
    edge[labell={\widetilde{h^∘}_{X,Y}}] (m-2-1) %
    (m-2-1) edge[labelb={π₂}] (m-2-2) %
    (m-1-2) edge[labelr={ST  α^{Γ^•_{ST|}}_{X,Y}}] (m-2-2) %
  }

  \diag(.6,1.6){%
    ΣX \& \& ℸ^×_{STY} ST Γ^•_{ST|Y} X \\
    \& \& ℸ^×_{SSTY} ST Γ^•_{STY} X \\ 
    ℸ^×_{STSY} ST Γ^•_{ST|SY} X \&
    ℸ^×_{STSY} STS Γ^•_{STS|Y} X \&
            ℸ^×_{STSY} ST Γ^•_{ST|Y} X %
          }{%
            (m-1-1) edge[labela={\widetilde{h^∘}_{X,Y}}] (m-1-3) %
            edge[labell={\widetilde{h^∘}_{X,SY}}] (m-3-1) %
            (m-3-1) edge[loinb={ℸ^×_{STSY} ST ↑^{S}_{X,Y}}] (m-3-2) %
            (m-3-2) edge[loinb={ℸ^×_{STSY} (Sδ;μ^S T) Γ^•_{S δ; μ^S T|Y}X}] (m-3-3) %
            (m-1-3) edge[labelr={ℸ^×_{μ^S_{TY}} ST Γ^•_{STY} X}] (m-2-3)
            (m-2-3) edge[labelr={ℸ^×_{Sδ_Y} ST Γ^•_{STY} X}] (m-3-3) %
  }

% file:///home/thirs/github/quiver/src/index.html?q=WzAsNCxbMCwwLCJcXFNpZ21hIFgiXSxbMiwyLCJcXGRhbGV0aF9ZIFNUIFxcR2FtbWFeXFxidWxsZXRfe1NUfFl9WCJdLFsyLDAsIlxcZGFsZXRoXlxcdGltZXNfe1NUWX0gU1QgXFxHYW1tYV5cXGJ1bGxldF97U1R8WX1YIl0sWzIsMSwiXFxkYWxldGhfe1NUWX0gU1QgXFxHYW1tYV5cXGJ1bGxldF97U1R8WX1YIl0sWzAsMSwiXFx3aWRldGlsZGV7aH1fe1gsWX0iLDJdLFswLDIsIlxcd2lkZXRpbGRle2heXFxjaXJjfV97WCxZfSJdLFsyLDMsIlxccGlfMSJdLFszLDEsIlxcZGFsZXRoX3tcXGV0YV57U1R9X1l9U1QgXFxHYW1tYV5cXGJ1bGxldF97U1R8WX1YIiwwLHsibGFiZWxfcG9zaXRpb24iOjQwfV1d
\begin{tikzcd}[ampersand replacement=\&]
	{\Sigma X} \&\& {\daleth^\times_{STY} ST \Gamma^\bullet_{ST|Y}X} \\
	\&\& {\daleth_{STY} ST \Gamma^\bullet_{ST|Y}X} \\
	\&\& {\daleth_Y ST \Gamma^\bullet_{ST|Y}X}
	\arrow["{\widetilde{h}_{X,Y}}"', from=1-1, to=3-3]
	\arrow["{\widetilde{h^\circ}_{X,Y}}", from=1-1, to=1-3]
	\arrow["{\pi_1}", from=1-3, to=2-3]
	\arrow["{\daleth_{\eta^{ST}_Y}ST \Gamma^\bullet_{ST|Y}X}"{pos=0.4}, from=2-3, to=3-3]
\end{tikzcd}      

\ajustedroit{
% file:///home/thirs/github/quiver/src/index.html?q=WzAsNSxbMCwwLCJcXFNpZ21hIEEiXSxbNSwwLCJcXGRhbGV0aF5cXHRpbWVzX3tTVFl9IFNUIE9fe1l9IFxcR2FtbWFeXFxjaXJjX3tTVFl9QSJdLFswLDEsIlxcZGFsZXRoXlxcdGltZXNfe1NUVFl9IFNUIE9fe1RZfSBcXEdhbW1hXlxcY2lyY197U1RUWX1BIl0sWzIsMSwiXFxkYWxldGheXFx0aW1lc197U1RUWX0gU1RUIE9fe1l9IFxcR2FtbWFeXFxjaXJjX3tTVFRZfUEiXSxbNSwxLCJcXGRhbGV0aF5cXHRpbWVzX3tTVFRZfSBTVCBPX3tZfSBcXEdhbW1hXlxcY2lyY197U1RZfUEiXSxbMCwxLCJcXHdpZGV0aWxkZXtoXlxcY2lyY31fe1xcZGFsZXRoXlxcdGltZXNfe1NUWX0gU1QgXFxHYW1tYV5cXGJ1bGxldF97U1R8WX0gWCxZfSJdLFsxLDQsIlxcZGFsZXRoXlxcdGltZXNfe1NcXG11XlRfWX0gU1QgT197WX0gXFxHYW1tYV5cXGNpcmNfe1NUWX1BIl0sWzAsMiwiXFx3aWRldGlsZGV7aF5cXGNpcmN9X3tcXGRhbGV0aF5cXHRpbWVzX3tTVFl9IFNUIFxcR2FtbWFeXFxidWxsZXRfe1NUfFl9IFgsVFl9IiwyXSxbMiwzLCJcXGRhbGV0aF5cXHRpbWVzX3tTVFRZfSBTVFxcdXBhcnJvd15UIiwyXSxbMyw0LCJcXGRhbGV0aF5cXHRpbWVzX3tTVFRZfSBTXFxtdV5UIE9fe1l9IFxcR2FtbWFeXFxjaXJjX3tTXFxtdV5UX1l9QSIsMl1d
\begin{tikzcd}[ampersand replacement=\&]
	{\Sigma A} \&\&\&\&\& {\daleth^\times_{STY} ST O_{Y} \Gamma^\circ_{STY}A} \\
	{\daleth^\times_{STTY} ST O_{TY} \Gamma^\circ_{STTY}A} \&\& {\daleth^\times_{STTY} STT O_{Y} \Gamma^\circ_{STTY}A} \&\&\& {\daleth^\times_{STTY} ST O_{Y} \Gamma^\circ_{STY}A}
	\arrow["{\widetilde{h^\circ}_{\daleth^\times_{STY} ST \Gamma^\bullet_{ST|Y} X,Y}}", from=1-1, to=1-6]
	\arrow["{\daleth^\times_{S\mu^T_Y} ST O_{Y} \Gamma^\circ_{STY}A}", from=1-6, to=2-6]
	\arrow["{\widetilde{h^\circ}_{\daleth^\times_{STY} ST \Gamma^\bullet_{ST|Y} X,TY}}"', from=1-1, to=2-1]
	\arrow["{\daleth^\times_{STTY} ST\uparrow^T}"', from=2-1, to=2-3]
	\arrow["{\daleth^\times_{STTY} S\mu^T O_{Y} \Gamma^\circ_{S\mu^T_Y}A}"', from=2-3, to=2-6]
\end{tikzcd}}
  \end{center}
\end{lemma}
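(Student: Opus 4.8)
The plan is to derive each of the four laws for $\widetilde{h^∘}$ as the adjoint transpose of the corresponding law for $h^∘$ proved in Lemma~\ref{lem:coh:hcirc}. Recall that $\widetilde{h^∘}_{X,Y}$ is by definition the transpose of $h^∘_{X,Y}∶ Γ^\pt_{STY}ΣX → ST O_Y Γ^\pt_{STY}X$ along the parametrised adjunction $Γ^\pt_{STY} ⊣ ℸ^×_{STY}$ of Lemma~\ref{lem:circ:times}, which by Proposition~\ref{prop:paramadj} is natural in every variable. Since transposition is a bijection, it carries each of the four coherence equations of Lemma~\ref{lem:coh:hcirc} to a unique equation between the transposed morphisms, and these are precisely the four laws to be established. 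I would therefore proceed law by law.

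The two projections $π_1,π_2$ in the statement come directly from the explicit form of the adjunction. Unfolding the proof of Lemma~\ref{lem:circ:times}, the transpose $\widetilde{u}∶ X → ℸ^×_{STY}Z = ℸ_{STY}Z × Z$ of a map $u∶ Γ^\pt_{STY}X = Γ_{STY}X + X → Z$ has as its two projections the plain $Γ_{STY}$-transpose of $in_1;u$ and the map $in_2;u = α^{Γ^\pt};u$. Hence $\widetilde{h^∘}_{X,Y};π_2 = α^{Γ^\pt};h^∘_{X,Y}$, while $\widetilde{h^∘}_{X,Y};π_1$ is the plain $Γ_{STY}$-transpose of $in_1;h^∘_{X,Y}$. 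The first law of $\widetilde{h^∘}$ is then immediate from the first law of Lemma~\ref{lem:coh:hcirc}, which computes $α^{Γ^\pt};h^∘$, and the third law follows from the third law of Lemma~\ref{lem:coh:hcirc}, which computes $in_1;h^∘$; here the reindexing $ℸ_{η^{ST}_Y}$ is produced from the domain parameter-change $Γ_{η^{ST}_Y}$ by naturality of the plain transposition in the parameter, and $\widetilde{h}$ is the $Γ_Y ⊣ ℸ_Y$-transpose of $h$.

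For the second and fourth laws, which reindex the parameter along $S$ and $T$ respectively, the key tool is Corollary~\ref{cor:transpose}: transposing a composite whose first leg is a domain parameter-change $Γ^\pt_{f}$ turns that leg into the contravariant codomain reindexing $ℸ^×_{f}$. Concretely, the domain reindexing $Γ^\pt_{(Sδ;μ^S T)_Y}$ of the second law of Lemma~\ref{lem:coh:hcirc} becomes, after transposition, the composite $ℸ^×_{Sδ_Y} ∘ ℸ^×_{μ^S_{TY}} = ℸ^×_{(Sδ;μ^S T)_Y}$ on the right-hand leg of the second law of $\widetilde{h^∘}$, and similarly $Γ^\pt_{Sμ^T_Y}$ transposes to $ℸ^×_{Sμ^T_Y}$ in the fourth law; the $↑^S$- and $(Sδ;μ^S T)$-legs (resp.\ the $↑^T$- and $Sμ^T$-legs) on the other side of each equation carry over verbatim. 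Combining this with the naturality (Lemma~\ref{lem:nat:hcirctilde}) and extranaturality (Lemma~\ref{lem:hcirctilde:extranatural}) of $\widetilde{h^∘}$, which justify how these morphisms recombine, yields the two remaining laws.

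I expect the main obstacle to be bookkeeping rather than conceptual: keeping the parametrised adjunction straight, so that each domain parameter-change of $h^∘$ is matched, after transposition, with the correct contravariant reindexing $ℸ^×_{(-)}$ and with the right placement of the monad-structure maps $μ^S$, $δ$, and $μ^T$. No categorical ingredient beyond Corollary~\ref{cor:transpose} is needed; the entire lemma reduces to applying it, together with the already-established coherence of $h^∘$, once for each of the four laws.
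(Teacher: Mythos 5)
Your proposal is correct and takes essentially the same route as the paper's proof: each of the four laws is obtained by transposing the corresponding law of Lemma~\ref{lem:coh:hcirc} across the adjunction of Lemma~\ref{lem:circ:times}, using its explicit construction (the two projections) for the first and third laws and Proposition~\ref{prop:paramadj}/Corollary~\ref{cor:transpose} to turn domain parameter-changes into contravariant $\daleth^\times$-reindexings for the second and fourth. Your additional appeal to Lemmas~\ref{lem:nat:hcirctilde} and~\ref{lem:hcirctilde:extranatural} is redundant---as you yourself note, Corollary~\ref{cor:transpose} alone suffices---but harmless.
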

\begin{proof}
  The first diagram holds by Lemma~\ref{lem:coh:hcirc} and
  construction of $ℸ^×_{STY}$.  The second one holds by
  Lemma~\ref{lem:coh:hcirc} and Corollary~\ref{cor:transpose}.
  The third diagram holds by Lemma~\ref{lem:coh:hcirc}, observing
  that by construction of the adjunction of Lemma~\ref{lem:circ:times},
  $π₁ ∘ \widetilde{h^\circ}_{X,Y}$ is the transpose of
  $$Γ_{STY}Σ X \xto{in₁} Γ^\pt_{STY}Σ X \xto{h^∘_{X,Y}} ST Γ^•_{STY}X\rlap{,}$$
  hence, by Proposition~\ref{prop:paramadj}, $ℸ_{η^{ST}_Y}ST \Gamma^\bullet_{ST|Y}X ∘ π₁ ∘ \widetilde{h^∘}_{X,Y}$
  is the transpose of
  \begin{center}
    $Γ_YΣ X \xto{Γ_{η^{ST}_Y} Σ X} Γ_{STY}Σ X \xto{in₁} Γ^\pt_{STY}Σ X \xto{h^∘_{X,Y}} ST Γ^•_{STY}X\rlap{.}$
  \end{center}
  Finally, the last diagram is precisely the transpose of the last
  diagram of Lemma~\ref{lem:coh:hcirc}.
\end{proof}

\subsection{The family $\widetilde{h^∘}^ω$}
We now want to define the next family, $\widetilde{h^∘}^ω$.  For this, we
start by observing the following: We have:
\begin{lemma}
  For all $Y$, the composite functor $ℸ^×_{STY} ST O_Y Γ^∘_{STY}$
  is a monad.
\end{lemma}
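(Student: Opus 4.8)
The plan is to recognise the composite $ℸ^×_{STY}\,ST\,O_Y\,Γ^∘_{STY}$ as being of the form $R\,P\,L$, where $L \dashv R$ is an adjunction and $P$ is a monad, and then to invoke the general fact that such a composite is always a monad. Concretely, by Lemma~\ref{lem:circ:times} the functor $L ≔ Γ^∘_{STY} = Γ^\pt_{STY}$ is left adjoint to $R ≔ ℸ^×_{STY}$; and I set $P ≔ ST\,O_Y$, i.e.\ $P(X) = ST(X+Y)$. Since composition of functors is associative, $R\,P\,L = ℸ^×_{STY}\,ST\,O_Y\,Γ^∘_{STY}$ on the nose, so it suffices to check \emph{(i)} that $P$ is a monad on $𝐂$, and \emph{(ii)} that $R\,P\,L$ is a monad whenever $L \dashv R$ and $P$ is.

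For \emph{(i)}, I would first recall that the option functor $O_Y$ is the exception monad, with unit $in_1 : X → X+Y$ and multiplication $[X+Y,\,in_2] : (X+Y)+Y → X+Y$. It distributes over the composite monad $ST$ — which is itself a monad via $δ$, as recalled in~§\ref{s:admissible} — through the natural transformation
$$\ell_X ≔ [\,ST(in_1),\; η^{ST}_{X+Y} ∘ in_2\,] : ST(X)+Y → ST(X+Y)\rlap{,}$$
and I would verify that $\ell$ satisfies Beck's four distributive-law axioms; this is a routine diagram chase using only naturality of $η^{ST}$ and $μ^{ST}$ together with the coproduct equations. By~\cite{BeckDistlaws} this equips $P = ST\,O_Y$ with monad structure (equivalently, $P$ is the standard ``monad $ST$ with constants $Y$'').

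For \emph{(ii)}, the cleanest argument is to compose adjunctions. Writing $F^P \dashv U^P : P\Alg → 𝐂$ for the Eilenberg--Moore adjunction of $P$, so that $U^P F^P = P$, the composite of $L \dashv R$ with $F^P \dashv U^P$ is again an adjunction $F^P L \dashv R\,U^P$, and the monad it induces on $𝐂$ is precisely $R\,U^P\,F^P\,L = R\,P\,L$; hence $R\,P\,L$ is a monad~\cite{MacLane:cwm}. If one prefers an explicit description, unfolding the composite-adjunction monad gives unit $R\,η^P_L ∘ η$ and multiplication $R\,μ^P_L ∘ R P ε_{PL}$, where $η$ and $ε$ are the unit and counit of $L \dashv R$; the monad laws then reduce, via the triangle identities of $L \dashv R$ and the monad laws of $P$, exactly as in the verification $μ^{RPL}∘(RPL)η^{RPL} = \id$, which uses $ε L ∘ L η = \id_L$ followed by the right unit law of $P$.

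Applying \emph{(i)} and \emph{(ii)} to the present $L$, $R$, and $P$ then yields the claim. Everything here is formal once the decomposition is in place, so I do not expect a serious obstacle; the one genuinely computational point is the verification in \emph{(i)} that $\ell$ is a distributive law, and the only thing to be careful about is that $P = ST\,O_Y$ must be treated as a single monad acting on $Γ^∘_{STY}X = Γ(X,STY)+X$ — i.e.\ one uses the parenthesisation $ST\bigl(O_Y\,Γ^∘_{STY}X\bigr) = (ST\,O_Y)\bigl(Γ^∘_{STY}X\bigr)$ — rather than trying to build monad structure from $ST$ and $O_Y$ separately.
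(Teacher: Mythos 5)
Your proof is correct and follows essentially the same route as the paper: the paper likewise equips $ST\,O_Y$ with monad structure via the folklore distributive law $O_Y\,(ST) → (ST)\,O_Y$ (its Lemma~\ref{lem:refOmonad}, whose components coincide with your $\ell$) and then transports this monad along the adjunction $Γ^∘_{STY} ⊣ ℸ^×_{STY}$ of Lemma~\ref{lem:circ:times}. Your write-up merely makes explicit the composite-adjunction argument that the paper leaves implicit in ``we conclude by the adjunction''.
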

\begin{proof}
  We already know that $ST$ is a monad.  Furthermore, precomposing any
  monad by any option functor $O_Z$ again yields a monad, by the
  following (folklore) lemma.  We conclude by the adjunction
  $ℸ^×_{STY} ⊢ Γ^∘_{STY}$.
\end{proof}

Furthermore, the following result is folklore.
\begin{lemma}\label{lem:refOmonad}
  For all monads $T$ on a category with binary coproducts, and for all
  objects $E$ therein, the natural transformation
  $\liftleft{T}_E∶ O_E T → T O_E$ defined at any $X$ by
  $$TX + E \xto{[TX,ηᵀ_E]} T(X) + T(E) \xto{[T(in₁),T(in₂)]}
  T(X+E)$$ forms a monad distributive law, thus equipping
  $T O_E$ with monad structure.
\end{lemma}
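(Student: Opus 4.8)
The plan is to verify directly that $\liftleft{T}_E$ satisfies the four axioms of a monad distributive law, in the convention of §\ref{s:admissible} with the basic monad ``$S$'' instantiated by $T$ and the auxiliary monad ``$T$'' instantiated by the option (``exception'') monad $O_E$, whose unit is $\eta^{O_E}=in_1$ and whose multiplication is $\mu^{O_E}=[\mathrm{id},in_2]\colon (X+E)+E→X+E$. It is convenient first to record the simplified form of the law: by naturality of $\eta^T$, the component $(\liftleft{T}_E)_X\colon TX+E→T(X+E)$ is the copairing $[\,T(in_1),\ \eta^T_{X+E}∘in_2\,]$. The conditions to check are then compatibility of $\liftleft{T}_E$ with $\eta^T$, with $\mu^T$, with $\eta^{O_E}$, and with $\mu^{O_E}$.

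Two of these are immediate: compatibility with $\eta^{O_E}=in_1$ holds because $(\liftleft{T}_E)_X∘in_1=T(in_1)$ by definition, and compatibility with $\eta^T$ holds because, after precomposing with $O_E\eta^T$, the two summands give $\eta^T_{X+E}∘in_1$ (by naturality of $\eta^T$) and $\eta^T_{X+E}∘in_2$, which together are $\eta^T_{X+E}$. For the two remaining (multiplication) axioms I would chase each diagram termwise, exploiting that the domains $TTX+E$ and $(TX+E)+E$ are coproducts. In every case the ``$T$-summand'' reduces to the claim by naturality of $\mu^T$ (and functoriality of $T$), whereas each ``$E$-summand'' reduces to it by the definition $\mu^{O_E}=[\mathrm{id},in_2]$ together with the unit law $\mu^T∘\eta^T T=\mathrm{id}$. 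The multiplication-of-$T$ axiom splits into two cases and the multiplication-of-$O_E$ axiom into three; these are the longest computations, but each is a one-step reduction, so there is no real obstacle beyond bookkeeping.

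If one prefers a conceptual argument, I would instead invoke Beck's correspondence (Lemma~\ref{lem:dist:lifting}) with ``$S$''$=T$ and ``$T$''$=O_E$. Here $O_E\Alg$ is the coslice category $E/𝐂$ of objects $X$ equipped with a map $e\colon E→X$, and $T$ lifts to it by sending $(X,e)$ to $(TX,\eta^T_X∘e)$; this assignment is functorial, and $\eta^T$ and $\mu^T$ are morphisms in $E/𝐂$ precisely by the monad unit laws of $T$, so it is a monad lifting of $T$ along $O_E\Alg→𝐂$. Lemma~\ref{lem:dist:lifting} then produces a distributive law $O_E T→T O_E$, and evaluating the lifting at the free $O_E$-algebra $(X+E,\ in_2\colon E→X+E)$ shows that this law is exactly $\liftleft{T}_E$. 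Either way, the induced monad structure on $T O_E$ is the one asserted, by the general facts recalled in §\ref{s:admissible}. The only thing requiring care is matching the two monads to the $S$/$T$ rôles of the ambient convention.
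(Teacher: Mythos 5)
The paper gives no proof of this lemma at all — it is explicitly introduced as ``folklore'' — so there is no authorial argument to compare against; your task was simply to supply one, and both of your routes do so correctly. Your simplified form $(\liftleft{T}_E)_X = [T(in_1),\ \eta^T_{X+E}\circ in_2]$ is right (by naturality of $\eta^T$ applied to the paper's composite), and with it the direct verification goes through exactly as you sketch: the two unit axioms are immediate, the $T$-summands of the two multiplication axioms follow from naturality of $\mu^T$ and functoriality of $T$, and each $E$-summand collapses using $\mu^T\circ\eta^T T = \mathrm{id}$ together with naturality of $\eta^T$ (to slide the $\eta^T$ produced by $\delta\circ in_2$ past $T\delta$, resp.\ $T\mu^{O_E}$). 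One small bookkeeping remark: for the $\mu^{O_E}$ axiom the first summand $TX+E$ of $(TX+E)+E$ can be handled in one stroke, since the right-hand side restricts to $T(\mu^{O_E}_X\circ in_1)\circ\delta_X = \delta_X$, so two cases suffice rather than three. Your second, conceptual route is the one most in the spirit of the paper's own machinery: $O_E\Alg \cong E/\mathbf{C}$ (an Eilenberg--Moore algebra $a\colon X+E\to X$ is forced to be $[\mathrm{id}_X,e]$ with $e=a\circ in_2$, the associativity axiom being automatic), the assignment $(X,e)\mapsto(TX,\eta^T_X\circ e)$ is indeed a lifting of $T$ along $O_E\Alg\to\mathbf{C}$ — though note that the $\eta^T$ case is trivially a coslice morphism, and it is only the $\mu^T$ case that invokes the unit law $\mu^T\circ\eta^T T=\mathrm{id}$, so ``precisely by the monad unit laws'' is slightly overstated — and Lemma~\ref{lem:dist:lifting}, instantiated with the basic monad playing the rôle of $T$ and the auxiliary monad being $O_E$, produces a distributive law $O_E T\to T O_E$; your evaluation at the free algebra $(X+E,in_2)$ correctly identifies it as $[T(in_1),\eta^T_{X+E}\circ in_2]=\liftleft{T}_E$. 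Either argument is a complete and correct replacement for the proof the paper omits.
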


We may now define $\widetilde{h^∘}^ω$.
\begin{definition}
  For all $X,Y ∈ 𝐂$, let
  $\widetilde{h^∘}^ω_{X,Y}∶ SX → ℸ^×_{STY} ST O_Y Γ^∘_{STY} X$ denote
  the unique monad morphism
  making the triangle
  \begin{center}
    % file:///home/thirs/github/quiver/src/index.html?q=WzAsMyxbMCwwLCJcXFNpZ21hIFgiXSxbMSwxLCJcXGRhbGV0aF5cXHRpbWVzX3tTVFl9IFNUIFxcR2FtbWFeXFxidWxsZXRfe1NUfFl9WCJdLFsxLDAsIlNYIl0sWzAsMiwiXFxldGFfe1xcU2lnbWEsWH0iXSxbMiwxLCJcXHdpZGV0aWxkZXtoXlxcY2lyY31eXFxvbWVnYV97WCxZfSJdLFswLDEsIlxcd2lkZXRpbGRle2heXFxjaXJjfV97WCxZfSIsMix7ImxhYmVsX3Bvc2l0aW9uIjozMH1dXQ==
    \begin{tikzcd}[ampersand replacement=\&]
      {\Sigma X} \& SX \\
      \& {\daleth^\times_{STY} ST \Gamma^\bullet_{ST|Y}X}
      \arrow["{\eta_{\Sigma,X}}", from=1-1, to=1-2]
      \arrow["{\widetilde{h^\circ}^\omega_{X,Y}}", from=1-2, to=2-2]
      \arrow["{\widetilde{h^\circ}_{X,Y}}"'{pos=0.3}, from=1-1, to=2-2]
    \end{tikzcd}
  \end{center}
  commute, obtained by universal property of $S$.
\end{definition}
By construction, we have:
\begin{lemma}
  The family $\widetilde{h^∘}^ω_{X,Y}$ is natural in $X$.
\end{lemma}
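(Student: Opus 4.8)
The plan is to read off naturality in $X$ directly from the construction, since $\widetilde{h^∘}^ω_{-,Y}$ is \emph{defined} to be a monad morphism, and a morphism of monads is in particular a natural transformation of the underlying endofunctors. So there is really nothing to compute: the desired commuting squares are instances of that naturality.

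First I would fix $Y$ and abbreviate $M_Y ≔ ℸ^×_{STY}\, ST\, O_Y\, Γ^∘_{STY}$, which carries monad structure on $𝐂$ by the lemma immediately preceding the definition of $\widetilde{h^∘}^ω$. By Lemma~\ref{lem:nat:hcirctilde}, the family $\widetilde{h^∘}_{-,Y}$ is natural in $X$, so it constitutes a natural transformation $Σ → M_Y$; here one uses the identification $Γ^•_{ST|Y} = O_Y Γ^∘_{STY}$ (both send $X$ to $Γ(X,STY)+X+Y$), which makes the two ways of writing the codomain, $ℸ^×_{STY} ST Γ^•_{ST|Y}X$ and $ℸ^×_{STY} ST O_Y Γ^∘_{STY} X$, literally agree.

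Then I would invoke the universal property of the free monad $S = Σ^*$: for the monad $M_Y$ and the natural transformation $\widetilde{h^∘}_{-,Y}∶ Σ → M_Y$, there is a unique monad morphism $S → M_Y$ whose precomposition with $η_Σ∶ Σ → S$ recovers $\widetilde{h^∘}_{-,Y}$. This is exactly the morphism $\widetilde{h^∘}^ω_{-,Y}$ fixed by the defining triangle. Being a monad morphism between monads on $𝐂$, it is in particular a natural transformation $S → M_Y$ of the underlying endofunctors, and unpacking this naturality at a morphism $f∶ X → Z$ yields precisely the square asserting naturality of $\widetilde{h^∘}^ω$ in $X$.

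There is essentially no obstacle. The only two facts consumed are that $M_Y$ really is a monad (so that the universal property of $S$ produces a genuine monad morphism rather than a bare natural transformation), which is the content of the preceding lemma, and that the source family $\widetilde{h^∘}_{-,Y}$ is natural in $X$, which is Lemma~\ref{lem:nat:hcirctilde}. Everything else is the standard observation that a morphism of monads is a natural transformation, which is what justifies the ``By construction'' preceding the statement.
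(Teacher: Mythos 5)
Your proof is correct and matches the paper's own argument: the paper gives no separate proof, flagging the lemma with ``By construction,'' which is exactly your observation that $\widetilde{h^\circ}^\omega_{-,Y}$ is the monad morphism $S \to \daleth^\times_{STY}\, ST\, O_Y\, \Gamma^\circ_{STY}$ induced by the universal property of the free monad $S=\Sigma^*$ from the natural transformation $\widetilde{h^\circ}_{-,Y}$ (Lemma~\ref{lem:nat:hcirctilde}), and a monad morphism is in particular natural in $X$. You also correctly identify the two inputs the paper consumes, namely the monad structure on the composite (the preceding lemma) and the identification $\Gamma^\bullet_{ST|Y} = O_Y\,\Gamma^\circ_{STY}$.
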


\begin{notation}\label{not:LKR}
  Let $L = Γ^\pt_{ST}∶ 𝐂² → 𝐂$, $R = ℸ^×_{ST}∶ 𝐂×\op{𝐂} → 𝐂$, and
  $K_Y(X) = ST O_Y X$ (hence $K∶ 𝐂² → 𝐂$). We thus have for all $Y$
  that $L_Y$ is left adjoint to $R_Y$, and that $K_Y$ is a monad.
\end{notation}

Furthermore, we observe the following:
    \begin{lemma}\label{lem:RB:liftsto:KBalg}
      For any $B ∈ 𝐂$ and $A ∈ K_B\alg$, $R_B A$ has a canonical
      $Σ$-algebra structure given by
      $$Σ R_B A 
      \xto{\widetilde{h^∘}_{R_B A, B}} R_B K_B L_B R_B A \xto{R_B K_B ε_B
        A} R_B K_B A → R_B A.$$ This defines a functor $K_B\alg → Σ\alg$
      over $R_B$, for all $B$. % Tom: cette naturalité ne veut rien dire il me semble
      % , naturally in $B$.
    \end{lemma}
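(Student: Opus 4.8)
The plan is first to check that the displayed composite is a well-typed $Σ$-algebra structure, and then to upgrade the object assignment $A ↦ R_B A$ to a functor into $Σ\alg$. For well-typedness I would simply unfold Notation~\ref{not:LKR}: at $Y = B$ we have $ℸ^×_{STB} = R_B$, $ST O_B = K_B$, and $Γ^∘_{STB} = L_B$, so the family $\widetilde{h^∘}$ of Lemma~\ref{lem:coh:hcirctilde} specialises to a morphism $\widetilde{h^∘}_{R_B A, B}∶ Σ R_B A → R_B K_B L_B R_B A$. Postcomposing with $R_B K_B ε_B A$, where $ε_B∶ L_B R_B → \id$ is the counit of the adjunction $L_B \dashv R_B$, and then with $R_B a$, where $a∶ K_B A → A$ is the given $K_B$-algebra structure, produces the claimed map $σ_A∶ Σ R_B A → R_B A$. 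Since the underlying object is $R_B A$ by construction, the resulting functor will automatically lie over $R_B$.

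For the morphism part I would take a $K_B$-algebra morphism $f∶ A → A'$, with structures $a$ and $a'$, and show that $R_B f$ is a morphism of $Σ$-algebras by pasting three commuting squares along the factorisations of $σ_A$ and $σ_{A'}$. The square attached to $\widetilde{h^∘}$ commutes by naturality of $\widetilde{h^∘}$ in its first argument (Lemma~\ref{lem:nat:hcirctilde}), instantiated at $Y = B$ and the morphism $R_B f$, whose bottom edge is then $R_B K_B L_B R_B f$. The square attached to $R_B K_B ε_B$ commutes by naturality of the counit at $f$ pushed through the functor $R_B K_B$: indeed $ε_B A' ∘ L_B R_B f = f ∘ ε_B A$, and applying $R_B K_B$ yields exactly the required identity. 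The square attached to $R_B a$ is the image under $R_B$ of the algebra-morphism equation $f ∘ a = a' ∘ K_B f$. Concatenating the three squares gives $R_B f ∘ σ_A = σ_{A'} ∘ Σ R_B f$, so $R_B f$ is indeed a $Σ$-algebra morphism.

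Functoriality is then immediate: because each $σ_A$ is assembled uniformly from $\widetilde{h^∘}$, $ε_B$, and the algebra structure of $A$, preservation of identities and of composites reduces to the functoriality of $R_B$, and commutation over $R_B$ holds by construction. I do not anticipate any serious obstacle here, as the lemma is essentially a bookkeeping statement: the only nontrivial input is the naturality of $\widetilde{h^∘}$ in $X$, which is already established in Lemma~\ref{lem:nat:hcirctilde}, and everything else is a routine paste of two further naturality squares together with an application of functoriality of $R_B$. If anything is delicate, it is keeping the three-fold parametrisation $(L_B,K_B,R_B)$ straight and remembering that the relevant naturality is in the \emph{first} argument of $\widetilde{h^∘}$, the second argument $Y = B$ being held fixed throughout.
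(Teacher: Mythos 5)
Your proof is correct and matches the paper's approach: the structure map is exactly the composite $R_B a \circ R_B K_B \varepsilon_B A \circ \widetilde{h^\circ}_{R_B A, B}$, and the paper's own proof simply declares functoriality ``straightforward''. Your three-square pasting (naturality of $\widetilde{h^\circ}$ in the first argument via Lemma~\ref{lem:nat:hcirctilde}, naturality of the counit $\varepsilon_B$ pushed through $R_B K_B$, and the $K_B$-algebra morphism equation pushed through $R_B$) is precisely the routine verification the paper leaves implicit.
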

    \begin{proof}
      Functoriality is straightforward.
    \end{proof}
    Unfolding the definition, the family $\widetilde{h^∘}^ω_{X,Y}$ is
    in fact constructed in this way, with $A = K_Y L_Y X$, whose
    $K_Y$-algebra structure is given by $μ^{K_Y}_{L_Y X}$.

% \begin{notation}
%   Just for the following lemma and its proof, let $L_Y = Γ^\pt_Y$ and
%   $R_Y = ℸ^×_Y$. We thus still have for all $Y$
%   that $L_Y$ is left adjoint to $R_Y$, and that $K_Y$ is a monad.
%   (We keep the notation $K_Y = ST O_Y$.)
% \end{notation}
% \begin{remark}
%   This differs from Notation~\ref{not:LKR}, where $L_Y = Γ^\pt_{ST}$.
%   $R_Y = ℸ^×_{ST}$.
% \end{remark}

    Let us prove that the functorial assignment of
    Lemma~\ref{lem:RB:liftsto:KBalg} is in fact also functorial in
    $B$, in the following sense.
\begin{lemma}\label{lem:aux:hcirctilde}
  For any morphism $f∶ A → B$ and object $Z$, equipping $K_B Z$ with
  the $K_A$-algebra structure
  $K_A K_B Z \xto{K_f K_B Z} K_B K_B Z \xto{\mu^{K_B}_Z} K_B
  Z\rlap{,}$ the morphism
  $$R_{B} K_B Z \xto{R_f K_B Z} R_{A} K_B Z$$
  is a $Σ$-algebra morphism.
\end{lemma}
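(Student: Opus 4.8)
The plan is to verify the required square by passing to adjoint transposes, which collapses both $Σ$-algebra structures to their defining data in terms of $h^∘$. Write $W ≔ K_B Z$, and recall the notation $L,K,R$ of Notation~\ref{not:LKR}. Let $a_B∶ Σ R_B W → R_B W$ be the $Σ$-algebra structure that Lemma~\ref{lem:RB:liftsto:KBalg} assigns to the free $K_B$-algebra $(W,μ^{K_B}_Z)$, and let $a_A∶ Σ R_A W → R_A W$ be the one it assigns to $W$ with the stated $K_A$-algebra structure $μ^{K_B}_Z ∘ K_f W$. I must show $R_f W ∘ a_B = a_A ∘ Σ(R_f W)$ as maps $Σ R_B W → R_A W$. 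Since the common codomain is $R_A(W)$, it suffices to compare the two sides after transposing along $L_A ⊣ R_A$, transposition being a bijection.

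First I would compute each transpose. Using naturality of the counit $ε$ and the fact that $\widetilde{h^∘}$ is by definition the transpose of $h^∘$, the transpose of $a_B$ along $L_B ⊣ R_B$ is $μ^{K_B}_Z ∘ K_B ε_B W ∘ h^∘_{R_B W, B}$, and the transpose of $a_A$ along $L_A ⊣ R_A$ is $μ^{K_B}_Z ∘ K_f W ∘ K_A ε_A W ∘ h^∘_{R_A W, A}$. For the left side $R_f W ∘ a_B$, the counit-reindexing identity of Corollary~\ref{cor:transpose} (with its $B,C$ played by $A,B$) gives $ε_A W ∘ L_A(R_f W) = ε_B W ∘ L_f(R_B W)$, and naturality of the reindexing $L_f∶ L_A ⇒ L_B$ then slides $L_f$ past $a_B$; the right side $a_A ∘ Σ(R_f W)$ simply precomposes with $L_A Σ(R_f W)$. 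Thus the whole claim reduces to the single identity
\[
K_B ε_B W ∘ h^∘_{R_B W, B} ∘ L_f Σ(R_B W) \;=\; K_f W ∘ K_A ε_A W ∘ h^∘_{R_A W, A} ∘ L_A Σ(R_f W),
\]
both sides being maps $L_A Σ R_B W → K_B W$.

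To prove this identity I would massage the right-hand side into the left. Naturality of $h^∘$ in its first argument (equivalently, of $\widetilde{h^∘}$, Lemma~\ref{lem:nat:hcirctilde}) moves the reindexing outward, replacing $h^∘_{R_A W, A} ∘ L_A Σ(R_f W)$ by $K_A L_A(R_f W) ∘ h^∘_{R_B W, A}$; the counit-reindexing identity of Corollary~\ref{cor:transpose} again rewrites $ε_A W ∘ L_A(R_f W)$ as $ε_B W ∘ L_f(R_B W)$; naturality of $K_f∶ K_A ⇒ K_B$ together with the interchange law assembles $K_f(L_B R_B W) ∘ K_A L_f(R_B W)$ into the single whiskered map $K_f L_f(R_B W)$; and finally the extranaturality of $\widetilde{h^∘}$ in $Y$ (Lemma~\ref{lem:hcirctilde:extranatural}), transposed, reads $K_f L_f X ∘ h^∘_{X,A} = h^∘_{X,B} ∘ L_f Σ X$, which at $X = R_B W$ turns the result into the left-hand side. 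Cancelling the common $μ^{K_B}_Z$ then closes the argument.

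The routine content is all naturality; the one delicate point, which I would pin down before chasing, is the variance bookkeeping. The index object varies covariantly in $L$ and in $K$ (through the option functor $O$) but contravariantly in $R$, so the single morphism $f∶ A → B$ enters as $L_f, K_f$ and as $R_f$ with opposite variance. The crux is that the counit-reindexing square of Corollary~\ref{cor:transpose} is exactly what reconciles these variances, allowing the transposed extranaturality to be applied at the matching index; once the transposes are written out, verifying that the pieces compose at the correct objects is the only real work.
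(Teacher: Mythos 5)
Your proof is correct and takes essentially the same route as the paper's: the paper proves the lemma by chasing the very same square directly on the $R$-side, using exactly your ingredients --- naturality of $\widetilde{h^\circ}$ in $X$ (Lemma~\ref{lem:nat:hcirctilde}), extranaturality in $Y$ (Lemma~\ref{lem:hcirctilde:extranatural}, whose transpose is your naturality of $h^\circ$ in $Y$), the counit-reindexing square of Corollary~\ref{cor:transpose}, and interchange. Your preliminary transposition of the whole equation along $L_A \dashv R_A$ merely reorganises that chase into its mirror image under the transposition bijection, so the two arguments coincide step for step.
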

\begin{proof}
  By commutativity of the following diagram
  \begin{center}\hfill
    \Diag|baseline=(m-5-1.base)|(1,2.1){%
      \justify{m-4-1}{interchange}{m-5-3} %
      \justify{m-2-1}{Lemma~\ref{lem:hcirctilde:extranatural}}{m-3-2} %
      \justify{m-2-1}{Lemma~\ref{lem:nat:hcirctilde}}{m-1-3} %
      \justify{m-4-1}{interchange}{m-3-2} %
      \justify{m-3-2}{Corollary~\ref{cor:transpose}}{m-2-3} %
    }{%
      \& \& Σ R_{A} K_B Z \\
      Σ R_{B} K_B Z \& R_{A} K_A L_{A} R_{B} K_B Z
      \& R_{A} K_A L_{A} R_{A} K_B Z \\
      R_{B} K_B L_{B} R_{B} K_B Z \&
      R_{A} K_B L_{B} R_{B} K_B Z \&
      R_{A} K_B K_B Z \\
      R_{B} K_B K_B Z \& \& \\
      R_{B} K_B Z \& \& R_{A} K_B Z %
    }{%
      (m-2-1) edge[labela={\widetilde{h^∘}_{R_{B} K_B Z,A}}] (m-2-2) %
      edge[bend left=10,labelal={Σ R_{f} K_B Z}] (m-1-3) %
      (m-2-2) edge[label={[above=.5em]{$\scriptstyle R_{A} K_A L_{B} R_{f} K_B Z$}}]
      (m-2-3) %
      (m-1-3) edge[labelon={\widetilde{h^∘}_{R_{A} K_B Z,A}}] (m-2-3) %
      (m-2-1) edge[labelon={\widetilde{h^∘}_{R_{B} K_B Z,B}}] (m-3-1) %
      (m-3-1) edge[label={[below=.4em]{$\scriptstyle R_f K_B L_{B} R_{B} K_B Z$}}] (m-3-2) %
      (m-2-2) edge[labelon={R_{A} K_f L_f R_{B} K_B Z}] (m-3-2) %
      (m-3-2) edge[labela={R_{A} K_B ε_{B} K_B Z}] (m-3-3) %
      (m-2-3) edge[labelon={R_{A} K_f ε_{A} K_B Z}] (m-3-3) %
      (m-3-1) edge[labelon={R_{B} K_B ε_{B}  K_B Z}] (m-4-1) %
      (m-4-1) edge[labelbr={R_f K_B K_B Z}] (m-3-3) %
      edge[labelon={R_{B} μ^{K_A}_Z}] (m-5-1) %
      (m-3-3) edge[labelon={R_{A} μ^{K_B}_Z}] (m-5-3) %
      (m-5-1) edge[labelb={R_f K_B Z}] (m-5-3) %
    }
    \qedhere
  \end{center}
\end{proof}

Moreover, we have:
\begin{lemma}\label{lem:hcirctildeomega:extranatural}
  The family $\widetilde{h^∘}^ω_{X,Y}$ is extranatural in $Y$, i.e.,
  the following square commutes for all $X$ and $f∶ Y → Z$.
  \begin{center}
    \diag(.6,2.7){%
      SX \& ℸ^×_{STY} ST O_Y Γ^∘_{STY} X \\
      ℸ^×_{STZ} ST O_{Z} Γ^∘_{STZ} X \& ℸ^×_{STY} ST O_{Z} Γ^∘_{STZ} X
    }{%
      (m-1-1) edge[labela={\widetilde{h^∘}^ω_{X,Y}}] (m-1-2) %
      edge[labell={\widetilde{h^∘}^ω_{X,Z}}] (m-2-1) %
      (m-2-1) edge[labelb={ℸ^×_{STf} ST O_{Z} Γ^∘_{STZ} X}] (m-2-2) %
      (m-1-2) edge[labelr={ℸ^×_{STY} ST O_{f} Γ^∘_{STf} X}] (m-2-2) %
    }
  \end{center}
\end{lemma}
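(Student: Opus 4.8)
The plan is to prove the square by recognising both of its composites as $\Sigma$-algebra morphisms out of the free $\Sigma$-algebra $SX$ on $X$, landing in one and the same $\Sigma$-algebra, and then invoking the universal property of $SX$. Throughout I use Notation~\ref{not:LKR}, so that $\widetilde{h^∘}^ω_{X,Y}$ has the form $SX → M_Y X$ with $M_Y ≔ R_Y K_Y L_Y$, and the common target of the square is $NX ≔ R_Y K_Z L_Z X = ℸ^×_{STY} ST O_Z Γ^∘_{STZ} X$. Recall from the remark following Lemma~\ref{lem:RB:liftsto:KBalg} that $\widetilde{h^∘}^ω_{X,Y}$, being the component at $X$ of a monad morphism $S → M_Y$, is a $\Sigma$-algebra morphism from the free $\Sigma$-algebra $SX$ into $M_Y X$ equipped with the $\Sigma$-algebra structure that Lemma~\ref{lem:RB:liftsto:KBalg} places on $R_Y K_Y L_Y X$ (taking $A = K_Y L_Y X$ with its free $K_Y$-structure $μ^{K_Y}_{L_Y X}$), and similarly for $\widetilde{h^∘}^ω_{X,Z}$; in particular each is determined by its restriction to the generators $η^S_X\colon X → SX$.

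First I would equip $NX$ with the $\Sigma$-algebra structure $τ$ obtained from Lemma~\ref{lem:RB:liftsto:KBalg} with base $R_Y$, applied to $K_Z L_Z X$ regarded as a $K_Y$-algebra by restriction along the monad morphism $K_f = ST O_f\colon K_Y → K_Z$ (which is a monad morphism by naturality in the option parameter of the distributive law of Lemma~\ref{lem:refOmonad}, exactly as already used in the proof of Lemma~\ref{lem:aux:hcirctilde}). The bottom edge $ℸ^×_{STf} ST O_Z Γ^∘_{STZ} X$ is then the instance $R_f K_Z(L_Z X)$ of Lemma~\ref{lem:aux:hcirctilde} (with $A = Y$, $B = Z$, morphism $f$, and parameter $L_Z X$), hence a $\Sigma$-algebra morphism $M_Z X → (NX,τ)$; composing it with the $\Sigma$-algebra morphism $\widetilde{h^∘}^ω_{X,Z}$ shows that the down-then-right composite is a $\Sigma$-algebra morphism $SX → (NX,τ)$. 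For the right edge $ℸ^×_{STY} ST O_f Γ^∘_{STf} X = R_Y(g)$, I would factor $g\colon K_Y L_Y X → K_Z L_Z X$ as $K_f(L_Z X) ∘ K_Y(Γ^∘_{STf} X)$: the second factor is $K_Y$ applied to a morphism and the first is the component at $L_Z X$ of the monad morphism $K_f$, so $g$ is a $K_Y$-algebra morphism into $K_Z L_Z X$ with its restricted structure. Applying the functor $K_Y\alg → \Sigma\alg$ over $R_Y$ of Lemma~\ref{lem:RB:liftsto:KBalg} then makes $R_Y(g)$ a $\Sigma$-algebra morphism $M_Y X → (NX,τ)$ — note that $τ$ is, by construction, the same structure produced by Lemma~\ref{lem:aux:hcirctilde} — so the right-then-top composite is a $\Sigma$-algebra morphism $SX → (NX,τ)$ as well.

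Since $SX$ is the free $\Sigma$-algebra on $X$, the two $\Sigma$-algebra morphisms $SX → (NX,τ)$ coincide as soon as they agree after precomposition with $η^S_X$. As $\widetilde{h^∘}^ω_{X,Y}$ and $\widetilde{h^∘}^ω_{X,Z}$ are monad morphisms, $\widetilde{h^∘}^ω_{X,Y} ∘ η^S_X = η^{M_Y}_X$ and $\widetilde{h^∘}^ω_{X,Z} ∘ η^S_X = η^{M_Z}_X$, so the required generator identity reduces to $R_Y(g) ∘ η^{M_Y}_X = R_f(K_Z L_Z X) ∘ η^{M_Z}_X$, an equality of maps $X → NX$ at the level of the monad units. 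This last equation unfolds, via the explicit description of $η^{M_Y}$ and $η^{M_Z}$ in terms of the units of the relevant adjunctions and of the monads $K_Y$ and $K_Z$, into a diagram that commutes by naturality of these units together with Corollary~\ref{cor:transpose} (governing the contravariant dependence of $R$ on its adjunction parameter). I expect the main obstacle to be precisely this step: the bookkeeping of the mixed variance in $Y$ (ensuring that the right and bottom edges provably target the \emph{same} structure $τ$) and the final unit-level computation are where the argument is least formal, even though each is routine given Lemmas~\ref{lem:RB:liftsto:KBalg}, \ref{lem:aux:hcirctilde} and~\ref{lem:refOmonad}.
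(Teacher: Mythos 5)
Your proof is correct, and its skeleton is the paper's own: reduce the square to an equality of two $Σ$-algebra morphisms out of the free $Σ$-algebra $SX$ into $R_Y K_Z L_Z X$, where the target carries the $Σ$-structure of Lemma~\ref{lem:RB:liftsto:KBalg} applied to $K_Z L_Z X$ with its $K_Y$-algebra structure restricted along the monad morphism $K_f$; the bottom edge is then a $Σ$-algebra morphism by Lemma~\ref{lem:aux:hcirctilde}, and the right edge because $K_f L_f X$ is a $K_Y$-algebra morphism. (The paper proves this last point by a small explicit chase; your factorisation as $K_f(L_Z X) ∘ K_Y(L_f X)$, using that components of a monad morphism are algebra morphisms over the restricted structure, is the same argument packaged through cited standard facts, and your identification of the target structure $τ$ with the one of Lemma~\ref{lem:aux:hcirctilde} is exactly the consistency the paper also relies on.) The only genuine divergence is the generator check: the paper verifies agreement of the two composites after restriction along $η_{Σ,X}∶ ΣX → SX$, using the defining triangle of $\widetilde{h^∘}^ω$ together with extranaturality of $\widetilde{h^∘}$ (Lemma~\ref{lem:hcirctilde:extranatural}), whereas you verify agreement along $η^S_X∶ X → SX$ using unit preservation. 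Your variant matches the freeness property more literally — a $Σ$-algebra morphism out of the free $Σ$-algebra is determined by its restriction along $η^S_X$, not along $η_{Σ,X}$ — and the final computation you flag as the main obstacle is in fact routine: writing $η^{M_Y}_X = R_Y(η^{K_Y}_{L_YX}) ∘ η_Y X$ with $η_Y$ the adjunction unit, the unit law of the monad morphism $K_f$ gives $K_f L_f X ∘ η^{K_Y}_{L_YX} = η^{K_Z}_{L_ZX} ∘ L_f X$, naturality of $R_f$ gives $R_f(K_Z L_Z X) ∘ R_Z(η^{K_Z}_{L_ZX}) = R_Y(η^{K_Z}_{L_ZX}) ∘ R_f(L_Z X)$, and the remaining equation $R_Y(L_f X) ∘ η_Y X = R_f(L_Z X) ∘ η_Z X$ is precisely the unit square of Corollary~\ref{cor:transpose}. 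So your proof is complete as stated, with that one step made explicit rather than left implicit as in the paper.
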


For proving this, we will need the following result.

\begin{lemma}\label{lem:liftLR}
  For any monad $S$ on a category with binary coproducts, the following diagram
  commutes for all objects $C$ and $D$.
  \begin{center}
    % file:///home/thirs/github/quiver/src/index.html?q=WzAsNixbMSwwLCJTQStTQiJdLFsyLDEsIlMoQStTQikiXSxbMiwyLCJTUyhBK0IpIl0sWzEsMywiUyhBK0IpIl0sWzAsMSwiUyhTQStCKSJdLFswLDIsIlNTKEErQikiXSxbMiwzLCJcXG11XlMiXSxbNSwzLCJcXG11XlMiLDJdLFswLDQsIlxcdXBhcnJvd15TIiwyXSxbNCw1LCJTKHt7fV57U31cXHVwYXJyb3d9KSIsMl0sWzAsMSwie31ee1N9XFx1cGFycm93Il0sWzEsMiwiUyBcXHVwYXJyb3deUyJdXQ==
\begin{tikzcd}[ampersand replacement=\&]
	\& {SA+SB} \\
	{S(SA+B)} \&\& {S(A+SB)} \\
	{SS(A+B)} \&\& {SS(A+B)} \\
	\& {S(A+B)}
	\arrow["{\mu^S}", from=3-3, to=4-2]
	\arrow["{\mu^S}"', from=3-1, to=4-2]
	\arrow["{\uparrow^S}"', from=1-2, to=2-1]
	\arrow["{S({{}^{S}\uparrow})}"', from=2-1, to=3-1]
	\arrow["{{}^{S}\uparrow}", from=1-2, to=2-3]
	\arrow["{S \uparrow^S}", from=2-3, to=3-3]
\end{tikzcd}
  \end{center}
\end{lemma}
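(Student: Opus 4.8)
The plan is to verify commutativity by the universal property of the coproduct $SA+SB$, reducing the problem to checking the two composites after precomposition with each injection $in_1 ∶ SA → SA+SB$ and $in_2 ∶ SB → SA+SB$. In fact I claim that both paths collapse to the single canonical morphism
\[
[S(in_1),S(in_2)] ∶ SA+SB → S(A+B)\rlap{,}
\]
where now $in_1 ∶ A → A+B$ and $in_2 ∶ B → A+B$ are the injections of the target coproduct; establishing this for both paths on both summands proves the lemma.

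First I would treat the $SA$-summand. Unfolding the definition of $\uparrow^S$ on its main (left) summand gives $SA \xto{η^S_{SA}} S(SA) \xto{S(in_1)} S(SA+B)$, and composing with the remaining left-hand arrow amounts to postcomposing with $S(\liftleft{S})$, whose restriction to the relevant summand is again an injection-induced map $S(in_1)$. Using naturality of $η^S$ to move $η^S_{SA}$ past the functorial action, then the unit law $μ^S ∘ η^S_S = \id$, the left-hand composite on $SA$ reduces to $S(in_1) ∶ SA → S(A+B)$. For the right-hand path, $\liftleft{S}$ on the $SA$-summand is directly $S(in_1) ∶ SA → S(A+SB)$, after which $S(\uparrow^S)$ contributes $S(S(in_1)) ∘ S(η^S_A)$ on that summand; here naturality of $μ^S$ together with the other unit law $μ^S ∘ Sη^S = \id$ again yields $S(in_1)$. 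Thus both paths agree on $SA$, and the computation on the $SB$-summand is entirely symmetric, with the roles of the two lifts and of the two unit laws interchanged, both paths reducing to $S(in_2)$.

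The only real bookkeeping is keeping the four distinct families of coproduct injections straight — those of $SA+B$, of $A+SB$, of $A+B$, and of the outer $SA+SB$ — and applying the two monad unit laws in the correct order relative to the naturality squares for $η^S$ and $μ^S$. Once the summand-wise computations are laid out, no genuine difficulty remains: the argument is a direct diagram chase of the same kind as for Lemma~\ref{lem:option:mu:inside}, and I expect this to be the most routine step, its only subtlety being notational.
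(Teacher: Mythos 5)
Your proposal is correct and matches the paper's own proof, which likewise reduces to a termwise chase on the coproduct $SA+SB$, using naturality of $η^S$ (resp.\ $μ^S$) and the two unit laws to collapse both paths to $[S(in_1),S(in_2)]$. The only cosmetic difference is that the paper chases one summand explicitly and invokes the $A{\leftrightarrow}B$, $\uparrow^S{\leftrightarrow}\liftleft{S}$ symmetry for the other, exactly as you observe.
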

\begin{proof}
By diagram chasing, termwise, but doing only one term as the other follows by symmetry.
\begin{center}
  \ajustedroit[.95]{
% file:///home/thirs/github/quiver/src/index.html?q=WzAsMTUsWzAsMiwiU0ErU0IiXSxbNSwwLCJTQStTU0IiXSxbNSwxLCJTKEErU0IpIl0sWzUsMiwiUyhTQStTQikiXSxbNSwzLCJTUyhBK0IpIl0sWzUsNCwiUyhBK0IpIl0sWzAsMywiU1NBK1NCIl0sWzAsNCwiUyhTQStCKSJdLFsyLDQsIlMoU0ErU0IpIl0sWzQsNCwiU1MoQStCKSJdLFswLDAsIlNBIl0sWzEsMiwiU1NBIl0sWzQsMywiU0EiXSxbNCwyLCJTU0EiXSxbMSwwLCJTQStTQiJdLFsxLDIsIltTIGluXzEsICBTIGluXzJdIl0sWzIsMywiUyhcXGV0YV5TX0EgKyBTQikiXSxbMyw0LCJTW1Npbl8xLCBTIGluXzJdIl0sWzQsNSwiXFxtdV5TIl0sWzAsNiwiXFxldGFeU197U0F9ICsgU0IiLDJdLFs2LDcsIltTIGluXzEsIFMgaW5fMl0iLDJdLFs3LDgsIlMoU0EgKyBcXGV0YV5TX0IpIiwyXSxbOCw5LCJTW1MgaW5fMSwgUyBpbl8yXSIsMl0sWzksNSwiXFxtdV5TIiwyXSxbMTAsMTEsIlxcZXRhXlMiXSxbMTEsMTIsIlxcbXVeUyJdLFsxMiw1LCJTaW5fMSJdLFsxMCwxMiwiIiwwLHsibGV2ZWwiOjIsInN0eWxlIjp7ImhlYWQiOnsibmFtZSI6Im5vbmUifX19XSxbMTAsMTMsIlNcXGV0YV5TIl0sWzEwLDIsIlMgaW5fMSJdLFsxMyw0LCJTU2luXzEiXSxbMTMsMTIsIlxcbXVeUyJdLFsxMCwxNCwiaW5fMSJdLFsxMCwwLCJpbl8xIiwyXSxbMTQsMSwiU0EgKyBcXGV0YV5TX3tTQn0iXSxbMTEsOCwiUyBpbl8xIiwyXSxbMTEsOSwiU1Npbl8xIiwwLHsibGFiZWxfcG9zaXRpb24iOjgwfV1d
\begin{tikzcd}[ampersand replacement=\&,baseline=(\tikzcdmatrixname-5-1.base)]
	SA \& {SA+SB} \&\&\&\& {SA+SSB} \\
	\&\&\&\&\& {S(A+SB)} \\
	{SA+SB} \& SSA \&\&\& SSA \& {S(SA+SB)} \\
	{SSA+SB} \&\&\&\& SA \& {SS(A+B)} \\
	{S(SA+B)} \&\& {S(SA+SB)} \&\& {SS(A+B)} \& {S(A+B)}
	\arrow["{[S in_1,  S in_2]}", from=1-6, to=2-6]
	\arrow["{S(\eta^S_A + SB)}", from=2-6, to=3-6]
	\arrow["{S[Sin_1, S in_2]}", from=3-6, to=4-6]
	\arrow["{\mu^S}", from=4-6, to=5-6]
	\arrow["{\eta^S_{SA} + SB}"', from=3-1, to=4-1]
	\arrow["{[S in_1, S in_2]}"', from=4-1, to=5-1]
	\arrow["{S(SA + \eta^S_B)}"', from=5-1, to=5-3]
	\arrow["{S[S in_1, S in_2]}"', from=5-3, to=5-5]
	\arrow["{\mu^S}"', from=5-5, to=5-6]
	\arrow["{\eta^S}", from=1-1, to=3-2]
	\arrow["{\mu^S}", from=3-2, to=4-5]
	\arrow["{Sin_1}", from=4-5, to=5-6]
	\arrow[Rightarrow, no head, from=1-1, to=4-5]
	\arrow["{S\eta^S}", from=1-1, to=3-5]
	\arrow["{S in_1}", from=1-1, to=2-6]
	\arrow["{SSin_1}", from=3-5, to=4-6]
	\arrow["{\mu^S}", from=3-5, to=4-5]
	\arrow["{in_1}", from=1-1, to=1-2]
	\arrow["{in_1}"', from=1-1, to=3-1]
	\arrow["{SA + \eta^S_{SB}}", from=1-2, to=1-6]
	\arrow["{S in_1}"', from=3-2, to=5-3]
	\arrow["{SSin_1}"{pos=0.8}, from=3-2, to=5-5]
      \end{tikzcd}
    } \qedhere
    \end{center}
\end{proof}
\begin{proof}[Proof of Lemma~\ref{lem:hcirctildeomega:extranatural}]
  Using the notation, we want to prove that the following square commutes.
  \begin{center}
    \diag(.6,2){%
      SX \& R_Y K_Y L_Y X \\
      R_Z K_Z L_Z X \& R_Y K_Z L_Z X
    }{%
      (m-1-1) edge[labela={\widetilde{h^∘}^ω_{X,Y}}] (m-1-2) %
      edge[labell={\widetilde{h^∘}^ω_{X,Z}}] (m-2-1) %
      (m-2-1) edge[labelb={R_{f} K_Z L_Z X}] (m-2-2) %
      (m-1-2) edge[labelr={R_Y K_{f} L_{f} X}] (m-2-2) %
    }
  \end{center}

  We first observe that both morphisms have the same restriction to
  $ΣX$, by definition of $\widetilde{h^∘}^ω$ and
  Lemma~\ref{lem:hcirctilde:extranatural}.  Thus, by
  \begin{itemize}
  \item universal property of
    $SX$ as a free $Σ$-algebra over $X$, and
  \item the fact that for any morphism $α∶ T₁ → T₂$ of monads, any
    $α_X∶ T₁X → T₂X$ is a $T₁$-algebra morphism, equipping
    $T₂X$ with the $T₁$-algebra structure given by
    $$T₁T₂X \xto{α_{T₂X}} T₂T₂X \xto{μ^{T₂}X} T₂X\rlap{,}$$
    so that each $\widetilde{h^∘}^ω_{X,Y}$ is an $S$-algebra morphism,
    or equivalently a $Σ$-algebra morphism,
  \end{itemize}
  it suffices to equip $R_Y K_Z L_Z X$ with $Σ$-algebra structure and
  show that the bottom and right morphisms above, i.e.,
  $R_f K_Z L_Z X$ and $R_Y K_f L_f X$, are $Σ$-algebra morphisms.  For
  the $Σ$-algebra structure on $R_Y K_Z L_Z X$, we apply
  Lemma~\ref{lem:RB:liftsto:KBalg} with the following $K_Y$-algebra
  structure on $K_Z L_Z X$:
  $$K_Y K_Z L_Z X \xto{K_f K_Z L_Z X} K_Z K_Z L_Z X \xto{μ^{K_Z}} K_Z L_Z X.$$
  The bottom morphism then lifts to $Σ\alg$ by
  Lemma~\ref{lem:aux:hcirctilde}. Finally, the fact that the
  right-hand morphism $R_Y K_f L_f X$ is a $Σ$-algebra morphism
  will thus follow from $K_f L_f X$ being a $K_Y$-algebra morphism,
  which in turn follows by chasing the following diagram.
  \begin{center}
    \hfill
    % file:///home/thirs/github/quiver/src/index.html?q=WzAsNixbMCwwLCJLX1kgS19ZIExfWSBYIl0sWzAsMiwiS19ZIExfWSBYIl0sWzIsMiwiS19aIExfWSBYIl0sWzQsMiwiS19aIExfWiBYIl0sWzQsMCwiS19ZIEtfWiBMX1ogWCJdLFs0LDEsIktfWiBLX1ogTF9aIFgiXSxbMCwxLCJcXG11XntLX1l9Il0sWzEsMiwiS19mIExfWSBYIl0sWzIsMywiS19aIExfZiBYIl0sWzQsNSwiS19mIEtfWiBMX1ogWCJdLFs1LDMsIlxcbXVee0tfWn0iXSxbMCw0LCJLX1kgS19mIExfZiBYIl1d
    \begin{tikzcd}[ampersand replacement=\&,baseline=(\tikzcdmatrixname-3-1.base)]
      {K_Y K_Y L_Y X} \&\&\&\& {K_Y K_Z L_Z X} \\
      \&\&\&\& {K_Z K_Z L_Z X} \\
      {K_Y L_Y X} \&\& {K_Z L_Y X} \&\& {K_Z L_Z X}
      \arrow["{\mu^{K_Y}}", from=1-1, to=3-1]
      \arrow["{K_f L_Y X}", from=3-1, to=3-3]
      \arrow["{K_Z L_f X}", from=3-3, to=3-5]
      \arrow["{K_f K_Z L_Z X}", from=1-5, to=2-5]
      \arrow["{\mu^{K_Z}}", from=2-5, to=3-5]
      \arrow["{K_Y K_f L_f X}", from=1-1, to=1-5]
    \end{tikzcd} \qedhere
    % \vspace*{1em} 
    % \Diag(1,1.5){%
    % \justify{m-4-1}{naturality of $μ^{K_Y}$ in $Y$}{m-5-2} %
    % \justify{m-4-2}{naturality of $μ^{K_Z}$}{m-5-3} %
    % \justify{m-2-1}{naturality of $ε_Y$}{m-4-3} %
    % \justify{m-1-1}{naturality of $\widetilde{h^∘}$ in $X$}{m-2-3} %
    % }{%
    %   Σ R_Y K_Y L_Y X \& \&  Σ R_Y K_Z L_Z X \\
    %   R_Y K_Y L_Y  R_Y K_Y L_Y X \& \&   R_Y K_Y L_Y  R_Y K_Z L_Z X \\
    %   \& \&  R_Y K_Z L_Y  R_Y K_Z L_Z X \\
    %   R_Y K_Y K_Y L_Y X \& R_Y K_Z K_Z L_Y X \&   R_Y K_Z K_Z L_Z X \\
    %   R_Y K_Y L_Y X \& R_Y K_Z L_Y X \&   R_Y K_Z L_Z X %
    % }{%
    %   (m-1-1) edge[labela={Σ R_Y K_f L_f X}] (m-1-3) %
    %   edge[labelon={\widetilde{h^∘}_{R_Y K_Y L_Y X,Y}}] (m-2-1) %
    %   (m-2-1) edge[labelb={R_Y K_Y L_Y R_Y K_f L_f X}] (m-2-3) %
    %   (m-1-3) edge[labelon={\widetilde{h^∘}_{R_Y K_Z L_Z X,Y}}] (m-2-3) %
    %   (m-2-1) edge[labelon={R_Y K_Y ε_Y K_Y L_Y X}] (m-4-1) %
    %   edge[labelon={R_Y K_f L_Y R_Y K_f L_f X}] (m-3-3) %
    %   (m-4-1) edge[labelb={R_Y K_f K_f L_Y X}] (m-4-2) %
    %   (m-4-2) edge[labelb={R_Y K_Z K_Z L_f X}] (m-4-3) %
    %   (m-2-3) edge[labelon={R_Y K_f L_Y R_Y K_Z L_Z X}] (m-3-3) %
    %   (m-3-3) edge[labelon={R_Y K_Z ε_Y K_Z L_Z X}] (m-4-3) %
    %   (m-4-1) edge[labelon={R_Y μ^{K_Y} L_Y X}] (m-5-1) %
    %   (m-5-1) edge[labelb={R_Y K_f L_Y X}] (m-5-2) %
    %   (m-5-2) edge[labelb={R_Y K_Z L_f X}] (m-5-3) %
    %   (m-4-3) edge[labelon={R_Y μ^{K_Z} L_Z X}] (m-5-3) %
    %   (m-4-2) edge[labelon={R_Y μ^{K_Z} L_Y X}] (m-5-2) %
    % }
    %   \caption{Extranaturality of $\widetilde{h^∘}^ω$}
    %   \label{fig:extranat:i}
    % \end{figure}
  \end{center}
\end{proof}

\begin{lemma}\label{lem:coh:hcirctildeomega}
  The family $\widetilde{h^∘}^ω$ satisfies the following laws.
  \begin{center}
    \diag{%
      SX \& STX \\
      ℸ^×_{STY} ST O_Y Γ^\pt_{STY} X  \& ST O_Y Γ^\pt_{STY} X %
    }{%
    (m-1-1) edge[labela={Sηᵀ_X}] (m-1-2) %
    edge[labell={\widetilde{h^∘}^ω_{X,Y}}] (m-2-1) %
    (m-2-1) edge[labelb={π₂}] (m-2-2) %
    (m-1-2) edge[labelr={ST  α^{Γ^•_{ST|}}_{X,Y}}] (m-2-2) %
  }

  \diag(.6,2.2){%
    SX \& \& ℸ^×_{STY} ST Γ^•_{ST|Y} X \\
    \& \& ℸ^×_{SSTY} ST Γ^•_{ST|Y} X \\
    ℸ^×_{STSY} ST Γ^•_{ST|SY} X \&
    ℸ^×_{STSY} STS Γ^•_{STS|Y} X \&
            ℸ^×_{STSY} ST Γ^•_{ST|Y} X %
          }{%
            (m-1-1) edge[labela={\widetilde{h^∘}^ω_{X,Y}}] (m-1-3) %
            edge[labell={\widetilde{h^∘}^ω_{X,SY}}] (m-3-1) %
            (m-3-1) edge[labelb={ℸ^×_{STSY} ↑^{S}_{X,Y}}] (m-3-2) %
            (m-3-2) edge[loinb={ℸ^×_{STSY} (Sδ;μ^S T) Γ^•_{S δ;μ^S T |Y}X}] (m-3-3) %
            (m-1-3) edge[labelr={ℸ^×_{μ^S_{TY}} ST Γ^•_{ST|Y} X}] (m-2-3)
            (m-2-3) edge[labelr={ℸ^×_{Sδ_Y} ST Γ^•_{ST|Y} X}] (m-3-3)
  }

% file:///home/thirs/github/quiver/src/index.html?q=WzAsNSxbMCwwLCJcXFNpZ21hIFgiXSxbMiwzLCJcXGRhbGV0aF9ZIFNUIFxcR2FtbWFeXFxidWxsZXRfe1NUfFl9WCJdLFsyLDEsIlxcZGFsZXRoXlxcdGltZXNfe1NUWX0gU1QgXFxHYW1tYV5cXGJ1bGxldF97U1R8WX1YIl0sWzIsMiwiXFxkYWxldGhfe1NUWX0gU1QgXFxHYW1tYV5cXGJ1bGxldF97U1R8WX1YIl0sWzIsMCwiU1giXSxbMCwxLCJcXHdpZGV0aWxkZXtofV97WCxZfSIsMl0sWzIsMywiXFxwaV8xIl0sWzMsMSwiXFxkYWxldGhfe1xcZXRhXntTVH1fWX1TVCBcXEdhbW1hXlxcYnVsbGV0X3tTVHxZfVgiLDAseyJsYWJlbF9wb3NpdGlvbiI6NDB9XSxbNCwyLCJcXHdpZGV0aWxkZXtoXlxcY2lyY31eXFxvbWVnYV97WCxZfSJdLFswLDQsIlxcZXRhX3tcXFNpZ21hLFh9Il1d
\begin{tikzcd}[ampersand replacement=\&]
	{\Sigma X} \&\& SX \\
	\&\& {\daleth^\times_{STY} ST \Gamma^\bullet_{ST|Y}X} \\
	\&\& {\daleth_{STY} ST \Gamma^\bullet_{ST|Y}X} \\
	\&\& {\daleth_Y ST \Gamma^\bullet_{ST|Y}X}
	\arrow["{\widetilde{h}_{X,Y}}"', from=1-1, to=4-3]
	\arrow["{\pi_1}", from=2-3, to=3-3]
	\arrow["{\daleth_{\eta^{ST}_Y}ST \Gamma^\bullet_{ST|Y}X}"{pos=0.4}, from=3-3, to=4-3]
	\arrow["{\widetilde{h^\circ}^\omega_{X,Y}}", from=1-3, to=2-3]
	\arrow["{\eta_{\Sigma,X}}", from=1-1, to=1-3]
      \end{tikzcd}

% file:///home/thirs/github/quiver/src/index.html?q=WzAsNSxbMCwwLCJTWCJdLFs0LDAsIlxcZGFsZXRoXlxcdGltZXNfe1NUVFl9U1RcXEdhbW1hXlxcYnVsbGV0X3tTVHxUWX1YIl0sWzQsMSwiXFxkYWxldGheXFx0aW1lc197U1RUWX1TVFRcXEdhbW1hXlxcYnVsbGV0X3tTVFR8WX1YIl0sWzQsMiwiXFxkYWxldGheXFx0aW1lc197U1RUWX1TVFxcR2FtbWFeXFxidWxsZXRfe1NUfFl9WCJdLFswLDIsIlxcZGFsZXRoXlxcdGltZXNfe1NUWX1TVFxcR2FtbWFeXFxidWxsZXRfe1NUfFl9WCJdLFswLDEsIntcXHdpZGV0aWxkZXtoXlxcY2lyY31eXFxvbWVnYX1fe1gsVFl9Il0sWzEsMiwiXFxkYWxldGheXFx0aW1lc197U1RUWX1TVCBcXHVwYXJyb3deVCJdLFsyLDMsIlxcZGFsZXRoXlxcdGltZXNfe1NUVFl9U1xcbXVeVCBcXEdhbW1hXlxcYnVsbGV0X3tTXFxtdV5UfFl9WCJdLFswLDQsIntcXHdpZGV0aWxkZXtoXlxcY2lyY31eXFxvbWVnYX1fe1gsWX0iLDJdLFs0LDMsIlxcZGFsZXRoXlxcdGltZXNfe1NcXG11XlRfe1l9fVNUIFxcR2FtbWFeXFxidWxsZXRfe1NUfFl9IFgiLDJdXQ==
\begin{tikzcd}[ampersand replacement=\&]
	SX \&\&\&\& {\daleth^\times_{STTY}ST\Gamma^\bullet_{ST|TY}X} \\
	\&\&\&\& {\daleth^\times_{STTY}STT\Gamma^\bullet_{STT|Y}X} \\
	{\daleth^\times_{STY}ST\Gamma^\bullet_{ST|Y}X} \&\&\&\& {\daleth^\times_{STTY}ST\Gamma^\bullet_{ST|Y}X}
	\arrow["{{\widetilde{h^\circ}^\omega}_{X,TY}}", from=1-1, to=1-5]
	\arrow["{\daleth^\times_{STTY}ST \uparrow^T}", from=1-5, to=2-5]
	\arrow["{\daleth^\times_{STTY}S\mu^T \Gamma^\bullet_{S\mu^T|Y}X}", from=2-5, to=3-5]
	\arrow["{{\widetilde{h^\circ}^\omega}_{X,Y}}"', from=1-1, to=3-1]
	\arrow["{\daleth^\times_{S\mu^T_{Y}}ST \Gamma^\bullet_{ST|Y} X}"', from=3-1, to=3-5]
\end{tikzcd}
    \end{center}
\end{lemma}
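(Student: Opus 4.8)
The plan is to deduce all four laws from the matching laws for $\widetilde{h^∘}$ already established in Lemma~\ref{lem:coh:hcirctilde}, exploiting that $\widetilde{h^∘}^ω_{X,Y}$ is, by definition, the unique monad morphism $S → M_Y$ extending $\widetilde{h^∘}_{X,Y}$, where I write $M_Y ≔ ℸ^×_{STY} ST O_Y Γ^∘_{STY}$ for the monad of the preceding lemma. Each displayed law is the ``$S$-level'' counterpart of one of the ``$Σ$-level'' laws of Lemma~\ref{lem:coh:hcirctilde}: the domain $ΣX$ is replaced by $SX$, and the base maps by their free extensions. The third law is the only immediate one: substituting the defining triangle $\widetilde{h^∘}^ω_{X,Y} ∘ η_{Σ,X} = \widetilde{h^∘}_{X,Y}$ turns it verbatim into the third law of Lemma~\ref{lem:coh:hcirctilde}. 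The remaining three laws all have $SX$ in their top-left corner, so I treat them by the universal property of $SX$ as the free $Σ$-algebra on $X$, exactly as in the proof of Lemma~\ref{lem:hcirctildeomega:extranatural}.

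For each of the first, second and fourth laws the argument has two ingredients. First, I check that the two composites agree after precomposition with $η_{Σ,X}∶ ΣX → SX$. Using the defining triangle of $\widetilde{h^∘}^ω$, the $\widetilde{h^∘}^ω$-side collapses to the corresponding $\widetilde{h^∘}$-composite, while the other side collapses by naturality of $η_Σ∶ Σ → S$ (e.g.\ $Sη^T_X ∘ η_{Σ,X} = η_{Σ,η^T_X}$ for the first law); the two are then equal by the matching law of Lemma~\ref{lem:coh:hcirctilde}. Second, I equip the common codomain with a $Σ$-algebra structure and verify that both composites are $Σ$-algebra morphisms out of the free algebra $SX$; uniqueness of such morphisms then forces the two composites to coincide.

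The choice of $Σ$-algebra structure on the codomain, and the verification that each composite respects it, is where the work lies. The composites built by applying $S$ to a morphism (such as $STα^{Γ^•_{ST|}} ∘ Sη^T_X$ on the codomain of the first law) are $Σ$-algebra morphisms for free, since $S$ of any morphism commutes with $μ^S$. The composites factoring through $\widetilde{h^∘}^ω$ are $Σ$-algebra morphisms because $\widetilde{h^∘}^ω$ is a monad morphism, hence each component is a $Σ$-algebra morphism into $M_Y X$ carrying the structure of Lemma~\ref{lem:RB:liftsto:KBalg} (with $A = K_Y L_Y X$, as noted right after that lemma), provided the remaining post-composed transformations are themselves $Σ$-algebra morphisms. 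Concretely, I must show that the projections $π_1, π_2$ of $ℸ^×$, the liftings $↑^S$, and the collapsing maps $ℸ^×_{μ^S_{TY}}$, $ℸ^×_{Sδ_Y}$ and $ℸ^×_{STTY} Sμ^T Γ^•$ preserve the relevant algebra structures, transporting $K_B$-algebra structures along $K_f$ as in Lemma~\ref{lem:aux:hcirctilde} and invoking Lemmas~\ref{lem:delta:uparrow}, \ref{lem:option:mu:inside}, \ref{lem:refOmonad} and \ref{lem:liftLR} together with Corollary~\ref{cor:transpose}.

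The main obstacle is precisely this last bookkeeping for the second and fourth laws, where $↑^S$ and $Sμ^T$ mediate between codomains indexed by $Y$ and by $SY$ (resp.\ $TY$): there the correct $K_B$-algebra structure on the inner $ST Γ^•$-object must be pinned down and shown compatible along both paths, mirroring the identification of $K_f L_f X$ as a $K_Y$-algebra morphism in the proof of Lemma~\ref{lem:hcirctildeomega:extranatural}. By contrast, the agreement-on-generators step is routine once the relevant law of Lemma~\ref{lem:coh:hcirctilde} is in hand.
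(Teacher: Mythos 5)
Your proposal is correct and follows essentially the same route as the paper: the third law is read off directly from Lemma~\ref{lem:coh:hcirctilde}, and the first, second and fourth laws are handled exactly as in the proof of Lemma~\ref{lem:hcirctildeomega:extranatural} — agreement on generators via the defining triangle and the corresponding law of Lemma~\ref{lem:coh:hcirctilde}, then uniqueness of $Σ$-algebra morphisms out of the free algebra $SX$, with the non-trivial work being to equip the codomains with $Σ$-algebra structure via Lemma~\ref{lem:RB:liftsto:KBalg} (for suitable $K_{SY}$- resp.\ $K_{TY}$-algebra structures, which is precisely where you predict the bookkeeping lies, and where the paper inserts Lemma~\ref{lem:deux:KSY:alg:struct} and the lifting chases). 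The only cosmetic slip is a citation: the option/multiplication compatibility invoked for the fourth law is Lemma~\ref{lem:option:mu} rather than Lemma~\ref{lem:option:mu:inside}.
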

\begin{proof}
  The third law is direct by Lemma~\ref{lem:coh:hcirctilde}. For the
  first two, we proceed similarly to the proof of
  Lemma~\ref{lem:hcirctildeomega:extranatural}, by showing that all
  the needed morphisms are $Σ$-algebra morphisms.  For the first
  diagram, the top and right-hand morphisms are free $Σ$-algebra
  morphisms, so we reduce to proving that $π₂$ is a $Σ$-algebra
  morphism. This holds by commutativity of the diagram in
  Figure~\ref{fig:coh:hcirctildeomega:i} (recalling
  Notation~\ref{not:LKR}),
  % \begin{figure*}[pth]
\begin{figure}[htp]
  \ajustetourne{
  \Diag(1.5,1.2){%
    \justify[.4]{m-3-2}{Lemma~\ref{lem:coh:hcirctilde}}{m-1-3} %
    \justify[.5]{m-4-1}{naturality of $π₂$}{m-3-2} %
    \justify[.5]{m-4-4}{dist. law axiom}{mimi} %
    \justify[.3]{mimu}{monad law}{m-2-5} %
    \justify[.3]{m-4-4}{definition of $μ^{ST}$}{m-5-3} %
    \justify[.3]{m-1-3}{interchange}{m-2-5} %
    \justify[.5]{m-4-2}{defn of $μ^{K_Y}$}{midmust} %
    \justify[.5]{m-3-2}{?}{m-3-3} %
  }{%
      \&\& Σ R_Y K_Y L_Y X \& \& Σ K_Y L_Y X \\
      \& R_Y K_Y L_Y R_Y K_Y L_Y X
      \& ST R_Y K_Y L_Y X
      \& \& S K_Y L_Y X = S ST O_Y L_Y X \\
      R_Y K_Y K_Y  L_Y X \&
      K_Y  L_Y R_Y K_Y  L_Y X
      % = ST O_Y L_Y R_Y K_Y L_Y X
         \& ST K_Y L_Y X \& \\
         R_Y K_Y L_Y X \&  K_Y K_Y L_Y X \& \&
         SSTT O_Y L_Y X \& SST O_Y L_Y X \\
            \& \& K_Y L_Y X  %
    }{%
      (m-1-3) edge[labelal={\widetilde{h^∘}_{R_Y K_Y L_Y X,Y}}] (m-2-2) %
      edge[labelon={η_Σ ηᵀ R_Y K_Y L_Y X}] (m-2-3) %
      (m-2-2) edge[labell={π₂}] (m-3-2) %
      (m-2-3) edge[labelon={ST α^{O_Y L_Y} R_Y K_Y L_Y X}]
         (m-3-2) %
      (m-2-2) edge[labelal={R_Y K_Y ε_Y K_Y L_Y X}] (m-3-1) %
      (m-3-1) edge[labell={R_Y μ^{K_Y} L_Y X}] (m-4-1) %
      % (m-3-2) edge[labelon={η_Σ R_Y K_Y L_YX}] (m-4-1) %
      %
      (m-1-3) edge[labela={Σ π₂}] (m-1-5)%
      (m-1-5) edge[labelon={η_Σ K_Y L_Y X}] (m-2-5) %
      (m-2-5) edge[labelal={S ηᵀ K_Y L_Y X}] %
         node[coordinate](mimi){} (m-3-3)
      (m-2-3) edge[labelon={ST π₂}] (m-3-3) %
      (m-3-2) edge[labelon={K_Y ε_Y K_Y L_Y X}] (m-4-2) %
      (m-4-2) edge[labelon={μ^{K_Y}L_Y X}] (m-5-3) %
      (m-3-3) edge[labelon={S δ_{T O_Y L_Y X}}] (m-4-4) %
      (m-4-4) edge[labela={SS μᵀ_{O_Y L_Y X}}] %
         node[coordinate](mimu){} (m-4-5) %
      (m-2-5) edge[labelon={SSηᵀ TO_YL_Y X}] (m-4-4) %
      (m-2-5) edge[identity] (m-4-5) %
      (m-4-1) edge[labelbl={π₂}] (m-5-3) %
      (m-4-5) edge[bend left=10,labelbr={μ^S TO_YL_YX}] (m-5-3) %
      (m-3-3) edge[labelon={μ^{ST} O_Y L_Y X}] %
      node[coordinate](midmust){} (m-5-3) %
      edge[labelon={ST in₁}] (m-4-2) %
    }
    }
    \caption{Diagram for proof of Lemma~\ref{lem:coh:hcirctildeomega},
      first statement}
    \label{fig:coh:hcirctildeomega:i}
    \end{figure}
    where the question-marked polygon commutes because the following
    diagram does, for all $A$.
    \begin{center}
      \diag(.6,2){%
        R_Y A \& A \\
        L_Y R_Y A \\
        O_Y L_Y R_Y A \& O_Y A
      }{%
        (m-1-1) edge[labela={π₂}] (m-1-2) %
        edge[labelon={in₂}] (m-2-1) %
        edge[bend right,labell={α^{O_Y L_Y}_{R_YA}}, %
        out=-90,in=-90] (m-3-1)
        (m-3-1) edge[labelb={O_Y ε_Y A}] (m-3-2) %
        (m-1-2) edge[labelr={in₁}] (m-3-2) %
        (m-2-1) edge[labelbr={ε_Y A}] (m-1-2) %
        edge[labelon={in₁}] (m-3-1) %
      }
    \end{center}
    The top triangle commutes by construction of
    $$ε_YA∶ Γ^\pt_{STY} ℸ^×_{STY} A → A\rlap{:}$$
    the domain
    $Γ^\pt_{STY} ℸ^×_{STY} A$ is the coproduct
    $$Γ_{STY} (ℸ_{STY}(A) × A) \quad + \quad ℸ_{STY}(A) × A\rlap{,}$$
    and the second component of $ε_YA$
    is
    $π₂∶ ℸ_{STY}(A)×A → A.$

    Let us now turn to the second diagram. We start by equipping the
    codomain with $Σ$-algebra structure, by applying
    Lemma~\ref{lem:RB:liftsto:KBalg} with $B = SY$ and
    $A = ST Γ^•_{ST|Y}X = K_Y Γ^\pt_{STY}X$, whose $K_{SY}$-algebra
    structure is given by either of the following three equal
    composites.
    \begin{center}
\begin{tikzcd}[ampersand replacement=\&]
	{K_{SY}K_Y L_Y X} \&\& {K_{SY}K_Y L_Y X} \&\& {K_{SY} K_Y L_Y X} \\
	{ST (ST O_Y L_Y X + SY)} \&\& {ST (ST O_Y L_Y X + SY)} \&\& {ST (ST O_Y L_Y X + SY)} \\
	{STSO_Y ST O_YL_Y X} \&\& {ST ST (O_Y L_Y X + SY)} \&\& {ST ST (O_Y L_Y X + SY)} \\
	{STO_Y ST O_YL_Y X} \&\& {ST ST S(O_Y L_Y X + Y)} \&\& {ST(O_Y L_Y X + SY)} \\
	{ST O_YL_Y X} \&\& {STTS(L_Y X + Y + Y)} \&\& {STS(L_YX + Y + Y)} \\
	{K_YL_YX} \&\& {STS(L_Y X + Y + Y)} \&\& {ST(L_YX + Y)} \\
	\&\& {SST(L_YX + Y)} \&\& {K_YL_YX} \\
	\&\& {ST(L_YX + Y)} \\
	\&\& {K_Y L_Y X}
	\arrow[Rightarrow, no head, from=1-3, to=2-3]
	\arrow["{ST( {}^{ST}\uparrow)}", from=2-3, to=3-3]
	\arrow["{STST \uparrow^S}", from=3-3, to=4-3]
	\arrow["{S\delta ; \mu^S}", from=4-3, to=5-3]
	\arrow["{S\mu^T}", from=5-3, to=6-3]
	\arrow[Rightarrow, no head, from=1-1, to=2-1]
	\arrow["{ST \uparrow^S}", from=2-1, to=3-1]
	\arrow["{(S \delta ; \mu^S T)}", from=3-1, to=4-1]
	\arrow["{\mu^{K_Y}}", from=4-1, to=5-1]
	\arrow[Rightarrow, no head, from=5-1, to=6-1]
	\arrow[Rightarrow, no head, from=1-5, to=2-5]
	\arrow["{ST( {}^{ST}\uparrow)}", from=2-5, to=3-5]
	\arrow["{\mu^{ST}}", from=3-5, to=4-5]
	\arrow["{ST \uparrow^S}", from=4-5, to=5-5]
	\arrow["{(S \delta ; \mu^S T)(L_Y X + [Y,Y]}", from=5-5, to=6-5]
	\arrow[Rightarrow, no head, from=6-5, to=7-5]
	\arrow["{S\delta(L_YX + [Y,Y])}", from=6-3, to=7-3]
	\arrow[Rightarrow, no head, from=8-3, to=9-3]
	\arrow["{\mu^S}", from=7-3, to=8-3]
      \end{tikzcd}
    \end{center}

  \begin{lemma}\label{lem:deux:KSY:alg:struct}
    The above $K_{SY}$-algebra structures are indeed equal.
  \end{lemma}
  \begin{proof}
    The latter equivalence is clear. For the former, we proceed by
    diagram chasing as in Figure~\ref{fig:deux:KSY:alg:struct}, where
    $A = L_YX$.
    \begin{figure}[htp]
      \ajustetourne{
\begin{tikzcd}[ampersand replacement=\&]
	{ ST (  ST (A +Y) + SY)} \&\&\&\& { ST (  ST (A +Y) + STSY)} \&\&\& { ST   ST(A+Y+SY)} \&\&\&\& { ST   ST(S(A+Y)+SY)} \\
	\\
	\&\&\&\&\&\&\&\& { ST(A+Y+SY)} \&\&\& { ST   STS(A+Y+Y)} \\
	\&\& { ST (  ST (A +Y) + STY)} \&\& { ST (  ST (A +Y) + STSTY)} \&\& { ST ST (A +Y + STY)} \& {ST( A +Y + STY)} \&\& {ST(S(A+Y)+SY)} \&\& { S STTS(A+Y+Y)} \\
	{ ST (  SST (A+Y) + SY)} \&\& { ST (  STST (A +Y) + STY)} \&\&\&\&\&\& {ST( ST(A+Y) + STY)} \&\& {STS(A+Y+Y)} \& { S TTS(A+Y+Y)} \\
	{ ST S( ST (A+Y)+Y)} \&\& { ST ST(ST (A +Y) + Y)} \&\& {\text{(Lemma~\ref{lem:liftLR})}} \&\&\&\& {STST( A+Y + Y)} \&\&\& { ST   S(A + Y + Y)} \\
	{ SST ( ST (A+Y)+Y)} \&\&\&\&\&\&\&\&\&\&\& { S ST(A + Y)} \\
	{ ST(  ST (A+Y) + Y)} \&\&\&\& { ST(ST (A +Y) + STY)} \&\& { STST( A +Y + Y)} \&\& {ST( A +Y + Y)} \&\&\& { ST (A+Y)}
	\arrow["{ S \delta(A + [Y,Y])}", from=6-12, to=7-12]
	\arrow["{ \mu^ST (A + Y)}", from=7-12, to=8-12]
	\arrow["S\delta"', from=6-1, to=7-1]
	\arrow["{ \mu^S T ( ST (A+Y)+Y)}"', from=7-1, to=8-1]
	\arrow["{ ST [STin_1,STin_2]}", from=1-5, to=1-8]
	\arrow["{STST(\eta^S_{A+Y}+SY)}", from=1-8, to=1-12]
	\arrow["S\delta", from=3-12, to=4-12]
	\arrow["{ ST[Sin_1,Sin_2]}"', from=5-1, to=6-1]
	\arrow["{ST(\eta^S_{ST(A+Y)}+SY)}"', from=1-1, to=5-1]
	\arrow["{\mu^S}", from=4-12, to=5-12]
	\arrow["{S \mu^T}", from=5-12, to=6-12]
	\arrow["{ST(ST(A+Y) + S \eta^T_Y)}", from=1-1, to=4-3]
	\arrow["{ST(η^{ST} + STY)}", from=4-3, to=5-3]
	\arrow["{ST[ S T in₁, S T in₂ ]}", from=5-3, to=6-3]
	\arrow["{μ^{ST}}", from=6-3, to=8-1]
	\arrow["{ ST(ST (A +Y) + η^{ST}_Y)}"', from=8-1, to=8-5]
	\arrow["{ST [ S T in₁, S T in₂ ]}"', from=8-5, to=8-7]
	\arrow["{μ^{ST}}"', from=8-7, to=8-9]
	\arrow["{ST(ST(A+Y) + η^{ST}_Y)}", from=4-3, to=4-5]
	\arrow["{ST [ S T in₁, S T in₂ ]}", from=4-5, to=4-7]
	\arrow["{μ^{ST}}", from=4-7, to=4-8]
	\arrow["{ST( η^{ST} + STY)}"'{pos=0.1}, from=4-8, to=5-9]
	\arrow["{ST [ ST in_1, ST in_2 ]}"'{pos=0.4}, from=5-9, to=6-9]
	\arrow["{\mu^{ST}}", from=6-9, to=8-9]
	\arrow["{\mu^{ST}}", from=1-8, to=3-9]
	\arrow["{ ST(A+Y+S\eta^T_Y)}"'{pos=1}, from=3-9, to=4-8]
	\arrow["{ST( A +[Y, Y])}"', from=8-9, to=8-12]
	\arrow["{\mu^{ST}}", from=1-12, to=4-10]
	\arrow["{ST(S\eta^T_{A+Y}+S\eta^T_Y)}"'{pos=0.9}, from=4-10, to=5-9]
	\arrow["{ ST(\eta^S_{A+Y}+SY)}"{pos=0.9}, from=3-9, to=4-10]
	\arrow["{\mu^{ST}}"', from=3-12, to=5-11]
	\arrow["{STS\eta^T_{A+Y+Y}}", from=5-11, to=6-9]
	\arrow["{ST[S in_1, S in_2]}"{pos=0.9}, from=4-10, to=5-11]
	\arrow[Rightarrow, no head, from=5-11, to=6-12]
	\arrow["{\mu^S}"', from=7-12, to=8-9]
	\arrow["{ ST (  ST (A +Y) + STS\eta^T_Y)}", from=1-5, to=4-5]
	\arrow["{ ST   ST(A+Y+S\eta^T_Y)}"', from=1-8, to=4-7]
	\arrow["{ ST (  S\eta^T + S\eta^T_Y)}", from=5-1, to=5-3]
	\arrow["{ ST S \eta^T}", from=6-1, to=6-3]
	\arrow["{ ST (  ST (A +Y) + \eta^{ST})}", from=1-1, to=1-5]
	\arrow["{STST[Sin_1,Sin_2]}", from=1-12, to=3-12]
      \end{tikzcd}}
      \caption{Proof of Lemma~\ref{lem:deux:KSY:alg:struct}}
      \label{fig:deux:KSY:alg:struct}
    \end{figure}
  \end{proof}

  Returning to the second law of Lemma~\ref{lem:coh:hcirctildeomega},
  the bottom row is the image by $ℸ^×_{STSY}$ of a morphism that lifts
  to $K_{SY}\alg$, by commutation of the diagram in
  Figure~\ref{fig:coh:hcirctildeomega:bottom:lifts}.
  \begin{figure}[htp]
    \centering
    \ajustetourne{
\begin{tikzcd}[ampersand replacement=\&]
	{K_{SY} K_{SY} L_{STSY} X} \& {K_{SY}  ST(SL_{STSY} X + SY)} \&\&\&\& {K_{SY}  STS(L_{STSY} X + Y)} \&\&\& {K_{SY}  ST(L_{STSY} X + Y)} \&\&\&\& {K_{SY} K_Y L_{STY} X} \\
	\&\&\&\&\&\&\&\& {K_{SY}  STS(L_{STY} X + Y)} \\
	\& {ST ST O_{SY} O_{SY} L_{STSY} X} \&\& {ST  ST O_{SY} O_{SY} SL_{STSY} X} \&\& {STSTO_{SY} S(O_YL_{STSY} X)} \&\&\&\&\& {ST ST O_{SY}  O_Y L_{STY} X} \&\& {STS O_Y K_Y L_{STY} X} \\
	\&\&\&\&\& {STSTSO_{Y} O_YL_{STSY} X} \&\&\& {ST ST O_{SY} SO_Y L_{STY} X} \&\& {ST ST S O_{Y}  O_Y L_{STY} X} \\
	\&\&\&\&\&\&\&\& {ST ST SO_{Y} O_Y L_{STY} X} \&\& {ST  S O_{Y}   L_{STY} X} \&\& {ST O_Y K_Y L_{STY} X} \\
	\&\&\&\&\&\&\&\& {ST S O_Y L_{STY} X} \\
	{K_{SY}  L_{STSY} X} \&\& {ST(SL_{STSY} X + SY)} \&\& {STS(L_{STSY} X + Y)} \&\&\&\& {ST(L_{STSY} X + Y)} \&\&\&\& {ST(L_{STY} X + Y)}
	\arrow["{K_{SY}  ST(\eta^S + SY)}", from=1-1, to=1-2]
	\arrow["{K_{SY}  ST [S in_1, S in_2]}", from=1-2, to=1-6]
	\arrow["{K_{SY} ( S\delta ; \mu^S T )}", from=1-6, to=1-9]
	\arrow["{K_{SY}  ST(L_{S\delta;\mu^S T} X + Y)}", from=1-9, to=1-13]
	\arrow["{ST [S in_1, S in_2]}", from=7-3, to=7-5]
	\arrow["{S\delta ; \mu^S T}", from=7-5, to=7-9]
	\arrow["{ST(L_{S\delta;\mu^S T} X + Y)}", from=7-9, to=7-13]
	\arrow["{ST \uparrow^S}", from=1-13, to=3-13]
	\arrow["{S\delta ; \mu^S T}", from=3-13, to=5-13]
	\arrow["{\mu^{K_Y}}", from=5-13, to=7-13]
	\arrow["{ST ({}^{ST}\uparrow_{SY})}", from=1-13, to=3-11]
	\arrow["{STST \uparrow^S}", from=3-11, to=4-11]
	\arrow["{\mu^{ST} S \mu^{O_Y}}", from=4-11, to=5-11]
	\arrow["{S\delta ; \mu^S T}", from=5-11, to=7-13]
	\arrow["{ST ({}^{ST}\uparrow_{SY})}", from=1-1, to=3-2]
	\arrow[from=1-1, to=7-1]
	\arrow["{ ST(\eta^S + SY)}", from=7-1, to=7-3]
	\arrow["{\mu^{ST} \mu^{O_{SY}}}", from=3-2, to=7-1]
	\arrow["{ST ST O_{SY} O_{SY} \eta^S}", from=3-2, to=3-4]
	\arrow["{ST ({}^{ST}\uparrow_{SY})}", from=1-6, to=3-6]
	\arrow["{ST  ST O_{SY} [Sin_1,Sin_2]}", from=3-4, to=3-6]
	\arrow["{\mu^{ST} \mu^{O_{SY}}}", from=3-4, to=7-3]
	\arrow["{STST [Sin_1, S in_2]}", from=3-6, to=4-6]
	\arrow["{\mu^{ST} S \mu^{O_Y}}", from=4-6, to=7-5]
	\arrow["{K_{SY} STS O_Y L_{S\delta;\mu^S T}X}"', from=1-6, to=2-9]
	\arrow["{K_{SY}  (S \delta ; \mu^S T)}", from=2-9, to=1-13]
	\arrow["{ST ({}^{ST}\uparrow_{SY})}", from=2-9, to=4-9]
	\arrow["{STST[S in_1, S in_2]}", from=4-9, to=5-9]
	\arrow["{\mu^{ST} S \mu^{O_Y}}", from=5-9, to=6-9]
	\arrow["{STS(L_{S\delta ; \mu^S T} X + Y)}", from=7-5, to=6-9]
	\arrow["{S\delta ; \mu^S T}", from=6-9, to=7-13]
	\arrow["{\text{(termwise)}}"{description}, draw=none, from=3-2, to=7-3]
	\arrow["{\text{(proved in Coq)}}"{description}, draw=none, from=3-11, to=4-9]
\end{tikzcd}
    }
    \caption{Lemma~\ref{lem:coh:hcirctildeomega}: bottom morphism lifts to $Σ\alg$}
    \label{fig:coh:hcirctildeomega:bottom:lifts}
  \end{figure}
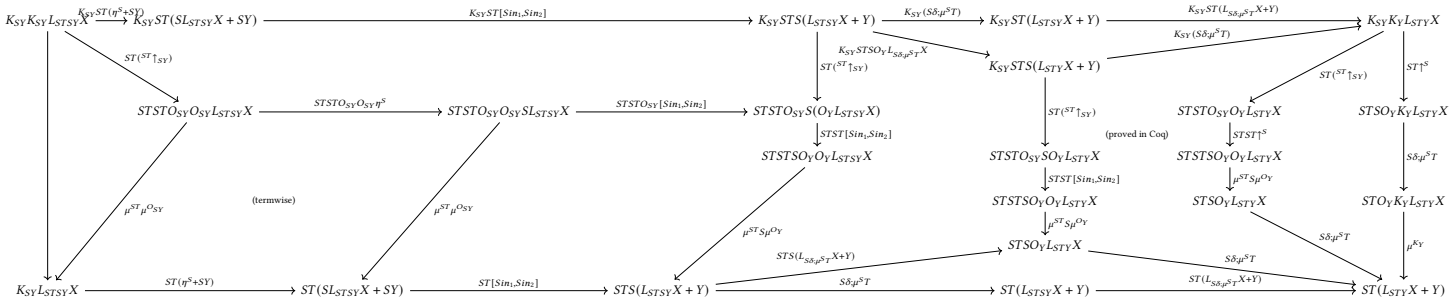
  It thus itself lifts to $Σ\alg$ by Lemma~\ref{lem:RB:liftsto:KBalg}.
  
  Finally, we prove that the right-hand composite is a $Σ$-algebra morphism by
  diagram chasing as follows.
  \begin{center}
    \ajustedroit[.95]{
\begin{tikzcd}[ampersand replacement=\&,baseline=(\tikzcdmatrixname-7-1.base)]
{\Sigma R_Y K_Y Z} \&\&\&\&\&\& {\Sigma R_{SY} K_Y Z} \\
	\&\&\& {R_{SY} K_{SY} L_{SY} R_Y K_Y Z} \&\&\& {R_{SY} K_{SY} L_{SY} R_{SY} K_Y Z} \\
	\&\&\& {R_{SY} K_{SY} L_{Y} R_Y K_Y Z} \&\&\& {R_{SY} K_{SY} K_Y Z} \\
	\&\&\& {R_{SY} STS O_{Y} L_{Y} R_Y K_Y Z} \&\&\& {R_{SY} STS O_{Y} K_Y Z} \\
	{R_Y K_Y L_Y R_Y K_Y Z} \&\&\& {R_{SY} ST O_{Y} L_{Y} R_Y K_Y Z} \&\&\& {R_{SY} ST O_{Y} K_Y Z} \\
	{R_Y K_Y K_Y Z} \\
	{R_Y K_Y Z} \&\&\&\&\&\& {R_{SY} K_Y Z}
	\arrow["{\widetilde{h^\circ}_{R_Y K_Y Z, Y}}"', from=1-1, to=5-1]
	\arrow["{R_Y K_Y \epsilon_Y}"', from=5-1, to=6-1]
	\arrow["{R_Y \mu^{K_Y}}"', from=6-1, to=7-1]
	\arrow["{\widetilde{h^\circ}_{R_{SY}K_YZ,SY}}", from=1-7, to=2-7]
	\arrow["{R_{SY} K_{SY} \epsilon_{SY}}", from=2-7, to=3-7]
	\arrow["{R_{SY} ST \uparrow^S}", from=3-7, to=4-7]
	\arrow["{R_{SY} (S \delta ; \mu^S T)}", from=4-7, to=5-7]
	\arrow["{R_{SY} \mu^{K_Y}}", from=5-7, to=7-7]
	\arrow["{\widetilde{h^\circ}_{ R_Y K_Y Z, SY}}"', from=1-1, to=2-4]
	\arrow[""{name=0, anchor=center, inner sep=0}, "{R_{SY} K_{SY} L_{SY} \daleth^\times_{S \delta ; \mu^S T} K_Y Z}", from=2-4, to=2-7]
	\arrow["{\Sigma \daleth^\times_{S \delta ; \mu^S T} K_Y Z}", from=1-1, to=1-7]
	\arrow["{ \daleth^\times_{S \delta ; \mu^S T} K_Y Z}"', from=7-1, to=7-7]
	\arrow["{R_{SY} K_{SY} \Gamma^\circ_{S \delta ; \mu^S T} R_Y K_Y Z}"', from=2-4, to=3-4]
	\arrow[""{name=1, anchor=center, inner sep=0}, "{R_{SY} K_{SY} \epsilon_Y K_Y Z}", from=3-4, to=3-7]
	\arrow["{R_{SY} ST \uparrow^S}", from=3-4, to=4-4]
	\arrow["{R_{SY} (S \delta ; \mu^S T)}", from=4-4, to=5-4]
	\arrow["{R_{SY} ST O_{Y} \epsilon_Y K_Y Z}", from=5-4, to=5-7]
	\arrow["{\daleth^\times_{S \delta ; \mu^S T} K_Y L_Y R_Y K_Y Z}", from=5-1, to=5-4]
	\arrow["{\daleth^\times_{S \delta ; \mu^S T}  K_Y K_Y Z}"', from=6-1, to=5-7]
	\arrow["{\text{(Lemma~\ref{lem:coh:hcirctilde})}}"{description}, draw=none, from=5-1, to=2-4]
	\arrow["{\text{(Corollary~\ref{cor:transpose})}}"{description}, Rightarrow, draw=none, from=0, to=1]
      \end{tikzcd}} 
  \end{center}

  We then consider the fourth law, for which we proceed similarly: it
  suffices to show that the bottom morphism and right-hand composite
  are  $Σ$-algebra morphisms.
  We first prove that the bottom morphism is a $Σ$-algebra morphism by diagram chasing as in Figure~\ref{fig:coh:hcirctildeomegatilde:mu}.
  \begin{figure}[htp]
    \ajustetourne{
\begin{tikzcd}[ampersand replacement=\&]
	{\Sigma \daleth^\times_{STY} ST \Gamma^\bullet_{ST|Y} X} \&\&\&\&\&\&\&\&\&\& {\Sigma \daleth^\times_{STTY} ST \Gamma^\bullet_{ST|Y} X} \\
	{\daleth^\times_{STY} ST O_{Y} \Gamma^\circ_{STY} \daleth^\times_{STY} ST \Gamma^\bullet_{ST|Y} X} \&\&\& {\daleth^\times_{STTY} ST O_{TY} \Gamma^\circ_{STTY} \daleth^\times_{STY} ST \Gamma^\bullet_{ST|Y} X} \&\&\&\&\&\&\& {\daleth^\times_{STTY} ST O_{TY} \Gamma^\circ_{STTY} \daleth^\times_{STTY} ST \Gamma^\bullet_{ST|Y} X} \\
	\&\&\& {\daleth^\times_{STTY} STT O_{Y} \Gamma^\circ_{STTY} \daleth^\times_{STY} ST \Gamma^\bullet_{ST|Y} X} \&\&\&\&\&\&\& {\daleth^\times_{STTY} STO_{TY}STO_Y\Gamma^\circ_{STY}X} \\
	\&\&\& {\daleth^\times_{STTY} ST O_{Y} \Gamma^\circ_{STTY} \daleth^\times_{STY} ST \Gamma^\bullet_{ST|Y} X} \&\&\&\& {\daleth^\times_{STTY} ST TO_{Y} \Gamma^\circ_{STTY} \daleth^\times_{STTY} ST \Gamma^\bullet_{ST|Y} X} \&\&\& {\daleth^\times_{STTY} STTO_{Y}STO_Y\Gamma^\circ_{STY}X} \\
	\&\& {\daleth^\times_{STTY} ST O_{Y} \Gamma^\circ_{STY} \daleth^\times_{STY} ST \Gamma^\bullet_{ST|Y} X} \&\&\&\&\& {\daleth^\times_{STTY} S TO_{Y} \Gamma^\circ_{STTY} \daleth^\times_{STTY} ST \Gamma^\bullet_{ST|Y} X} \\
	{\daleth^\times_{STTY} STO_{Y}STO_Y\Gamma^\circ_{STY}X} \&\&\&\&\&\&\&\&\&\& {\daleth^\times_{STTY} STO_{Y}STO_Y\Gamma^\circ_{STY}X} \\
	{\daleth^\times_{STY} ST\Gamma^\bullet_{ST|Y}X} \&\&\&\&\&\&\&\&\&\& {\daleth^\times_{STTY} ST\Gamma^\bullet_{ST|Y}X}
	\arrow["{\daleth^\times_{STTY}  ST \uparrow^T}", from=3-11, to=4-11]
	\arrow["{\daleth^\times_{STTY} S\mu^T}", from=4-11, to=6-11]
	\arrow["{\daleth^\times_{STTY} \mu^{K_Y}}", from=6-11, to=7-11]
	\arrow["{\widetilde{h^\circ}_{\daleth^\times_{STTY} ST \Gamma^\bullet_{ST|Y} X,TY}}", from=1-11, to=2-11]
	\arrow["{\daleth^\times_{STTY} ST O_{TY} \epsilon_{STTY} ST \Gamma^\bullet_{ST|Y} X}", from=2-11, to=3-11]
	\arrow["{\daleth^\times_{STTY} \mu^{K_Y}}"', from=6-1, to=7-1]
	\arrow["{\widetilde{h^\circ}_{\daleth^\times_{STY} ST \Gamma^\bullet_{ST|Y} X,Y}}"', from=1-1, to=2-1]
	\arrow["{\daleth^\times_{STY} ST O_{Y} \epsilon_{STY} ST \Gamma^\bullet_{ST|Y} X}"', from=2-1, to=6-1]
	\arrow["{\daleth^\times_{S\mu^T_Y} ST\Gamma^\bullet_{ST|Y}X}"', from=7-1, to=7-11]
	\arrow["{\daleth^\times_{S\mu^T_Y} STO_{Y}STO_Y\Gamma^\circ_{STY}X}"', from=6-1, to=6-11]
	\arrow["{\Sigma \daleth^\times_{S\mu^T_Y} ST \Gamma^\bullet_{ST|Y} X}", from=1-1, to=1-11]
	\arrow["{\daleth^\times_{S\mu^T_Y} ST O_{Y} \Gamma^\circ_{STY} \daleth^\times_{STY} ST \Gamma^\bullet_{ST|Y} X}", from=2-1, to=5-3]
	\arrow["{\widetilde{h^\circ}_{\daleth^\times_{STY} ST \Gamma^\bullet_{ST|Y} X,TY}}"', from=1-1, to=2-4]
	\arrow["{\daleth^\times_{STTY} ST\uparrow^T}", from=2-4, to=3-4]
	\arrow["{\daleth^\times_{STTY} ST O_{TY} \Gamma^\circ_{STTY} \daleth^\times_{S\mu^T_Y} ST \Gamma^\bullet_{ST|Y} X}", from=2-4, to=2-11]
	\arrow["{\daleth^\times_{STTY} ST O_{Y} \epsilon_{STY} }"'{pos=0.2}, from=5-3, to=6-11]
	\arrow["{\daleth^\times_{STTY}  ST \uparrow^T}"'{pos=0.6}, from=2-11, to=4-8]
	\arrow["{\daleth^\times_{STTY} ST TO_{Y} \epsilon_{STTY} ST \Gamma^\bullet_{ST|Y} X}", from=4-8, to=4-11]
	\arrow["{\daleth^\times_{STTY} S\mu^T}", from=4-8, to=5-8]
	\arrow["{\daleth^\times_{STTY} S TO_{Y} \epsilon_{STTY} }", from=5-8, to=6-11]
	\arrow["{\daleth^\times_{STTY} S\mu^T O_{Y} \Gamma^\circ_{STTY} \daleth^\times_{STY} ST \Gamma^\bullet_{ST|Y} X}", from=3-4, to=4-4]
	\arrow["{\daleth^\times_{STTY} ST O_{Y} \Gamma^\circ_{S\mu^T_Y}}"'{pos=0.8}, from=4-4, to=5-3]
	\arrow["{\daleth^\times_{STTY} ST O_{Y} \Gamma^\circ_{STTY} \daleth^\times_{S\mu^T_{TY}} ST \Gamma^\bullet_{ST|Y} X}"'{pos=0.2}, from=4-4, to=5-8]
\end{tikzcd}}    
    \caption{Proof of Lemma~\ref{lem:coh:hcirctildeomegatilde}, fourth law, bottom morphism}
    \label{fig:coh:hcirctildeomegatilde:mu}
  \end{figure}
  Finally, we show that the right-hand
  composite is a $Σ$-algebra morphism, by
  Lemma~\ref{lem:RB:liftsto:KBalg}: we verify that it applies by
  proving that the composite is a morphism of $K_{TY}$-algebras, as we
  show by diagram chasing in
  Figure~\ref{fig:coh:hcirctildeomega:eta},
  \begin{figure}[htp]
    \ajustetourne{
\begin{tikzcd}[ampersand replacement=\&]
	{STO_{TY}STO_{TY}\Gamma^\circ_{STTY}X} \&\&\&\& {STO_{TY}STTO_{Y}\Gamma^\circ_{STTY}X} \&\&\&\&\&\&\&\&\&\&\& {STO_{TY}STO_Y\Gamma^\circ_{STY}X} \\
	\& {STSTO_{TY}O_{TY}\Gamma^\circ_{STTY}X} \&\&\& {STSTO_{TY}TO_{Y}\Gamma^\circ_{STTY}X} \&\& {STSTTO_{TY}O_{Y}\Gamma^\circ_{STTY}X} \&\&\& {STSTO_{TY}O_Y\Gamma^\circ_{STY}X} \&\&\&\&\&\& {STTO_{Y}STO_Y\Gamma^\circ_{STY}X} \\
	\&\&\&\&\&\&\&\& {STO_{TY}O_Y\Gamma^\circ_{STY}X} \&\& {STSTTO_{Y}O_Y\Gamma^\circ_{STY}X} \&\&\&\& {STTSTO_{Y}O_Y\Gamma^\circ_{STY}X} \\
	\&\& {STO_{TY}O_{TY}\Gamma^\circ_{STTY}X} \&\& {STO_{TY}TO_{Y}\Gamma^\circ_{STTY}X} \&\& {STTO_{TY}O_{Y}\Gamma^\circ_{STTY}X} \&\&\& {STTO_{Y}O_Y\Gamma^\circ_{STY}X} \&\& {STSTTO_{Y}O_Y\Gamma^\circ_{STY}X} \&\&\&\& {STO_{Y}STO_Y\Gamma^\circ_{STY}X} \\
	\&\&\&\& {STTO_{Y}TO_{Y}\Gamma^\circ_{STTY}X} \&\&\& {STTTO_{Y}O_{Y}\Gamma^\circ_{STTY}X} \&\&\& {STSTO_{Y}O_Y\Gamma^\circ_{STY}X} \&\& {SSTTTO_{Y}O_Y\Gamma^\circ_{STY}X} \&\& {STSTO_{Y}O_Y\Gamma^\circ_{STY}X} \\
	\&\&\&\&\& {STTTO_{Y}O_{Y}\Gamma^\circ_{STTY}X} \&\& {STTO_{Y}O_{Y}\Gamma^\circ_{STTY}X} \&\&\&\& {SSTTO_{Y}O_Y\Gamma^\circ_{STY}X} \&\& {SSTTO_{Y}O_Y\Gamma^\circ_{STY}X} \\
	\&\&\&\&\&\&\&\&\&\&\&\& {SSTO_{Y}O_Y\Gamma^\circ_{STY}X} \\
	\&\&\&\&\&\&\&\&\& {STO_{Y}O_Y\Gamma^\circ_{STY}X} \\
	{STO_{TY}\Gamma^\circ_{STTY}X} \&\&\&\& {STT\Gamma^\bullet_{STT|Y}X} \&\&\&\&\&\&\&\&\&\&\& {ST\Gamma^\bullet_{ST|Y}X}
	\arrow["{ST \uparrow^T}", from=9-1, to=9-5]
	\arrow["{S\mu^T \Gamma^\bullet_{S\mu^T|Y}X}", from=9-5, to=9-16]
	\arrow["{ ST \uparrow^T}", from=1-16, to=2-16]
	\arrow["{S\mu^T}", from=2-16, to=4-16]
	\arrow["{\mu^{K_Y}}", from=4-16, to=9-16]
	\arrow["{\mu^{STO_{TY}}}", from=1-1, to=9-1]
	\arrow["{STO_{TY}S\mu^TO_{Y}\Gamma^\circ_{S\mu^T_Y}X}", from=1-5, to=1-16]
	\arrow["{STO_{TY}ST\uparrow^T}", from=1-1, to=1-5]
	\arrow["{ST({{}^{ST}\uparrow})}", from=1-1, to=2-2]
	\arrow["{ST({{}^{ST}\uparrow})}", from=1-5, to=2-5]
	\arrow["{STSTO_{TY}\uparrow^T}", from=2-2, to=2-5]
	\arrow["{ST({}^{ST}\uparrow)}", from=1-16, to=2-10]
	\arrow["{STST\uparrow^T}", from=2-10, to=3-11]
	\arrow["{STT({}^{ST}\uparrow)}", from=2-16, to=3-15]
	\arrow["ST\delta", from=3-15, to=4-12]
	\arrow["{ST({}^{ST}\uparrow)}", from=4-16, to=5-15]
	\arrow["{S\mu^T}", from=3-15, to=5-15]
	\arrow["{STS\mu^T}", from=4-12, to=5-11]
	\arrow["S\delta", from=4-12, to=5-13]
	\arrow["{SS\mu^T}", from=5-13, to=6-14]
	\arrow["S\delta", from=5-15, to=6-14]
	\arrow["{SST\mu^T}", from=5-13, to=6-12]
	\arrow["S\delta", from=5-11, to=6-12]
	\arrow["{SS\mu^T}", from=6-12, to=7-13]
	\arrow["{SS\mu^T}", from=6-14, to=7-13]
	\arrow["{STS\mu^T}", from=3-11, to=5-11]
	\arrow["{STST({}^{T}\uparrow)}", from=2-5, to=2-7]
	\arrow["{STS\mu^T O_{TY} O_Y \Gamma^\circ_{S\mu^T_Y} X}", from=2-7, to=2-10]
	\arrow["{\mu^{ST}\mu^{O_{TY}}}", from=2-2, to=9-1]
	\arrow["{\mu^{ST}}", from=2-2, to=4-3]
	\arrow["{ST\mu^{O_{TY}}}", from=4-3, to=9-1]
	\arrow["{\mu^{ST}}", from=2-5, to=4-5]
	\arrow["{STO_{TY}\uparrow^T}", from=4-3, to=4-5]
	\arrow["{\mu^{ST}}", from=2-10, to=3-9]
	\arrow["{ST({}^{T}\uparrow)}", from=4-5, to=4-7]
	\arrow["{\mu^{ST}}", from=2-7, to=4-7]
	\arrow["{ST\uparrow^T}", from=3-9, to=4-10]
	\arrow["{\mu^{ST}}", from=3-11, to=4-10]
	\arrow["{S\mu^T}", from=4-10, to=8-10]
	\arrow["{ST\mu^{O_Y}}", from=8-10, to=9-16]
	\arrow["{S\mu^TO_{TY}O_{Y}\Gamma^\circ_{S\mu^T_Y}X}", from=4-7, to=3-9]
	\arrow["{STT\uparrow^T}", from=4-7, to=5-8]
	\arrow["{S\mu^T T O_Y O_Y \Gamma^\circ_{S\mu^T_Y} X}", from=5-8, to=4-10]
	\arrow["{ST\mu^T}", from=5-8, to=6-8]
	\arrow["{ST\uparrow^T}", from=4-5, to=5-5]
	\arrow["{STT({}^{T}\uparrow)}", from=5-5, to=6-6]
	\arrow["{ST\mu^T}", from=6-6, to=6-8]
	\arrow["{S\mu^T O_Y O_Y \Gamma^\circ_{S\mu^T_Y} X}", from=6-8, to=8-10]
	\arrow["{STT\mu^{O_{Y}}}", from=6-8, to=9-5]
	\arrow["{\mu^S}", from=7-13, to=8-10]
      \end{tikzcd}
    }
    \caption{Lemma~\ref{lem:coh:hcirctildeomega}, fourth diagram, first two morphisms}
    \label{fig:coh:hcirctildeomega:eta}
  \end{figure}
  where the bottom left heptagon commutes by Lemma~\ref{lem:option:mu} below.
\end{proof}

\begin{lemma}\label{lem:option:mu}
  For all objects $A$ and $Y$, the following diagram commutes.
  \begin{center}
    % file:///home/thirs/github/quiver/src/index.html?q=WzAsNyxbMCwyLCJPX3tUWX1BIl0sWzMsMiwiVE9fWUEiXSxbMCwwLCJPX3tUWX1PX3tUWX1BIl0sWzEsMCwiT197VFl9VE9fe1l9QSJdLFszLDEsIlRPX3tZfU9fe1l9QSJdLFsyLDAsIlRPX3tZfVRPX3tZfUEiXSxbMywwLCJUVE9fe1l9T197WX1BIl0sWzAsMSwiIFxcdXBhcnJvd15UIiwyXSxbMiwwLCJcXG11XntPX3tUWX19IiwyXSxbMiwzLCJPX3tUWX1cXHVwYXJyb3deVCJdLFszLDUsIlxcdXBhcnJvd15UIl0sWzUsNiwiVCh7fV57VH1cXHVwYXJyb3cpIl0sWzYsNCwiXFxtdV5UIl0sWzQsMSwiVFxcbXVee09fe1l9fSJdXQ==
    \begin{tikzcd}[ampersand replacement=\&]
      {O_{TY}O_{TY}A} \& {O_{TY}TO_{Y}A} \& {TO_{Y}TO_{Y}A} \& {TTO_{Y}O_{Y}A} \\
      \&\&\& {TO_{Y}O_{Y}A} \\
      {O_{TY}A} \&\&\& {TO_YA}
      \arrow["{ \uparrow^T}"', from=3-1, to=3-4]
      \arrow["{\mu^{O_{TY}}}"', from=1-1, to=3-1]
      \arrow["{O_{TY}\uparrow^T}", from=1-1, to=1-2]
      \arrow["{\uparrow^T}", from=1-2, to=1-3]
      \arrow["{T({}^{T}\uparrow)}", from=1-3, to=1-4]
      \arrow["{\mu^T}", from=1-4, to=2-4]
      \arrow["{T\mu^{O_{Y}}}", from=2-4, to=3-4]
    \end{tikzcd}
  \end{center}
\end{lemma}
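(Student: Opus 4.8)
The plan is to use that the domain $O_{TY}O_{TY}A = (A + TY) + TY$ is a coproduct of three summands --- the copy of $A$, the inner copy of $TY$, and the outer copy of $TY$ --- so that, by the universal property of the coproduct, it suffices to check the identity after precomposing with each of the three injections. First I would record the componentwise descriptions of the maps in play: the two option-monad multiplications $μ^{O_{TY}}_A = [\id,in_2]$ and $μ^{O_Y}_A = [\id,in_2]$; the transformation $↑^T$, which on a summand of the form $X$ acts by $T in_1 ∘ η^T$ and on the distinguished $TY$-summand by $T in_2$; and the distributive law $\liftleft{T}$ of Lemma~\ref{lem:refOmonad}, which acts on the $T(-)$-summand by $T in_1$ and on the adjoined-object summand by $T in_2 ∘ η^T$. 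The key point, already visible here, is that $↑^T$ and $\liftleft{T}$ insert the unit $η^T$ on \emph{opposite} summands.

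Then I would chase each of the three injections through the long top-and-right composite $Tμ^{O_Y} ∘ μ^T ∘ T(\liftleft{T}) ∘ ↑^T ∘ O_{TY}↑^T$ and compare with the short bottom-and-left composite $↑^T ∘ μ^{O_{TY}}$. In every case the long composite produces an inserted unit $η^T$ which, once the multiplication $μ^T$ is applied, is cancelled by one of the monad laws $μ^T ∘ η^T = \id$ or $μ^T ∘ Tη^T = \id$, the relevant injection being slid past $μ^T$ using naturality. The bookkeeping outcome is: on the $A$-summand both composites equal $T in_1 ∘ η^T_A$; on the inner $TY$-summand the long composite lands in $T((A+Y)+Y)$ at the inner $Y$, on the outer $TY$-summand at the outer $Y$, and in both cases the final $μ^{O_Y}$ collapses the doubled copy of $Y$, so that both composites equal $T in_2$.

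The main obstacle is purely combinatorial rather than conceptual: one must keep precise track, at each occurrence of $↑^T$ and of $\liftleft{T}$, of which summand carries the spurious $η^T$, and then match the surviving coproduct injection against the two-level structure of $O_YO_YA = (A+Y)+Y$ so that $μ^{O_Y}$ identifies the summands correctly. I expect the entire verification to collapse into a single termwise diagram whose cells each commute by naturality of $μ^T$ and $η^T$ or by a monad unit law, and the cleanest write-up is simply to display that diagram and note that every subregion commutes for one of these reasons.
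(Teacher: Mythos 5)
Your proposal is correct and follows essentially the same route as the paper's proof: the paper likewise observes that the domain is a coproduct and chases the three injections separately, with each termwise diagram closing by naturality of $η^T$ and $μ^T$ together with the monad unit laws $μ^T ∘ η^T = \id$ and $μ^T ∘ Tη^T = \id$. Your componentwise descriptions of $↑^T$, $\liftleft{T}$, and the option multiplications, and the resulting values $Tin_1 ∘ η^T_A$ and $Tin_2$ on the three summands, agree with the paper's three displayed chases.
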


\begin{proof}
  The domain is a coproduct, so we proceed termwise, by diagram chasing.
  The first term is chased as follows.
  \begin{center}
    \ajustedroit{
\begin{tikzcd}[ampersand replacement=\&]
	\&\& TA \& {\ \ \ \ \ \ \ \ \ \ \ \ \ \ \ \ \ \ \ \  \ \ \ \ \ \ \ \  } \& {T(A+Y)} \&\& {TT(A+Y)} \\
	{A+ TY + TY} \&\& {TA+TY+TY} \&\& {T(A+Y)+TY} \&\& {TT(A+Y)+TY} \&\& {T(T(A+Y)+Y)} \&\& {T(T(A+Y)+TY)} \&\& {TT(A+Y+Y)} \\
	\&\&\&\&\&\&\& {T(A+Y)} \&\& {\ \ \ \ \ \ \ \ \ \ \ \ \ \ \ \  } \\
	\&\&\&\&\& {T(A+Y)} \&\&\&\&\&\&\&\& {T(A+Y+Y)} \\
	\&\& {A+TY} \&\& {TA+TY} \&\&\&\&\&\&\&\&\& {T(A+Y)}
	\arrow["{A+[TY,TY]}", from=2-1, to=5-3]
	\arrow["{\mu^T}", from=2-13, to=4-14]
	\arrow["{T(A+[Y,Y])}", from=4-14, to=5-14]
	\arrow["{\eta^T + TY}", from=2-5, to=2-7]
	\arrow["{[Tin_1,Tin_2]}", from=2-7, to=2-9]
	\arrow["{\eta^T_A+ TY + TY}", from=2-1, to=2-3]
	\arrow["{[Tin_1,Tin_2]+TY}"{pos=0.7}, from=2-3, to=2-5]
	\arrow["{T(T(A+Y)+\eta^T_Y)}", from=2-9, to=2-11]
	\arrow["{T[Tin_1,Tin_2]}"', from=2-11, to=2-13]
	\arrow["{\eta^T_A + TY}", from=5-3, to=5-5]
	\arrow["{[Tin_1,Tin_2]}", from=5-5, to=5-14]
	\arrow["{TA+[TY,TY]}"', from=2-3, to=5-5]
	\arrow["{in_1}"', from=1-3, to=2-3]
	\arrow["{Tin_1}", from=1-3, to=1-5]
	\arrow["{in_1}"', from=1-5, to=2-5]
	\arrow["{\eta^T}", from=1-5, to=1-7]
	\arrow["{in_1}"', from=1-7, to=2-7]
	\arrow["{Tin_1}"{pos=0.7}, from=1-7, to=2-9]
	\arrow["{in_1}", from=1-3, to=5-5]
	\arrow["{Tin_1}", curve={height=6pt}, from=1-3, to=4-6]
	\arrow[Rightarrow, no head, from=4-6, to=5-14]
	\arrow["{TTin_1}", curve={height=-6pt}, from=1-7, to=2-13]
	\arrow["{\mu^T}"{pos=0.9}, curve={height=-6pt}, from=1-7, to=3-8]
	\arrow["{Tin_1}", from=3-8, to=4-14]
	\arrow[Rightarrow, no head, from=3-8, to=5-14]
	\arrow[curve={height=18pt}, Rightarrow, no head, from=1-5, to=3-8]
\end{tikzcd}
      }
  \end{center}
  The second term is chased as follows.
  \begin{center}
    \ajustedroit{
% file:///home/thirs/github/quiver/src/index.html?q=WzAsMTYsWzEyLDUsIlQoQStZKSJdLFsxMiw0LCJUKEErWStZKSJdLFsxMiwyLCJUVChBK1krWSkiXSxbMCwyLCJUQStUWStUWSJdLFszLDIsIlQoQStZKStUWSJdLFs2LDIsIlRUKEErWSkrVFkiXSxbOCwyLCJUKFQoQStZKStZKSJdLFsxMCwyLCJUKFQoQStZKStUWSkiXSxbMyw1LCJUQStUWSJdLFswLDAsIlRZIl0sWzMsMCwiVFkiXSxbNiwwLCJUVFkiXSxbMywxLCJUKEErWSkiXSxbNiwxLCJUVChBK1kpIl0sWzUsNCwiVFkiXSxbOSwzLCJcXCBcXCBcXCBcXCBcXCBcXCBcXCBcXCBcXCBcXCBcXCBcXCBcXCBcXCBcXCBcXCBcXCBcXCAgIl0sWzIsMSwiXFxtdV5UIl0sWzEsMCwiVChBK1tZLFldKSJdLFs0LDUsIlxcZXRhXlQgKyBUWSJdLFs1LDYsIltUaW5fMSxUaW5fMl0iLDJdLFszLDQsIltUaW5fMSxUaW5fMl0rVFkiLDAseyJsYWJlbF9wb3NpdGlvbiI6NjB9XSxbNiw3LCJUKFQoQStZKStcXGV0YV5UX1kpIiwyXSxbNywyLCJUW1Rpbl8xLFRpbl8yXSIsMl0sWzgsMCwiW1Rpbl8xLFRpbl8yXSJdLFszLDgsIlRBK1tUWSxUWV0iLDJdLFs5LDMsImluXzIiXSxbOSw4LCJpbl8yIiwwLHsibGFiZWxfcG9zaXRpb24iOjYwLCJjdXJ2ZSI6MX1dLFs5LDEwLCIiLDAseyJsZXZlbCI6Miwic3R5bGUiOnsiaGVhZCI6eyJuYW1lIjoibm9uZSJ9fX1dLFsxMCwxMSwiXFxldGFeVF97VFl9Il0sWzEwLDEyLCJUaW5fMiIsMl0sWzEyLDQsImluXzEiLDJdLFsxMSwxMywiVFRpbl8yIl0sWzEyLDEzLCJcXGV0YV5UIl0sWzEzLDUsImluXzEiXSxbMTMsNiwiVGluXzEiLDIseyJsYWJlbF9wb3NpdGlvbiI6NDB9XSxbMTMsMiwiVFRpbl8xIiwwLHsibGFiZWxfcG9zaXRpb24iOjIwfV0sWzExLDIsIlRUaW5fMiJdLFsxMSwxNCwiXFxtdV5UX1kiLDAseyJsYWJlbF9wb3NpdGlvbiI6NzAsImN1cnZlIjozfV0sWzE0LDEsIlRpbl8yIl0sWzE0LDAsIlRpbl8yIl0sWzEwLDE0LCIiLDAseyJsZXZlbCI6Miwic3R5bGUiOnsiaGVhZCI6eyJuYW1lIjoibm9uZSJ9fX1dXQ==
\begin{tikzcd}[ampersand replacement=\&]
	TY \&\&\& TY \&\&\& TTY \\
	\&\&\& {T(A+Y)} \&\&\& {TT(A+Y)} \\
	{TA+TY+TY} \&\&\& {T(A+Y)+TY} \&\&\& {TT(A+Y)+TY} \&\& {T(T(A+Y)+Y)} \&\& {T(T(A+Y)+TY)} \&\& {TT(A+Y+Y)} \\
	\&\&\&\&\&\&\&\&\& {\ \ \ \ \ \ \ \ \ \ \ \ \ \ \ \ \ \  } \\
	\&\&\&\&\& TY \&\&\&\&\&\&\& {T(A+Y+Y)} \\
	\&\&\& {TA+TY} \&\&\&\&\&\&\&\&\& {T(A+Y)}
	\arrow["{\mu^T}", from=3-13, to=5-13]
	\arrow["{T(A+[Y,Y])}", from=5-13, to=6-13]
	\arrow["{\eta^T + TY}", from=3-4, to=3-7]
	\arrow["{[Tin_1,Tin_2]}"', from=3-7, to=3-9]
	\arrow["{[Tin_1,Tin_2]+TY}"{pos=0.6}, from=3-1, to=3-4]
	\arrow["{T(T(A+Y)+\eta^T_Y)}"', from=3-9, to=3-11]
	\arrow["{T[Tin_1,Tin_2]}"', from=3-11, to=3-13]
	\arrow["{[Tin_1,Tin_2]}", from=6-4, to=6-13]
	\arrow["{TA+[TY,TY]}"', from=3-1, to=6-4]
	\arrow["{in_2}", from=1-1, to=3-1]
	\arrow["{in_2}"{pos=0.6}, curve={height=6pt}, from=1-1, to=6-4]
	\arrow[Rightarrow, no head, from=1-1, to=1-4]
	\arrow["{\eta^T_{TY}}", from=1-4, to=1-7]
	\arrow["{Tin_2}"', from=1-4, to=2-4]
	\arrow["{in_1}"', from=2-4, to=3-4]
	\arrow["{TTin_2}", from=1-7, to=2-7]
	\arrow["{\eta^T}", from=2-4, to=2-7]
	\arrow["{in_1}", from=2-7, to=3-7]
	\arrow["{Tin_1}"'{pos=0.4}, from=2-7, to=3-9]
	\arrow["{TTin_1}"{pos=0.2}, from=2-7, to=3-13]
	\arrow["{TTin_2}", from=1-7, to=3-13]
	\arrow["{\mu^T_Y}"{pos=0.7}, curve={height=18pt}, from=1-7, to=5-6]
	\arrow["{Tin_2}", from=5-6, to=5-13]
	\arrow["{Tin_2}", from=5-6, to=6-13]
	\arrow[Rightarrow, no head, from=1-4, to=5-6]
\end{tikzcd}
    }
  \end{center}
  Finally, the third term is chased as follows.
  \begin{center}
    \ajustedroit{
% file:///home/thirs/github/quiver/src/index.html?q=WzAsMTMsWzEyLDUsIlQoQStZKSJdLFsxMiw0LCJUKEErWStZKSJdLFsxMiwyLCJUVChBK1krWSkiXSxbMCwyLCJUQStUWStUWSJdLFszLDIsIlQoQStZKStUWSJdLFs1LDIsIlRUKEErWSkrVFkiXSxbNywyLCJUKFQoQStZKStZKSJdLFsxMCwyLCJUKFQoQStZKStUWSkiXSxbMiw1LCJUQStUWSJdLFswLDAsIlRZIl0sWzcsMCwiVFkiXSxbMTAsMCwiVFRZIl0sWzksMywiVFkiXSxbMiwxLCJcXG11XlQiXSxbMSwwLCJUKEErW1ksWV0pIl0sWzQsNSwiXFxldGFeVCArIFRZIl0sWzUsNiwiW1Rpbl8xLFRpbl8yXSJdLFszLDQsIltUaW5fMSxUaW5fMl0rVFkiXSxbNiw3LCJUKFQoQStZKStcXGV0YV5UX1kpIl0sWzcsMiwiVFtUaW5fMSxUaW5fMl0iXSxbOCwwLCJbVGluXzEsVGluXzJdIl0sWzMsOCwiVEErW1RZLFRZXSIsMl0sWzksMywiaW5fMyIsMl0sWzksOCwiaW5fMiIsMCx7ImxhYmVsX3Bvc2l0aW9uIjoyMCwiY3VydmUiOi0xfV0sWzksMTAsIiIsMCx7ImxldmVsIjoyLCJzdHlsZSI6eyJoZWFkIjp7Im5hbWUiOiJub25lIn19fV0sWzEwLDYsIlRpbl8yIiwyXSxbMTAsMTEsIlRcXGV0YV5UX1kiXSxbMTEsNywiVGluXzIiXSxbMTEsMiwiVFRpbl8zIl0sWzExLDEyLCJcXG11XlRfWSIsMix7ImxhYmVsX3Bvc2l0aW9uIjo0MCwiY3VydmUiOjF9XSxbMTIsMSwiVGluXzMiXSxbMTIsMCwiVGluXzIiLDJdLFsxMCwxMiwiIiwwLHsiY3VydmUiOjIsImxldmVsIjoyLCJzdHlsZSI6eyJoZWFkIjp7Im5hbWUiOiJub25lIn19fV1d
\begin{tikzcd}[ampersand replacement=\&]
	TY \&\&\&\&\&\&\& TY \&\&\& TTY \\
	\\
	{TA+TY+TY} \&\&\& {T(A+Y)+TY} \&\& {TT(A+Y)+TY} \&\& {T(T(A+Y)+Y)} \&\&\& {T(T(A+Y)+TY)} \&\& {TT(A+Y+Y)} \\
	\&\&\&\&\&\&\&\&\& TY \\
	\&\&\&\&\&\&\&\&\&\&\&\& {T(A+Y+Y)} \\
	\&\& {TA+TY} \&\&\&\&\&\&\&\&\&\& {T(A+Y)}
	\arrow["{\mu^T}", from=3-13, to=5-13]
	\arrow["{T(A+[Y,Y])}", from=5-13, to=6-13]
	\arrow["{\eta^T + TY}", from=3-4, to=3-6]
	\arrow["{[Tin_1,Tin_2]}", from=3-6, to=3-8]
	\arrow["{[Tin_1,Tin_2]+TY}", from=3-1, to=3-4]
	\arrow["{T(T(A+Y)+\eta^T_Y)}", from=3-8, to=3-11]
	\arrow["{T[Tin_1,Tin_2]}", from=3-11, to=3-13]
	\arrow["{[Tin_1,Tin_2]}", from=6-3, to=6-13]
	\arrow["{TA+[TY,TY]}"', from=3-1, to=6-3]
	\arrow["{in_3}"', from=1-1, to=3-1]
	\arrow["{in_2}"{pos=0.2}, curve={height=-6pt}, from=1-1, to=6-3]
	\arrow[Rightarrow, no head, from=1-1, to=1-8]
	\arrow["{Tin_2}"', from=1-8, to=3-8]
	\arrow["{T\eta^T_Y}", from=1-8, to=1-11]
	\arrow["{Tin_2}", from=1-11, to=3-11]
	\arrow["{TTin_3}", from=1-11, to=3-13]
	\arrow["{\mu^T_Y}"'{pos=0.4}, curve={height=6pt}, from=1-11, to=4-10]
	\arrow["{Tin_3}", from=4-10, to=5-13]
	\arrow["{Tin_2}"', from=4-10, to=6-13]
	\arrow[curve={height=12pt}, Rightarrow, no head, from=1-8, to=4-10]
\end{tikzcd}      
      }
  \end{center}
\end{proof}

\subsection{The family $\widetilde{\widetilde{h^∘}^ω}$}
Transposing back, we obtain a family
$$\widetilde{\widetilde{h^∘}^ω}_{X,Y}∶ Γ^\pt_{STY}SX →  ST O_Y Γ^∘_{STY} X$$
of morphisms.

By extranaturality of $\widetilde{h^∘}^ω$ in $Y$, naturality in $X$,
and Corollary~\ref{cor:transpose}, we get:
\begin{lemma}\label{lem:naturalomega}
  The family $\widetilde{\widetilde{h^∘}^ω}$ is natural in both
  variables.
\end{lemma}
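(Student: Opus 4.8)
The plan is to read off both naturalities from the corresponding coherence properties of $\widetilde{h^∘}^ω$ by transposing across the parametrised adjunction $Γ^\pt_{STY} ⊣ ℸ^×_{STY}$ of Lemma~\ref{lem:circ:times}. Adopting Notation~\ref{not:LKR}, so that $L_Y = Γ^\pt_{STY}$, $R_Y = ℸ^×_{STY}$, and $K_Y = ST O_Y$, the family $\widetilde{\widetilde{h^∘}^ω}_{X,Y}∶ L_Y SX → K_Y L_Y X$ is by definition the transpose of $\widetilde{h^∘}^ω_{X,Y}∶ SX → R_Y K_Y L_Y X$. Since Proposition~\ref{prop:paramadj} furnishes a bijection $𝐂(L_Y A, B) ≅ 𝐂(A, R_Y B)$ natural in all three variables, I would simply transport the two known properties of $\widetilde{h^∘}^ω$ through this bijection.

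Naturality in $X$ is the easy half. For fixed $Y$ the adjunction $L_Y ⊣ R_Y$ is ordinary, and $\widetilde{h^∘}^ω_{-,Y}$ is natural in $X$ by construction; transposition of a natural transformation along a fixed adjunction is again natural. Concretely, for $f∶ X → X'$ one transposes the naturality square $R_Y(K_Y L_Y f) ∘ \widetilde{h^∘}^ω_{X,Y} = \widetilde{h^∘}^ω_{X',Y} ∘ Sf$, using naturality of the bijection in its two object arguments, to obtain $K_Y L_Y f ∘ \widetilde{\widetilde{h^∘}^ω}_{X,Y} = \widetilde{\widetilde{h^∘}^ω}_{X',Y} ∘ L_Y Sf$, which is the desired square.

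Naturality in $Y$ is where the real content lies, and it is exactly what Corollary~\ref{cor:transpose} was designed to handle: the contravariant occurrence of $Y$ in $R_Y = ℸ^×_{STY}$ forces $\widetilde{h^∘}^ω$ to be merely \emph{extranatural} in $Y$, but absorbing $L_Y$ into the domain by transposing turns this dinaturality into honest naturality. I would invoke Corollary~\ref{cor:transpose} with $A = SX$, $f∶ Y → Z$, $B = K_Y L_Y X$, $C = K_Z L_Z X$, $u = \widetilde{h^∘}^ω_{X,Y}$, $v = \widetilde{h^∘}^ω_{X,Z}$, and $g = ST O_f Γ^∘_{STf} X$ the covariant action of the codomain on $f$. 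Its left-hand square then reads $R_Y g ∘ u = R_f C ∘ v$, which is precisely the extranaturality square of Lemma~\ref{lem:hcirctildeomega:extranatural}, hence commutes; the corollary therefore gives commutation of the right-hand square $g ∘ \widetilde{\widetilde{h^∘}^ω}_{X,Y} = \widetilde{\widetilde{h^∘}^ω}_{X,Z} ∘ L_f SX$, which is exactly naturality in $Y$. The only mildly error-prone point — and hence the main obstacle — is pinning down the functorial-action dictionary: one must check that the covariant action $ST O_f Γ^∘_{STf} X$ of the codomain is the morphism $g$ fed to the corollary, while the contravariant action $ℸ^×_{STf}$ is the $R_f$ that transposition cancels. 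Once this identification is made, both naturalities follow with no further computation.
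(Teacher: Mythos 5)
Your proposal is correct and takes essentially the same route as the paper, whose proof consists precisely of invoking naturality of $\widetilde{h^\circ}^\omega$ in $X$, its extranaturality in $Y$ (Lemma~\ref{lem:hcirctildeomega:extranatural}), and Corollary~\ref{cor:transpose}, with the same identifications $u = \widetilde{h^\circ}^\omega_{X,Y}$, $v = \widetilde{h^\circ}^\omega_{X,Z}$, $g = K_f L_f X$ that you spell out. Your explicit check of the functorial-action dictionary is the only step the paper leaves implicit.
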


\begin{lemma}\label{lem:coh:hcirctildeomegatilde}
  The family $\widetilde{\widetilde{h^∘}^ω}$ satisfies the following
  coherence laws.
\begin{mathpar}
  \diag{%
    S(X)  \& ST(X) \\
    Γ^\pt_{STY}(S(X)) \& ST (Γ^•_{ST|Y}(X)) %
  }{%
    (m-1-1) edge[labela={S(ηᵀ_X)}] (m-1-2) %
    edge[labell={α^{Γ^\pt}_{SX,STY}}] (m-2-1) %
    (m-2-1) edge[labelb={\widetilde{\widetilde{h^∘}^ω}_{X,Y}}] (m-2-2) %
    (m-1-2) edge[labelr={ST  α^{Γ^•_{ST|}}_{X,Y}}] (m-2-2) %
  } \and
  \diag(.6,1.5){%
    Γ^\pt_{STSY}SX \&  Γ^\pt_{SSTY}SX    \&     Γ^\pt_{STY}SX \\
    ST Γ^•_{ST|SY}X \& STS Γ^•_{STS|Y}X \& ST Γ^•_{ST|Y}X %
  }{%
    (m-1-1)
    edge[labell={\widetilde{\widetilde{h^∘}^ω}_{X,SY}}] (m-2-1) %
    (m-2-1) edge[labelb={ST ↑^{S}_{X,Y}}] (m-2-2) %
    (m-1-1) edge[labela={Γ^\pt_{Sδ_Y}SX}] (m-1-2) %
    (m-1-2) edge[labela={Γ^\pt_{μ^S_{TY}}SX}] (m-1-3) %
    (m-2-2) edge[loinb={(Sδ;μ^S T) Γ^•_{Sδ ; μ^S T | Y}X}] (m-2-3) %
    (m-1-3) edge[labelr={\widetilde{\widetilde{h^∘}^ω}_{X,Y}}] (m-2-3) %
  } \and
  \diag{%
    Γ^\pt_{STY}X \& ST Γ^\pt_{STY}X \\
    Γ^\pt_{STY}SX \& ST Γ^•_{ST|Y} X \\
  }{%
    (m-1-1) edge[labela={η^{ST}_{Γ^\pt_{STY}X}}] (m-1-2) %
    edge[labell={Γ^\pt_{STY} η^S_X}] (m-2-1) %
    (m-2-1) edge[labelb={\widetilde{\widetilde{h^∘}^ω}}] (m-2-2) %
    (m-1-2) edge[labelr={ST in₁}] (m-2-2) %
  } \and
% file:///home/thirs/github/quiver/src/index.html?q=WzAsNSxbMCwyLCJcXEdhbW1hXlxcY2lyY197U1RZfVNYIl0sWzAsMCwiXFxHYW1tYV5cXGNpcmNfe1NUVFl9U1giXSxbNCwwLCJTVFxcR2FtbWFeXFxidWxsZXRfe1NUfFRZfVgiXSxbNCwxLCJTVFRcXEdhbW1hXlxcYnVsbGV0X3tTVFR8WX1YIl0sWzQsMiwiU1RcXEdhbW1hXlxcYnVsbGV0X3tTVHxZfVgiXSxbMSwwLCJcXEdhbW1hXlxcY2lyY197U1xcbXVeVF97WX19IFNYIiwyXSxbMSwyLCJcXHdpZGV0aWxkZXtcXHdpZGV0aWxkZXtoXlxcY2lyY31eXFxvbWVnYX1fe1gsVFl9Il0sWzIsMywiU1QgXFx1cGFycm93XlQiXSxbMyw0LCJTXFxtdV5UIFxcR2FtbWFeXFxidWxsZXRfe1NcXG11XlR8WX1YIl0sWzAsNCwiXFx3aWRldGlsZGV7XFx3aWRldGlsZGV7aF5cXGNpcmN9Xlxcb21lZ2F9X3tYLFl9IiwyXV0=
\begin{tikzcd}[ampersand replacement=\&]
	{\Gamma^\circ_{STTY}SX} \&\&\&\& {ST\Gamma^\bullet_{ST|TY}X} \\
	\&\&\&\& {STT\Gamma^\bullet_{STT|Y}X} \\
	{\Gamma^\circ_{STY}SX} \&\&\&\& {ST\Gamma^\bullet_{ST|Y}X}
	\arrow["{\Gamma^\circ_{S\mu^T_{Y}} SX}"', from=1-1, to=3-1]
	\arrow["{\widetilde{\widetilde{h^\circ}^\omega}_{X,TY}}", from=1-1, to=1-5]
	\arrow["{ST \uparrow^T}", from=1-5, to=2-5]
	\arrow["{S\mu^T \Gamma^\bullet_{S\mu^T|Y}X}", from=2-5, to=3-5]
	\arrow["{\widetilde{\widetilde{h^\circ}^\omega}_{X,Y}}"', from=3-1, to=3-5]
\end{tikzcd}  
  \and
  \diag(.6,2){%
    Γ^\pt_{STY} SSX \& ST O_Y Γ^\pt_{STY} SX
    \& ST O_Y ST O_Y Γ^\pt_{STY} X \\
    Γ^\pt_{STY} SX \& \& ST O_Y Γ^\pt_{STY} X %
  }{%
    (m-1-1) edge[labela={\widetilde{\widetilde{h^∘}^ω}_{SX,Y}}]
    (m-1-2) %
    edge[labell={Γ^\pt_{STY} μ^S_X}] (m-2-1) %
    (m-2-1) edge[labelb={\widetilde{\widetilde{h^∘}^ω}_{X,Y}}] (m-2-3) %
    (m-1-2) edge[labela={ST O_Y
      \widetilde{\widetilde{h^∘}^ω}_{X,Y}}] (m-1-3) %
    (m-1-3) edge[labelr={μ^{ST O_Y}_{Γ^\pt_{STY} X}}] (m-2-3) %
  }
\and
% file:///home/thirs/github/quiver/src/index.html?q=WzAsNCxbMCwwLCJcXEdhbW1hX1lcXFNpZ21hIFgiXSxbMSwwLCJcXEdhbW1hX3tTVFl9U1giXSxbMiwwLCJcXEdhbW1hXlxcY2lyY197U1RZfVNYIl0sWzEsMSwiU1QgXFxHYW1tYV5cXGJ1bGxldF97U1R8WX1YIl0sWzAsMSwiXFxHYW1tYV97XFxldGFee1NUfV9ZfSBcXGV0YV97XFxTaWdtYSxYfSJdLFsxLDIsImluXzEiXSxbMiwzLCJcXHdpZGV0aWxkZXtcXHdpZGV0aWxkZXtoXlxcY2lyY31eXFxvbWVnYX1fe1gsWX0iLDAseyJsYWJlbF9wb3NpdGlvbiI6MzB9XSxbMCwzLCJoX3tYLFl9IiwyLHsibGFiZWxfcG9zaXRpb24iOjQwfV1d
\begin{tikzcd}[ampersand replacement=\&]
	{\Gamma_Y\Sigma X} \& {\Gamma_{STY}SX} \& {\Gamma^\circ_{STY}SX} \\
	\& {ST \Gamma^\bullet_{ST|Y}X}
	\arrow["{\Gamma_{\eta^{ST}_Y} \eta_{\Sigma,X}}", from=1-1, to=1-2]
	\arrow["{in_1}", from=1-2, to=1-3]
	\arrow["{\widetilde{\widetilde{h^\circ}^\omega}_{X,Y}}"{pos=0.3}, from=1-3, to=2-2]
	\arrow["{h_{X,Y}}"'{pos=0.4}, from=1-1, to=2-2]
\end{tikzcd}
\end{mathpar}
\end{lemma}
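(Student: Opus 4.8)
The plan is to derive each of the six laws as the adjoint transpose, across the parametrised adjunction $Γ^\pt_{STY} ⊣ ℸ^×_{STY}$ of Lemma~\ref{lem:circ:times}, of a corresponding fact about $\widetilde{h^∘}^ω$. Recall that by definition $\widetilde{\widetilde{h^∘}^ω}_{X,Y} = ε ∘ Γ^\pt_{STY}(\widetilde{h^∘}^ω_{X,Y})$, where $ε$ is the counit of this adjunction at the object $ST O_Y Γ^\pt_{STY}X$ with parameter $STY$. Since transposition is natural in all variables (Proposition~\ref{prop:paramadj}, Corollary~\ref{cor:transpose}), a commuting diagram on the $ℸ^×$-side transposes termwise to a commuting diagram on the $Γ^\pt$-side, the reindexing morphisms in the parameter $Y$ (here $μ^S_{TY}$, $Sδ_Y$ and $Sμ^T_Y$) being converted from actions of $ℸ^×_{(-)}$ into actions of $Γ^\pt_{(-)}$ by Corollary~\ref{cor:transpose}.

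First I would dispatch the laws that do not involve the multiplication of $S$. Laws~(1), (2) and~(4) are the transposes of the first, second and fourth laws of Lemma~\ref{lem:coh:hcirctildeomega}. For law~(1) one uses that the second projection $π₂∶ ℸ^×_{STY} → \id$ and the main injection $α^{Γ^\pt}∶ \id → Γ^\pt_{STY}$ are mates, via the identity $π₂ = ε ∘ α^{Γ^\pt}_{ℸ^×_{STY}}$ coming from the explicit description of the counit already established in the proof of Lemma~\ref{lem:coh:hcirctildeomega}; together with naturality of $α^{Γ^\pt}$ this turns the $π₂$ of that law into the $α^{Γ^\pt}$ of law~(1). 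The shifting pentagon~(2) and the $μ^T$-square~(4) transpose verbatim by Corollary~\ref{cor:transpose}, their inner morphisms being of the form $ℸ^×(-)$ or reindexings. The base-case law~(6) is the transpose of the third law of Lemma~\ref{lem:coh:hcirctildeomega}: there the first projection $π₁∶ ℸ^×_{STY} → ℸ_{STY}$ and the reindexing $ℸ_{η^{ST}_Y}$ appear, and, exactly as in the proof of Lemma~\ref{lem:coh:hcirc}, these transpose — now across the auxiliary adjunction $Γ_Y ⊣ ℸ_Y$, with the mate $π₁ ↔ in₁$ bridging the two adjunctions — to the injection $in₁$ and the reindexing $Γ_{η^{ST}_Y}$, recovering $h_{X,Y}$.

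The remaining laws~(3) and~(5) I would obtain from the fact, built into its construction, that $\widetilde{h^∘}^ω_{X,Y}∶ SX → ℸ^×_{STY} ST O_Y Γ^\pt_{STY}X$ is a morphism of monads from $S$ to the composite monad $M = ℸ^×_{STY} ST O_Y Γ^\pt_{STY}$; writing $L,K,R$ as in Notation~\ref{not:LKR}, we have $M = RKL$ with $L ⊣ R$ and $K = ST O_Y$ a monad, and $\widetilde{\widetilde{h^∘}^ω}$ is precisely the mate of $\widetilde{h^∘}^ω$. The unit law~(3) is then the transpose of the unit-compatibility $\widetilde{h^∘}^ω ∘ η^S = η^M$: by the triangle identities the transpose of $η^M$ is the unit $η^K_{Γ^\pt_{STY}X}$, which by naturality of $η^{ST}$ equals $ST(in₁) ∘ η^{ST}_{Γ^\pt_{STY}X}$, the top-right leg of law~(3). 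The multiplication law~(5) is the transpose of the multiplication-compatibility $μ^M_X ∘ M(\widetilde{h^∘}^ω_X) ∘ \widetilde{h^∘}^ω_{SX} = \widetilde{h^∘}^ω_X ∘ μ^S_X$: the right-hand side transposes at once to $\widetilde{\widetilde{h^∘}^ω}_{X,Y} ∘ Γ^\pt_{STY}(μ^S_X)$, while unfolding $μ^M = Rμ^K L ∘ RKεKL$ and repeatedly applying Corollary~\ref{cor:transpose} together with the triangle identities collapses the left-hand side to $μ^K_{Γ^\pt_{STY}X} ∘ K(\widetilde{\widetilde{h^∘}^ω}_{X,Y}) ∘ \widetilde{\widetilde{h^∘}^ω}_{SX,Y}$.

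The hard part will be law~(5). Transposing $μ^M$ is delicate precisely because the multiplication of $M = RKL$ is assembled from the counit $ε$ of $Γ^\pt_{STY} ⊣ ℸ^×_{STY}$ and from $μ^K$, so the transposition must thread $ε$ through the whole composite while keeping the parameter reindexings in $Y$ aligned. Concretely this amounts to re-reading on the left-adjoint side the chases already carried out for Lemma~\ref{lem:coh:hcirctildeomega} — in particular the intermediate $ST O_{SY}$-algebra structure of Lemma~\ref{lem:deux:KSY:alg:struct} and the coherences of Figures~\ref{fig:coh:hcirctildeomega:i}, \ref{fig:coh:hcirctildeomega:bottom:lifts} and~\ref{fig:coh:hcirctildeomega:eta}. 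No genuinely new identity is needed beyond a faithful dualisation, but the bookkeeping required to keep the counit and the reindexings coherent is substantial.
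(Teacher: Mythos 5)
Your proposal is correct and takes essentially the same route as the paper: laws (1), (2), (4) and (6) are obtained by transposing the corresponding laws of Lemma~\ref{lem:coh:hcirctildeomega} across the adjunction of Lemma~\ref{lem:circ:times} using Corollary~\ref{cor:transpose}, while the unit law (3) and multiplication law (5) are exactly the transposes of the unit- and multiplication-compatibility of $\widetilde{h^\circ}^\omega$, which is a monad morphism by construction. Your only miscalibration is the final paragraph: since the monad morphism property comes for free from the universal property of the free monad $S = \Sigma^*$ (and not from the chases behind Lemma~\ref{lem:coh:hcirctildeomega}), law (5) requires no re-reading of those figures --- in the paper its transposed diagram \emph{is} the multiplication-compatibility square, verbatim.
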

\begin{proof}
  By Lemma~\ref{lem:coh:hcirctildeomega},
  Corollary~\ref{cor:transpose}, and the fact that
  $\widetilde{h^∘}^ω_{X,Y}$ is a monad morphism by contruction.
  E.g., the third law holds iff the transposed morphisms coincide. This gives the following diagram,
  \begin{center}
    % file:///home/thirs/github/quiver/src/index.html?q=WzAsNSxbMCwwLCJYIl0sWzAsMSwiU1giXSxbMywxLCJSX1kgS19ZIExfWSBYIl0sWzEsMCwiUl9ZIExfWSBYIl0sWzMsMCwiUl9ZIFNUIExfWSBYIl0sWzAsMSwiXFxldGFeU19YIl0sWzEsMiwiXFx3aWRldGlsZGV7aF5cXGNpcmN9Xlxcb21lZ2Ffe1gsWX0iXSxbMCwzLCJcXGV0YV9ZIFgiXSxbMyw0LCJSX1kgXFxldGFee1NUfV97TF9ZIFh9Il0sWzQsMiwiUl9ZIFNUIGluXzEiXV0=
\begin{tikzcd}[ampersand replacement=\&]
	X \& {R_Y L_Y X} \&\& {R_Y ST L_Y X} \\
	SX \&\&\& {R_Y K_Y L_Y X}
	\arrow["{\eta^S_X}", from=1-1, to=2-1]
	\arrow["{\widetilde{h^\circ}^\omega_{X,Y}}", from=2-1, to=2-4]
	\arrow["{\eta_Y X}", from=1-1, to=1-2]
	\arrow["{R_Y \eta^{ST}_{L_Y X}}", from=1-2, to=1-4]
	\arrow["{R_Y ST in_1}", from=1-4, to=2-4]
\end{tikzcd}
  \end{center}
  which is precisely commutation of $\widetilde{h^∘}^ω$ with unit.
  Similarly, the penultimate law holds iff the transposed morphisms
  coincide. This gives the following diagram,
  \begin{center}
    % file:///home/thirs/github/quiver/src/index.html?q=WzAsNixbMCwwLCJTU1giXSxbMCwyLCJTWCJdLFs1LDIsIlJfWSBLX1kgTF9ZIFgiXSxbMiwwLCJSX1kgS19ZIExfWSBTWCJdLFs1LDAsIlJfWSBLX1kgTF9ZIFJfWSBLX1kgTF9ZIFgiXSxbNSwxLCJSX1kgS19ZIEtfWSBMX1kgWCJdLFswLDEsIlxcbXVeU19YIiwyXSxbMSwyLCJcXHdpZGV0aWxkZXtoXlxcY2lyY31eXFxvbWVnYV97WCxZfSIsMl0sWzAsMywiXFx3aWRldGlsZGV7aF5cXGNpcmN9Xlxcb21lZ2Ffe1NYLFl9Il0sWzMsNCwiUl9ZIEtfWSBMX1kgXFx3aWRldGlsZGV7aF5cXGNpcmN9Xlxcb21lZ2Ffe1gsWX0iXSxbNCw1LCJSX1kgS19ZIFxcZXBzaWxvbl9ZIl0sWzUsMiwiUl9ZIFxcbXVee0tfWX0iXV0=
\begin{tikzcd}[ampersand replacement=\&]
	SSX \&\& {R_Y K_Y L_Y SX} \&\&\& {R_Y K_Y L_Y R_Y K_Y L_Y X} \\
	\&\&\&\&\& {R_Y K_Y K_Y L_Y X} \\
	SX \&\&\&\&\& {R_Y K_Y L_Y X}
	\arrow["{\mu^S_X}"', from=1-1, to=3-1]
	\arrow["{\widetilde{h^\circ}^\omega_{X,Y}}"', from=3-1, to=3-6]
	\arrow["{\widetilde{h^\circ}^\omega_{SX,Y}}", from=1-1, to=1-3]
	\arrow["{R_Y K_Y L_Y \widetilde{h^\circ}^\omega_{X,Y}}", from=1-3, to=1-6]
	\arrow["{R_Y K_Y \epsilon_Y}", from=1-6, to=2-6]
	\arrow["{R_Y \mu^{K_Y}}", from=2-6, to=3-6]
\end{tikzcd}
  \end{center}
  which precisely commutation of $\widetilde{h^∘}^ω$ with
  multiplication.
\end{proof}

\subsection{The family $h^{•ω}$}

\begin{definition}\label{def:hbulletomega}
  Let $h^{•ω}_{X,Y}∶ Γ_{SY} SX → ST Γ^•_{ST|Y} X$ denote
  the composite
  $$Γ_{SY} SX \xto{Γ_{Sηᵀ_Y}SX} Γ_{STY}SX \xto{in₁} 
  Γ^\pt_{STY}SX
  \xto{\widetilde{\widetilde{h^∘}^ω}_{X,Y}} ST Γ^•_{ST|Y} X.$$
\end{definition}

\begin{lemma}\label{lem:hbulletomega:piml}
  The natural transformation
  $h^{•ω}_{X,Y}∶ Γ^•_{SY} SX → ST Γ^•_{ST|Y} X$ is natural in both
  variables.
\end{lemma}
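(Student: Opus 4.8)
The plan is to read off from Definition~\ref{def:hbulletomega} that $h^{•ω}$ is a vertical composite of three families, and to obtain naturality in both variables from naturality of each factor, using that a composite of natural transformations is again natural. Concretely, I would regard $h^{•ω}$ as a candidate natural transformation between the two bifunctors $𝐂² → 𝐂$ sending $(X,Y)$ to $Γ_{SY}SX$ and to $ST Γ^•_{ST|Y}X$ respectively, presented as the composite $Γ_{SY}SX \xto{Γ_{Sηᵀ_Y}SX} Γ_{STY}SX \xto{in₁} Γ^\pt_{STY}SX \xto{\widetilde{\widetilde{h^∘}^ω}_{X,Y}} ST Γ^•_{ST|Y}X$, and then check each factor separately.

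The substantive input is the third factor: $\widetilde{\widetilde{h^∘}^ω}$ is natural in both $X$ and $Y$ by Lemma~\ref{lem:naturalomega}, so nothing further is needed there. The middle factor $in₁$ is the coproduct injection $Γ_{STY}SX ↪ Γ^\pt_{STY}SX = Γ_{STY}SX + SX$; it is natural since both the injected summand $Γ_{STY}SX$ and the complementary summand $SX$ are functorial in $X$, the latter being constant in $Y$. For the first factor, I would observe that $Γ_{Sηᵀ_Y}SX$ is precisely $Γ(SX,-)$ applied to the component at $Y$ of the natural transformation $Sηᵀ∶ S → ST$; hence naturality in $X$ is immediate from bifunctoriality of $Γ$ and functoriality of $S$, while naturality in $Y$ for a map $g∶ Y → Y'$ reduces, after applying $Γ(SX,-)$, to the identity $STg ∘ Sηᵀ_Y = Sηᵀ_{Y'} ∘ Sg$, i.e.\ to $S$ applied to naturality of the unit $ηᵀ$.

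Composing the three naturality statements then yields that $h^{•ω}$ is natural in both variables, which is the claim. I do not expect any genuine obstacle here: the real work has already been discharged in Lemma~\ref{lem:naturalomega} (which itself rests on the extranaturality of $\widetilde{h^∘}^ω$ in $Y$ together with Corollary~\ref{cor:transpose}), and the remaining verifications are purely formal. The one point deserving a moment's care is the $Y$-naturality of the first factor, where every occurrence of $Y$ must be handled consistently — covariantly in the domain through $S$, and covariantly in the codomain through both the $STY$ summand and the bare $Y$ summand of $Γ^•_{ST|Y}X$ — but since all these occurrences are covariant, the verification collapses to functoriality of $S$ and naturality of $ηᵀ$, with none of the contravariant subtlety that arose for $\widetilde{h^∘}^ω$ before transposition.
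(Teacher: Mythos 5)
Your argument is correct and is essentially the paper's proof: the paper disposes of this lemma with the single sentence that it ``follows straightforwardly from Lemma~\ref{lem:naturalomega}'', leaving implicit exactly the routine checks you spell out (naturality of the coproduct injection and of $Γ_{Sηᵀ_Y}SX$ via functoriality of $S$ and naturality of $ηᵀ$). Your identification of Lemma~\ref{lem:naturalomega} as the sole substantive input, with the rest being formal composition of natural transformations, matches the intended reading precisely.
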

\begin{proof}
  This follows straightforwardly from Lemma~\ref{lem:naturalomega}.
\end{proof}

\begin{lemma}\label{lem:coh:hbulletomegamu}
  The natural transformation $h^{•ω}$ makes all diagrams below commute,
  for all $X,Y ∈ 𝐂$.
  \begin{center}
% file:///home/thirs/github/quiver/src/qgammamu.html?q=WzAsNixbMCwwLCJcXEdhbW1hX3tTU1l9IFMgWCJdLFswLDEsIlNUIFxcR2FtbWFeXFxidWxsZXRfe1NUfFNZfVgiXSxbMSwxLCJTVFNcXEdhbW1hXlxcYnVsbGV0X3tTVFN8WX1YIl0sWzMsMSwiU1NUIFxcR2FtbWFeXFxidWxsZXRfe1NTVHxZfVgiXSxbNSwxLCJTVCBcXEdhbW1hXlxcYnVsbGV0X3tTVHxZfVgiXSxbNSwwLCJcXEdhbW1hX3tTWX1TWCJdLFswLDEsImhee1xcYnVsbGV0XFxvbWVnYX1fe1gsU1l9IiwyXSxbMSwyLCJTVFxcdXBhcnJvd15TIiwyXSxbMiwzLCJTXFxkZWx0YSBcXEdhbW1hXlxcYnVsbGV0X3tTXFxkZWx0YXxZfVgiLDJdLFszLDQsIlxcbXVeUyBUIFxcR2FtbWFeXFxidWxsZXRfe1xcbXVeUyBUfFl9WCIsMl0sWzAsNSwiXFxHYW1tYV97XFxtdV5TIFl9U1giXSxbNSw0LCJoXntcXGJ1bGxldCBcXG9tZWdhfV97WCxZfSJdXQ==
\begin{tikzcd}[ampersand replacement=\&]
	{\Gamma_{SSY} S X} \&\&\&\&\& {\Gamma_{SY}SX} \\
	{ST \Gamma^\bullet_{ST|SY}X} \& {STS\Gamma^\bullet_{STS|Y}X} \&\& {SST \Gamma^\bullet_{SST|Y}X} \&\& {ST \Gamma^\bullet_{ST|Y}X}
	\arrow["{h^{\bullet\omega}_{X,SY}}"', from=1-1, to=2-1]
	\arrow["{ST\uparrow^S}"', from=2-1, to=2-2]
	\arrow["{S\delta \Gamma^\bullet_{S\delta|Y}X}"', from=2-2, to=2-4]
	\arrow["{\mu^S T \Gamma^\bullet_{\mu^S T|Y}X}"', from=2-4, to=2-6]
	\arrow["{\Gamma_{\mu^S Y}SX}", from=1-1, to=1-6]
	\arrow["{h^{\bullet \omega}_{X,Y}}", from=1-6, to=2-6]
      \end{tikzcd}

% file:///home/thirs/github/quiver/src/qgammaSmu.html?q=WzAsMTAsWzAsMCwiXFxHYW1tYV97U1l9U1NYIl0sWzMsMCwiU1QoU1RcXEdhbW1hXlxcYnVsbGV0X3tTVHxUWX1YKyBTVFggKyBTVFkpIl0sWzEsMCwiU1QoXFxHYW1tYV97U1RZfVNYICsgU1ggKyBZKSJdLFszLDEsIlNUKFNUKFxcR2FtbWFfe1NUVFl9WCArIFggKyBUWSkrU1RYICsgU1RZKSJdLFszLDIsIlNUKFNUVChcXEdhbW1hX3tTVFRZfVggKyBYICsgWSkrU1RYICsgU1RZKSJdLFszLDMsIlNUKFNUKFxcR2FtbWFfe1NUWX1YICsgWCArIFkpK1NUWCArIFNUWSkiXSxbMyw0LCJTVFNUKFxcR2FtbWFfe1NUWX1YICsgWCArIFkgKyBYICsgWSkiXSxbMyw1LCJTVChcXEdhbW1hX3tTVFl9WCArIFggKyBZKSJdLFszLDYsIlNUXFxHYW1tYV5cXGJ1bGxldF97U1R8WX1YIl0sWzAsNiwiXFxHYW1tYV97U1l9U1giXSxbMiwxLCJTVChoXntcXGJ1bGxldFxcb21lZ2F9X3tYLFRZfSArIFNcXGV0YV5UX1ggKyBcXGV0YV57U1R9X1kpIl0sWzEsMywiIiwwLHsibGV2ZWwiOjIsInN0eWxlIjp7ImhlYWQiOnsibmFtZSI6Im5vbmUifX19XSxbMyw0LCJTVChTVFxcdXBhcnJvd15UICsgU1RYICsgU1RZKSIsMl0sWzQsNSwiU1QoU1xcbXVeVChcXEdhbW1hX3tTXFxtdV5UWX1YICsgWCArIFkpK1NUWCArIFNUWSkiLDJdLFs1LDYsIlNUW1NUIGluXzEsIFNUIGluXzIsIFNUIGluXzNdIiwyXSxbNiw3LCJcXG11XntTVH0gW2luXzEsaW5fMixpbl8zLGluXzIsaW5fM10iLDJdLFs3LDgsIiIsMCx7ImxldmVsIjoyLCJzdHlsZSI6eyJoZWFkIjp7Im5hbWUiOiJub25lIn19fV0sWzAsOSwiXFxHYW1tYV97U1l9IFxcbXVeU19YIiwyXSxbOSw4LCJoXntcXGJ1bGxldFxcb21lZ2F9X3tYLFl9IiwyXSxbMCwyLCJoXntcXGJ1bGxldFxcb21lZ2F9X3tTWCxZfSJdXQ==
\begin{tikzcd}[ampersand replacement=\&]
	{\Gamma_{SY}SSX} \& {ST(\Gamma_{STY}SX + SX + Y)} \&\& {ST(ST\Gamma^\bullet_{ST|TY}X+ STX + STY)} \\
	\&\&\& {ST(ST(\Gamma_{STTY}X + X + TY)+STX + STY)} \\
	\&\&\& {ST(STT(\Gamma_{STTY}X + X + Y)+STX + STY)} \\
	\&\&\& {ST(ST(\Gamma_{STY}X + X + Y)+STX + STY)} \\
	\&\&\& {STST(\Gamma_{STY}X + X + Y + X + Y)} \\
	\&\&\& {ST(\Gamma_{STY}X + X + Y)} \\
	{\Gamma_{SY}SX} \&\&\& {ST\Gamma^\bullet_{ST|Y}X}
	\arrow["{ST(h^{\bullet\omega}_{X,TY} + S\eta^T_X + \eta^{ST}_Y)}", from=1-2, to=1-4]
	\arrow[Rightarrow, no head, from=1-4, to=2-4]
	\arrow["{ST(ST\uparrow^T + STX + STY)}"', from=2-4, to=3-4]
	\arrow["{ST(S\mu^T(\Gamma_{S\mu^TY}X + X + Y)+STX + STY)}"', from=3-4, to=4-4]
	\arrow["{ST[ST in_1, ST in_2, ST in_3]}"', from=4-4, to=5-4]
	\arrow["{\mu^{ST} [in_1,in_2,in_3,in_2,in_3]}"', from=5-4, to=6-4]
	\arrow[Rightarrow, no head, from=6-4, to=7-4]
	\arrow["{\Gamma_{SY} \mu^S_X}"', from=1-1, to=7-1]
	\arrow["{h^{\bullet\omega}_{X,Y}}"', from=7-1, to=7-4]
	\arrow["{h^{\bullet\omega}_{SX,Y}}", from=1-1, to=1-2]
      \end{tikzcd}

% file:///home/thirs/github/quiver/src/index.html?q=WzAsNSxbMCwyLCJcXEdhbW1hX3tTWX1TWCJdLFsyLDIsIlNUXFxHYW1tYV5cXGJ1bGxldF97U1R8WX1YIl0sWzAsMCwiXFxHYW1tYV97U1l9IFgiXSxbMSwwLCJcXEdhbW1hX3tTVFl9IFgiXSxbMiwwLCJcXEdhbW1hXlxcYnVsbGV0X3tTVHxZfSBYIl0sWzAsMSwiaF57XFxidWxsZXRcXG9tZWdhfV97WCxZfSJdLFsyLDAsIlxcR2FtbWFfe1NZfSBcXGV0YV5TX1giLDJdLFsyLDMsIlxcR2FtbWFfe1NcXGV0YV5UX1h9IFgiXSxbMyw0LCJpbl8xIl0sWzQsMSwiXFxldGFee1NUfSJdXQ==
\begin{tikzcd}[ampersand replacement=\&]
	{\Gamma_{SY} X} \& {\Gamma_{STY} X} \& {\Gamma^\bullet_{ST|Y} X} \\
	\\
	{\Gamma_{SY}SX} \&\& {ST\Gamma^\bullet_{ST|Y}X}
	\arrow["{h^{\bullet\omega}_{X,Y}}", from=3-1, to=3-3]
	\arrow["{\Gamma_{SY} \eta^S_X}"', from=1-1, to=3-1]
	\arrow["{\Gamma_{S\eta^T_X} X}", from=1-1, to=1-2]
	\arrow["{in_1}", from=1-2, to=1-3]
	\arrow["{\eta^{ST}}", from=1-3, to=3-3]
\end{tikzcd}

% file:///home/thirs/github/quiver/src/index.html?q=WzAsMyxbMiwwLCJcXEdhbW1hX3tTWX1TWCJdLFsxLDEsIlNUXFxHYW1tYV5cXGJ1bGxldF97U1R8WX1YIl0sWzAsMCwiXFxHYW1tYV97WX0gXFxTaWdtYSBYIl0sWzAsMSwiaF57XFxidWxsZXRcXG9tZWdhfV97WCxZfSIsMCx7ImxhYmVsX3Bvc2l0aW9uIjoyMH1dLFsyLDAsIlxcR2FtbWFfe1xcZXRhXlNfWX0gXFxldGFfe1xcU2lnbWEsWH0iXSxbMiwxLCJoX3tYLFl9IiwyLHsibGFiZWxfcG9zaXRpb24iOjIwfV1d
\begin{tikzcd}[ampersand replacement=\&]
	{\Gamma_{Y} \Sigma X} \&\& {\Gamma_{SY}SX} \\
	\& {ST\Gamma^\bullet_{ST|Y}X}
	\arrow["{h^{\bullet\omega}_{X,Y}}"{pos=0.2}, from=1-3, to=2-2]
	\arrow["{\Gamma_{\eta^S_Y} \eta_{\Sigma,X}}", from=1-1, to=1-3]
	\arrow["{h_{X,Y}}"'{pos=0.2}, from=1-1, to=2-2]
\end{tikzcd}  \end{center}
\end{lemma}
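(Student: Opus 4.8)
The plan is to obtain all four diagrams from the coherence laws of $\widetilde{\widetilde{h^∘}^ω}$ recorded in Lemma~\ref{lem:coh:hcirctildeomegatilde}, by unfolding the defining factorisation
$$h^{•ω}_{X,Y} = \widetilde{\widetilde{h^∘}^ω}_{X,Y} ∘ in₁ ∘ Γ_{Sηᵀ_Y}SX$$
of Definition~\ref{def:hbulletomega} and invoking the naturality statements (Lemmas~\ref{lem:naturalomega} and~\ref{lem:hbulletomega:piml}). Since $h^{•ω}$ differs from $\widetilde{\widetilde{h^∘}^ω}$ only by the whiskering $in₁ ∘ Γ_{Sηᵀ_Y}SX$ -- a reindexing of the auxiliary argument along $Sηᵀ_Y∶ SY → STY$ followed by the injection $Γ → Γ^\pt$ -- each target diagram should reduce to the matching $\widetilde{\widetilde{h^∘}^ω}$-diagram together with a short identity between reindexing maps. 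Throughout I would use the distributive-law axioms for $δ$ and the monad laws for $S$ and $T$, and the identity $η^{ST}_Y = Sηᵀ_Y ∘ η^S_Y$, which holds by naturality of $η^S$.

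I would dispatch the base and unit laws first. The fourth (base) diagram is the sixth law of Lemma~\ref{lem:coh:hcirctildeomegatilde} precomposed with $Γ_{η^S_Y}η_{Σ,X}$: using $Sηᵀ_Y ∘ η^S_Y = η^{ST}_Y$, the composite $h^{•ω}_{X,Y} ∘ Γ_{η^S_Y}η_{Σ,X}$ is exactly the morphism appearing there, hence equals $h_{X,Y}$. The third (unit-in-$X$) diagram follows from the third law of Lemma~\ref{lem:coh:hcirctildeomegatilde}: after pulling the main-argument unit $η^S_X$ out through $in₁$ and $Γ^\pt$ by bifunctoriality, that law rewrites $\widetilde{\widetilde{h^∘}^ω}_{X,Y} ∘ Γ^\pt_{STY}η^S_X$ as $ST(in₁) ∘ η^{ST}$, and the claim then closes by naturality of $η^{ST}$ with respect to the coproduct injections, using $Γ^•_{ST|Y} = O_Y Γ^\pt_{STY}$.

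The first (multiplication-in-$Y$) diagram should follow from the second law of Lemma~\ref{lem:coh:hcirctildeomegatilde}, whose bottom edge ($ST↑^S$ then $(Sδ;μ^S T)$) is precisely the bottom edge of the target. The only extra input is an identity of reindexing maps $SSY → STY$ witnessing that the top edge $Γ_{μ^S Y}SX$, composed with the whiskering of $h^{•ω}_{X,Y}$, agrees with $Γ^\pt_{μ^S_{TY}}SX ∘ Γ^\pt_{Sδ_Y}SX$ composed with the whiskering of $h^{•ω}_{X,SY}$; concretely one checks $Sηᵀ_Y ∘ μ^S_Y = μ^S_{TY} ∘ Sδ_Y ∘ Sηᵀ_{SY}$, which reduces, via the unit axiom $δ_Y ∘ ηᵀ_{SY} = Sηᵀ_Y$ of the distributive law and naturality of $μ^S$, to a triviality.

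The second (multiplication-in-$X$) diagram is the main obstacle. It should follow from the fifth law of Lemma~\ref{lem:coh:hcirctildeomegatilde}, which expresses compatibility of $\widetilde{\widetilde{h^∘}^ω}$ with $μ^S_X$ through the multiplication $μ^{ST O_Y}$ of the composite monad $ST O_Y$; precomposing it with the whiskering identifies the left-then-bottom path of the target with $μ^{ST O_Y}_{Γ^\pt_{STY}X} ∘ ST O_Y \widetilde{\widetilde{h^∘}^ω}_{X,Y} ∘ h^{•ω}_{SX,Y}$. The remaining work is to show that the long right-hand column of the target computes exactly $μ^{ST O_Y}_{Γ^\pt_{STY}X} ∘ ST O_Y \widetilde{\widetilde{h^∘}^ω}_{X,Y}$. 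This requires unfolding the monad structure of $ST O_Y$ -- combining the composite-monad multiplication of $ST$ with the option distributive law $\liftleft{T}$ of Lemma~\ref{lem:refOmonad} -- and matching it termwise with the column built from $ST↑^T$, $Sμ^T$, the cotuplings $[ST in₁, ST in₂, ST in₃]$ and $μ^{ST}$; along the way the $μ^T$-law (fourth law) and the unit law (first law) of Lemma~\ref{lem:coh:hcirctildeomegatilde} are needed to reconcile the restriction of $\widetilde{\widetilde{h^∘}^ω}_{X,Y}$ to the summands of $Γ^\pt_{STY}SX$ with the $h^{•ω}_{X,TY}$ and $Sηᵀ_X$ appearing in the column. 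I expect this final identification to be a large but entirely routine diagram chase, decomposed over the coproduct $Γ^•_{ST|Y} = Γ(-,ST-) + (-) + (-)$ and leaning only on naturality of the various $↑$-maps and the coherence of $δ$ already used above.
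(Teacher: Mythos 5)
Your proposal is correct and follows essentially the same route as the paper: each of the four diagrams is deduced from the corresponding coherence law of $\widetilde{\widetilde{h^∘}^ω}$ in Lemma~\ref{lem:coh:hcirctildeomegatilde} by unfolding $h^{•ω}_{X,Y} = \widetilde{\widetilde{h^∘}^ω}_{X,Y} ∘ in₁ ∘ Γ_{Sηᵀ_Y}SX$, with the same reindexing identity $Sηᵀ_Y ∘ μ^S_Y = μ^S_{TY} ∘ Sδ_Y ∘ Sηᵀ_{SY}$ for the first diagram and the same reduction of the second diagram to the $μ^{STO_Y}$-compatibility (fifth) law followed by a termwise chase matching the right-hand column against $μ^{ST O_Y}_{Γ^\pt_{STY}X} ∘ ST O_Y \widetilde{\widetilde{h^∘}^ω}_{X,Y}$, which is exactly what the paper carries out in Figures~\ref{fig:qgammaSmu} and~\ref{fig:qgammaSmudeux} using the fourth and first laws as you predict.
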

\begin{proof}
  The last diagram is direct by Lemma~\ref{lem:coh:hcirctildeomegatilde}.
  The third one follows by chasing as follows,
  \begin{center}
% file:///home/thirs/github/quiver/src/index.html?q=WzAsOSxbMCwyLCJcXEdhbW1hX3tTWX1TWCJdLFs1LDIsIlNUXFxHYW1tYV5cXGJ1bGxldF97U1R8WX1YIl0sWzAsMCwiXFxHYW1tYV97U1l9IFgiXSxbMiwwLCJcXEdhbW1hX3tTVFl9IFgiXSxbNSwwLCJcXEdhbW1hXlxcYnVsbGV0X3tTVHxZfSBYIl0sWzIsMiwiXFxHYW1tYV97U1RZfVNYIl0sWzMsMiwiXFxHYW1tYV5cXGNpcmNfe1NUWX1TWCJdLFszLDAsIlxcR2FtbWFeXFxjaXJjX3tTVFl9WCJdLFs0LDEsIlNUXFxHYW1tYV5cXGNpcmNfe1NUWX1YIl0sWzIsMCwiXFxHYW1tYV97U1l9IFxcZXRhXlNfWCIsMl0sWzIsMywiXFxHYW1tYV97U1xcZXRhXlRfWH0gWCJdLFs0LDEsIlxcZXRhXntTVH0iXSxbMCw1LCJcXEdhbW1hX3tTXFxldGFeVF9ZfVNYIiwyXSxbNSw2LCJpbl8xIiwyXSxbNiwxLCJcXHdpZGV0aWxkZXtcXHdpZGV0aWxkZXtoXlxcY2lyY31eXFxvbWVnYX1fe1gsWX0iLDJdLFszLDUsIlxcR2FtbWFfe1NUWX1cXGV0YV5TX1giLDJdLFszLDcsImluXzEiXSxbNyw0LCJpbl8xIl0sWzcsNiwiXFxHYW1tYV5cXGNpcmNfe1NUWX0gXFxldGFeU19YIiwyXSxbNyw4LCJcXGV0YV57U1R9IiwyXSxbOCwxLCJTVCBpbl8xIiwyLHsibGFiZWxfcG9zaXRpb24iOjIwfV1d
\begin{tikzcd}[ampersand replacement=\&]
	{\Gamma_{SY} X} \&\& {\Gamma_{STY} X} \& {\Gamma^\circ_{STY}X} \&\& {\Gamma^\bullet_{ST|Y} X} \\
	\&\&\&\& {ST\Gamma^\circ_{STY}X} \\
	{\Gamma_{SY}SX} \&\& {\Gamma_{STY}SX} \& {\Gamma^\circ_{STY}SX} \&\& {ST\Gamma^\bullet_{ST|Y}X}
	\arrow["{\Gamma_{SY} \eta^S_X}"', from=1-1, to=3-1]
	\arrow["{\Gamma_{S\eta^T_X} X}", from=1-1, to=1-3]
	\arrow["{\eta^{ST}}", from=1-6, to=3-6]
	\arrow["{\Gamma_{S\eta^T_Y}SX}"', from=3-1, to=3-3]
	\arrow["{in_1}"', from=3-3, to=3-4]
	\arrow["{\widetilde{\widetilde{h^\circ}^\omega}_{X,Y}}"', from=3-4, to=3-6]
	\arrow["{\Gamma_{STY}\eta^S_X}"', from=1-3, to=3-3]
	\arrow["{in_1}", from=1-3, to=1-4]
	\arrow["{in_1}", from=1-4, to=1-6]
	\arrow["{\Gamma^\circ_{STY} \eta^S_X}"', from=1-4, to=3-4]
	\arrow["{\eta^{ST}}"', from=1-4, to=2-5]
	\arrow["{ST in_1}"'{pos=0.2}, from=2-5, to=3-6]
      \end{tikzcd}
    \end{center}
using Lemma~\ref{lem:coh:hcirctildeomegatilde}.
  The first statement follows by chasing the next diagram.
% file:///home/thirs/github/quiver/src/qgammamu.html?q=WzAsMTIsWzAsMCwiXFxHYW1tYV97U1NZfSBTIFgiXSxbMCwzLCJTVCBcXEdhbW1hXlxcYnVsbGV0X3tTVHxTWX1YIl0sWzEsMywiU1RTXFxHYW1tYV5cXGJ1bGxldF97U1RTfFl9WCJdLFszLDMsIlNTVCBcXEdhbW1hXlxcYnVsbGV0X3tTU1R8WX1YIl0sWzUsMywiU1QgXFxHYW1tYV5cXGJ1bGxldF97U1R8WX1YIl0sWzUsMCwiXFxHYW1tYV97U1l9U1giXSxbMCwxLCJcXEdhbW1hX3tTVFNZfVNYIl0sWzAsMiwiXFxHYW1tYV5cXGNpcmNfe1NUU1l9U1giXSxbNSwxLCJcXEdhbW1hX3tTVFl9U1giXSxbNSwyLCJcXEdhbW1hXlxcY2lyY197U1RZfVNYIl0sWzEsMSwiXFxHYW1tYV97U1NUWX1TWCJdLFsxLDIsIlxcR2FtbWFeXFxjaXJjX3tTU1RZfVNYIl0sWzEsMiwiU1RcXHVwYXJyb3deUyIsMl0sWzIsMywiU1xcZGVsdGEgXFxHYW1tYV5cXGJ1bGxldF97U1xcZGVsdGF8WX1YIiwyXSxbMyw0LCJcXG11XlMgVCBcXEdhbW1hXlxcYnVsbGV0X3tcXG11XlMgVHxZfVgiLDJdLFswLDUsIlxcR2FtbWFfe1xcbXVeU19ZfVNYIl0sWzcsMSwiXFx3aWRldGlsZGV7XFx3aWRldGlsZGV7aF5cXGNpcmN9Xlxcb21lZ2F9X3tYLFNZfSIsMl0sWzAsNiwiXFxHYW1tYV97U1xcZXRhXlRfe1NZfX0gUyBYIiwyXSxbNiw3LCJpbl8xIiwyXSxbOSw0LCJcXHdpZGV0aWxkZXtcXHdpZGV0aWxkZXtoXlxcY2lyY31eXFxvbWVnYX1fe1gsWX0iXSxbOCw5LCJpbl8xIl0sWzUsOCwiXFxHYW1tYV97U1xcZXRhXlRfWX0gUyBYIl0sWzYsMTAsIlxcR2FtbWFfe1NcXGRlbHRhX1l9U1giXSxbMTAsOCwiXFxHYW1tYV97XFxtdV5TX3tUWX19U1giXSxbMCwxMCwiXFxHYW1tYV97U1NcXGV0YV5UX1l9U1giLDAseyJsYWJlbF9wb3NpdGlvbiI6MTAwLCJjdXJ2ZSI6LTF9XSxbNywxMSwiXFxHYW1tYV5cXGNpcmNfe1NcXGRlbHRhX1l9U1giXSxbMTEsOSwiXFxHYW1tYV5cXGNpcmNfe1xcbXVeUyBUWX1TWCJdLFsyNiwxMywiXFx0ZXh0eyhMZW1tYX5cXHJlZntsZW06Y29oOmhjaXJjdGlsZGVvbWVnYXRpbGRlfSl9IiwxLHsic2hvcnRlbiI6eyJzb3VyY2UiOjIwLCJ0YXJnZXQiOjIwfSwic3R5bGUiOnsiYm9keSI6eyJuYW1lIjoibm9uZSJ9LCJoZWFkIjp7Im5hbWUiOiJub25lIn19fV1d
\[\begin{tikzcd}[ampersand replacement=\&]
	{\Gamma_{SSY} S X} \&\&\&\&\& {\Gamma_{SY}SX} \\
	{\Gamma_{STSY}SX} \& {\Gamma_{SSTY}SX} \&\&\&\& {\Gamma_{STY}SX} \\
	{\Gamma^\circ_{STSY}SX} \& {\Gamma^\circ_{SSTY}SX} \&\&\&\& {\Gamma^\circ_{STY}SX} \\
	{ST \Gamma^\bullet_{ST|SY}X} \& {STS\Gamma^\bullet_{STS|Y}X} \&\& {SST \Gamma^\bullet_{SST|Y}X} \&\& {ST \Gamma^\bullet_{ST|Y}X}
	\arrow["{ST\uparrow^S}"', from=4-1, to=4-2]
	\arrow[""{name=0, anchor=center, inner sep=0}, "{S\delta \Gamma^\bullet_{S\delta|Y}X}"', from=4-2, to=4-4]
	\arrow["{\mu^S T \Gamma^\bullet_{\mu^S T|Y}X}"', from=4-4, to=4-6]
	\arrow["{\Gamma_{\mu^S_Y}SX}", from=1-1, to=1-6]
	\arrow["{\widetilde{\widetilde{h^\circ}^\omega}_{X,SY}}"', from=3-1, to=4-1]
	\arrow["{\Gamma_{S\eta^T_{SY}} S X}"', from=1-1, to=2-1]
	\arrow["{in_1}"', from=2-1, to=3-1]
	\arrow["{\widetilde{\widetilde{h^\circ}^\omega}_{X,Y}}", from=3-6, to=4-6]
	\arrow["{in_1}", from=2-6, to=3-6]
	\arrow["{\Gamma_{S\eta^T_Y} S X}", from=1-6, to=2-6]
	\arrow["{\Gamma_{S\delta_Y}SX}", from=2-1, to=2-2]
	\arrow["{\Gamma_{\mu^S_{TY}}SX}", from=2-2, to=2-6]
	\arrow["{\Gamma_{SS\eta^T_Y}SX}"{pos=1}, curve={height=-6pt}, from=1-1, to=2-2]
	\arrow["{\Gamma^\circ_{S\delta_Y}SX}", from=3-1, to=3-2]
	\arrow[""{name=1, anchor=center, inner sep=0}, "{\Gamma^\circ_{\mu^S TY}SX}", from=3-2, to=3-6]
	\arrow["{\text{(Lemma~\ref{lem:coh:hcirctildeomegatilde})}}"{description}, Rightarrow, draw=none, from=1, to=0]
\end{tikzcd}\]  
  For the second statement, we first reduce to the rightmost subdiagram in Figure~\ref{fig:qgammaSmu},
  \begin{figure*}[pth]
  \adjustbox{max width=\columnwidth,max height=.95\textheight}{%
    \begin{turn}{90}
\begin{tikzcd}[ampersand replacement=\&]
	{\Gamma_{SY}SSX} \&\& {\Gamma_{STY}SSX} \& {\Gamma^{\circ}_{STY}SSX} \&\& {ST(\Gamma_{STY}SX + SX + Y)} \&\& {ST(ST\Gamma^\bullet_{ST|TY}X+ STX + STY)} \\
	\&\&\&\&\&\&\& {ST(ST(\Gamma_{STTY}X + X + TY)+STX + STY)} \\
	\&\&\&\&\&\&\& {ST(STT(\Gamma_{STTY}X + X + Y)+STX + STY)} \\
	\&\&\&\&\&\&\& {ST(ST(\Gamma_{STY}X + X + Y)+STX + STY)} \\
	\&\&\&\&\& {ST(ST\Gamma^\bullet_{ST|Y}X + Y)} \&\& {STST(\Gamma_{STY}X + X + Y + X + Y)} \\
	\&\&\&\&\& {ST O_Y ST O_Y \Gamma_{STY}^\circ X} \&\& {ST(\Gamma_{STY}X + X + Y)} \\
	{\Gamma_{SY}SX} \&\& {\Gamma_{STY}SX} \& {\Gamma^{\circ}_{STY}SX} \&\&\&\& {ST\Gamma^\bullet_{ST|Y}X}
	\arrow["{ST(h^{\bullet\omega}_{X,TY} + S\eta^T_X + \eta^{ST}_Y)}", from=1-6, to=1-8]
	\arrow[Rightarrow, no head, from=1-8, to=2-8]
	\arrow["{ST(ST\uparrow^T + STX + STY)}"', from=2-8, to=3-8]
	\arrow["{ST(S\mu^T(\Gamma_{S\mu^TY}X + X + Y)+STX + STY)}"', from=3-8, to=4-8]
	\arrow["{ST[ST in_1, ST in_2, ST in_3]}"', from=4-8, to=5-8]
	\arrow["{\mu^{ST} [in_1,in_2,in_3,in_2,in_3]}"', from=5-8, to=6-8]
	\arrow[Rightarrow, no head, from=6-8, to=7-8]
	\arrow["{\Gamma_{SY} \mu^S_X}"', from=1-1, to=7-1]
	\arrow["{\widetilde{\widetilde{h^{\circ}}^\omega}_{SX,Y}}", from=1-4, to=1-6]
	\arrow["{\Gamma_{S\eta^T_Y}SSX}", from=1-1, to=1-3]
	\arrow["{in_1}", from=1-3, to=1-4]
	\arrow["{\widetilde{\widetilde{h^{\circ}}^\omega}_{X,Y}}"', from=7-4, to=7-8]
	\arrow["{\Gamma_{S\eta^T_Y}SX}"', from=7-1, to=7-3]
	\arrow["{in_1}"', from=7-3, to=7-4]
	\arrow[""{name=0, anchor=center, inner sep=0}, "{\Gamma^\circ_{STY}\mu^S_X}"', from=1-4, to=7-4]
	\arrow[""{name=1, anchor=center, inner sep=0}, "{ST(\widetilde{\widetilde{h^\circ}^\omega}_{X,Y}+Y)}"', from=1-6, to=5-6]
	\arrow[Rightarrow, no head, from=5-6, to=6-6]
	\arrow["{\mu^{STO_Y}_{\Gamma^\circ_{STY}X}}"{description}, from=6-6, to=7-8]
	\arrow["{\text{(interchange)}}"', draw=none, from=1-1, to=7-4]
	\arrow["{\text{(Lemma~\ref{lem:coh:hcirctildeomegatilde})}}"', Rightarrow, draw=none, from=0, to=1]
\end{tikzcd}
    \end{turn}}
    \caption{Proof of Lemma~\ref{lem:coh:hbulletomegamu}}
    \label{fig:qgammaSmu}
  \end{figure*}
  which further reduces as in Figure~\ref{fig:qgammaSmudeux}.
  \begin{figure*}[pth]
  \adjustbox{max width=\columnwidth,max height=.95\textheight}{%
    \begin{turn}{90}
\begin{tikzcd}[ampersand replacement=\&]
	{ST(\Gamma_{STY}SX + SX + Y)} \&\&\& {ST(\Gamma_{STTY}SX + SX + Y)} \&\&\& {ST(\Gamma^\circ_{STTY}SX + SX + Y)} \&\&\&\& {ST(ST\Gamma^\bullet_{ST|TY}X+ STX + STY)} \\
	\&\&\& {ST(\Gamma^\circ_{STY}SX + SX + Y)} \&\&\&\&\&\&\& {ST(ST(\Gamma_{STTY}X + X + TY)+STX + STY)} \\
	\&\&\&\&\&\& {ST(\Gamma^\circ_{STY}SX + SX + Y)} \&\&\&\& {ST(STT(\Gamma_{STTY}X + X + Y)+STX + STY)} \\
	\&\&\&\&\&\&\&\&\&\& {ST(ST(\Gamma_{STY}X + X + Y)+STX + STY)} \\
	\&\&\&\&\&\& {ST (ST O_Y \Gamma^\circ_{STY} X + STX + Y)} \&\&\&\& {STST(\Gamma_{STY}X + X + Y + X + Y)} \\
	\&\&\&\&\&\& {ST  (ST O_Y \Gamma^\circ_{STY} X +  Y)} \&\& {ST (ST O_Y \Gamma^\circ_{STY} X +  STY)} \& {ST ST ( O_Y \Gamma^\circ_{STY} X +  Y)} \& {STST(\Gamma_{STY}X + X + Y)} \\
	\\
	{ST(ST(\Gamma^\circ_{STY}X + Y) + Y)} \&\&\&\&\&\&\&\&\&\& {ST\Gamma^\bullet_{ST|Y}X}
	\arrow["{ST(h^{\bullet\omega}_{X,TY} + S\eta^T_X + \eta^{ST}_Y)}", curve={height=-30pt}, from=1-1, to=1-11]
	\arrow["{ST(S\mu^T(\Gamma_{S\mu^TY}X + X + Y)+STX + STY)}", from=3-11, to=4-11]
	\arrow["{ST[ST in_1, ST in_2, ST in_3]}", from=4-11, to=5-11]
	\arrow["{STST [in_1,in_2,in_3,in_2,in_3]}", from=5-11, to=6-11]
	\arrow["{\mu^{ST}}", from=6-11, to=8-11]
	\arrow["{ST(\Gamma_{S\eta^T_{TY}}SX + SX + Y)}"', from=1-1, to=1-4]
	\arrow["{ST(in_1+SX+Y)}", from=1-4, to=1-7]
	\arrow["{ST(\widetilde{\widetilde{h^\circ}^\omega}_{X,TY} + S\eta^T_X + \eta^{ST}_YY)}"', from=1-7, to=1-11]
	\arrow[Rightarrow, no head, from=1-11, to=2-11]
	\arrow["{ST(ST\uparrow^T + STX + STY)}", from=2-11, to=3-11]
	\arrow["{ST(in_1+SX+Y)}"'{pos=0.3}, curve={height=12pt}, from=1-1, to=2-4]
	\arrow["{ST(\Gamma^\circ_{S\eta^T_{TY}}+SX+Y)}"', from=2-4, to=1-7]
	\arrow["{ST(\Gamma^\circ_{S\mu^T_Y}SX + SX + Y)}", from=1-7, to=3-7]
	\arrow[Rightarrow, no head, from=2-4, to=3-7]
	\arrow["{ST(\widetilde{\widetilde{h^\circ}^\omega}_{X,Y}+S\eta^T_X+Y)}"', from=3-7, to=5-7]
	\arrow["{ST O_Y (ST O_Y \Gamma^\circ_{STY} X + STX + \eta^{ST}_Y)}", from=5-7, to=4-11]
	\arrow["{ST ([ST O_Y \Gamma^\circ_{STY} X, STin_2] + Y)}", from=5-7, to=6-7]
	\arrow["{ST (ST O_Y \Gamma^\circ_{STY} X +  \eta^{ST}_Y)}"', from=6-7, to=6-9]
	\arrow["{ST[ST in_1, ST in_2]}"', from=6-9, to=6-10]
	\arrow["{STST(\Gamma^\circ_{STY}X + [Y,Y])}"', from=6-10, to=6-11]
	\arrow["{ST(\widetilde{\widetilde{h^\circ}^\omega}_{X,Y} + Y)}"', from=1-1, to=8-1]
	\arrow[Rightarrow, no head, from=8-1, to=6-7]
	\arrow["{\mu^{STO_Y}}"', from=8-1, to=8-11]
	\arrow["{\text{(Lemma~\ref{lem:coh:hcirctildeomegatilde})}}"{description}, draw=none, from=5-7, to=1-11]
      \end{tikzcd}
    \end{turn}}
    \caption{Proof of Lemma~\ref{lem:coh:hbulletomegamu}, part 2}
    \label{fig:qgammaSmudeux}
  \end{figure*}
\end{proof}

\subsection{The incremental lifting}
\begin{proposition}\label{prop:mualg}
  For any object $X$ and $(T⊕T')$-algebra structure
  $$TX \xto{𝐚} X \xot{𝐛} Γ_{SX}X\rlap{,}$$
  the $Γ_S$-algebra structure on $SX$ given by
  $$Γ_{SSX}SX \xto{Γ_{μ^S_X}SX} Γ_{SX}SX \xto{h^{•ω}_{X,X}}
  ST Γ^•_{ST|X} X \xto{ST (Γ_{S𝐚}X + [X,X])} ST (Γ_{SX}X + X) \xto{ST
    [𝐛,X]}STX \xto{S𝐚} SX$$ defines an incremental lifting of $S$ to
  $T'\Alg$ along $δ$.
\end{proposition}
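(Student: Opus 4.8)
The plan is to reduce the statement to three concrete verifications and then dispatch them using the coherence laws already assembled in Lemma~\ref{lem:coh:hbulletomegamu}. First I would recall that, since $T' = Γ_S^\fleur$ is free on $Γ_SΔ$, the standard isomorphism $T'\Alg ≅ Γ_S\alg$ identifies a $T'$-algebra structure with a bare $Γ_SΔ$-algebra structure, i.e.\ an unconstrained morphism $Γ_{SX}X → X$. The displayed composite is exactly a morphism $Γ_{SSX}SX → SX$, so it automatically \emph{is} a $Γ_SΔ$-, hence $T'$-, algebra structure on $SX$; here I would also note that, by Definition~\ref{def:triangleright}, its two middle arrows are precisely $ST(𝐚⊳𝐛)$. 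Invoking the reduction of Corollary~\ref{cor:incr:liftings}, it then remains to check that (i) the assignment $(𝐚,𝐛) ↦ \mathrm{str}_{SX}$ is functorial, and that the structural maps (ii) $η^S_X$ and (iii) $μ^S_X$ are $Γ_SΔ$-algebra morphisms, the $T$-structure on $SX$ being the usual $δ$-lifting $S𝐚 ∘ δ_X$.

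For (i) I would deduce functoriality directly from the naturality of $h^{•ω}$ (Lemma~\ref{lem:hbulletomega:piml}): for a morphism $f$ of $(T⊕T')$-algebras, pasting the naturality square of $h^{•ω}$ with the $𝐚$- and $𝐛$-compatibility squares of $f$ and with functoriality of $⊳$ shows that $Sf$ commutes with the constructed structures. For the unit (ii), I would set $Y = X$ and precompose $\mathrm{str}_{SX}$ with $Γ_SΔ(η^S_X)$. The monad law $μ^S_X ∘ Sη^S_X = \id$ collapses $Γ_{μ^S_X}SX ∘ Γ_SΔ(η^S_X)$ to $Γ_{SX}(η^S_X)$; the third diagram of Lemma~\ref{lem:coh:hbulletomegamu} then rewrites $h^{•ω}_{X,X} ∘ Γ_{SX}(η^S_X)$ as $η^{ST} ∘ in₁ ∘ Γ_{Sη^T_X}X$. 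Pushing $ST(𝐚⊳𝐛)$ through $η^{ST}$ by naturality, using $(𝐚⊳𝐛)∘in₁ = 𝐛 ∘ Γ_{S𝐚}X$ and cancelling $Γ_{S𝐚}X ∘ Γ_{Sη^T_X}X = \id$ via the $T$-algebra unit law $𝐚∘η^T_X = \id$, I expect the whole composite to reduce to $η^S_X ∘ 𝐛$, as required. This case is routine once the coherence law is in place.

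The multiplication (iii) is the step I expect to be the main obstacle. The $Γ_SΔ$-structure on $SSX$ is the construction applied to $SX$ equipped with its $δ$-lifted $T$-structure $S𝐚 ∘ δ_X$ and with $\mathrm{str}_{SX}$; it is therefore built from $h^{•ω}_{SX,SX}$, whereas $\mathrm{str}_{SX}$ is built from $h^{•ω}_{X,X}$. Proving that $μ^S_X$ is a $Γ_SΔ$-morphism amounts to peeling one copy of $S$ off \emph{both} the base (main) and the auxiliary argument of $h^{•ω}$ and reconciling the two sides across $μ^S_X$. My plan is to use the second diagram of Lemma~\ref{lem:coh:hbulletomegamu} to strip the base argument ($SX ↦ X$, via $Γ_{SY}μ^S_X$ and the rewriting through $h^{•ω}_{X,TY}$, $↑^T$, $S\mu^T$, and the injection cotuplings appearing there), and the first diagram to strip the auxiliary argument ($SY ↦ Y$, via $Γ_{μ^S_Y}SX$ and the machinery involving $↑^S$, $Sδ$, and $μ^S T$).

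Stitching these two rewrites together and then invoking the distributive-law coherence of $δ$ with $μ^S$ and $μ^T$, together with the associativity of $𝐚$ as a $T$-algebra, should reduce $μ^S_X ∘ \mathrm{str}_{SSX}$ to $\mathrm{str}_{SX} ∘ Γ_SΔ(μ^S_X)$. The hard part is the sheer size of this diagram chase and the careful tracking of which copy of $S$ or $T$ is being contracted at each stage; essentially all of the preparatory work in Lemma~\ref{lem:coh:hbulletomegamu} exists precisely to make this final contraction go through, so that once those laws are granted the remaining arrows combine purely by naturality and the monad axioms.
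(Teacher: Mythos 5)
Your proposal is correct and follows essentially the same route as the paper: the composite is automatically a $Γ_SΔ$-algebra (hence $T'$-algebra) structure via $T'\Alg ≅ Γ_S\alg$, your unit computation is exactly how the paper's ``the former follows easily by Lemma~\ref{lem:coh:hbulletomegamu}'' unfolds, and your plan for multiplication --- stripping the main argument with the second law and the auxiliary argument with the first law of Lemma~\ref{lem:coh:hbulletomegamu}, then closing with the distributive-law axioms, the monad laws, and associativity of $𝐚$ --- is precisely the content of the paper's chase in Figure~\ref{fig:qmualg}. Your additional observation that functoriality of the assignment follows from naturality of $h^{•ω}$ (Lemma~\ref{lem:hbulletomega:piml}) correctly fills a point the paper leaves implicit.
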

\begin{proof}
  The given composite readily equips $SX$ with $(T⊕T')$-algebra
  structure by Lemma~\ref{lem:monad:coprod:alg} and the isomorphism
  $Γ_S\alg ≅ T'\Alg$.  It remains to verify that the unit and
  multiplication are $Γ_S$-algebra morphisms. The former follows
  easily by Lemma~\ref{lem:coh:hbulletomegamu}, and the latter
  by Figure~\ref{fig:qmualg}. There,  the middle, unlabelled polygon
  is  chased
  as follows,
  \begin{center}
    \ajustedroit{
      \begin{tikzcd}[ampersand replacement=\&]
	{STO_{SX} O_{SX} STA} \& {STO_{SX} SO_{SX} STA} \&\& {STSO_{X} O_{SX} STA} \&\&\& {SSTO_{STX} O_{STX} STA} \& {STO_{STX} O_{STX} STA} \&\&\&\& {STSTO_{X} O_{X} A} \\
	\&\&\&\&\&\&\&\&\&\& {SSTSTO_{X} O_{X} A} \\
	\& {STO_{STX} SO_{STX} STA} \&\& {STO_{SSTX} SO_{STX} STA} \&\& {STSO_{STX} O_{STX} STA} \&\& {STSO_{STX} ST O_X A} \&\& {STSSTO_{X} O_{X} A} \& {SSSTTO_{X} O_{X} A} \& {SSTTO_{X} O_{X} A} \\
	{STO_{STX} O_{STX} ST A} \&\&\& {STO_{STX} STO_{X}  A} \&\&\&\&\&\& {STSTO_{X} O_{X}  A} \& {SSTTO_{X} O_{X} A} \& {STO_{X} O_{X} A}
	\arrow["{S\delta O_{\eta^{ST}_X} O_{S\eta^T_X}}", from=1-4, to=1-7]
	\arrow["{\mu^S}", from=1-7, to=1-8]
	\arrow["{ST[ST in_1, ST in_2, ST in_3]}", from=1-8, to=1-12]
	\arrow["{ST O_{S\eta^T_X} O_{S\eta^T_X}}"', from=1-1, to=4-1]
	\arrow["{ST O_{SX} \eta^S}", from=1-1, to=1-2]
	\arrow["{ST[S in_1, S in_2]}", from=1-2, to=1-4]
	\arrow["{ST O_{S\eta^T_X} S O_{S\eta^T_X} ST O_X \Gamma^\circ_{STX} X}", from=1-2, to=3-2]
	\arrow["{STO_{STX}  \eta^S}"{pos=0.3}, from=4-1, to=3-2]
	\arrow["{ST O_{STX} [ST in_1, ST in_2]}", from=4-1, to=4-4]
	\arrow["{ST[ST in_1, ST in_2]}", from=4-4, to=4-10]
	\arrow["{STO_{S\eta^S_{TX}} SO_{STX} STA}", from=3-2, to=3-4]
	\arrow["{ST[S in_1, S in_2]}", from=3-4, to=3-6]
	\arrow["S\delta", from=3-6, to=1-7]
	\arrow["S\delta", from=3-10, to=2-11]
	\arrow["{\mu^S}", from=2-11, to=1-12]
	\arrow["{ST \mu^S}", from=3-10, to=4-10]
	\arrow["{STSO_{STX} [ST in_1, ST in_2]}", from=3-6, to=3-8]
	\arrow["{STS[ST in_1, ST in_2]}", from=3-8, to=3-10]
	\arrow["S\delta", from=1-12, to=3-12]
	\arrow["{\mu^S\mu^T}", from=3-12, to=4-12]
	\arrow["S\delta", from=4-10, to=4-11]
	\arrow["{\mu^S\mu^T}", from=4-11, to=4-12]
	\arrow["SS\delta", from=2-11, to=3-11]
	\arrow["{\mu^S}", from=3-11, to=3-12]
	\arrow[Rightarrow, no head, from=4-11, to=3-12]
      \end{tikzcd}      
      }
  \end{center}
  where the middle, bottom polygon is the image by $ST$ of a
  diagram whose domain is a coproduct, and whose commutativity may be checked termwise.
  E.g., the second term is chased as follows.
  \begin{center}
    \hfill
    \ajustedroit[.95]{
% file:///home/thirs/github/quiver/src/index.html?q=WzAsMTUsWzAsMiwiT197U1RYfSBPX3tTVFh9IFNUIEEiXSxbMjAsNCwiU1RPX3tYfSBPX3tYfSAgQSJdLFs1LDIsIk9fe1NUWH0gU09fe1NUWH0gU1RBIl0sWzIsNCwiT197U1RYfVNUIE9fe1h9ICBBIl0sWzksMiwiT197U1NUWH0gU09fe1NUWH0gU1RBIl0sWzEyLDIsIlNPX3tTVFh9IE9fe1NUWH0gU1RBIl0sWzE5LDIsIlNTVE9fe1h9IE9fe1h9IEEiXSxbMTYsMiwiU09fe1NUWH0gU1QgT19YIEEiXSxbMCwwLCJTVFgiXSxbNSwwLCJTU1RYIl0sWzUsMSwiUyhTVEEgKyBTVFgpIl0sWzksMSwiUyhTVEEgKyBTVFgpIl0sWzE2LDAsIlNTVChBK1gpIl0sWzksMywiU1RYIl0sWzIsMywiU1QoQStYKSJdLFswLDIsIk9fe1NUWH0gIFxcZXRhXlMiLDIseyJsYWJlbF9wb3NpdGlvbiI6NjB9XSxbMCwzLCJPX3tTVFh9IFtTVCBpbl8xLCBTVCBpbl8yXSIsMix7ImxhYmVsX3Bvc2l0aW9uIjoyMH1dLFszLDEsIltTVCBpbl8xLCBTVCBpbl8yXSJdLFsyLDQsIk9fe1NcXGV0YV5TX3tUWH19IFNPX3tTVFh9IFNUQSJdLFs0LDUsIltTIGluXzEsIFMgaW5fMl0iXSxbNiwxLCIgXFxtdV5TIl0sWzUsNywiU09fe1NUWH0gW1NUIGluXzEsIFNUIGluXzJdIl0sWzcsNiwiU1tTVCBpbl8xLCBTVCBpbl8yXSJdLFs4LDAsImluXzIiLDJdLFs4LDksIlxcZXRhXlMiXSxbOSwxMCwiU2luXzIiXSxbMTAsMiwiaW5fMSJdLFsxMCwxMSwiIiwwLHsibGV2ZWwiOjIsInN0eWxlIjp7ImhlYWQiOnsibmFtZSI6Im5vbmUifX19XSxbMTEsNCwiaW5fMSJdLFs5LDExLCJTaW5fMiJdLFsxMSw1LCJTIGluXzEiXSxbOSwxMiwiU1NUIGluXzIiXSxbMTIsNywiU2luXzEiXSxbMTEsMTIsIlNbU1Rpbl8xLFNUaW5fMl0iLDJdLFsxMiw2LCJTU1Rpbl8xIl0sWzksMTMsIlxcbXVeUyIsMCx7ImxhYmVsX3Bvc2l0aW9uIjo2MCwiY3VydmUiOi0xfV0sWzEzLDEsIlNUaW5fMiJdLFs4LDEzLCIiLDAseyJjdXJ2ZSI6NCwibGV2ZWwiOjIsInN0eWxlIjp7ImhlYWQiOnsibmFtZSI6Im5vbmUifX19XSxbOCwxNCwiU1QgaW5fMiJdLFsxNCwzLCJpbl8xIl1d
\begin{tikzcd}[ampersand replacement=\&,baseline=(\tikzcdmatrixname-5-3.base)]
	STX \&\&\&\&\& SSTX \&\&\&\&\&\&\&\&\&\&\& {SST(A+X)} \\
	\&\&\&\&\& {S(STA + STX)} \&\&\&\& {S(STA + STX)} \\
	{O_{STX} O_{STX} ST A} \&\&\&\&\& {O_{STX} SO_{STX} STA} \&\&\&\& {O_{SSTX} SO_{STX} STA} \&\&\& {SO_{STX} O_{STX} STA} \&\&\&\& {SO_{STX} ST O_X A} \&\&\& {SSTO_{X} O_{X} A} \\
	\&\& {ST(A+X)} \&\&\&\&\&\&\& STX \\
	\&\& {O_{STX}ST O_{X}  A} \&\&\&\&\&\&\&\&\&\&\&\&\&\&\&\&\&\& {STO_{X} O_{X}  A}
	\arrow["{O_{STX}  \eta^S}"'{pos=0.6}, from=3-1, to=3-6]
	\arrow["{O_{STX} [ST in_1, ST in_2]}"'{pos=0.2}, from=3-1, to=5-3]
	\arrow["{[ST in_1, ST in_2]}", from=5-3, to=5-21]
	\arrow["{O_{S\eta^S_{TX}} SO_{STX} STA}"{pos=.3}, from=3-6, to=3-10]
	\arrow["{[S in_1, S in_2]}", from=3-10, to=3-13]
	\arrow["{ \mu^S}", from=3-20, to=5-21]
	\arrow["{SO_{STX} [ST in_1, ST in_2]}", from=3-13, to=3-17]
	\arrow["{S[ST in_1, ST in_2]}", from=3-17, to=3-20]
	\arrow["{in_2}"', from=1-1, to=3-1]
	\arrow["{\eta^S}", from=1-1, to=1-6]
	\arrow["{Sin_2}", from=1-6, to=2-6]
	\arrow["{in_1}", from=2-6, to=3-6]
	\arrow[Rightarrow, no head, from=2-6, to=2-10]
	\arrow["{in_1}", from=2-10, to=3-10]
	\arrow["{Sin_2}", from=1-6, to=2-10]
	\arrow["{S in_1}", from=2-10, to=3-13]
	\arrow["{SST in_2}", from=1-6, to=1-17]
	\arrow["{Sin_1}", from=1-17, to=3-17]
	\arrow["{S[STin_1,STin_2]}"', from=2-10, to=1-17]
	\arrow["{SSTin_1}", from=1-17, to=3-20]
	\arrow["{\mu^S}"'{pos=0.9}, fore, curve={height=-6pt}, from=1-6, to=4-10]
	\arrow["{STin_2}", from=4-10, to=5-21]
	\arrow[curve={height=34pt}, Rightarrow, no head, from=1-1, to=4-10]
	\arrow["{ST in_2}", from=1-1, to=4-3]
	\arrow["{in_1}", from=4-3, to=5-3]
\end{tikzcd}      
      }
    \qedhere
  \end{center}
\end{proof}

\begin{figure*}[p]
  \centering
  \adjustbox{max width=\columnwidth,max height=.95\textheight}{%
    \begin{turn}{90}
\begin{tikzcd}[ampersand replacement=\&]
	{\Gamma_{SSSX}SSX} \&\&\&\&\&\& {\Gamma_{SSX} SSX} \&\&\&\&\& {\Gamma_{SSX}SX} \\
	{\Gamma_{SSX}SSX} \&\&\&\&\&\& {\Gamma_{SX}SSX} \&\&\&\&\& {\Gamma_{SX}SX} \\
	\\
	{ST \Gamma^\bullet_{ST|SX}SX} \&\& {STS \Gamma^\bullet_{STS|X} SX} \& {STS \Gamma^\bullet_{SST|X} SX} \& {SST \Gamma^\bullet_{SST|X}SX} \\
	{ST(\Gamma_{STSX}SX + 2\cdot SX)} \\
	{ST(\Gamma_{SSTX}SX + 2\cdot SX)} \&\& {ST(\Gamma_{STX}SX + 2 \cdot SX)} \& {STS\Gamma^\bullet_{ST|X}SX } \& {SST \Gamma^\bullet_{ST|X}SX} \&\& {ST \Gamma^\bullet_{ST|X}SX} \\
	{ST(\Gamma_{SSX} SX + 2 \cdot SX)} \\
	{ST(\Gamma_{SX}SX + 2 \cdot SX)} \& {} \& {ST(ST\Gamma^\bullet_{ST|TX}X + 2 \cdot SX)} \& {STS(ST \Gamma^\bullet_{ST|TX}X + SX + X)} \& {SST(ST \Gamma^\bullet_{ST|TX}X + STX + STX)} \&\& {ST(ST\Gamma^\bullet_{ST|TX}X + STX+STX)} \& {ST(STT\Gamma^\bullet_{STT|X} X + 2 \cdot STX)} \& {ST(ST\Gamma^\bullet_{ST|X} X + 2 \cdot STX)} \& {STST(\Gamma^\bullet_{ST|X}X + X + X)} \\
	\&\& {ST (ST(\Gamma_{STX}X + X + TX) + 2 \cdot SX)} \&\& {ST (STT(\Gamma_{STX}X + 2\cdot X) + 2 \cdot SX)} \&\& {ST (ST(\Gamma_{STX}X + 2\cdot X) + 2 \cdot STX)} \&\& {ST ST(\Gamma_{STX}X + 2\cdot X + 2 \cdot X)} \& {ST(\Gamma^\bullet_{ST|X}X + X + X)} \&\& {ST \Gamma^\bullet_{ST|X}X} \& {} \\
	{ST(ST \Gamma^\bullet_{ST|X}X + 2 \cdot SX)} \&\& {ST (ST(\Gamma_{SX}X + X + TX) + 2 \cdot SX)} \&\& {ST (STT(\Gamma_{SX}X + 2\cdot X) + 2 \cdot SX)} \&\& {ST (ST(\Gamma_{SX}X + 2\cdot X) + 2 \cdot STX)} \&\& {ST ST(\Gamma_{SX}X + 2\cdot X + 2 \cdot X)} \&\&\& {ST \Gamma^\bullet_{S|X} X} \\
	\&\& {ST (ST(X + X + TX) + 2 \cdot SX)} \&\& {ST (STT(2\cdot X + X) + 2 \cdot SX)} \&\& {ST (ST(2\cdot X + X) + 2 \cdot STX)} \&\& {ST ST(2\cdot X + X + 2 \cdot X)} \&\&\& {ST X} \\
	{ST(ST \Gamma^\bullet_{S|X}X  + 2 \cdot SX)} \\
	{ST(STX + 2 \cdot SX)} \&\& {STS(T(2\cdot X + TX)+2\cdot X)} \&\&\&\& {STS(T(X + 2 \cdot X) + 2 \cdot TX)} \&\& {SSTT(2\cdot X + X + 2 \cdot X)} \\
	{ST(SX + 2 \cdot SX)} \& {STS(TX + 2 \cdot X)} \& {SST(T(2\cdot X + TX) + 2\cdot X)} \&\&\&\& {SST(T(X + 2 \cdot X) + 2 \cdot TX)} \\
	\& {STS (X+2\cdot X)} \& {ST(T(2\cdot X + TX) + 2\cdot X)} \&\& {ST(TT(2\cdot X + X) + 2\cdot X)} \&\& {ST(T(2\cdot X + X) + 2\cdot TX)} \& {STT(2\cdot X + X + 2 \cdot X)} \& STTX \\
	\&\& {ST(T(2\cdot X + X) + 2\cdot X)} \&\&\&\& {ST(TX + 2\cdot X)} \\
	STSX \\
	SSTX \&\&\&\&\&\&\&\& STX \\
	SSX \&\&\&\&\&\&\&\&\&\&\& SX
	\arrow[""{name=0, anchor=center, inner sep=0}, "{\Gamma_{S\mu^S_X}SSX}", from=1-1, to=1-7]
	\arrow[""{name=1, anchor=center, inner sep=0}, "{\Gamma_{SSX} \mu^S_X}", from=1-7, to=1-12]
	\arrow["{\Gamma_{\mu^S_X}SX}", from=1-12, to=2-12]
	\arrow["{\bar{h}_{X,X}}", color={rgb,255:red,153;green,92;blue,214}, from=2-12, to=9-12]
	\arrow["{ST(\Gamma_{S\mathbf{a}}X + 2\cdot X)}", color={rgb,255:red,250;green,56;blue,69}, from=9-12, to=10-12]
	\arrow["{ST[\mathbf{b},X,X]}", color={rgb,255:red,1;green,90;blue,24}, from=10-12, to=11-12]
	\arrow["{S\mathbf{a}}", color={rgb,255:red,250;green,56;blue,69}, from=11-12, to=19-12]
	\arrow["{\Gamma_{\mu^S_{SX}} SSX}"', from=1-1, to=2-1]
	\arrow["{\bar{h}_{SX,SX}}"', color={rgb,255:red,153;green,92;blue,214}, from=2-1, to=4-1]
	\arrow[Rightarrow, no head, from=4-1, to=5-1]
	\arrow["{ST(\Gamma_{S\delta_X}SX + 2\cdot SX)}"', from=5-1, to=6-1]
	\arrow["{ST(\Gamma_{SS\mathbf{a}}SX + 2 \cdot SX)}"', color={rgb,255:red,250;green,56;blue,69}, from=6-1, to=7-1]
	\arrow["{ST(\Gamma_{\mu^S_X} SX + 2\cdot SX)}"', from=7-1, to=8-1]
	\arrow["{ST(\bar{h}_{X,X} + 2\cdot SX)}"', color={rgb,255:red,153;green,92;blue,214}, from=8-1, to=10-1]
	\arrow["{ST(ST(\Gamma_{S\mathbf{a}}X + 2\cdot X + 2 \cdot SX))}"', color={rgb,255:red,250;green,56;blue,69}, from=10-1, to=12-1]
	\arrow["{ST(ST[\mathbf{b},X,X] + 2 \cdot SX)}"', color={rgb,255:red,1;green,90;blue,24}, from=12-1, to=13-1]
	\arrow["{ST(S\mathbf{a} + 2 \cdot SX)}"', color={rgb,255:red,250;green,56;blue,69}, from=13-1, to=14-1]
	\arrow["{ST[SX,SX,SX]}"', from=14-1, to=17-1]
	\arrow["{S\delta_X}"', from=17-1, to=18-1]
	\arrow["{SS\mathbf{a}}"', color={rgb,255:red,250;green,56;blue,69}, from=18-1, to=19-1]
	\arrow[""{name=2, anchor=center, inner sep=0}, "{\mu^S_X}"', from=19-1, to=19-12]
	\arrow[""{name=3, anchor=center, inner sep=0}, "{\Gamma_{\mu^S_X}SSX}"', from=2-1, to=2-7]
	\arrow["{\Gamma_{\mu^S_X} SSX}"', from=1-7, to=2-7]
	\arrow[""{name=4, anchor=center, inner sep=0}, "{\Gamma_{SX} \mu^S_X}"', from=2-7, to=2-12]
	\arrow[""{name=5, anchor=center, inner sep=0}, "{S\delta \Gamma^\bullet_{SST | X} SX}"', from=4-4, to=4-5]
	\arrow["{\mu^S T \Gamma^\bullet_{\mu^S T |X} SX}"{description}, from=4-5, to=6-7]
	\arrow["{\bar{h}_{SX,X}}"', color={rgb,255:red,153;green,92;blue,214}, from=2-7, to=6-7]
	\arrow["{ST(\bar{h}_{X,TX} + S\eta^T X + \eta^{ST}X)}", color={rgb,255:red,153;green,92;blue,214}, from=6-7, to=8-7]
	\arrow["{ST(ST\uparrow^T + 2\cdot STX)}", from=8-7, to=8-8]
	\arrow["{ST(S \mu^T \Gamma^\bullet_{S\mu^T|X} X + 2 \cdot STX)}", from=8-8, to=8-9]
	\arrow["{ST[STin_1,STin_2,STin_3]}", from=8-9, to=8-10]
	\arrow[""{name=6, anchor=center, inner sep=0}, "{ST(\Gamma_{\mu^S_{TX}}SX + 2 \cdot SX)}"{pos=0.7}, from=6-1, to=6-3]
	\arrow[""{name=7, anchor=center, inner sep=0}, "{ST(\Gamma_{S\mathbf{a}}SX + 2\cdot SX)}"{description}, color={rgb,255:red,250;green,56;blue,69}, from=6-3, to=8-1]
	\arrow["{ST(\bar{h}_{X,TX}+2\cdot SX)}"{description}, color={rgb,255:red,153;green,92;blue,214}, from=6-3, to=8-3]
	\arrow[""{name=8, anchor=center, inner sep=0}, "{ST(ST\Gamma^\bullet_{ST|\mathbf{a}}X + 2 \cdot SX)}"'{pos=0.6}, color={rgb,255:red,250;green,56;blue,69}, from=8-3, to=10-1]
	\arrow["{ST(\uparrow^S)}", from=6-3, to=6-4]
	\arrow[""{name=9, anchor=center, inner sep=0}, "{S\delta \Gamma^\bullet_{ST|X}SX}"', from=6-4, to=6-5]
	\arrow[""{name=10, anchor=center, inner sep=0}, "{\mu^S T \Gamma^\bullet_{ST|X}SX}"', from=6-5, to=6-7]
	\arrow["{ST(\uparrow^S)}", from=8-3, to=8-4]
	\arrow["{S \delta (ST \Gamma^\bullet_{ST|TX} X + S\eta^T_X + \eta^{ST}X)}", from=8-4, to=8-5]
	\arrow["{\mu^S T (ST \Gamma^\bullet_{ST|TX} X + TX + STX)}"{pos=0.4}, from=8-5, to=8-7]
	\arrow["{ST(ST(\Gamma_{S\mu^TX}X + X + TX) + 2 \cdot SX)}", from=8-3, to=9-3]
	\arrow["{ST(ST\uparrow^T + 2 \cdot SX)}", from=9-3, to=9-5]
	\arrow["{ST (S\mu^T(\Gamma_{STX}X + 2\cdot X) + 2 \cdot S\eta^TX)}", from=9-5, to=9-7]
	\arrow["{ST[STin_1,STin_2,STin_3]}", from=9-7, to=9-9]
	\arrow["{ST (ST(\Gamma_{S\mathbf{a}}X + X + TX) + 2 \cdot SX)}", color={rgb,255:red,250;green,56;blue,69}, from=9-3, to=10-3]
	\arrow["{ST(ST\uparrow^T + 2 \cdot SX)}", from=10-3, to=10-5]
	\arrow["{ST (S\mu^T(\Gamma_{SX}X + 2\cdot X) + 2 \cdot S\eta^TX)}", from=10-5, to=10-7]
	\arrow["{ST[STin_1,STin_2,STin_3]}", from=10-7, to=10-9]
	\arrow[""{name=11, anchor=center, inner sep=0}, "{\mu^{ST}(\Gamma_{SX}X + \nabla)}", from=10-9, to=10-12]
	\arrow["{ST (ST(\mathbf{b} + X + TX) + 2 \cdot SX)}", color={rgb,255:red,1;green,90;blue,24}, from=10-3, to=11-3]
	\arrow["{ST(ST\uparrow^T + 2 \cdot SX)}", from=11-3, to=11-5]
	\arrow["{ST (S\mu^T(2\cdot X + X) + 2 \cdot S\eta^TX)}"', from=11-5, to=11-7]
	\arrow["{ST[STin_1,STin_2,STin_3]}", from=11-7, to=11-9]
	\arrow[""{name=12, anchor=center, inner sep=0}, "{\mu^{ST}[X,X,X,X,X]}"', from=11-9, to=11-12]
	\arrow[""{name=13, anchor=center, inner sep=0}, "{ST (ST(\Gamma_{SX}X + X + \mathbf{a}) + 2 \cdot SX)}"'{pos=0.6}, color={rgb,255:red,250;green,56;blue,69}, from=10-3, to=12-1]
	\arrow[""{name=14, anchor=center, inner sep=0}, "{ST(ST[X,X,\mathbf{a}]+ 2 \cdot SX)}"{description, pos=0.4}, color={rgb,255:red,250;green,56;blue,69}, from=11-3, to=13-1]
	\arrow["{ST[Sin_1,Sin_2,Sin_3]}", from=11-3, to=13-3]
	\arrow["S\delta", from=13-3, to=14-3]
	\arrow["{\mu^S}", from=14-3, to=15-3]
	\arrow["{ST(T(2\cdot X + \mathbf{a}) + 2\cdot X)}", color={rgb,255:red,250;green,56;blue,69}, from=15-3, to=16-3]
	\arrow["{ST(T[X,X,X] + 2\cdot X)}", from=16-3, to=16-7]
	\arrow["{ST[\mathbf{a},X,X]}", color={rgb,255:red,250;green,56;blue,69}, from=16-7, to=18-9]
	\arrow["{S\mathbf{a}}", from=18-9, to=19-12]
	\arrow["{ST(T \uparrow^T + 2 \cdot X)}", from=15-3, to=15-5]
	\arrow["{ST(\mu^T(2\cdot X + X) + 2\cdot \eta^T_X)}", from=15-5, to=15-7]
	\arrow["{STT[X,X,X,X,X]}", from=15-8, to=15-9]
	\arrow[""{name=15, anchor=center, inner sep=0}, "{STS(T[X,X,\mathbf{a}] + 2 \cdot X)}"{description}, color={rgb,255:red,250;green,56;blue,69}, from=13-3, to=14-2]
	\arrow["{STS(\mathbf{a} + 2 \cdot X)}", color={rgb,255:red,250;green,56;blue,69}, from=14-2, to=15-2]
	\arrow[""{name=16, anchor=center, inner sep=0}, "{ST[Sin_1,Sin_2,Sin_3]}"{pos=0.9}, from=13-1, to=14-2]
	\arrow[""{name=17, anchor=center, inner sep=0}, "{ST[Sin_1,Sin_2,Sin_3]}"{description}, from=14-1, to=15-2]
	\arrow["{STS[X,X,X]}"{description}, from=15-2, to=17-1]
	\arrow[""{name=18, anchor=center, inner sep=0}, "{\mu^S_{TX}}"{description}, from=18-1, to=18-9]
	\arrow["{ST [Tin_1,Tin_2,Tin_3]}", from=15-7, to=15-8]
	\arrow["{ST[Sin_1,Sin_2,Sin_3]}"'{pos=0.7}, from=11-7, to=13-7]
	\arrow["S\delta", from=13-7, to=14-7]
	\arrow["{\mu^S}", from=14-7, to=15-7]
	\arrow[""{name=19, anchor=center, inner sep=0}, "{STS[Tin_1,Tin_2,Tin_3]}"{description}, from=13-7, to=11-9]
	\arrow["S\delta"{description}, from=11-9, to=13-9]
	\arrow[""{name=20, anchor=center, inner sep=0}, "{\mu^S}"{description}, from=13-9, to=15-8]
	\arrow[""{name=21, anchor=center, inner sep=0}, "{S\mu^T_X}", from=15-9, to=11-12]
	\arrow["{\text{(proved in Coq)}}"{description}, draw=none, from=15-9, to=18-9]
	\arrow["{SS T \Gamma^\bullet_{\mu^S T |X} SX}"{description}, from=4-5, to=6-5]
	\arrow["{ST \uparrow^S}"', from=4-1, to=4-3]
	\arrow["{STS\Gamma^\bullet_{S\delta | X} SX}"', from=4-3, to=4-4]
	\arrow["{STS \Gamma^\bullet_{\mu^S T |X} SX}"', from=4-4, to=6-4]
	\arrow["{\text{(easy)}}"{description}, draw=none, from=4-3, to=6-3]
	\arrow["{STST(\mathbf{b}+2\cdot X + 2 \cdot X)}"', color={rgb,255:red,1;green,90;blue,24}, from=10-9, to=11-9]
	\arrow["{\text{(proved in coq)}}"{description}, draw=none, from=11-5, to=15-5]
	\arrow["{\text{(easy)}}"{description}, Rightarrow, draw=none, from=9-7, to=10-7]
	\arrow["{\text{(easy)}}"{description}, Rightarrow, draw=none, from=10-7, to=11-7]
	\arrow["{\mu^{ST}}", from=8-10, to=9-10]
	\arrow["{[\Gamma^\bullet_{ST|X},\alpha,\beta]}"', from=9-10, to=9-12]
	\arrow["{\mu^{ST}}"', from=9-9, to=9-10]
	\arrow["{\text{(easy)}}"'{pos=0.6}, curve={height=24pt}, Rightarrow, draw=none, from=6, to=7]
	\arrow["{\text{(easy)}}"'{pos=0.4}, Rightarrow, draw=none, from=9, to=8-5]
	\arrow["{\text{(easy)}}"', Rightarrow, draw=none, from=7, to=8]
	\arrow["{\text{(easy)}}"', Rightarrow, draw=none, from=8, to=13]
	\arrow["{\text{(easy)}}"', Rightarrow, draw=none, from=13, to=14]
	\arrow["{\text{(easy)}}"{description}, Rightarrow, draw=none, from=14, to=15]
	\arrow["{\text{(easy)}}"{description}, Rightarrow, draw=none, from=16, to=17]
	\arrow["{\text{(easy)}}"{description, pos=0.3}, Rightarrow, draw=none, from=17, to=17-1]
	\arrow["{\text{(naturality of } S\delta ; \mu^S \text{)}}"{description}, Rightarrow, draw=none, from=16-3, to=18]
	\arrow["{\text{(naturality of } \mu^S \text{)}}"{description}, Rightarrow, draw=none, from=18, to=2]
	\arrow["{\text{(easy)}}"{description}, curve={height=-6pt}, Rightarrow, draw=none, from=10, to=4-5]
	\arrow["{\text{(naturality of } \delta \text{)}}"{description}, Rightarrow, draw=none, from=5, to=9]
	\arrow["{\text{(Lemma~\ref{lem:coh:hbulletomegamu})}}"{description, pos=0.6}, Rightarrow, draw=none, from=3, to=4-4]
	\arrow["{\text{(associativity of } \mu^S \text{)}}"{description}, Rightarrow, draw=none, from=0, to=3]
	\arrow["{\text{(functoriality of } \Gamma \text{)}}"{description}, Rightarrow, draw=none, from=1, to=4]
	\arrow["{\text{(Lemma~\ref{lem:coh:hbulletomegamu})}}"{description}, Rightarrow, draw=none, from=4, to=8-9]
	\arrow["{\text{(easy)}}"{description}, Rightarrow, draw=none, from=11-7, to=19]
	\arrow["{\text{(easy)}}"{description}, Rightarrow, draw=none, from=11, to=12]
	\arrow["{\text{(easy)}}"{description}, Rightarrow, draw=none, from=11-9, to=21]
	\arrow["{\text{(naturality of } S\delta ; \mu^S \text{)}}"{description}, Rightarrow, draw=none, from=19, to=20]
\end{tikzcd}
    \end{turn}}%
  \caption{Multiplication is an algebra morphism}
  \label{fig:qmualg}
\end{figure*}
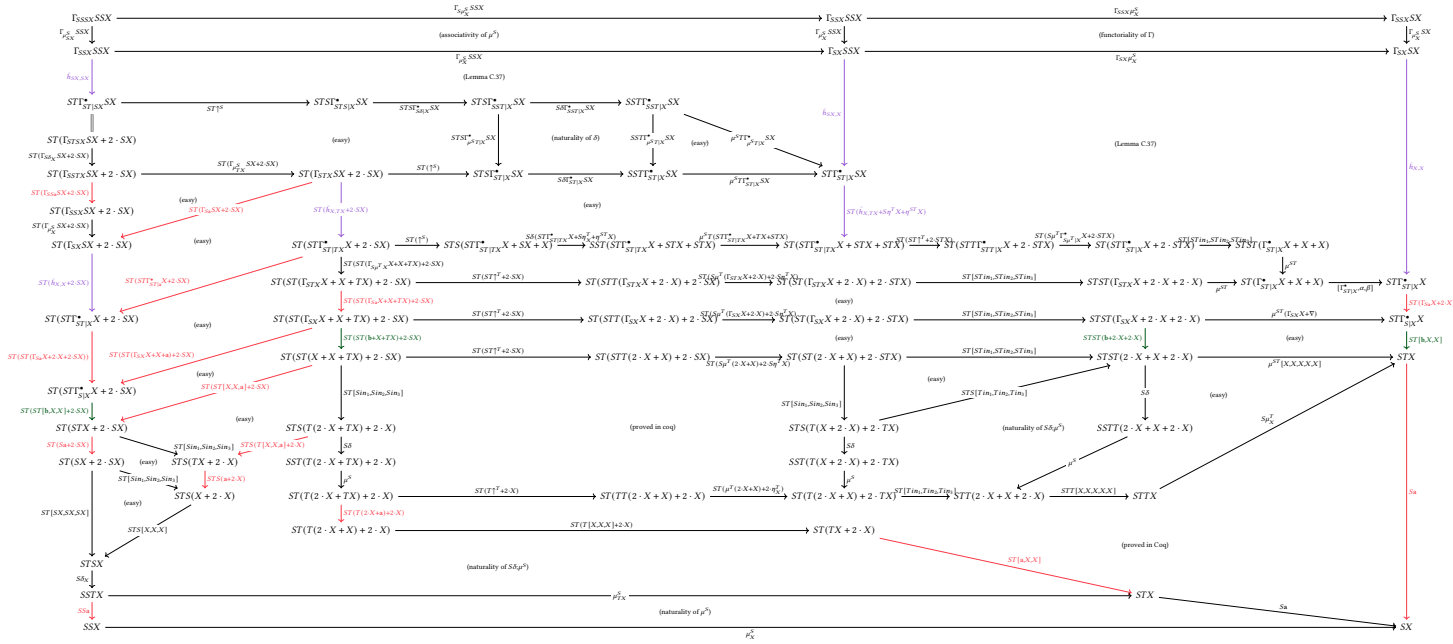

It remains to show:
\begin{proposition}\label{prop:d:delta}
  The distributive law $d_{/δ}∶ (T⊕T')S → S(T⊕T')$ satisfies~\eqref{eq:d:delta},
  which we repeat here for convenience.
  \begin{center}
% file:///home/thirs/github/quiver/src/coherence-structural-equational-system-interp1-term1.html?q=WzAsMTEsWzAsMCwizpNfWCDOo1giXSxbMCwyLCLOk197U1h9U1giXSxbMCw1LCIoVCDiipUgVCcpU1giXSxbMiw1LCJTKFTiipVUJylYIl0sWzIsMCwiU1QozpNfe1NUWH1YICsgWCArIFgpIl0sWzAsMywizpNfe1NTWH1TWCJdLFsyLDIsIlNUKFQnVFggKyBYKSJdLFsyLDQsIlMoVOKKlVQnKShU4oqVVCcpWCJdLFswLDQsIlQnU1giXSxbMiwzLCJTVCgoVCDiipUgVCcpKFTiipVUJylYICsgWCkiXSxbMiwxLCJTVCjOk197U1RYfVRYICsgWCkiXSxbMCw0LCJkX3tYLFh9Il0sWzAsMSwizpNfe863XlNfWH0gzrdfe86jLFh9IiwyXSxbMSw1LCLOk197zrdeU197U1h9fVNYIiwyXSxbMiwzLCJkX3svzrR9WCIsMl0sWzcsMywiU868XntU4oqVVCd9X1giXSxbNSw4LCLOt197zpNfUyxTWH0iLDJdLFs4LDIsImluXzIgU1giLDJdLFs5LDcsIlNpbuKCgVvOvF57VOKKlVQnfV9YLCDOt157VOKKlVQnfVhdIl0sWzYsOSwiU1QoaW7igoIgaW7igoEgWCArIFgpIl0sWzQsMTAsIlNUKM6TX3tTVFh9IM634bWAX1ggKyBbWCwgWF0pIl0sWzEwLDYsIlNUKFxcZXRhX3tcXEdhbW1hX1MsVFh9ICsgWCkiXV0=
\begin{tikzcd}[ampersand replacement=\&]
	{Γ_X ΣX} \&\& {ST(Γ_{STX}X + X + X)} \\
	\&\& {ST(Γ_{STX}TX + X)} \\
	{Γ_{SX}SX} \&\& {ST(T'TX + X)} \\
	{Γ_{SSX}SX} \&\& {ST((T ⊕ T')(T⊕T')X + X)} \\
	{T'SX} \&\& {S(T⊕T')(T⊕T')X} \\
	{(T ⊕ T')SX} \&\& {S(T⊕T')X}
	\arrow["{d_{X,X}}", from=1-1, to=1-3]
	\arrow["{Γ_{η^S_X} η_{Σ,X}}"', from=1-1, to=3-1]
	\arrow["{Γ_{η^S_{SX}}SX}"', from=3-1, to=4-1]
	\arrow["{d_{/δ}X}"', from=6-1, to=6-3]
	\arrow["{Sμ^{T⊕T'}_X}", from=5-3, to=6-3]
	\arrow["{η_{Γ_S,SX}}"', from=4-1, to=5-1]
	\arrow["{in_2 SX}"', from=5-1, to=6-1]
	\arrow["{Sin₁[μ^{T⊕T'}_X, η^{T⊕T'}X]}", from=4-3, to=5-3]
	\arrow["{ST(in₂ in₁ X + X)}", from=3-3, to=4-3]
	\arrow["{ST(Γ_{STX} ηᵀ_X + [X, X])}", from=1-3, to=2-3]
	\arrow["{ST(\eta_{\Gamma_S,TX} + X)}", from=2-3, to=3-3]
      \end{tikzcd}
    \end{center}  
\end{proposition}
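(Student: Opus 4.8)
The plan is to unfold the distributive law $d_{/δ}$ through the chain of constructions that produced it — the incremental lifting of Proposition~\ref{prop:mualg}, its repackaging as a genuine $(T⊕T')$-lifting of $S$ (Corollary~\ref{cor:incr:liftings}), and Beck's correspondence converting that lifting into a distributive law (Lemma~\ref{lem:dist:lifting}, Corollary~\ref{cor:incrlift:distlaw}) — and then to recognise the resulting composite by means of the last coherence law of Lemma~\ref{lem:coh:hbulletomegamu}. Write $W ≔ (T⊕T')X$. Beck's correspondence expresses the component as $d_{/δ}X = \hat{μ} \circ (T⊕T')S η^{T⊕T'}_X$, where $\hat{μ}∶ (T⊕T')SW → SW$ is the $(T⊕T')$-algebra structure that the lifting $S^{δ,T'}$ puts on $SW$, obtained by applying $S^{δ,T'}$ to the \emph{free} $(T⊕T')$-algebra $(W,μ^{T⊕T'}_X)$. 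Its $T$-part is the ordinary $δ$-lifting $S𝐚 \circ δ_W$, and its $T'$-part — equivalently, through $T'\Alg ≅ Γ_S\alg$, a $Γ_S$-algebra structure $c_{SW}∶ Γ_S(SW) → SW$ — is exactly the Proposition~\ref{prop:mualg} composite instantiated at $W$, with $𝐚 = μ^{T⊕T'}_X \circ in_1$ and $𝐛 = μ^{T⊕T'}_X \circ in_2 \circ η_{Γ_S,W}$ the free-algebra structures.

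The first step is to reduce the left-hand (bottom) composite of~\eqref{eq:d:delta} to $c_{SW}$ precomposed with an embedding. Since that composite factors through $in_2 \circ η_{Γ_S,SX}$, naturality of the monad morphism $in_2∶ T' → T⊕T'$ and of the unit $η_{Γ_S}∶ Γ_S → T'$, together with the identity $\hat{μ} \circ in_2 \circ η_{Γ_S,SW} = c_{SW}$, give
\[
  d_{/δ}X \circ in_2 \circ η_{Γ_S,SX} = c_{SW} \circ Γ_S(S η^{T⊕T'}_X).
\]
Prepending the remaining two maps $Γ_{η^S_X}η_{Σ,X}$ and $Γ_{η^S_{SX}}SX$ and simplifying the two arguments of $Γ$ by the monad unit laws for $S$ (so that the nested $η^S$'s collapse against the $μ^S_W$ hidden in the head of $c_{SW}$), the entire left column becomes $c_{SW}$ precomposed with $Γ_{η^S_W}η_{Σ,W} \circ Γ(Σ η^{T⊕T'}_X,\,η^{T⊕T'}_X)$.

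Next I would unfold $c_{SW}$ by Proposition~\ref{prop:mualg}. Its leading factor $h^{•ω}_{W,W} \circ Γ_{μ^S_W}SW$, precomposed with the embedding just obtained, simplifies to $h^{•ω}_{W,W} \circ Γ_{η^S_W}η_{Σ,W} \circ Γ(Σ η^{T⊕T'}_X,\,η^{T⊕T'}_X)$. The last law of Lemma~\ref{lem:coh:hbulletomegamu} rewrites $h^{•ω}_{W,W} \circ Γ_{η^S_W}η_{Σ,W}$ as $h_{W,W} = d_{W,W}$, and naturality of the incremental structural law $d$ in both variables at $η^{T⊕T'}_X∶ X → W$ turns $d_{W,W} \circ Γ(Σ η^{T⊕T'}_X,\,η^{T⊕T'}_X)$ into $ST(ν) \circ d_{X,X}$, where $ν∶ Γ_{STX}X + X + X → Γ_{STW}W + W + W$ is the coercion induced by $η^{T⊕T'}_X$. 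This exhibits precisely the factor $d_{X,X}$ heading the right column of~\eqref{eq:d:delta}.

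The remaining task, and the main obstacle, is to verify that the postcomposition surviving from $c_{SW}$ — namely $S𝐚 \circ ST[𝐛,W] \circ ST(Γ_{S𝐚}W + [W,W])$ preceded by the naturality coercion $ST(ν)$ — coincides with the right column $Sμ^{T⊕T'}_X \circ Sin_1[μ^{T⊕T'}_X,η^{T⊕T'}_X] \circ ST(in_2\,in_1\,X + X) \circ ST(η_{Γ_S,TX}+X) \circ ST(Γ_{STX}η^T_X + [X,X])$. Substituting the free-algebra structures $𝐚 = μ^{T⊕T'}_X \circ in_1$ and $𝐛 = μ^{T⊕T'}_X \circ in_2 \circ η_{Γ_S}$ and chasing the coproduct injections $in_1, in_2$, the $T'$-unit $η_{Γ_S}$, the $T$-unit $η^T$ and the multiplication $μ^{T⊕T'}$ through naturality and the monad and coproduct laws, this is a routine but bookkeeping-heavy diagram chase. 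I expect it to be the only delicate point, since it is exactly where the abstract algebra structures of the free $(T⊕T')$-algebra $W$ must be matched, term by term on the coproduct $Γ_{STX}X + X + X$, against the explicit injections and units displayed in~\eqref{eq:d:delta}.
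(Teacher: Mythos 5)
Your proposal is correct and takes essentially the same route as the paper: the paper likewise unfolds $d_{/δ}$ at a free $(T⊕T')$-algebra via Beck's correspondence (packaged in its Lemma~\ref{lem:d:delta:Y} as the fact that $d_{/δ}Y$ is a $Γ_S$-algebra morphism into the lifted structure of Proposition~\ref{prop:mualg}), invokes the same final coherence law of Lemma~\ref{lem:coh:hbulletomegamu} to recover $d_{X,X}$, and concludes with the same naturality/termwise chase on the coproduct. The bookkeeping step you flag as the only delicate point does go through exactly as you expect, using $in_1∘η^T_X = η^{T⊕T'}_X$, $𝐚∘Tη^{T⊕T'}_X = in_1 X$, and naturality of $η_{Γ_S}$ and $in_2$ to match the three summands.
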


We need the following intermediate result.
\begin{lemma}\label{lem:d:delta:Y}
  For any object $Y$, the following diagram commutes.
  \begin{center}
% file:///home/thirs/github/quiver/src/coherence-structural-equational-system-interp1-term1.html?q=WzAsOSxbMiwwLCLOk197U1l9IFNZIl0sWzQsMCwiU1QozpNfe1NUWX0gWSArIFkgKyBZKSJdLFs0LDEsIlNUKM6TX3tTVFl9IFRZICsgWSkiXSxbNCwyLCJTVChUJ1RZICsgWSkiXSxbMCw0LCIoVCDiipUgVCcpU1kiXSxbMCwxLCJUJ1NZIl0sWzQsNCwiUyhU4oqVVCcpWSJdLFswLDAsIs6TX3tTU1l9IFNZIl0sWzQsMywiUyhU4oqVVCcpKFTiipVUJylZIl0sWzAsMSwiZF574oCiz4l9X3tZLFl9Il0sWzEsMiwiU1QozpNfe1NUWX0gXFxldGFeVF9ZICsgW1ksIFldKSJdLFs1LDQsImlu4oKCU1kiLDJdLFs0LDYsImRfey/OtH17WX0iLDJdLFs3LDAsIs6TX3vOvF5TWX0gU1kiXSxbNyw1LCLOt197zpNfUyxTWX0iLDJdLFsyLDMsIlNUKM63X3vOk19TLFRZfSArIFkpIl0sWzgsNiwiU1xcbXVee1RcXG9wbHVzIFQnfV9ZIl0sWzMsOCwiU2luXzFbKFxcbXVee1QgXFxvcGx1cyBUJ31fWSBcXGNpcmMgaW7igoJpbl8xKSxcXGV0YV57VFxcb3BsdXMgVCd9X1ldIl1d
\begin{tikzcd}[ampersand replacement=\&]
	{Γ_{SSY} SY} \&\& {Γ_{SY} SY} \&\& {ST(Γ_{STY} Y + Y + Y)} \\
	{T'SY} \&\&\&\& {ST(Γ_{STY} TY + Y)} \\
	\&\&\&\& {ST(T'TY + Y)} \\
	\&\&\&\& {S(T⊕T')(T⊕T')Y} \\
	{(T ⊕ T')SY} \&\&\&\& {S(T⊕T')Y}
	\arrow["{d^{•ω}_{Y,Y}}", from=1-3, to=1-5]
	\arrow["{ST(Γ_{STY} \eta^T_Y + [Y, Y])}", from=1-5, to=2-5]
	\arrow["{in₂SY}"', from=2-1, to=5-1]
	\arrow["{d_{/δ}{Y}}"', from=5-1, to=5-5]
	\arrow["{Γ_{μ^SY} SY}", from=1-1, to=1-3]
	\arrow["{η_{Γ_S,SY}}"', from=1-1, to=2-1]
	\arrow["{ST(η_{Γ_S,TY} + Y)}", from=2-5, to=3-5]
	\arrow["{S\mu^{T\oplus T'}_Y}", from=4-5, to=5-5]
	\arrow["{Sin_1[(\mu^{T \oplus T'}_Y \circ in₂in_1),\eta^{T\oplus T'}_Y]}", from=3-5, to=4-5]
      \end{tikzcd}
      % alternative:
      % file:///home/thirs/github/quiver/src/coherence-structural-equational-system-interp1-term1.html?q=WzAsOSxbMiwwLCLOk197U1l9IFNZIl0sWzQsMCwiU1QozpNfe1NUWX0gWSArIFkgKyBZKSJdLFs0LDEsIlNUKM6TX3tTKFRcXG9wbHVzIFQnKVl9IChUXFxvcGx1cyBUJylZICsgWSkiXSxbNCwyLCJTVChUJyhUXFxvcGx1cyBUJylZICsgWSkiXSxbMCw0LCIoVCDiipUgVCcpU1kiXSxbMCwxLCJUJ1NZIl0sWzQsNCwiUyhU4oqVVCcpWSJdLFswLDAsIs6TX3tTU1l9IFNZIl0sWzQsMywiUyhU4oqVVCcpKFTiipVUJylZIl0sWzAsMSwiZF574oCiz4l9X3tZLFl9Il0sWzEsMiwiU1QozpNfe1MgaW5fMX0gXFxldGFee1RcXG9wbHVzIFQnfV9ZICsgW1ksIFldKSJdLFs1LDQsImlu4oKCU1kiLDJdLFs0LDYsImRfey/OtH17WX0iLDJdLFs3LDAsIs6TX3vOvF5TWX0gU1kiXSxbNyw1LCLOt197zpNfUyxTWX0iLDJdLFsyLDMsIlNUKM63X3vOk19TLChUXFxvcGx1cyBUJylZfSArIFkpIl0sWzgsNiwiU1xcbXVee1RcXG9wbHVzIFQnfV9ZIl0sWzMsOCwiU2luXzFbKFxcbXVee1QgXFxvcGx1cyBUJ31fWSBcXGNpcmMgaW7igoIpLFxcZXRhXntUXFxvcGx1cyBUJ31fWV0iXV0=
    \end{center}
\end{lemma}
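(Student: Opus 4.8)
The plan is to unfold the distributive law $d_{/δ}$ through the Beck correspondence and reduce the whole statement to the explicit description of the incremental lifting of Proposition~\ref{prop:mualg}, then transport the transformation $h^{•ω}$ (which is exactly the $d^{•ω}$ occurring in the diagram, instantiated at the structural law $d$) from the free $(T⊕T')$-algebra on $Y$ back to $Y$ itself using its coherence and naturality laws. Concretely, by Corollaries~\ref{cor:incr:liftings} and~\ref{cor:incrlift:distlaw}, $d_{/δ}$ is the distributive law attached by Lemma~\ref{lem:dist:lifting} to the $(T⊕T')$-lifting $S^{δ,T'}$ of $S$, whose value on a free algebra is computed by Proposition~\ref{prop:mualg}. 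Unwinding Beck's recovery formula, I would write $d_{/δ}Y = b_Y ∘ (T⊕T')Sη^{T⊕T'}_Y$, where $b_Y∶ (T⊕T')S((T⊕T')Y) → S((T⊕T')Y)$ is the structure obtained by applying the lifting to the free algebra $((T⊕T')Y, μ^{T⊕T'}_Y)$.

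Next I would precompose with the left-hand edge $in₂ SY ∘ η_{Γ_S,SY}$ and collapse it. Since $in₂∶ T' → T⊕T'$ is a monad morphism, its naturality pushes $(T⊕T')Sη^{T⊕T'}_Y$ through $in₂$, and $b_Y ∘ (in₂)_{S((T⊕T')Y)}$ is precisely the $T'$-component $b'_Y$ of $b_Y$, which by Lemma~\ref{lem:monad:coprod:alg} is the $T'$-lifting of the $Γ_SΔ$-algebra structure $c'_Y$ furnished by Proposition~\ref{prop:mualg}. Using naturality of the generator inclusion $η_{Γ_S}∶ Γ_SΔ → T'$ and the identity $b'_Y ∘ η_{Γ_S,-} = c'_Y$, the left-hand composite reduces to $c'_Y ∘ Γ_{SSη^{T⊕T'}_Y}(Sη^{T⊕T'}_Y)$, i.e.\ the Proposition~\ref{prop:mualg} structure at $X ≔ (T⊕T')Y$ precomposed with reindexing along $η^{T⊕T'}_Y$.

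The crux is then to substitute the explicit formula for $c'_Y$ — with $𝐚$ the $T$-part and $𝐛$ the $Γ_SΔ$-part of $μ^{T⊕T'}_Y$, both read off via Lemma~\ref{lem:monad:coprod:alg} — and transport the factor $h^{•ω}_{(T⊕T')Y,(T⊕T')Y}$ occurring in it back to $h^{•ω}_{Y,Y} = d^{•ω}_{Y,Y}$. Naturality of $μ^S$ lets me merge $Γ_{μ^S_X}SX$ with the reindexing, and naturality of $h^{•ω}$ in both variables (Lemma~\ref{lem:hbulletomega:piml}) with $f = g = η^{T⊕T'}_Y$ then slides the double reindexing past $h^{•ω}$, leaving exactly $d^{•ω}_{Y,Y} ∘ Γ_{μ^S_Y}SY$ as the leading composite, matching the top row. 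What remains is a bookkeeping identity for the tails: I must check that, once the reindexing along $η^{T⊕T'}_Y$ is carried through, the maps $S𝐚$, $ST[𝐛,X]$ and $ST(Γ_{S𝐚}X + [X,X])$ reassemble into the top-row maps $ST(Γ_{STY}η^T_Y+[Y,Y])$, $ST(η_{Γ_S,TY}+Y)$, $Sin_1[(μ^{T⊕T'}_Y∘in₂in_1),η^{T⊕T'}_Y]$ and $Sμ^{T⊕T'}_Y$. The decisive sub-identities are the two unit cancellations obtained by expanding $𝐚$ and $𝐛$ as the restrictions of $μ^{T⊕T'}_Y$ along $in_1∶ T → T⊕T'$ and $in_2∶ T' → T⊕T'$, which convert the $η^T_Y$ and $η_{Γ_S,TY}$ inserted by the reindexing into the coprojections into $(T⊕T')(T⊕T')Y$ appearing on the right.

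I expect this final reindexing-and-bookkeeping step to be the main obstacle: the difficulty is organisational rather than conceptual, since one must simultaneously keep track of the two interleaved monad structures $T$ and $T'$ inside $T⊕T'$, of every coproduct injection and unit, and of the reduction of the doubly-reindexed $h^{•ω}_{(T⊕T')Y}$ to $h^{•ω}_Y$. All the required ingredients are already packaged in Lemma~\ref{lem:coh:hbulletomegamu}, Lemma~\ref{lem:hbulletomega:piml}, and the decomposition of Lemma~\ref{lem:monad:coprod:alg}, so the cleanest presentation is a single large diagram chase anchored on Proposition~\ref{prop:mualg}.
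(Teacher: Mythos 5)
Your proposal is correct and takes essentially the same route as the paper: its proof of Lemma~\ref{lem:d:delta:Y} is the single diagram chase of Figure~\ref{fig:d:delta:Y}, whose two non-trivial ingredients are exactly the ones you isolate --- that $d_{/δ}Y$, being the Beck transform of the incremental lifting, is a $Γ_S$-algebra morphism from the free $(T⊕T')$-algebra on $SY$ to the Proposition~\ref{prop:mualg} structure at $X=(T⊕T')Y$, together with naturality of $d^{•ω}$ at the unit $η^{T⊕T'}_Y$. The final unit-cancellation bookkeeping you describe is precisely the region marked ``(easy)'' in the paper's figure.
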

\begin{proof}
  Up to some easy rewriting of the right-hand composite, this is
  proved in Figure~\ref{fig:d:delta:Y}.
\end{proof}
\begin{figure}[p]
  \adjustbox{max width=\columnwidth,max height=.95\textheight}{
  \begin{turn}{90}
\begin{tikzcd}[ampersand replacement=\&]
	{Γ_{SSY} SY} \&\&\& {Γ_{SY} SY} \&\&\& {ST(Γ_{STY} Y + Y + Y)} \\
	{T'SY} \\
	\\
	\& {\Gamma_{S(T\oplus T')SY} (T \oplus T')SY} \&\& {\Gamma_{SS(T\oplus T')Y} S(T \oplus T')Y} \& {\Gamma_{S(T\oplus T')Y} S(T \oplus T')Y} \\
	\& {T'(T \oplus T')SY} \&\&\& {ST(\Gamma_{ST(T\oplus T')Y} (T \oplus T')Y + (T \oplus T')Y + (T \oplus T')Y)} \&\& {ST(Γ_{S(T\oplus T')Y} (T\oplus T')Y + Y)} \\
	\& {(T \oplus T')(T \oplus T')SY} \&\&\& {ST(\Gamma_{S(T\oplus T')Y} (T \oplus T')Y + (T \oplus T')Y)} \\
	\\
	\&\&\&\& {ST(T'(T \oplus T')Y + (T \oplus T')Y)} \&\& {ST(T'(T\oplus T')Y + Y)} \\
	\&\&\&\& {ST((T \oplus T')(T \oplus T')Y + (T \oplus T')Y)} \\
	\&\&\&\& {ST(T \oplus T')Y} \&\& {S(T⊕T')(T⊕T')Y} \\
	\&\&\&\& {S(T \oplus T')(T \oplus T')Y} \\
	\\
	{(T ⊕ T')SY} \&\&\&\&\& {S(T⊕T')Y}
	\arrow["{d^{•ω}_{Y,Y}}"', from=1-4, to=1-7]
	\arrow["{ST(Γ_{S in_1} \eta^{T\oplus T'}_Y + [Y, Y])}"{description}, from=1-7, to=5-7]
	\arrow["{in₂SY}"', from=2-1, to=13-1]
	\arrow["{\eta_\Gamma}", color={rgb,255:red,33;green,131;blue,33}, from=4-2, to=5-2]
	\arrow["{in_2}"{description}, from=5-2, to=6-2]
	\arrow["\mu"{description}, from=6-2, to=13-1]
	\arrow["{\Gamma_{Sd_{/\delta} Y}d_{/\delta}Y}"{description}, from=4-2, to=4-4]
	\arrow["{\Gamma_\mu}", from=4-4, to=4-5]
	\arrow["{d^{\bullet \omega}}", from=4-5, to=5-5]
	\arrow["{ST(\Gamma_{S(in_1;\mu)Y} (T \oplus T')Y + [ (T \oplus T')Y,(T \oplus T')Y])}"', from=5-5, to=6-5]
	\arrow["{S in_1}"', from=10-5, to=11-5]
	\arrow["\eta"{description}, curve={height=18pt}, from=2-1, to=5-2]
	\arrow["\eta"{description}, from=1-4, to=4-5]
	\arrow["\eta"{description}, from=1-7, to=5-5]
	\arrow[""{name=0, anchor=center, inner sep=0}, "\eta"{description}, from=5-7, to=6-5]
	\arrow[""{name=1, anchor=center, inner sep=0}, "{d_{/δ}{Y}}"', from=13-1, to=13-6]
	\arrow["{Γ_{μ^SY} SY}"', from=1-1, to=1-4]
	\arrow["{η_{Γ_S,SY}}"', color={rgb,255:red,33;green,131;blue,33}, from=1-1, to=2-1]
	\arrow["\eta"', from=1-1, to=4-2]
	\arrow["\eta"{description}, curve={height=24pt}, from=1-1, to=4-4]
	\arrow["{ST[\mu^{T \oplus T'}_Y,(T \oplus T')Y]}"', from=9-5, to=10-5]
	\arrow["{ST(\eta_{\Gamma_S}+(T \oplus T')Y)}"', color={rgb,255:red,33;green,131;blue,33}, from=6-5, to=8-5]
	\arrow["{ST(in_2+(T \oplus T')Y)}"', from=8-5, to=9-5]
	\arrow["{S \mu}"', from=11-5, to=13-6]
	\arrow["{ST(η_{Γ_S,(T\oplus T')Y} + Y)}", color={rgb,255:red,33;green,131;blue,33}, from=5-7, to=8-7]
	\arrow["{S\mu^{T\oplus T'}_Y}", from=10-7, to=13-6]
	\arrow["{Sin_1[(\mu^{T \oplus T'}_Y \circ in₂), \eta^{T\oplus T'}_Y]}", from=8-7, to=10-7]
	\arrow["{\text{(} d_{/ \delta} Y \text{ is a } \Gamma_S\text{-algebra morphism)}}"{description}, Rightarrow, draw=none, from=1, to=4-4]
	\arrow["{\text{(easy)}}"{description}, Rightarrow, draw=none, from=0, to=13-6]
      \end{tikzcd}
    \end{turn}}
\caption{Proof of Lemma~\ref{lem:d:delta:Y}}
\label{fig:d:delta:Y}
\end{figure}

\begin{proof}[Proof of Proposition~\ref{prop:d:delta}]
  By diagram chasing, as follows.
  \begin{center}\hfill
% file:///home/thirs/github/quiver/src/index.html?q=WzAsMTEsWzAsMCwiXFxHYW1tYV9YIFxcU2lnbWEgWCJdLFswLDEsIlxcR2FtbWFfe1NYfSBTIFgiXSxbMCwyLCJcXEdhbW1hX3tTU1h9IFMgWCJdLFswLDMsIlQnU1giXSxbMCw0LCIoVCBcXG9wbHVzIFQnKVNYIl0sWzYsMCwiU1QgKFxcR2FtbWFfe1NUWH0gWCArIFggKyBYKSJdLFs2LDEsIlNUIChUJ1RYICsgWCkiXSxbNiwyLCJTVCAoKFQgXFxvcGx1cyBUJykoVCBcXG9wbHVzIFQnKVggKyBYKSJdLFs2LDMsIlMoVCBcXG9wbHVzIFQnKSAoVCBcXG9wbHVzIFQnKVgiXSxbNiw0LCJTKFQgXFxvcGx1cyBUJykgWCJdLFsyLDIsIlxcR2FtbWFfe1NYfSBTIFgiXSxbMCwxLCJcXEdhbW1hX3tcXGV0YV5TfSBcXGV0YV97XFxTaWdtYSxYfSIsMl0sWzEsMiwiXFxHYW1tYV97XFxldGFeU197U1h9fVNYIiwyXSxbMiwzLCJcXGV0YV97XFxHYW1tYV9TLCBTWH0iLDJdLFszLDQsImluXzIiLDJdLFswLDUsImRfe1gsWH0iXSxbMSw1LCJkXntcXGJ1bGxldFxcb21lZ2F9X3tYLFh9Il0sWzUsNiwiU1QoXFxldGFfe1xcR2FtbWFfU30gXFxldGFeVF9YICsgW1gsIFhdKSJdLFs2LDcsIlNUKGluXzIgaW5fMSArIFgpIl0sWzcsOCwiU2luXzFbXFxtdV57VCBcXG9wbHVzIFQnfV9YLFxcZXRhXntUXFxvcGx1cyBUJ31fWF0iXSxbOCw5LCJTXFxtdV57VCBcXG9wbHVzIFQnfV9YIl0sWzQsOSwiZF97ey99IFxcZGVsdGF9IFgiLDJdLFsyLDEwLCJcXEdhbW1hX3tcXG11XlNfWH0gU1giLDJdLFsxMCw1LCJkXntcXGJ1bGxldFxcb21lZ2F9X3tYLFh9Il0sWzEsMTAsIiIsMCx7ImxldmVsIjoyLCJzdHlsZSI6eyJoZWFkIjp7Im5hbWUiOiJub25lIn19fV0sWzIzLDIxLCJcXHRleHR7KExlbW1hflxccmVme2xlbTpkOmRlbHRhOll9KX0iLDEseyJsYWJlbF9wb3NpdGlvbiI6NjAsInN0eWxlIjp7ImJvZHkiOnsibmFtZSI6Im5vbmUifSwiaGVhZCI6eyJuYW1lIjoibm9uZSJ9fX1dXQ==
\begin{tikzcd}[ampersand replacement=\&,baseline=(\tikzcdmatrixname-5-1.base)]
	{\Gamma_X \Sigma X} \&\&\&\&\&\& {ST (\Gamma_{STX} X + X + X)} \\
	{\Gamma_{SX} S X} \&\&\&\&\&\& {ST (T'TX + X)} \\
	{\Gamma_{SSX} S X} \&\& {\Gamma_{SX} S X} \&\&\&\& {ST ((T \oplus T')(T \oplus T')X + X)} \\
	{T'SX} \&\&\&\&\&\& {S(T \oplus T') (T \oplus T')X} \\
	{(T \oplus T')SX} \&\&\&\&\&\& {S(T \oplus T') X}
	\arrow["{\Gamma_{\eta^S} \eta_{\Sigma,X}}"', from=1-1, to=2-1]
	\arrow["{\Gamma_{\eta^S_{SX}}SX}"', from=2-1, to=3-1]
	\arrow["{\eta_{\Gamma_S, SX}}"', from=3-1, to=4-1]
	\arrow["{in_2}"', from=4-1, to=5-1]
	\arrow["{d_{X,X}}", from=1-1, to=1-7]
	\arrow["{d^{\bullet\omega}_{X,X}}", from=2-1, to=1-7]
	\arrow["{ST(\eta_{\Gamma_S} \eta^T_X + [X, X])}", from=1-7, to=2-7]
	\arrow["{ST(in_2 in_1 + X)}", from=2-7, to=3-7]
	\arrow["{Sin_1[\mu^{T \oplus T'}_X,\eta^{T\oplus T'}_X]}", from=3-7, to=4-7]
	\arrow["{S\mu^{T \oplus T'}_X}", from=4-7, to=5-7]
	\arrow[""{name=0, anchor=center, inner sep=0}, "{d_{{/} \delta} X}"', from=5-1, to=5-7]
	\arrow["{\Gamma_{\mu^S_X} SX}"', from=3-1, to=3-3]
	\arrow[""{name=1, anchor=center, inner sep=0}, "{d^{\bullet\omega}_{X,X}}", from=3-3, to=1-7]
	\arrow[Rightarrow, no head, from=2-1, to=3-3]
	\arrow["{\text{(Lemma~\ref{lem:d:delta:Y})}}"{description, pos=0.6}, Rightarrow, draw=none, from=1, to=0]
\end{tikzcd} \qedhere
  \end{center}
  
\end{proof}

\section{Proof of Theorem~\ref{thm:models}}
\label{sec:models}
We fix a given  incremental structural law
  $$d∶ Γ_Y(Σ(X)) → ST (Γ_{STY}(X) + X + Y)$$ over $δ∶ TS → ST$,
  and let $T' = Γ_S^\fleur$.

Given any distributive law, algebras for the composite monad
admit the following well-known characterisation.
\begin{definition}
  Given any monad distributive law $δ∶ RS → SR$, a \alert{$δ$-algebra}
  is an object $X ∈ 𝐂$ of the underlying category, equipped with $S$-
  and $R$-algebra structures, say $𝐚∶ SX → X$ and $𝐛∶ RX → X$,
  satisfying the following law.
  \begin{equation}
    \diag{%
      RSX \& \& SRX \\
      RX \& \& SX \\
      \& X
    }{%
      (m-1-1) edge[labela={δ_X}] (m-1-3) %
       edge[labell={R𝐚}] (m-2-1) %
       (m-1-3) edge[labelr={S𝐛}] (m-2-3) %
       (m-2-1) edge[labelbl={𝐛}] (m-3-2) %
       (m-2-3) edge[labelbr={𝐚}] (m-3-2) %
    }
    \label{eq:SRAlg}
  \end{equation}
  A \alert{$δ$-algebra morphism} $X → Y$ is a morphism between
  underlying objects which is both an $S$- and $R$-algebra morphism.

  We let $δ\Alg$ denote the category of $δ$-algebras and morphisms
  between them.
\end{definition}
\begin{lemma}[{\cite[§2]{BeckDistlaws}}]
  Given any monad distributive law $δ∶ RS → SR$, and $SR$-algebra
  $𝐱∶ SRX → X$,
  the derived $S$- and $R$-algebra structures
  \begin{mathpar}
    SX \xto{Sηᵀ_X} SRX \xto{𝐱} X \and RX \xto{η^S_{RX}} SRX \xto{𝐱} X
  \end{mathpar}
  equip $X$ with $δ$-algebra structure.  Furthermore, this underlies
  an isomorphism $SR\Alg → δ\Alg$ of categories over $𝐂$.
\end{lemma}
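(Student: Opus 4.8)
The plan is to exhibit mutually inverse functors between $SR\Alg$ and $δ\Alg$, each acting as the identity on underlying objects and on underlying morphisms, so that the resulting isomorphism automatically lies over $\mathbf{C}$. First I would fix the monad structure on the composite $SR$ induced by $δ$: its unit admits the two factorisations $η^{SR}_X = Sη^R_X \circ η^S_X = η^S_{RX}\circ η^R_X$, and its multiplication is $μ^{SR}_X = Sμ^R_X \circ μ^S_{RRX}\circ Sδ_{RX}$. The forward functor $F\colon SR\Alg → δ\Alg$ sends an algebra $\mathbf{x}\colon SRX → X$ to the pair $(\mathbf{a},\mathbf{b})$ of the statement, $\mathbf{a} = \mathbf{x}\circ Sη^R_X$ and $\mathbf{b} = \mathbf{x}\circ η^S_{RX}$; the backward functor $G\colon δ\Alg → SR\Alg$ sends a $δ$-algebra $(X,\mathbf{a},\mathbf{b})$ to $\mathbf{x} = \mathbf{a}\circ S\mathbf{b}$.

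I would then carry out the verifications in this order. (1) $F$ lands in $δ\Alg$: the unit laws $\mathbf{a}\circ η^S_X = \mathrm{id}$ and $\mathbf{b}\circ η^R_X = \mathrm{id}$ follow from the $SR$-unit law via the two factorisations of $η^{SR}$, while the associativities of $\mathbf{a}$ and $\mathbf{b}$ and the compatibility square \eqref{eq:SRAlg} follow from $SR$-associativity and the distributive-law axioms, each by a routine chase. (2) $G$ lands in $SR\Alg$: the unit law reduces to the unit laws of $\mathbf{a}$ and $\mathbf{b}$, and the associativity law $\mathbf{x}\circ μ^{SR}_X = \mathbf{x}\circ SR\mathbf{x}$, after substituting $\mathbf{x}=\mathbf{a}\circ S\mathbf{b}$ and the formula for $μ^{SR}$, unfolds into a single diagram combining the associativities of $\mathbf{a}$ and $\mathbf{b}$ with \eqref{eq:SRAlg} applied at the interior occurrence of $δ$. (3) The round trip $FG = \mathrm{id}$ is short: given $(\mathbf{a},\mathbf{b})$, one computes $\mathbf{x}\circ Sη^R_X = \mathbf{a}\circ S(\mathbf{b}\circ η^R_X) = \mathbf{a}$ and $\mathbf{x}\circ η^S_{RX} = \mathbf{a}\circ η^S_X\circ\mathbf{b} = \mathbf{b}$, using naturality of $η^S$ and the two unit laws.

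For the other round trip $GF = \mathrm{id}$, given $\mathbf{x}$ I compute $GF(\mathbf{x}) = \mathbf{x}\circ Sη^R_X\circ S\mathbf{x}\circ Sη^S_{RX}$, push $Sη^R_X$ past $S\mathbf{x}$ by naturality of $η^R$ to obtain $\mathbf{x}\circ SR\mathbf{x}\circ Sη^R_{SRX}\circ Sη^S_{RX}$, replace $\mathbf{x}\circ SR\mathbf{x}$ by $\mathbf{x}\circ μ^{SR}_X$ using associativity, and finally check that $μ^{SR}_X\circ Sη^R_{SRX}\circ Sη^S_{RX} = \mathrm{id}_{SRX}$; this last identity uses the axiom $δ_{RX}\circ η^R_{SRX} = Sη^R_{RX}$, then naturality of $μ^S$, then the unit laws of $S$ and $R$, collapsing the composite back to $\mathbf{x}$. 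Finally, (4) a morphism $f$ is an $SR$-algebra morphism iff it is simultaneously an $S$- and an $R$-algebra morphism: the forward implication by whiskering the $SR$-morphism square with $η^R$ and $η^S$, the backward one by $f\circ\mathbf{x} = f\circ\mathbf{a}\circ S\mathbf{b} = \mathbf{a}'\circ S\mathbf{b}'\circ SRf = \mathbf{x}'\circ SRf$. Since $F$ and $G$ fix underlying objects and arrows, they are mutually inverse functors over $\mathbf{C}$, yielding the claimed isomorphism.

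The main obstacle is step (2), the associativity of $G(\mathbf{x}) = \mathbf{a}\circ S\mathbf{b}$: this is the one place where the interaction of the two monad structures genuinely matters, so the chase must invoke the full compatibility square \eqref{eq:SRAlg} at the interior $δ$ together with both monad associativities and the naturality squares relating $δ$ to the two multiplications. Everything else is bookkeeping with units and naturality.
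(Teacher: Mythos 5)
Your proof is correct and complete: the paper does not actually prove this lemma (it is quoted from \citet[§2]{BeckDistlaws}), and your argument is precisely the classical one — mutually inverse identity-on-objects functors $\mathbf{x} \mapsto (\mathbf{x}\circ Sη^R_X,\, \mathbf{x}\circ η^S_{RX})$ and $(\mathbf{a},\mathbf{b}) \mapsto \mathbf{a}\circ S\mathbf{b}$, with the round trips reduced to unit laws, the distributive-law unit axioms, and algebra associativity. In particular your key collapse $μ^{SR}_X \circ Sη^R_{SRX} \circ Sη^S_{RX} = \mathrm{id}_{SRX}$ via $δ_{RX}\circ η^R_{SRX} = Sη^R_{RX}$, naturality of $μ^S$, and the two monad unit laws checks out exactly, so nothing is missing.
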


In our situation, applying this to the distributive law
$d_{/δ}∶ (T⊕T')S → S(T⊕T')$ from Theorem~\ref{thm:incremental}, we
obtain:
\begin{corollary}\label{cor:TTialg}
  An $S(T⊕T')$-algebra is equivalently an $S$-algebra $𝐚∶ SX → X$,
  equipped with $(T⊕T')$-algebra structure satisfying the
  pentagon~\eqref{eq:SRAlg} with $R=T⊕T'$.
\end{corollary}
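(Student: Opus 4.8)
The plan is to obtain this as a direct instantiation of the preceding lemma. That lemma (Beck's characterisation of algebras for a composite monad) asserts that for \emph{any} monad distributive law $δ∶ RS → SR$, the category $SR\Alg$ is isomorphic over $𝐂$ to the category $δ\Alg$ of $δ$-algebras, i.e.\ objects carrying compatible $S$- and $R$-algebra structures subject to the pentagon~\eqref{eq:SRAlg}. So the only thing to supply is a distributive law of the shape $(T⊕T')S → S(T⊕T')$ to play the role of $δ$, and the statement will follow by reading off the lemma at $R = T⊕T'$.

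First I would invoke Theorem~\ref{thm:incremental}, which, from the given incremental structural law, produces an incremental lifting of $S$ to $T'\Alg$ along $δ$; by Corollary~\ref{cor:incrlift:distlaw} this incremental lifting gives rise to exactly such a distributive law $d_{/δ}∶ (T⊕T')S → S(T⊕T')$, namely the one appearing in the statement of Theorem~\ref{main:thm:incremental}. In particular $S(T⊕T')$ is precisely the composite monad induced by $d_{/δ}$, so the hypotheses of the preceding lemma are met with $S$ unchanged and $R ≔ T⊕T'$.

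Then I would apply that lemma with $R = T⊕T'$ and the distributive law $d_{/δ}$ substituted for the generic $δ$. It immediately yields that an $S(T⊕T')$-algebra is the same thing as an object $X$ equipped with an $S$-algebra structure $𝐚∶ SX → X$ and a $(T⊕T')$-algebra structure, the two being related by the pentagon~\eqref{eq:SRAlg} instantiated at $R = T⊕T'$ and $δ = d_{/δ}$ --- which is exactly the claim of the corollary.

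There is no real obstacle at this point: once $d_{/δ}$ has been constructed (the substantive content, discharged by Theorem~\ref{thm:incremental}), the corollary is a mechanical re-indexing of the lemma. The only detail worth checking is that the $S$- and $(T⊕T')$-algebra structures extracted by the lemma are the \emph{canonical} ones, obtained by precomposing the composite structure with $Sη^{T⊕T'}_X$ and $η^S_{(T⊕T')X}$ respectively; this is immediate from the explicit formulas in the lemma, and it is what ensures that these structures coincide with the ones used downstream in the proof of Theorem~\ref{thm:models}.
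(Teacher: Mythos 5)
Your proposal is correct and matches the paper's own argument: the corollary is obtained precisely by applying the preceding Beck-style lemma (the isomorphism $SR\Alg \cong \delta\Alg$ over $𝐂$) to the distributive law $d_{/δ}∶ (T⊕T')S → S(T⊕T')$ produced by Theorem~\ref{thm:incremental} via Corollary~\ref{cor:incrlift:distlaw}, instantiating $R = T⊕T'$. Your closing remark that the extracted $S$- and $(T⊕T')$-structures are the canonical ones given by precomposition with $Sη^{T⊕T'}_X$ and $η^S_{(T⊕T')X}$ is exactly what the lemma's explicit formulas provide, so nothing further is needed.
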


But we can say more, by Corollary~\ref{cor:alg:coprod:free}:
\begin{replemma}{lem:TTialg}
  An $S(T⊕T')$-algebra is equivalently an objet $X$, equipped with
  morphisms
  \begin{mathpar}
    𝐚∶ SX → X \and 𝐛∶ TX → X \and 𝐜∶ Γ(X,SX) → X\rlap{,}
  \end{mathpar}
  the first two of which are monad algebra structures, satisfying the
  pentagon~\eqref{eq:SRAlg} with $R=T$, together with the following
  diagram,
  \begin{center}
    \Diag(.6,1.8){%
      \justify{m-1-1}{d2$'$}{m-3-4} %
      }{%
      Γ_{SSX}SX \& \&  \& Γ_{SX} X \\
      Γ_{SX}SX \\
      ST Γ^•_{ST|X} X \& STX \& SX
      \& X
    }{%
      (m-1-1) edge[labela={Γ_{S𝐚}𝐚}] (m-1-4) %
      edge[labelr={Γ_{μ^S_X}SX}] (m-2-1) %
      (m-2-1) edge[labelr={d^{•ω}_{X,X}}] (m-3-1) %
      (m-3-1) edge[labelb={ST (𝐛⊳𝐜)}] (m-3-2) %
      (m-3-2) edge[labelb={S𝐛}] (m-3-3) %
      (m-3-3) edge[labelb={𝐚}] (m-3-4) %
      (m-1-4) edge[labell={𝐜}] (m-3-4) %
    }
  \end{center}
recalling $𝐛⊳𝐜$ from Definition~\ref{def:triangleright}.
\end{replemma}
\begin{proof}
  The pentagon~\eqref{eq:SRAlg} with $R=T⊕T'$ equivalently states that
  the $S$-algebra structure $𝐚∶ SX → X$ is a morphism of
  $(T⊕T')$-algebras.  But by Lemma~\ref{lem:monad:coprod:alg} this is
  equivalent to being both a $T$- and $Γ_SΔ$-algebra morphism.  The
  former is taken care of by~\eqref{eq:SRAlg} with $R=T$, the latter
  by the given diagram.
\end{proof}

  It remains to show that, given an object $X$ equipped with $S$- and
  $T$-algebra structures $𝐚∶ SX → X$ and $𝐛∶ TX → X$ satisfying (d1),
  the following are equivalent:
  \begin{itemize}
  \item a morphism $𝐜∶ Γ(X,X) → X$ satisfying (d2), and
  \item a morphism $𝐜∶ Γ(X,SX) → X$ satisfying (d2$'$);
  \end{itemize}
  and furthermore the notions of morphisms agree.
  
  For any $𝐜∶ Γ(X,X) → X$ satisfying (d2), we define $\check{𝐜}$ to
  be the following composite
  $$Γ(X,SX) \xto{Γ(X,𝐚)} Γ(X,X) \xto{𝐜} X\rlap{,}$$
  and conversely for any $𝐜∶ Γ(X,SX) → X$ satisfying (d2$'$), we
  define $\hat{𝐜}$ to be
  $$Γ(X,X) \xto{Γ(X,η^S_X)} Γ(X,SX) \xto{𝐜} X.$$

It thus suffices to prove:
  \begin{enumerata}
  \item \label{item:inverse} the assignments $𝐜 ↦ \check{𝐜}$ and $𝐜↦\hat{c}$ are mutually inverse, 
  \item \label{item:check} for any
    $𝐜∶ Γ(X,X) → X$ satisfying (d2), $\check{𝐜}$ satisfies (d2$'$),
    and conversely
  \item \label{item:hat} for any $𝐜∶ Γ(X,SX) → X$ satisfying (d2$'$), $\hat{𝐜}$
    satisfies (d2);
  \item \label{item:checkmor} for any $(d,δ)$-algebras $X$ and $Y$, a
    morphism $f∶ X → Y$ which is both a $T$-algebra morphism, an
    $S$-algebra morphism, and a $Γ$-algebra morphism is a morphism
    between the corresponding $Γ_S$-algebras, and
  \item \label{item:hatmor} conversely for any $d_{/δ}$-algebras $X$
    and $Y$, a morphism $f∶ X → Y$ which is both a $T$-algebra
    morphism, an $S$-algebra morphism, and a $Γ_S$-algebra morphism is
    a morphism between the corresponding $Γ$-algebras.
  \end{enumerata}
  Statements~\ref{item:checkmor} and~\ref{item:hatmor} follow easily
  by naturality of $η^S$ and the fact that the considered morphism is
  an algebra morphism.
  
Statement~\ref{item:check} follows from commutation of the diagram in
Figure~\ref{fig:check}, and $d^{•ω}_{X,Y}$ as
in Definition~\ref{def:hbulletomega}.
% \begin{figure*}[pth]
\begin{figure}[htp]
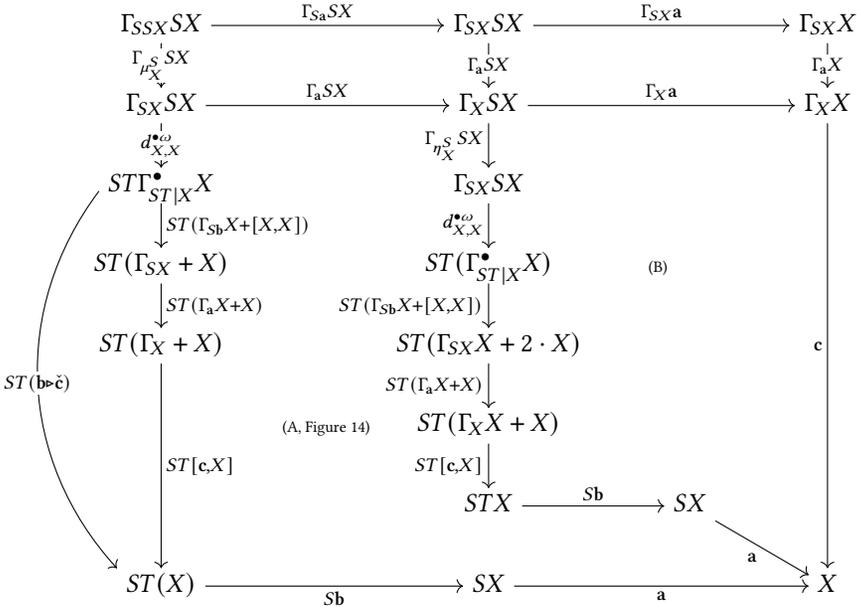

  \Diag{%
    \justify{m-5-1}{A, Figure~\ref{fig:subdiagramA}}{m-7-3} %
    \justify{m-1-5}{B}{m-7-3} %
  }{%
    Γ_{SSX}SX  \& \& Γ_{SX} SX \& \& Γ_{SX} X \\
    Γ_{SX}SX \& \& Γ_X SX  \& \& Γ_X X \\
    ST Γ^•_{ST|X}X \& \& Γ_{SX} SX \\
    ST (Γ_{SX} + X) \& \& ST (Γ^•_{ST|X} X) \\
    ST (Γ_{X} + X) \& \& ST (Γ_{SX} X + 2·X) \\    
    \& \& ST (Γ_X X + X) \\
    \& \& ST X \& SX \\
    ST (X) \&  \& SX \&  \& X %
  }{%
    (m-1-1) edge[labela={Γ_{S𝐚} SX}] (m-1-3) %
    (m-1-3) edge[labela={Γ_{SX} 𝐚}] (m-1-5) %
    (m-2-1) edge[labela={Γ_𝐚 SX}] (m-2-3) %
    (m-2-3) edge[labela={Γ_X 𝐚}] (m-2-5) %
    (m-1-1) edge[labelon={Γ_{μ^S_X}SX}] (m-2-1) %
    (m-1-3) edge[labelon={Γ_𝐚 SX}] (m-2-3) %
    (m-1-5) edge[labelon={Γ_𝐚 X}] (m-2-5) %
    (m-2-1) edge[labelon={d^{•ω}_{X,X}}] (m-3-1) %
    (m-3-1.base west) edge[bend right=40,labelon={ST (𝐛 ⊳ \check{𝐜})}] (m-8-1.north west) %
    (m-3-1) edge[labelr={ST (Γ_{S𝐛} X + [X,X])}] (m-4-1) %
    (m-4-1) edge[labelr={ST (Γ_{𝐚} X + X)}] (m-5-1) %
    (m-5-1) edge[labelr={ST [𝐜, X]}] (m-8-1) %
    % (m-6-1) edge[labelr={ST[η^S_X,η^S_X,SX]}] (m-8-1) %
    (m-2-3) edge[labell={Γ_{η^S_X} SX}] (m-3-3) %
    (m-3-3) edge[labell={d^{•ω}_{X,X}}] (m-4-3) %
    (m-4-3) edge[labell={ST (Γ_{S𝐛}X + [X,X])}] (m-5-3) %
    (m-5-3) edge[labell={ST (Γ_𝐚 X+X)}] (m-6-3) %
    (m-6-3) edge[labell={ST [𝐜,X]}] (m-7-3) %
    (m-7-3) edge[labela={S𝐛}] (m-7-4) %
    (m-7-4) edge[labelbl={𝐚}] (m-8-5) %
    (m-8-1) edge[labelb={S𝐛}] (m-8-3) %
    (m-8-3) edge[labelb={𝐚}] (m-8-5) %
    (m-2-5) edge[labell={𝐜}] (m-8-5) %
  }
      \caption{Main diagram for proving (d2$'$) for $\check{𝐜}$}
  \label{fig:check}
\end{figure}
Subdiagram~(A) commutes by chasing as in Figure~\ref{fig:subdiagramA}.
\begin{figure}[htp]
    \adjustbox{max width=\columnwidth,max height=.95\textheight}{%
  \begin{turn}{90}
  \begin{tikzcd}[ampersand replacement=\&]
    {\Gamma_{SX}SX} \&\& {\Gamma_{SX}SX} \&\&\&\&\& {ST\Gamma^\bullet_{ST|X}X} \\
    \& {\Gamma_{SSX}SX} \& {ST\Gamma^\bullet_{ST|SX}X} \& {STS\Gamma^\bullet_{STS|X}X} \& {SST\Gamma^\bullet_{SST|X}X} \&\&\& {ST(\Gamma_{SX}X + X)} \\
    \&\&\&\&\&\&\& {ST(\Gamma_X + X)} \\
    \&\&\&\&\&\&\& STX \\
    \&\&\&\&\&\&\& SX \\
    {\Gamma_X SX} \& {\Gamma_{SX}SX} \& {ST\Gamma^\bullet_{ST|X}X} \& {ST(\Gamma_{SX}X + X)} \& {ST(\Gamma_{X}+X)} \& STX \& SX \& X
    \arrow["{\Gamma_{\eta^S_{SX}}SX}"{pos=0.9}, from=1-1, to=2-2]
    \arrow[Rightarrow, no head, from=1-1, to=1-3]
    \arrow[""{name=0, anchor=center, inner sep=0}, "{d^{\bullet\omega}_{X,X}}", from=1-3, to=1-8]
    \arrow["{d^{\bullet\omega}_{X,X}}", curve={height=-30pt}, from=1-1, to=1-8]
    \arrow["{d^{\bullet\omega}_{X,SX}}"', from=2-2, to=2-3]
    \arrow["{\Gamma_{\mu^S_X}SX}"'{pos=0.9}, from=2-2, to=1-3]
    \arrow["{ST \uparrow^S}"', from=2-3, to=2-4]
    \arrow["{S\delta \Gamma^\bullet_{S\delta|X}X}"', from=2-4, to=2-5]
    \arrow["{\mu^S T \Gamma^\bullet_{\mu^S T|X}X}"{description}, from=2-5, to=1-8]
    \arrow["{\Gamma_{S\mathbf{a}}SX}"{description}, from=2-2, to=6-2]
    \arrow["{\Gamma_{\mathbf{a}}SX}"{description}, from=1-1, to=6-1]
    \arrow["{\Gamma_{\eta^S_X}SX}"', from=6-1, to=6-2]
    \arrow["{ST\Gamma^\bullet_{ST|\mathbf{a}}X}"', from=2-3, to=6-3]
    \arrow["{d^{\bullet\omega}_{X,X}}"', from=6-2, to=6-3]
    \arrow["{ST(\Gamma_{S\mathbf{b}}X + [X,X])}"', from=6-3, to=6-4]
    \arrow["{ST(\Gamma_{\mathbf{a}}X + X)}"', from=6-4, to=6-5]
    \arrow["{ST(\Gamma_{S\mathbf{b}}X + [X,X])}", from=1-8, to=2-8]
    \arrow["{ST(\Gamma_{\mathbf{a}}X + X)}"', from=2-8, to=3-8]
    \arrow["{ST[\mathbf{c},X]}"', from=6-5, to=6-6]
    \arrow["{ST[\mathbf{c},X]}", from=3-8, to=4-8]
    \arrow["{S\mathbf{b}}", from=4-8, to=5-8]
    \arrow["{S\mathbf{b}}"', from=6-6, to=6-7]
    \arrow["{\mathbf{a}}"', from=6-7, to=6-8]
    \arrow["{\mathbf{a}}", from=5-8, to=6-8]
    \arrow["{\text{(A', Figure~\ref{fig:subdiagramAprime})}}"{description}, draw=none, from=2-5, to=6-5]
    \arrow["{\text{(Lemma~\ref{lem:coh:hbulletomegamu})}}"{description}, Rightarrow, draw=none, from=2-4, to=0]
  \end{tikzcd}
\end{turn}
}
  \caption{Chasing subdiagram A}
  \label{fig:subdiagramA}
\end{figure}
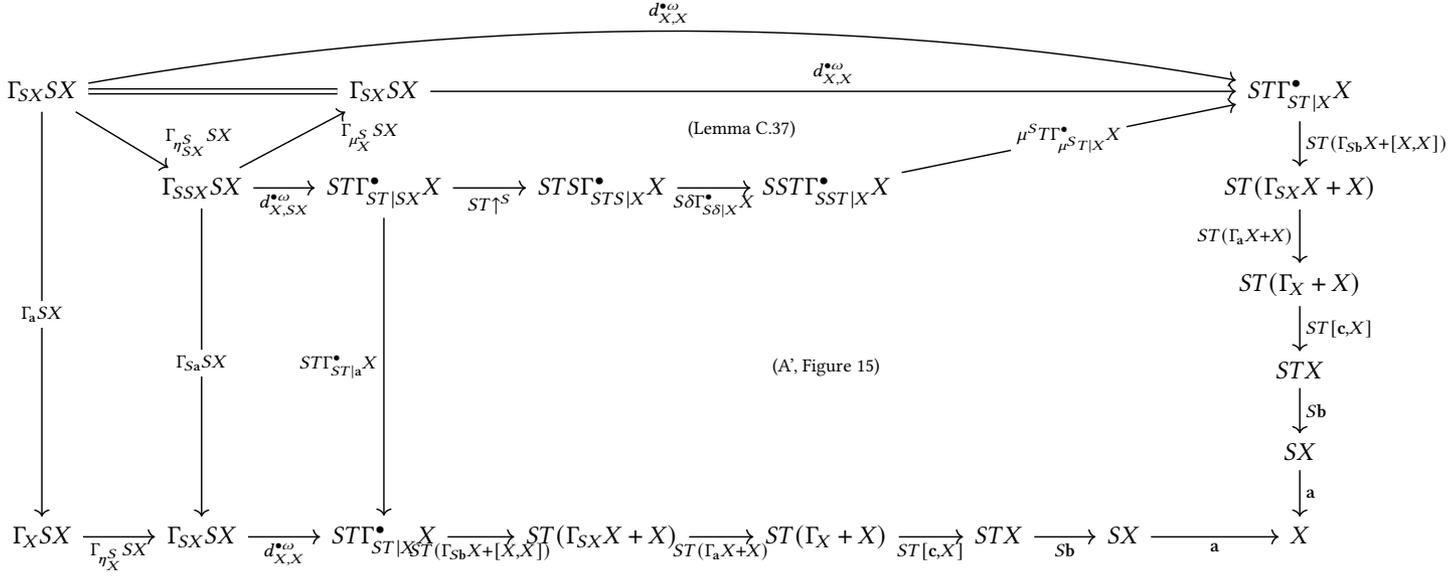
Subdiagram (A') commutes as shown in Figure~\ref{fig:subdiagramAprime}.
  \begin{figure*}[pth]
    \adjustbox{max width=\columnwidth,max height=.95\textheight}{%
      \begin{turn}{90}
\begin{tikzcd}[ampersand replacement=\&]
	{ST\Gamma^\bullet_{ST|SX}X} \&\&\& {ST(S\Gamma_{STSX}^\circ X + SX)} \&\& {STS\Gamma^\bullet_{STS|X}X} \&\& {SST\Gamma^\bullet_{SST|X}X} \&\& {ST\Gamma^\bullet_{ST|X}X} \\
	\&\&\& {ST(S\Gamma_{STX}^\circ X + SX)} \&\& {STS\Gamma^\bullet_{ST|X}X} \& {STS\Gamma^\bullet_{SST|X}X} \& {STS\Gamma^\bullet_{ST|X}X} \\
	\& {ST(\Gamma^\circ_{STX}X + SX)} \&\&\&\&\& {STS\Gamma^\bullet_{SS|X}X} \&\& {STS(\Gamma_{SX}X+X)} \& {ST(\Gamma_{SX}+X)} \\
	\&\& {ST(S(\Gamma^\circ_{STX}X)+X)} \&\& {ST(S\Gamma_{SX}^\circ X + SX)} \&\& {STS\Gamma^\bullet_{S|X}X} \\
	\&\& {ST(S(\Gamma^\circ_{SX}X)+X)} \&\& {ST(S\Gamma_{X}^\circ X + SX)} \&\&\&\& {STS(\Gamma_X X + X)} \& {ST(\Gamma_{X}X+X)} \\
	\&\&\&\& {ST(S(X+X)+SX)} \&\& {ST(SX+SX)} \& STSX \& SSTX \& STX \\
	\&\&\&\&\&\&\&\& SSX \& SX \\
	\&\& {ST(S(\Gamma^\circ_{X}X)+X)} \& {} \&\& {ST(S(X+X)+X)} \& {ST(SX+X)} \& STX \& SX \\
	\&\& {ST(\Gamma_{SX}X + X + X)} \& {ST(\Gamma_{X}X+ X +X)} \& {ST(X+X+X)} \& {ST(X+X)} \\
	{ST\Gamma^\bullet_{ST|X}X} \&\& {ST(\Gamma_{SX}X + X)} \&\& {ST(\Gamma_{X}X+X)} \& STX \&\& SX \&\& X
	\arrow["{S\delta \Gamma^\bullet_{S\delta|X}X}", from=1-6, to=1-8]
	\arrow["{\mu^S T \Gamma^\bullet_{\mu^S T|X}X}", from=1-8, to=1-10]
	\arrow["{ST\Gamma^\bullet_{ST|\mathbf{a}}X}"', from=1-1, to=10-1]
	\arrow["{ST(\Gamma_{S\mathbf{b}}X + [X,X])}"', from=10-1, to=10-3]
	\arrow["{ST (\eta^S + SX)}", from=1-1, to=1-4]
	\arrow["{ST[Sin_1,Sin_2]}", from=1-4, to=1-6]
	\arrow["{STS \Gamma^\bullet_{S\delta |X}X}"{pos=1}, from=1-6, to=2-7]
	\arrow["{STS\Gamma^\bullet_{\mu^S T | X}X}"', from=2-7, to=2-8]
	\arrow[""{name=0, anchor=center, inner sep=0}, "{STS(\Gamma_{S\mathbf{b}}X + [X,X])}"{pos=1}, from=2-8, to=3-9]
	\arrow["{ST(\Gamma_{S\mathbf{b}}X + [X,X])}"{description}, from=1-10, to=3-10]
	\arrow["{STS(\Gamma_{\mathbf{a}} + X)}", from=3-9, to=5-9]
	\arrow["{STS\Gamma^\bullet_{ST\mathbf{a}|X}X}"', from=1-6, to=2-6]
	\arrow["{STS\Gamma^\bullet_{S\mathbf{b}|X}X}"'{pos=0.2}, curve={height=18pt}, from=2-6, to=4-7]
	\arrow["{STS\Gamma^\bullet_{SS\mathbf{b}|X}X}", from=2-7, to=3-7]
	\arrow[""{name=1, anchor=center, inner sep=0}, "{STS(\Gamma_{\mu^S|X}X+[X,X])}"', from=3-7, to=3-9]
	\arrow["{STS\Gamma^\bullet_{S\mathbf{a}|X}X}"', from=3-7, to=4-7]
	\arrow["{\text{\eqref{eq:SRAlg}}}"{description}, draw=none, from=3-7, to=2-6]
	\arrow[""{name=2, anchor=center, inner sep=0}, "{STS(\Gamma_{\mathbf{a}}X+[X,X])}"'{pos=0.1}, from=4-7, to=5-9]
	\arrow["{ST(S\Gamma_{ST\mathbf{a}}^\circ X + SX)}"', from=1-4, to=2-4]
	\arrow["{ST[Sin_1,Sin_2]}", from=2-4, to=2-6]
	\arrow["{ST(S\Gamma_{S\mathbf{b}}^\circ X + SX)}"{pos=0.8}, color={rgb,255:red,10;green,124;blue,4}, from=2-4, to=4-5]
	\arrow["{ST[Sin_1,Sin_2]}"', from=4-5, to=4-7]
	\arrow["{ST(S\Gamma_{\mathbf{a}}^\circ X + SX)}"', color={rgb,255:red,169;green,4;blue,4}, from=4-5, to=5-5]
	\arrow["{ST[S \Gamma^\circ_{X} X,Sin_2]}"', from=5-5, to=5-9]
	\arrow["{ST(\Gamma^\circ_{ST\mathbf{a}}X + SX)}", from=1-1, to=3-2]
	\arrow[""{name=3, anchor=center, inner sep=0}, "{ST(\Gamma^\circ_{STX}X + \mathbf{a})}"{description, pos=0.3}, from=3-2, to=10-1]
	\arrow["{ST(\eta^S+SX)}", from=3-2, to=2-4]
	\arrow["{ST[\mathbf{c},X]}"', from=10-5, to=10-6]
	\arrow["{S\mathbf{b}}"', from=10-6, to=10-8]
	\arrow["{\mathbf{a}}"', from=10-8, to=10-10]
	\arrow[""{name=4, anchor=center, inner sep=0}, "{ST(\Gamma_{\mathbf{a}}X + X)}"', from=10-3, to=10-5]
	\arrow["{ST(\Gamma_{\mathbf{a}} + X)}", from=3-10, to=5-10]
	\arrow["{S\delta ; \mu^S}"', from=5-9, to=5-10]
	\arrow["{ST[\mathbf{c},X]}", color={rgb,255:red,41;green,8;blue,120}, from=5-10, to=6-10]
	\arrow["{S\mathbf{b}}", from=6-10, to=7-10]
	\arrow["{\mathbf{a}}", from=7-10, to=10-10]
	\arrow["{ST(S(\Gamma^\circ_{STX}X)+\mathbf{a})}", color={rgb,255:red,169;green,4;blue,4}, from=2-4, to=4-3]
	\arrow["{ST(S\Gamma^\circ_{S\mathbf{b}}X + X)}", color={rgb,255:red,10;green,124;blue,4}, from=4-3, to=5-3]
	\arrow[""{name=5, anchor=center, inner sep=0}, "{ST(S(\Gamma^\circ_{\mathbf{a}}X)+X)}"{pos=0.9}, color={rgb,255:red,169;green,4;blue,4}, from=5-3, to=8-3]
	\arrow["{ST(S(\mathbf{c}+X)+X)}", color={rgb,255:red,41;green,8;blue,120}, from=8-3, to=8-6]
	\arrow["{ST(\Gamma_{S\mathbf{b}}X + X+X)}", from=10-1, to=9-3]
	\arrow["{ST(\Gamma_{\mathbf{a}}X + X + X)}", from=9-3, to=9-4]
	\arrow["{ST(\mathbf{c}+X+X)}", from=9-4, to=9-5]
	\arrow["{ST[X,X,X]}"'{pos=0}, from=9-5, to=10-6]
	\arrow["{ST(\eta^S+X)}"{pos=0}, from=9-5, to=8-6]
	\arrow["{ST(S(\mathbf{c}+X)+SX)}"', color={rgb,255:red,41;green,8;blue,120}, from=5-5, to=6-5]
	\arrow["{ST(S(X+X)+\mathbf{a})}"'{pos=0.3}, color={rgb,255:red,169;green,4;blue,4}, from=6-5, to=8-6]
	\arrow["{ST(S[X,X]+SX)}", from=6-5, to=6-7]
	\arrow["{STS[\mathbf{c},X]}"', from=5-9, to=6-8]
	\arrow["{ST[SX,SX]}", from=6-7, to=6-8]
	\arrow["{ST\mathbf{a}}", from=6-8, to=8-8]
	\arrow["S\delta", from=6-8, to=6-9]
	\arrow["{\mu^S}", from=6-9, to=6-10]
	\arrow["{SS\mathbf{b}}", from=6-9, to=7-9]
	\arrow["{\mu^S_X}", from=7-9, to=7-10]
	\arrow["{S\mathbf{a}}", from=7-9, to=8-9]
	\arrow["{\mathbf{a}}", from=8-9, to=10-10]
	\arrow["{S\mathbf{b}}", from=8-8, to=8-9]
	\arrow["{ST(S[X,X]+X)}", from=8-6, to=8-7]
	\arrow["{ST([X,X]+X)}", from=9-5, to=9-6]
	\arrow["{ST(\eta^S_X+X)}"{pos=0}, from=9-6, to=8-7]
	\arrow["{ST[X,X]}", from=9-6, to=10-6]
	\arrow["{ST[\mathbf{a},X]}", from=8-7, to=8-8]
	\arrow["{ST[X,X]}"', from=9-6, to=8-8]
	\arrow["{\text{(interchange)}}"{pos=0.4}, Rightarrow, draw=none, from=0, to=1-10]
	\arrow["{\text{(naturality of } \mu \text{)}}"{description}, Rightarrow, draw=none, from=1, to=2-8]
	\arrow["{\text{(monad algebra law)}}"{description}, Rightarrow, draw=none, from=2, to=1]
	\arrow["{\text{(naturality of } \eta \text{)}}", Rightarrow, draw=none, from=3, to=5]
	\arrow["{\text{(easy)}}"{description, pos=0.6}, Rightarrow, draw=none, from=4, to=9-4]
	\arrow["{\text{(easy)}}", Rightarrow, draw=none, from=5, to=6-5]
\end{tikzcd}
      \end{turn}
    }
    \caption{Chasing subsubdiagram (A')}
    \label{fig:subdiagramAprime}
  \end{figure*}
Subdiagram~(B) commutes by Lemma~\ref{lem:pentagon-moyen} below.  
  
Statement~\ref{item:hat} follows from commutation of the diagram in
Figure~\ref{fig:hat}, where
\begin{itemize}
\item the top part commutes by Lemma~\ref{lem:coh:hbulletomegamu} and
\item the bottom right part commutes by
  Lemma~\ref{lem:alg-ST-coh-beta} below.
\end{itemize}
% \begin{figure*}[pth]
\begin{figure}[htp]
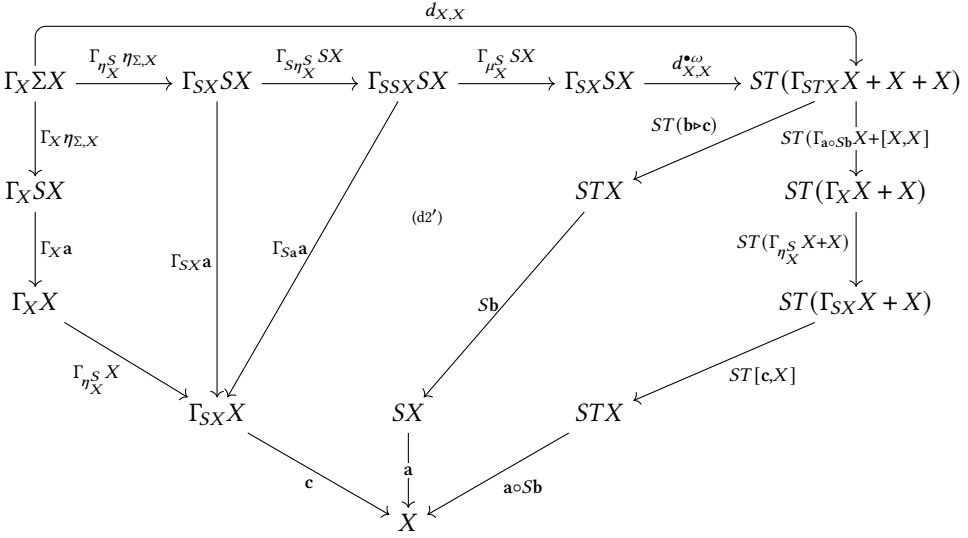

  \noindent
    \Diag(1,1.3){%
    \sqpath{m-1-1}{90}{5}{a}{m-1-5}{labela={d_{X,X}}} %
      \justify{a}{d2$'$}{m-4-3} %
    }{%
      Γ_X Σ X \& Γ_{SX}SX \& Γ_{SSX}SX \& Γ_{SX} SX \& ST(Γ_{STX} X + X + X) \\
      Γ_X SX \&            \&   \&   STX       \& ST (Γ_{X}X+X) \\
      Γ_X X \&   \&  \&  \& ST (Γ_{SX} X + X) \\
      \& Γ_{SX}X \& SX \& STX \\
      \& \& X %
    }{%
      (m-1-1)edge[labela={Γ_{η^S_X} η_{Σ,X}}] (m-1-2) %
      % edge[bend left,labela={d_{X,X}}] (m-1-5) %
      (m-1-2) edge[labela={Γ_{Sη^S_X}SX}] (m-1-3) %
      (m-1-3) edge[twoa={Γ_{μ^S_X}SX}] (m-1-4) %
      (m-1-4) edge[labela={d^{•ω}_{X,X}}] (m-1-5) %
      (m-1-1) edge[labelr={Γ_X η_{Σ,X}}] (m-2-1) %
      (m-2-1) edge[labelr={Γ_X 𝐚}] (m-3-1) %
      (m-3-1) edge[labelbl={Γ_{η^S_X}X}] (m-4-2) %
      (m-4-2) edge[labelbl={𝐜}] (m-5-3) %
      (m-1-3) edge[labell={Γ_{S𝐚} 𝐚}] (m-4-2) %
      (m-1-2) edge[labelbl={Γ_{SX} 𝐚}] (m-4-2) %
      (m-1-5) edge[labelon={ST(Γ_{𝐚 ∘ S𝐛} X + [X,X]}] (m-2-5) %
      (m-1-5) edge[labelal={ST (𝐛 ⊳ 𝐜)}] (m-2-4) %
      (m-2-5) edge[labell={ST (Γ_{η^S_X} X + X)}] (m-3-5) %
      (m-2-4) edge[labell={S𝐛}] (m-4-3) %
      (m-3-5) edge[labelbr={ST [𝐜,X]}] (m-4-4) %
      (m-4-4) edge[labelbr={𝐚 ∘ S𝐛}] (m-5-3) %
      (m-4-3) edge[labelon={𝐚}] (m-5-3) %
    }%
  \caption{Diagram for~\ref{item:hat}}
  \label{fig:hat}
\end{figure}

  Finally, \ref{item:inverse} follows from
  Lemma~\ref{lem:bij-gamma-alg} and~\ref{lem:alg-ST-coh-beta} below.
\begin{lemma}
  \label{lem:bij-gamma-alg}
  Given an algebra $SX \xrightarrow{𝐚} X$, precomposition with morphisms
  $Γ_{S𝐚}$ and $Γ_{η^S}$ induces a bijection (natural in the algebra) between morphisms
  $Γ_X X → X$ and morphisms $Γ_{SX}X → X$ making the following diagram commute.
  \begin{equation}
    \diag{%
      Γ_{SX} X \& Γ_X X \& Γ_{SX} X \\
      \& X
    }{%

      (m-1-1) edge[labela={Γ_𝐚 X}] (m-1-2) %
      edge[labelbl={𝐜}] (m-2-2) %
      (m-1-3) edge[labelbr={𝐜}] (m-2-2) %
      (m-1-2) edge[labela={Γ_{η^S_X} X}] (m-1-3) %
    }
    \label{eq:coherence-beta}
  \end{equation}
\end{lemma}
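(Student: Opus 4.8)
The plan is to recognise this statement as nothing more than the elementary correspondence between maps out of a split idempotent and maps out of its retract. The only structural input required is the $S$-algebra unit law for $𝐚$, namely $𝐚 ∘ η^S_X = \id_X$. Since $Γ(X,-)$ is a functor, applying it to this identity yields $Γ_𝐚 X ∘ Γ_{η^S_X} X = Γ_{𝐚 ∘ η^S_X} X = Γ_{\id_X} X = \id_{Γ_X X}$, so that $Γ_{η^S_X} X ∶ Γ_X X → Γ_{SX} X$ is a split monomorphism with retraction $Γ_𝐚 X$, and the endomorphism $p ≔ Γ_{η^S_X} X ∘ Γ_𝐚 X$ of $Γ_{SX} X$ is idempotent. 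I would record these two facts first, as they are the engine of the whole argument.

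Next I would observe that the coherence diagram~\eqref{eq:coherence-beta} says exactly that a morphism $𝐜∶ Γ_{SX} X → X$ satisfies $𝐜 ∘ p = 𝐜$, i.e.\ that it factors through $p$; call such $𝐜$ \emph{coherent}. The two assignments in the statement are then precomposition with the retraction, $𝐜 ↦ \check{𝐜} = 𝐜 ∘ Γ_𝐚 X$ (sending $Γ_X X → X$ to $Γ_{SX} X → X$), and precomposition with the section, $𝐜 ↦ \hat{𝐜} = 𝐜 ∘ Γ_{η^S_X} X$ (in the other direction), as defined just before the lemma. The two round trips are then immediate from functoriality and the unit law: for any $𝐜∶ Γ_X X → X$ one computes $\hat{\check{𝐜}} = 𝐜 ∘ Γ_𝐚 X ∘ Γ_{η^S_X} X = 𝐜 ∘ \id = 𝐜$, while for any coherent $𝐜∶ Γ_{SX} X → X$ one computes $\check{\hat{𝐜}} = 𝐜 ∘ Γ_{η^S_X} X ∘ Γ_𝐚 X = 𝐜 ∘ p = 𝐜$, the last equality being precisely coherence.

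To close the loop I would also check that $\check{𝐜}$ genuinely lands among the coherent morphisms, which is again one line: $\check{𝐜} ∘ p = 𝐜 ∘ Γ_𝐚 X ∘ Γ_{η^S_X} X ∘ Γ_𝐚 X = 𝐜 ∘ Γ_𝐚 X = \check{𝐜}$, using $Γ_𝐚 X ∘ Γ_{η^S_X} X = \id$. Naturality in the algebra follows with no extra work, since both assignments are defined purely by precomposition with morphisms built functorially from $𝐚$ and $η^S$, so that for any $S$-algebra morphism the evident square commutes. Honestly I expect no genuine obstacle: the content is entirely split-idempotent bookkeeping, and the only point demanding a moment's care is tracking the composition order in the $Γ_{(-)} X$ notation, so as not to interchange the section $Γ_{η^S_X} X$ with the retraction $Γ_𝐚 X$.
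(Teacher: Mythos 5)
Your proof is correct and coincides with the argument the paper leaves implicit: its proof of this lemma is literally ``Straightforward'', and your split-idempotent bookkeeping --- the unit law $𝐚 ∘ η^S_X = \id_X$ giving $Γ_𝐚 X ∘ Γ_{η^S_X} X = \id$ by functoriality of $Γ(X,-)$, the two round trips, the check that $\check{𝐜}$ satisfies~\eqref{eq:coherence-beta}, and naturality by pure precomposition --- is precisely the intended verification. No gaps.
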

\begin{proof}
  Straightforward.
\end{proof}

\begin{lemma}
  \label{lem:alg-ST-coh-beta}
  Given an algebra for $S(T⊕T')$ presented as in Lemma~\ref{lem:TTialg}
  by compatible algebra structures
  \begin{mathpar}
    𝐚∶ SX → X \and 𝐛∶ TX → X \and 𝐜∶ Γ_{SX}X → X\rlap{,}
  \end{mathpar}
  the diagram~\eqref{eq:coherence-beta} commutes.
\end{lemma}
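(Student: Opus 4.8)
The plan is to obtain the coherence~\eqref{eq:coherence-beta} as a direct consequence of the single compatibility square (d2$'$) of Lemma~\ref{lem:TTialg}, by precomposing it with a cleverly chosen morphism and then simplifying everything with monad unit laws. Spelling out (d2$'$) as an equality of morphisms $Γ_{SSX}SX → X$, it reads
$$𝐜 ∘ Γ(𝐚,S𝐚) = 𝐚 ∘ S𝐛 ∘ ST(𝐛⊳𝐜) ∘ d^{•ω}_{X,X} ∘ Γ(SX,μ^S_X),$$
where I have written the paper's $Γ_{S𝐚}𝐚$ and $Γ_{μ^S_X}SX$ in explicit two-argument form. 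The one creative step is to precompose this equation with $Γ(η^S_X,η^S_{SX})∶ Γ_{SX}X → Γ_{SSX}SX$; everything after that is forced.

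On the left-hand side, functoriality of $Γ$ yields $Γ(𝐚∘η^S_X,\,S𝐚∘η^S_{SX})$. The $S$-algebra unit law $𝐚∘η^S_X = \id_X$ trivialises the first component, and naturality of $η^S$ rewrites the second as $S𝐚∘η^S_{SX} = η^S_X∘𝐚$, so the left-hand side becomes exactly $𝐜 ∘ Γ(X,η^S_X∘𝐚)$, i.e.\ the top path of~\eqref{eq:coherence-beta}. This is where the desired statement appears.

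On the right-hand side, I would first use the monad unit law $μ^S_X∘η^S_{SX} = \id_{SX}$ to collapse $Γ(SX,μ^S_X)∘Γ(η^S_X,η^S_{SX}) = Γ(η^S_X,\id_{SX}) = Γ_{SX}η^S_X$. The third coherence law of Lemma~\ref{lem:coh:hbulletomegamu}, instantiated at $Y=X$, then computes $d^{•ω}_{X,X}∘Γ_{SX}η^S_X = η^{ST}∘in₁∘Γ_{Sη^T_X}X$. Naturality of $η^{ST}$ pulls $ST(𝐛⊳𝐜)$ through the unit, and unfolding $𝐛⊳𝐜$ (Definition~\ref{def:triangleright}) on the first injection gives $(𝐛⊳𝐜)∘in₁ = 𝐜∘Γ_{S𝐛}X$; since $Γ_{S𝐛}X∘Γ_{Sη^T_X}X = Γ_{S(𝐛∘η^T_X)}X = \id$ by the $T$-algebra unit law, and $S𝐛∘η^{ST}_X = S𝐛∘Sη^T_X∘η^S_X = η^S_X$, the whole right-hand side reduces to $𝐚∘η^S_X∘𝐜 = 𝐜$, once more by the $S$-algebra unit law. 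Comparing the two sides gives $𝐜 ∘ Γ(X,η^S_X∘𝐚) = 𝐜$, which is commutativity of~\eqref{eq:coherence-beta}.

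I expect the whole argument to be a short bookkeeping chase rather than a hard lemma, so the main obstacle is conceptual: recognising that~\eqref{eq:coherence-beta} is not a free property of an arbitrary $Γ_S$-algebra (the pullback of Corollary~\ref{cor:alg:coprod:free} makes the $T$- and $Γ_S$-structures independent) but a genuine shadow of the distributive-law compatibility (d2$'$) between $𝐜$ and $𝐚$. The right precomposition — feeding the units $η^S_X$ and $η^S_{SX}$ into the two arguments of $Γ$ so that both $𝐚∘η^S_X$ and $μ^S_X∘η^S_{SX}$ become identities — is exactly what exposes that shadow, and the third coherence law of Lemma~\ref{lem:coh:hbulletomegamu} (which says $d^{•ω}$ acts by the unit when its decreasing argument is a pure variable) is the only non-formal ingredient.
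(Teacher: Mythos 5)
Your proof is correct, and it takes a genuinely different route from the paper's. The paper proves~\eqref{eq:coherence-beta} by precomposing both of its legs with the split epimorphism $Γ_{S𝐚}𝐚∶ Γ_{SSX}SX → Γ_{SX}X$ and showing, via a naturality chase — which also uses the $S$-algebra associativity law $𝐚 ∘ S𝐚 = 𝐚 ∘ μ^S_X$ and the unit identity $μ^S_X ∘ η^S_{SX} = \id$ — together with one invocation of~(d2$'$), that both precomposed legs equal the bottom-left composite of~(d2$'$); equality of the legs then follows by cancelling the split epi. You instead precompose the \emph{equation}~(d2$'$) itself with $Γ(η^S_X, η^S_{SX})$, so the target identity drops out directly with no epi-cancellation step, and all your reductions check out: on the left, $𝐚 ∘ η^S_X = \id$ and naturality of $η^S$ yield exactly the top leg of~\eqref{eq:coherence-beta}; on the right, $μ^S_X ∘ η^S_{SX} = \id$, the third law of Lemma~\ref{lem:coh:hbulletomegamu} at $Y = X$, naturality of $η^{ST}$, the computation $(𝐛⊳𝐜) ∘ in_1 = 𝐜 ∘ Γ_{S𝐛}X$ from Definition~\ref{def:triangleright}, and the unit laws $𝐛 ∘ η^T_X = \id$ and $𝐚 ∘ η^S_X = \id$ collapse everything to $𝐜$. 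The trade-off: your route must evaluate $d^{•ω}$ on pure variables, i.e.\ it genuinely depends on the unit coherence law of Lemma~\ref{lem:coh:hbulletomegamu} — legitimately available here, since that lemma is established in~§\ref{app:thm-incremental} and is already cited elsewhere in~§\ref{sec:models}, so there is no circularity — whereas the paper's chase treats $d^{•ω}$ as a black box, never evaluating it and relying only on functoriality of $Γ$, the algebra laws for $𝐚$, and~(d2$'$). In exchange, your argument is purely equational and shorter, dispensing with both the diagram chase and the split-epimorphism argument.
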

\begin{proof}
  We show commutation of the diagram by precomposing both sides  by the split epimorphism $Γ_{S𝐚}𝐚$,
  and observing that they both are equal to the left bottom composite in~(d2$'$).
  For $𝐜$, we readily get the top right part of~(d2$'$), hence the result.
  For the top right part of~\eqref{eq:coherence-beta}, the result follows
  by chasing the diagram below.
  \begin{center}
    \hfill
   \diag|baseline=(m-5-1.base)|(.6,1.5){%
      Γ_{SSX}SX \& \& Γ_{SX}SX \& \& Γ_{SX} X \\
      Γ_{SX}SX \& \& Γ_{X}SX \& \& Γ_{X} X \\
      Γ_{SSX}SX \& \& Γ_{SX}SX \& \& Γ_{SX} X \\
      Γ_{SX}SX \\
      ST Γ^•_{ST|X} X \& STX  \&  \& SX \& X %
    }{%
      (m-1-1) edge[labela={Γ_{S𝐚}SX}] (m-1-3) %
      (m-1-3) edge[labela={Γ_{SX} 𝐚}] (m-1-5) %
      (m-1-1) edge[labelr={Γ_{μ^S_X} SX}] (m-2-1) %
      (m-1-3) edge[labell={Γ_{𝐚} SX}] (m-2-3) %
      (m-1-5) edge[labell={Γ_𝐚 X}] (m-2-5) %
      (m-2-1) edge[labela={Γ_{𝐚}SX}] (m-2-3) %
      (m-2-3) edge[labela={Γ_{X} 𝐚}] (m-2-5) %
      (m-3-1) edge[labela={Γ_{S𝐚}SX}] (m-3-3) %
      (m-3-3) edge[labela={Γ_{SX} 𝐚}] (m-3-5) %
      (m-2-1) edge[labelr={Γ_{η^S_{SX}} SX}] (m-3-1) %
      (m-2-3) edge[labell={Γ_{η^S_{X}} SX}] (m-3-3) %
      (m-2-5) edge[labell={Γ_{η^S_{X}} X}] (m-3-5) %
      (m-3-1) edge[labelr={Γ_{μ^S_X}SX}] (m-4-1) %
      (m-4-1) edge[labelr={d^{•ω}_{X,X}}] (m-5-1) %
      (m-5-1) edge[labelb={ST (𝐛⊳𝐜)}] (m-5-2) %
      (m-5-2) edge[labelb={S𝐛}] (m-5-4) %
      (m-5-4) edge[labelb={𝐚}] (m-5-5) %
      (m-3-5) edge[labell={𝐜}] (m-5-5) %      
      (m-2-1.base west) edge[bend right,identity] (m-4-1.north west) %
    } \qedhere
  \end{center}
  % \[
  %   \includegraphics[width=20em]{tableau6}
  % \]
\end{proof}
% \begin{lemma}
%   Given an algebra for $S(T⊕T')$ as in Corollary~\ref{cor:alg-ST}, the induced
%   morphism $Γ_XX → Γ_{SX}X \xrightarrow{c} X$ makes Diagram~\eqref{eq:pentagone-simple} commutes.
% \end{lemma}
% \begin{proof}
%   \[
%     \includegraphics[width=20em]{tableau8}
%   \] 
% \end{proof}
\begin{lemma}
  \label{lem:pentagon-moyen}
  For any $(δ,d)$-algebra $X$ with structure induced by
  \begin{mathpar}
    𝐚∶ SX → X \and 𝐛∶ TX → X \and 𝐜∶ Γ_X X → X\rlap{,}
  \end{mathpar}
  the following diagram commutes
  \begin{center}
    \diag{%
      Γ_X SX \& Γ_{SX}SX \& ST Γ^•_{ST|X} X \\
      \& \& ST Γ^\pt_X X \\
      Γ_X X \& \& STX \\
      \& X
    }{%
      (m-1-1) edge[labela={Γ_{η^{S}_X}SX}] (m-1-2) %
      edge[labell={Γ_X 𝐚}] (m-3-1) %
      (m-1-2) edge[labela={d^{•ω}_{X,X}}] (m-1-3) %
       (m-2-3) edge[labelr={ST[𝐜,X]}] (m-3-3) %
       (m-3-1) edge[labelbl={𝐜}] (m-4-2) %
       (m-3-3) edge[labelbr={𝐚∘S𝐛}] (m-4-2) %
       (m-1-3) edge[labelr={ST (Γ_{𝐚∘S𝐛}X+[X,X])}] (m-2-3)
    }
  \end{center}
  % \AL{ignorer le coin superieur gauche}.
  % \[
  %   \includegraphics[width=20em]{pentagon-simpl-interieur}
  % \]
\end{lemma}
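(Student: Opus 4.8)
The plan is to read Lemma~\ref{lem:pentagon-moyen} as the ``fully recursive'' analogue of the coherence diagram~(d2) from Definition~\ref{d:alt-models}: where~(d2) governs a single layer of basic operations (first argument $Σ X$) and the raw structural law $d$, the present diagram governs arbitrary depth (first argument $SX$) and its $ω$-extension $d^{•ω}$ of Definition~\ref{def:hbulletomega}. The first thing I would record is that the two right-hand arrows $ST(Γ_{𝐚∘S𝐛}X + [X,X])$ and $ST[𝐜,X]$ compose to $ST(𝐛⊳\check{𝐜})$, where $\check{𝐜} = 𝐜 ∘ Γ_𝐚 X ∶ Γ(X,SX) → X$ is the translate of $𝐜$ from Statement~\ref{item:check} and $⊳$ is as in Definition~\ref{def:triangleright}; indeed $\check{𝐜} ∘ Γ_{S𝐛}X = 𝐜 ∘ Γ_{𝐚∘S𝐛}X$. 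Thus the top composite is exactly $𝐚 ∘ S𝐛 ∘ ST(𝐛⊳\check{𝐜}) ∘ d^{•ω}_{X,X} ∘ Γ_{η^S_X}SX$, i.e.\ the ``evaluate the body'' map applied to the extended law, and the claim is that it agrees with $𝐜 ∘ Γ_𝐚 X$.

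Both composites are maps $Γ(SX,X) → X$ depending on $SX$ only through the first argument of $Γ$. Since $Γ$ is cocontinuous in its first argument and $SX = Σ^*X$ is the free $Σ$-algebra on $X$, I would prove the equality by induction along the presentation $SX ≅ X + Σ(SX)$, equivalently by transposing along the adjunction $Γ_X \dashv ℸ_X$ and invoking uniqueness of the $Σ$-algebra morphism out of $SX$ extending a map on the generators $X$ --- the same mechanism, via Lemma~\ref{lem:RB:liftsto:KBalg}, that defined $d^{•ω}$ as a monad morphism in the first place. The base case, where the first argument factors through $η^S_X ∶ X → SX$, follows from the unit law for $d^{•ω}$ (third diagram of Lemma~\ref{lem:coh:hbulletomegamu}); the monad-algebra unit laws $𝐚 ∘ η^S = \id$ and $𝐛 ∘ η^T = \id$ then collapse the top composite to $𝐜$, matching the bottom composite on the $X$-summand.

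The inductive step, where the outer constructor of the first argument is a $Σ$-operation, is where the real work lies and where I expect the main obstacle. Here I would use the main-argument coherence of $d^{•ω}$: the fourth diagram of Lemma~\ref{lem:coh:hbulletomegamu} identifies $d^{•ω}$ on a one-constructor pattern with the raw law $d$, making the $(δ,d)$-algebra axiom~(d2) applicable at the head, while the second diagram of Lemma~\ref{lem:coh:hbulletomegamu} --- the ``$μ^S$-in-the-main-argument'' law, which carries the $↑^S$, $δ$ and $μ^S$ machinery --- splits $d^{•ω}$ on a composite term into a head step and recursive calls, feeding the induction hypothesis. The delicate part is reconciling the two with the collapse map $ST(𝐛⊳\check{𝐜})$: one must thread the recursive $ST$-layers produced by $d^{•ω}$ through $Γ^•_{ST|X}X = Γ(X,STX)+X+X$ and verify, using the distributive-law axiom~(d1) and the monad-algebra multiplication laws, that evaluating the sub-results first and then applying $𝐜$, $𝐚$ and $𝐛$ yields the same answer as collapsing after the fact. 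This is a sizeable diagram chase, comparable in scale to Figures~\ref{fig:subdiagramA}--\ref{fig:subdiagramAprime}, and is the step I would expect to need the most care (and possibly machine assistance) to discharge.
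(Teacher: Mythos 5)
Your proposal is correct and follows essentially the same route as the paper's proof: the paper likewise transposes the diagram along the adjunction underlying Lemma~\ref{lem:RB:liftsto:KBalg} (there $Γ^\pt_{STX} \dashv ℸ^×_{STX}$ rather than the bare $Γ_X \dashv ℸ_X$, which is the adjunction actually compatible with $d^{•ω}$ being a monad morphism in transposed form), settles the base case by agreement on the generators $X$ via the unit coherence, and reduces the inductive step to showing both transposed legs are $Σ$-algebra morphisms, with~(d2) entering through the auxiliary Lemma~\ref{lem:p12} and diagram chases (Figures~\ref{fig:etad}, \ref{fig:p12:dcircin1}, \ref{fig:p12}) of exactly the scale you anticipate. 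Your opening observation that the two right-hand arrows compose to $ST(𝐛⊳\check{𝐜})$ also matches the paper's bookkeeping around Definition~\ref{def:triangleright}.
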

\begin{proof}
  We start by reducing to Subdiagram~(C) as below.
  \begin{center}
    \Diag(1,2){%
      \justify{m-2-2}{C}{m-4-3} %
      }{%
         \& Γ_{SX}SX \\
      Γ_X SX \& Γ^∘_{STX}SX \& ST Γ^•_{ST|X} X \\
      Γ_X X \& Γ^∘_{STX} X \& ST Γ^\pt_X X \\
      \&  Γ^∘_X X  \& STX \\
      \& X
    }{%
      (m-2-1) edge[labela={Γ_{η^{S}_X}SX}] (m-1-2) %
      edge[labell={Γ_X 𝐚}] (m-3-1) %
      edge[labelb={in₁∘(Γ_{η^{ST}_X} SX)}] (m-2-2) %
      (m-2-2) edge[labelb={\widetilde{\widetilde{d^∘}^ω}_{X,X}}] (m-2-3) %
      (m-1-2) edge[labelar={d^{•ω}_{X,X}}] (m-2-3) %
      (m-2-2) edge[labelr={Γ^∘_{STX} 𝐚}] (m-3-2) %
       (m-3-3) edge[labelr={ST[𝐜,X]}] (m-4-3) %
       (m-3-1) edge[labelar={in₁}] (m-4-2) %
       edge[labelbl={𝐜}] (m-5-2) %
       edge[labela={in₁ ∘ (Γ_{η^{ST}_X} X)}] (m-3-2) %
       (m-3-2) edge[labelr={Γ_{𝐚∘S𝐛} X + X}] (m-4-2) %
       (m-4-3) edge[labelbr={𝐚∘S𝐛}] (m-5-2) %
       (m-2-3) edge[labelr={ST (Γ_{𝐚∘S𝐛}X+[X,X])}] (m-3-3)
       (m-4-2) edge[labelr={[𝐜,X]}] (m-5-2) %
    }
  \end{center}
  Commutation of~(C) is then equivalent to commutation of the transposed diagram below.
  \begin{center}
    \Diag(.6,2.2){%

    }{%
      SX \& ℸ^×_{STX} ST Γ^•_{ST|X} X \& ℸ^×_{STX} ST Γ^\pt_X X \\
      X \& \& ℸ^×_{STX} STX \\
      ℸ^×_{STX} Γ^\pt_{STX} X \& ℸ^×_{STX} Γ^\pt_X X \& ℸ^×_{STX} X %
    }{%
      (m-1-1) edge[labela={\widetilde{d^∘}^ω}] (m-1-2) %
      (m-1-2) edge[loina={ℸ^×_{STX} ST (Γ_{𝐚∘S𝐛} X+[X,X])}] (m-1-3) %
      (m-1-1) edge[labelr={𝐚}] (m-2-1) %
      (m-1-3) edge[labell={ℸ^×_{STX} ST [𝐜,X]}] (m-2-3) %
      (m-2-3) edge[labell={ℸ^×_{STX} (𝐚 ∘ S𝐛)}] (m-3-3) %
      (m-2-1) edge[labelr={η_{STX,X}}] (m-3-1) %
      (m-3-1) edge[labelb={ℸ^×_{STX} Γ^\pt_{𝐚 ∘ S𝐛} X }] (m-3-2) %
      (m-3-2) edge[labelb={ℸ^×_{STX} [𝐜,X]}] (m-3-3) %
    }%
    \end{center}
    Because $\widetilde{d^∘}^ω_{X,X}$ is a monad morphism by construction,
    one easily proves that both sides have the same restriction along
    $η^S_X∶ X → SX$, so it suffices to prove that
    both sides are $S$-algebra morphisms.
    By construction, $𝐚$ and $\widetilde{d^∘}^ω_{X,X}$ are, so we focus
    on other morphisms.

    \begin{itemize}
    \item We first recall Lemma~\ref{lem:RB:liftsto:KBalg}, which says
      that $ℸ^×_{STX}$ lifts to a functor $ST O_{X}\Alg → Σ\alg$, by
      considering, for any $ST O_{X}$-algebra $𝐞∶ ST O_{X} A → A$, the
      following $Σ$-algebra structure:
      \begin{center}
      \diag{%
        Σℸ^×_{STX}A \\
        ℸ^×_{STX} ST O_{X} Γ^∘_{STX} ℸ^×_{STX} A \\
        ℸ^×_{STX} ST O_{X} A \\
        ℸ^×_{STX} A.
      }{%
        (m-1-1)
        edge[labelr={\widetilde{d^∘}_{X,X} ℸ^×_{STX} A}] (m-2-1) %
        (m-2-1)
        edge[labelr={ℸ^×_{STX} ST O_{X} ε_{STX} A}] (m-3-1) %
        (m-3-1) edge[labelr={      ℸ^×_{STX} 𝐞}] (m-4-1) %
        }
      \end{center}
    \item For proving that the top right composite is a morphism of
      $Σ$-algebras, it thus suffices to prove that the underlying
      composite $ST Γ^•_{STX} X → X$ is a morphism of
      $STO_{X}$-algebras.  But this is in turn equivalent to being
      both a morphism of $ST$- and $O_{X}$-algebras.  This easy to see
      for $O_X$, and the given composite is a morphism of
      $ST$-algebras as a composite of two free $ST$-algebra morphisms,
      and the $ST$-algebra structure $STX → X$.
    \item For the bottom left composite, we show that it is a
      $Σ$-algebra morphism by chasing the diagram in
      Figure~\ref{fig:etad},
% \begin{figure*}[pth]
\begin{figure}[htp]
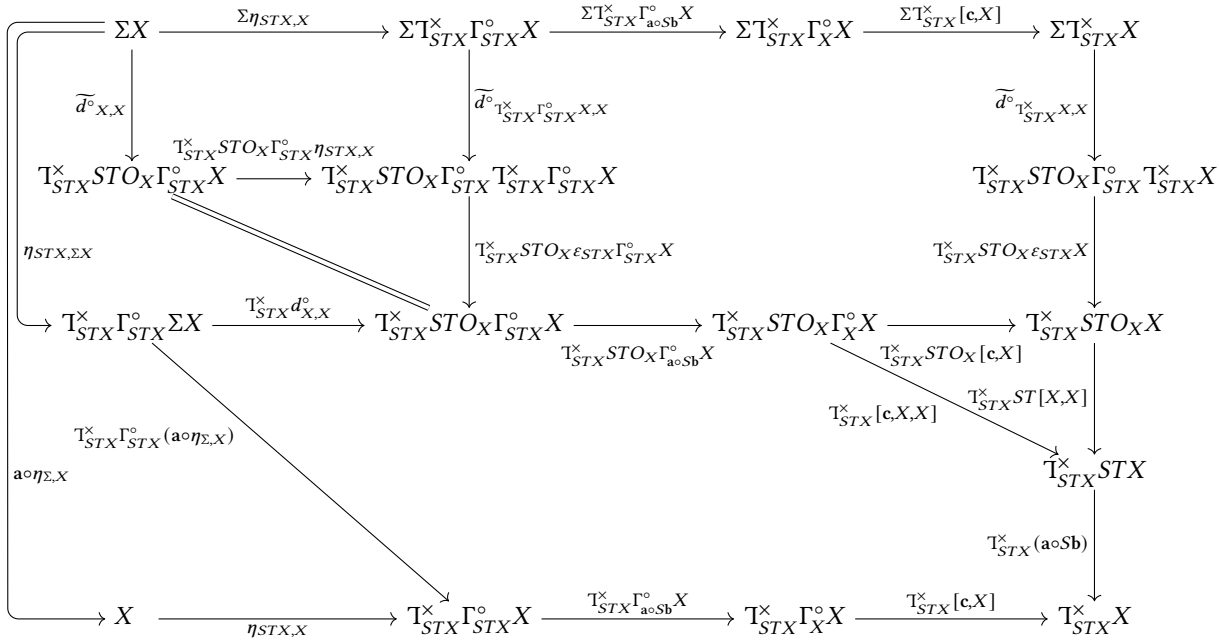

  \centering
  \begin{turn}{90}
  \diagramme[diag={1.5}{1}]{}{%
    \sqpath{m-1-1}{180}{10}{l}{m-3-1}{labelrat={η_{STX,ΣX}}{.75}} %
     \sqpath{m-1-1.160}{180}{8.5}{bidule}{m-5-1.west}{labelrat={𝐚 ∘ η_{Σ,X}}{.7}} %
  }{%
      ΣX \& Σℸ^×_{STX} Γ^∘_{STX} X \& Σℸ^×_{STX}Γ^∘_X X \& Σ ℸ^×_{STX} X \\
      ℸ^×_{STX} ST O_{X} Γ^∘_{STX}X \& ℸ^×_{STX} ST O_{X} Γ^∘_{STX} ℸ^×_{STX} Γ^∘_{STX} X \& \& ℸ^×_{STX} ST O_{X} Γ^∘_{STX} ℸ^×_{STX}  X \\
      ℸ^×_{STX} Γ^∘_{STX} Σ X \& ℸ^×_{STX} ST O_{X} Γ^∘_{STX} X \& ℸ^×_{STX} ST O_{X} Γ^∘_{X} X \& ℸ^×_{STX} ST O_{X} X \\
      \&  \&  \& ℸ^×_{STX} STX \\
     |[text width=3ex]| X \& ℸ^×_{STX} Γ^∘_{STX} X \& ℸ^×_{STX} Γ^∘_{X} X \& ℸ^×_{STX} X %
    }{%
      (m-1-1) edge[labela={Ση_{STX,X}}] (m-1-2) %
      edge[labell={\widetilde{d^∘}_{X,X}}] (m-2-1) %
%      (m-1-1.base west) edge[bend right=110,labelrat={η_{STX,X}}{.8}] (m-3-1.north west) %
      (m-1-2) edge[labela={Σ ℸ^×_{STX} Γ^∘_{𝐚∘S𝐛} X}] (m-1-3) %
      (m-3-1) edge[labelblat={ℸ^×_{STX} Γ^∘_{STX} (𝐚 ∘ η_{Σ,X})}{.3}] (m-5-2) %
      edge[labela={ℸ^×_{STX} d^∘_{X,X}}] (m-3-2) %
      (m-5-1) edge[labelb={η_{STX,X}}] (m-5-2) %
      (m-2-1) edge[identity] (m-3-2) %
      edge[label={[above=.5em]{$\scriptstyle ℸ^×_{STX} ST O_{X} Γ^∘_{STX} η_{STX,X}$}}] (m-2-2) %
      (m-1-2) edge[labelr={\widetilde{d^∘}_{ℸ^×_{STX} Γ^∘_{STX} X,X}}] (m-2-2) %
      (m-1-3) edge[labela={Σ ℸ^×_{STX} [𝐜,X]}] (m-1-4) %
      (m-1-4) edge[labell={\widetilde{d^∘}_{ℸ^×_{STX} X,X}}] (m-2-4) %
      (m-2-2) edge[labelr={ℸ^×_{STX} ST O_{X} ε_{STX} Γ^∘_{STX} X}] (m-3-2) %
      (m-2-4) edge[labell={ℸ^×_{STX} ST O_{X} ε_{STX} X}] (m-3-4) %
      (m-3-2) edge[loinb={ℸ^×_{STX} ST O_{X} Γ^∘_{𝐚∘S𝐛} X}] (m-3-3) %
      (m-3-3) edge[label={[below=.5em]{$\scriptstyle ℸ^×_{STX} ST O_{X} [𝐜,X]$}}] (m-3-4) %
      edge[labelbl={ℸ^×_{STX} [𝐜,X,X]}] (m-4-4) %
      (m-5-2) edge[labela={ℸ^×_{STX} Γ^∘_{𝐚∘S𝐛} X}] (m-5-3) %
      (m-5-3) edge[labela={ℸ^×_{STX} [𝐜,X]}] (m-5-4) %
      (m-3-4) edge[labell={ℸ^×_{STX}ST [X,X]}] (m-4-4) %
      (m-4-4) edge[labell={ℸ^×_{STX} (𝐚 ∘ S𝐛)}] (m-5-4) %
    }
  \end{turn}
  \caption{Diagram for Lemma~\ref{lem:pentagon-moyen}}
  \label{fig:etad}
\end{figure}      
where the bottom right subdiagram commutes by Lemma~\ref{lem:p12}
below.
    \end{itemize}
  % \[
  %   \includegraphics[width=20em]{tableau9-bli}
  % \] 
  % To show commutation of the subdiagram $bli$ we first transpose it.
  % \[
  %   \includegraphics[width=20em]{tableau10}
  % \]
  % Then, commutation follows by exploiting the universal property of $SX$ as the free $Σ$-algebra
  % on $X$.
  % We detail below the fact that the left branch is a morphism of $Σ$-algebras,
  % ignoring the first morphism $SX → X$, which obviously is.
  % \[
  %   \includegraphics[width=20em]{tableau11}
  % \]
\end{proof}

\begin{lemma}
  \label{lem:p12}
  For any $(δ,d)$-algebra $X$ with structure induced by
  \begin{mathpar}
    𝐚∶ SX → X \and 𝐛∶ TX → X \and 𝐜∶ Γ_X X → X\rlap{,}
  \end{mathpar}
    the following diagram commutes.
    \begin{center}
      \diag{%
        Γ^∘_{STX} Σ X \& \& ST Γ^•_{ST|X} X \\
          Γ^∘_{STX} X \& \& ST Γ^\pt_X X \\
          Γ^∘_{X} X \& \& ST X \\
           \& X
      }{%
        (m-1-1) edge[labela={d^∘_{X,X}}] (m-1-3) %
       (m-1-1) edge[labell={Γ^∘_{STX} (𝐚 ∘ η_{Σ,X})}] (m-2-1) %
       (m-1-3) edge[labelr={ST (Γ_{𝐚∘S𝐛} X + [X,X])}] (m-2-3) %
       (m-2-1) edge[labell={Γ^∘_{𝐚∘S𝐛} X}] (m-3-1) %
       (m-2-3) edge[labelr={ST[𝐜,X]}] (m-3-3) %
       (m-3-1) edge[labelbl={[𝐜,X]}] (m-4-2) %
       (m-3-3) edge[labelbr={𝐚∘S𝐛}] (m-4-2) %
      }
    \end{center}
\end{lemma}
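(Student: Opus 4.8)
The plan is to use the fact that the domain $Γ^∘_{STX}ΣX$ is, by definition of $Γ^∘$, the coproduct $Γ(ΣX,STX)+ΣX$, and to check the square on each summand via the universal property of coproducts. Throughout I would identify the family $d^∘$ appearing in the statement with the family $h^∘$ built in the proof of Theorem~\ref{thm:incremental}, so that the coherence laws of Lemma~\ref{lem:coh:hcirc} and the definition of $h^•$ (Definition~\ref{def:hbullet}) are available. The two summands have very different character: the point summand $ΣX$ is routine, whereas the summand $Γ(ΣX,STX)$ carries the whole content of the lemma and reduces to the algebra axiom (d2).

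On the point summand $ΣX \xrightarrow{α^{Γ^∘}_{ΣX,STX}} Γ^∘_{STX}ΣX$, I would invoke the first coherence law of Lemma~\ref{lem:coh:hcirc}, which identifies $d^∘_{X,X} ∘ α^{Γ^∘}_{ΣX,STX}$ with $ST α^{Γ^•_{ST|}}_{X,X} ∘ η_{Σ,ηᵀ_X}$, where $α^{Γ^•_{ST|}}$ is the middle injection into $Γ(X,STX)+X+X$. Tracing this middle injection through the collapsing maps $ST(Γ_{𝐚∘S𝐛}X+[X,X])$ and $ST[𝐜,X]$, both of which act as the identity on that injection, the right-hand composite reduces to $𝐚 ∘ S𝐛 ∘ η_{Σ,ηᵀ_X}$; this equals $𝐚 ∘ η_{Σ,X}$ by the $T$-algebra unit law $𝐛 ∘ ηᵀ_X = \id_X$, since $η_{Σ,ηᵀ_X} = S(ηᵀ_X) ∘ η_{Σ,X}$. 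The left-hand composite is manifestly $𝐚 ∘ η_{Σ,X}$ as well, because every map acts on the point as the identity. This settles the point summand.

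On the main summand $Γ(ΣX,STX) \xrightarrow{in₁} Γ^∘_{STX}ΣX$, the left-hand composite is $𝐜 ∘ Γ_{𝐚∘S𝐛}X ∘ Γ_{STX}(𝐚∘η_{Σ,X})$, while the right-hand composite begins with $d^∘$ restricted to $Γ(ΣX,STX)$. I would unfold $d^∘ = h^∘$ through its defining factorisation into $h^•$ followed by $ST↑^T$ and $Sμᵀ$, and then unfold $h^•$ on its $Γ$-summand into the genuine structural law $d$ composed with $ST↑^S$, $Sδ$ and $μ^S$. The target of the resulting chase is the coherence square (d2) of Definition~\ref{d:alt-models}, which holds because $X$ is by hypothesis a $(δ,d)$-algebra. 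To bring the unfolded $d^∘$ into register with (d2), the bookkeeping transformations $↑^T,↑^S,δ,μ^S,μᵀ$ must be absorbed using the monad-algebra laws for $𝐚$ and $𝐛$, the compatibility axiom (d1), and the distributive-law coherences of $δ$.

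The main obstacle is precisely this last chase. The delicacy is that the auxiliary argument of the domain is the \emph{general} object $STX$, not the unit image of $X$, so the third law of Lemma~\ref{lem:coh:hcirc} — which relates $h^∘$ back to $h$ only along $Γ_{η^{ST}}$ — does not apply directly, and one cannot simply collapse the auxiliary slot before running the recursion. Instead one must push the morphism $𝐚 ∘ S𝐛∶ STX → X$ through the recursive body produced by $d$, and it is in reconciling this pre-loaded auxiliary data with the output of the structural law that the monad-algebra laws, (d1), and the distributive-law axioms are all needed at once.
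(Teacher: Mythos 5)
Your proposal follows essentially the same route as the paper's proof: split the coproduct domain termwise, dispatch the point summand via the first coherence law of Lemma~\ref{lem:coh:hcirc} and the unit law for $𝐛$, and on the main summand unfold $d^∘$ back to the structural law $d$ (the paper packages your $↑^S;↑^T$ and $Sδ;μ^S;μᵀ$ steps as $↑^{ST}$ and $μ^{ST}$ in Figure~\ref{fig:p12:dcircin1}) before reducing to axiom~(d2) and absorbing the bookkeeping against the $ST$-algebra structure $𝐚∘S𝐛$ in a final chase (Figure~\ref{fig:p12}). The one clarification worth making is that your ``pushing $𝐚∘S𝐛$ through the recursive body'' is simply naturality of $d$ in its second argument, which needs no coherence at all; the monad-algebra laws, (d1), and the distributive-law axioms enter only afterwards, in absorbing $↑^{ST}$ and $μ^{ST}$.
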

\begin{proof}
  Since the domain is a coproduct, we proceed termwise.  On $ΣX$, the
  result is straightforward.  On $Γ_{STX}ΣX$, we observe that the
   diagram in Figure~\ref{fig:p12:dcircin1} commutes for all
  $X,Y$,
  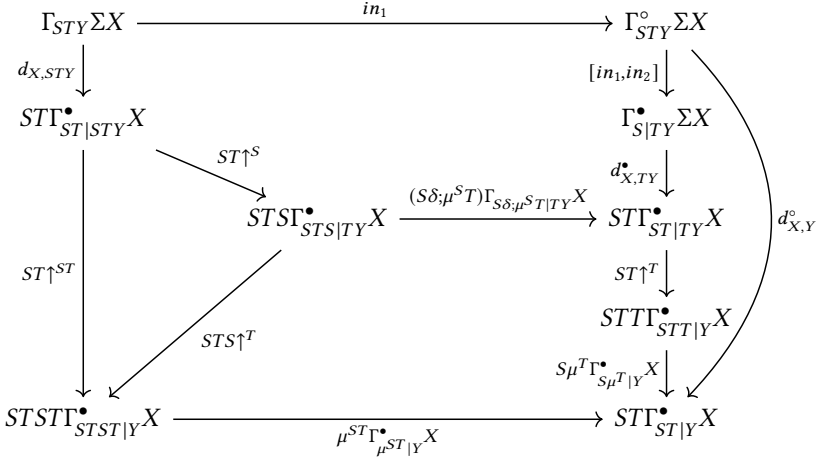
\begin{figure}[ht]
% file:///home/thirs/github/quiver/src/index.html?q=WzAsOSxbMCwwLCJcXEdhbW1hX3tTVFl9IFxcU2lnbWEgWCJdLFs0LDAsIlxcR2FtbWFeXFxjaXJjX3tTVFl9IFxcU2lnbWEgWCJdLFs0LDEsIlxcR2FtbWFeXFxidWxsZXRfe1N8VFl9IFxcU2lnbWEgWCJdLFs0LDIsIlNUXFxHYW1tYV5cXGJ1bGxldF97U1R8VFl9IFgiXSxbNCwzLCJTVFRcXEdhbW1hXlxcYnVsbGV0X3tTVFR8WX0gWCJdLFs0LDQsIlNUXFxHYW1tYV5cXGJ1bGxldF97U1R8WX0gWCJdLFswLDEsIlNUXFxHYW1tYV5cXGJ1bGxldF97U1R8U1RZfSBYIl0sWzEsMiwiU1RTXFxHYW1tYV5cXGJ1bGxldF97U1RTfFRZfSAgWCJdLFswLDQsIlNUU1RcXEdhbW1hXlxcYnVsbGV0X3tTVFNUfFl9IFgiXSxbMCwxLCJpbl8xIl0sWzEsMiwiW2luXzEsaW5fMl0iLDJdLFsyLDMsImReXFxidWxsZXRfe1gsVFl9IiwyXSxbMyw0LCJTVCBcXHVwYXJyb3deVCIsMl0sWzQsNSwiU1xcbXVeVFxcR2FtbWFeXFxidWxsZXRfe1NcXG11XlR8WX0gWCIsMl0sWzAsNiwiZF97WCxTVFl9IiwyXSxbNiw3LCJTVCBcXHVwYXJyb3deUyJdLFs3LDMsIihTXFxkZWx0YTtcXG11XlMgVClcXEdhbW1hX3tTXFxkZWx0YTtcXG11XlMgVHxUWX0gWCJdLFs3LDgsIlNUUyBcXHVwYXJyb3deVCJdLFs4LDUsIlxcbXVee1NUfVxcR2FtbWFeXFxidWxsZXRfe1xcbXVee1NUfXxZfSBYIiwyXSxbNiw4LCJTVFxcdXBhcnJvd157U1R9IiwyXSxbMSw1LCJkXlxcY2lyY197WCxZfSIsMCx7ImN1cnZlIjotNX1dXQ==
\begin{tikzcd}[ampersand replacement=\&,baseline=(\tikzcdmatrixname-5-1.base)]
	{\Gamma_{STY} \Sigma X} \&\&\&\& {\Gamma^\circ_{STY} \Sigma X} \\
	{ST\Gamma^\bullet_{ST|STY} X} \&\&\&\& {\Gamma^\bullet_{S|TY} \Sigma X} \\
	\& {STS\Gamma^\bullet_{STS|TY}  X} \&\&\& {ST\Gamma^\bullet_{ST|TY} X} \\
	\&\&\&\& {STT\Gamma^\bullet_{STT|Y} X} \\
	{STST\Gamma^\bullet_{STST|Y} X} \&\&\&\& {ST\Gamma^\bullet_{ST|Y} X}
	\arrow["{in_1}", from=1-1, to=1-5]
	\arrow["{[in_1,in_2]}"', from=1-5, to=2-5]
	\arrow["{d^\bullet_{X,TY}}"', from=2-5, to=3-5]
	\arrow["{ST \uparrow^T}"', from=3-5, to=4-5]
	\arrow["{S\mu^T\Gamma^\bullet_{S\mu^T|Y} X}"', from=4-5, to=5-5]
	\arrow["{d_{X,STY}}"', from=1-1, to=2-1]
	\arrow["{ST \uparrow^S}", from=2-1, to=3-2]
	\arrow["{(S\delta;\mu^S T)\Gamma_{S\delta;\mu^S T|TY} X}", from=3-2, to=3-5]
	\arrow["{STS \uparrow^T}", from=3-2, to=5-1]
	\arrow["{\mu^{ST}\Gamma^\bullet_{\mu^{ST}|Y} X}"', from=5-1, to=5-5]
	\arrow["{ST\uparrow^{ST}}"', from=2-1, to=5-1]
	\arrow["{d^\circ_{X,Y}}", curve={height=-50pt}, from=1-5, to=5-5]
      \end{tikzcd}
      \caption{Proof of Lemma~\ref{lem:p12}, first step}
      \label{fig:p12:dcircin1}
    \end{figure}
  so that the lemma reduces (using interchange) to commutation of the
  following diagram,
  \begin{center}
    \ajustedroit{
% file:///home/thirs/github/quiver/src/index.html?q=WzAsMTIsWzAsMCwiXFxHYW1tYV97U1RYfSBcXFNpZ21hIFgiXSxbNywwLCJTVFxcR2FtbWFeXFxidWxsZXRfe1NUfFh9IFgiXSxbMywwLCJTVFxcR2FtbWFeXFxidWxsZXRfe1NUfFNUWH0gWCJdLFs1LDAsIlNUU1RcXEdhbW1hXlxcYnVsbGV0X3tTVFNUfFh9IFgiXSxbMywyLCJTVFxcR2FtbWFeXFxjaXJjX3tYfSBYIl0sWzMsMywiU1QgWCJdLFszLDQsIlgiXSxbMCwxLCJcXEdhbW1hX3tYfSBcXFNpZ21hIFgiXSxbMCwyLCJcXEdhbW1hX3tYfSBYIl0sWzMsMSwiU1RcXEdhbW1hXlxcYnVsbGV0X3tTVHxYfSBYIl0sWzcsMSwiU1RcXEdhbW1hXlxcY2lyY197WH0gWCJdLFs3LDIsIlNUIFgiXSxbMCwyLCJkX3tYLFNUWH0iXSxbMywxLCJcXG11XntTVH1cXEdhbW1hXlxcYnVsbGV0X3tcXG11XntTVH18WH0gWCJdLFsyLDMsIlNUXFx1cGFycm93XntTVH0iXSxbNCw1LCJTVCBbXFxtYXRoYmZ7Y30sWF0iXSxbNSw2LCJcXG1hdGhiZnthfSBcXGNpcmMgUyBcXG1hdGhiZntifSJdLFswLDcsIlxcR2FtbWFfe1xcbWF0aGJme2F9IFxcY2lyYyBTIFxcbWF0aGJme2J9fSBcXFNpZ21hIFgiLDJdLFs3LDgsIlxcR2FtbWFfe1h9IChcXG1hdGhiZnthfSBcXGNpcmMgXFxldGFfe1xcU2lnbWEsWH0pIiwyXSxbOCw2LCJbXFxtYXRoYmZ7Y30sWF0iLDJdLFsyLDksIlNUXFxHYW1tYV5cXGJ1bGxldF97U1R8XFxtYXRoYmZ7YX0gXFxjaXJjIFMgXFxtYXRoYmZ7Yn19IFgiXSxbNyw5LCJkX3tYLFh9Il0sWzksNCwiU1QoXFxHYW1tYV97XFxtYXRoYmZ7YX0gXFxjaXJjIFMgXFxtYXRoYmZ7Yn19IFggKyBbWCxYXSkiXSxbMSwxMCwiU1QoXFxHYW1tYV97XFxtYXRoYmZ7YX0gXFxjaXJjIFMgXFxtYXRoYmZ7Yn19WCArIFtYLFhdKSJdLFsxMCwxMSwiU1QgW1xcbWF0aGJme2N9LFhdIl0sWzExLDYsIlxcbWF0aGJme2F9IFxcY2lyYyBTIFxcbWF0aGJme2J9Il1d
\begin{tikzcd}[ampersand replacement=\&]
	{\Gamma_{STX} \Sigma X} \&\&\& {ST\Gamma^\bullet_{ST|STX} X} \&\& {STST\Gamma^\bullet_{STST|X} X} \&\& {ST\Gamma^\bullet_{ST|X} X} \\
	{\Gamma_{X} \Sigma X} \&\&\& {ST\Gamma^\bullet_{ST|X} X} \&\&\&\& {ST\Gamma^\circ_{X} X} \\
	{\Gamma_{X} X} \&\&\& {ST\Gamma^\circ_{X} X} \&\&\&\& {ST X} \\
	\&\&\& {ST X} \\
	\&\&\& X
	\arrow["{d_{X,STX}}", from=1-1, to=1-4]
	\arrow["{\mu^{ST}\Gamma^\bullet_{\mu^{ST}|X} X}", from=1-6, to=1-8]
	\arrow["{ST\uparrow^{ST}}", from=1-4, to=1-6]
	\arrow["{ST [\mathbf{c},X]}", from=3-4, to=4-4]
	\arrow["{\mathbf{a} \circ S \mathbf{b}}", from=4-4, to=5-4]
	\arrow["{\Gamma_{\mathbf{a} \circ S \mathbf{b}} \Sigma X}"', from=1-1, to=2-1]
	\arrow["{\Gamma_{X} (\mathbf{a} \circ \eta_{\Sigma,X})}"', from=2-1, to=3-1]
	\arrow["{[\mathbf{c},X]}"', from=3-1, to=5-4]
	\arrow["{ST\Gamma^\bullet_{ST|\mathbf{a} \circ S \mathbf{b}} X}", from=1-4, to=2-4]
	\arrow["{d_{X,X}}", from=2-1, to=2-4]
	\arrow["{ST(\Gamma_{\mathbf{a} \circ S \mathbf{b}} X + [X,X])}", from=2-4, to=3-4]
	\arrow["{ST(\Gamma_{\mathbf{a} \circ S \mathbf{b}}X + [X,X])}", from=1-8, to=2-8]
	\arrow["{ST [\mathbf{c},X]}", from=2-8, to=3-8]
	\arrow["{\mathbf{a} \circ S \mathbf{b}}", from=3-8, to=5-4]
      \end{tikzcd}}
    \end{center}
    whose bottom left polygon commutes by~(d2).
    The right-hand part is chased as in Figure~\ref{fig:p12}.
    \begin{figure}[htp]
      \ajustetourne{
    \begin{tikzcd}[ampersand replacement=\&,baseline=(\tikzcdmatrixname-7-1.base)]
      {ST(\Gamma_{STSTX} X + X + STX)} \&\&\&\&\&\& {STST\Gamma^\bullet_{STST|X} X} \&\&\&\&\&\& {ST\Gamma^\bullet_{ST|X} X} \\
      \\
      \&\& {ST(\Gamma_{X} X + X + STX)} \&\&\&\&\&\&\&\&\&\& {ST\Gamma^\circ_{X} X} \\
      {ST\Gamma^\bullet_{ST|X} X} \&\& {ST(X + STX)} \&\&\& {ST(STX + STX)} \&\&\& {STST(X + X)} \&\& {ST(X + X)} \&\& {ST X} \\
      \&\&\&\&\& {ST(X + STX)} \&\&\& STSTX \\
      {ST\Gamma^\circ_{X} X} \&\& {ST(X + X)} \&\&\& {ST(X + X)} \\
      {ST X} \&\&\&\&\&\&\&\& STX \&\&\&\& X
      \arrow["{\mu^{ST}\Gamma^\bullet_{\mu^{ST}|X} X}", from=1-7, to=1-13]
      \arrow["{ST\uparrow^{ST}}", from=1-1, to=1-7]
      \arrow["{ST [\mathbf{c},X]}"', from=6-1, to=7-1]
      \arrow["{ST\Gamma^\bullet_{ST|\mathbf{a} \circ S \mathbf{b}} X}"', from=1-1, to=4-1]
      \arrow["{ST(\Gamma_{\mathbf{a} \circ S \mathbf{b}} X + [X,X])}"', from=4-1, to=6-1]
      \arrow["{ST(\Gamma_{\mathbf{a} \circ S \mathbf{b}}X + [X,X])}", from=1-13, to=3-13]
      \arrow["{ST [\mathbf{c},X]}", from=3-13, to=4-13]
      \arrow["{\mathbf{a} \circ S \mathbf{b}}", from=4-13, to=7-13]
      \arrow["{ST(\Gamma_{\mathbf{a} \circ S \mathbf{b} \circ ST \mathbf{a} \circ STS \mathbf{b}} X + X + STX)}"{pos=0.9}, from=1-1, to=3-3]
      \arrow["{ST([\mathbf{c},X] + STX)}"', from=3-3, to=4-3]
      \arrow["{ST \uparrow^{ST}}", curve={height=-24pt}, from=4-3, to=4-9]
      \arrow["{\mu^{ST}}", from=4-9, to=4-11]
      \arrow["{ST[X,X]}", from=4-11, to=4-13]
      \arrow["{ST(X + \mathbf{a} \circ S\mathbf{b})}"', from=4-3, to=6-3]
      \arrow["{ST[X,X]}"', from=6-3, to=7-1]
      \arrow["{ST(\eta^{ST}_X + STX)}"', from=4-3, to=4-6]
      \arrow["{ST[ST in_1, ST in_2]}", from=4-6, to=4-9]
      \arrow["{STST[X,X]}", from=4-9, to=5-9]
      \arrow["{\mu^{ST}_X}"', from=5-9, to=4-13]
      \arrow["{ST(\mathbf{a} \circ S \mathbf{b})}", from=5-9, to=7-9]
      \arrow["{\mathbf{a} \circ S \mathbf{b}}", from=7-9, to=7-13]
      \arrow[Rightarrow, no head, from=7-1, to=7-9]
      \arrow["{ST[STX,STX]}"', from=4-6, to=5-9]
      \arrow["{ST(\mathbf{a} \circ S \mathbf{b} + STX)}", from=4-6, to=5-6]
      \arrow["{ST(X + \mathbf{a} \circ S \mathbf{b})}", from=5-6, to=6-6]
      \arrow["{ST[X,X]}", from=6-6, to=7-9]
      \arrow[Rightarrow, no head, from=6-3, to=6-6]
      \arrow[curve={height=6pt}, Rightarrow, no head, from=4-3, to=5-6]
    \end{tikzcd}
  }
    \caption{Proof of Lemma~\ref{lem:p12}, second step}
    \label{fig:p12}
    \end{figure}
  \end{proof}

\section{Proof of Theorem~\ref{thm:benign}}
\label{app:benign}

\begin{lemma}
  For any structural interpretation $K∶ Θ(X,X) → STX$ of an incremental structural law
  $d_{X,Y}∶ Θ(ΣX,Y) → ST(Θ_{STY}X + X + Y)$ over a monad distributive $δ∶ TS → ST$,
  with $S = Σ^*$, the $Θ$-algebra structure defined on any $ST$-algebra $X$
  with structure maps $𝐚∶ SX → X$ and $𝐛∶ TX → X$ by
  the composite
  $$Θ(X,X) \xto{K_X} STX \xto{S𝐛} SX \xto{𝐚} X$$
  satisfies~(d2).
\end{lemma}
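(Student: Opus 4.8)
The plan is to exhibit diagram~(d2) of Definition~\ref{d:alt-models} (with $\Gamma=\Theta$) as the result of post-composing the structural-interpretation coherence diagram of Figure~\ref{fig:interp} with the $ST$-algebra structure of $X$. Write $r=\mathbf{a}\circ S\mathbf{b}\colon STX\to X$ for that structure, so that by definition $\mathbf{c}=r\circ K_X$ and, by the standard distributive-law correspondence, $r$ is an $ST$-algebra whose $S$-part is $\mathbf{a}=r\circ S\eta^T_X$ and whose $T$-part is $\mathbf{b}=r\circ\eta^S_{TX}$. Since both~(d2) and Figure~\ref{fig:interp} share the same top edge $d_{X,X}$, and since Figure~\ref{fig:interp} commutes by hypothesis, it suffices to identify each of the two vertical composites of~(d2) with the corresponding leg of Figure~\ref{fig:interp} post-composed with $r$.

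For the left-hand composite, observe that the left leg of~(d2) is $\mathbf{c}\circ\Theta(\mathbf{a}\circ\eta_{\Sigma,X},X)=r\circ K_X\circ\Theta(\mathbf{a}\circ\eta_{\Sigma,X},X)$. Using naturality of $K\colon\Theta\Delta\to ST$ at $\mathbf{a}\colon SX\to X$ together with the $S$-unit law $\mathbf{a}\circ\eta^S_X=\mathrm{id}$, I would rewrite $K_X\circ\Theta(\mathbf{a}\circ\eta_{\Sigma,X},X)=ST\mathbf{a}\circ K_{SX}\circ\Theta(\eta_{\Sigma,X},\eta^S_X)$. It then remains to check the identity $r\circ ST\mathbf{a}=r\circ\mu^S_{TX}\circ S\delta_X$ as maps $STSX\to X$, since this matches exactly the left leg $\mu^S_{TX}\circ S\delta_X\circ K_{SX}$ of Figure~\ref{fig:interp}. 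This identity follows by writing $\mathbf{a}=r\circ S\eta^T_X$, applying associativity $r\circ ST(r)=r\circ\mu^{ST}_X$ of the $ST$-algebra, and then the computation $\mu^{ST}_X\circ STS\eta^T_X=\mu^S_{TX}\circ S\delta_X$, which unwinds the formula for $\mu^{ST}$ using naturality of $\delta$ and of $\mu^S$ and the $T$-unit law.

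For the right-hand composite, I would first collapse the two outermost arrows of Figure~\ref{fig:interp} via $r\circ\mu^{ST}_X=r\circ ST(r)$, so that both composites acquire the shape $r\circ ST(-)$; it then suffices to compare the two bracketed maps $\Theta(X,STX)+X+X\to X$, and as the domain is a coproduct I would treat each summand separately. On the two $X$-summands both maps reduce to the identity by the $ST$-unit law $r\circ\eta^{ST}_X=\mathrm{id}$. On the summand $\Theta(X,STX)$ the comparison becomes $r\circ K_X\circ\Theta(X,r)=r\circ\mu^{ST}_X\circ K_{STX}\circ\Theta(\eta^{ST}_X,STX)$, which follows from naturality of $K$ at $r$ (giving $K_X\circ\Theta(r,r)=ST(r)\circ K_{STX}$), the $ST$-unit law applied in the first variable of $\Theta$ (so that $\Theta(r,r)\circ\Theta(\eta^{ST}_X,STX)=\Theta(X,r)$), and once more associativity $r\circ ST(r)=r\circ\mu^{ST}_X$.

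The main obstacle is the bookkeeping identity $r\circ ST\mathbf{a}=r\circ\mu^S_{TX}\circ S\delta_X$ of the left-hand step: it is the only place where the multiplication of the composite monad $ST$ and the distributive law $\delta$ must be unwound explicitly, whereas every other step is pure naturality together with the (co)unit and associativity laws of the algebra and monad structures at play. Once that identity is established, the two reductions assemble into a single commuting diagram whose outer boundary is precisely~(d2), and the lemma follows.
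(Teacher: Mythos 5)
Your proposal is correct and follows essentially the same route as the paper: the paper likewise reduces~(d2) to the coherence diagram of Figure~\ref{fig:interp} post-composed with the algebra structure, and verifies the two leg-identifications (left and right) by two diagram chases using naturality of $K$ and the unit/associativity laws. The only cosmetic difference is that your ``bookkeeping identity'' $r\circ ST\mathbf{a}=r\circ\mu^S_{TX}\circ S\delta_X$ is phrased via the composite $ST$-algebra $r=\mathbf{a}\circ S\mathbf{b}$ and the formula for $\mu^{ST}$, whereas the paper discharges the same step componentwise through $S$ applied to the pentagon~(d1) together with naturality of $\mu^S$ --- these are equivalent by Beck's correspondence, which the paper itself invokes.
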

\begin{proof}
  By coherence of $K$, this reduces to commutativity of both of the following diagrams.
  \begin{center}
    \ajustedroit{
  %   file:///home/thirs/github/quiver/src/index.html?q=WzAsMTMsWzAsMCwiXFxUaGV0YShcXFNpZ21hIFggLCBYKSJdLFsyLDAsIlxcVGhldGEoU1gsU1gpIl0sWzMsMCwiU1RTWCJdLFs0LDAsIlNTVFgiXSxbNSwwLCJTVFgiXSxbNSwxLCJTWCJdLFs1LDIsIlgiXSxbMCwyLCJcXFRoZXRhKFNYLFgpIl0sWzMsMiwiU1RYIl0sWzQsMiwiU1giXSxbMiwyLCJTVFgiXSxbMSwyLCJcXFRoZXRhKFgsWCkiXSxbNCwxLCJTU1giXSxbMCwxLCJcXFRoZXRhKFxcZXRhX3tcXFNpZ21hLFh9LCBcXGV0YV5TX1gpIl0sWzEsMiwiS197U1h9Il0sWzIsMywiU1xcZGVsdGFfWCJdLFszLDQsIlxcbXVeU197VFh9Il0sWzQsNSwiU1xcbWF0aGJme2J9Il0sWzUsNiwiXFxtYXRoYmZ7YX0iXSxbMCw3LCJcXFRoZXRhKFxcZXRhX3tcXFNpZ21hLFh9LCBYKSIsMl0sWzgsOSwiU1xcbWF0aGJme2J9IiwyXSxbOSw2LCJcXG1hdGhiZnthfSIsMl0sWzEwLDgsIiIsMix7ImxldmVsIjoyLCJzdHlsZSI6eyJoZWFkIjp7Im5hbWUiOiJub25lIn19fV0sWzExLDEwLCJLX1giLDJdLFs3LDExLCJcXFRoZXRhKFxcbWF0aGJme2F9LFgpIiwyXSxbMSwxMSwiXFxUaGV0YShcXG1hdGhiZnthfSxcXG1hdGhiZnthfSkiLDIseyJsYWJlbF9wb3NpdGlvbiI6NzB9XSxbMiwxMCwiU1RcXG1hdGhiZnthfSIsMix7ImxhYmVsX3Bvc2l0aW9uIjo3MH1dLFsxMiw5LCJTXFxtYXRoYmZ7YX0iLDJdLFszLDEyLCJTU1xcbWF0aGJme2J9IiwyXSxbMTIsNSwiXFxtdV5TX1giXV0=
  \begin{tikzcd}[ampersand replacement=\&]
    {\Theta(\Sigma X , X)} \&\& {\Theta(SX,SX)} \& STSX \& SSTX \& STX \\
    \&\&\&\& SSX \& SX \\
    {\Theta(SX,X)} \& {\Theta(X,X)} \& STX \& STX \& SX \& X
    \arrow["{\Theta(\eta_{\Sigma,X}, \eta^S_X)}", from=1-1, to=1-3]
    \arrow["{K_{SX}}", from=1-3, to=1-4]
    \arrow["{S\delta_X}", from=1-4, to=1-5]
    \arrow["{\mu^S_{TX}}", from=1-5, to=1-6]
    \arrow["{S\mathbf{b}}", from=1-6, to=2-6]
    \arrow["{\mathbf{a}}", from=2-6, to=3-6]
    \arrow["{\Theta(\eta_{\Sigma,X}, X)}"', from=1-1, to=3-1]
    \arrow["{S\mathbf{b}}"', from=3-4, to=3-5]
    \arrow["{\mathbf{a}}"', from=3-5, to=3-6]
    \arrow[Rightarrow, no head, from=3-3, to=3-4]
    \arrow["{K_X}"', from=3-2, to=3-3]
    \arrow["{\Theta(\mathbf{a},X)}"', from=3-1, to=3-2]
    \arrow["{\Theta(\mathbf{a},\mathbf{a})}"'{pos=0.7}, from=1-3, to=3-2]
    \arrow["{ST\mathbf{a}}"'{pos=0.7}, from=1-4, to=3-3]
    \arrow["{S\mathbf{a}}"', from=2-5, to=3-5]
    \arrow["{SS\mathbf{b}}"', from=1-5, to=2-5]
    \arrow["{\mu^S_X}", from=2-5, to=2-6]
  \end{tikzcd}}
  \end{center}

  \begin{center}
    \hfill
    \ajustedroit[.95]{
  %   file:///home/thirs/github/quiver/src/index.html?q=WzAsMTUsWzAsMCwiU1QoXFxUaGV0YShYLFNUWCkgKyBYICsgWCkiXSxbMywwLCJTVChcXFRoZXRhKFNUWCxTVFgpK1gpIl0sWzUsMCwiU1QoU1RTVFgrWCkiXSxbNSwxLCJTVFNUWCJdLFs1LDIsIlNUWCJdLFs1LDMsIlNYIl0sWzUsNCwiWCJdLFswLDEsIlNUKFxcVGhldGEoWCxTWCkgKyBYKSJdLFswLDIsIlNUKFxcVGhldGEoWCxYKSArIFgpIl0sWzAsMywiU1QoU1RYICsgWCkiXSxbMCw0LCJTVChTWCArIFgpIl0sWzEsNCwiU1RYIl0sWzMsNCwiU1giXSxbNCwyLCJTVChTVFNYK1gpIl0sWzMsMSwiU1QoXFxUaGV0YShTWCxTWCkrWCkiXSxbMCwxLCJTVChcXFRoZXRhKFxcZXRhXntTVH1fWCxTVFgpICsgW1gsWF0pIl0sWzEsMiwiU1QoS197U1RYfSArIFgpIl0sWzIsMywiU1RbXFxtdV57U1R9X1gsXFxldGFee1NUfV9YXSJdLFszLDQsIlxcbXVee1NUfV9YIl0sWzQsNSwiU1xcbWF0aGJme2J9Il0sWzUsNiwiXFxtYXRoYmZ7YX0iXSxbMCw3LCJTVChcXFRoZXRhKFgsU1xcbWF0aGJme2J9KStbWCxYXSkiLDJdLFs3LDgsIlNUKFxcVGhldGEoWCxcXG1hdGhiZnthfSkrWCkiLDJdLFs4LDksIlNUKEtfWCtYKSIsMl0sWzksMTAsIlNUKFNcXG1hdGhiZntifSArIFgpIiwyXSxbMTAsMTEsIlNUW1xcbWF0aGJme2F9LCBYXSIsMl0sWzExLDEyLCJTXFxtYXRoYmZ7Yn0iLDJdLFsxMiw2LCJcXG1hdGhiZnthfSIsMl0sWzIsMTMsIlNUKFNUU1xcbWF0aGJme2J9K1gpIiwyLHsibGFiZWxfcG9zaXRpb24iOjgwfV0sWzEsMTQsIlNUKFxcVGhldGEoU1xcbWF0aGJme2J9LFNcXG1hdGhiZntifSkgKyBYKSJdLFsxNCw4LCJTVChcXFRoZXRhKFxcbWF0aGJme2F9LFxcbWF0aGJme2F9KStYKSIsMl0sWzEzLDksIlNUKFNUXFxtYXRoYmZ7YX0gKyBYKSIsMCx7ImxhYmVsX3Bvc2l0aW9uIjozMH1dXQ==
  \begin{tikzcd}[ampersand replacement=\&,baseline=(\tikzcdmatrixname-5-1.base)]
    {ST(\Theta(X,STX) + X + X)} \&\&\& {ST(\Theta(STX,STX)+X)} \&\& {ST(STSTX+X)} \\
    {ST(\Theta(X,SX) + X)} \&\&\& {ST(\Theta(SX,SX)+X)} \&\& STSTX \\
    {ST(\Theta(X,X) + X)} \&\&\&\& {ST(STSX+X)} \& STX \\
    {ST(STX + X)} \&\&\&\&\& SX \\
    {ST(SX + X)} \& STX \&\& SX \&\& X
    \arrow["{ST(\Theta(\eta^{ST}_X,STX) + [X,X])}", from=1-1, to=1-4]
    \arrow["{ST(K_{STX} + X)}", from=1-4, to=1-6]
    \arrow["{ST[\mu^{ST}_X, \eta^{ST}_X]}", from=1-6, to=2-6]
    \arrow["{\mu^{ST}_X}", from=2-6, to=3-6]
    \arrow["{S\mathbf{b}}", from=3-6, to=4-6]
    \arrow["{\mathbf{a}}", from=4-6, to=5-6]
    \arrow["{ST(\Theta(X,S\mathbf{b})+[X,X])}"', from=1-1, to=2-1]
    \arrow["{ST(\Theta(X,\mathbf{a})+X)}"', from=2-1, to=3-1]
    \arrow["{ST(K_X+X)}"', from=3-1, to=4-1]
    \arrow["{ST(S\mathbf{b} + X)}"', from=4-1, to=5-1]
    \arrow["{ST[\mathbf{a}, X]}"', from=5-1, to=5-2]
    \arrow["{S\mathbf{b}}"', from=5-2, to=5-4]
    \arrow["{\mathbf{a}}"', from=5-4, to=5-6]
    \arrow["{ST(STS\mathbf{b}+X)}"'{pos=0.8}, from=1-6, to=3-5]
    \arrow["{ST(\Theta(S\mathbf{b},S\mathbf{b}) + X)}", from=1-4, to=2-4]
    \arrow["{ST(\Theta(\mathbf{a},\mathbf{a})+X)}"', from=2-4, to=3-1]
    \arrow["{ST(ST\mathbf{a} + X)}"{pos=0.3}, from=3-5, to=4-1]
  \end{tikzcd}      }
  \qedhere
  \end{center}
\end{proof}

\begin{corollary}
  The assignment of the previous lemma lifts to a section
  $\bar{K}∶ ST\Alg → S(T ⊕ T')\Alg$ of the forgetful functor
  $S(T⊕T')\Alg → ST\Alg$, where $T' = Θ_S^*$.
\end{corollary}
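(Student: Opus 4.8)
The plan is to read the corollary through the isomorphism $S(T⊕T')\Alg ≅ (δ,d)\Alg$ of Theorem~\ref{thm:models} (taken with $Γ = Θ$, so that $T' = Θ_S^\fleur = Θ_S^*$). Under this isomorphism, constructing the section $\bar{K}$ amounts to producing, functorially and compatibly with restriction, a $(δ,d)$-algebra out of each $ST$-algebra. On objects, an $ST$-algebra $X$ is, by the $δ$-algebra description recalled in the proof of Theorem~\ref{thm:models}, exactly an object equipped with $S$- and $T$-algebra structures $𝐚∶ SX → X$ and $𝐛∶ TX → X$ satisfying the pentagon (d1). The previous lemma then supplies, from the structural interpretation $K$, a $ΘΔ$-algebra structure $𝐜 = 𝐚 ∘ S𝐛 ∘ K_X∶ Θ(X,X) → X$ satisfying (d2). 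Hence $(X,𝐚,𝐛,𝐜)$ is a $(δ,d)$-algebra, and transporting along Theorem~\ref{thm:models} defines $\bar{K}X$ as an $S(T⊕T')$-algebra.

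Next I would establish functoriality by showing that every $ST$-algebra morphism $f∶ X → Y$ is automatically a $(δ,d)$-algebra morphism, i.e.\ that on top of preserving $𝐚$ and $𝐛$ it preserves the induced structures $𝐜$. This is the only genuine computation, and it is immediate from naturality of $K∶ ΘΔ → ST$. Since $f$ is an $S$- and a $T$-algebra morphism, we have $f ∘ 𝐚_X = 𝐚_Y ∘ Sf$ and $Sf ∘ S𝐛_X = S𝐛_Y ∘ STf$, while naturality gives $STf ∘ K_X = K_Y ∘ Θ(f,f)$. Chaining these three identities yields $f ∘ 𝐜_X = 𝐚_Y ∘ Sf ∘ S𝐛_X ∘ K_X = 𝐚_Y ∘ S𝐛_Y ∘ K_Y ∘ Θ(f,f) = 𝐜_Y ∘ Θ(f,f)$, so $f$ is a $ΓΔ$-algebra morphism. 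By the morphism description of Theorem~\ref{thm:models}, $f$ is then an $S(T⊕T')$-algebra morphism, and $\bar{K}$ is a functor.

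Finally, for the section property, I would identify the forgetful functor $S(T⊕T')\Alg → ST\Alg$ with $𝐬𝐞𝐦$ of the admissible monad morphism $ST → S(T⊕T')$ of Remark~\ref{rk:admis-STT}. Read through Theorem~\ref{thm:models}, this functor restricts a $(δ,d)$-algebra to its underlying $ST$-algebra, i.e.\ it forgets the $Θ$-structure $𝐜$ and retains $(𝐚,𝐛)$. Since $\bar{K}$ adds exactly this $𝐜$ while leaving $(𝐚,𝐛)$ untouched, the composite of $\bar{K}$ with the forgetful functor is the identity on $ST\Alg$, both on objects and on morphisms, as required.

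The main technical work has already been discharged by the previous lemma, which verifies (d2); what remains here is essentially bookkeeping. The one point demanding care is the correct identification of the forgetful functor with restriction along $ST → S(T⊕T')$ and the confirmation that this restriction, viewed through Theorem~\ref{thm:models}, is precisely the projection dropping $𝐜$. Once that identification is pinned down, the naturality computation for morphisms is the sole nonroutine step, and it presents no obstacle.
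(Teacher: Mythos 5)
Your proof is correct and takes essentially the route the paper leaves implicit: transporting along the isomorphism $S(T\oplus T')\Alg \cong (\delta,d)\Alg$ of Theorem~\ref{thm:models}, using the preceding lemma for~(d2), and verifying functoriality via naturality of $K$ together with $f$ being an $S$- and $T$-algebra morphism. The paper states this corollary without proof precisely because these checks are routine, and your bookkeeping --- including identifying the forgetful functor with the restriction dropping $\mathbf{c}$, which is justified by the ``more precisely'' clause of Theorem~\ref{thm:models} --- fills them in correctly.
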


Let us now prove Theorem~\ref{thm:benign}.
Suppose given a structural equational system over a distributive law $δ∶ TS → ST$, i.e.,
an incremental structural law 
$$d_{X,Y}∶ Θ_Y(Σ(X)) → ST(Θ_{ST(Y)}(X) + X + Y)$$ and a pair of structural interpretations $L,R∶ΘΔ → ST$.
In this section, we show that the initial $ST$-algebra
$STST∅ → ST∅$ coequalises $L_{ST∅}$ and $R_{ST∅}$.

To this end, we exploit admissibility of the monad morphism $ST →
S(T⊕T')$ (Remark~\ref{rk:admis-STT})
which entails that the $ST$-algebra structure on $ST∅$ uniquely extends to
an $S(T⊕T')$-algebra structure. By Theorem~\ref{thm:models},
we obtain that there is a unique $ΘΔ$-algebra structure on $ST∅$
satisfying~(d2).

But by the corollary $L$ and $R$ both induce $ΘΔ$-algebra stucture on $ST∅$
which satisfy~(d2), namely the composites
\begin{center}
  \diag(1,2){%
    Θ_{ST∅}ST∅ \& STST∅ \& ST∅.
  }{%
    (m-1-1) edge[bend left=10,labela={L_{ST∅}}] (m-1-2) %
    (m-1-1) edge[bend right=10,labelb={R_{ST∅}}] (m-1-2) %
    (m-1-2) edge[labela={μ^{ST}_{∅}}] (m-1-3)
  }
\end{center}
We thus conclude by uniqueness.

% \[
%   \includegraphics[width=20em]{tableau5}
% \]

\section{Proofs of §\ref{ss:ptstrengths}}\label{app:ptstrengths}

We need to check coherence (Figure~\ref{fig:interp}) for all of
$L₁$, $R₁$, $L₂$, and $R₂$.

\subsection{Coherence of $L₁$}
We first check coherence of $L₁$ in Figure~\ref{fig:coherence:L1},
which uses the following lemma.
\begin{lemma}\label{lem:eta:gamma:s}
  For all objects $X$, the following diagram commutes.
  \begin{center}
    % file:///home/thirs/github/quiver/src/index.html?q=WzAsNixbMCwwLCJcXEdhbW1hX3tTWH1YIl0sWzAsMSwiVFgiXSxbNCwwLCJcXEdhbW1hX3tTU1h9U1giXSxbNCwxLCJUU1giXSxbMiwyLCJTVFgiXSxbMiwwLCJcXEdhbW1hX3tTU1h9WCJdLFswLDEsIlxcZXRhX3tcXEdhbW1hX1MsWH0iLDJdLFsyLDMsIlxcZXRhX3tcXEdhbW1hX1MsU1h9Il0sWzMsNCwiXFxkZWx0YV9YIl0sWzEsNCwiXFxldGFeU197VFh9IiwyXSxbMCw1LCJcXEdhbW1hX3tTWH0gXFxldGFeU19YIl0sWzUsMiwiXFxHYW1tYV97XFxldGFeU197U1h9fSBTWCJdXQ==
\begin{tikzcd}[ampersand replacement=\&]
	{\Gamma_{SX}X} \&\& {\Gamma_{SSX}X} \&\& {\Gamma_{SSX}SX} \\
	TX \&\&\&\& TSX \\
	\&\& STX
	\arrow["{\eta_{\Gamma_S,X}}"', from=1-1, to=2-1]
	\arrow["{\eta_{\Gamma_S,SX}}", from=1-5, to=2-5]
	\arrow["{\delta_X}", from=2-5, to=3-3]
	\arrow["{\eta^S_{TX}}"', from=2-1, to=3-3]
	\arrow["{\Gamma_{SX} \eta^S_X}", from=1-1, to=1-3]
	\arrow["{\Gamma_{\eta^S_{SX}} SX}", from=1-3, to=1-5]
\end{tikzcd}
  \end{center}
\end{lemma}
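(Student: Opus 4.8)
The plan is to read the top row of the square as the action of the endofunctor $\Gamma_S\Delta$ on the unit $\eta^S_X\colon X \to SX$, and then to collapse the whole diagram using two facts only: naturality of the unit of the free monad $T = \Gamma_S^\fleur$, and the compatibility of the distributive law $\delta$ with the unit of $S$.

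First I would note that the horizontal composite $\Gamma_{SX}\eta^S_X$ followed by $\Gamma_{\eta^S_{SX}}SX$ is, by functoriality of $\Gamma$ in its two arguments (interchange), exactly the morphism $\Gamma_S\Delta(\eta^S_X)\colon \Gamma_S\Delta(X) \to \Gamma_S\Delta(SX)$ obtained by applying the endofunctor $\Gamma_S\Delta$ to $\eta^S_X$. Since $\eta_{\Gamma_S}\colon \Gamma_S\Delta \to T$ is a natural transformation (the unit of the free monad $T$), naturality at the morphism $\eta^S_X$ gives
\[
  \eta_{\Gamma_S, SX} \circ \Gamma_S\Delta(\eta^S_X) \;=\; T(\eta^S_X) \circ \eta_{\Gamma_S, X}.
\]
Hence the clockwise composite $\delta_X \circ \eta_{\Gamma_S, SX} \circ \Gamma_S\Delta(\eta^S_X)$ rewrites as $\delta_X \circ T(\eta^S_X) \circ \eta_{\Gamma_S, X}$.

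It then remains to invoke the unit axiom of the monad distributive law $\delta\colon TS \to ST$ relative to $S$, namely $\delta_X \circ T(\eta^S_X) = \eta^S_{TX}$, which is part of the definition of distributive law recalled from~\cite{BeckDistlaws} (commutation of $\delta$ with $\eta^S$). Substituting yields $\eta^S_{TX}\circ \eta_{\Gamma_S, X}$, which is precisely the counterclockwise composite $\eta^S_{TX}\circ\eta_{\Gamma_S,X}$ of the square, proving the claim.

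I do not expect a genuine obstacle: the only points needing care are to confirm that the top row is indeed $\Gamma_S\Delta(\eta^S_X)$ — so that naturality of $\eta_{\Gamma_S}$ is applicable — and to select the correct one among the four distributive-law axioms, i.e.\ the one expressing compatibility with $\eta^S$ rather than with $\eta^T$ or with either multiplication.
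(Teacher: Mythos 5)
There is a genuine gap, and it sits exactly at the point you flagged as ``needing care'': the top row of the diagram is \emph{not} $\Gamma_S\Delta(\eta^S_X)$. Applying the endofunctor $\Gamma_S\Delta$ to $\eta^S_X$ yields $\Gamma(\eta^S_X, S\eta^S_X)\colon \Gamma(X,SX)\to\Gamma(SX,SSX)$, whose second component is the \emph{image of the unit under} $S$. The diagram's second leg, labelled $\Gamma_{\eta^S_{SX}}SX$, instead uses the \emph{unit at} $SX$, i.e.\ $\eta^S_{SX}\colon SX\to SSX$, so the top composite is $\Gamma(\eta^S_X,\eta^S_{SX})$. The morphisms $S\eta^S_X$ and $\eta^S_{SX}$ differ in general (for a free monad, $\eta^S_{SX}(e)$ is the generator labelled by the whole term $e$, whereas $S\eta^S_X(e)$ is $e$ with its variables relabelled; they agree only after precomposing with $\eta^S_X$ or postcomposing with $\mu^S_X$, neither of which is available here). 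Consequently naturality of $\eta_{\Gamma_S}$ does not rewrite the clockwise composite as $\delta_X\circ T\eta^S_X\circ\eta_{\Gamma_S,X}$, and Beck's unit axiom $\delta\circ T\eta^S=\eta^S T$ cannot be invoked: before $\delta_X$ is applied, the two elements $\eta_{\Gamma_S,SX}\circ\Gamma(\eta^S_X,\eta^S_{SX})$ and $T\eta^S_X\circ\eta_{\Gamma_S,X}$ of $TSX$ are genuinely distinct (in the addition example, ``$x$ plus the single variable $e$'' versus ``$x$ plus the term $e$ with relabelled variables'').

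This is why the lemma is not a formal consequence of the distributive-law axioms plus naturality: its content is precisely that the \emph{constructed} $\delta$ identifies these two elements. Accordingly, the paper's proof unfolds the definition of $\delta$ from the structural law $d$, chasing through the inductive extension $d^{\bullet\omega}$ and invoking Lemma~\ref{lem:d:delta:Y} (which characterises $\delta$ composed with $\eta_{\Gamma_S}$ in terms of $d^{\bullet\omega}$) together with the coherence laws of Lemma~\ref{lem:coh:hbulletomegamu}. A useful sanity check: if your identification of the top row were correct, the statement would hold for \emph{every} monad distributive law $TS\to ST$, and the authors would not have needed a page-long chase through construction-specific lemmas. (In partial mitigation: the printed labels and the intermediate object $\Gamma_{SSX}X$ in the lemma do not line up under the paper's conventions -- the two legs correspond to the two factorisations of $\Gamma(\eta^S_X,\eta^S_{SX})$ -- which may have invited the misreading; but the subscript $\eta^S_{SX}$, and the uses of the lemma in Figure~\ref{fig:gamma:gamma}, where the relevant map is $\eta^S_{TX}\otimes\eta^S_{STX}$, settle unambiguously which unit is meant.) To repair the argument you would need to replace the appeal to naturality and the unit axiom by an analysis of $\delta$ itself, essentially reconstructing the paper's chase.
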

\begin{proof}
  By diagram chasing, as follows.
  \begin{center}\hfill
    \ajustedroit[.95]{
\begin{tikzcd}[ampersand replacement=\&,baseline=(\tikzcdmatrixname-11-6.base)]
	{\Gamma_{SX}X} \&\&\&\&\& {\Gamma_{SX}SX} \&\&\&\& {\Gamma_{SSX}SX} \\
	\&\&\&\&\& {\Gamma_{SSX}X} \\
	\\
	\\
	\&\& {S(\Gamma_{SX}X)} \&\& {S(\Gamma_{SSX}X)} \& {S(\Gamma_{SSX}X + X + SX)} \\
	\&\&\&\&\& {S(S(\Gamma_{SSX}X + X) + SX)} \&\& {\Gamma_{SX}SX} \\
	\&\&\&\&\& {SS(\Gamma_{SSX}X + X + X)} \\
	\\
	TX \&\&\&\&\& {S(\Gamma_{SX}X + X + X)} \&\&\&\& TSX \\
	\\
	\&\&\&\&\& STX
	\arrow["{\eta_{\Gamma_S,X}}"', from=1-1, to=9-1]
	\arrow["{\eta_{\Gamma_S,SX}}", from=1-10, to=9-10]
	\arrow["{\delta_X}", from=9-10, to=11-6]
	\arrow["{\eta^S_{TX}}"', from=9-1, to=11-6]
	\arrow[""{name=0, anchor=center, inner sep=0}, "{d^{\bullet\omega}_{X,SX}}"', from=1-10, to=5-6]
	\arrow["{\eta^S \Gamma_{SX}X}"', from=1-1, to=5-3]
	\arrow["{S in_1}"', curve={height=30pt}, from=5-3, to=9-6]
	\arrow["{S(\eta^S + SX)}"', from=5-6, to=6-6]
	\arrow["{S[Sin_1,Sin_2]}"', from=6-6, to=7-6]
	\arrow["{\mu^S(\Gamma_{\mu^S_X}X + X + X)}"', from=7-6, to=9-6]
	\arrow["{S[\eta_{\Gamma_S,X},\eta^T_X,\eta^T_X]}"', from=9-6, to=11-6]
	\arrow["{\Gamma_\mu}", from=1-10, to=6-8]
	\arrow[""{name=1, anchor=center, inner sep=0}, "{d^{\bullet\omega}_{X,X}}", from=6-8, to=9-6]
	\arrow["{\text{(Lemma~\ref{lem:d:delta:Y})}}"{description}, curve={height=18pt}, draw=none, from=6-8, to=9-10]
	\arrow["{\Gamma_{\eta^S_{SX}}X}"', from=1-1, to=2-6]
	\arrow[""{name=2, anchor=center, inner sep=0}, "{\Gamma_{SSX} \eta^S_X}"', from=2-6, to=1-10]
	\arrow["{\Gamma_{SX} \eta^S_X}", from=1-1, to=1-6]
	\arrow["{\Gamma_{\eta^S_{SX}} SX}", from=1-6, to=1-10]
	\arrow["{\eta^Sin_1}"', from=2-6, to=5-6]
	\arrow["{S(\Gamma_{\eta^S_{SX}}X)}", from=5-3, to=5-5]
	\arrow["{Sin_1}", from=5-5, to=5-6]
	\arrow["{\text{(Lemma~\ref{lem:coh:hbulletomegamu})}}"{description}, Rightarrow, draw=none, from=0, to=1]
	\arrow["{\text{(Lemma~\ref{lem:coh:hbulletomegamu})}}"{description}, Rightarrow, draw=none, from=2, to=0]
\end{tikzcd}}      \qedhere
  \end{center}
\end{proof}

\begin{figure}[htp]
  \ajustetourne{
\begin{tikzcd}[ampersand replacement=\&]
	{\Sigma^+X \otimes X \otimes X} \& {I\otimes X \otimes X + \Sigma X \otimes X \otimes X} \&\& {X\otimes X + \Sigma X \otimes SX \otimes SX} \&\&\& {TX + \Sigma (X \otimes SX) \otimes SX} \& {TX + \Sigma(X \otimes SX \otimes SX)} \& {TX + \Sigma^+(X \otimes SX \otimes SX)} \& {STX + ST(X \otimes STX \otimes STX)} \& {ST(X \otimes STX \otimes STX + X + X)} \\
	\&\&\& {X \otimes X + \Sigma X \otimes SX \otimes X} \&\& {X \otimes X + \Sigma(X\otimes SX) \otimes X} \\
	\&\&\&\&\& {(X + \Sigma(X \otimes SX)) \otimes X} \\
	{SX \otimes SX \otimes SX} \&\&\& {S(X\otimes SX + X+ X)\otimes X} \&\&\&\&\&\&\& {ST(STX \otimes STX \otimes STX + X)} \\
	{TSX\otimes TSX} \&\&\& {STX \otimes STX} \&\&\&\&\&\&\& {ST(TSTX \otimes TSTX + X)} \\
	\&\&\&\&\&\&\&\&\&\& {ST(TTSTX + X)} \\
	TTSX \&\&\& TSTX \&\& STTX \&\&\&\&\& {ST(STSTX + X)} \\
	\& TSX \&\&\&\&\&\&\&\&\& STSTX \\
	STSX \& STX \\
	SSTX \\
	\\
	\\
	\&\&\&\&\& STX
	\arrow[Rightarrow, no head, from=1-1, to=1-2]
	\arrow["{\lambda_X \otimes X + X \otimes \eta^S_X \otimes \eta^S_X}", from=1-2, to=1-4]
	\arrow["{j_X + st_{X,(SX,v_X)} \otimes SX}", from=1-4, to=1-7]
	\arrow["{TX + st_{X\otimes SX,(SX,v_X)}}", from=1-7, to=1-8]
	\arrow["{[STin_3,STin_1]}", from=1-10, to=1-11]
	\arrow["{ST(\eta^{ST}_X \otimes STX \otimes STX + [X,X])}"', from=1-11, to=4-11]
	\arrow["{ST(j_{STX} \otimes \eta^T_{STX} + X)}", from=4-11, to=5-11]
	\arrow["{ST(j_{TSTX} + X)}", from=5-11, to=6-11]
	\arrow["{ST(\eta^S \mu^T STX + X)}", from=6-11, to=7-11]
	\arrow["{ST[\mu^{ST}_X,\eta^{ST}_X]}", from=7-11, to=8-11]
	\arrow[""{name=0, anchor=center, inner sep=0}, "{\mu^{ST}_X}", from=8-11, to=13-6]
	\arrow["{\eta_{\Sigma^+,X} \otimes \eta^S_X \otimes \eta^S_X}"', from=1-1, to=4-1]
	\arrow["{j_{SX} \otimes \eta^T_{SX}}"', from=4-1, to=5-1]
	\arrow["{j_{TSX}}"', from=5-1, to=7-1]
	\arrow["{\eta^S\mu^T_{SX}}"', from=7-1, to=9-1]
	\arrow["{S\delta_X}"', from=9-1, to=10-1]
	\arrow["{\mu^S_{TX}}"', from=10-1, to=13-6]
	\arrow["{\mu^T_{SX}}", from=7-1, to=8-2]
	\arrow["{\delta_X}", curve={height=-18pt}, from=8-2, to=13-6]
	\arrow["{\eta^S_{TSX}}"{description}, from=8-2, to=9-1]
	\arrow["{\delta_X}", from=8-2, to=9-2]
	\arrow["{\eta^S_{STX}}", from=9-2, to=10-1]
	\arrow[Rightarrow, no head, from=9-2, to=13-6]
	\arrow["{T \delta_X}", from=7-1, to=7-4]
	\arrow["{\delta_{TX}}", from=7-4, to=7-6]
	\arrow["{S\mu^T_X}", from=7-6, to=13-6]
	\arrow[""{name=1, anchor=center, inner sep=0}, "{d_{X,X} \otimes X}"', from=1-1, to=4-4]
	\arrow["{S[\eta_{\Gamma_S \Delta,X},\eta^T_X,\eta^T_X])\otimes \eta^{ST}_X}"', from=4-4, to=5-4]
	\arrow["{j_{STX}}"', from=5-4, to=7-4]
	\arrow["{\lambda_X \otimes X + X \otimes \eta^S_X \otimes X}"'{pos=0.2}, from=1-2, to=2-4]
	\arrow["{X\otimes X + st_{X,(SX,v_X)} \otimes X}"', from=2-4, to=2-6]
	\arrow[Rightarrow, no head, from=2-6, to=3-6]
	\arrow["{j_X + \Sigma(X\otimes SX) \otimes \eta^S_X}"{description}, from=2-6, to=1-7]
	\arrow["{\text{(Definition of } d \text{)}}"{description}, draw=none, from=2-4, to=4-4]
	\arrow["{[\eta^S in_3, (\eta_{\Sigma^+} \circ in_2)_{in_1}]\otimes X}"{pos=0.1}, from=3-6, to=4-4]
	\arrow[""{name=2, anchor=center, inner sep=0}, "{\delta_X \otimes \delta_X}"', from=5-1, to=5-4]
	\arrow["{\eta^S_{TX} + \eta_{\Sigma^+}\eta^T (X\otimes S\eta^T_X \otimes S\eta^T_X)}", from=1-9, to=1-10]
	\arrow["{TX + in_2}", from=1-8, to=1-9]
	\arrow["{\text{($d/\delta \otimes \delta$)}}"{description}, Rightarrow, draw=none, from=2, to=1]
	\arrow["{\text{($st/\delta$)}}"{description, pos=0.6}, draw=none, from=0, to=1-8]
\end{tikzcd}
  }
  \caption{Coherence of $L₁$}
  \label{fig:coherence:L1}
\end{figure}
In Figure~\ref{fig:coherence:L1},
the subdiagram marked ($d/δ⊗δ$) commutes by chasing as follows.
\begin{center}
  \adjustbox{max width=\columnwidth}{%
% file:///home/thirs/github/quiver/src/index.html?q=WzAsMTAsWzAsMCwiXFxTaWdtYV4rWCBcXG90aW1lcyBYIFxcb3RpbWVzIFgiXSxbMCwxLCJTWCBcXG90aW1lcyBTWCBcXG90aW1lcyBTWCJdLFs1LDAsIlMoWFxcb3RpbWVzIFNYICsgWCsgWClcXG90aW1lcyBYIl0sWzUsNCwiU1RYIFxcb3RpbWVzIFNUWCJdLFswLDIsIlNYIFxcb3RpbWVzIFNTWCBcXG90aW1lcyBTWCJdLFswLDQsIlRTWCBcXG90aW1lcyBUU1giXSxbMSwxLCJTWCBcXG90aW1lcyBTWCBcXG90aW1lcyBYIl0sWzEsMiwiU1ggXFxvdGltZXMgU1NYIFxcb3RpbWVzIFgiXSxbMSwzLCJUU1ggXFxvdGltZXMgWCJdLFszLDMsIlNUWCBcXG90aW1lcyBYIl0sWzAsMSwiXFxldGFfe1xcU2lnbWFeKyxYfSBcXG90aW1lcyBcXGV0YV5TX1ggXFxvdGltZXMgXFxldGFeU19YIiwyXSxbMCwyLCJkX3tYLFh9IFxcb3RpbWVzIFgiXSxbMiwzLCJTW1xcZXRhX3tcXEdhbW1hX1MgXFxEZWx0YSxYfSxcXGV0YV5UX1gsXFxldGFeVF9YXSlcXG90aW1lcyBcXGV0YV57U1R9X1giLDFdLFsxLDQsIlNYIFxcb3RpbWVzIFxcZXRhXlNfe1NYfSBcXG90aW1lcyBTWCIsMl0sWzQsNSwiXFxldGFfe1xcR2FtbWFfU31TWCBcXG90aW1lcyBcXGV0YV5UX3tTWH0iLDJdLFswLDYsIlxcZXRhX3tcXFNpZ21hXissWH0gXFxvdGltZXMgXFxldGFeU19YIFxcb3RpbWVzIFgiLDAseyJsYWJlbF9wb3NpdGlvbiI6OTB9XSxbNiw3LCJTWCBcXG90aW1lcyBcXGV0YV5TX3tTWH0gXFxvdGltZXMgWCJdLFs3LDgsIlxcZXRhX3tcXEdhbW1hX1MsU1h9IFxcb3RpbWVzIFgiXSxbOCw5LCJcXGRlbHRhX1ggXFxvdGltZXMgWCJdLFsyLDksIlNbXFxldGFfe1xcR2FtbWFfUyBcXERlbHRhLFh9LFxcZXRhXlRfWCxcXGV0YV5UX1hdKVxcb3RpbWVzIFgiLDFdLFs5LDMsIlNUWCBcXG90aW1lcyBcXGV0YV57U1R9X1giLDJdLFs4LDUsIlRTWCBcXG90aW1lcyBcXGV0YV5UIFxcZXRhXlMgWCIsMV0sWzUsMywiXFxkZWx0YV9YIFxcb3RpbWVzIFxcZGVsdGFfWCIsMl0sWzExLDksIlxcdGV4dHtcXGVxcmVme2VxOnNpbXBsZTpkOmRlbHRhfX0iLDEseyJzaG9ydGVuIjp7InNvdXJjZSI6MjB9LCJzdHlsZSI6eyJib2R5Ijp7Im5hbWUiOiJub25lIn0sImhlYWQiOnsibmFtZSI6Im5vbmUifX19XV0=
\begin{tikzcd}[ampersand replacement=\&]
	{\Sigma^+X \otimes X \otimes X} \&\&\&\&\& {S(X\otimes SX + X+ X)\otimes X} \\
	{SX \otimes SX \otimes SX} \& {SX \otimes SX \otimes X} \\
	{SX \otimes SSX \otimes SX} \& {SX \otimes SSX \otimes X} \\
	\& {TSX \otimes X} \&\& {STX \otimes X} \\
	{TSX \otimes TSX} \&\&\&\&\& {STX \otimes STX}
	\arrow["{\eta_{\Sigma^+,X} \otimes \eta^S_X \otimes \eta^S_X}"', from=1-1, to=2-1]
	\arrow[""{name=0, anchor=center, inner sep=0}, "{d_{X,X} \otimes X}", from=1-1, to=1-6]
	\arrow["{S[\eta_{\Gamma_S \Delta,X},\eta^T_X,\eta^T_X])\otimes \eta^{ST}_X}"{description}, from=1-6, to=5-6]
	\arrow["{SX \otimes \eta^S_{SX} \otimes SX}"', from=2-1, to=3-1]
	\arrow["{\eta_{\Gamma_S}SX \otimes \eta^T_{SX}}"', from=3-1, to=5-1]
	\arrow["{\eta_{\Sigma^+,X} \otimes \eta^S_X \otimes X}"{pos=0.9}, from=1-1, to=2-2]
	\arrow["{SX \otimes \eta^S_{SX} \otimes X}", from=2-2, to=3-2]
	\arrow["{\eta_{\Gamma_S,SX} \otimes X}", from=3-2, to=4-2]
	\arrow["{\delta_X \otimes X}", from=4-2, to=4-4]
	\arrow["{S[\eta_{\Gamma_S \Delta,X},\eta^T_X,\eta^T_X])\otimes X}"{description}, from=1-6, to=4-4]
	\arrow["{STX \otimes \eta^{ST}_X}"', from=4-4, to=5-6]
	\arrow["{TSX \otimes \eta^T \eta^S X}"{description}, from=4-2, to=5-1]
	\arrow["{\delta_X \otimes \delta_X}"', from=5-1, to=5-6]
	\arrow["{\text{\eqref{eq:simple:d:delta}}}"{description}, Rightarrow, draw=none, from=0, to=4-4]
\end{tikzcd}
    }
    \end{center}
The subdiagram marked ($st/δ$) has a coproduct as domain, so we check
its commutation termwise in Figure~\ref{fig:coherence:L1:st:delta}.

\begin{figure}[htp]
  \adjustbox{max width=\columnwidth,max height=.95\textheight}{%
  \begin{turn}{90}
    %   % file:///home/thirs/github/quiver/src/index.html?q=WzAsMTEsWzAsMCwiWFxcb3RpbWVzIFgiXSxbOCwwLCJUWCJdLFs4LDIsIlNUWCJdLFswLDEsIlNYIFxcb3RpbWVzIFgiXSxbMCwyLCJTKFggXFxvdGltZXMgU1ggKyBYICsgWCkgXFxvdGltZXMgWCJdLFsyLDIsIlNUWCBcXG90aW1lcyBTVFgiXSxbMywyLCJUU1RYIl0sWzUsMiwiU1RUWCJdLFsxLDEsIlNYIFxcb3RpbWVzIFNYIl0sWzMsMSwiVFNYIl0sWzUsMSwiU1RYIl0sWzAsMSwial9YIl0sWzEsMiwiXFxldGFeU197VFh9Il0sWzAsMywiXFxldGFeU19YIFxcb3RpbWVzIFgiLDJdLFszLDQsIlNpbl8zIFxcb3RpbWVzIFgiLDJdLFs0LDUsIlNbXFxldGFfe1xcR2FtbWFfUyBcXERlbHRhLFh9LFxcZXRhXlRfWCxcXGV0YV5UX1hdXFxvdGltZXMgXFxldGFee1NUfV9YIiwyXSxbNSw2LCJqX3tTVFh9IiwyXSxbNiw3LCJcXGRlbHRhX3tUWH0iLDJdLFs3LDIsIlNcXG11XlRfWCIsMl0sWzMsNSwiU1xcZXRhXlRfWCBcXG90aW1lcyBcXGV0YV57U1R9X1giLDFdLFszLDgsIlNfWCBcXG90aW1lcyBcXGV0YV5TX1giLDFdLFs4LDksImpfe1NYfSIsMV0sWzksNiwiVFNcXGV0YV5UX1giLDJdLFs5LDEwLCJcXGRlbHRhX1giLDFdLFsxMCw3LCJTVFxcZXRhXlRfWCIsMl0sWzEwLDIsIiIsMSx7ImN1cnZlIjotMSwibGV2ZWwiOjIsInN0eWxlIjp7ImhlYWQiOnsibmFtZSI6Im5vbmUifX19XSxbMSw5LCJUIFxcZXRhXlNfWCIsMV1d
\begin{tikzcd}[ampersand replacement=\&]
	{X\otimes X} \&\&\&\&\&\&\&\& TX \\
	{SX \otimes X} \& {SX \otimes SX} \&\& TSX \&\& STX \\
	{S(X \otimes SX + X + X) \otimes X} \&\& {STX \otimes STX} \& TSTX \&\& STTX \&\&\& STX
	\arrow["{j_X}", from=1-1, to=1-9]
	\arrow["{\eta^S_{TX}}", from=1-9, to=3-9]
	\arrow["{\eta^S_X \otimes X}"', from=1-1, to=2-1]
	\arrow["{Sin_3 \otimes X}"', from=2-1, to=3-1]
	\arrow["{S[\eta_{\Gamma_S \Delta,X},\eta^T_X,\eta^T_X]\otimes \eta^{ST}_X}"', from=3-1, to=3-3]
	\arrow["{j_{STX}}"', from=3-3, to=3-4]
	\arrow["{\delta_{TX}}"', from=3-4, to=3-6]
	\arrow["{S\mu^T_X}"', from=3-6, to=3-9]
	\arrow["{S\eta^T_X \otimes \eta^{ST}_X}"{description}, from=2-1, to=3-3]
	\arrow["{S_X \otimes \eta^S_X}"{description}, from=2-1, to=2-2]
	\arrow["{j_{SX}}"{description}, from=2-2, to=2-4]
	\arrow["{TS\eta^T_X}"', from=2-4, to=3-4]
	\arrow["{\delta_X}"{description}, from=2-4, to=2-6]
	\arrow["{ST\eta^T_X}"', from=2-6, to=3-6]
	\arrow[curve={height=-6pt}, Rightarrow, no head, from=2-6, to=3-9]
	\arrow["{T \eta^S_X}"{description}, from=1-9, to=2-4]
\end{tikzcd}
  \end{turn}}
\hfil
  \adjustbox{max width=\columnwidth,max height=.95\textheight}{%
  \begin{turn}{90}
\begin{tikzcd}[ampersand replacement=\&]
	{\Sigma(X\otimes SX) \otimes X} \&\&\&\& {\Sigma (X \otimes SX) \otimes SX} \& {\Sigma(X \otimes SX \otimes SX)} \& {\Sigma^+(X \otimes SX \otimes SX)} \& {ST(X \otimes STX \otimes STX)} \&\&\&\& {ST(X \otimes STX \otimes STX + X + X)} \\
	\& {\Sigma^+(X\otimes SX) \otimes X} \&\&\&\&\& {S(X\otimes SX \otimes SX)} \& {ST(TX \otimes STX \otimes STX)} \&\& {ST(STX \otimes SSTX \otimes TSTX)} \&\& {ST(STX \otimes STX \otimes STX + X)} \\
	{S(X\otimes SX + X+ X)\otimes X} \&\&\&\&\& {S(X \otimes SX \otimes SX + X \otimes SX + X)} \&\&\&\& {ST(TSTX \otimes TSTX)} \&\& {ST(STX \otimes SSTX \otimes TSTX + X)} \\
	\&\&\&\&\&\& {S(TX \otimes STX)} \& {ST(TTX \otimes STTX)} \&\& {ST(TSTX \otimes STSTX)} \\
	{STX \otimes STX} \& {\Sigma^+(TX) \otimes TX} \&\&\&\& {S(TX \otimes STX + TX + TX)} \&\& STTTTX \& {\text{($\Gamma;\Gamma$)}} \& STTTSTX \&\& {ST(TSTX \otimes TSTX + X)} \\
	\& {STX \otimes STX} \&\&\&\&\&\&\&\& STSTSTX \&\& {ST(TTSTX + X)} \\
	{STX \otimes SSTX} \&\&\&\&\&\&\& STTTX \&\& STSTX \&\& {ST(STSTX + X)} \\
	\&\&\&\&\&\&\& STTX \&\&\&\& STSTX \\
	\& TSTX \&\&\&\& STTX \&\&\&\&\&\& STX
	\arrow["{st_{X\otimes SX,(SX,v_X)}}", from=1-5, to=1-6]
	\arrow["{STin_1}", from=1-8, to=1-12]
	\arrow["{ST(\eta^{ST}_X \otimes STX \otimes STX + [X,X])}", from=1-12, to=2-12]
	\arrow["{ST(j_{TSTX} + X)}", from=5-12, to=6-12]
	\arrow["{ST(\eta^S \mu^T STX + X)}", from=6-12, to=7-12]
	\arrow["{ST[\mu^{ST}_X,\eta^{ST}_X]}", from=7-12, to=8-12]
	\arrow["{\mu^{ST}_X}", from=8-12, to=9-12]
	\arrow[""{name=0, anchor=center, inner sep=0}, "{\delta_{TX}}"', from=9-2, to=9-6]
	\arrow["{S\mu^T_X}", from=9-6, to=9-12]
	\arrow["{S[\eta_{\Gamma_S \Delta,X},\eta^T_X,\eta^T_X]\otimes \eta^{ST}_X}"', from=3-1, to=5-1]
	\arrow["{\Sigma(X\otimes SX) \otimes \eta^S_X}", from=1-1, to=1-5]
	\arrow["{(\eta_{\Sigma^+} \circ in_2)_{in_1} \otimes X}"', from=1-1, to=3-1]
	\arrow["{(\eta_{\Sigma^+} \circ in_2)_{in_1}}"', from=1-6, to=3-6]
	\arrow["{in_2 \otimes X}", from=1-1, to=2-2]
	\arrow[""{name=1, anchor=center, inner sep=0}, "{d_{X\otimes SX, X}}", from=2-2, to=3-6]
	\arrow["{STX \otimes \eta^S_{STX}}"', from=5-1, to=7-1]
	\arrow["{\eta_{\Gamma_S}}"', from=7-1, to=9-2]
	\arrow["{\Sigma^+(\eta_{\Gamma_S,X}) \otimes \eta^T_X}"', from=2-2, to=5-2]
	\arrow["{S(\eta_{\Gamma_S,X} \otimes S\eta^T_X + \eta_{\Gamma_S,X} + \eta^T_X)}"', from=3-6, to=5-6]
	\arrow[""{name=2, anchor=center, inner sep=0}, "{d_{TX,TX}}"', from=5-2, to=5-6]
	\arrow["{\eta_{\Sigma^+,TX} \otimes \eta^S_{TX}}"', from=5-2, to=6-2]
	\arrow["{STX \otimes \eta^S_{STX}}"', from=6-2, to=7-1]
	\arrow["{S[\eta_{\Gamma_S,TX},\eta^T_{TX},\eta^T_{TX}]}"', from=5-6, to=9-6]
	\arrow["{\eta_{\Sigma^+}\eta^T (X\otimes S\eta^T_X \otimes S\eta^T_X)}", from=1-7, to=1-8]
	\arrow["{in_2}", from=1-6, to=1-7]
	\arrow["{ST(\eta^T_X \otimes STX \otimes STX)}", from=1-8, to=2-8]
	\arrow["{ST(\eta_{\Gamma_S,TX} \otimes S\eta^T_{TX})}"', from=2-8, to=4-8]
	\arrow["{ST(\eta_{\Gamma_S,TTX})}"', from=4-8, to=5-8]
	\arrow["{ST(STX \otimes \eta^S_{STX} \otimes \eta^T_{STX} + X)}", from=2-12, to=3-12]
	\arrow["{ST(\eta_{\Gamma_S,STX} \otimes TSTX + X)}", from=3-12, to=5-12]
	\arrow["{ST(\eta^S_{TX} \otimes \eta^S_{STX} \otimes \eta^T_{STX})}", from=2-8, to=2-10]
	\arrow["{ST(\eta_{\Gamma_S,STX} \otimes TSTX)}"', from=2-10, to=3-10]
	\arrow["{ST(TSTX \otimes \eta^S_{TSTX})}"', from=3-10, to=4-10]
	\arrow["{ST(\eta_{\Gamma_S,TSTX})}", from=4-10, to=5-10]
	\arrow["{ST \eta^S \mu^T STX}", from=5-10, to=6-10]
	\arrow["{ST \mu^{ST}_X}", from=6-10, to=7-10]
	\arrow["{\mu^{ST}_X}", from=7-10, to=9-12]
	\arrow["{S\mu^T_X}"', from=8-8, to=9-12]
	\arrow["{\eta_{\Sigma^+}\circ in_2}", from=1-6, to=2-7]
	\arrow["{S(\eta_{\Gamma_S,X}\otimes S\eta^T_X)}"'{pos=0.8}, from=2-7, to=4-7]
	\arrow["{Sin_1}"', from=2-7, to=3-6]
	\arrow["{S\eta^T (\eta^T_X \otimes S \eta^T_X \otimes S \eta^T_X)}", from=2-7, to=2-8]
	\arrow["{S\eta^T(T\eta^T_X \otimes ST \eta^T_X)}", from=4-7, to=4-8]
	\arrow["{S\eta_{\Gamma_S,TX}}"', from=4-7, to=9-6]
	\arrow["{S\eta^TTT\eta^T_X}"{description}, from=9-6, to=5-8]
	\arrow["{STT \mu^T_X}", from=5-8, to=7-8]
	\arrow["{ST\mu^T_X}", from=7-8, to=8-8]
	\arrow["{S\eta^T_{TTX}}", from=9-6, to=7-8]
	\arrow["{ST in_1}"{description}, from=2-10, to=3-12]
	\arrow["{ST(in_1)}"', from=3-10, to=5-12]
	\arrow["{STin_1}"{description}, from=5-10, to=6-12]
	\arrow[Rightarrow, no head, from=9-6, to=8-8]
	\arrow["{\text{\eqref{eq:simple:d:delta}}}"', Rightarrow, draw=none, from=2, to=0]
	\arrow["{\text{(naturality of $d$)}}"{description}, Rightarrow, draw=none, from=1, to=2]
\end{tikzcd}
  \end{turn}}
  \caption{Subdiagram ($st/δ$) of Figure~\ref{fig:coherence:L1}}
  \label{fig:coherence:L1:st:delta}
\end{figure}

The subdiagram ($Γ;Γ$) commutes as in Figure~\ref{fig:gamma:gamma}.

\begin{figure}[htp]
  \adjustbox{max width=0.95\columnwidth}{
\begin{tikzcd}[ampersand replacement=\&]
	{ST(TX \otimes STX \otimes STX)} \&\&\&\& {ST(STX \otimes SSTX \otimes TSTX)} \\
	\&\&\&\& {ST(TSTX \otimes TSTX)} \\
	{ST(TTX \otimes STTX)} \&\&\& {ST(STTX \otimes TSTX)} \& {ST(TSTX \otimes STSTX)} \\
	\&\& {ST(STTX \otimes STTX)} \&\& STTTSTX \\
	\&\& {ST(STTX \otimes SSTTX)} \\
	\&\& STTSTTX \& STTSTX \& STSTSTX \\
	STTTTX \&\& STSTTTX \& STSTTX \& STSSTTX \\
	\& STTTX \&\& SSTTTX \& STSTTX \\
	STTTX \&\& STTTX \&\& STSTX \\
	\&\&\&\& SSTTX \\
	\&\&\&\& STTX \\
	STTX \&\&\&\& STX
	\arrow["{ST(\eta_{\Gamma_S,TX} \otimes S\eta^T_{TX})}"', color={rgb,255:red,37;green,147;blue,37}, from=1-1, to=3-1]
	\arrow[""{name=0, anchor=center, inner sep=0}, "{ST(\eta^S_{TX} \otimes \eta^S_{STX} \otimes \eta^T_{STX})}", from=1-1, to=1-5]
	\arrow["{ST(\eta_{\Gamma_S,STX} \otimes TSTX)}", color={rgb,255:red,37;green,147;blue,37}, from=1-5, to=2-5]
	\arrow["{ST(TSTX \otimes \eta^S_{TSTX})}", from=2-5, to=3-5]
	\arrow["{ST(\eta_{\Gamma_S,TSTX})}", color={rgb,255:red,37;green,147;blue,37}, from=3-5, to=4-5]
	\arrow["{ST \eta^S \mu^T STX}", from=4-5, to=6-5]
	\arrow["{ST \mu^T_X}"', from=9-1, to=12-1]
	\arrow["{S\mu^T_X}"', from=12-1, to=12-5]
	\arrow["{ST \mu^TST}"{description}, from=4-5, to=6-4]
	\arrow["{ST\delta_{TX}}", from=6-4, to=7-4]
	\arrow["{S\delta_{TTX}}", from=7-4, to=8-4]
	\arrow["{ST(\delta_{TX} \otimes TSTX)}"'{pos=0.8}, from=2-5, to=3-4]
	\arrow["{ST(STTX \otimes \delta_{TX})}"'{pos=0.9}, from=3-4, to=4-3]
	\arrow[""{name=1, anchor=center, inner sep=0}, "{ST(\eta^S_{TTX} \otimes STTX)}"{pos=0.7}, from=3-1, to=4-3]
	\arrow["{ST(STTX \otimes \eta^S_{STTX})}"', from=4-3, to=5-3]
	\arrow["{ST(\delta_{TX} \otimes S\delta_{TX})}"', from=3-5, to=5-3]
	\arrow["{ST(\eta_{\Gamma_S,STTX})}", color={rgb,255:red,37;green,147;blue,37}, from=5-3, to=6-3]
	\arrow["{STT\delta_{TX}}"', from=4-5, to=6-3]
	\arrow["{ST\delta_{TTX}}"', from=6-3, to=7-3]
	\arrow["{STS\mu^T_{TX}}", from=7-3, to=7-4]
	\arrow["{ST(\eta_{\Gamma_S,TTX})}"', color={rgb,255:red,37;green,147;blue,37}, from=3-1, to=7-1]
	\arrow["{STT\mu^T_X}"', from=7-1, to=9-1]
	\arrow[""{name=2, anchor=center, inner sep=0}, "{ST\eta^S_{TTTX}}"', from=7-1, to=7-3]
	\arrow["{\mu^S_{TTTX}}", from=8-4, to=9-3]
	\arrow["{ST\mu^T_{TX}}"{description}, from=7-1, to=8-2]
	\arrow["{ST\eta^S_{TTX}}"{description}, from=8-2, to=7-4]
	\arrow["{S\eta^S_{TTTX}}"{description}, from=8-2, to=8-4]
	\arrow[Rightarrow, no head, from=8-2, to=9-3]
	\arrow["{ST\mu^T_X}"{description}, from=9-3, to=12-1]
	\arrow["{ST\eta^S_{TSTX}}", from=6-4, to=6-5]
	\arrow["{ST S\delta_{TX}}", from=6-5, to=7-5]
	\arrow["{ST\mu^S}", from=7-5, to=8-5]
	\arrow["{ST\eta^S_{STTX}}", from=7-4, to=7-5]
	\arrow[Rightarrow, no head, from=7-4, to=8-5]
	\arrow["{STS\mu^T_X}", from=8-5, to=9-5]
	\arrow["{S\delta_{TX}}", from=9-5, to=10-5]
	\arrow["{\mu^S_{TTX}}", from=10-5, to=11-5]
	\arrow["{S\mu^T_X}", from=11-5, to=12-5]
	\arrow["{ST\mu^T_X}", from=9-3, to=11-5]
	\arrow["{\text{(Lemma~\ref{lem:eta:gamma:s})}}"{description}, Rightarrow, draw=none, from=0, to=1]
	\arrow["{\text{(Lemma~\ref{lem:eta:gamma:s})}}"{description}, Rightarrow, draw=none, from=1, to=2]
      \end{tikzcd}}    
  \caption{Subdiagram ($Γ;Γ$)}
  \label{fig:gamma:gamma}
\end{figure}

\subsection{Coherence of $R₁$}

We now check coherence of $R₁$, in Figure~\ref{fig:coherence:R1}.

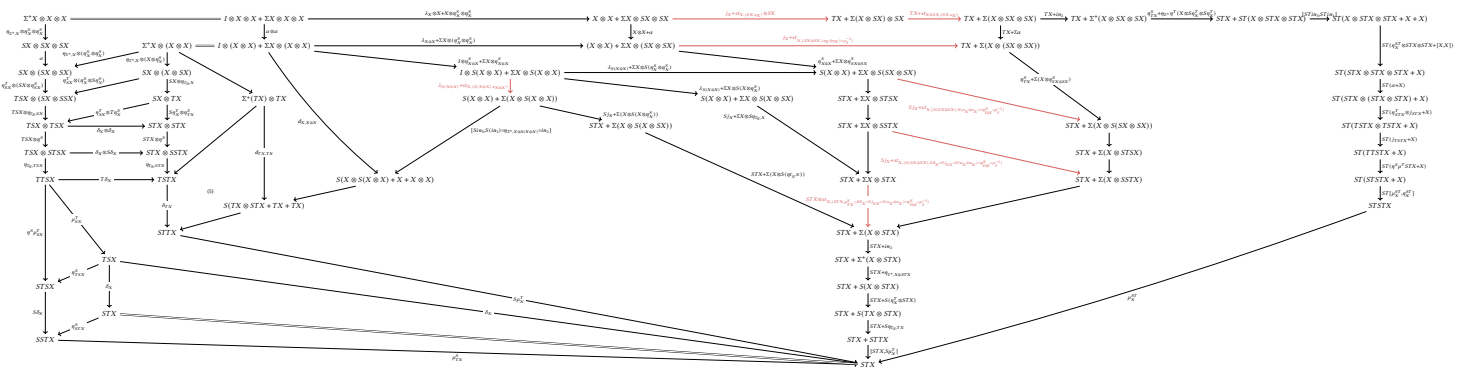
\begin{figure}[htp]
  \ajustetourne{
\begin{tikzcd}[ampersand replacement=\&]
	{\Sigma^+X \otimes X \otimes X} \&\&\& {I\otimes X \otimes X + \Sigma X \otimes X \otimes X} \&\&\& {X\otimes X + \Sigma X \otimes SX \otimes SX} \&\& {TX + \Sigma (X \otimes SX) \otimes SX} \&\& {TX + \Sigma(X \otimes SX \otimes SX)} \& {TX + \Sigma^+(X \otimes SX \otimes SX)} \&\&\& {STX + ST(X \otimes STX \otimes STX)} \& {ST(X \otimes STX \otimes STX + X + X)} \\
	{SX \otimes SX \otimes SX} \&\& {\Sigma^+X \otimes (X \otimes X)} \& {I\otimes (X \otimes X) + \Sigma X \otimes (X \otimes X)} \&\&\& {(X \otimes X) + \Sigma X \otimes (SX \otimes SX)} \&\&\&\& {TX + \Sigma(X \otimes (SX \otimes SX))} \\
	{SX \otimes (SX \otimes SX)} \&\& {SX \otimes (X \otimes SX)} \&\&\& {I\otimes S(X \otimes X) + \Sigma X \otimes S(X \otimes X)} \&\&\& {S(X \otimes X) + \Sigma X \otimes S(SX \otimes SX)} \&\&\&\&\&\&\& {ST(STX \otimes STX \otimes STX + X)} \\
	{TSX \otimes (SX \otimes SSX)} \&\& {SX \otimes TX} \& {\Sigma^+(TX) \otimes TX} \&\& {S(X\otimes X) + \Sigma(X \otimes  S(X\otimes X))} \&\& {S(X\otimes X) + \Sigma X \otimes S(X \otimes SX)} \& {STX + \Sigma X \otimes STSX} \&\&\&\&\&\&\& {ST(STX \otimes (STX \otimes STX) + X)} \\
	{TSX\otimes TSX} \&\& {STX \otimes STX} \&\&\&\& {STX + \Sigma(X \otimes S(X \otimes SX))} \&\& {STX + \Sigma X \otimes SSTX} \&\&\& {STX + \Sigma(X \otimes S(SX \otimes SX))} \&\&\&\& {ST(TSTX \otimes TSTX + X)} \\
	{TSX \otimes STSX} \&\& {STX \otimes SSTX} \&\&\&\&\&\&\&\&\& {STX + \Sigma(X \otimes STSX)} \&\&\&\& {ST(TTSTX + X)} \\
	TTSX \&\& TSTX \&\& {S(X⊗S(X\otimes X) + X + X\otimes X)} \&\&\&\& {STX + \Sigma X \otimes STX} \&\&\& {STX + \Sigma(X \otimes SSTX)} \&\&\&\& {ST(STSTX + X)} \\
	\&\&\& {S(TX \otimes STX + TX + TX)} \&\&\&\&\&\&\&\&\&\&\&\& STSTX \\
	\&\& STTX \&\&\&\&\&\& {STX + \Sigma(X \otimes STX)} \\
	\& TSX \&\&\&\&\&\&\& {STX + \Sigma^+(X \otimes STX)} \\
	STSX \&\&\&\&\&\&\&\& {STX + S(X \otimes STX)} \\
	\& STX \&\&\&\&\&\&\& {STX + S(TX \otimes STX)} \\
	SSTX \&\&\&\&\&\&\&\& {STX + STTX} \\
	\&\&\&\&\&\&\&\& STX
	\arrow[Rightarrow, no head, from=1-1, to=1-4]
	\arrow["{\lambda_X \otimes X + X \otimes \eta^S_X \otimes \eta^S_X}", from=1-4, to=1-7]
	\arrow["{j_X + st_{X,(SX,v_X)} \otimes SX}", color={rgb,255:red,214;green,92;blue,92}, from=1-7, to=1-9]
	\arrow["{TX + st_{X\otimes SX,(SX,v_X)}}", color={rgb,255:red,214;green,92;blue,92}, from=1-9, to=1-11]
	\arrow["{[STin_3,STin_1]}", from=1-15, to=1-16]
	\arrow["{ST(\eta^T_{STX} \otimes j_{STX} + X)}", from=4-16, to=5-16]
	\arrow["{ST(j_{TSTX} + X)}", from=5-16, to=6-16]
	\arrow["{ST(\eta^S \mu^T STX + X)}", from=6-16, to=7-16]
	\arrow["{ST[\mu^{ST}_X,\eta^{ST}_X]}", from=7-16, to=8-16]
	\arrow["{\mu^{ST}_X}", curve={height=-12pt}, from=8-16, to=14-9]
	\arrow["{\eta^S\mu^T_{SX}}"', from=7-1, to=11-1]
	\arrow["{S\delta_X}"', from=11-1, to=13-1]
	\arrow["{\mu^S_{TX}}"', from=13-1, to=14-9]
	\arrow["{\eta^S_{TX} + \eta_{\Sigma^+}\eta^T (X\otimes S\eta^T_X \otimes S\eta^T_X)}", from=1-12, to=1-15]
	\arrow["{TX + in_2}", from=1-11, to=1-12]
	\arrow["{\eta_{\Sigma^+,X} \otimes \eta^S_X \otimes \eta^S_X}"', from=1-1, to=2-1]
	\arrow["\alpha"', from=2-1, to=3-1]
	\arrow["{ST(\eta^{ST}_X \otimes STX \otimes STX + [X,X])}", from=1-16, to=3-16]
	\arrow["{ST(\alpha + X)}", from=3-16, to=4-16]
	\arrow["{\alpha \otimes \alpha}", from=1-4, to=2-4]
	\arrow["{\lambda_{X\otimes X} + \Sigma X \otimes (\eta^S_X \otimes \eta^S_X)}", from=2-4, to=2-7]
	\arrow["{X\otimes X + \alpha}", from=1-7, to=2-7]
	\arrow["{j_X + st_{X,(SX\otimes SX, (v_X \otimes v_X) \circ \rho_I^{-1})}}", color={rgb,255:red,214;green,92;blue,92}, from=2-7, to=2-11]
	\arrow["{TX + \Sigma \alpha}", from=1-11, to=2-11]
	\arrow["{I \otimes \eta^S_{X\otimes X} + \Sigma X \otimes \eta^S_{X\otimes X} }"{pos=1}, from=2-4, to=3-6]
	\arrow["{\lambda_{S(X\otimes X)} + st_{X,(S(X\otimes X),v_{X \otimes X})}}"', color={rgb,255:red,214;green,92;blue,92}, from=3-6, to=4-6]
	\arrow["{[Sin₃,S(in₁)∘η_{Σ⁺,X⊗S(X\otimes X)}∘in₂]}"{pos=0.3}, from=4-6, to=7-5]
	\arrow["{d_{X,X\otimes X}}"', curve={height=12pt}, from=2-4, to=7-5]
	\arrow["{\eta^S_{X\otimes X} + \Sigma X \otimes \eta^S_{SX \otimes SX}}"{pos=1}, from=2-7, to=3-9]
	\arrow["{\lambda_{S(X\otimes X)} + \Sigma X \otimes S(\eta^S_X \otimes \eta^S_X)}"{pos=0.3}, from=3-6, to=3-9]
	\arrow["{Sj_X + st_{X,(S(SX \otimes SX),S(v_X \otimes v_X) \circ \eta^S_{I\otimes I} \circ \rho_I^{-1})}}"'{pos=0.6}, color={rgb,255:red,214;green,92;blue,92}, from=3-9, to=5-12]
	\arrow["{\eta^S_{TX} + \Sigma(X \otimes \eta^S_{SX \otimes SX})}"{description, pos=0.4}, from=2-11, to=5-12]
	\arrow[from=5-12, to=6-12]
	\arrow[from=6-12, to=7-12]
	\arrow[from=7-12, to=9-9]
	\arrow[from=3-9, to=4-9]
	\arrow[from=4-9, to=5-9]
	\arrow["{Sj_X + st_{X,(S(SX \otimes SX),S\delta_X \circ Sj_{SX} \circ S(v_X \otimes v_X) \circ \eta^S_{I\otimes I} \circ \rho_I^{-1})}}"'{pos=0.6}, color={rgb,255:red,214;green,92;blue,92}, from=5-9, to=7-12]
	\arrow["{STX \otimes st_{X,(STX,\mu^S_{TX} \circ S\delta_X \circ Sj_{SX} \circ S(v_X \otimes v_X) \circ \eta^S_{I\otimes I} \circ \rho_I^{-1})}}"{description, pos=0.4}, color={rgb,255:red,214;green,92;blue,92}, from=7-9, to=9-9]
	\arrow[from=5-9, to=7-9]
	\arrow[from=7-5, to=8-4]
	\arrow["{d_{TX,TX}}"{description}, from=4-4, to=8-4]
	\arrow[from=4-4, to=7-3]
	\arrow["{\delta_{TX}}"{description}, from=7-3, to=9-3]
	\arrow[from=8-4, to=9-3]
	\arrow["{S\mu^T_X}"{description}, from=9-3, to=14-9]
	\arrow["{\text{(5)}}"{description}, draw=none, from=7-3, to=8-4]
	\arrow["{\mu^T_{SX}}"{description}, from=7-1, to=10-2]
	\arrow["{\delta_X}"{description}, from=10-2, to=14-9]
	\arrow["{T\delta_X}"{description}, from=7-1, to=7-3]
	\arrow["{\eta^S_{TSX}}"{description}, from=10-2, to=11-1]
	\arrow["{\delta_X}"{description}, from=10-2, to=12-2]
	\arrow["{\eta^S_{STX}}"{description}, from=12-2, to=13-1]
	\arrow[Rightarrow, no head, from=12-2, to=14-9]
	\arrow["{TSX \otimes \eta^S}"', from=5-1, to=6-1]
	\arrow["{\eta_{\Gamma_S,TSX}}"', from=6-1, to=7-1]
	\arrow["{\delta_X \otimes S\delta_X}"{description}, from=6-1, to=6-3]
	\arrow["{\eta_{\Gamma_S,STX}}"', from=6-3, to=7-3]
	\arrow["{\delta_X \otimes \delta_X}"', from=5-1, to=5-3]
	\arrow["{STX \otimes \eta^S}"', from=5-3, to=6-3]
	\arrow["{\eta^T_{SX} \otimes (SX \otimes \eta^S_{SX})}"', from=3-1, to=4-1]
	\arrow["{TSX \otimes \eta_{\Gamma_S,SX}}"', from=4-1, to=5-1]
	\arrow["{\eta^T_{SX}\otimes T \eta^S_X}"{description}, from=4-3, to=5-1]
	\arrow[Rightarrow, no head, from=2-4, to=2-3]
	\arrow["{\eta_{\Sigma^+,X} \otimes (X \otimes \eta^S_X)}"', from=2-3, to=3-3]
	\arrow["{\eta^T_{SX} \otimes (\eta^S_X \otimes S\eta^S_X)}"', from=3-3, to=4-1]
	\arrow["{\eta_{\Sigma^+,X} \otimes (\eta^S_X \otimes \eta^S_X)}"', from=2-3, to=3-1]
	\arrow["{SX \otimes \eta_{\Gamma_S,X}}"{pos=0.2}, from=3-3, to=4-3]
	\arrow[from=2-3, to=4-4]
	\arrow["{S\eta^T_X \otimes \eta^S_{TX}}", from=4-3, to=5-3]
	\arrow["{\lambda_{S(X\otimes X)} + \Sigma X \otimes S(X \otimes \eta^S_X)}"{pos=1}, from=3-6, to=4-8]
	\arrow["{Sj_X  + \Sigma X \otimes S\eta_{\Gamma_S,X}}"'{pos=0.1}, from=4-8, to=7-9]
	\arrow["{Sj_X + \Sigma (X \otimes S(X \otimes \eta^S_X))}"{pos=1}, from=4-6, to=5-7]
	\arrow["{STX + \Sigma(X\otimes S(\eta_{\Gamma_S,X}))}", from=5-7, to=9-9]
	\arrow["{STX + in_2}", from=9-9, to=10-9]
	\arrow["{STX + \eta_{\Sigma^+,X\otimes STX}}", from=10-9, to=11-9]
	\arrow["{STX + S(\eta^T_X \otimes STX)}", from=11-9, to=12-9]
	\arrow["{STX + S\eta_{\Gamma_S,TX}}", from=12-9, to=13-9]
	\arrow["{[STX,S\mu^T_X]}", from=13-9, to=14-9]
\end{tikzcd}    }
  \caption{Coherence of $R₁$}
  \label{fig:coherence:R1}
\end{figure}
There, the top subdiagram with red arrows is the associativity axiom for pointed strengths.
We then use naturality of $st$ three times easily, and a fourth time much less easily.
The point is that we use naturality at the pair of morphisms
\begin{mathpar}
  X \xto{id_X} X \and S(X ⊗ X) \xto{S(X ⊗ η^S_X)} S(X ⊗ SX) \xto{Sη_{Γ_S,X}} STX\rlap{,}
\end{mathpar}
which requires us to prove that the second morphism is a morphism of
pointed objects. We do this by chasing the following diagram,
\begin{center}
  \ajustedroit{
% file:///home/thirs/github/quiver/src/index.html?q=WzAsMjAsWzAsMCwiSSJdLFswLDIsIkkgXFxvdGltZXMgSSJdLFs4LDAsIlMoWFxcb3RpbWVzIFgpIl0sWzUsMCwiXFxTaWdtYV4rKFhcXG90aW1lcyBYKSJdLFswLDYsIlxcU2lnbWFeK1ggXFxvdGltZXMgXFxTaWdtYV4rWCJdLFsyLDYsIlNYIFxcb3RpbWVzIFNYIl0sWzgsNiwiU1RYIl0sWzEsMiwiSSJdLFs0LDYsIlMoU1ggXFxvdGltZXMgU1gpIl0sWzYsNiwiU1RTWCJdLFs3LDYsIlNTVFgiXSxbMiwyLCJcXFNpZ21hXitYIl0sWzUsMiwiU1giXSxbMyw0LCJTWFxcb3RpbWVzIFNTWCJdLFs0LDQsIlRTWCJdLFs1LDYsIlMoU1ggXFxvdGltZXMgU1NYKSJdLFs1LDQsIlNUWCJdLFs4LDIsIlMoWCBcXG90aW1lcyBTWCkiXSxbMiwxLCJcXFNpZ21hXisgXFxlbXB0eXNldCJdLFs1LDEsIlNcXGVtcHR5c2V0Il0sWzAsMSwiXFxyaG9fSV57LTF9IiwyXSxbMCwzLCJpbl8xIl0sWzMsMiwiXFxldGFfe1xcU2lnbWFeKyxYXFxvdGltZXMgWH0iXSxbNCw1LCJcXGV0YV97XFxTaWdtYV4rLFh9IFxcb3RpbWVzIFxcZXRhX3tcXFNpZ21hXissWH0iLDJdLFsxLDddLFswLDcsIiIsMyx7ImxldmVsIjoyLCJzdHlsZSI6eyJoZWFkIjp7Im5hbWUiOiJub25lIn19fV0sWzUsOCwiXFxldGFeU197U1ggXFxvdGltZXMgU1h9IiwyXSxbOSwxMF0sWzEwLDZdLFswLDExLCJpbl8xIl0sWzcsMTEsImluXzEiLDJdLFsxMSwxMiwiXFxldGFfe1xcU2lnbWFeKyxYfSIsMl0sWzEyLDZdLFs1LDEzLCJTWCBcXG90aW1lcyBcXGV0YV5TX3tTWH0iXSxbMTMsMTQsIlxcZXRhX3tcXEdhbW1hX1MsU1h9Il0sWzgsMTVdLFsxNSw5XSxbMTQsOSwiXFxldGFeU197VFNYfSIsMV0sWzE0LDE2LCJcXGRlbHRhX1giXSxbMTYsMTAsIlxcZXRhXlNfe1NUWH0iLDJdLFsxNiw2LCIiLDIseyJsZXZlbCI6Miwic3R5bGUiOnsiaGVhZCI6eyJuYW1lIjoibm9uZSJ9fX1dLFsyLDE3LCJTKFggXFxvdGltZXMgXFxldGFeU19YKSJdLFsxNyw2XSxbMCwxOCwiaW5fMSJdLFsxOCwxMV0sWzE4LDNdLFsxOCwxOV0sWzE5LDEyXSxbMTksMl0sWzE5LDE3XSxbMSw0LCJpbl8xIFxcb3RpbWVzIGluXzEiLDJdLFsxMywxMSwiXFx0ZXh0ez99IiwxLHsic3R5bGUiOnsiYm9keSI6eyJuYW1lIjoibm9uZSJ9LCJoZWFkIjp7Im5hbWUiOiJub25lIn19fV1d
\begin{tikzcd}[ampersand replacement=\&]
	I \&\&\&\&\& {\Sigma^+(X\otimes X)} \&\&\& {S(X\otimes X)} \\
	\&\& {\Sigma^+ \emptyset} \&\&\& S\emptyset \\
	{I \otimes I} \& I \& {\Sigma^+X} \&\&\& SX \&\&\& {S(X \otimes SX)} \\
	\\
	\&\&\& {SX\otimes SSX} \& TSX \& STX \\
	\\
	{\Sigma^+X \otimes \Sigma^+X} \&\& {SX \otimes SX} \&\& {S(SX \otimes SX)} \& {S(SX \otimes SSX)} \& STSX \& SSTX \& STX
	\arrow["{\rho_I^{-1}}"', from=1-1, to=3-1]
	\arrow["{in_1}", from=1-1, to=1-6]
	\arrow["{\eta_{\Sigma^+,X\otimes X}}", from=1-6, to=1-9]
	\arrow["{\eta_{\Sigma^+,X} \otimes \eta_{\Sigma^+,X}}"', from=7-1, to=7-3]
	\arrow[from=3-1, to=3-2]
	\arrow[Rightarrow, no head, from=1-1, to=3-2]
	\arrow["{\eta^S_{SX \otimes SX}}"', from=7-3, to=7-5]
	\arrow[from=7-7, to=7-8]
	\arrow[from=7-8, to=7-9]
	\arrow["{in_1}", from=1-1, to=3-3]
	\arrow["{in_1}"', from=3-2, to=3-3]
	\arrow["{\eta_{\Sigma^+,X}}"', from=3-3, to=3-6]
	\arrow[from=3-6, to=7-9]
	\arrow["{SX \otimes \eta^S_{SX}}", from=7-3, to=5-4]
	\arrow["{\eta_{\Gamma_S,SX}}", from=5-4, to=5-5]
	\arrow[from=7-5, to=7-6]
	\arrow[from=7-6, to=7-7]
	\arrow["{\eta^S_{TSX}}"{description}, from=5-5, to=7-7]
	\arrow["{\delta_X}", from=5-5, to=5-6]
	\arrow["{\eta^S_{STX}}"', from=5-6, to=7-8]
	\arrow[Rightarrow, no head, from=5-6, to=7-9]
	\arrow["{S(X \otimes \eta^S_X)}", from=1-9, to=3-9]
	\arrow[from=3-9, to=7-9]
	\arrow["{in_1}", from=1-1, to=2-3]
	\arrow[from=2-3, to=3-3]
	\arrow[from=2-3, to=1-6]
	\arrow[from=2-3, to=2-6]
	\arrow[from=2-6, to=3-6]
	\arrow[from=2-6, to=1-9]
	\arrow[from=2-6, to=3-9]
	\arrow["{in_1 \otimes in_1}"', from=3-1, to=7-1]
	\arrow["{\text{?}}"{description}, draw=none, from=5-4, to=3-3]
      \end{tikzcd}
    }
\end{center}
whose question marked subdiagram commutes as follows.
      \begin{center}
        \ajustedroit{
\begin{tikzcd}[ampersand replacement=\&]
	{I⊗I} \&\&\& {Σ^+X⊗Σ^+X} \&\&\&\&\&\& {SX⊗SX} \\
	\&\&\& {I \otimes SX} \&\&\&\&\&\& {SX⊗SSX} \\
	I \&\&\& {I \otimes SSX} \&\& TSSX \&\&\&\& TSX \\
	{Σ^+X} \&\& SX \& SSX \\
	\&\&\& {SX \otimes SSX + SX + SX} \& {S(SX \otimes SSX + SX + SX)} \\
	\&\&\& TSX \&\& STSX \\
	\&\&\& STX \&\&\&\& SSTX \\
	\\
	SX \&\&\&\&\&\&\&\&\& STX
	\arrow["{in₁ \otimes in₁}", from=1-1, to=1-4]
	\arrow[""{name=0, anchor=center, inner sep=0}, "{\eta_{\Sigma^+,X}⊗\eta_{\Sigma^+,X}}", from=1-4, to=1-10]
	\arrow["{SX⊗η^SSX}", from=1-10, to=2-10]
	\arrow["{\eta_{\Gamma_S,SX}}", from=2-10, to=3-10]
	\arrow["{δ_X}", from=3-10, to=9-10]
	\arrow["{λI}"', from=1-1, to=3-1]
	\arrow["{in_1}"', from=3-1, to=4-1]
	\arrow["{\eta_{\Sigma^+,X}}"', from=4-1, to=9-1]
	\arrow["{Sη^T_X}", from=9-1, to=9-10]
	\arrow[from=1-1, to=2-4]
	\arrow[from=2-4, to=3-4]
	\arrow["\lambda"', from=3-4, to=4-4]
	\arrow[""{name=1, anchor=center, inner sep=0}, "{Sin_3}"', from=4-4, to=5-5]
	\arrow[""{name=2, anchor=center, inner sep=0}, from=2-4, to=3-6]
	\arrow["{\delta_{SX}}"', from=3-6, to=6-6]
	\arrow[from=5-5, to=6-6]
	\arrow[from=3-1, to=4-3]
	\arrow["{\eta^S_{SX}}", from=4-3, to=4-4]
	\arrow["{T\mu^S_X}"{description}, from=3-6, to=3-10]
	\arrow["{in_3}"', from=4-3, to=5-4]
	\arrow[from=5-4, to=6-4]
	\arrow["{\eta^S}"', from=6-4, to=6-6]
	\arrow["{S\delta_X}"', from=6-6, to=7-8]
	\arrow["{\mu^S_{TX}}", from=7-8, to=9-10]
	\arrow["{\delta_X}"', from=6-4, to=7-4]
	\arrow["{\eta^S}"', from=7-4, to=7-8]
	\arrow[Rightarrow, no head, from=7-4, to=9-10]
	\arrow[from=4-1, to=4-3]
	\arrow[Rightarrow, no head, from=9-1, to=4-3]
	\arrow["{\eta^T_{SX}}"', curve={height=12pt}, from=4-3, to=6-4]
	\arrow["{??}"{description}, Rightarrow, draw=none, from=3-6, to=0]
	\arrow["{\text{\eqref{eq:simple:d:delta}}}"{description}, Rightarrow, draw=none, from=1, to=2]
\end{tikzcd}}
      \end{center}
      The double question marked subdiagram of the latter diagram in turn commutes
      as follows.
      \begin{center}
        \ajustedroit{
          % file:///home/thirs/github/quiver/src/index.html?q=WzAsMTMsWzAsMCwiSeKKl0kiXSxbMiwwLCLOo14rWOKKl86jXitYIl0sWzEzLDAsIlNY4oqXU1giXSxbMTMsMSwiU1jiipdTU1giXSxbMTMsNCwiVFNYIl0sWzAsNCwiSSBcXG90aW1lcyBTWCJdLFs4LDQsIlRTU1giXSxbMiw0LCJcXFNpZ21hXisgU1ggXFxvdGltZXMgU1giXSxbNCw0LCJTU1ggXFxvdGltZXMgU1NYIl0sWzYsNCwiU1NYIFxcb3RpbWVzIFNTU1giXSxbMCwyLCJJIFxcb3RpbWVzIFxcU2lnbWFeK1giXSxbMiwyLCLOo14rWOKKl1NYIl0sWzEzLDIsIlNY4oqXU1NYIl0sWzAsMSwiaW7igoEgXFxvdGltZXMgaW7igoEiXSxbMSwyLCJcXGV0YV97XFxTaWdtYV4rLFh94oqXXFxldGFfe1xcU2lnbWFeKyxYfSJdLFsyLDMsIlNY4oqXzrdeU1NYIl0sWzYsNCwiVFxcbXVeU19YIiwxXSxbNSw3LCJpbl8xIFxcb3RpbWVzIFNYIiwyXSxbNyw4LCJcXGV0YV97XFxTaWdtYV4rLFNYfSBcXG90aW1lcyBcXGV0YV5TX3tTWH0iLDJdLFs4LDksIlNTWCBcXG90aW1lcyBcXGV0YV5TX3tTU1h9IiwyXSxbOSw2LCJcXGV0YV97XFxHYW1tYV9TLFNTWH0iLDJdLFswLDEwLCJJIFxcb3RpbWVzIGluXzEiLDJdLFsxMCw1LCJJIFxcb3RpbWVzIFxcZXRhX3tcXFNpZ21hXissWH0iLDJdLFsxMCwxLCJpbl8xIFxcb3RpbWVzIFxcU2lnbWFeK1giXSxbNSwxMSwiaW5fMSBcXG90aW1lcyBTWCJdLFsxLDExLCLOo14rWOKKl1xcZXRhX3vOo14rLFh9Il0sWzExLDcsIlxcU2lnbWFeK1xcZXRhXlNfWCBcXG90aW1lcyBTWCJdLFsxMSwzLCJcXGV0YV97XFxTaWdtYV4rLFh9IFxcb3RpbWVzIFxcZXRhXlNfe1NYfSJdLFszLDgsIlNcXGV0YV5TX1ggXFxvdGltZXMgU1NYIiwyXSxbMTIsNCwiXFxldGFfe1xcR2FtbWFfUyxTWH0iXSxbOSwxMiwiXFxtdV5TX1ggXFxvdGltZXMgU1xcbXVeU19YIiwyXSxbMywxMiwiIiwwLHsibGV2ZWwiOjIsInN0eWxlIjp7ImhlYWQiOnsibmFtZSI6Im5vbmUifX19XV0=
          \begin{tikzcd}[ampersand replacement=\&]
            {I⊗I} \&\& {Σ^+X⊗Σ^+X} \&\&\&\&\&\&\&\&\&\&\& {SX⊗SX} \\
            \&\&\&\&\&\&\&\&\&\&\&\&\& {SX⊗SSX} \\
            {I \otimes \Sigma^+X} \&\& {Σ^+X⊗SX} \&\&\&\&\&\&\&\&\&\&\& {SX⊗SSX} \\
            \\
            {I \otimes SX} \&\& {\Sigma^+ SX \otimes SX} \&\& {SSX \otimes SSX} \&\& {SSX \otimes SSSX} \&\& TSSX \&\&\&\&\& TSX
            \arrow["{in₁ \otimes in₁}", from=1-1, to=1-3]
            \arrow["{\eta_{\Sigma^+,X}⊗\eta_{\Sigma^+,X}}", from=1-3, to=1-14]
            \arrow["{SX⊗η^SSX}", from=1-14, to=2-14]
            \arrow["{T\mu^S_X}"{description}, from=5-9, to=5-14]
            \arrow["{in_1 \otimes SX}"', from=5-1, to=5-3]
            \arrow["{\eta_{\Sigma^+,SX} \otimes \eta^S_{SX}}"', from=5-3, to=5-5]
            \arrow["{SSX \otimes \eta^S_{SSX}}"', from=5-5, to=5-7]
            \arrow["{\eta_{\Gamma_S,SSX}}"', from=5-7, to=5-9]
            \arrow["{I \otimes in_1}"', from=1-1, to=3-1]
            \arrow["{I \otimes \eta_{\Sigma^+,X}}"', from=3-1, to=5-1]
            \arrow["{in_1 \otimes \Sigma^+X}", from=3-1, to=1-3]
            \arrow["{in_1 \otimes SX}", from=5-1, to=3-3]
            \arrow["{Σ^+X⊗\eta_{Σ^+,X}}", from=1-3, to=3-3]
            \arrow["{\Sigma^+\eta^S_X \otimes SX}", from=3-3, to=5-3]
            \arrow["{\eta_{\Sigma^+,X} \otimes \eta^S_{SX}}", from=3-3, to=2-14]
            \arrow["{S\eta^S_X \otimes SSX}"', from=2-14, to=5-5]
            \arrow["{\eta_{\Gamma_S,SX}}", from=3-14, to=5-14]
            \arrow["{\mu^S_X \otimes S\mu^S_X}"', from=5-7, to=3-14]
            \arrow[Rightarrow, no head, from=2-14, to=3-14]
          \end{tikzcd}}
      \end{center}
      It remains to prove that both large subdiagrams at the bottom of
      Figure~\ref{fig:coherence:R1} commute. For the left-hand one,
      the first term is almost trivial, and the second term is easy:
      \begin{center}
        \ajustedroit{
\begin{tikzcd}[ampersand replacement=\&]
	{\Sigma(X \otimes  S(X\otimes X))} \&\&\& {\Sigma(X \otimes S(X \otimes SX))} \&\& {\Sigma(X \otimes STX)} \\
	\& {\Sigma^+(X \otimes  S(X\otimes X))} \&\&\&\& {\Sigma^+(X \otimes STX)} \\
	\\
	\\
	\& {S(X \otimes  S(X\otimes X))} \&\& {S(X \otimes S(X \otimes SX))} \&\& {S(X \otimes STX)} \\
	{S(X⊗S(X\otimes X) + X + X\otimes X)} \& {S(X \otimes S(X \otimes SX))} \&\&\&\& {S(TX \otimes STX)} \\
	\& {S(TX \otimes STX)} \&\&\&\& STTX \\
	{S(TX \otimes STX + TX + TX)} \\
	\\
	\\
	STTX \&\&\&\&\& STX
	\arrow["{S(in₁)∘η_{Σ⁺,X⊗S(X\otimes X)}∘in₂}"', from=1-1, to=6-1]
	\arrow[from=6-1, to=8-1]
	\arrow[from=8-1, to=11-1]
	\arrow["{S\mu^T_X}"{description}, from=11-1, to=11-6]
	\arrow["{in_2}"', from=1-1, to=2-2]
	\arrow["{\eta_{\Sigma^+,X \otimes S(X \otimes X)}}"', from=2-2, to=5-2]
	\arrow["{S(X \otimes S(X \otimes \eta^S_X))}"', from=5-2, to=6-2]
	\arrow["{S(\eta^T_X \otimes S\eta_{\Gamma_S,X})}"', from=6-2, to=7-2]
	\arrow["{S\eta_{\Gamma_S,TX}}"', from=7-2, to=11-1]
	\arrow["{in_2}", from=1-6, to=2-6]
	\arrow["{\eta_{\Sigma^+,X \otimes STX}}", from=2-6, to=5-6]
	\arrow["{S(\eta^T_X\otimes STX)}", from=5-6, to=6-6]
	\arrow["{S\eta_{\Gamma_S,TX}}", from=6-6, to=7-6]
	\arrow["{S\mu^T_X}", from=7-6, to=11-6]
	\arrow["{S\eta_{\Gamma_S,TX}}"', from=7-2, to=7-6]
	\arrow["{S(\eta^T_X \otimes S\eta_{\Gamma_S,X})}"', from=6-2, to=6-6]
	\arrow["{S(X \otimes S(X \otimes \eta^S_X))}"', from=5-2, to=5-4]
	\arrow["{\Sigma(X \otimes S(X \otimes \eta^S_X))}", from=1-1, to=1-4]
	\arrow["{\Sigma(X \otimes S\eta_{\Gamma_S,X})}", from=1-4, to=1-6]
	\arrow["{S(X \otimes S\eta_{\Gamma_S,X})}"', from=5-4, to=5-6]
      \end{tikzcd}
    }
      \end{center}
      In order to prove the right-hand one, we need the following
      lemma.
      \begin{lemma}\label{lem:mu:eta:delta}
        For any object $X$, the following diagram commutes.
        \begin{center}
          % file:///home/thirs/github/quiver/src/coherence-structural-equational-system-interp1-term1.html?q=WzAsNixbMCwwLCJTWCBcXG90aW1lcyBTU1giXSxbMiwwLCJTWCBcXG90aW1lcyBTWCJdLFs0LDAsIlNYIFxcb3RpbWVzIFNTWCJdLFswLDEsIlRTWCJdLFsyLDIsIlNUWCJdLFs0LDEsIlRTWCJdLFswLDEsIlNYIFxcb3RpbWVzIFxcbXVeU19YIl0sWzEsMiwiU1ggXFxvdGltZXMgU1xcZXRhXlNfWCJdLFswLDMsIlxcZXRhX3tcXEdhbW1hX1MsU1h9IiwyXSxbMyw0LCJcXGRlbHRhX1giLDJdLFsyLDUsIlxcZXRhX3tcXEdhbW1hX1MsU1h9Il0sWzUsNCwiXFxkZWx0YV9YIl1d
\begin{tikzcd}[ampersand replacement=\&]
	{SX \otimes SSX} \&\& {SX \otimes SX} \&\& {SX \otimes SSX} \\
	TSX \&\&\&\& TSX \\
	\&\& STX
	\arrow["{SX \otimes \mu^S_X}", from=1-1, to=1-3]
	\arrow["{SX \otimes S\eta^S_X}", from=1-3, to=1-5]
	\arrow["{\eta_{\Gamma_S,SX}}"', from=1-1, to=2-1]
	\arrow["{\delta_X}"', from=2-1, to=3-3]
	\arrow["{\eta_{\Gamma_S,SX}}", from=1-5, to=2-5]
	\arrow["{\delta_X}", from=2-5, to=3-3]
\end{tikzcd}
        \end{center}
      \end{lemma}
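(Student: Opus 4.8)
The plan is to deduce the identity from a single structural fact, namely that $\delta_X\circ\eta_{\Gamma_S,SX}$ depends on its second tensor factor only through $\mu^S_X$, i.e.\ that it factors as $\Psi\circ(SX\otimes\mu^S_X)$ for some fixed $\Psi$. Granting this, the lemma closes in one line: the right-hand composite is $\delta_X\circ\eta_{\Gamma_S,SX}$ precomposed with $SX\otimes(S\eta^S_X\circ\mu^S_X)$, and since $\otimes=\Gamma$ is a bifunctor,
\[
(SX\otimes\mu^S_X)\circ\bigl(SX\otimes(S\eta^S_X\circ\mu^S_X)\bigr)
= SX\otimes(\mu^S_X\circ S\eta^S_X\circ\mu^S_X);
\]
by the monad right-unit law $\mu^S_X\circ S\eta^S_X=\mathrm{id}_{SX}$ this equals $SX\otimes\mu^S_X$, so both paths reduce to $\Psi\circ(SX\otimes\mu^S_X)=\delta_X\circ\eta_{\Gamma_S,SX}$.

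To obtain the factorisation I would invoke Lemma~\ref{lem:d:delta:Y}, specialised to the simple structural law of \S\ref{ss:ptstrengths}. That $\delta$ is produced by Theorem~\ref{thm:simple}, hence by Theorem~\ref{thm:incremental} with the ambient monad taken to be the identity; under this instantiation $d_{/\delta}=\delta$, the auxiliary monad $T'=\Gamma_S^\fleur$ coincides with the monad written $T$ here, and the coprojection $in_2\colon T'\to \mathrm{id}\oplus T'$ is the canonical isomorphism (the identity monad being initial). Reading the diagram of Lemma~\ref{lem:d:delta:Y} at $Y=X$, its left-then-bottom composite is exactly $\delta_X\circ\eta_{\Gamma_S,SX}$, while its top-then-right composite begins with $\Gamma_{\mu^S_X}SX=SX\otimes\mu^S_X$ followed by the fixed right-hand column $\Psi$ (running through $d^{\bullet\omega}_{X,X}$). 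This yields precisely $\delta_X\circ\eta_{\Gamma_S,SX}=\Psi\circ(SX\otimes\mu^S_X)$, and note that nothing about the internal shape of $\Psi$ is needed — only that it is independent of the presentation of the second argument.

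The main obstacle is purely one of bookkeeping: verifying that the generic statement of Lemma~\ref{lem:d:delta:Y} does specialise as claimed (collapsing $ST$ to $S$ and $\Gamma_{STX}$ to $\Gamma_{SX}$, and absorbing the isomorphism $in_2$), and confirming that the first arrow of its upper path is $SX\otimes\mu^S_X$ under the convention $\Gamma_g X=\Gamma(X,g)$. I expect this identification to be the only delicate point; once it is settled, the bifunctoriality-plus-unit-law computation of the first paragraph finishes the proof with no further diagram chasing.
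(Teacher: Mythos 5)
Your proposal is correct and is essentially the paper's own proof: the paper likewise proves the lemma by applying Lemma~\ref{lem:d:delta:Y} (implicitly specialised along the isomorphism $\mathrm{id}\oplus T'\cong T'$, exactly the bookkeeping you describe) to both occurrences of $\delta_X\circ\eta_{\Gamma_S,SX}$, and then concludes with the right-unit law $\mu^S_X\circ S\eta^S_X=\mathrm{id}_{SX}$ tensored with $SX$. Your factorisation through a fixed $\Psi$ is just an algebraic repackaging of that diagram chase, so nothing is missing.
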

      \begin{proof}
        We resort to the definition of $δ$ from $d$, and proceed by
        diagram chasing, as follows.
        \begin{center}\hfill
          \ajustedroit{
\begin{tikzcd}[ampersand replacement=\&,baseline=(\tikzcdmatrixname-7-5.base)]
	{SX \otimes SSX} \&\& {SX \otimes SX} \&\& {SX \otimes SSX} \&\&\& TSX \\
	\&\& {S(X \otimes SX +  X+ X)} \&\& {SX \otimes SX} \\
	\&\& {S(TX + X + X)} \&\& {S(X \otimes SX + X+ X)} \\
	\&\& {S(TX + X)} \&\& {S(TX + X+X)} \\
	TSX \&\&\&\& {S(TX+X)} \\
	\\
	\&\&\&\& STX
	\arrow[""{name=0, anchor=center, inner sep=0}, "{SX \otimes \mu^S_X}", from=1-1, to=1-3]
	\arrow["{SX \otimes S\eta^S_X}", from=1-3, to=1-5]
	\arrow["{\eta_{\Gamma_S,SX}}"', from=1-1, to=5-1]
	\arrow["{\delta_X}"', curve={height=12pt}, from=5-1, to=7-5]
	\arrow["{\eta_{\Gamma_S,SX}}", from=1-5, to=1-8]
	\arrow[""{name=1, anchor=center, inner sep=0}, "{\delta_X}", curve={height=-30pt}, from=1-8, to=7-5]
	\arrow["{d^{\bullet\omega}_{X,X}}", from=1-3, to=2-3]
	\arrow["{S[TX,\eta^T_X]}"', from=4-3, to=7-5]
	\arrow["{SX \otimes \mu^S_X}", from=1-5, to=2-5]
	\arrow["{d^{\bullet\omega}_{X,X}}"', from=2-5, to=3-5]
	\arrow["{S[TX,\eta^T_X]}"{pos=0.3}, from=5-5, to=7-5]
	\arrow["{S(\eta_{\Gamma_S,X} + X+X)}"', from=3-5, to=4-5]
	\arrow["{S(TX + [X,X])}"', from=4-5, to=5-5]
	\arrow["{S(\eta_{\Gamma_S,X} + X+X)}", from=2-3, to=3-3]
	\arrow["{S(TX + [X,X])}", from=3-3, to=4-3]
	\arrow[Rightarrow, no head, from=1-3, to=2-5]
	\arrow["{\text{(Lemma~\ref{lem:d:delta:Y})}}"{description, pos=0.4}, curve={height=-12pt}, Rightarrow, draw=none, from=0, to=5-1]
	\arrow["{\text{(Lemma~\ref{lem:d:delta:Y})}}"{description, pos=0.6}, draw=none, from=2-5, to=1]
      \end{tikzcd}
    } \qedhere
        \end{center}
      \end{proof}

      We may now prove that the bottom right subdiagram of
      Figure~\ref{fig:coherence:R1} commutes.
      \begin{center}
        \ajustedroit{
\begin{tikzcd}[ampersand replacement=\&]
	{\Sigma(X \otimes SX \otimes SX)} \& {\Sigma^+(X \otimes SX \otimes SX)} \&\&\& {ST(X \otimes STX \otimes STX)} \&\&\&\& {ST(X \otimes STX \otimes STX + X + X)} \\
	{\Sigma(X \otimes (SX \otimes SX))} \& {\Sigma^+(X \otimes (SX \otimes SX))} \&\&\&\&\&\&\& {ST(STX \otimes STX \otimes STX + X)} \\
	{\Sigma(X \otimes S(SX \otimes SX))} \& {\Sigma^+(X \otimes S(SX \otimes SX))} \&\& {ST(X \otimes (STX \otimes STX))} \\
	\&\& {ST(X \otimes S(STX \otimes STX))} \&\& {ST(STX \otimes (STX \otimes SSTX) )} \&\&\&\& {ST(STX \otimes (STX \otimes STX) + X)} \\
	{\Sigma(X \otimes STSX)} \& {\Sigma^+(X \otimes STSX)} \& {ST(X \otimes S(STX \otimes SSTX))} \&\& {ST(STX \otimes TSTX)} \&\&\&\& {ST(STX \otimes (STX \otimes SSTX) + X)} \\
	\&\& {ST(X \otimes STSTX)} \&\& {ST(TSTX \otimes TSTX)} \&\&\&\& {ST(STX \otimes TSTX + X)} \\
	{\Sigma(X \otimes SSTX)} \& {\Sigma^+(X \otimes SSTX)} \&\& {ST(STX \otimes STSTX)} \&\&\&\&\& {ST(TSTX \otimes TSTX + X)} \\
	{\Sigma(X \otimes STX)} \& {ST(X \otimes SSTTX)} \&\&\& {ST(TSTX \otimes STSTX)} \&\&\&\& {ST(TSTX \otimes STSTX + X)} \\
	{\Sigma^+(X \otimes STX)} \&\& {ST(STX \otimes SSTTX)} \&\& {ST(TTSTX)} \&\&\&\& {ST(TTSTX + X)} \\
	\&\&\& {ST(STTX \otimes SSTTX)} \&\& STSTSTX \&\&\& {ST(STSTX + X)} \\
	\&\&\&\& STTSTTX \\
	{S(X \otimes STX)} \& {ST(X \otimes STTX)} \\
	{S(TX \otimes STX)} \& {ST(TX \otimes STTX)} \& {ST(STX \otimes STTX)} \& {ST(STTX \otimes STTX)} \\
	STTX \&\&\& {ST(STTX \otimes SSTTX)} \& STTSTTX \& STSTTTX \& STSTTX \& STSSTTX \\
	\&\&\& {ST(STTTX \otimes SSTTTX)} \&\&\&\&\& STSTX \\
	\&\&\& STTSTTTX \\
	\&\&\& STX
	\arrow["{STin_1}", from=1-5, to=1-9]
	\arrow["{ST(\eta^S \mu^T STX + X)}", from=9-9, to=10-9]
	\arrow["{ST[\mu^{ST}_X,\eta^{ST}_X]}", from=10-9, to=15-9]
	\arrow["{\mu^{ST}_X}", curve={height=-24pt}, from=15-9, to=17-4]
	\arrow["{\eta_{\Sigma^+}\eta^T (X\otimes S\eta^T_X \otimes S\eta^T_X)}", from=1-2, to=1-5]
	\arrow["{in_2}", from=1-1, to=1-2]
	\arrow["{ST(\eta^{ST}_X \otimes STX \otimes STX + [X,X])}", from=1-9, to=2-9]
	\arrow["{ST(\alpha + X)}", from=2-9, to=4-9]
	\arrow["{\Sigma \alpha}", from=1-1, to=2-1]
	\arrow["{\Sigma(X \otimes \eta^S_{SX \otimes SX})}"'{pos=0.4}, from=2-1, to=3-1]
	\arrow["{in_2}", from=2-1, to=2-2]
	\arrow[from=1-2, to=2-2]
	\arrow[from=3-1, to=5-1]
	\arrow[from=5-1, to=7-1]
	\arrow[from=7-1, to=8-1]
	\arrow["{in_2}"', from=8-1, to=9-1]
	\arrow["{\eta_{\Sigma^+,X \otimes STX}}"', from=9-1, to=12-1]
	\arrow["{S(\eta^T_X\otimes STX)}"', from=12-1, to=13-1]
	\arrow["{S\eta_{\Gamma_S,TX}}"', from=13-1, to=14-1]
	\arrow["{S\mu^T_X}", curve={height=30pt}, from=14-1, to=17-4]
	\arrow["{\eta_{\Sigma^+}\eta^T (X\otimes (S\eta^T_X \otimes S\eta^T_X))}"{pos=0.6}, from=2-2, to=3-4]
	\arrow["{ST(\alpha_{X,STX,STX})}", from=1-5, to=3-4]
	\arrow["{ST(STX \otimes (STX \otimes \eta^S_{STX}) + X)}", from=4-9, to=5-9]
	\arrow["{ST(STX \otimes \eta_{\Gamma_S,STX} + X)}", from=5-9, to=6-9]
	\arrow["{ST(\eta^T_{STX} \otimes TSTX + X)}", from=6-9, to=7-9]
	\arrow["{ST(TSTX \otimes \eta^S_{TSTX} + X)}", from=7-9, to=8-9]
	\arrow["{ST(\eta_{\Gamma_S,TSTX} + X)}", from=8-9, to=9-9]
	\arrow["{ST\mu^{ST}_X}"{description}, curve={height=-12pt}, from=10-6, to=15-9]
	\arrow["{ST(\eta^S \mu^T STX)}"{pos=1}, from=9-5, to=10-6]
	\arrow["{ST(\eta_{\Gamma_S,TSTX})}", from=8-5, to=9-5]
	\arrow["{ST(TSTX \otimes \eta^S_{TSTX})}", from=6-5, to=8-5]
	\arrow["{ST(\eta^T_{STX} \otimes TSTX)}", from=5-5, to=6-5]
	\arrow["{ST(STX \otimes \eta_{\Gamma_S,STX})}", from=4-5, to=5-5]
	\arrow["{ST(\eta^{ST}_X \otimes (STX \otimes \eta^S_{STX}))}"'{pos=0}, from=3-4, to=4-5]
	\arrow["{ST(X \otimes \eta^S_{STX \otimes STX})}"'{pos=0.9}, from=3-4, to=4-3]
	\arrow["{ST(X \otimes S(STX \otimes \eta^S_{STX}))}", from=4-3, to=5-3]
	\arrow["{ST(X \otimes S(\eta_{\Gamma_S,STX}))}", from=5-3, to=6-3]
	\arrow["{ST(X \otimes S\delta_{TX})}"', from=6-3, to=8-2]
	\arrow["{ST(X \otimes \mu^S_{TTX})}"', from=8-2, to=12-2]
	\arrow["{S\eta^T (X \otimes ST \eta^T_X)}"', from=12-1, to=12-2]
	\arrow["{ST(\eta^{ST}_X \otimes STSTX)}"{pos=0.7}, from=6-3, to=7-4]
	\arrow["{ST(STX \otimes \eta^S_{TSTX})}"'{pos=0.8}, from=5-5, to=7-4]
	\arrow["{ST(\eta^T_{STX}\otimes STSTX)}"{pos=0.8}, from=7-4, to=8-5]
	\arrow["{STS\delta_{TX}}"{description}, from=10-6, to=14-8]
	\arrow["{ST \mu^S \mu^T_X}"', from=14-8, to=15-9]
	\arrow["{ST(\delta_{TX} \otimes S\delta_{TX})}"{description}, from=8-5, to=10-4]
	\arrow["{ST(\eta^{ST}_X \otimes SSTTX)}"{pos=0.9}, from=8-2, to=9-3]
	\arrow["{ST(S\eta^T_{TX} \otimes SSTTX)}"'{pos=0.1}, from=9-3, to=10-4]
	\arrow["{ST(S\eta^T_{TX} \otimes S\delta_{TX})}"', from=7-4, to=10-4]
	\arrow["{ST(STTX \otimes \mu^S_{TTX})}"', from=10-4, to=13-4]
	\arrow["{ST(\eta^{ST}_X \otimes STTX)}", from=12-2, to=13-3]
	\arrow["{ST(S\eta^T_{TX} \otimes STTX)}", from=13-3, to=13-4]
	\arrow["{ST(STTX \otimes S\eta^S_{TTX})}", from=13-4, to=14-4]
	\arrow["{S\eta^T(TX \otimes ST\eta^T_X)}", from=13-1, to=13-2]
	\arrow["{ST(\eta^T_X \otimes STTX)}"', from=12-2, to=13-2]
	\arrow["{ST(\eta^S_{TX} \otimes STTX)}"', from=13-2, to=13-3]
	\arrow["{ST(\eta_{\Gamma_S,STTX})}"', from=10-4, to=11-5]
	\arrow["{STT\delta_{TX}}"', from=9-5, to=11-5]
	\arrow["{ST(STT\eta^T_X \otimes SS\eta^T_{TTX})}"', from=14-4, to=15-4]
	\arrow["{ST\eta_{\Gamma_S,STTTX}}"', from=15-4, to=16-4]
	\arrow["{S\eta^T T \eta^S\eta^T T \eta^T_X }", curve={height=18pt}, from=14-1, to=16-4]
	\arrow[from=16-4, to=17-4]
	\arrow["{ST\delta_{TTX}}"', from=11-5, to=14-6]
	\arrow["{STS\mu^T_{TX}}", from=14-6, to=14-7]
	\arrow["{ST\eta^S_{STTX}}", from=14-7, to=14-8]
	\arrow["{ST in_1}", from=4-5, to=5-9]
	\arrow["{ST in_1}", from=5-5, to=6-9]
	\arrow["{ST in_1}", from=6-5, to=7-9]
	\arrow["{ST in_1}", from=8-5, to=8-9]
	\arrow["{ST in_1}", from=9-5, to=9-9]
	\arrow["{ST in_1}", from=10-6, to=10-9]
	\arrow["{in_2}", from=3-1, to=3-2]
	\arrow["{\eta_{\Sigma^+}\eta^T (X\otimes S(S\eta^T_X \otimes S\eta^T_X))}"'{pos=0}, from=3-2, to=4-3]
	\arrow["{in_2}"', from=5-1, to=5-2]
	\arrow["{\eta_{\Sigma^+}\eta^T (X\otimes STS\eta^T_X)}"'{pos=0.1}, from=5-2, to=6-3]
	\arrow["{in_2}"', from=7-1, to=7-2]
	\arrow["{\eta_{\Sigma^+}\eta^T (X\otimes SST\eta^T_X)}"', from=7-2, to=8-2]
	\arrow["{ST(\eta_{\Gamma_S,STTX})}"', from=14-4, to=14-5]
	\arrow["{ST\delta_{TTX}}"', from=14-5, to=14-6]
	\arrow["{\text{(Lemma~\ref{lem:mu:eta:delta})}}"{description, pos=0.6}, draw=none, from=14-4, to=11-5]
\end{tikzcd}
    }      
      \end{center}
There the bottom right subdiagram commutes as follows.
\begin{center}
\ajustedroit{
% file:///home/thirs/github/quiver/src/coherence-structural-equational-system-interp1-term1.html?q=WzAsMTYsWzQsNCwiU1RTVFgiXSxbNCw1LCJTVFgiXSxbNCwyLCJTVFNTVFRYIl0sWzAsMCwiU1QoU1RUWCBcXG90aW1lcyBTU1RUWCkiXSxbMCwxLCJTVChTVFRUWCBcXG90aW1lcyBTU1RUVFgpIl0sWzAsMiwiU1RUU1RUVFgiXSxbNCwwLCJTVFNUVFRYIl0sWzQsMSwiU1RTVFRYIl0sWzIsMCwiU1RUU1RUWCJdLFswLDMsIlNUVFNUVFgiXSxbMCw1LCJTVFRTVFgiXSxbMiw1LCJTVFNUVFgiXSxbMyw1LCJTVFNUWCJdLFsxLDMsIlNUVFNUVFgiXSxbMiwzLCJTVFNUVFRYIl0sWzQsMywiU1RTVFRYIl0sWzAsMSwiXFxtdV57U1R9X1giXSxbMyw0LCJTVChTVFRcXGV0YV5UX1ggXFxvdGltZXMgU1NcXGV0YV5UX3tUVFh9KSIsMl0sWzQsNSwiU1RcXGV0YV97XFxHYW1tYV9TLFNUVFRYfSIsMl0sWzYsNywiU1RTXFxtdV5UX3tUWH0iXSxbNywyLCJTVFxcZXRhXlNfe1NUVFh9Il0sWzMsOCwiU1QoXFxldGFfe1xcR2FtbWFfUyxTVFRYfSkiXSxbOCw2LCJTVFxcZGVsdGFfe1RUWH0iXSxbNSw5LCJTVFRTXFxtdV5UX3tUWH0iLDJdLFs5LDEwLCJTVFRTXFxtdV5UX1giLDJdLFsxMCwxMSwiU1RcXGRlbHRhX3tUWH0iLDJdLFsxMSwxMiwiU1RTXFxtdV5UX1giLDJdLFsxMiwxLCJcXG11XntTVH1fWCIsMl0sWzgsNSwiU1RUU1RUXFxldGFeVF9YIiwwLHsibGFiZWxfcG9zaXRpb24iOjYwfV0sWzUsMTMsIlNUVFNUXFxtdV5UX1giXSxbMTMsMTAsIlNUVFNcXG11XlRfWCJdLFsxMywxNCwiU1RcXGRlbHRhX3tUVFh9Il0sWzE0LDExLCJTVFNUXFxtdV5UX1giLDJdLFsyLDE1LCJTVCBcXG11XlMiXSxbMTUsMCwiU1RTXFxtdV5UX1giXSxbMTQsMTUsIlNUU1xcbXVeVF97VFh9Il0sWzgsMTMsIiIsMSx7ImxldmVsIjoyLCJzdHlsZSI6eyJoZWFkIjp7Im5hbWUiOiJub25lIn19fV0sWzcsMTUsIiIsMyx7ImN1cnZlIjo1LCJsZXZlbCI6Miwic3R5bGUiOnsiaGVhZCI6eyJuYW1lIjoibm9uZSJ9fX1dXQ==
\begin{tikzcd}[ampersand replacement=\&]
	{ST(STTX \otimes SSTTX)} \&\& STTSTTX \&\& STSTTTX \\
	{ST(STTTX \otimes SSTTTX)} \&\&\&\& STSTTX \\
	STTSTTTX \&\&\&\& STSSTTX \\
	STTSTTX \& STTSTTX \& STSTTTX \&\& STSTTX \\
	\&\&\&\& STSTX \\
	STTSTX \&\& STSTTX \& STSTX \& STX
	\arrow["{\mu^{ST}_X}", from=5-5, to=6-5]
	\arrow["{ST(STT\eta^T_X \otimes SS\eta^T_{TTX})}"', from=1-1, to=2-1]
	\arrow["{ST\eta_{\Gamma_S,STTTX}}"', from=2-1, to=3-1]
	\arrow["{STS\mu^T_{TX}}", from=1-5, to=2-5]
	\arrow["{ST\eta^S_{STTX}}", from=2-5, to=3-5]
	\arrow["{ST(\eta_{\Gamma_S,STTX})}", from=1-1, to=1-3]
	\arrow["{ST\delta_{TTX}}", from=1-3, to=1-5]
	\arrow["{STTS\mu^T_{TX}}"', from=3-1, to=4-1]
	\arrow["{STTS\mu^T_X}"', from=4-1, to=6-1]
	\arrow["{ST\delta_{TX}}"', from=6-1, to=6-3]
	\arrow["{STS\mu^T_X}"', from=6-3, to=6-4]
	\arrow["{\mu^{ST}_X}"', from=6-4, to=6-5]
	\arrow["{STTSTT\eta^T_X}"{pos=0.6}, from=1-3, to=3-1]
	\arrow["{STTST\mu^T_X}", from=3-1, to=4-2]
	\arrow["{STTS\mu^T_X}", from=4-2, to=6-1]
	\arrow["{ST\delta_{TTX}}", from=4-2, to=4-3]
	\arrow["{STST\mu^T_X}"', from=4-3, to=6-3]
	\arrow["{ST \mu^S}", from=3-5, to=4-5]
	\arrow["{STS\mu^T_X}", from=4-5, to=5-5]
	\arrow["{STS\mu^T_{TX}}", from=4-3, to=4-5]
	\arrow[Rightarrow, no head, from=1-3, to=4-2]
	\arrow[curve={height=30pt}, Rightarrow, no head, from=2-5, to=4-5]
\end{tikzcd}}
\end{center}

\subsection{Coherence of $L₂$}

We then check coherence of $L₂$ in Figure~\ref{fig:coh:L2}, where, in
the middle triangle which unfolds $d_{SX,SX}$, we use
$η^S_{SX} ∘ η_{Σ⁺,X} ∘ in₁$ instead of $v_{SX}$ as a point. This is
justified because these morphisms are in fact equal, as the diagram
below shows.
\begin{center}
  % file:///home/thirs/github/quiver/src/index.html?q=WzAsNyxbMCwwLCJJIl0sWzAsMiwiXFxTaWdtYV4rU1giXSxbMiwzLCJTU1giXSxbNCwwLCJcXFNpZ21hXitYIl0sWzQsMiwiU1giXSxbMSwxLCJcXFNpZ21hXitcXGVtcHR5c2V0Il0sWzIsMiwiU1xcZW1wdHlzZXQiXSxbMCwxLCJpbl8xIiwyXSxbMSwyLCJcXGV0YV97XFxTaWdtYV4rLFNYfSIsMl0sWzAsMywiaW5fMSJdLFszLDQsIlxcZXRhX3tcXFNpZ21hXissWH0iXSxbNCwyLCJTXFxldGFeU19YIl0sWzAsNSwiaW5fMSJdLFs1LDMsIlxcU2lnbWFeK3shfSJdLFs1LDEsIlxcU2lnbWFeK3shfSJdLFs1LDYsIlxcZXRhX3tcXFNpZ21hXissXFxlbXB0eXNldH0iXSxbNiwyLCJTeyF9Il0sWzYsNCwiU3shfSJdXQ==
\begin{tikzcd}[ampersand replacement=\&]
	I \&\&\&\& {\Sigma^+X} \\
	\& {\Sigma^+\emptyset} \\
	{\Sigma^+SX} \&\& S\emptyset \&\& SX \\
	\&\& SSX
	\arrow["{in_1}"', from=1-1, to=3-1]
	\arrow["{\eta_{\Sigma^+,SX}}"', from=3-1, to=4-3]
	\arrow["{in_1}", from=1-1, to=1-5]
	\arrow["{\eta_{\Sigma^+,X}}", from=1-5, to=3-5]
	\arrow["{S\eta^S_X}", from=3-5, to=4-3]
	\arrow["{in_1}", from=1-1, to=2-2]
	\arrow["{\Sigma^+{!}}", from=2-2, to=1-5]
	\arrow["{\Sigma^+{!}}", from=2-2, to=3-1]
	\arrow["{\eta_{\Sigma^+,\emptyset}}", from=2-2, to=3-3]
	\arrow["{S{!}}", from=3-3, to=4-3]
	\arrow["{S{!}}", from=3-3, to=3-5]
\end{tikzcd}
\end{center}

\begin{figure}[htp]
  \ajustetourne{
\begin{tikzcd}[ampersand replacement=\&]
	{\Sigma^+X \otimes I} \& {I \otimes I + \Sigma X \otimes I} \&\& {I + \Sigma(X\otimes I)} \& {\Sigma^+(X\otimes I)} \& {S(X\otimes I)} \& {ST(X\otimes I)} \& {ST(X \otimes I + X + X)} \\
	{\red{S}X \otimes I} \& {\Sigma^+SX \otimes \Sigma^+X} \&\& {\Sigma^+X + \Sigma(SX \otimes \Sigma^+X)} \&\&\&\& {ST(STX \otimes I + X)} \\
	{S\red{S}X \otimes \red{\Sigma^+}SX} \&\&\&\& {SX + \Sigma(SX \otimes SX)} \&\&\& {ST(SSTX \otimes \Sigma^+STX + X)} \\
	{S\red{S}X \otimes \red{S}SX} \&\& {\Sigma^+SX \otimes SX} \& {\Sigma^+SX \otimes SSX} \& {I \otimes SSX + \Sigma SX \otimes SSX} \& {SSX + \Sigma(SX \otimes SSX)} \&\& {ST(SSTX \otimes SSTX + X)} \\
	{S\red{S}X \otimes S\red{S}SX} \&\&\&\&\&\&\& {ST(SSTX \otimes SSSTX + X)} \\
	{TS\red{S}X} \& TSSSX \& TSSX \&\&\& {S(SX \otimes SSX + SX + SX)} \&\& {ST(TSSTX + X)} \\
	{ST\red{S}X} \& STSSX \&\&\&\&\&\& {ST(STSTX + X)} \\
	{S\red{S}TX} \&\& SSTSX \& STSX \&\&\&\& STSTX \\
	\&\&\&\& SSTX \\
	\\
	\&\&\&\& STX
	\arrow[Rightarrow, no head, from=1-1, to=1-2]
	\arrow["{\lambda_I + st_{X,(I,id_I)}}", from=1-2, to=1-4]
	\arrow[Rightarrow, no head, from=1-4, to=1-5]
	\arrow["{\eta_{\Sigma^+}}", from=1-5, to=1-6]
	\arrow["{S\eta^T_{X\otimes I}}", from=1-6, to=1-7]
	\arrow["{ST in_1}", from=1-7, to=1-8]
	\arrow["{\eta_{\Sigma^+,X} \otimes I}"', from=1-1, to=2-1]
	\arrow["{\eta^S_{SX} \otimes in_1}"', from=2-1, to=3-1]
	\arrow["{SSX \otimes \eta_{\Sigma^+,SX}}"', from=3-1, to=4-1]
	\arrow["{SSX \otimes \eta^S_{SSX}}"', from=4-1, to=5-1]
	\arrow["{\eta_{\Gamma_S,SSX}}"', from=5-1, to=6-1]
	\arrow["{\delta_{SX}}"', from=6-1, to=7-1]
	\arrow["{S\delta_X}"', from=7-1, to=8-1]
	\arrow["{\mu^S_{TX}}"', from=8-1, to=11-5]
	\arrow["{ST(\eta^{ST}_X \otimes I + [X,X])}", from=1-8, to=2-8]
	\arrow["{ST(\eta^S_{STX} \otimes in_1 + X)}", from=2-8, to=3-8]
	\arrow["{ST(SSTX \otimes \eta_{\Sigma^+,STX} + X)}", from=3-8, to=4-8]
	\arrow["{ST(SSTX \otimes \eta^S_{SSTX} + X)}", from=4-8, to=5-8]
	\arrow["{ST(\eta_{\Gamma_S,SSTX} + X)}", from=5-8, to=6-8]
	\arrow["{ST(\delta_{STX} + X)}", from=6-8, to=7-8]
	\arrow["{ST[\mu^{ST}_X,\eta^{ST}_X]}", from=7-8, to=8-8]
	\arrow["{\mu^{ST}_X}", from=8-8, to=11-5]
	\arrow["{\Sigma^+\eta^S_X \otimes in_1}"{pos=1}, from=1-1, to=2-2]
	\arrow["{\Sigma^+SX \otimes \eta_{\Sigma^+,X}}"{pos=0.9}, from=2-2, to=4-3]
	\arrow["{\Sigma^+SX \otimes \eta^S_{SX}}", from=4-3, to=4-4]
	\arrow[Rightarrow, no head, from=4-4, to=4-5]
	\arrow["{\lambda_{SSX} + st_{SX,(SSX,\eta^S_{SX} \circ \eta_{\Sigma^+,X}\circ in_1)}}"', from=4-5, to=4-6]
	\arrow["{in_1 + \Sigma(\eta^S_X \otimes in_1)}"', from=1-4, to=2-4]
	\arrow[from=2-4, to=3-5]
	\arrow["{\eta^S_{SX} + \Sigma(SX \otimes \eta^S_{SX})}"{description}, from=3-5, to=4-6]
	\arrow["{[S in_3, S in_1 \circ \eta_{\Sigma^+,SX \otimes SSX} \circ in_2]}"{description}, from=4-6, to=6-6]
	\arrow[""{name=0, anchor=center, inner sep=0}, "{d_{SX,SX}}"{description}, from=4-3, to=6-6]
	\arrow["{STS\eta^S_X}"{description}, from=7-1, to=7-2]
	\arrow["{S\delta_{SX}}", from=7-2, to=8-3]
	\arrow["{SST\eta^S_X}", from=8-1, to=8-3]
	\arrow["{TSS\eta^S_X}", from=6-1, to=6-2]
	\arrow["{\delta_{SSX}}", from=6-2, to=7-2]
	\arrow["{T\mu^S_{SX}}", from=6-2, to=6-3]
	\arrow["{\delta_{SX}}", from=6-3, to=8-4]
	\arrow["{\mu^S_{TSX}}", from=8-3, to=8-4]
	\arrow[from=4-3, to=6-3]
	\arrow[from=6-6, to=8-4]
	\arrow["{S\delta_X}", from=8-4, to=9-5]
	\arrow["{\mu^S_{TX}}", from=9-5, to=11-5]
	\arrow["{\text{(?)}}", draw=none, from=6-2, to=2-2]
	\arrow["{\text{(??)}}", draw=none, from=6-6, to=4-8]
	\arrow["{\text{\eqref{eq:simple:d:delta}}}"{description}, Rightarrow, draw=none, from=8-4, to=0]
\end{tikzcd}
    }
  \caption{Coherence of $L₂$}
  \label{fig:coh:L2}
\end{figure}

The left-hand, question-marked subdiagram of Figure~\ref{fig:coh:L2} only commutes when postcomposed with
$δ_X ∘ Tμ^S_X = μ^S_{TX} ∘ Sδ_X ∘ δ_{SX}$, as follows.
\begin{center}
\ajustedroit{
\begin{tikzcd}[ampersand replacement=\&]
	{\Sigma^+X \otimes I} \&\&\&\& {\Sigma^+X \otimes \Sigma^+X} \&\&\&\&\& {\Sigma^+SX \otimes \Sigma^+X} \\
	{SX \otimes I} \&\& {\Sigma^+X \otimes \Sigma^+SX} \&\&\&\&\&\&\& {\Sigma^+SX \otimes SX} \\
	{SSX \otimes \Sigma^+SX} \&\& {SX \otimes \Sigma^+SX} \&\&\& {SX \otimes SX} \&\&\&\& {SSX \otimes SSX} \\
	{SSX \otimes SSX} \&\&\& {SX \otimes SSX} \\
	{SSX \otimes SSSX} \&\&\& {SX \otimes SX} \&\&\&\& {SSX \otimes SSX} \&\& {SSX \otimes SSSX} \\
	\\
	\&\&\&\&\&\&\&\&\& TSSX \\
	TSSX \&\&\&\&\& {SX \otimes SSX} \\
	\&\&\&\&\& TSX \&\&\&\& TSX \\
	TSSX \&\& TSSSX \&\&\& TSSX \& TSX \&\&\& STX
	\arrow["{\eta_{\Sigma^+,X} \otimes I}"', from=1-1, to=2-1]
	\arrow["{\eta^S_{SX} \otimes in_1}"', from=2-1, to=3-1]
	\arrow["{SSX \otimes \eta_{\Sigma^+,SX}}"', from=3-1, to=4-1]
	\arrow["{SSX \otimes \eta^S_{SSX}}"', from=4-1, to=5-1]
	\arrow["{\eta_{\Gamma_S,SSX}}"', from=5-1, to=8-1]
	\arrow["{\Sigma^+SX \otimes \eta_{\Sigma^+,X}}"{pos=0.6}, from=1-10, to=2-10]
	\arrow["{\eta_{\Sigma^+,SX} \otimes \eta^S_{SX}}", from=2-10, to=3-10]
	\arrow["{T\mu^S_{SX}}"', from=10-3, to=10-6]
	\arrow["{T\mu^S_X}"', from=10-6, to=10-7]
	\arrow["{TSS\eta^S_X}"', from=10-1, to=10-3]
	\arrow[Rightarrow, no head, from=8-1, to=10-1]
	\arrow["{\delta_X}"', from=10-7, to=10-10]
	\arrow["{\eta_{\Gamma_S,SSX}}", from=5-10, to=7-10]
	\arrow["{SSX \otimes \eta^S_{SSX}}", from=3-10, to=5-10]
	\arrow["{T\mu^S_X}"{description}, from=7-10, to=9-10]
	\arrow["{\delta_X}", from=9-10, to=10-10]
	\arrow["{\mu^S_X \otimes S\mu^S_X}"{description}, from=5-1, to=8-6]
	\arrow["{\mu^S_X \otimes S\mu^S_X}"{description}, from=5-10, to=8-6]
	\arrow["{\mu^S_X \otimes \mu^S_X}", from=4-1, to=5-4]
	\arrow["{SX \otimes \eta^S_{SX}}", from=5-4, to=8-6]
	\arrow["{SX \otimes in_1}", from=2-1, to=3-3]
	\arrow["{SX \otimes \eta_{\Sigma^+,SX}}"{pos=0.9}, from=3-3, to=4-4]
	\arrow["{SX \otimes \mu^S_X}"', from=4-4, to=5-4]
	\arrow["{\Sigma^+X \otimes in_1}", from=1-1, to=1-5]
	\arrow["{\Sigma^+\eta^S_X \otimes \Sigma^+X}", from=1-5, to=1-10]
	\arrow["{\eta_{\Sigma^+,X} \otimes \eta_{\Sigma^+,X}}", from=1-5, to=3-6]
	\arrow["{S\eta^S_X \otimes \eta^S_{SX}}", from=3-6, to=3-10]
	\arrow["{SX \otimes \eta^S_{SX}}", from=3-6, to=8-6]
	\arrow["{S\eta^S_X \otimes \eta^S_{SX}}"', from=3-6, to=5-8]
	\arrow["{SSX \otimes S\eta^S_{SX}}", from=5-8, to=5-10]
	\arrow["{\mu^S_X \otimes SSX}"', from=5-8, to=8-6]
	\arrow["{\Sigma^+X \otimes in_1}"{pos=0.7}, from=1-1, to=2-3]
	\arrow["{\eta_{\Sigma^+,X} \otimes \Sigma^+SX}", from=2-3, to=3-3]
	\arrow["{\Sigma^+X \otimes \Sigma^+\eta^S_X}", from=1-5, to=2-3]
	\arrow["{SX \otimes S\eta^S_X}"', from=3-6, to=4-4]
	\arrow[Rightarrow, no head, from=3-6, to=5-4]
	\arrow["{\eta_{\Gamma_S,SX}}", from=8-6, to=9-10]
	\arrow["{T\mu^S_X}", from=8-1, to=9-6]
	\arrow["{TS\eta^S_X}"', from=9-6, to=10-6]
	\arrow[Rightarrow, no head, from=9-6, to=10-7]
	\arrow[Rightarrow, no head, from=9-6, to=9-10]
\end{tikzcd}}
\end{center}

The right-hand, double question-marked subdiagram has a coproduct as its domain, so we proceed termwise.
The second term is chased as follows.
\begin{center}
  \ajustedroit{
\begin{tikzcd}[ampersand replacement=\&]
	{\Sigma(X\otimes I)} \& {\Sigma^+(X\otimes I)} \& {S(X\otimes I)} \&\& {ST(X\otimes I)} \& {ST(X \otimes I + X + X)} \\
	{\Sigma(SX \otimes \Sigma^+X)} \&\&\&\& {ST(STX \otimes I)} \& {ST(STX \otimes I + X)} \\
	{\Sigma(SX \otimes SX)} \&\& {S(SX \otimes \Sigma^+X)} \&\& {ST(SSTX \otimes \Sigma^+STX)} \& {ST(SSTX \otimes \Sigma^+STX + X)} \\
	{\Sigma(SX \otimes SSX)} \& {S(SX \otimes SX)} \&\&\& {ST(SSTX \otimes SSTX)} \& {ST(SSTX \otimes SSTX + X)} \\
	{\Sigma^+(SX \otimes SSX)} \&\&\&\&\& {ST(SSTX \otimes SSSTX + X)} \\
	{S(SX \otimes SSX)} \&\&\&\& {ST(SSTX \otimes SSSTX)} \\
	\&\&\&\& STTSSTX \& {ST(TSSTX + X)} \\
	{S(SX \otimes SSX + SX + SX)} \& STSX \&\&\& STSTSTX \& {ST(STSTX + X)} \\
	\& SSTX \&\&\& STSTX \\
	\&\& STX
	\arrow["{\eta_{\Sigma^+}}", color={rgb,255:red,214;green,92;blue,92}, from=1-2, to=1-3]
	\arrow["{S\eta^T_{X\otimes I}}", from=1-3, to=1-5]
	\arrow["{ST in_1}", from=1-5, to=1-6]
	\arrow["{ST(\eta^{ST}_X \otimes I + [X,X])}", from=1-6, to=2-6]
	\arrow["{ST(\eta^S_{STX} \otimes in_1 + X)}", from=2-6, to=3-6]
	\arrow["{ST(SSTX \otimes \eta_{\Sigma^+,STX} + X)}", from=3-6, to=4-6]
	\arrow["{ST(SSTX \otimes \eta^S_{SSTX} + X)}", from=4-6, to=5-6]
	\arrow["{ST(\eta_{\Gamma_S,SSTX} + X)}", from=5-6, to=7-6]
	\arrow["{ST(\delta_{STX} + X)}", from=7-6, to=8-6]
	\arrow["{ST[\mu^{ST}_X,\eta^{ST}_X]}", from=8-6, to=9-5]
	\arrow["{\mu^{ST}_X}", from=9-5, to=10-3]
	\arrow["{\Sigma(\eta^S_X \otimes in_1)}"', from=1-1, to=2-1]
	\arrow[from=2-1, to=3-1]
	\arrow["{\Sigma(SX \otimes \eta^S_{SX})}"', from=3-1, to=4-1]
	\arrow[from=8-1, to=8-2]
	\arrow["{S\delta_X}", from=8-2, to=9-2]
	\arrow["{\mu^S_{TX}}"', from=9-2, to=10-3]
	\arrow["{\eta_{\Sigma^+,SX \otimes SSX}}"', from=5-1, to=6-1]
	\arrow["{in_2}"', from=4-1, to=5-1]
	\arrow["{Sin_1}"', from=6-1, to=8-1]
	\arrow["{ST\mu^{ST}_X}"', from=8-5, to=9-5]
	\arrow["{ST(\eta^{ST}_X \otimes I)}", from=1-5, to=2-5]
	\arrow["{ST(\eta^S_{STX} \otimes in_1)}", from=2-5, to=3-5]
	\arrow["{ST(SSTX \otimes \eta_{\Sigma^+,STX})}", from=3-5, to=4-5]
	\arrow["{ST(SSTX \otimes \eta^S_{SSTX})}", from=4-5, to=6-5]
	\arrow["{ST(\eta_{\Gamma_S,SSTX})}", from=6-5, to=7-5]
	\arrow["{ST\delta_{STX}}", from=7-5, to=8-5]
	\arrow["{in_2}", from=1-1, to=1-2]
	\arrow["{S(SX \otimes \eta^S_{SX})}", from=4-2, to=6-1]
	\arrow[from=3-3, to=4-2]
	\arrow["{S(\eta^S_X \otimes in_1)}"', from=1-3, to=3-3]
	\arrow["{S \eta^T(S\eta^{ST}_X \otimes \Sigma^+ \eta^{ST}_X)}"', from=3-3, to=3-5]
	\arrow["{S \eta^T(S\eta^{ST}_X \otimes S\eta^{ST}_X)}"{description}, from=4-2, to=4-5]
	\arrow["{S \eta^T(S\eta^{ST}_X \otimes SS\eta^{ST}_X)}"{description}, from=6-1, to=6-5]
	\arrow["{S\eta_{\Gamma_S,SX}}"{description}, from=6-1, to=8-2]
	\arrow["{S\eta^T TS \eta^{ST}_X}"{description}, from=8-2, to=7-5]
	\arrow["{S\eta^T ST \eta^{ST}_X}"{description}, from=9-2, to=8-5]
	\arrow["{S\eta^T_{STX}}"{description}, from=9-2, to=9-5]
      \end{tikzcd}
      }
\end{center}
The first term is chased as follows.
\begin{center}
\ajustedroit{
  \begin{tikzcd}[ampersand replacement=\&]
    I \&\& {\Sigma^+(X\otimes I)} \& {S(X\otimes I)} \& {ST(X\otimes I)} \& {ST(X \otimes I + X + X)} \\
    \& {\Sigma^+\emptyset} \\
    {\Sigma^+X} \& S\emptyset \&\& {S(STX \otimes I)} \& {ST(STX \otimes I)} \& {ST(STX \otimes I + X)} \\
    \&\&\& {S(SSTX \otimes \Sigma^+STX)} \& {ST(SSTX \otimes \Sigma^+STX)} \& {ST(SSTX \otimes \Sigma^+STX + X)} \\
    \&\&\& {S(SSTX \otimes SSTX)} \& {ST(SSTX \otimes SSTX)} \& {ST(SSTX \otimes SSTX + X)} \\
    \&\& SS\emptyset \& {S(SSTX \otimes SSSTX)} \& {ST(SSTX \otimes SSSTX)} \& {ST(SSTX \otimes SSSTX + X)} \\
    SX \&\&\& STSSTX \& STTSSTX \& {ST(TSSTX + X)} \\
    \& STX \& S\emptyset \& SSTSTX \& STSTSTX \& {ST(STSTX + X)} \\
    SSX \&\&\& SSTX \&\& STSTX \\
    \\
    {S(SX \otimes SSX + SX + SX)} \& STSX \& SSTX \& STX
    \arrow["{in_1}", from=1-1, to=1-3]
    \arrow["{\eta_{\Sigma^+}}", from=1-3, to=1-4]
    \arrow["{S\eta^T_{X\otimes I}}", from=1-4, to=1-5]
    \arrow["{ST in_1}", from=1-5, to=1-6]
    \arrow["{ST(\eta^{ST}_X \otimes I + [X,X])}", from=1-6, to=3-6]
    \arrow["{ST(\eta^S_{STX} \otimes in_1 + X)}", from=3-6, to=4-6]
    \arrow["{ST(SSTX \otimes \eta_{\Sigma^+,STX} + X)}", from=4-6, to=5-6]
    \arrow["{ST(SSTX \otimes \eta^S_{SSTX} + X)}", from=5-6, to=6-6]
    \arrow["{ST(\eta_{\Gamma_S,SSTX} + X)}", from=6-6, to=7-6]
    \arrow["{ST(\delta_{STX} + X)}", from=7-6, to=8-6]
    \arrow["{ST[\mu^{ST}_X,\eta^{ST}_X]}", from=8-6, to=9-6]
    \arrow["{\mu^{ST}_X}", from=9-6, to=11-4]
    \arrow["{in_1}"', from=1-1, to=3-1]
    \arrow[from=3-1, to=7-1]
    \arrow["{\eta^S_{SX}}"', from=7-1, to=9-1]
    \arrow[from=11-1, to=11-2]
    \arrow["{S\delta_X}", from=11-2, to=11-3]
    \arrow["{\mu^S_{TX}}", from=11-3, to=11-4]
    \arrow["{ST\mu^{ST}_X}", from=8-5, to=9-6]
    \arrow["{ST\delta_{STX}}", from=7-5, to=8-5]
    \arrow["{ST\eta_{\Gamma_S,SSTX}}", from=6-5, to=7-5]
    \arrow["{ST(SSTX \otimes \eta^S_{SSTX})}", from=5-5, to=6-5]
    \arrow["{ST(SSTX \otimes \eta_{\Sigma^+,STX})}", from=4-5, to=5-5]
    \arrow["{ST(\eta^S_{STX} \otimes in_1)}", from=3-5, to=4-5]
    \arrow["{ST(\eta^{ST}_X \otimes I)}", from=1-5, to=3-5]
    \arrow["{S(\eta^{ST}_X \otimes I)}", from=1-4, to=3-4]
    \arrow["{S(\eta^S_{STX} \otimes in_1)}", from=3-4, to=4-4]
    \arrow["{S(SSTX \otimes \eta_{\Sigma^+,STX})}", from=4-4, to=5-4]
    \arrow["{S(SSTX \otimes \eta^S_{SSTX})}", from=5-4, to=6-4]
    \arrow["{S\eta_{\Gamma_S,SSTX}}", from=6-4, to=7-4]
    \arrow["{S\delta_{STX}}", from=7-4, to=8-4]
    \arrow["{S\eta^T_{STSTX}}", from=8-4, to=8-5]
    \arrow[from=2-2, to=3-2]
    \arrow["{in_1}", from=1-1, to=2-2]
    \arrow[from=2-2, to=1-3]
    \arrow[from=2-2, to=3-1]
    \arrow["{S{!}}"', from=3-2, to=7-1]
    \arrow["{S{!}}", from=3-2, to=1-4]
    \arrow["{\mu^S_{TX}}", from=9-4, to=11-4]
    \arrow["{S\eta^T_{STX}}", from=9-4, to=9-6]
    \arrow["{S\mu^{ST}_X}", from=8-4, to=9-4]
    \arrow["{S in_3}"', from=9-1, to=11-1]
    \arrow["{S\eta^T_X}"{description}, from=7-1, to=8-2]
    \arrow["{S\eta^T_{SX}}"{description}, from=9-1, to=11-2]
    \arrow["{SS\eta^T_X}"{description}, from=9-1, to=11-3]
    \arrow["{\eta^S_{STX}}"{description}, from=8-2, to=11-3]
    \arrow["{S{!}}"', from=3-2, to=8-2]
    \arrow[Rightarrow, no head, from=8-2, to=11-4]
    \arrow["{SS{!}}", from=6-3, to=9-4]
    \arrow["{\mu^S_\emptyset}", from=6-3, to=8-3]
    \arrow["{S{!}}", from=8-3, to=11-4]
    \arrow["{S\eta^S_\emptyset}", from=3-2, to=6-3]
    \arrow[Rightarrow, no head, from=3-2, to=8-3]
  \end{tikzcd}}
\end{center}

\subsection{Coherence of $R₂$}
We finally check the coherence of $R₂$, namely commutation of the following diagram.
\begin{center}
  \ajustedroit{
% file:///home/thirs/github/quiver/src/index.html?q=WzAsMTYsWzAsMCwiXFxTaWdtYV4rWCBcXG90aW1lcyBJIl0sWzEsMCwiSSBcXG90aW1lcyBJICsgXFxTaWdtYSBYIFxcb3RpbWVzIEkiXSxbMywwLCJJICsgXFxTaWdtYShYXFxvdGltZXMgSSkiXSxbNCwwLCJcXFNpZ21hXisoWFxcb3RpbWVzIEkpIl0sWzUsMCwiUyhYXFxvdGltZXMgSSkiXSxbNiwwLCJTVChYXFxvdGltZXMgSSkiXSxbNywwLCJTVChYIFxcb3RpbWVzIEkgKyBYICsgWCkiXSxbMCwxLCJcXHJlZHtTfVggXFxvdGltZXMgSSJdLFswLDMsIlNUXFxyZWR7U31YIl0sWzAsNCwiU1xccmVke1N9VFgiXSxbNCw1LCJTVFgiXSxbNywxLCJTVChTVFggXFxvdGltZXMgSSArIFgpIl0sWzcsMywiU1QoU1RTVFggKyBYKSJdLFs3LDQsIlNUU1RYIl0sWzAsMiwiU1giXSxbNywyLCJTVChTVFggKyBYKSJdLFswLDEsIiIsMCx7ImxldmVsIjoyLCJzdHlsZSI6eyJoZWFkIjp7Im5hbWUiOiJub25lIn19fV0sWzEsMiwiXFxsYW1iZGFfSSArIHN0X3tYLChJLGlkX0kpfSJdLFsyLDMsIiIsMCx7ImxldmVsIjoyLCJzdHlsZSI6eyJoZWFkIjp7Im5hbWUiOiJub25lIn19fV0sWzMsNCwiXFxldGFfe1xcU2lnbWFeK30iXSxbNCw1LCJTXFxldGFeVF97WFxcb3RpbWVzIEl9Il0sWzUsNiwiU1QgaW5fMSJdLFswLDcsIlxcZXRhX3tcXFNpZ21hXissWH0gXFxvdGltZXMgSSIsMl0sWzgsOSwiU1xcZGVsdGFfWCIsMl0sWzksMTAsIlxcbXVeU197VFh9IiwyXSxbNiwxMSwiU1QoXFxldGFee1NUfV9YIFxcb3RpbWVzIEkgKyBbWCxYXSkiXSxbMTIsMTMsIlNUW1xcbXVee1NUfV9YLFxcZXRhXntTVH1fWF0iXSxbMTMsMTAsIlxcbXVee1NUfV9YIl0sWzcsMTQsIlxccmhvX3tTWH0iLDJdLFsxNCw4LCJcXGV0YV57U1R9X3tTWH0iLDJdLFsxMSwxNSwiU1QoXFxyaG9fe1NUWH0gKyBYKSJdLFsxNSwxMiwiU1QoXFxldGFee1NUfV97U1RYfSArIFgpIl1d
\begin{tikzcd}[ampersand replacement=\&]
	{\Sigma^+X \otimes I} \& {I \otimes I + \Sigma X \otimes I} \&\& {I + \Sigma(X\otimes I)} \& {\Sigma^+(X\otimes I)} \& {S(X\otimes I)} \& {ST(X\otimes I)} \& {ST(X \otimes I + X + X)} \\
	{\red{S}X \otimes I} \&\&\&\&\&\&\& {ST(STX \otimes I + X)} \\
	SX \&\&\&\&\&\&\& {ST(STX + X)} \\
	{ST\red{S}X} \&\&\&\&\&\&\& {ST(STSTX + X)} \\
	{S\red{S}TX} \&\&\&\&\&\&\& STSTX \\
	\&\&\&\& STX
	\arrow[Rightarrow, no head, from=1-1, to=1-2]
	\arrow["{\lambda_I + st_{X,(I,id_I)}}", from=1-2, to=1-4]
	\arrow[Rightarrow, no head, from=1-4, to=1-5]
	\arrow["{\eta_{\Sigma^+}}", from=1-5, to=1-6]
	\arrow["{S\eta^T_{X\otimes I}}", from=1-6, to=1-7]
	\arrow["{ST in_1}", from=1-7, to=1-8]
	\arrow["{\eta_{\Sigma^+,X} \otimes I}"', from=1-1, to=2-1]
	\arrow["{S\delta_X}"', from=4-1, to=5-1]
	\arrow["{\mu^S_{TX}}"', from=5-1, to=6-5]
	\arrow["{ST(\eta^{ST}_X \otimes I + [X,X])}", from=1-8, to=2-8]
	\arrow["{ST[\mu^{ST}_X,\eta^{ST}_X]}", from=4-8, to=5-8]
	\arrow["{\mu^{ST}_X}", from=5-8, to=6-5]
	\arrow["{\rho_{SX}}"', from=2-1, to=3-1]
	\arrow["{\eta^{ST}_{SX}}"', from=3-1, to=4-1]
	\arrow["{ST(\rho_{STX} + X)}", from=2-8, to=3-8]
	\arrow["{ST(\eta^{ST}_{STX} + X)}", from=3-8, to=4-8]
      \end{tikzcd}
    }
\end{center}
      Again, the domain is a coproduct, so we proceed termwise.

      The first term is chased as follows.
      \begin{center}
        \ajustedroit{
\begin{tikzcd}[ampersand replacement=\&]
	{I \otimes I} \&\&\& I \& {\Sigma^+(X\otimes I)} \& {S(X\otimes I)} \& {ST(X\otimes I)} \& {ST(X \otimes I + X + X)} \\
	{\Sigma^+X \otimes I} \& I \& {\Sigma^+ \emptyset} \&\&\& {S(STX \otimes I)} \&\& {ST(STX \otimes I + X)} \\
	{SX \otimes I} \& {\Sigma^+ X} \&\& {S \emptyset} \\
	\&\&\&\& SS\emptyset \& SSTX \&\& {ST(STX + X)} \\
	SX \&\&\& SSX \\
	\& SSX \&\& {} \&\& SSTSTX \\
	STSX \&\&\&\& S\emptyset \&\& STSTSTX \& {ST(STSTX + X)} \\
	\&\&\& SX \&\& STSTX \&\& STSTX \\
	SSTX \\
	\\
	\&\&\&\& STX
	\arrow["{in_1 \otimes I}"', from=1-1, to=2-1]
	\arrow["{\lambda_I}", from=1-1, to=1-4]
	\arrow["{in_1}", from=1-4, to=1-5]
	\arrow["{\eta_{\Sigma^+}}", from=1-5, to=1-6]
	\arrow["{S\eta^T_{X\otimes I}}", from=1-6, to=1-7]
	\arrow["{ST in_1}", from=1-7, to=1-8]
	\arrow["{\eta_{\Sigma^+,X} \otimes I}"', from=2-1, to=3-1]
	\arrow["{S\delta_X}"', from=7-1, to=9-1]
	\arrow["{\mu^S_{TX}}"', from=9-1, to=11-5]
	\arrow["{ST(\eta^{ST}_X \otimes I + [X,X])}", from=1-8, to=2-8]
	\arrow["{ST[\mu^{ST}_X,\eta^{ST}_X]}", from=7-8, to=8-8]
	\arrow["{\mu^{ST}_X}", from=8-8, to=11-5]
	\arrow["{\rho_{SX}}"', from=3-1, to=5-1]
	\arrow["{\eta^{ST}_{SX}}"', from=5-1, to=7-1]
	\arrow["{ST(\rho_{STX} + X)}", from=2-8, to=4-8]
	\arrow["{ST(\eta^{ST}_{STX} + X)}", from=4-8, to=7-8]
	\arrow["{S (\eta^{ST}_X \otimes I)}", from=1-6, to=2-6]
	\arrow["{S\rho_{STX}}", from=2-6, to=4-6]
	\arrow["{S\eta^{ST}_{STX}}", from=4-6, to=6-6]
	\arrow["{S\eta^T}", from=6-6, to=7-7]
	\arrow["{ST\mu^{ST}}", from=7-7, to=8-8]
	\arrow["{\mu^{ST}}", from=7-7, to=8-6]
	\arrow["{\mu^{ST}_X}", from=8-6, to=11-5]
	\arrow["{\mu^S_{TSTX}}"{description}, from=6-6, to=8-6]
	\arrow["{\mu^S_{TX}}"{description}, from=4-6, to=11-5]
	\arrow["{\rho_I}"{description}, from=1-1, to=2-2]
	\arrow["{in_1}", from=2-2, to=3-2]
	\arrow["{\eta_{\Sigma^+,X}}"{pos=0.1}, from=3-2, to=5-1]
	\arrow[Rightarrow, no head, from=2-2, to=1-4]
	\arrow["{in_1}"', from=2-2, to=2-3]
	\arrow["{\Sigma^+{!}}", from=2-3, to=3-2]
	\arrow["{\Sigma^+{!}}"', from=2-3, to=1-5]
	\arrow["{\eta_{\Sigma^+,\emptyset}}", from=2-3, to=3-4]
	\arrow["{S{!}}", from=3-4, to=5-1]
	\arrow["{S{!}}", from=3-4, to=1-6]
	\arrow["{SS{!}}", from=4-5, to=4-6]
	\arrow["{\mu^S_\emptyset}", from=4-5, to=7-5]
	\arrow["{S{!}}"', from=7-5, to=11-5]
	\arrow["{S\eta^S_\emptyset}", from=3-4, to=4-5]
	\arrow["{\eta^S_{SX}}"', from=5-1, to=6-2]
	\arrow["{S\eta^T_{SX}}"', from=6-2, to=7-1]
	\arrow["{SS\eta^T_X}", from=6-2, to=9-1]
	\arrow["{\mu^S_X}", from=6-2, to=8-4]
	\arrow["{S\eta^T_X}"', from=8-4, to=11-5]
	\arrow["{S{!}}", from=7-5, to=8-4]
	\arrow["{SS{!}}"', from=4-5, to=5-4]
	\arrow["{S\eta^S_X}"', from=5-1, to=5-4]
	\arrow["{\mu^S_X}", from=5-4, to=8-4]
\end{tikzcd}}
      \end{center}

      The second term is chased as follows.

      \begin{center}
        \ajustedroit{
          \begin{tikzcd}[ampersand replacement=\&]
            {\Sigma X \otimes I} \&\&\& {\Sigma(X\otimes I)} \& {\Sigma^+(X\otimes I)} \& {S(X\otimes I)} \& {ST(X\otimes I)} \& {ST(X \otimes I + X + X)} \\
            {\Sigma^+X \otimes I} \& {\Sigma X} \&\&\&\& {S(STX \otimes I)} \&\& {ST(STX \otimes I + X)} \\
            {\red{S}X \otimes I} \& {\Sigma^+ X} \\
            \&\&\&\&\&\&\& {ST(STX + X)} \\
            SX \&\&\&\&\& SSTX \\
            \&\&\&\&\& SSTSTX \& STSTSTX \\
            {ST\red{S}X} \&\&\&\&\& STSTX \&\& {ST(STSTX + X)} \\
            {S\red{S}TX} \&\&\&\&\&\&\& STSTX \\
            \\
            \\
            \&\&\&\& STX
            \arrow["{in_2}"', from=1-1, to=2-1]
            \arrow["{st_{X,(I,id_I)}}", from=1-1, to=1-4]
            \arrow["{\eta_{\Sigma^+}}", from=1-5, to=1-6]
            \arrow["{S\eta^T_{X\otimes I}}", from=1-6, to=1-7]
            \arrow["{ST in_1}", from=1-7, to=1-8]
            \arrow["{\eta_{\Sigma^+,X} \otimes I}"', from=2-1, to=3-1]
            \arrow["{S\delta_X}"', from=7-1, to=8-1]
            \arrow["{\mu^S_{TX}}"', from=8-1, to=11-5]
            \arrow["{ST(\eta^{ST}_X \otimes I + [X,X])}", from=1-8, to=2-8]
            \arrow["{ST[\mu^{ST}_X,\eta^{ST}_X]}", from=7-8, to=8-8]
            \arrow["{\mu^{ST}_X}", from=8-8, to=11-5]
            \arrow["{\rho_{SX}}"', from=3-1, to=5-1]
            \arrow["{\eta^{ST}_{SX}}"', from=5-1, to=7-1]
            \arrow["{ST(\rho_{STX} + X)}", from=2-8, to=4-8]
            \arrow["{ST(\eta^{ST}_{STX} + X)}", from=4-8, to=7-8]
            \arrow["{\rho_{\Sigma X}}", from=1-1, to=2-2]
            \arrow["{in_2}"', from=2-2, to=3-2]
            \arrow["{\eta_{\Sigma^+,X}}"{pos=0.3}, from=3-2, to=5-1]
            \arrow["{in_2}", from=1-4, to=1-5]
            \arrow["{\Sigma\rho_X}", from=1-4, to=2-2]
            \arrow["{\Sigma^+\rho_X}", from=1-5, to=3-2]
            \arrow["{S\rho_X}", from=1-6, to=5-1]
            \arrow["{S(\eta^{ST}_X \otimes I)}", from=1-6, to=2-6]
            \arrow["{S\rho_{STX}}", from=2-6, to=5-6]
            \arrow["{S\eta^{ST}_{STX}}"', from=5-6, to=6-6]
            \arrow["{\mu^{ST}_X}", from=7-6, to=11-5]
            \arrow["{S\eta^{ST}_X}", from=5-1, to=5-6]
            \arrow["{S\eta^T_{STSTX}}", from=6-6, to=6-7]
            \arrow["{ST\mu^{ST}_X}", from=6-7, to=8-8]
            \arrow["{\mu^{ST}_{STSTX}}", from=6-7, to=7-6]
            \arrow["{\mu^S_{TSTX}}"', from=6-6, to=7-6]
            \arrow["{S\eta^T_X}", from=5-1, to=11-5]
          \end{tikzcd}
          }
      \end{center}
\end{document}